\pdfoutput=1
\documentclass[sigplan,screen]{acmart}

\usepackage{etex}
\usepackage{epsfig}
\usepackage{latexsym}
\usepackage{graphicx}
\usepackage{amssymb}
\usepackage{amsfonts}
\usepackage{amsthm}
\usepackage[inline]{enumitem}
\usepackage{mathrsfs}

\usepackage{amsmath}
\usepackage{mathtools}
\usepackage{verbatim}
\usepackage{url}
\usepackage{colortbl}
\usepackage{stmaryrd}
\usepackage{epic,eepic}
\usepackage{xspace}
\usepackage{caption}
\usepackage{hyperref}
\usepackage{tcolorbox}
\usepackage{multirow}
\usepackage{multicol}

\usepackage{scalerel}
\usepackage{cleveref}

\usepackage{tikz}
\usetikzlibrary{automata}
\usetikzlibrary{arrows}
\usetikzlibrary{backgrounds}
\usetikzlibrary{positioning}
\usetikzlibrary{fit}
\usetikzlibrary{calc}
\usetikzlibrary{matrix}
\usetikzlibrary{patterns}
\usetikzlibrary{shapes.geometric}
\usetikzlibrary{decorations.pathreplacing}
\usetikzlibrary{decorations.pathmorphing}

\tikzset{
  dot/.style={draw, minimum size=0.7mm, inner sep=0pt, outer sep=0pt, shape=circle, fill=black},
  pto/.style={-latex, line width=0.28mm},
  alin/.style={line width=0.25mm,black!90},
  path/.style={-latex,dashed,line width=0.28mm,dash pattern=on 7pt off 3pt on 3pt off 3pt on 3pt off 3pt on 3pt off 3pt},
  segment/.style={|-|,line width=0.25mm,black!90},
  aux/.style={-latex,line width=0.25mm,black!80}
}

\pgfdeclarelayer{bg0}    
\pgfdeclarelayer{bg1}
\pgfdeclarelayer{bg2}
\pgfsetlayers{bg0,bg1,bg2,main}

\hypersetup{
  linkcolor  = blue,
  citecolor  = blue,
  urlcolor   = blue,
  colorlinks = true,
}

\usepackage{todonotes}
\usepackage{stackengine}

\usepackage{xcolor}

\usepackage{algorithm}
\usepackage[noend]{algpseudocode}

\algnewcommand{\Label}{\State\unskip}



\usepackage{booktabs}   
\usepackage{subcaption} 

%


\newcommand{\set}[1]{\{ #1 \}}

\newcommand{\pair}[2]{(#1,#2)}
\newcommand{\triple}[3]{(#1,#2,#3)}

\newcommand{\Bool}{\ensuremath{\mathbb{B}}}
\newcommand{\Nat}{\ensuremath{\mathbb{N}}}

\newcommand{\true}{{\top}}
\newcommand{\false}{{\perp}}
\newcommand{\bottom}{\false}
\renewcommand{\implies}{\Rightarrow}
\renewcommand{\iff}{\Leftrightarrow}
\newcommand{\powerset}[1]{\mathcal{P}(#1)}

\DeclarePairedDelimiter\card{\lvert}{\rvert}
\newcommand{\range}[1]{\mbox{ran}(#1)}

\newcommand{\grammardef}{:=}

\newcommand{\aset}{X}
\newcommand{\asetbis}{Y}

\newcommand{\avarprop}{p}
\newcommand{\avarpropbis}{q}
\newcommand{\avarpropter}{r}
\newcommand{\varprop}{{\rm AP}} 


\newcommand{\aformula}{\varphi} 
\newcommand{\aformulabis}{\psi} 
\newcommand{\aformulater}{\chi} 
\newcommand{\asetformulae}{X}
\newcommand{\subf}[1]{{\sf subf}(#1)}

\newcommand{\eqtext}[1]{\mathbin{\raisebox{-0.55pt}{$\stackrel{\text{\raisebox{0.8pt}[0pt][0pt]{\begin{tiny}\makebox(0,0){#1}\end{tiny}}}}{=}$}}}
\newcommand{\egdef}{\mathbin{\raisebox{-0.55pt}{$\stackrel{\text{\raisebox{-1pt}[0pt][0pt]{\begin{tiny}$\mathsf{def}$\end{tiny}}}}{=}$}}} 
\newcommand{\equivdef}{\mathbin{\raisebox{-0.55pt}{$\stackrel{\text{\raisebox{-1pt}[0pt][0pt]{\begin{tiny}$\mathsf{def}$\end{tiny}}}}{\equivaut}$}}} 
\newcommand{\equivaut}{\;\Leftrightarrow\;}

\newcommand{\amap}{\mathfrak{f}}
\newcommand{\amapbis}{\mathfrak{g}}
\newcommand{\amapter}{\mathfrak{t}}

\newcommand {\pspace} {\textsc{PSpace}\xspace}

\newcommand {\polytime} {\textsc{Pol}\xspace}

\newcommand {\nexptime} {\textsc{NExpTime}\xspace}

\newcommand {\tower} {\textsc{Tower}\xspace}
\newcommand{\aexppol}{\textsc{AExp}$_{\text{\polytime}}$\xspace}

\newenvironment {lemma*} {\noindent {\bf Lemma} \em} {\rm}

\newcommand{\aalphabet}{\Sigma}     
\newcommand{\aword}{\mathfrak{w}}

\newcommand{\avariable}{\mathtt{x}}
\newcommand{\avariablebis}{\mathtt{y}}
\newcommand{\avariableter}{\mathtt{z}}

\newcommand{\separate}{\mathbin{\ast}}
\newcommand{\size}{\mathtt{size}}

\newcommand{\amodel}{\mathfrak{M}}

\newcommand{\aCCmodeltranslation}[4]{\mathcal{T}^{#4}_{#1}\pair{#2}{#3}}

\newcommand{\aSALmodeltranslationaux}[4]{\mathcal{T}^{#4}_{#1}\pair{#2}{#3}}

\newcommand{\atranslation}{\tau}

\newcommand{\defstyle}[1]{{\emph{#1}}}

\newcommand{\cut}[1]{\ignorespaces}
\newcommand{\interval}[2]{[#1,#2]}

\mathchardef\mhyphen="2D 

\makeatletter
\newcommand{\@transb}[2]{t_{#1}\!\left(#2\right)}
\newcommand{\@transnob}[1]{t\!\left(#1\right)}
\newcommand{\trans}{\@ifstar{\@transb}{\@transnob}}
\makeatother



 \newcounter{theoreme}[section]





\newcommand{\Diamondminus}{\Diamond^{-1}}

\newcommand{\SabDiamond}{
   {\begin{tikzpicture}[baseline=-3.5pt]
     \node[draw,scale=0.47,diamond,fill=black]{};
   \end{tikzpicture}}
}

\newcommand{\HMDiamond}[1]{\tup{{\rm #1}}}
\newcommand{\HMBox}[1]{[{\rm #1}]}

\newcommand{\alogic}{\mathfrak{L}}
\newcommand{\MSL}[2]{\fullMSL^{#1}(#2)}
\newcommand{\fullMSL}{\ensuremath{\mathsf{MSL}}\xspace}

\newcommand{\ML}{\ensuremath{\mathsf{ML}}\xspace}
\newcommand{\QK}{\ensuremath{\mathsf{QK}}\xspace}
\newcommand{\BI}{\ensuremath{\mathsf{BI}}\xspace}
\newcommand{\RML}{\ensuremath{\mathsf{RML}}\xspace}

\newcommand{\HML}{\ensuremath{\mathsf{HML}}\xspace}

\newcommand{\tup}[1]{\langle #1 \rangle}

\newcommand{\model}{\amodel}

\newcommand{\fsize}[1]{{\sf size(#1)}}
\newcommand{\md}[1]{{\sf md}(#1)}
\newcommand{\gr}[1]{{\sf gr}(#1)}


\newcommand{\abisim}{\mathcal{Z}}
\newcommand{\abisimtwo}{\mathcal{K}}
\newcommand{\bisrel}[2]{\xspace\leftrightarrows_{#1}^{#2}\xspace}

\newcommand{\gamerel}[2]{\approx_{#1}^{#2}}
\newcommand{\neggamerel}[2]{{\not\approx}_{#1}^{#2}}


\newcommand\fin{\text{\normalfont{fin}}}

\newcommand{\acharformula}[3]{\Gamma(#1)_{#2}^{#3}} 
\newcommand{\atypeset}[2]{\mathcal{T}^{#2}{(#1)}}
\newcommand{\atowertypeset}[2]{\mathscr{T}^{#2}{(#1)}}
\newcommand{\atermset}{\mathsf{T}}
\newcommand{\redformula}{\widehat{\varphi}}
\newcommand{\redformulabis}{\widehat{\psi}}

\newcommand{\typeeqclass}[2]{\equiv_{#1}^{#2}}


\algrenewcommand\algorithmicrequire{\textbf{In:}}
\algrenewcommand\algorithmicensure{\textbf{Out:}}

\newcommand{\agarbage}{\mathcal{G}}


\newcommand{\auniverse}{\worlds}
\newcommand{\aaccessrelation}{\arelation}

\newcommand{\apropeval}{\avaluation}

\newcommand{\apropset}{\mathsf{P}}
\newcommand{\apropsetbis}{\mathsf{Q}}

\newcommand{\apropsetN}{\mathsf{N}}

\algblockdefx{Cases}{EndCases}%
   [1]{\textbf{case} #1 \textbf{of}}%
   {\textbf{end case}}
\algcblockx[Cases]{Cases}{Case}{EndCases}%
   [1]{#1:\enspace}%
   {\textbf{end case}}

\DeclareMathOperator{\chopop}{\raisebox{-2pt}{\rule{1.2pt}{2.1ex}}}

\newcommand{\modallogicCC}{{\ensuremath{\ML(\,\chopop\,)}}\xspace}
\newcommand{\modallogicSC}{{\ensuremath{\ML(\separate)}}\xspace}

\newcommand{\msokt}{\ensuremath{\QK^{t}}\xspace}
\newcommand{\cH}{\mathcal{H}}
\newcommand{\cT}{\mathcal{T}}
\newcommand{\cTT}{\mathcal{T}\!\!\!\!\mathcal{T}}

\newcommand{\cV}{\mathcal{V}}

\newcommand{\AP}{\varprop}

\newcommand{\expchess}[2]{\mathtt{Board}}
\newcommand{\tiling}{\mathtt{Tile}}
\newcommand{\multitiling}[3]{\mathtt{MTiling}}

\newcommand{\initmodel}[3]{\mathtt{initmodel}}


\newcommand{\avarleft}{\mathtt{l}}
\newcommand{\avarselect}{\mathtt{s}}
\newcommand{\avarright}{\mathtt{r}}
\newcommand{\avartree}{\mathit{t}}

\newcommand{\treeval}{\mathtt{val}}

\newcommand{\atiling}[2]{\mathtt{tiling}_{#2}(#1)}
\newcommand{\agrid}[2]{\mathtt{grid}_{#2}(#1)}

\newcommand{\Aux}{\mathtt{Aux}}

\newcommand{\aaux}{\mathtt{ax}}
\newcommand{\aauxbis}{\mathtt{bx}}

\newcommand{\Boxbox}{\boxplus}

\newcommand{\complete}[1]{\mathtt{type}(#1)}
\newcommand{\completeplus}[1]{\mathtt{type}_{\mathtt{lsr}}(#1)}
\newcommand{\ainit}{init}
\newcommand{\init}[1]{\mathtt{\ainit}(#1)}
\newcommand{\nominal}[2]{\mathtt{nom}_{#2}(#1)}
\newcommand{\atnom}[2]{\mathtt{@}_{#1}^{#2}}
\newcommand{\zero}{\mathtt{0}}
\newcommand{\one}{\mathtt{1}}
\newcommand{\successor}[3]{[#2 = #1 {+} 1]_{#3}}
\newcommand{\bsuccessor}[4]{[#2\,{\eqtext{$#4$}}\,#1 {+} 1]_{#3}}
\newcommand{\less}[4]{[#1\,{<}\,#2]^{#3}_{#4}}
\newcommand{\equivalent}[4]{[#1\,{=}\,#2]^{#3}_{#4}}
\newcommand{\bequivalent}[4]{[#1\,{\eqtext{$#4$}}\,#2]_{#3}}
\newcommand{\fork}[4]{\mathtt{fork}^{#3}_{#4}(#1,#2)}
\newcommand{\lsrpartition}[1]{\mathtt{lsr}(#1)}
\newcommand{\twonoms}[3]{\mathtt{nom}_{#3}(#1\!\neq\!#2)}

\newcommand{\heapdim}{D}

\newcommand{\selectpred}[4]{\mathtt{S}^{#3}_{#4}(#1, #2)}
\newcommand{\bselectpred}[4]{\mathtt{S[}#4\mathtt{]}_{#3}(#1,#2)}
\newcommand{\leftpred}[4]{\mathtt{L}^{#3}_{#4}(#1, #2)}
\newcommand{\bleftpred}[4]{\mathtt{L[}#4\mathtt{]}_{#3}(#1,#2)}
\newcommand{\selectleftpred}[4]{\mathtt{LS}^{#3}_{#4}(#1, #2)}

\newcommand{\rightpred}[2]{\mathtt{R}(#1,#2)}
\newcommand{\brightpred}[3]{\mathtt{R[}#3\mathtt{]}(#1,#2)}

\newcommand{\apS}{sub}
\newcommand{\apZ}{zero}
\newcommand{\apU}{uniq}
\newcommand{\apC}{compl}
\newcommand{\apA}{aux}
\newcommand{\pS}[1]{\mathtt{\apS}{(#1)}}
\newcommand{\pZ}[1]{\mathtt{\apZ}{(#1)}}
\newcommand{\pU}[1]{\mathtt{\apU}{(#1)}}
\newcommand{\pC}[1]{\mathtt{\apC}{(#1)}}
\newcommand{\pA}{\mathtt{\apA}}

\newcommand{\apone}{one}
\newcommand{\apfirst}{first}
\newcommand{\aphor}{hor}
\newcommand{\apvert}{vert}
\newcommand{\pone}[1]{\texttt{\apone}_{#1}}
\newcommand{\pfirst}[2]{\texttt{\apfirst}_{#2}{(#1)}}
\newcommand{\phor}[2]{\texttt{\aphor}_{#2}{(#1)}}
\newcommand{\pvert}[2]{\texttt{\apvert}_{#2}{(#1)}}

\newcommand{\pZT}[2]{\texttt{\apZ}_{#2}{(#1)}}
\newcommand{\pUT}[2]{\texttt{\apU}_{#2}{(#1)}}
\newcommand{\pCTB}[3]{\texttt{\apC}[#3]_{#2}{(#1)}}
\newcommand{\pCTBB}[2]{\texttt{\apC}_{#2}{(#1)}}

\newcommand{\bigO}{O}

\newcommand{\nb}[1]{\boldsymbol{\mathfrak{n}}(#1)}
\newcommand{\nbexp}[2]{\boldsymbol{\mathfrak{n}}_{#2}(#1)}

\newcommand{\SAL}{\ensuremath{\mathsf{SAL}(\,\ambientchop\,)}\xspace}
\newcommand{\fullSAL}{\ensuremath{\mathsf{SAL}}\xspace}

\DeclareMathOperator{\ambientchop}{\mathbin{\chopop}}

\newcommand{\ambientprop}{\mathtt{ap}}
\newcommand{\ambientchild}{\mathtt{rel}}
\newcommand{\numchildgeq}[1]{\mathtt{[\#\,{\geq}\,#1]}}
\newcommand{\numchildeq}[1]{\mathtt{[\#\,{=}\,#1]}}

\newcommand{\ambienttrees}{\mathbb{T}_{\mathsf{SAL}}\xspace}

\definecolor{Gray}{gray}{0.8}


\newenvironment{nscenter}
 {\parskip=2pt\par\nopagebreak\centering}
 {\par\noindent\ignorespacesafterend}

 \makeatletter
 \def\rulelab#1#2{%
   \ \ #1:%
   \begingroup%
     \def\@currentlabel{#1}%
     \phantomsection\label{#2}%
   \endgroup%
 }
 \makeatother

 \makeatletter
 \def\desclabel#1#2{\begingroup
    \def\@currentlabel{#1}%
    #1\label{#2}\endgroup
 }
 \makeatother

 \makeatletter
 \def\pointlabel#1#2{\begingroup
    \def\@currentlabel{(#1)}%
    #1.\label{#2}\endgroup
 }
 \makeatother

 \makeatletter
 \def\phantomlabel#1#2{\begingroup
    \def\@currentlabel{#1}%
    \label{#2}\endgroup
 }
 \makeatother

\newcommand{\GML}{\ensuremath{\mathsf{GML}}\xspace}
\newcommand{\MSO}{\ensuremath{\mathsf{MSO}}\xspace}
\newcommand{\PA}{\ensuremath{\mathsf{PA}}\xspace}
\newcommand{\Gdiamond}[2]{\Diamond_{#1} \ #2}

\newcommand{\aworld}{w}

\newcommand{\worlds}{W}
\newcommand{\arelation}{R}
\newcommand{\avaluation}{V}
\newcommand{\maxpc}[1]{{\tt max}_{\tt PC}(#1)}
\newcommand{\maxgmod}[1]{{\tt max}_{\tt GM}(#1)}
\newcommand{\aliteral}{L}

\newcommand{\bd}[1]{{\sf bd}(#1)}
\newcommand{\topbd}[1]{{\sf bd}^{\rm top}(#1)}
\newcommand{\toppybd}[1]{{\sf w}(#1)}

\newcommand{\DNF}[1]{{\sf DNF}(#1)}

\newcommand{\boundd}[1]{\mathcal{B}(#1)}


\newcommand{\aformulaerankset}[2]{\modallogicSC[#1,#2]}
\newcommand{\acharformulaSC}[3]{\Pi(#1)_{#2}^{#3}} 

\newif\ifLongVersionOnly\LongVersionOnlyfalse

\newcommand{\satproblem}[1]{{\rm Sat(}#1{\rm)}}

\newcommand{\atile}{\mathtt{c}}

\newcommand{\aname}{\mathtt{n}}
\newcommand{\anamebis}{\mathtt{m}}
\newcommand{\anvarprop}{\avarprop}


\newcommand{\plneg}{\dot{\neg}}
\newcommand{\plvee}{\dot{\vee}}
\newcommand{\plcnot}{\text{\textasciitilde}}
\newcommand{\ateam}{\mathfrak{T}}

\newcommand{\aplvaluation}{\mathfrak{v}}

\newcommand{\tluniq}[1]{\mathtt{uni}(#1)}
\newcommand{\tlcopies}[1]{\mathtt{cp}(#1)}


\Crefname{definition}{Definition}{Defs.}
\Crefname{equation}{Equation}{Eqs.}
\Crefname{figure}{Figure}{Figs.}
\Crefname{proposition}{Proposition}{Props.}
\Crefname{theorem}{Theorem}{Thms.}
\Crefname{example}{Example}{Exs.}
\Crefname{corollary}{Corollary}{Cors.}
\Crefname{enumi}{}{}
\Crefname{lemma}{Lemma}{Lemmata}
\Crefname{section}{Section}{Secs.}
\Crefname{appendix}{Appendix}{Apps.}

\makeatletter
\newcommand{\pushright}[1]{\ifmeasuring@#1\else\omit\hfill$\displaystyle#1$\fi\ignorespaces}
\newcommand{\pushleft}[1]{\ifmeasuring@#1\else\omit$\displaystyle#1$\hfill\fi\ignorespaces}
\makeatother

\newcommand{\veeweight}[1]{{\tt w}_{\plvee}(#1)}
\newcommand{\newbd}[2]{{\tt bd}(#1,#2)}
\newcommand{\maxbd}[1]{{\tt max}_{\tt bd}(#1)}
\newcommand{\submax}[1]{{\tt submax}_{\tt GM}(#1)}
\newcommand{\cd}[1]{{\tt cd}(#1)} 
\newcommand{\dweight}[1]{{\tt w}_{\Diamond}(#1)}

\AtBeginDocument{%
  \providecommand\BibTeX{{%
    \normalfont B\kern-0.5em{\scshape i\kern-0.25em b}\kern-0.8em\TeX}}}

\setcopyright{acmcopyright}
\copyrightyear{2020}
\acmYear{2020}

\acmConference[LICS '20]{Proceedings of the 35th Annual ACM/IEEE Symposium on Logic in Computer Science (LICS)}{July 8--11, 2020}{Saarbr\"ucken, Germany}
\acmBooktitle{Proceedings of the 35th Annual ACM/IEEE Symposium on Logic in Computer Science (LICS '20), July 8--11, 2020, Saarbr\"ucken, Germany}

\acmPrice{15.00}
\acmDOI{10.1145/3373718.3394787}
\acmISBN{978-1-4503-7104-9/20/07}

\newif\ifLongVersionWithAppendix\LongVersionWithAppendixtrue

\begin{document}

\title[Modal Logics with Composition on Finite Forests]{Modal Logics with Composition
on Finite Forests: Expressivity and Complexity (Extra Material)}


\author{Bartosz Bednarczyk}
\affiliation{\institution{TU Dresden \& University of Wroc\l{}aw}}

\author{St\'ephane Demri}
\affiliation{\institution{LSV, CNRS, ENS Paris-Saclay, Universit\'e Paris-Saclay}}

\author{Raul Fervari}
\affiliation{\institution{FAMAF, Universidad Nacional de C\'ordoba \& CONICET}}

\author{Alessio Mansutti}
\affiliation{\institution{LSV, CNRS, ENS Paris-Saclay, Universit\'e Paris-Saclay}}

\renewcommand{\shortauthors}{Bednarczyk, Demri, Fervari \& Mansutti}

\begin{abstract}
We study the expressivity and
complexity of two modal logics interpreted on finite forests and
equipped with standard modalities to reason on submodels. The logic~\modallogicCC extends the modal logic K
with the composition operator~$\chopop$ from ambient logic,
whereas \modallogicSC features the separating conjunction~$\separate$ from separation logic.
Both operators are second-order in nature. We show that~\modallogicCC is as expressive as the graded modal
logic~\GML (on trees)
whereas~\modallogicSC is  strictly less expressive than~\GML. Moreover,
we
establish that the satisfiability problem is \tower-complete for~\modallogicSC, whereas
it is (only)~\aexppol-complete for~\modallogicCC, a result which is surprising
given their relative expressivity.
As by-products, we solve open problems related to sister logics such as static ambient logic and
modal separation logic.

\end{abstract}

\begin{CCSXML}
<ccs2012>
   <concept>
       <concept_id>10003752.10003790.10003793</concept_id>
       <concept_desc>Theory of computation~Modal and temporal logics</concept_desc>
       <concept_significance>500</concept_significance>
       </concept>
 </ccs2012>
\end{CCSXML}

\ccsdesc[500]{Theory of computation~Modal and temporal logics}

\keywords{modal logic on trees, separation logic, static ambient logic, graded modal logic, expressive power, complexity}

\maketitle

\section{Introduction}
\label{section-intro}

The ability to quantify over substructures to express properties of a model is often instrumental to perform modular and local reasoning. Two well-known examples are provided by
separation logics~\cite{Ishtiaq&OHearn01,OHearn&Reynolds&Yang01,Reynolds02},
dedicated to reasoning
on pointer programs,
and ambient (or more generally, spatial) logics~\cite{CCG03,CGZ05,Boneva&Talbot&Tison05,Dawar&Gardner&Ghelli07},
dedicated to reasoning on disjoint data structures.
In the realm of modal logics dedicated to knowledge representation,
submodel reasoning remains a key ingredient to express the dynamics of knowledge and belief, as done in the
logics of public announcement~\cite{Plaza89,Lutz06,Balbianietal08},
sabotage modal logics~\cite{AucherBG18}, refinement modal logics~\cite{Bozzelli&vanDitmarsch&Pinchinat15} and
relation-changing logics~\cite{Aucheretal09,arfeho:movi12,AFH15}.
Though the models may be of different nature (e.g.\ memory states for separation logics, epistemic models
for logics of public announcement or finite edge-labelled trees for ambient logics), all those logics feature composition operators that enable
to compose or decompose substructures in a very natural way.

From a technical point of view, reasoning about submodels requires a  global analysis, unlike the
local approach for classical modal and temporal logics (typically based on automata techniques~\cite{Vardi&Wolper86,Vardi&Wolper94}).
This makes the comparison between those formalisms quite challenging and
often limited to a superficial analysis on the different classes of models and  composition operators.
For instance, the composition operator $\chopop$ in ambient logics decomposes
a tree into two disjoint pieces such that
once a node has been assigned to one submodel, all its descendants belong to the same submodel. Instead,
the separating conjunction $\separate$ from separation logic decomposes
the memory states into two disjoint memory states.
Obviously, these and other well-known operators are closely related
but no uniform framework investigates exhaustively their relationships in terms of expressive power.

Most of these logics can be easily encoded in monadic second-order logic \MSO (or in second-order modal logics~\cite{Fine70,Laroussinie&Markey14}).
Complexity-wise, if models are tree-like structures, we can then infer decidability thanks to the celebrated Rabin's theorem~\cite{Rabin69}.
However, most likely, this does not produce the best decision procedures when it comes to solving simple reasoning tasks (e.g.\ the satisfiability problem of \MSO is \tower-complete~\cite{Schmitz16}).
Thus, relying on \MSO as a common umbrella to capture and understand the differences between those logical formalisms is often not satisfactory.

\subsubsection*{Our motivations.}
Our intention in this work is to provide an in-depth comparison between the composition operator $\chopop$
from static ambient logic~\cite{CCG03} and the separating conjunction $\separate$ from
separation logics~\cite{Reynolds02} by identifying a common ground in terms of logical languages and models.
As a consequence, we are able to study the effects of having these operators as far as expressivity and  complexity are concerned.
We aim at defining two logics whose only differences rest on their use of $\chopop$ and $\separate$ syntactically
and semantically (by considering the adequate composition operation). To do so, we pick as our common class of models,
the Kripke-style finite trees (actually finite forests, so that the class is closed under taking submodels),
which provides an ubiquitous class of structures, extremely well-studied in computer science.
For the  underlying logical language (i.e. apart from $\chopop$ or $\separate$),
we advocate the use of the standard modal logic K (i.e. to have Boolean connectives and the standard modality $\Diamond$)
so that the main operations on the models amount  to quantify over submodels or to move along the edges. This framework
is sufficiently fundamental to give us the possibility to take advantage of model theoretical tools from modal logics~\cite{DeRijke00,Blackburn&deRijke&Venema01,FattorosiBarnaba&DeCaro85}.
The benefits of settling a common ground for comparison may lead to further comparisons with other logics and new results.

\subsubsection*{Our contributions.}
We introduce \modallogicCC and \modallogicSC, two logics interpreted on Kripke-style forest models,
equipped with the standard modality $\Diamond$, and respectively with the
composition operator~$\chopop$ from static ambient logic~\cite{CCG03} and with
the separating conjunction~$\separate$ from separation logic~\cite{Reynolds02}.
Both logical formalisms can state non-trivial properties about submodels, but the binary modalities
$\chopop$ and $\separate$ operate differently: whereas~$\separate$
is able to decompose the models at any depth, $\chopop$ is much less permissive as the decomposition
is completely determined by what happens at the level of the children of the current node.
We study their expressive power and complexity, obtaining surprising results.
We show that \modallogicCC is as expressive as the graded modal logic~\GML~\cite{FattorosiBarnaba&DeCaro85,Tobies00}
whereas \modallogicSC is strictly less expressive than \GML.
Interestingly, this latter development partially reuses the result for \modallogicCC,
hence showing how our framework allows us to transpose results between the two logics.
To show that \GML is strictly more expressive than~\modallogicSC,
we define Ehrenfeucht-Fra\"iss\'e games for~\modallogicSC.
In terms of complexity, the satisfiability problem for~\modallogicCC is shown~\aexppol-complete\footnote{
Problems in \aexppol are decidable by an alternating Turing machine working
in exponential-time and using polynomially many alternations~\cite{Bozzelli17}.
}, interestingly the same complexity as for the refinement modal logic
\RML~\cite{Bozzelli&vanDitmarsch&Pinchinat15} handling a quantifier over refinements (generalising  the submodel construction).
The~\aexppol upper bound follows from
 an ex\-po\-nen\-tial-size model property, whereas the lower bound is by reducing
the satisfiability problem for an \aexppol-complete team logic~\cite{HannulaKVV18}.
Much more surprisingly, although \modallogicSC is strictly less expressive
than \modallogicCC, its complexity is much higher (not even elementary).
Precisely, we show that the satisfiability problem for~\modallogicSC
is \tower-complete.
The \tower upper bound is a consequence of~\cite{Rabin69},
whereas hardness is shown by reduction from a \tower-complete tiling problem,
adapting substantially the \tower-hardness proof from~\cite{Bednarczyk&Demri19} for second-order modal logic K on finite trees.
To conclude, we get the best of our results on
\modallogicCC and \modallogicSC to solve several open problems.
We relate~\modallogicCC
with an intensional fragment of static ambient logic~\SAL from~\cite{CCG03} by providing
polynomial-time reductions between their satisfiability problems. Consequently, we
establish~\aexppol-completeness of~\SAL, refuting hints from~\cite[Section 6]{CCG03}. Similarly, we show that
the modal separation logic MSL($\Diamond^{-1},\separate$) from~\cite{DemriF19}
is \tower-complete.

\noindent {\em This document extends~\cite{BDFM20} with a technical appendix including additional 
information and all omitted proofs.}


%
\section{Preliminaries}
\label{section-preliminaries}
In this section, we introduce the logics \modallogicCC and \modallogicSC
interpreted on tree-like structures
equipped with operators to split the structure into disjoint pieces.
Due to the presence of such operators, we are required to consider a class of models that is closed under submodels, which we call
Kripke-style finite  forests (or finite forests for short).

Let $\varprop$ be a countably infinite
set of \defstyle{atomic propositions}.
A \defstyle{(Kripke-style) finite forest} is a triple $\amodel = \triple{\worlds}{\arelation}{\avaluation}$ where
$\worlds$ is a non-empty finite set of \defstyle{worlds},
$\avaluation: \varprop \rightarrow \powerset{\worlds}$ is a \defstyle{valuation} and
$\arelation \subseteq \worlds \times \worlds$ is a binary relation whose inverse $\arelation^{-1}$ is functional and acyclic.
Then, in particular the graph described by $\pair{\worlds}{\arelation}$ is a finite collection of disjoint finite trees (where $\arelation$ encodes the child relation).



We define $\arelation(\aworld) \egdef \set{\aworld' \in \worlds \mid \pair{\aworld}{\aworld'} \in \arelation}$.
Worlds
in~${\arelation(\aworld)}$ are understood as \defstyle{children} of $\aworld$.
We inductively define $\arelation^n$:
$\arelation^0 \egdef \{\pair{\aworld}{\aworld} \mid \aworld \in \worlds \}$;
$\arelation^{n+1} \egdef \{\pair{\aworld}{\aworld''} \mid \exists \aworld'\, {\pair{\aworld}{\aworld'} \in \arelation^n}$
$\text{and } \pair{\aworld'}{\aworld''} \in \arelation\,\}$.
$\arelation^+$ denotes~the transitive closure of~$\arelation$.

We define operators that chop a finite forest. It should be noted that these operators, as well as the resulting logics, can be cast under the umbrella of the logic of bunched implications \BI~\cite{Pym02,GalmicheBI}, with the exception that we do not explicitly require them to have an identity element (as enforced on the multiplicative operators of \BI, see \cite{GalmicheBI}).
Let $\amodel = \triple{\worlds}{\arelation}{\avaluation}$ and $\amodel_i = \triple{\worlds_i}{\arelation_i}{\avaluation_i}$ (for $i \in \{1,2\}$) be three
finite forests.
\subsubsection*{The separation logic composition.}
We introduce the binary operator $+$
that performs the disjoint union at the level of parent-child relation. Formally,
\begin{center}
$\amodel = \amodel_1 + \amodel_2$ 
      $\equivdef$
      $\arelation_1 \uplus \arelation_2 = \arelation$,
      ${\worlds_1 = \worlds_2 = \worlds}$, $\avaluation_1 = \avaluation_2 = \avaluation\!.$
\end{center}
      This is the composition used in~separation logic~\cite{Reynolds02, DemriF19}.
 The figure below depicts possible instances for $\amodel$, $\amodel_1$ and $\amodel_2$.
\begin{center}
    \begin{tikzpicture}[thick,yscale=0.81, xscale=0.95, every node/.style={transform shape}]
    \node[dot] (w) []{ };

    \node[dot] (w1) [below left = 0.7cm and 0.6cm of w] {};
    \node[dot] (w2) [below right  = 0.7cm and 0.6cm of w] {};

    \node[dot] (w3) [below left = 0.5cm and 0.45cm of w1] {};
    \node[dot] (w4) [below right  = 0.5cm and 0.45cm of w1] {};

    \node[dot] (w5) [below left = 0.5cm and 0.45cm of w2] {};
    \node[dot] (w6) [below right  = 0.5cm and 0.45cm of w2] {};

    \draw[pto] (w) -- (w1);
    \draw[pto] (w) -- (w2);

    \draw[pto] (w1) -- (w3);
    \draw[pto] (w1) -- (w4);

    \draw[pto] (w2) -- (w5);
    \draw[pto] (w2) -- (w6);

    \node (eq) [right = 0.625cm of w2] {$=$};

    \node[dot] (ww) [right = 3cm of w] { };

    \node[dot] (ww1) [below left = 0.7cm and 0.6cm of ww] {};
    \node[dot] (ww2) [below right  = 0.7cm and 0.6cm of ww] {};

    \node[dot] (ww3) [below left = 0.5cm and 0.45cm of ww1] {};
    \node[dot] (ww4) [below right  = 0.5cm and 0.45cm of ww1] {};

    \node[dot] (ww5) [below left = 0.5cm and 0.45cm of ww2] {};
    \node[dot] (ww6) [below right  = 0.5cm and 0.45cm of ww2] {};

    \draw[pto] (ww) -- (ww1);

    \draw[pto] (ww1) -- (ww3);

    \draw[pto] (ww2) -- (ww5);
    \draw[pto] (ww2) -- (ww6);

    \node (plw) [right = 0.575cm of ww2] {$+$};

    \node[dot] (wv) [right = 2.9cm of ww] {};

    \node[dot] (wv1) [below left = 0.7cm and 0.6cm of wv] {};
    \node[dot] (wv2) [below right  = 0.7cm and 0.6cm of wv] {};

    \node[dot] (wv3) [below left = 0.5cm and 0.45cm of wv1] {};
    \node[dot] (wv4) [below right  = 0.5cm and 0.45cm of wv1] {};

    \node[dot] (wv5) [below left = 0.5cm and 0.45cm of wv2] {};
    \node[dot] (wv6) [below right  = 0.5cm and 0.45cm of wv2] {};

    \draw[pto] (wv) -- (wv2);

    \draw[pto] (wv1) -- (wv4);

    \end{tikzpicture}
  \end{center}

\subsubsection*{The ambient logic composition.}
We introduce the operator $+_{\aworld}$, where $\aworld \in \worlds$, that constraints further~$+$:
\begin{center}
$
\begin{aligned}
\amodel = \amodel_1 +_{\aworld} \amodel_2
      \equivdef
      &
      \amodel = \amodel_1 + \amodel_2
      \text{ and }
      \arelation_i^+(\aworld') = \arelation^+(\aworld')
      \\[-3pt]
      &\text{holds for all }
      i \in \set{1,2}
      \text{ and }
      \aworld' \in \arelation_i(\aworld).
\end{aligned}$
\end{center}
$\amodel$ is 
    a disjoint union between $\amodel_1$ and $\amodel_2$
      except that,
       as soon as $\aworld' \in \arelation_i(\aworld)$,
       the whole subtree of $\aworld'$ in $\arelation$ belongs to~$\amodel_i$,
      like the composition in ambient logic~\cite{CCG03}.
      Below, we illustrate a model decomposed with $+_{\aworld}$.

      \begin{center}
          \begin{tikzpicture}[thick, yscale=0.81, xscale=0.95, every node/.style={transform shape}]
          \node[dot] (w) [label=above:$\aworld$]{ };

          \node[dot] (w1) [below left = 0.7cm and 0.6cm of w] {};
          \node[dot] (w2) [below right  = 0.7cm and 0.6cm of w] {};

          \node[dot] (w3) [below left = 0.5cm and 0.45cm of w1] {};
          \node[dot] (w4) [below right  = 0.5cm and 0.45cm of w1] {};

          \node[dot] (w5) [below left = 0.5cm and 0.45cm of w2] {};
          \node[dot] (w6) [below right  = 0.5cm and 0.45cm of w2] {};

          \draw[pto] (w) -- (w1);
          \draw[pto] (w) -- (w2);

          \draw[pto] (w1) -- (w3);
          \draw[pto] (w1) -- (w4);

          \draw[pto] (w2) -- (w5);
          \draw[pto] (w2) -- (w6);

          \node (eq) [right = 0.625cm of w2] {$=$};

          \node[dot] (ww) [right = 3cm of w,label=above:$\aworld$] { };

          \node[dot] (ww1) [below left = 0.7cm and 0.6cm of ww] {};
          \node[dot] (ww2) [below right  = 0.7cm and 0.6cm of ww] {};

          \node[dot] (ww3) [below left = 0.5cm and 0.45cm of ww1] {};
          \node[dot] (ww4) [below right  = 0.5cm and 0.45cm of ww1] {};

          \node[dot] (ww5) [below left = 0.5cm and 0.45cm of ww2] {};
          \node[dot] (ww6) [below right  = 0.5cm and 0.45cm of ww2] {};

          \draw[pto] (ww) -- (ww1);

          \draw[pto] (ww1) -- (ww3);
          \draw[pto] (ww1) -- (ww4);

          \draw[pto] (w2) -- (w5);
          \draw[pto] (w2) -- (w6);

          \node (plw) [right = 0.525cm of ww2] {$+_{\aworld}$};

          \node[dot] (wv) [right = 2.9cm of ww,label=above:$\aworld$] {};

          \node[dot] (wv1) [below left = 0.7cm and 0.6cm of wv] {};
          \node[dot] (wv2) [below right  = 0.7cm and 0.6cm of wv] {};

          \node[dot] (wv3) [below left = 0.5cm and 0.45cm of wv1] {};
          \node[dot] (wv4) [below right  = 0.5cm and 0.45cm of wv1] {};

          \node[dot] (wv5) [below left = 0.5cm and 0.45cm of wv2] {};
          \node[dot] (wv6) [below right  = 0.5cm and 0.45cm of wv2] {};

          \draw[pto] (wv) -- (wv2);

          \draw[pto] (wv2) -- (wv5);
          \draw[pto] (wv2) -- (wv6);

          \end{tikzpicture}
        \end{center}

We say that $\amodel_1$ is a \defstyle{submodel} of $\amodel$, written $\amodel_1 \sqsubseteq \amodel$
if there is $\amodel_2$ such that $\amodel = \amodel_1 + \amodel_2$.

\subsubsection*{Modal logics on trees}
The logic \modallogicCC enriches the modal logic K (a.k.a. \ML) with a binary connective $\chopop$\,, called \defstyle{composition operator},
that admits submodel reasoning via the  operator $+_{\aworld}$.
Similarly,
\modallogicSC enriches \ML with the connective $\separate$, called \defstyle{separating conjunction} (or \defstyle{star}) that admits submodel reasoning via the operator  $+$.
Both connectives $\chopop$ and $\separate$ are understood as binary modalities.
As we show throughout the paper, \modallogicCC and \modallogicSC
are strongly related to the graded modal logic \GML~\cite{DeRijke00}. For conciseness, let us define all these logics by considering formulae that contain all of their ingredients.
These formulae are built from 
\begin{center}
$
\aformula \grammardef\
\top \mid  \avarprop  \mid
\aformula \land \aformula \mid
\lnot\aformula \mid
\Diamond \aformula  \mid
\Gdiamond{\geq k}{\aformula} \mid
\aformula \separate \aformula \mid
\aformula \chopop \aformula,
$
\end{center}
\cut{
the grammar below:
\begin{nscenter}
$
\aformula \grammardef\
\avarprop\ \mid\
\aformula \land \aformula\ \mid\
\lnot\aformula\ \mid\
\Diamond\ \aformula \ \mid \
\Gdiamond{\geq k}{\aformula} \ \mid \
\aformula \separate \aformula\ \mid\
\aformula \chopop \aformula,
$
\end{nscenter}
}
where $\avarprop \in \varprop$ and $k \in \Nat$ (encoded in binary). 
A \defstyle{pointed forest}  $\pair{\model}{\aworld}$ is
a finite forest $\amodel = \triple{\worlds}{\arelation}{\avaluation}$ together with
a world $\aworld \in \worlds$.
The satisfaction relation $\models$ is defined as follows
(standard clauses for $\land$, $\lnot$ and $\top$ are omitted):
\begin{center}
$\begin{array}{l@{\,}c@{\,}l}
    \amodel, \aworld \models \avarprop & \iff &\aworld \in \avaluation(\avarprop); \\[2pt]
\amodel, \aworld \models \Diamond \aformula & \iff & \text{there is } \aworld' \in \arelation(\aworld) \text{ s.t.}\ \amodel, \aworld'\,{\models}\, \aformula; \\[2pt]
\amodel, \aworld \models \Gdiamond{\geq k}{\aformula} & \iff
 & \card{\set{\aworld' \in \arelation(\aworld) \mid \amodel, \aworld' \models \aformula}}\,{\geq}\,k;\\[2pt]
\amodel, \aworld \models \aformula_1 \separate \aformula_2 & \iff &
\text{there are $\amodel_1$, $\amodel_2$ s.t. } \amodel = \amodel_1 + \amodel_2, \\
& & \amodel_1, \aworld \models \aformula_1 \text{ and }\amodel_2, \aworld \models \aformula_2; \\[2pt]
\amodel, \aworld \models \aformula_1 \chopop \aformula_2 & \iff &
\text{there are $\amodel_1$, $\amodel_2$ s.t. $\amodel = \amodel_1 +_{\aworld} \amodel_2$,} \\
& & \amodel_1, \aworld \models \aformula_1 \text{ and }\amodel_2, \aworld \models \aformula_2.
\end{array}
$
\end{center}
The formulae $\aformula \implies \aformulabis$, $\aformula \lor \aformulabis$ and
$\false$ are defined as usual. 
We use
the following standard abbreviations:
$\Box \aformula \egdef \neg \Diamond \neg \aformula$,
$\Gdiamond{\leq k}{\aformula}  \egdef \neg \Gdiamond{\geq k+1}{\aformula}$ and
$\Gdiamond{= k}{\aformula}  \egdef \Gdiamond{\geq k}{\aformula}  \wedge \Gdiamond{\leq k}{\aformula}$.
We write  $\fsize{\aformula}$ to denote the \defstyle{size} of $\aformula$ with a tree representation
of formulae and with a reasonably succinct encoding of atomic formulae.
Besides, we write $\md{\aformula}$ to denote  the \defstyle{modal degree} of $\aformula$ understood
as the maximal number of nested unary modalities (i.e. $\Diamond$ or $\Gdiamond{\geq k}$) in $\aformula$.
Similarly, the \defstyle{graded rank} $\gr{\aformula}$ of $\aformula$ is defined as
$\max (\set{k \mid \Gdiamond{\geq k}{\aformulabis} \in \subf{\aformula}}\,{\cup}\, \set{0})$, where $\subf{\aformula}$ is the set of all the subformulae of $\aformula$.

Given the formulae $\aformula$ and $\aformulabis$, $\aformula \equiv \aformulabis$ denotes that
$\aformula$ and $\aformulabis$ are \defstyle{logically equivalent}; i.e.,
for every pointed forest $\pair{\amodel}{\aworld}$, \   
$\amodel, \aworld \models \aformula$ iff  $\amodel, \aworld \models \aformulabis$.
For instance ($k \geq 1$ and $\avarprop \in \varprop$): 
\begin{nscenter}
\begin{tabular}{l}
    \pointlabel{1}{logical-equiv-one}\ $\Diamond \aformula \equiv \Gdiamond{\geq 1}{\aformula}$;
    \hspace{0.75cm}
    \pointlabel{2}{logical-equiv-two}\
    $
    (\Box \Box {\perp} \chopop \Box \Box {\perp})
    \not \equiv
    (\Box \Box {\perp} \separate \Box \Box {\perp});
    $
    \\[1pt]
    \pointlabel{3}{logical-equiv-three}\
    $\Gdiamond{\geq k}{\avarprop} \equiv
    \underbrace{\Diamond \avarprop \separate \cdots \separate \Diamond \avarprop}_{k \ {\rm times}};$
    \hspace{0.65cm}
    \pointlabel{4}{logical-equiv-four}\
    $\Gdiamond{\geq k}{\aformula} \equiv
    \underbrace{\Diamond \aformula \chopop \cdots \chopop \Diamond \aformula}_{k \ {\rm times}}.$
\end{tabular}
\end{nscenter}
The modal logic \ML is the logic restricted to formulae with the unique modality $\Diamond$~\cite{Blackburn&deRijke&Venema01}.
Similarly, the graded modal logic \GML is restricted to the \defstyle{graded modalities}~$\Gdiamond{\geq k}$~\cite{DeRijke00}.
We introduce the modal logics \modallogicCC and \modallogicSC, which are restricted to the suites of modalities $\pair{\Diamond}{\chopop\,}$ and $\pair{\Diamond}{\separate}$, respectively.
The two equivalences~\ref{logical-equiv-three} and~\ref{logical-equiv-four} already shed some light on \modallogicCC and \modallogicSC:
the two logics are similar when it comes to their formulae of modal degree one.

\begin{lemma}\label{lemma:SC-CC-equiv-modal-rank-1}
Let $\aformula$ be a formula in \modallogicCC with $\md{\aformula} \leq 1$. Then, $\aformula \equiv \aformula[\,\chopop \leftarrow \separate]$ where
$ \aformula[\,\chopop \leftarrow \separate]$ is the formula in \modallogicSC obtained from $\aformula$ by replacing every occurrence of $\,\chopop$ by $\separate$.
\end{lemma}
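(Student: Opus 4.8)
The plan is to prove the equivalence by structural induction on $\aformula$, strengthened to track the point at which the two composition operators behave identically. The key observation is that when $\md{\aformula} \leq 1$, every subformula $\aformulabis$ appearing immediately under a $\Diamond$ or $\Gdiamond{\geq k}$ has modal degree $0$, i.e.\ $\aformulabis$ is a Boolean combination of atomic propositions and $\top$, and hence its truth at a world depends only on the valuation $\avaluation$ at that world. Since both $+$ and $+_\aworld$ preserve $\worlds$ and $\avaluation$ (they only redistribute the edge relation), such "depth-$0$" formulae are insensitive to which composition we use. The real content is therefore confined to how $\chopop$ versus $\separate$ interact with the $\Diamond$/$\Gdiamond{\geq k}$ layer, and this is exactly what the equivalences~\ref{logical-equiv-three} and~\ref{logical-equiv-four} already encapsulate at the top level.

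\textbf{First}, I would set up the induction with the following strengthened statement: for every $\aformula$ in \modallogicCC with $\md{\aformula} \leq 1$ and every pointed forest $\pair{\amodel}{\aworld}$, we have $\amodel, \aworld \models \aformula$ iff $\amodel, \aworld \models \aformula[\,\chopop \leftarrow \separate]$. The Boolean cases ($\top$, $\avarprop$, $\land$, $\lnot$) are immediate since the translation commutes with them and neither operator occurs. The crucial cases are the three binary/modal constructors. For the modal cases $\Diamond\aformulabis$ and $\Gdiamond{\geq k}{\aformulabis}$: here $\md{\aformulabis} = 0$, so $\aformulabis$ contains no $\chopop$ and no $\separate$ at all (any such operator would need at least... well, actually a composition operator on a depth-$0$ formula is still depth-$0$, so one must be slightly careful), hence $\aformulabis[\,\chopop \leftarrow \separate] = \aformulabis$ and the claim is trivial. \textbf{Wait} --- I should double-check: can a depth-$0$ formula contain $\chopop$? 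Yes, e.g.\ $\avarprop \chopop \avarpropbis$ has modal degree $0$. So I must handle that: I need a separate \emph{sublemma} stating that for any $\aformula$ with $\md{\aformula} = 0$, $\amodel,\aworld \models \aformula$ iff $\amodel,\aworld \models \aformula[\,\chopop\leftarrow\separate]$, which follows because $\amodel = \amodel_1 \chopop \amodel_2$ (or with $+$) only changes $\arelation$, never $\worlds$ or $\avaluation$, and a depth-$0$ formula's truth at $\aworld$ depends only on $\aworld \in \avaluation(\cdot)$; moreover both $+$ and $+_\aworld$ admit the trivial decomposition where one piece has empty relation, and more relevantly any $\amodel_1$ with the same worlds and valuation satisfies the same depth-$0$ formulae as $\amodel$. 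So for depth-$0$ formulae, \emph{both} $\aformula_1 \chopop \aformula_2$ and $\aformula_1 \separate \aformula_2$ reduce (over a fixed pointed forest) to a statement depending only on whether the depth-$0$ formulae $\aformula_1, \aformula_2$ hold at $\aworld$ in $\amodel$ itself, and these coincide under the translation by induction on the depth-$0$ fragment.

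\textbf{Then}, for the top-level composition case of the main induction --- $\aformula = \aformula_1 \chopop \aformula_2$ with $\md{\aformula} \leq 1$, so $\md{\aformula_i} \leq 1$ --- I would push the translation inside: by induction hypothesis $\aformula_i \equiv \aformula_i[\,\chopop\leftarrow\separate]$, so it suffices to show $\aformulabis_1 \chopop \aformulabis_2 \equiv \aformulabis_1 \separate \aformulabis_2$ whenever $\md{\aformulabis_1 \chopop \aformulabis_2} \leq 1$. The point is that each $\aformulabis_i$ of modal degree $\leq 1$ is, up to logical equivalence, a Boolean combination of depth-$0$ formulae and formulae of the form $\Gdiamond{\geq k}{\gamma}$ with $\md{\gamma}=0$; by~\ref{logical-equiv-three}/\ref{logical-equiv-four} the graded modalities can be rewritten uniformly using iterated $\separate$ or iterated $\chopop$ of $\Diamond\gamma$'s, and the depth-$0$ parts are handled by the sublemma. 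The remaining task is the genuinely semantic heart: show that for a single world $\aworld$, a decomposition $\amodel = \amodel_1 + \amodel_2$ can be "upgraded" to $\amodel = \amodel_1' +_\aworld \amodel_2'$ witnessing the same truth of depth-$\leq 1$ formulae at $\aworld$, and vice versa. Concretely: given $\amodel = \amodel_1 +_\aworld \amodel_2$, it is in particular $\amodel = \amodel_1 + \amodel_2$, giving one direction for free; for the converse, given $\amodel = \amodel_1 + \amodel_2$, one reassigns each child subtree of $\aworld$ entirely to whichever side currently owns its root edge, obtaining $\amodel_1', \amodel_2'$ with $\amodel = \amodel_1' +_\aworld \amodel_2'$ such that $\amodel_i, \aworld$ and $\amodel_i', \aworld$ satisfy the same depth-$\leq 1$ formulae (since those only inspect $\aworld$'s children and, via the inner depth-$0$ formulae, the valuation — not the subtree shapes). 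I expect \textbf{this subtree-reassignment argument to be the main obstacle}: one must verify carefully that relocating whole subtrees does not change the satisfaction of $\Gdiamond{\geq k}{\gamma}$ at $\aworld$ for depth-$0$ $\gamma$, which is true precisely because $\gamma$'s truth at a child $\aworld'$ depends only on $\avaluation(\aworld')$ and not on $\arelation$ below $\aworld'$, and both $\amodel_i$ and $\amodel_i'$ agree on which children of $\aworld$ they contain and on the shared valuation. With that in hand, the two composition modalities are interchangeable at the top, the induction goes through, and the lemma follows.
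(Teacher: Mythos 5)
Your proposal is correct and follows essentially the same route as the paper: the paper first invokes its restriction lemma (satisfaction of a formula of modal degree $\leq 1$ depends only on $\arelation|_{\aworld}^{\leq 1}=\{\aworld\}\times\arelation(\aworld)$) and then observes that on this restricted relation the decompositions for $+$ and $+_{\aworld}$ coincide, which is exactly the content of your subtree-reassignment step phrased as a preliminary truncation of the model. Your extra detour through normal forms and graded modalities is unnecessary (the logics here only have $\Diamond$ plus the composition operator), but it does not affect correctness.
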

However, as shown by the non-equivalence~\ref{logical-equiv-two}, it is unclear how the two logics compare when it comes to formulae of modal degree greater than one. Indeed, since $\amodel = \amodel_1 +_{\aworld} \amodel_2$ implies $\amodel = \amodel_1 + \amodel_2$,
but not vice-versa, the  separating conjunction~$\separate$ is more permissive than
the operator~$\chopop$.
However, further connections between the two operators can be easily established.
Let us introduce the auxiliary operator $\SabDiamond$ defined as
$\SabDiamond \aformula \egdef \aformula \separate \Box \bottom$. Formally,
\begin{center}
$
\begin{array}{l@{\,}c@{\,}l}
\triple{\worlds}{\arelation}{\avaluation},\aworld \models \SabDiamond \aformula
& \iff &
\text{there is }\arelation' \subseteq \arelation\ \text{s.t.}\ \arelation'(\aworld) = \arelation(\aworld)\\
&& \text{and } \triple{\worlds}{\arelation'}{\avaluation},\aworld \models \aformula.
\end{array}
$
\end{center}
Similar operators are studied
in~\cite{AFH15,AucherBG18,Bozzelli&vanDitmarsch&Pinchinat15}.
We show that $\SabDiamond$ and $\chopop$ are
sufficient to capture $\separate$ (essential property for~\Cref{section-expressivity-SC}).

\begin{lemma}\label{lemma:SC-CC-tautology}
Let $\aformula,\aformulabis \in\GML$. We have
 $\aformula \separate \aformulabis \equiv \SabDiamond ( \aformula \chopop \aformulabis)$.
\end{lemma}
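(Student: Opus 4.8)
The plan is to establish each implication by transforming a decomposition witnessing one side into one witnessing the other, relying on a single external fact: the truth of a \GML-formula at a world $\aworld$ depends only on the subtree rooted at $\aworld$. Precisely, if the submodels of $\amodel$ and $\amodel'$ induced by $\set{\aworld}\cup\arelation^{+}(\aworld)$ and $\set{\aworld'}\cup(\arelation')^{+}(\aworld')$ are isomorphic via a map sending $\aworld$ to $\aworld'$ and preserving $\avaluation$, then $\amodel,\aworld\models\aformulater$ iff $\amodel',\aworld'\models\aformulater$ for every $\aformulater\in\GML$; this is a routine induction on $\aformulater$, since all modalities of \GML quantify only over children of the current world. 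I also use the unfolding of the semantics of $\SabDiamond$ recalled in the excerpt, namely $\amodel,\aworld\models\SabDiamond\aformulater$ iff there is $\arelation'\subseteq\arelation$ with $\arelation'(\aworld)=\arelation(\aworld)$ and $\triple{\worlds}{\arelation'}{\avaluation},\aworld\models\aformulater$, together with the observation that in any decomposition $\amodel=\amodel_1+_{\aworld}\amodel_2$ the subtree of $\amodel_i$ rooted at $\aworld$ consists of $\aworld$, its children $\arelation_i(\aworld)$, and, below each such child $u$, the \emph{entire} subtree of $\amodel$ rooted at $u$ (distinct such subtrees being disjoint, by the forest property).

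For the direction $\SabDiamond(\aformula\chopop\aformulabis)\Rightarrow\aformula\separate\aformulabis$, take a witness: $\arelation'\subseteq\arelation$ with $\arelation'(\aworld)=\arelation(\aworld)$ and $\triple{\worlds}{\arelation'}{\avaluation}=\amodel_1'+_{\aworld}\amodel_2'$ with $\amodel_1',\aworld\models\aformula$ and $\amodel_2',\aworld\models\aformulabis$. Let $C_i$ be the set of children of $\aworld$ in $\amodel_i'$, so $C_1\uplus C_2=\arelation(\aworld)$ and no edge leaving $\aworld$ is discarded. I build a decomposition $\amodel=\amodel_1+\amodel_2$ by keeping $\arelation_i'$ inside $\amodel_i$ and reattaching each discarded edge $(x,y)\in\arelation\setminus\arelation'$ (necessarily with $x\neq\aworld$) according to the rule: if $x$ lies in the subtree of $\amodel$ rooted at a child $u$ of $\aworld$, put $(x,y)$ in $\amodel_{3-i}$ where $u\in C_i$, i.e.\ on the side \emph{opposite} to $u$'s subtree; any remaining discarded edge, whose source is then not a descendant of $\aworld$, goes to $\amodel_1$. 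This is a genuine $+$-decomposition of $\amodel$. The key point is that the subtree of $\amodel_i$ rooted at $\aworld$ is unchanged with respect to $\amodel_i'$: every edge we placed on side $i$ has its source either not reachable from $\aworld$ in $\amodel$, or inside a sibling subtree assigned to side $3-i$, which is disjoint from every sibling subtree on side $i$; a shortest-path argument using acyclicity of $\arelation^{-1}$ then shows that no added edge lies on, or can extend, a path out of $\aworld$ in $\amodel_i$. Hence the subtree of $\amodel_i$ at $\aworld$ coincides with that of $\amodel_i'$, so by \GML-locality $\amodel_1,\aworld\models\aformula$ and $\amodel_2,\aworld\models\aformulabis$, giving $\amodel,\aworld\models\aformula\separate\aformulabis$.

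For the converse, let $\amodel=\amodel_1+\amodel_2$ with $\amodel_1,\aworld\models\aformula$ and $\amodel_2,\aworld\models\aformulabis$. Define $\arelation'\subseteq\arelation$ to consist of all edges leaving $\aworld$ together with, for each child $u$ of $\aworld$, only the edges of $\arelation_i$ lying in the subtree of $\amodel_i$ rooted at $u$, where $i$ is determined by $(\aworld,u)\in\arelation_i$, and nothing else. Then $\arelation'(\aworld)=\arelation(\aworld)$, and splitting the children of $\aworld$ in $\triple{\worlds}{\arelation'}{\avaluation}$ exactly as they are split by $\amodel_1,\amodel_2$ yields a decomposition $\triple{\worlds}{\arelation'}{\avaluation}=\amodel_1''+_{\aworld}\amodel_2''$, whose $i$-th component has below each of its children $u$ precisely the $\amodel_i$-subtree of $u$ (the non-kept part below $u$ was assigned to the other summand, which is edge-disjoint from it). Consequently the subtree of $\amodel_i''$ rooted at $\aworld$ equals that of $\amodel_i$, so by \GML-locality $\amodel_1'',\aworld\models\aformula$ and $\amodel_2'',\aworld\models\aformulabis$, whence $\triple{\worlds}{\arelation'}{\avaluation},\aworld\models\aformula\chopop\aformulabis$; since $\arelation'\subseteq\arelation$ with $\arelation'(\aworld)=\arelation(\aworld)$, we conclude $\amodel,\aworld\models\SabDiamond(\aformula\chopop\aformulabis)$.

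The main obstacle is the first direction: one must feed the edges discarded by $\SabDiamond$ back into an honest (not subtree-respecting) decomposition of $\amodel$ without perturbing the \GML-type at $\aworld$ of either summand. Dumping all discarded edges into one fixed summand fails exactly when such an edge dangles below a child assigned to that summand; routing each discarded edge to the opposite side, and justifying this via acyclicity of $\arelation^{-1}$ together with the disjointness of sibling subtrees, is the technical heart of the argument, whereas the converse direction is comparatively routine bookkeeping about how $+_{\aworld}$-decompositions interact with subtrees.
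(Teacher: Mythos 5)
Your proof is correct and follows essentially the same route as the paper's: both directions rest on the locality of \GML{} (satisfaction at $\aworld$ depends only on the subtree rooted at $\aworld$, the paper's Lemma~\ref{lemma:height-at-most-n}), the witness for $\SabDiamond(\aformula\chopop\aformulabis)$ is obtained by restricting each summand to its subtree at $\aworld$, and the converse redistributes the edges discarded by $\SabDiamond$ onto the side from which their source is unreachable. The paper's reassignment rule (dump every discarded edge whose source is not in $\arelation_1^*(\aworld)$ into the first summand and the rest into the second) is a slightly terser packaging of your ``route to the opposite side'' rule, justified by the same uniqueness-of-paths argument in a forest.
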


Unlike $\chopop$\,, when $\separate$ splits a finite forest~$\amodel$ into~$\amodel_1$ and $\amodel_2$,
it may disconnect in both submodels worlds that are otherwise reachable, from the current world, in~$\amodel$.
Applying $\SabDiamond$ before~$\chopop$ allows us to imitate this behaviour.
Indeed, even though~$\chopop$ preserves reachability in either~$\amodel_1$ or $\amodel_2$,
$\SabDiamond$ deletes part of~$\amodel$, making some world inaccessible.
This way of expressing the
separating conjunction allows us to reuse some methods developed
for $\modallogicCC$ in order to study $\modallogicSC$.

\subsubsection*{The logic \msokt.}
Both \modallogicCC and \modallogicSC can be seen as fragments of the logic~\msokt,
which in turn is known to be a fragment of monadic second-order logic on trees~\cite{Bednarczyk&Demri19}.
The logic~\msokt extends \ML with second-order quantification and is interpreted on finite trees.
Its formulae are defined according to the following grammar:
\begin{center}
$
\aformula \grammardef\
\avarprop \ \mid \
\Diamond \aformula \ \mid \
\aformula \wedge \aformula \ \mid \
\lnot \aformula \ \mid \
\exists \avarprop\, \aformula .
$
\end{center}
Given $\amodel = \triple{\worlds}{\arelation}{\avaluation}$ and $\aworld \in \worlds$,
the satisfaction relation $\models$ of \ML is extended as follows:
\begin{center}
  $\amodel,\aworld \models \exists \avarprop\, \aformula$ iff
  $\exists \worlds' \subseteq \worlds$ s.t.\ $\triple{\worlds}{\arelation}{\avaluation[\avarprop \gets \worlds']},\aworld \models \aformula$.
\end{center}
One can show logspace reductions from \modallogicCC and \modallogicSC to
\msokt, by simply reinterpreting the operators $\separate$ and $\chopop$ as restrictive forms of second-order quantification,
and by
relativising~$\Diamond$ to appropriate propositional symbols in order
to capture the notion of submodel (details are omitted).

\subsubsection*{Satisfiability problem.}
The \defstyle{satisfiability problem} for a logic $\alogic$, written \satproblem{$\alogic$},
takes as input a formula $\aformula$ in $\alogic$ and checks whether
there is a pointed forest~$\pair{\model}{\aworld}$ such that
$\amodel, \aworld \models
\aformula$.

Note that any $\alogic$ among \ML, \GML, \modallogicCC or \modallogicSC
has the tree model property, i.e. any satisfiable formula is also satisfied in some tree structure.
The problems \satproblem{\ML} and \satproblem{\GML} are known to be
\pspace-complete, see e.g.~\cite{Ladner77,Blackburn&deRijke&Venema01,Tobies00,Schroder&Pattinson06},
and therefore \satproblem{\modallogicCC} and \satproblem{\modallogicSC} are \pspace-hard.
As an upper bound,
by Rabin's theorem~\cite{Rabin69}, the satisfiability problem for~\msokt is decidable in \tower,
which transfers directly to \satproblem{\modallogicCC} and \satproblem{\modallogicSC}.
\subsubsection*{Expressive power.}
Given two logics $\alogic_1$ and $\alogic_2$, we say that
$\alogic_2$ is \defstyle{at least as expressive as} $\alogic_1$ (written $\alogic_1 \preceq \alogic_2$) whenever
for every formula $\aformula$ of $\alogic_1$, there is a formula $\aformulabis$ of $\alogic_2$
such that $\aformula \equiv \aformulabis$.
$\alogic_1 \approx \alogic_2$ denotes that
$\alogic_1$ and $\alogic_2$ are \defstyle{equally expressive}, i.e.
$\alogic_1 \preceq \alogic_2$ and $\alogic_2 \preceq \alogic_1$.
Lastly, $\alogic_1 \prec\alogic_2$ denotes that $\alogic_2$ is \defstyle{strictly more expressive} than $\alogic_1$, i.e. $\alogic_1 \preceq \alogic_2$ and $\alogic_1 \not \approx \alogic_2$.
The equivalence~\ref{logical-equiv-one} recalls us that \ML $\prec$ \GML~\cite{DeRijke00}.
From the equivalence~\ref{logical-equiv-four}, we get \GML $\preceq$ \modallogicCC.
%



%
\section{\modallogicCC: Expressiveness and Complexity}
\label{section-CC}
In this section,  we study the expressive power  of \modallogicCC and the complexity
of \satproblem{\modallogicCC}. We show constructively that $\modallogicCC\preceq\GML$,
hence proving
$\modallogicCC \approx \GML$.
Next, we show that $\satproblem{\modallogicCC}$ is \aexppol-complete. The upper bound is achieved
by proving an exponential-size model property. The lower bound is by reduction from
the satisfiability problem for propositional team logic~\cite[Thm. 4.9]{HannulaKVV18}.

\subsection{\modallogicCC is not more expressive than \GML}
\label{section-cc-less-gml}

Establishing \modallogicCC $\preceq$ \GML amounts to show
that
given $\aformula_1$, $\aformula_2$  in \GML, one can construct $\aformulabis$ in \GML such that
$\aformula_1 \chopop \aformula_2 \equiv \aformulabis$.
For instance, a simple case analysis yields the equivalence
$
(\avarprop \vee \Gdiamond{\geq 3} \avarpropter) \chopop \
(\avarpropbis \vee \Gdiamond{\leq 5} \avarpropbis)
\equiv
(\avarprop \vee \Gdiamond{\geq 3} \avarpropter)$.
With this property, the general algorithm consists in iteratively replacing innermost subformulae
of the form $\aformula_1 \chopop \aformula_2$ by a counterpart in \GML, allowing us to eliminate all the occurrences of $\chopop$
and obtain an equivalent formula in \GML.
The base case involves subformulae $\aformula_1$ and $\aformula_2$
in \ML (a fragment of \GML).



Let us provide a few definitions.
Let $\aformula$ be a formula in \GML.
We write $\maxpc{\aformula}$ to denote the set of subformulae
$\aformulabis$ of $\aformula$ that are maximal and modality-free, i.e.
\begin{enumerate}
  \item $\aformulabis$ is modality-free: it does not contain modalities~$\Gdiamond{\geq k}$
  and one of its occurrences is not in the scope of~$\Gdiamond{\geq k}$\!;
  \item $\aformulabis$ is maximal: one of its occurrences does not belong to a larger modality-free subformula of $\aformula$.
\end{enumerate}
For instance, $\maxpc{(\avarprop \vee \Gdiamond{\geq 3} \avarpropter) \wedge (\avarpropbis \vee \avarprop)}
= \set{\avarprop, \avarpropbis \vee \avarprop}$.
Similarly, 
$\maxgmod{\aformula}$ denotes the set of subformulae
$\aformulabis$ of $\aformula$ such that $\aformulabis$ is of the form $\Gdiamond{\geq k} \aformulabis'$
and one of its occurrences in $\aformula$ is not in the scope of graded modalities~$\Gdiamond{\geq k}$.
For instance,
\begin{nscenter}
$\maxgmod{(\avarprop \vee \Gdiamond{\geq 3} \avarpropter) \wedge (\avarpropbis \vee
\Gdiamond{\geq 5}{\Gdiamond{\geq 2}{\avarpropbis}})}
= \set{\Gdiamond{\geq 3}{\avarpropter}, \Gdiamond{\geq 5}{\Gdiamond{\geq 2}{\avarpropbis}}}$.
\end{nscenter}
Every formula $\aformula$ in \GML is a Boolean combination of formulae from
 $\maxpc{\aformula} \cup \maxgmod{\aformula}$.
Lastly,
$\aformula$
is in \defstyle{good shape} if the properties~\ref{good-shape-1} and~\ref{good-shape-2} below hold:
\begin{enumerate}
\item[\pointlabel{1}{good-shape-1}] $\maxpc{\aformula} \subseteq \set{\perp,\top}$.
      Consequently, every propositional variable in $\aformula$ occurs
      in the scope of a graded modality;
\item[\pointlabel{2}{good-shape-2}] For all  $\Gdiamond{\geq k} \aformulabis,\Gdiamond{\geq k'} \aformulabis'$ in $\maxgmod{\aformula}$ with $\aformulabis \neq \aformulabis'$,
the conjunction  $\aformulabis \wedge \aformulabis'$ is unsatisfiable.
\end{enumerate}

Let $\aformula_1$ and $\aformula_2$ be \GML formulae.
First, we show that when $\aformula_1 \wedge \aformula_2$ is in good shape, there is a \GML formula $\aformulabis$
such that $\aformula_1 \chopop \aformula_2 \equiv \aformulabis$.
To do so, we take a slight
detour through Presburger arithmetic (\PA), see e.g.~\cite{Presburger29,Haase18}. Given two formulae $\aformula_1,\aformula_2$ in \GML,
we will characterise the formula $\aformula_1 \chopop \aformula_2$ by
using arithmetical constraints for the number of successors. Then, we will take
advantage of basic properties of \PA in order to eliminate quantifiers, and
obtain a \GML formula.
Below, the variables $\avariable, \avariablebis, \avariableter, \ldots$, possibly decorated and occurring in formulae, are from \PA and therefore
they are
interpreted by natural
numbers.

Let $\aformula$ be in \GML s.t.\
$\maxpc{\aformula} \subseteq \set{\true,\false}$ and
$\set{\aformulabis_1, \ldots, \aformulabis_n}$  contains
the set $\set{\aformulabis \mid \Gdiamond{\geq k} \aformulabis \in \maxgmod{\aformula}}$.
We define formulae in \PA that state constraints about the number of children satisfying
a formula $\aformulabis_j$. The variable $\avariable_j$ is intended to be interpreted as the number of children satisfying
$\aformulabis_j$.
We write $\aformula^{\PA}(\avariable_1, \ldots, \avariable_n)$ to denote the arithmetical formula obtained
from $\aformula$ by replacing with $\avariable_j \geq k$ every occurrence of $\Gdiamond{\geq k}
\aformulabis_j$ that it is not in the scope of a graded modality.
For instance, assuming that $\aformula = \Gdiamond{\geq 5}{(\avarprop \wedge \avarpropbis)}
\vee \neg \Gdiamond{\geq 4}{\neg \avarprop}$, the expression
$\aformula^{\PA}(\avariable_1,\avariable_2)$ denotes
the formula~$\avariable_1 \geq 5 \vee \neg (\avariable_2 \geq 4)$.

Let $\aformula_1, \aformula_2$ be \GML formulae such that
$\aformula_1 \land \aformula_2$ is in good shape and
$\set{\aformulabis_1, \ldots, \aformulabis_n} = {\set{\aformulabis \mid \Gdiamond{\geq k} \aformulabis \in \maxgmod{\aformula_1 \wedge \aformula_2}}}$.
We consider the formula~$[\aformula_1,\aformula_2]^{\PA}$ in~\PA defined below:
\begin{center}
$
\begin{aligned}
[\aformula_1,\aformula_2]^{\PA} \, \egdef \, & \exists \ \avariablebis_1^1, \avariablebis_1^2, \ldots, \avariablebis_n^1, \avariablebis_n^2
 \ (\textstyle\bigwedge_{j=1}^{n} \avariable_j = \avariablebis_j^1 + \avariablebis_j^2) \wedge \\[-1pt]
 & \ \ \ \ \aformula^{\PA}_1(\avariablebis_1^1, \ldots,\avariablebis_n^1) \wedge
\aformula^{\PA}_2(\avariablebis_1^2, \ldots,\avariablebis_n^2).
\end{aligned}
$
\end{center}
The formula $[\aformula_1,\aformula_2]^{\PA}$  states that there is a way to divide the children in two distinct sets
and each set allows to satisfy $\aformula^{\PA}_1$ or $\aformula^{\PA}_2$, respectively.
As \PA admits quantifier elimination~\cite{Presburger29,Cooper72,Reddy&Loveland78},
there is a quantifier-free formula $\aformulater$
equivalent to $[\aformula_1,\aformula_2]^{\PA}$ and its free variables
are among $\avariable_1, \ldots, \avariable_n$.
A priori, the atomic formulae of $\aformulater$ may not be of the simple form
$\avariable_j \geq k$ (e.g. `modulo constraints' or constraints of the form $\sum a_i \avariable_j \geq k$
may be involved).
However, if the atomic formulae of $\aformulater$ are restricted to expressions of the form
$\avariable_j \geq k$, then we write $\aformulater^{\GML}$ to denote the \GML formula
obtained from $\aformulater$ by replacing every occurrence of $\avariable_j \geq k$
by $\Gdiamond{\geq k}{\aformulabis_j}$.

\begin{lemma}
\label{lemma:elimination-three-two}
Let  $\aformula_1$, $\aformula_2$ be  in \GML such that
$\aformula_1 \wedge \aformula_2$ is in good shape.
$[\aformula_1,\aformula_2]^{\PA}$ is equivalent to a quantifier-free \PA formula $\aformulater$ whose
atomic formulae are only of the form $\avariable_j \geq k$.
Moreover, $\aformula_1 \chopop \aformula_2 \equiv
\aformulater^{\GML}$ and $\gr{\aformulater^{\GML}} \leq \gr{\aformula_1} + \gr{\aformula_2}$.
\end{lemma}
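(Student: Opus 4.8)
The statement has two halves. First, the claim that $[\aformula_1,\aformula_2]^{\PA}$ is equivalent to a quantifier-free \PA formula whose atoms are only of the form $\avariable_j \geq k$. Second, the semantic identity $\aformula_1 \chopop \aformula_2 \equiv \aformulater^{\GML}$ together with the graded-rank bound. I would treat the two halves separately, since the first is arithmetic and the second is a model-theoretic unfolding of the semantics of $\chopop$.

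For the first half, the plan is: apply Presburger quantifier elimination to get \emph{some} quantifier-free $\aformulater'$ with free variables among $\avariable_1,\dots,\avariable_n$ and atoms possibly of the general form $\sum a_i \avariable_i \equiv r \pmod m$ or $\sum a_i \avariable_i \geq k$. The key observation that lets us eliminate these general atoms is that the variables $\avariable_j$ here are not arbitrary: they are not independent. Because $\aformula_1 \wedge \aformula_2$ is in \emph{good shape}, the formulae $\aformulabis_1,\dots,\aformulabis_n$ from $\maxgmod{\aformula_1\wedge\aformula_2}$ are pairwise mutually unsatisfiable (property~\refit{good-shape-2}), so in any pointed forest each child satisfies \emph{at most one} $\aformulabis_j$; hence the true values of $\avariable_1,\dots,\avariable_n$ satisfy $\avariable_1 + \dots + \avariable_n \leq (\text{number of children})$, and conversely any vector $(n_1,\dots,n_n)\in\Nat^n$ is realisable by some forest. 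So the relevant semantic domain is all of $\Nat^n$. This means I cannot further constrain the $\avariable_j$, but I \emph{can} argue that $[\aformula_1,\aformula_2]^{\PA}$, viewed as a subset of $\Nat^n$, is \emph{upward closed in each coordinate} and in fact is a finite union of ``boxes'' $\prod_j [k_j,\infty)$: this follows because $\aformula_1^{\PA}$ and $\aformula_2^{\PA}$ are built by Boolean combinations of atoms $\avariable_j \geq k$ only (that is how $\aformula^{\PA}$ is defined), so each of $\aformula_1^{\PA}(\vec{\avariablebis}^1)$ and $\aformula_2^{\PA}(\vec{\avariablebis}^2)$ is monotone in each $\avariablebis_j^i$, and $\avariable_j = \avariablebis_j^1 + \avariablebis_j^2$ then makes the existential projection monotone in each $\avariable_j$. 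A monotone subset of $\Nat^n$ defined in \PA is a finite union of upward-closed boxes (it has finitely many minimal elements, and each is finite by well-quasi-ordering of $\Nat^n$ under the product order). Such a set is exactly definable by a positive Boolean combination of atoms $\avariable_j \geq k$; call that formula $\aformulater$. This establishes the first half.

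For the second half I would unfold the semantics of $\chopop$. Fix a pointed forest $\pair{\amodel}{\aworld}$ with $\amodel = \triple{\worlds}{\arelation}{\avaluation}$, and for $j \in \{1,\dots,n\}$ let $n_j$ be the number of children $\aworld' \in \arelation(\aworld)$ with $\amodel,\aworld' \models \aformulabis_j$. The crucial point is that by property~\refit{good-shape-1} (every propositional variable of $\aformula_1\wedge\aformula_2$ occurs under a graded modality, and $\maxpc{\aformula_1\wedge\aformula_2}\subseteq\{\top,\bot\}$) the formula $\aformula_i$ is a Boolean combination of the $\Gdiamond{\geq k}\aformulabis_j$'s (and $\top,\bot$), so whether $\amodel',\aworld\models\aformula_i$ depends only on the vector of counts $(m_1,\dots,m_n)$ where $m_j = |\{\aworld'\in\arelation'(\aworld) : \amodel',\aworld'\models\aformulabis_j\}|$, and precisely $\amodel',\aworld\models\aformula_i \iff \aformula_i^{\PA}(m_1,\dots,m_n)$ holds — here one must check that splitting off the subtree below a child does not change whether that child satisfies $\aformulabis_j$, which holds because $+_\aworld$ moves the \emph{entire} subtree of any child of $\aworld$ into one of the two submodels, and satisfaction of $\aformulabis_j$ at that child is determined by that subtree. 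Now: $\amodel,\aworld\models\aformula_1\chopop\aformula_2$ iff there is a partition of $\arelation(\aworld)$ into the children kept in $\amodel_1$ and those kept in $\amodel_2$ (each carrying its whole subtree) with $\amodel_1,\aworld\models\aformula_1$ and $\amodel_2,\aworld\models\aformula_2$; partitioning the children amounts exactly to choosing, for each $j$, a split $n_j = \avariablebis_j^1 + \avariablebis_j^2$ (again using~\refit{good-shape-2}, each child contributes to at most one count), so this holds iff $[\aformula_1,\aformula_2]^{\PA}(n_1,\dots,n_n)$ holds, iff $\aformulater(n_1,\dots,n_n)$, iff $\amodel,\aworld\models\aformulater^{\GML}$. (One subtlety: children satisfying none of the $\aformulabis_j$ can be distributed arbitrarily, but they are irrelevant to both $\aformula_1^{\PA}$ and $\aformula_2^{\PA}$, so they cause no trouble.) Finally, the rank bound $\gr{\aformulater^{\GML}}\leq\gr{\aformula_1}+\gr{\aformula_2}$: every atom $\avariable_j\geq k$ appearing in $\aformulater$ has $k$ at most the largest threshold appearing on coordinate $j$ in $[\aformula_1,\aformula_2]^{\PA}$, and since that coordinate arises from $\avariablebis_j^1+\avariablebis_j^2$ with $\avariablebis_j^1$ thresholded up to $\gr{\aformula_1}$ in $\aformula_1^{\PA}$ and $\avariablebis_j^2$ up to $\gr{\aformula_2}$ in $\aformula_2^{\PA}$, the minimal witnessing values are bounded by $\gr{\aformula_1}+\gr{\aformula_2}$, giving the claim after translating back through $\aformulater^{\GML}$.

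**Main obstacle.** The delicate step is the monotonicity/box argument of the first half: one must be careful that $\aformula^{\PA}$, as \emph{defined} in the excerpt, really only introduces atoms $\avariable_j\geq k$ (with no negation pushed onto them in a way that breaks monotonicity — note $\neg(\avariable_j\geq k)$ does appear, e.g. via $\Gdiamond{\leq k}$, so the naive ``positive Boolean combination'' claim needs the extra step of passing to the semantic set in $\Nat^n$, observing it is upward closed because the existential split absorbs the negations, and only then concluding it is a finite union of boxes). Getting this absorption right — i.e. that $\exists \avariablebis_j^1\avariablebis_j^2.\,(\avariable_j = \avariablebis_j^1+\avariablebis_j^2)\wedge \aformula_1^{\PA}\wedge\aformula_2^{\PA}$ is monotone in $\avariable_j$ \emph{even though} $\aformula_i^{\PA}$ may be non-monotone — is the crux, and I would spell it out by noting that increasing $\avariable_j$ only adds choices for the split, and that for the ``box'' conclusion one uses Dickson's lemma on the minimal elements.
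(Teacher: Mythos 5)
Your second half --- the semantic correspondence $\aformula_1 \chopop \aformula_2 \equiv \aformulater^{\GML}$ via counting children, using good shape to ensure each child contributes to at most one count and that $+_{\aworld}$ preserves whole subtrees --- is essentially the paper's argument and is fine. The problem is in your first half.

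The monotonicity claim on which your first half rests is false: $[\aformula_1,\aformula_2]^{\PA}$ is \emph{not} in general upward closed in each coordinate, and the existential split does \emph{not} absorb the negations. Take $\aformula_1 = \aformula_2 = \lnot\Gdiamond{\geq 1}{\aformulabis_1}$ (with $\aformulabis_1$ a propositional variable under the modality); then $\aformula_1 \wedge \aformula_2$ is in good shape, $\aformula_i^{\PA}(\avariablebis_1^i)$ is $\lnot(\avariablebis_1^i \geq 1)$, and $[\aformula_1,\aformula_2]^{\PA}$ defines exactly $\{\avariable_1 = 0\}$ --- a set that is neither upward closed nor a finite union of boxes $\prod_j[k_j,\infty)$. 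Increasing $\avariable_j$ does not merely ``add choices for the split'': it forces $\avariablebis_j^1+\avariablebis_j^2$ to be larger, which can violate the upper-bound constraints $\lnot(\avariablebis_j^i \geq k)$ contributed by subformulae $\Gdiamond{\leq k}{}$. Consequently your conclusion that $\aformulater$ can be taken as a \emph{positive} Boolean combination of atoms $\avariable_j \geq k$ is wrong (and correspondingly, the rank bound as you derive it, via minimal elements of an upward-closed set, has no foundation).

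Fortunately the lemma only asks that the \emph{atoms} of $\aformulater$ have the form $\avariable_j \geq k$; negations are permitted. The repair is to do the quantifier elimination explicitly and coordinate by coordinate: put each $\aformula_i^{\PA}$ into a disjunctive normal form in which every variable $\avariablebis_j^i$ is constrained by at most a lower bound $\avariablebis_j^i \geq k$ and at most a negated bound $\lnot(\avariablebis_j^i \geq k')$, i.e.\ to an interval. The projection of $\avariable_j = \avariablebis_j^1 + \avariablebis_j^2$ over two such intervals is again an interval (a Minkowski sum), hence expressible as a conjunction of one atom $\avariable_j \geq k_1+k_1''$ and one negated atom in $\avariable_j$. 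This is exactly where the bound $\gr{\aformulater^{\GML}} \leq \gr{\aformula_1}+\gr{\aformula_2}$ comes from: every threshold in $\aformulater$ is a sum of one threshold from $\aformula_1$ and one from $\aformula_2$. This is the route the paper takes, and it is the one you need.
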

The bound on $\gr{\aformulater^{\GML}}$ stated in this key lemma is essential to obtain an exponential bound on the smallest model satisfying a formula in
\modallogicCC (see~\Cref{section-aexppol}).
Thanks to Lemma~\ref{lemma:elimination-three-two}, we can show that \GML is closed under the operator $\chopop$ by reducing the occurrences of this operator to formulae in good shape. In particular, we show that given two arbitrary formulae $\aformula_1$ and $\aformula_2$ in \GML,
$\aformula_1 \chopop \aformula_2$ is equivalent to a disjunction of formulae of the form $(\aformulabis_1 \chopop \aformulabis_2) \land \aformulater$,
where $\aformulater$ is a Boolean combination of atomic propositions and $\aformulabis_1 \land \aformulabis_2$ is in good shape (hence $\aformulabis_1 \chopop \aformulabis_2$ is equivalent to a formula in \GML by \Cref{lemma:elimination-three-two}).
This is shown syntactically: atomic propositions are dealt with
by propositional reasoning,
whereas to produce $\aformulabis_1$ and $\aformulabis_2$ we use axioms from \GML~\cite{FattorosiBarnaba&DeCaro85} and rely on the following equivalences:
\begin{description}[align=right,leftmargin=*,labelindent=1.3cm]
  \item[\desclabel{(guess)}{axiom:guess}]
    \scalebox{0.85}{$\Gdiamond{\geq k}{\aformula}$ $\equiv$ $\Gdiamond{\geq k}{\big((\aformula \land \aformulabis) \lor (\aformula \land \lnot \aformulabis)\big)}$};
  \item[\desclabel{($\scriptstyle\Gdiamond{\!\geq k}{}$\!dist)}{axiom:gdistr}]
    \scalebox{0.85}{if  $\aformula\,{\land}\,\aformulabis$ unsat.,
   $\Gdiamond{\geq k}{\!(\aformula {\lor} \aformulabis)}$ $\equiv$ $ \bigvee_{\!k {=} k_1 {+} k_2}(\Gdiamond{\geq k_1}{\!\aformula} \,{\land}\, \Gdiamond{\geq k_2}{\!\aformulabis})$};
  \item[\desclabel{(\,$\chopop$\,dist)}{axiom:chopdist}] \scalebox{0.85}{$(\aformula \lor \aformulabis) \chopop \aformulater$
  $\equiv$ $(\aformula \chopop \aformulater) \lor (\aformulabis \chopop \aformulater)$}.
\end{description}
Notice that the conjunction of $\aformula \land \aformulabis$ and $\aformula \land \lnot \aformulabis$ from~\ref{axiom:guess} is trivially unsatisfiable,
allowing us to use \ref{axiom:gdistr}.
As \GML is shown to be closed under the operator $\chopop$\,, we conclude.
\begin{theorem}
\label{theorem:clean-cut}
\modallogicCC $\preceq$ \GML. Therefore, $\modallogicCC \approx \GML$.
\end{theorem}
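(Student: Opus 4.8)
The plan is to reduce the statement to \Cref{lemma:elimination-three-two}: first show that \GML is closed under $\chopop$ (i.e.\ for $\aformula_1,\aformula_2 \in \GML$, $\aformula_1 \chopop \aformula_2$ is equivalent to some \GML formula), and then iteratively eliminate the innermost occurrences of $\chopop$ in a \modallogicCC formula. Since \GML $\preceq$ \modallogicCC is already known from equivalence~\ref{logical-equiv-four}, it then only remains to observe \modallogicCC $\preceq$ \GML to conclude $\modallogicCC \approx \GML$.

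\emph{Closure of \GML under $\chopop$.}
Given $\aformula_1,\aformula_2 \in \GML$, put each in disjunctive normal form over the ``atoms'' $\maxpc{\cdot}\cup\maxgmod{\cdot}$, writing $\aformula_1 \equiv \bigvee_i (\lambda_i \land \gamma_i)$ and $\aformula_2 \equiv \bigvee_j (\mu_j \land \delta_j)$, with each $\lambda_i,\mu_j$ a conjunction of literals over atomic propositions and each $\gamma_i,\delta_j$ a conjunction of literals over formulae $\Gdiamond{\geq k}{\aformulabis}$. Distributing $\chopop$ over these disjunctions with~\ref{axiom:chopdist}, and using that $+_{\aworld}$ changes neither the set of worlds nor the valuation — so the truth at the current world of a Boolean combination of atomic propositions is the same in a forest and in its two summands, whence $(\lambda_i \land \gamma_i)\chopop(\mu_j \land \delta_j) \equiv \lambda_i \land \mu_j \land (\gamma_i \chopop \delta_j)$ — one obtains $\aformula_1 \chopop \aformula_2 \equiv \bigvee_{i,j}\big(\lambda_i \land \mu_j \land (\gamma_i \chopop \delta_j)\big)$. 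Each $\gamma_i \land \delta_j$ has no top-level atomic proposition, so it already satisfies property~\ref{good-shape-1} of good shape; only property~\ref{good-shape-2} remains to be arranged.

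\emph{Enforcing mutual exclusivity.}
Fix $i,j$, let $\aformulabis_1,\ldots,\aformulabis_n$ list the formulae occurring immediately below a maximal graded modality of $\gamma_i \land \delta_j$, and let $\chi_1,\ldots,\chi_{2^n}$ enumerate the complete conjunctions $\bigwedge_l (\pm)\aformulabis_l$ over this set; distinct $\chi_t$'s are pairwise unsatisfiable. Using~\ref{axiom:guess} once per $\aformulabis_l$, rewrite each $\Gdiamond{\geq k}{\aformulabis_l}$ occurring in $\gamma_i$ or $\delta_j$ into $\Gdiamond{\geq k}{\big(\bigvee_t (\aformulabis_l \land \chi_t)\big)}$; since the $\chi_t$ are pairwise unsatisfiable,~\ref{axiom:gdistr} turns this into a disjunction of conjunctions of modalities $\Gdiamond{\geq k_t}{(\aformulabis_l \land \chi_t)}$, and standard \GML validities simplify $\aformulabis_l \land \chi_t$ to $\chi_t$ when $\chi_t$ contains $\aformulabis_l$ positively and to $\false$ otherwise (with $\Gdiamond{\geq k_t}{\false}$ then simplifying to $\false$ or $\top$). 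This rewrites $\gamma_i$ and $\delta_j$ into Boolean combinations of formulae $\Gdiamond{\geq k}{\chi_t}$ over the \emph{same} set of types; putting these in disjunctive normal form and distributing $\chopop$ once more with~\ref{axiom:chopdist} expresses $\gamma_i \chopop \delta_j$ as a disjunction of formulae of the form $\gamma \chopop \delta$, where $\gamma$ and $\delta$ are conjunctions of literals over formulae $\Gdiamond{\geq k}{\chi_t}$. Now $\gamma \land \delta$ is in good shape: property~\ref{good-shape-1} holds as before, and all maximal graded-modal subformulae of $\gamma \land \delta$ have shape $\Gdiamond{\geq k}{\chi_t}$, with $\chi_t \land \chi_{t'}$ unsatisfiable whenever $t \neq t'$ (while property~\ref{good-shape-2} requires nothing when the inner formulae coincide). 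By \Cref{lemma:elimination-three-two}, each such $\gamma \chopop \delta$ is equivalent to a \GML formula, and back-substituting through the two rounds of distribution shows $\aformula_1 \chopop \aformula_2$ is equivalent to a \GML formula.

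\emph{Conclusion.}
For an arbitrary $\aformula \in \modallogicCC$, repeatedly pick an innermost subformula $\aformula_1 \chopop \aformula_2$; then $\aformula_1,\aformula_2$ contain no $\chopop$ and are thus \GML formulae (reading $\Diamond\aformulater$ as $\Gdiamond{\geq 1}{\aformulater}$, equivalence~\ref{logical-equiv-one}), so by the above it can be replaced by an equivalent \GML formula. Replacing a subformula by an equivalent one preserves equivalence, and each replacement removes one occurrence of $\chopop$, so the procedure terminates with a \GML formula equivalent to $\aformula$; hence \modallogicCC $\preceq$ \GML. I expect the only genuinely delicate point to be the bookkeeping of the mutual-exclusivity step: checking that the side condition of~\ref{axiom:gdistr} is always met (it is, since the $\chi_t$ form a partition) and that good shape really holds for every disjunct produced, so that \Cref{lemma:elimination-three-two}, which already packages the Presburger quantifier-elimination core, applies.
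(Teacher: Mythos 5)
Your proof is correct and takes essentially the same route as the paper's: both reduce to \Cref{lemma:elimination-three-two} by distributing $\chopop$ over disjunctive normal forms via \ref{axiom:chopdist}, factoring the propositional literals out of the composition, and using \ref{axiom:guess} together with \ref{axiom:gdistr} to rewrite all maximal graded subformulae over a common family of pairwise-unsatisfiable complete types so that good shape holds, before eliminating $\chopop$ innermost-out. The delicate points you flag (the side condition of \ref{axiom:gdistr} and good shape of every disjunct) are handled exactly as you describe, so the argument goes through.
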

%
%
To prove \modallogicCC $\preceq$ \GML, we iteratively put subformulae in good shape and apply \Cref{lemma:elimination-three-two}. This is done several times, potentially causing an exponential blow-up each time a formula is transformed.
To provide an optimal complexity upper bound, we need to tame this combinatorial explosion.

\subsection{\aexppol-completeness}
\label{section-aexppol}

In order to show that \satproblem{\modallogicCC} is in \aexppol,
the main ingredient is to show that given
$\aformula$ in \modallogicCC, we build
$\aformula'$ in \GML
 such that $\aformula' \equiv \aformula$ and  the models for $\aformula'$ (if any) do not require a number of children per node
more than exponential in  $\fsize{\aformula}$.
The proof
of Theorem~\ref{theorem:clean-cut} needs to be
refined to improve the way $\aformula'$ is computed.
In particular, this requires a strategy for the application of the equivalences used to put a formula in good shape.
We need to introduce a few more simple notions.
Let $\aformula$ be a \GML formula with
$\maxgmod{\aformula} = \set{\Gdiamond{\geq k_1} \aformulabis_1, \ldots,\Gdiamond{\geq k_n} \aformulabis_n}$.
We define $\newbd{0}{\aformula} \egdef k_1 + \cdots + k_n$.
For all
$m \geq 0$, we define $\newbd{m+1}{\aformula} \egdef \max \set{\newbd{m}{\aformulabis} \mid  \Gdiamond{\geq k}{\aformulabis}
\in \maxgmod{\aformula}}$. Hence, $\newbd{m}{\aformula}$ can be understood as the maximal
$\newbd{0}{\aformulabis}$ for some subformula $\aformulabis$ occurring at the modal depth $m$ within $\aformula$.
We write $\maxbd{\aformula}$ for the value $\max \set{\newbd{m}{\aformula}
\mid  m \in \interval{0}{\md{\aformula}}}$.
If~$\aformula$ is satisfiable,
we can use~$\maxbd{\aformula}$ to obtain a bound on the smallest model satisfying it,
as stated in~\Cref{lemma:small-model-newbd} below.

\begin{lemma}
\label{lemma:small-model-newbd}
Every satisfiable $\aformula$ in \GML is satisfied by a pointed forest
with at most $\maxbd{\aformula}^{\md{\aformula}+1}$ worlds.
\end{lemma}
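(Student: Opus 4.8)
The plan is to prove the statement by induction on $\md{\aformula}$, working with the standard selective-unravelling / pruning argument for graded modal logic but being careful to track the two parameters $\maxbd{\aformula}$ and $\md{\aformula}$ separately. Since \GML has the tree model property, fix a pointed tree $\amodel, \aroot \models \aformula$. The idea is to prune the children of each world so that only the children strictly needed to witness the graded modalities in the relevant subformula are retained, then recurse into the retained subtrees.

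First I would set up the right induction hypothesis. For a formula $\aformulabis$ with $\md{\aformulabis} = d$ and a pointed tree satisfying $\aformulabis$ at its root, I claim there is a pointed subtree (obtained by deleting whole subtrees) of depth at most $d$ satisfying $\aformulabis$ at its root with at most $\maxbd{\aformulabis}^{d} + \maxbd{\aformulabis}^{d-1} + \cdots + \maxbd{\aformulabis} + 1$ worlds; one checks $\maxbd{\aformulabis}^{d}+\cdots+1 \le \maxbd{\aformulabis}^{d+1}$ whenever $\maxbd{\aformulabis}\ge 2$, and the small remaining cases ($\maxbd{\aformulabis}\le 1$, i.e. no graded modalities, so $\aformulabis$ is equivalent to a formula of \ML of degree $\le d$ whose truth depends on a tree of branching $\le 1$) are handled directly. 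The base case $d=0$ is immediate: $\aformulabis$ is a Boolean combination of atomic propositions, so the single-world tree $\set{\aroot}$ with the valuation inherited from $\amodel$ works, giving $1 \le \maxbd{\aformulabis}^{0+1}$-ish bound after the clean-up above. For the inductive step, write $\aformulabis$ as a Boolean combination of $\maxpc{\aformulabis}$ and $\maxgmod{\aformulabis} = \set{\Gdiamond{\geq k_1}{\aformulabis_1}, \ldots, \Gdiamond{\geq k_m}{\aformulabis_m}}$. The atomic part is already satisfied at $\aroot$ and is preserved by any pruning. For each $i$ such that $\amodel,\aroot \models \Gdiamond{\geq k_i}{\aformulabis_i}$, choose exactly $k_i$ children of $\aroot$ satisfying $\aformulabis_i$; for each $i$ with $\amodel,\aroot \not\models \Gdiamond{\geq k_i}{\aformulabis_i}$, keep \emph{all} of the (fewer than $k_i$) children satisfying $\aformulabis_i$ — this is what guarantees the negative graded constraints remain violated after pruning. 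Delete the subtrees rooted at all other children. The number of children of $\aroot$ retained is at most $\sum_{i=1}^{m} k_i = \newbd{0}{\aformulabis} \le \maxbd{\aformulabis}$.

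Next I would recurse. Each retained child $\aworld'$ was kept because it witnesses $\aformulabis_i$ for some $i$; note $\md{\aformulabis_i} \le d-1$ and, crucially, $\maxbd{\aformulabis_i} \le \maxbd{\aformulabis}$ (since every graded modality occurring in $\aformulabis_i$ also occurs in $\aformulabis$, and more carefully $\newbd{m}{\aformulabis_i} \le \newbd{m+1}{\aformulabis}$ for all $m$, so the max over the relevant depth range only decreases). Apply the induction hypothesis to the subtree rooted at $\aworld'$ with the formula $\bigwedge \set{\aformulabis_i \mid \aworld' \text{ retained as a witness for } \aformulabis_i}$ — actually it is cleanest to apply it separately to the single conjunct used to justify retention, or to observe that a conjunction of such $\aformulabis_i$'s still has modal degree $\le d-1$ and graded rank bounded appropriately; I would phrase it with a single witnessing conjunct per child to avoid growing $\maxbd{}$. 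This yields, for each retained child, a pruned subtree of depth $\le d-1$ and at most $\maxbd{\aformulabis}^{d-1} + \cdots + 1$ worlds. Reassembling, the total size is at most $1 + \maxbd{\aformulabis}\cdot(\maxbd{\aformulabis}^{d-1} + \cdots + 1) = \maxbd{\aformulabis}^{d} + \cdots + 1$, and the depth is $\le d$; finally apply to $\aformula$ itself and fold the geometric sum into $\maxbd{\aformula}^{\md{\aformula}+1}$.

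The main obstacle I anticipate is bookkeeping rather than conceptual: one must verify that when several of the negative constraints $\Gdiamond{\geq k_i}{\aformulabis_i}$ share witnesses among the children (the subformulae $\aformulabis_i$ need not be mutually exclusive here, unlike in the ``good shape'' setting), keeping all children satisfying a given $\aformulabis_i$ for the negative $i$'s does not blow up the count beyond $\sum k_i$ — it does not, since for each such $i$ the set has size $< k_i$, and a world can be counted for several $i$'s, so the union still has size $\le \sum_{\text{all } i} k_i$. The second delicate point is ensuring that recursing with only one witnessing conjunct per child keeps $\maxbd{}$ from increasing; here the monotonicity $\newbd{m}{\aformulabis_i} \le \newbd{m+1}{\aformulabis}$ does the job, and it is worth stating this as a small separate observation before the induction.
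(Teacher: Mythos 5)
Your proof follows essentially the same route as the paper's: induction on the modal degree, keeping at most $\newbd{0}{\aformulabis} \le \maxbd{\aformulabis}$ children as witnesses for the positively occurring graded modalities (the negatively occurring ones being preserved automatically under pruning, so retaining their witnesses is harmless but unnecessary), and recursing on the witnessing subformulae via the monotonicity $\maxbd{\aformulabis_i} \le \maxbd{\aformulabis}$. Your bookkeeping with the geometric sum and the degenerate case $\maxbd{\aformulabis} \le 1$ is if anything more careful than the paper, which simply bounds the branching degree by $\maxbd{\aformula}$ and the depth by $\md{\aformula}$.
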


To show that \modallogicCC has the exponential-size model property, we
 establish that given $\aformula$ in \modallogicCC, there is
$\aformula'$ in \GML such that $\aformula' \equiv \aformula$,  $\md{\aformula'} \leq \md{\aformula}$ and
$\maxbd{\aformula'}$ is exponential in $\fsize{\aformula}$.
%
%
%
First, we consider the fragment $\sf{F}$ of \modallogicCC:
$\aformula::= \Gdiamond{\geq k} \aformulabis \ \mid \ \avarprop \ \mid \
            \aformula \chopop \aformula  \ \mid \
            \aformula \wedge \aformula  \ \mid \
            \neg \aformula,\
$
where $\avarprop \in \varprop$ and $\Gdiamond{\geq k} \aformulabis$ is a formula in \GML (abusively
assumed in \modallogicCC but we
know $\GML \preceq \modallogicCC$).
Given
$\aformula$ in \modallogicCC or in $\sf{F}$, we write $\cd{\aformula}$ to denote
its \defstyle{composition degree}, i.e. the maximal number of imbrications of $\chopop$ in $\aformula$.
We extend the notion of ${\tt bd}$ to formulae in $\sf{F}$, so that
$\newbd{m}{\aformula} = \newbd{m}{\aformula[\,\chopop \gets\, \land]}$,
where $\aformula[\,\chopop \gets \land]$ is the formula obtained from $\aformula$ by replacing every occurrence of $\chopop$ by $\land$.
Similarly, $\maxgmod{\aformula} \egdef \maxgmod{\aformula[\,\chopop \gets\, \land]}$.

Let $\aformula$ be in $\sf{F}$ such that $\maxgmod{\aformula} = \set{\Gdiamond{\geq k_1}\! \aformulater_1, \ldots,
   \Gdiamond{\geq k_n}\!\aformulater_n}$.
The key step to show the exponential-size model property essentially
manipulates the formulae in $\maxgmod{\aformula}$ in order to produce equivalent formulae $\aformulabis_1,\dots,\aformulabis_n$, so that for all distinct $i$ and $j$, $\aformulabis_i \land \aformulabis_j$ is in good shape.
Moreover, by replacing in $\aformula$ every $\Gdiamond{\geq k_i}\! \aformulater_i$ with the equivalent formula $\aformulabis_i$, we only witness
an exponential blow-up on $\newbd{0}{\aformula}$, whereas for every $m > 1$, $\newbd{m}{\aformula}$ remains polynomially bounded by the $\mathtt{bd}$ of the original formula. With the bound on the graded rank
found in \Cref{lemma:elimination-three-two}, we  derive \Cref{lemma:f-gml}.
\begin{lemma}
\label{lemma:f-gml}
Let $\aformula$ be a formula
of the fragment $\sf{F}$ such that $\maxgmod{\aformula} = \set{\Gdiamond{\geq k_1}\! \aformulater_1, \ldots,
     \!\Gdiamond{\geq k_n}\!\aformulater_n}$ and $\widehat{k} = \max \set{k_1, \ldots,\!k_n}$.
There is a \GML
formula $\aformulabis$ such that $\aformula \equiv \aformulabis$ and,
\begin{nscenter}
\begin{tabular}{ll}
1. $\md{\aformulabis} \leq \md{\aformula}$; &
2. $\newbd{0}{\aformulabis} \leq \widehat{k} \times \, 2^{n+\cd{\aformula}}$;\\
3. $\newbd{1}{\aformulabis} \leq n \times \newbd{1}{\aformula}$; &
4. $\forall m\,{\geq}\,2$, $\newbd{m}{\aformulabis} = \newbd{m}{\aformula}$.
\end{tabular}
\end{nscenter}
\end{lemma}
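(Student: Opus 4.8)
The plan is to transform the input $\sf{F}$-formula $\aformula$ into an equivalent \GML formula $\aformulabis$ by a normalisation pass followed by a $\chopop$-elimination pass, and then to verify the four items by tracking, through the construction, the modal degree together with the quantities $\newbd{0}{\cdot}$, $\newbd{1}{\cdot}$ and $\newbd{m}{\cdot}$ ($m\geq 2$) that occur in them. Only item~2 requires genuine care; the rest is bookkeeping once the normal form is set up.

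\emph{Normalisation pass.} The goal is to rewrite $\aformula$ --- staying inside $\sf{F}$ and preserving equivalence --- so that every subformula $\aformula'\chopop\aformula''$ satisfies the hypothesis of \Cref{lemma:elimination-three-two}, i.e.\ $\aformula'\wedge\aformula''$ is in good shape. Two manipulations achieve this. First, since the composition $+_{\aworld}$ preserves the valuation, atomic propositions commute with $\chopop$: $(\avarprop\wedge\aformula')\chopop\aformula''\equiv\avarprop\wedge(\aformula'\chopop\aformula'')$ and symmetrically; using this, \ref{axiom:chopdist}, and propositional reasoning, every bare propositional variable can be pulled out of all $\chopop$-operands up to the Boolean level, securing Property~\ref{good-shape-1}. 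Second, writing $\aformulater_1,\dots,\aformulater_n$ for the distinct bodies occurring in $\maxgmod{\aformula}$, I would use the \GML-equivalences \ref{axiom:guess} and \ref{axiom:gdistr} to rewrite each $\Gdiamond{\geq k_i}{\aformulater_i}$ into an equivalent \GML formula $\aformulabis_i$ whose top-level graded modalities have bodies among the pairwise mutually exclusive ``types'' $\theta_S$ obtained by conjoining every $\aformulater_j$ positively or negatively; this secures Property~\ref{good-shape-2}. Replacing each $\Gdiamond{\geq k_i}{\aformulater_i}$ by $\aformulabis_i$ throughout $\aformula$ then yields a formula in which every innermost $\aformula'\chopop\aformula''$ has $\aformula'\wedge\aformula''$ in good shape.

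\emph{Elimination pass.} Process the $\chopop$-operators bottom-up: for each innermost $\aformula'\chopop\aformula''$ --- so $\aformula',\aformula''\in\GML$ and $\aformula'\wedge\aformula''$ is in good shape --- reapply the atom-commutation and \ref{axiom:chopdist} if needed, and invoke \Cref{lemma:elimination-three-two} to replace it by an equivalent \GML formula, again a Boolean combination of graded modalities over the $\theta_S$'s and of graded rank at most $\gr{\aformula'}+\gr{\aformula''}$. Iterating until no $\chopop$ remains produces a \GML formula $\aformulabis\equiv\aformula$ whose level-$0$ part is a Boolean combination of atomic propositions and graded modalities over the $\theta_S$'s, and all of whose strictly deeper material lies inside the $\theta_S$'s, which are themselves Boolean combinations of the untouched original bodies $\aformulater_1,\dots,\aformulater_n$. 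From this shape the three easy items follow: item~1, because every $\Gdiamond{\geq\ell}{\theta_S}$ has modal depth at most $\max_i\md{\Gdiamond{\geq k_i}{\aformulater_i}}\leq\md{\aformula}$ and \Cref{lemma:elimination-three-two} does not increase modal depth; item~4, because for $m\geq 2$ the depth-$m$ material of $\aformulabis$ sits inside the $\theta_S$'s, so $\maxgmod$ at those depths --- hence $\newbd{m}{\aformulabis}$ --- is exactly that of $\aformula$; item~3, because the level-$1$ formulae of $\aformulabis$ are the $\theta_S$'s, with $\maxgmod{\theta_S}=\bigcup_i\maxgmod{\aformulater_i}$, so $\newbd{0}{\theta_S}\leq\sum_i\newbd{0}{\aformulater_i}\leq n\cdot\max_i\newbd{0}{\aformulater_i}=n\cdot\newbd{1}{\aformula}$, and $\newbd{1}{\aformulabis}$ is the maximum over $S$ of $\newbd{0}{\theta_S}$.

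\emph{Item~2 (the main obstacle).} The top-level graded modalities of $\aformulabis$ all have the form $\Gdiamond{\geq\ell}{\theta_S}$ with $S$ ranging over at most $2^n$ types, so it suffices to bound their total number by $O(2^n)$ and each index $\ell$ by $\widehat k\cdot 2^{\cd{\aformula}}$. The index bound follows by telescoping the additive graded-rank inequality of \Cref{lemma:elimination-three-two}: each $\chopop$-elimination at most doubles the top-level graded rank, it starts from a value at most $\widehat k$ after normalisation, and the $\chopop$-nesting has depth $\cd{\aformula}$. Bounding the \emph{number} of distinct top-level graded modalities is the delicate point, and the reason both passes must follow a specific strategy: a naive expansion to disjunctive normal form --- or a careless splitting of the $k_i$ among the $\theta_S$ in \ref{axiom:gdistr}, or a bad elimination order --- would create, per type, a number of thresholds growing with $\widehat k$, and hence a superfluous factor $\widehat k$ in $\newbd{0}{\aformulabis}$. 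One has to exploit that, under good shape, each intermediate Presburger formula $[\aformula',\aformula'']^{\PA}$ is a Boolean combination of interval constraints on the independent count variables $\avariable_S$, and show that its quantifier-free equivalent retains only a controlled set of thresholds per variable. Pinning down this accounting, together with the matching choice of strategy in the two passes, is the main difficulty; once it is in place, multiplying the count bound by the index bound yields $\newbd{0}{\aformulabis}\leq\widehat k\cdot 2^{n+\cd{\aformula}}$, and everything else is routine structural induction on top of \Cref{lemma:elimination-three-two}.
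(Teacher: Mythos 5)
Your overall strategy is the paper's: normalise via \ref{axiom:guess} and \ref{axiom:gdistr} so that the bodies of all top-level graded modalities become the pairwise-exclusive types built from $\aformulater_1,\dots,\aformulater_n$, then eliminate the occurrences of $\chopop$ bottom-up with \Cref{lemma:elimination-three-two}; your treatment of items~1, 3 and~4 is the same bookkeeping the paper performs (the bodies $\aformulater_i$ are never rewritten, so only levels $0$ and $1$ move), and the telescoping of the doubling of the graded rank over the $\cd{\aformula}$ nested eliminations is exactly the paper's argument for the factor $2^{\cd{\aformula}}$.

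The genuine gap is that you stop short of item~2, which is the point of the lemma. You correctly observe that $\newbd{0}{\aformulabis}$ is a \emph{sum} over the distinct maximal graded modalities, so it is not enough to bound each threshold by $\widehat k\cdot 2^{\cd{\aformula}}$ and the number of types by $2^n$: one must also rule out that a single type $\theta_S$ accumulates many distinct thresholds (e.g.\ all of $0,\dots,\widehat k$ from the splitting in \ref{axiom:gdistr}), which would inflate the sum by an extra factor of $\widehat k$. But you then declare that ``pinning down this accounting is the main difficulty'' and leave it there, so the claimed bound is asserted rather than proved. The paper closes this by construction: the expansion of $\Gdiamond{\geq k_j}{\aformulater_j}$ is a disjunction over distribution maps $\amapbis$ in which each disjunct carries \emph{one} modality per type (with threshold $\card{\amapbis^{-1}(\asetbis)}\leq k_j$), and the quantifier elimination inside \Cref{lemma:elimination-three-two} is steered into a DNF in which every count variable $\avariable_j$ occurs in at most two literals forming an interval $\avariable_j\geq k\wedge\lnot(\avariable_j\geq k')$; this is what caps the contribution of each type to $\newbd{0}$ at essentially twice the maximal threshold per elimination step and yields $\widehat k\times 2^{n+\cd{\aformula}}$ overall. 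To complete your proof you would need to state this invariant (one interval constraint per type per disjunct) and propagate it through both passes, rather than flag it as an open difficulty.
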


In the proof of \Cref{lemma:f-gml}, a first step  essentially consists in applying multiple times \ref{axiom:guess} in order to derive, for every $i \in \interval{1}{n}$, an equivalence $\Gdiamond{\geq k_i}\! \aformulater_i$ $\equiv$ $\aformulabis_i'$ where
\begin{center}
  $\aformulabis_i' \egdef \Gdiamond{\geq k_i}\!\!\!\bigvee_{\amap : \interval{1}{n} \to \{\true,\false\}}\big(\aformulater_i \land [\aformulater_1]^{\amap(1)} \land \dots \land [\aformulater_n]^{\amap(n)}\big)$.
\end{center}
Here, $[\aformulater_j]^{\true} \egdef \aformulater_j$ and $[\aformulater_j]^{\false} \egdef \lnot \aformulater_j$.
Roughly speaking, in this step, we expand $\aformulater_i$ by considering all the possible truth values for the formulae  $\aformulater_1,\dots,\aformulater_n$ (the disjuncts where $\aformulater_i$ is negated can be simply discharged from the disjunction, as they are unsatisfiable). Substituting every $\Gdiamond{\geq k_i}\! \aformulater_i$ by $\aformulabis_i'$ in $\aformula$ leads to a formula $\aformula'$ such that $\newbd{1}{\aformula'} \leq n \times \newbd{1}{\aformula}$ (as in \Cref{lemma:f-gml}) and
for every $m \neq 1$, $\newbd{m}{\aformula'} = \newbd{m}{\aformula}$.
Afterwards, we repeatedly apply \ref{axiom:gdistr} to $\aformulabis_i'$ and obtain the formula $\aformulabis_i$ satisfying the aforementioned property, i.e. for
all distinct $i$ and $j$, $\aformulabis_i \land \aformulabis_j$ is in good shape.
With \ref{axiom:chopdist}, this allows us to apply \Cref{lemma:elimination-three-two} until all the operators $\chopop$ are removed.
Besides, replacing every $\aformulabis_i'$ by $\aformulabis_i$ in $\aformula'$ leads to a formula having the same $\mathtt{bd}$ as the formula $\aformulabis$ in \Cref{lemma:f-gml}.

Applying adequately the transformation from Lemma~\ref{lemma:f-gml} to a formula
in \modallogicCC, i.e. by considering maximal subformulae of the fragment $\sf{F}$, allows us to get a logically equivalent \GML formula having small models.
\begin{lemma}
\label{lemma:small-gml-formula}
Every satisfiable $\aformula$ in \modallogicCC is satisfied by a pointed forest of size at most exponential in $\size(\aformula)$.
\end{lemma}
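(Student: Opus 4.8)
\smallskip
\noindent\emph{Proof strategy.}
The plan is to turn every $\aformula \in \modallogicCC$ into a logically equivalent \GML formula $\aformulabis$ with $\md{\aformulabis} \leq \md{\aformula}$ and $\maxbd{\aformulabis}$ at most exponential in $\fsize{\aformula}$, and then invoke \Cref{lemma:small-model-newbd}. Concretely, I would prove by induction on $\md{\aformula}$ the (slightly stronger) statement that every $\aformula \in \modallogicCC$ is equivalent to some $\aformulabis \in \GML$ with $\md{\aformulabis} \leq \md{\aformula}$ and $\maxbd{\aformulabis} \leq 2^{c\,\fsize{\aformula} + \md{\aformula}\cdot\log \fsize{\aformula}}$ for a fixed absolute constant $c \geq 3$; since $\md{\aformula} \leq \fsize{\aformula}$, this exponent is polynomial in $\fsize{\aformula}$, which is all we need. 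Throughout, I would first rewrite each $\Diamond$ as $\Gdiamond{\geq 1}{}$ using equivalence~\ref{logical-equiv-one}, so that $\aformula$ only uses graded modalities, Boolean connectives and $\chopop$.

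For the base case $\md{\aformula} = 0$, the formula $\aformula$ is a Boolean combination of atomic propositions and of $\chopop$-compositions of such; since the composition $+_{\aworld}$ never alters the valuation at the current world, $\aformula_1 \chopop \aformula_2 \equiv \aformula_1 \land \aformula_2$ whenever $\aformula_1,\aformula_2$ have modal degree $0$. Hence $\aformula$ is equivalent to a propositional formula, which is in \GML and satisfies $\maxbd{} = 0$. For the inductive step, write $\maxgmod{\aformula} = \set{\Gdiamond{\geq k_1}{\aformulater_1}, \ldots, \Gdiamond{\geq k_n}{\aformulater_n}}$. Each body $\aformulater_i$ is a \modallogicCC formula with $\md{\aformulater_i} \leq \md{\aformula} - 1$ and $\fsize{\aformulater_i} \leq \fsize{\aformula}$, so by the induction hypothesis there is $\aformulater_i' \in \GML$ with $\aformulater_i \equiv \aformulater_i'$, $\md{\aformulater_i'} \leq \md{\aformulater_i}$, and $\maxbd{\aformulater_i'} \leq 2^{c\,\fsize{\aformula} + (\md{\aformula}-1)\log\fsize{\aformula}}$. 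Replacing in $\aformula$ every maximal graded-modality subformula $\Gdiamond{\geq k_i}{\aformulater_i}$ by $\Gdiamond{\geq k_i}{\aformulater_i'}$ yields an equivalent formula $\aformula'$ that belongs to the fragment $\sf{F}$: outside its top-level graded modalities it uses only Booleans and $\chopop$, and each $\Gdiamond{\geq k_i}{\aformulater_i'}$ is a \GML formula. Applying \Cref{lemma:f-gml} to $\aformula'$ produces the desired $\aformulabis \in \GML$ with $\aformulabis \equiv \aformula' \equiv \aformula$ and $\md{\aformulabis} \leq \md{\aformula'} \leq \md{\aformula}$.

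It remains to bound $\maxbd{\aformulabis}$ using \Cref{lemma:f-gml}. The parameters feeding that lemma's bounds stay under control: the top-level graded modalities of $\aformula'$ are in bijection with those of $\aformula$, so $n \leq \fsize{\aformula}$ and $\widehat{k} := \max_i k_i \leq 2^{\fsize{\aformula}}$, while $\cd{\aformula'} \leq \cd{\aformula} \leq \fsize{\aformula}$ — and this remains true even though $\aformula'$ may itself be exponentially larger than $\aformula$. Thus \Cref{lemma:f-gml} gives $\newbd{0}{\aformulabis} \leq \widehat{k}\cdot 2^{n + \cd{\aformula'}} \leq 2^{3\fsize{\aformula}}$; $\newbd{1}{\aformulabis} \leq n \cdot \newbd{1}{\aformula'} = n\cdot\max_i \newbd{0}{\aformulater_i'} \leq \fsize{\aformula}\cdot\max_i \maxbd{\aformulater_i'}$; and $\newbd{m}{\aformulabis} = \newbd{m}{\aformula'} \leq \max_i \maxbd{\aformulater_i'}$ for every $m \geq 2$. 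Substituting the induction-hypothesis bound for $\maxbd{\aformulater_i'}$, all three quantities are at most $2^{c\,\fsize{\aformula} + \md{\aformula}\log\fsize{\aformula}}$ (taking $c \geq 3$ to absorb the extra factors), hence $\maxbd{\aformulabis} \leq 2^{c\,\fsize{\aformula} + \md{\aformula}\log\fsize{\aformula}}$, which closes the induction. Finally, if $\aformula$ is satisfiable then so is the equivalent $\aformulabis$, and by \Cref{lemma:small-model-newbd} it is satisfied in a pointed forest with at most $\maxbd{\aformulabis}^{\md{\aformulabis}+1} \leq 2^{(c\,\fsize{\aformula} + \fsize{\aformula}\log\fsize{\aformula})(\fsize{\aformula}+1)}$ worlds — exponential in $\fsize{\aformula}$ — and the same forest satisfies $\aformula$.

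The main obstacle, and the reason the fine-grained accounting of \Cref{lemma:f-gml} (and of \Cref{lemma:elimination-three-two}) is indispensable, is precisely this bound tracking: a naive elimination of $\chopop$ would compound an exponential blow-up across the up to $\md{\aformula}$ levels of modal nesting, giving a tower rather than a single exponential. What rescues the argument is that removing $\chopop$ at a given level blows up only $\newbd{0}$ exponentially, multiplies $\newbd{1}$ by a polynomial factor, and leaves $\newbd{m}$ untouched for $m \geq 2$; consequently each level adds only $O(\log\fsize{\aformula})$ to the exponent of $\maxbd{}$, and the $\leq \fsize{\aformula}$ levels together keep it polynomial. A subsidiary point to handle carefully is that, although the rewriting $\aformulater_i \mapsto \aformulater_i'$ can make $\aformula'$ exponentially large, the quantities actually entering \Cref{lemma:f-gml}'s bounds — the number of maximal graded modalities, their graded ranks, and the composition degree — are bounded in terms of $\fsize{\aformula}$, not of $\fsize{\aformula'}$.
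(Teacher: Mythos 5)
Your proof is correct and follows essentially the same route as the paper's: it reduces to \Cref{lemma:f-gml} applied to maximal subformulae of the fragment $\sf{F}$, tracks $\newbd{m}{\cdot}$ level by level (exponential blow-up only at $\newbd{0}$, polynomial factor at $\newbd{1}$, nothing below), and concludes via \Cref{lemma:small-model-newbd}. Your reorganisation as an induction on modal degree with the explicit exponent invariant $c\,\fsize{\aformula}+\md{\aformula}\log\fsize{\aformula}$ is just a cleaner packaging of the paper's bottom-up iteration with its bound $\gr{\aformula}\times(\dweight{\aformula})^{\md{\aformula}}\times 2^{\cd{\aformula}}\times 2^{\dweight{\aformula}}$.
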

The proof of~\Cref{lemma:small-gml-formula}
(relying on
\Cref{lemma:f-gml})
consists in showing that
for all
$\aformula$ in  \modallogicCC, there is
$\aformula'$ in \GML such that $\aformula' \equiv \aformula$ and  $\maxbd{\aformula'}$ is exponential in $\fsize{\aformula}$,
which is sufficient by Lemma~\ref{lemma:small-model-newbd} to get the exponential-size model property,
whence the upper bound \aexppol.

\begin{theorem}
\label{theorem:complexity-clean-cut}
\satproblem{\modallogicCC} is  in \aexppol.
\end{theorem}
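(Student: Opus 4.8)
The plan is to turn the exponential-size model property of \Cref{lemma:small-gml-formula} into a decision procedure on an alternating Turing machine that runs in exponential time and uses only polynomially many alternations. By \Cref{lemma:small-gml-formula} there is a fixed polynomial $p$ such that a \modallogicCC formula $\aformula$ is satisfiable if and only if it holds at some pointed forest $\pair{\amodel}{\aworld}$ with at most $2^{p(\fsize{\aformula})}$ worlds. So the machine would first run a single existential phase in which it guesses such a forest $\amodel = \triple{\worlds}{\arelation}{\avaluation}$ and world $\aworld$, recording the values of the (at most $\fsize{\aformula}$ many) atomic propositions occurring in $\aformula$; this description has exponential size and is produced in exponential time. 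It then verifies $\amodel,\aworld \models \aformula$.

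For the verification I would use the natural recursive alternating routine $\mathrm{Check}(\amodel',\aworld',\aformulabis)$ following the syntax of $\aformulabis$: atomic propositions and $\top$ are decided on the spot; $\aformulabis_1 \wedge \aformulabis_2$ spawns a universal Boolean branch; a negation $\lnot\aformulater$ dualises the machine (swapping existential and universal states, and acceptance, from that point on); $\Diamond\aformulater$ is an existential choice of a child $\aworld'' \in \arelation'(\aworld')$; and $\Gdiamond{\geq k}{\aformulater}$ is handled by existentially guessing $k$ pairwise distinct children of $\aworld'$ (rejecting if there are not that many) and then universally checking that each of them satisfies $\aformulater$. The only genuinely second-order case is $\aformula_1 \chopop \aformula_2$: here the machine existentially guesses a sub-relation $\arelation'_1 \subseteq \arelation'$, checks in exponential time that $\amodel'_1 = \triple{\worlds'}{\arelation'_1}{\avaluation'}$ and $\amodel'_2 = \triple{\worlds'}{\arelation' \setminus \arelation'_1}{\avaluation'}$ realise a decomposition $\amodel' = \amodel'_1 +_{\aworld'} \amodel'_2$ (i.e. the constraint that each child of $\aworld'$ keeps its whole $\arelation'$-subtree), and then universally branches to run $\mathrm{Check}(\amodel'_1,\aworld',\aformula_1)$ and $\mathrm{Check}(\amodel'_2,\aworld',\aformula_2)$.

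The resource analysis would then be routine. Every recursive call strips off one connective, so along any computation path the routine visits at most $\fsize{\aformula}$ subformulae; each step writes objects of size at most exponential in $\fsize{\aformula}$ (submodels, or tuples of $\leq k \leq 2^{\fsize{\aformula}}$ children) and does only polynomial-time work on them, so every path, including the initial model-guessing phase, runs in exponential time overall. Moreover each single step — modal, graded-modal, composition, or negation — causes at most a bounded number of switches between existential and universal configurations, so the number of alternations along any path is $O(\fsize{\aformula})$, hence polynomial. Thus \satproblem{\modallogicCC} is decided by an alternating machine in exponential time with polynomially many alternations, which is exactly membership in \aexppol.

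The real effort has already been spent in \Cref{lemma:small-gml-formula} (and underneath it in the graded-rank bound of \Cref{lemma:elimination-three-two} and the level-wise $\mathtt{bd}$ estimates of \Cref{lemma:f-gml}); the one point in the present argument needing genuine care rather than a wave of the hand is the alternation bookkeeping — one must check that the graded modalities $\Gdiamond{\geq k}$, whose thresholds are in binary, cannot secretly force more than $O(1)$ alternations each, and that nothing outside the $\chopop$/negation nesting contributes to the alternation count.
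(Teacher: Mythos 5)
Your proof is correct and follows essentially the same route as the paper: guess an exponential-size pointed forest via \Cref{lemma:small-gml-formula} and then perform alternating model checking in exponential time with polynomially many alternations. The only cosmetic difference is that you spell out the recursive model-checking routine (including the subrelation guess for $\chopop$) directly, where the paper compresses this by viewing \modallogicCC as a fragment of \MSO; the alternation bookkeeping you flag indeed works out, since each connective contributes $O(1)$ switches and the recursion depth is at most $\fsize{\aformula}$.
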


The (standard) proof consists in observing that to check the satisfiability
status of $\aformula$ in \modallogicCC, first guess a pointed forest of exponential-size
(thanks to Lemma~\ref{lemma:small-gml-formula}) and check whether it satisfies $\aformula$.
This can be done in exponential-time using an
alternating Turing machine
with a linear amount
of alternations
(between universal states and existential states)
by viewing \modallogicCC as a fragment of \MSO.

\cut{
\begin{proof} (sketch)  Let $\aformula$ be  in \modallogicCC. Here is the
algorithm running in exponential-time on $\fsize{\aformula}$ with an 
alternating Turing machine using only polynomially
many alternations.
\begin{enumerate}
\itemsep 0 cm
\item Guess a {\em tree} model $\amodel = \triple{\worlds}{\arelation}{\avaluation}$
with root $\aworld \in \worlds$,
 whose depth is bounded by $\md{\aformula}$
and of exponential-size thanks to Lemma~\ref{lemma:small-gml-formula}.
\item Return the result of checking
$\amodel, \aworld \models \aformula$.
This can be done in exponential-time  using an alternating Turing machine with a linear amount
of alternations (between universal states and existential states).
We can  use a standard model-checking algorithm by viewing \modallogicCC as a fragment of \MSO
and to observe that the number of alternations in the computation is linear in the maximal number of imbrications
of 
$\chopop$ in $\aformula$. \qedhere
\end{enumerate}
\end{proof}
}

It remains to establish \aexppol-hardness. We provide a logspace reduction from
the satisfiability problem for the team logic PL[\plcnot] shown \aexppol-complete in~\cite[Thm. 4.9]{HannulaKVV18}.

PL[\plcnot] formulae are defined by the following grammar:
\begin{center}
$
\aformula \grammardef\
\avarprop \ \mid \
\plneg \avarprop \ \mid \
\aformula \wedge \aformula \ \mid \
\plcnot \aformula \ \mid \
\aformula \plvee \aformula\,
$,
\end{center}
where $\avarprop \in \varprop$
and the connectives $\plneg$ and $\plvee$ are
dotted to avoid confusion with those of \modallogicCC.
\cut{
Whereas $\wedge$ and \plcnot \ behave respectively as classical conjunction and classical negation,
the connective $\plneg$ has a negation flavour without being equivalent to
\plcnot and $\plvee$ is actually
a separating connective whose semantics follows. The small dot at the top of $\plneg$ or
$\plvee$ is not originally present in~\cite{HannulaKVV18}: this
is added herein to avoid any confusion with the connectives in \modallogicCC.
}
PL[\plcnot] is interpreted on sets of (Boolean) propositional valuations over a finite subset of~$\varprop$.
They are called \defstyle{teams} and are denoted by $\ateam,\ateam_1,\dots$.
A model for
$\aformula$ is a team $\ateam$ over a set of propositional variables including
those occurring in $\aformula$ and such that $\ateam \models \aformula$ with:
\begin{center}
\scalebox{0.99}{$
\begin{array}{l@{\,}c@{\,}l}
\ateam \models \avarprop & \iff & \mbox{for all $\aplvaluation \in \ateam$, we have } \aplvaluation(\avarprop) = \true; \\[1pt]
\ateam \models \plneg \avarprop & \iff & \mbox{for all $\aplvaluation \in \ateam$, we have }\aplvaluation(\avarprop) = \false; \\[1pt]
\ateam \models \aformula_1 \plvee \aformula_2 & \iff & \exists\, \ateam_1,\ateam_2 \mbox{ s.t.}\ \ateam\,{=}\,\ateam_1\,{\cup}\,\ateam_2,\ \ateam_1\!\models\!\aformula_1,\ \ateam_2\!\models\!\aformula_2.
\end{array}
$}
\end{center}
The connectives $\plcnot$ and $\land$ are interpreted as the classical negation and conjunction, respectively.
Notice that, in the clause for
$\plvee$, the teams $\ateam_1$ and $\ateam_2$ are not necessarily disjoint.

Let us discuss the reduction from \satproblem{PL[\plcnot]} to \satproblem{\modallogicCC}.
A direct encoding of a team $\ateam$ into a pointed forest $\pair{\amodel}{\aworld}$ consists in having a correspondence
between the propositional valuations in $\ateam$ and the propositional valuations of the children of
$\aworld$.
This would work fine if there were no mismatch between the semantics
for $\chopop$ (disjointness of the children)
and the one
for~$\plvee$ (disjointness not required).
To handle this, when checking the satisfaction of $\aformula$ in PL[\plcnot] with~$n$ occurrences
of $\plvee$, we impose that if a propositional valuation occurs among
the children of $\aworld$, then it occurs in least $n+1$ children.
This property must  be maintained after applying $\plvee$ several times, always with respect to the number of occurrences of~$\plvee$ in the subformula of~$\aformula$ that is evaluated.
Non-disjointness of the teams
is encoded by carefully separating the children of $\aworld$ having identical valuations.

We now formalise the reduction.
Assume that we wish to translate $\aformula$ from PL[\plcnot],
written with atomic propositions in
$\apropset = \set{\avarprop_1, \dots, \avarprop_m}$
and containing at most $n$ occurrences of the  operator $\plvee$.
We introduce a set
$\apropsetbis = \set{ \avarpropbis_1, \dots, \avarpropbis_{n+1}}$ of
auxiliary propositions disjoint from $\apropset$.
The elements of~$\apropsetbis$ are used to distinguish
different copies of the same propositional valuation of a team.
Thus, with respect to a pointed forest~$\pair{\amodel}{\aworld}$,
we require each child of $\aworld$ to satisfy exactly one
element of $\apropsetbis$.
This can be done  with the formula
\begin{center}
$\tluniq{\apropsetbis} \egdef
  \Box (\bigwedge_{i \neq i' \in \interval{1}{n+1}} \neg (\avarpropbis_i \wedge \avarpropbis_{i'}) \wedge \bigvee_{i \in \interval{1}{n+1}} \avarpropbis_i)$.
\end{center}
\cut{
This can be done easily, with the following formula:
\begin{nscenter}
  $\tluniq{\apropsetbis} \egdef
  \Box (\bigwedge_{i \neq i' \in \interval{1}{n+1}} \neg (\avarpropbis_i \wedge \avarpropbis_{i'}) \wedge \bigvee_{i \in \interval{1}{n+1}} \avarpropbis_i).$
\end{nscenter}
}
We require that if a child of $\aworld$ satisfies a propositional valuation over (elements in)~$\apropset$,
then
there are $n+1$ children satisfying that valuation over $\apropset$, each of them satisfying
a distinct symbol in $\apropsetbis$.
So, every valuation over $\apropset$ occurring in some child of $\aworld$,
occurs at least in $n+1$ children of $\aworld$.
However, as the translation of the operator  $\plvee$ modifies the set of copies of a propositional valuation, this property must be extended to arbitrary subsets of $\apropsetbis$. Given $\emptyset \neq \aset \subseteq  \interval{1}{n+1}$,
we require that for all $k \neq k' \in \aset$,
if a children of $\aworld$ satisfies $\avarpropbis_{k}$, then there is a child satisfying $\avarpropbis_{k'}$ with the same valuation over $\apropset$.
The formula $\tlcopies{\aset}$ below does the job:
\begin{center}
      \scalebox{0.9}{$
      \displaystyle\bigwedge_{\mathclap{\ \ k \neq k' \in \aset}}\!\!\neg \big(
      \Box \avarpropbis_{k} \chopop
      (
      \Gdiamond{= 1} \avarpropbis_{k} \wedge
      \neg
     (\true \chopop \Gdiamond{= 1} \avarpropbis_{k} \wedge \Gdiamond{= 1} \avarpropbis_{k'}  \wedge
      \!\bigwedge_{\mathclap{j \in \interval{1}{m}}}\!\!\Diamond \avarprop_j \Rightarrow \Box \avarprop_j
     )
      )
       \big).
       $}
\end{center}
Lastly,
before defining the translation map $\atranslation$, we
describe how different copies of the same propositional valuation are split.
We introduce two auxiliary choice functions $\mathfrak{c}_1$ and $\mathfrak{c}_2$
that take as arguments $\aset \subseteq \interval{1}{n+1}$, and $n_1, n_2 \in \Nat$ with $\card{\aset} \geq n_1 + n_2$ such that
for each $i \in \set{1,2}$, we have $\mathfrak{c}_i(\aset,n_1,n_2) \subseteq \aset$,
$\card{\mathfrak{c}_i(\aset,n_1,n_2)} \geq n_i$. Moreover $\mathfrak{c}_1(\aset,n_1,n_2) \uplus \mathfrak{c}_2(\aset,n_1,n_2) = \aset$.
The maps $\mathfrak{c}_1$ and $\mathfrak{c}_2$ are instrumental to decide how to split
$\aset$ into two disjoint subsets respecting basic cardinality constraints.
The translation map $\atranslation$ is  designed as follows ($\emptyset \neq \aset \subseteq \interval{1}{n+1}$):
\begin{center}
$
\begin{aligned}
\atranslation(\avarprop,\aset) & \egdef \Box ((\textstyle\bigvee_{\!j \in \aset} \avarpropbis_j) \Rightarrow \avarprop);\\
\atranslation(\plneg \avarprop,\aset) & \egdef \Box ((\textstyle\bigvee_{\!j \in \aset} \avarpropbis_j) \Rightarrow \neg \avarprop);\\
\atranslation(\aformula_1 \wedge \aformula_2,\aset) & \egdef  \atranslation(\aformula_1,\aset) \wedge \atranslation(\aformula_2,\aset);
\quad
\atranslation(\plcnot \aformula,\aset) \egdef \neg  \atranslation(\aformula,\aset);\\
\atranslation(\aformula_1 \plvee \aformula_2,\aset) & \egdef
\displaystyle
(\atranslation(\aformula_1,\aset_1) \land
\tlcopies{\aset_1}) \chopop\, (\atranslation(\aformula_2,\aset_2)
\land
 \tlcopies{\aset_2}),
\end{aligned}
 $
\end{center}
where
{\bfseries \itshape{(i)}}
$\card{\aset}$ is greater or equal to the number of occurrences of $\plvee$ in $\aformula_1 \plvee \aformula_2$ plus one;
{\bfseries \itshape{(ii)}}
given $n_1, n_2$ such that
$n_1$ (resp. $n_2$) is the number of occurrences of $\plvee$ in $\aformula_1$ (resp. $\aformula_2$) plus one,
for each $i \in \set{1,2}$ we have $\mathfrak{c}_i(\aset,n_1,n_2) = \aset_i$.

\Cref{lemma:correctness-PL} below
guarantees that starting with a linear number of children with the same propositional valuation
is sufficient to encode $\plvee$ within \modallogicCC.

\begin{lemma}
\label{lemma:correctness-PL}
Let $\aformula$ be in  PL[\plcnot] with $n$ occurrences of  $\plvee$
and built upon $\avarprop_1$, \ldots, $\avarprop_m$. Then, $\aformula$ is satisfiable iff so is
\begin{nscenter}
$
\tluniq{\avarpropbis_1, \ldots,  \avarpropbis_{n+1}} \wedge \tlcopies{\interval{1}{n+1}}
\wedge \atranslation(\aformula, \interval{1}{n+1}).
$
\end{nscenter}
\end{lemma}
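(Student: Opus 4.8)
The plan is to obtain \Cref{lemma:correctness-PL} from a stronger statement proved by structural induction on PL[\plcnot] formulae. Fix a pointed forest $\pair{\amodel}{\aworld}$ with $\amodel = \triple{\worlds}{\arelation}{\avaluation}$ satisfying $\tluniq{\avarpropbis_1, \ldots, \avarpropbis_{n+1}}$, and for each non-empty $\aset \subseteq \interval{1}{n+1}$ let $\ateam_{\aset}$ be the team of all restrictions to $\apropset = \set{\avarprop_1, \ldots, \avarprop_m}$ of the valuations of those children $\aworld' \in \arelation(\aworld)$ for which $\aworld' \in \avaluation(\avarpropbis_j)$ for some $j \in \aset$. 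The claim to establish is: whenever moreover $\amodel, \aworld \models \tlcopies{\aset}$, then for every subformula $\aformulabis$ of $\aformula$ written over $\apropset$ and containing at most $\card{\aset} - 1$ occurrences of $\plvee$, we have $\amodel, \aworld \models \atranslation(\aformulabis, \aset)$ if and only if $\ateam_{\aset} \models \aformulabis$. Instantiating $\aset = \interval{1}{n+1}$ then yields both directions of the lemma: for the right-to-left direction, a model of $\tluniq{\avarpropbis_1, \ldots, \avarpropbis_{n+1}} \land \tlcopies{\interval{1}{n+1}} \land \atranslation(\aformula, \interval{1}{n+1})$ directly exhibits the team $\ateam_{\interval{1}{n+1}} \models \aformula$; for left-to-right, from a team $\ateam \models \aformula$ I build the pointed forest whose root $\aworld$ has, for every $\aplvaluation \in \ateam$ and every $j \in \interval{1}{n+1}$, a single fresh leaf-child satisfying $\avarpropbis_j$ together with exactly the variables of $\apropset$ that $\aplvaluation$ makes true; this forest satisfies $\tluniq{\avarpropbis_1, \ldots, \avarpropbis_{n+1}}$ and $\tlcopies{\interval{1}{n+1}}$ by construction, has $\ateam_{\interval{1}{n+1}} = \ateam$, and the claim gives $\amodel, \aworld \models \atranslation(\aformula, \interval{1}{n+1})$.

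For the induction, the base cases $\aformulabis = \avarprop$ and $\aformulabis = \plneg \avarprop$ are immediate: $\atranslation(\avarprop, \aset)$ forces every child carrying an index from $\aset$ to satisfy $\avarprop$, which is exactly the condition $\ateam_{\aset} \models \avarprop$, and symmetrically for $\plneg \avarprop$. The cases $\aformulabis_1 \land \aformulabis_2$ and $\plcnot \aformulabis$ follow from the induction hypothesis applied with the same $\aset$ (passing to an immediate subformula does not increase the number of occurrences of $\plvee$), using that $\land$ and $\plcnot$ are read classically in both PL[\plcnot] and the translation. One also records the elementary fact that $\tluniq{\avarpropbis_1, \ldots, \avarpropbis_{n+1}}$ is inherited by every submodel obtained by removing whole subtrees hanging from children of $\aworld$, so that the induction hypothesis can be invoked on the two pieces produced by a $\chopop$-decomposition at $\aworld$.

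The delicate case, and the main obstacle, is $\aformulabis = \aformulabis_1 \plvee \aformulabis_2$, where $\atranslation(\aformulabis_1 \plvee \aformulabis_2, \aset) = (\atranslation(\aformulabis_1, \aset_1) \land \tlcopies{\aset_1}) \chopop (\atranslation(\aformulabis_2, \aset_2) \land \tlcopies{\aset_2})$ with $\aset_i = \mathfrak{c}_i(\aset, n_1, n_2)$ and $n_i - 1$ the number of occurrences of $\plvee$ in $\aformulabis_i$. One first clears the cardinality bookkeeping: $\aformulabis_1 \plvee \aformulabis_2$ has $n_1 + n_2 - 1$ occurrences of $\plvee$, which by assumption is at most $\card{\aset} - 1$, so $\card{\aset} \geq n_1 + n_2$, the maps $\mathfrak{c}_1, \mathfrak{c}_2$ are defined, $\aset = \aset_1 \uplus \aset_2$, $\card{\aset_i} \geq n_i$, and the induction hypothesis applies to $\aformulabis_i$ with $\aset_i$. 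The heart of the argument is then to reconcile the PL-level, possibly non-disjoint union $\ateam_{\aset} = \ateam_1 \cup \ateam_2$ with the $\chopop$-level disjoint split of the children of $\aworld$. From a team split with $\ateam_i \models \aformulabis_i$, one builds the decomposition by routing, for each $\aplvaluation \in \ateam_i$ and each $j \in \aset_i$, one matching child (which exists by $\tlcopies{\aset}$) into the $i$-th piece, and distributing the remaining children so as to keep every valuation of $\ateam_1 \setminus \ateam_2$ unwitnessed on side $2$ and symmetrically; since $\aset_1, \aset_2$ are disjoint and every realized valuation initially carries a child for each index of $\aset$, this preserves $\tlcopies{\aset_i}$ on both pieces and makes the team induced by piece $i$ exactly $\ateam_i$, closing this direction via the induction hypothesis. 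Conversely, from a $\chopop$-decomposition witnessing the right-hand side one reads off the teams induced by the two pieces and must argue that their union recovers $\ateam_{\aset}$; the genuine subtlety is that the decomposition could a priori send all copies of a valuation carrying indices in $\aset_1$ into the second piece and all its copies carrying indices in $\aset_2$ into the first, hiding that valuation from both induced teams, and one must rule this out (or repair it, by relocating children whose $\avarpropbis$-index lies outside the set relevant to the piece they sit in, which is invisible both to that piece's $\atranslation$-conjunct and to its $\tlcopies{\cdot}$-conjunct), again exploiting the disjointness of $\aset_1, \aset_2$ and the fact that the initial padding to $n+1$ copies per valuation is large enough to survive arbitrarily deep nesting of $\plvee$ (which is precisely why $\card{\aset} - 1$ bounds the admissible number of $\plvee$'s). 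Pinning down this routing, checking both inclusions of the team/piece correspondence, and verifying that $\tlcopies{\cdot}$ is maintained across every split is the technical core of the proof; everything else is routine bookkeeping.
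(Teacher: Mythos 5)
Your overall strategy coincides with the paper's: both prove, by structural induction, a pointwise correspondence between $\atranslation(\aformulabis,\aset)$ on a forest satisfying $\tluniq{\avarpropbis_1,\ldots,\avarpropbis_{n+1}}\wedge\tlcopies{\aset}$ and the satisfaction of $\aformulabis$ by the induced team, for subformulae with at most $\card{\aset}-1$ occurrences of $\plvee$, and then instantiate $\aset=\interval{1}{n+1}$ exactly as you do. You also correctly localise the only hard step: the direction of the $\plvee$ case that starts from an arbitrary witnessing $\chopop$-decomposition and must recover a covering team split. But you leave that step open, and neither of the two repairs you sketch can close it, so the proposal has a genuine gap at what you yourself call its technical core.

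The "bad routing" cannot be ruled out, and in fact it refutes the invariant you propose to prove by induction. Take $\aformulabis=\avarprop_1\plvee\avarprop_2$, so $\aset=\{1,2\}$, $\aset_1=\{1\}$, $\aset_2=\{2\}$, and observe that $\tlcopies{\{1\}}$ and $\tlcopies{\{2\}}$ are empty conjunctions, hence $\true$. Let $\aworld$ have exactly two children $c_1,c_2$, where $c_1$ satisfies only $\avarpropbis_1$ and $c_2$ only $\avarpropbis_2$; then $\tluniq{\avarpropbis_1,\avarpropbis_2}\wedge\tlcopies{\{1,2\}}$ holds and $\ateam_{\{1,2\}}$ is the single all-false valuation, which does not satisfy $\avarprop_1\plvee\avarprop_2$. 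Yet the crossed decomposition sending $c_2$ to the first piece and $c_1$ to the second makes both $\Box(\avarpropbis_1\Rightarrow\avarprop_1)$ and $\Box(\avarpropbis_2\Rightarrow\avarprop_2)$ hold vacuously, so $\atranslation(\avarprop_1\plvee\avarprop_2,\{1,2\})$ is satisfied while both induced teams are empty. Your second repair---relocating children whose $\avarpropbis$-index lies outside the set relevant to the piece they occupy---is invisible to the piece they leave but not to the piece they join: moving $c_1$ (index $1\in\aset_1$) into the first piece immediately falsifies $\Box(\avarpropbis_1\Rightarrow\avarprop_1)$ there. So the equality $\ateam_{\aset}=\ateam_1\cup\ateam_2$ your induction step relies on genuinely fails for some witnessing decompositions and cannot be restored by rerouting. (The paper's own write-up of this case merely asserts that the induced teams ``satisfy the expected properties,'' so it does not dispel the difficulty either; but as a proof your proposal is incomplete until this direction is actually established, which appears to require a different induction invariant or an argument conducted at the level of satisfiability rather than of a fixed team.)
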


\cut{
The proof of Lemma~\ref{lemma:correctness-PL} is based on the property below.
Let $\pair{\amodel}{\aworld}$ be a pointed forest satisfying $\tluniq{\apropsetbis}$, $\ateam$ be a team  built upon $\apropset$ and
$\emptyset \neq \aset \subseteq \interval{1}{n+1}$. We write $\pair{\amodel}{\aworld} \equiv_{\apropset}^{\aset} \ateam$ iff
the conditions below are satisfied.
\begin{enumerate}
\itemsep 0 cm
\item For all valuations $\aplvaluation \in \ateam$, for all $k \in \aset$, there is $\aworld' \in \arelation(\aworld)$ such that
for all $i \in \interval{1}{n}$, we have $\amodel, \aworld' \models \avarprop_i$ iff $\aplvaluation(\avarprop_i) = \top$ (written $\amodel, \aworld'
\models \aplvaluation$) and $\amodel, \aworld' \models \avarpropbis_{k}$.
\item For all valuations $\aplvaluation$ such that
      (for all $k \in \aset$, there is $\aworld'_{k} \in \arelation(\aworld)$ such that
      $\amodel, \aworld'_k
\models \aplvaluation$ and $\amodel, \aworld'_k
\models \avarpropbis_{k}$), we have $\aplvaluation \in \ateam$.
\end{enumerate}
Hence, when $\pair{\amodel}{\aworld} \equiv_{\apropset}^{\aset} \ateam$, the children of $\aworld$ encodes the team $\ateam$ with the property
that each encoding of $\aplvaluation \in \ateam$ is witnessed by $\card{\aset}$ witness worlds.
One can show that given
 $\pair{\amodel}{\aworld}$  such that $\amodel, \aworld \models \tluniq{\apropsetbis} \wedge \tlcopies{\aset}$
and a team $\ateam$  such that $\pair{\amodel}{\aworld} \equiv_{\apropset}^{\aset} \ateam$,  for all
 $\aformulabis$ such that the number of occurrences of $\plvee$ plus one is at most   $\card{\aset}$, we have
$\ateam \models \aformulabis$ iff $\amodel, \aworld \models \atranslation(\aformulabis, \aset)$.
}

\cut{
\begin{proof} (sketch) The proof relies on the following properties:
\begin{enumerate}
\item
$\amodel, \aworld \models \tluniq{\avarpropbis_1, \ldots,  \avarpropbis_{n+1}} \wedge \tlcopies{\interval{1}{n+1}}$
iff all the children of $\aworld$ satisfies exactly one propositional variable from
$\set{\avarpropbis_1, \ldots,  \avarpropbis_{n+1}}$ and if a child satisfies one propositional valuation on
$\avarprop_1, \ldots, \avarprop_{m}$, then for each $\avarpropbis_j$, there is a child of $\aworld$
satisfying $\avarpropbis_j$ and that valuation.
\item Let $\ateam$ be a team built upon $\set{\avarprop_1, \ldots, \avarprop_{m}}$,
      $\aset$ be a non-empty subset of $\interval{1}{n+1}$, and $\pair{\amodel}{\aworld}$ be
      such that
      \begin{enumerate}
      \item the set of propositional valuations over $\set{\avarprop_1, \ldots, \avarprop_{m}}$ obtained from the
       children of $\aworld$ is equal to $\ateam$,
      \item all the children of $\aworld$ satisfies exactly one propositional variable from
$\set{\avarpropbis_i  \mid  i \in \aset}$ and if a child satisfies one propositional valuation on
$\avarprop_1, \ldots, \avarprop_{m}$, then for each $\avarpropbis_j$ with $j \in \aset$, there is a child of $\aworld$
satisfying $\avarpropbis_j$ as well as that valuation.
      \end{enumerate}
      For all formulae $\aformula$ in PL[\plcnot] built upon $\set{\avarprop_1, \ldots, \avarprop_{m}}$ and with
      at most $\card{\aset}$ occurrences of $\plvee$, we can show that
      $\ateam \models \aformula$ iff $\amodel, \aworld \models \atranslation(\aformula,\aset)$.\qedhere
\end{enumerate}
\end{proof}
}

The $\modallogicCC$ formula involved in \Cref{lemma:correctness-PL} has modal depth one. By~\Cref{theorem:complexity-clean-cut},
$\satproblem{\modallogicCC}$ is \aexppol-complete even restricted to formulae of modal depth at most one.

\begin{corollary}
      \label{cor:aexpol}
\satproblem{\modallogicCC} is \aexppol-complete.
\end{corollary}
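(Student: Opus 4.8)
The plan is to obtain the two matching bounds separately. For membership in \aexppol{} there is nothing new to do: this is exactly \Cref{theorem:complexity-clean-cut}. Hence the work lies entirely in establishing \aexppol-hardness, and for this I would formalise the reduction sketched above from \satproblem{PL[\plcnot]}, which is \aexppol-complete by~\cite[Thm.~4.9]{HannulaKVV18}.

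Concretely, given a PL[\plcnot] formula $\aformula$ with $n$ occurrences of $\plvee$ and atomic propositions among $\avarprop_1, \dots, \avarprop_m$, the reduction outputs the \modallogicCC formula
\[
  \tluniq{\avarpropbis_1, \ldots, \avarpropbis_{n+1}} \wedge \tlcopies{\interval{1}{n+1}} \wedge \atranslation(\aformula, \interval{1}{n+1}).
\]
Correctness — i.e.\ that $\aformula$ is satisfiable iff this formula is — is precisely \Cref{lemma:correctness-PL}. Two routine points then remain. First, I would check that the output has size polynomial in $\fsize{\aformula}$: the formula $\tluniq{\apropsetbis}$ and each instance $\tlcopies{\aset}$ have size polynomial in $m+n$ (a constant number of conjuncts per pair of indices from $\interval{1}{n+1}$, each conjunct of size $O(m)$ because of the inner conjunction over $\avarprop_1,\dots,\avarprop_m$), while the recursive definition of $\atranslation$ follows the syntax tree of $\aformula$, introducing only a bounded number of such auxiliary formulae per node (two instances of $\tlcopies{\cdot}$ at each $\plvee$). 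Hence the output has size $O(\fsize{\aformula}\cdot (m+n)^{c})$ for a small constant $c$. Second, I would check logspace computability: the graded modalities appearing in $\tlcopies{\cdot}$ carry only the constant rank $1$, so their binary encoding is trivial; the only freedom is in the choice functions $\mathfrak{c}_1,\mathfrak{c}_2$, which I would instantiate explicitly (say $\mathfrak{c}_1(\aset,n_1,n_2)$ returns the $n_1$ least elements of $\aset$ and $\mathfrak{c}_2$ the rest); and along every path from the root of $\aformula$ down to a subformula, the current index set $\aset$ and the counts $n_1,n_2$ are determined by counting occurrences of $\plvee$ in subformulae of $\aformula$, which is doable in logarithmic space.

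Finally I would observe, as already noted before the corollary, that the output formula has modal depth one: every unary modality sits directly below the top Boolean/$\chopop$ structure, and $\chopop$ does not increase $\md{\cdot}$. Consequently \aexppol-hardness, and therefore \aexppol-completeness of \satproblem{\modallogicCC}, holds already for the restriction to formulae of modal depth at most one. The only obstacle is bookkeeping — pinning down $\mathfrak{c}_1,\mathfrak{c}_2$ concretely so that logspace computability is manifest, and confirming the polynomial size bound — since the genuine combinatorial content, namely that keeping a linear number of duplicate valuations among the children lets the disjoint split of $\chopop$ simulate the non-disjoint team split of $\plvee$, is entirely carried by \Cref{lemma:correctness-PL}.
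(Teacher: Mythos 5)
Your proposal is correct and follows exactly the paper's route: membership is \Cref{theorem:complexity-clean-cut}, and hardness is the logspace reduction from \satproblem{PL[\plcnot]} whose correctness is \Cref{lemma:correctness-PL}, with the same observation that the produced formula has modal depth one. The extra bookkeeping you add (explicit choice functions, size and logspace checks) is the routine material the paper leaves implicit, so nothing substantive differs.
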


As we show in the next section,
the complexity of~$\modallogicSC$ does not collapse to modal depth one:
\satproblem{\modallogicSC} restricted to formulae of modal depth~$k$
is exponentially easier than \satproblem{\modallogicSC} restricted to formulae of modal depth~$k+1$.


%
\section{\modallogicSC is \tower-complete}
\label{section-tower-SC}
We show that \satproblem{\modallogicSC} is \tower-complete, i.e. complete for the class of all
problems of time complexity bounded by a tower of exponentials whose height is an elementary function~\cite{Schmitz16}.
Given
$k,n \geq 0$, we inductively define the tetration function $\amapter$ as $\amapter(0,n) \egdef n$ and
$\amapter(k+1,n) = 2^{\amapter(k,n)}$. Intuitively, $\amapter(k,n)$ defines a tower of exponentials of height $k$.
By $k$-\nexptime, we denote the class of all problems decidable with a nondeterministic Turing machine (NTM) of working time
$\bigO(\amapter(k,p(n)))$ for some polynomial $p(.)$, on each input of length $n$.
To show \tower-hardness, we design  a uniform
elementary reduction  allowing us to get
$k$-\nexptime-hardness for all $k$ greater than a certain (fixed) integer.
In our case, we achieve an exponential-space reduction from the $k$-\nexptime variant of the tiling problem, for all $k \geq 2$.

The tiling  problem $\tiling_k$ takes as input a triple $\cTT = \triple{\cT}{\cH}{\cV}$ where
$\cT$ is a finite set of tile types,
$\cH \subseteq \cT \times \cT$ (resp. $\cV \subseteq \cT \times \cT$) represents the horizontal
(resp. vertical) matching relation,
and an initial tile type $\atile \in \cT$.
A solution for the instance $\pair{\cTT}{c}$
is a mapping $\tau: \interval{0}{\amapter(k,n)-1} \times \interval{0}{\amapter(k,n)-1} \to \cT$ such that \textbf{\desclabel{(first)}{tiling_c:1}} $\tau(0,0) = \atile$, and
\begin{description}
\item[\desclabel{(hor\&vert)}{tiling_c:2}] for all $i\in \interval{0}{\amapter(k,n)-1}$ and $j \in \interval{0}{\amapter(k,n)-2}$, $\pair{\tau(j,i)}{\tau(j+1,i)} \in \cH\ $ and $\ \pair{\tau(i,j)}{\tau(i,j+1)} \in \cV$.
\end{description}
The problem of checking whether an instance of $\tiling_k$ has a solution is known to be $k$-\nexptime-complete (see \cite{CC-Papadimitriou}).

The reduction below from $\tiling_k$ to \satproblem{\modallogicSC}
recycles  ideas from~\cite{Bednarczyk&Demri19}
to reduce  $\tiling_k$ to
\satproblem{\msokt}.
To provide the adequate adaptation for \modallogicSC, we need to solve two major issues.
First,  \msokt
admits second-order quantification, whereas in \modallogicSC, the second-order features are limited to the separating conjunction
$\separate$.
Second, the second-order quantification of
\msokt
essentially colours the nodes in
Kripke-style structures without changing the frame $\pair{\worlds}{\arelation}$.
By contrast, the  operator $\separate$ modifies the accessibility relation, possibly making worlds that were reachable from the current world, 
 unreachable in submodels.
The \tower-hardness proof for \satproblem{\modallogicSC} becomes then much more challenging: we would like to characterise the position on the grid encoded
by a world $\aworld$ by exploiting
properties of its descendants
(as done for
\msokt),
but at the same time, we need to be careful and only consider submodels where $\aworld$ keeps encoding the same position.
In a sense, our encoding is robust: when the  operator $\separate$ is used to reason on submodels, we can enforce that no world
changes the position of the grid that it encodes.


\subsection{Enforcing $\amapter(j,n)$ children.} \label{subsec:nonelemchildren}
Let $\amodel = \triple{\worlds}{\arelation}{\avaluation}$ be a finite forest.
We consider two disjoint sets of atomic propositions $\apropset=\{\avarprop_1,\dots,\avarprop_n,\treeval\}$ and $\Aux = \{\avariable, \avariablebis, \avarleft, \avarselect, \avarright\}$ (whose respective role is later defined).
Elements from~$\Aux$ are understood as \emph{auxiliary} propositions.
We call \defstyle{$\aaux$-node} (resp. \defstyle{$\Aux$-node}) a world satisfying the proposition $\aaux \in \Aux$
(resp. satisfying some proposition in $\Aux$). We call 
\defstyle{$\avartree$-node} a world that satisfies the formula $\avartree \egdef \bigwedge_{\aaux \in \Aux}  \lnot \aaux$.
Every world of $\amodel$ is either a $\avartree$-node or an $\Aux$-node.
We say that $\aworld'$ is a $\avartree$-child of $\aworld\,{\in}\,\worlds$ if $\aworld'\,{\in}\,\arelation(\aworld)$ and $\aworld'$ is a $\avartree$-node. We define the concepts of $\Aux$-child and $\aaux$-child similarly.

The key development of our reduction is given by the definition of a formula,
of exponential size in $j \geq 1$ and polynomial size in $n \geq 1$, that when satisfied by $\pair{\amodel}{\aworld}$
forces every $\avartree$-node in $\arelation^i(\aworld)$, where $0 \leq i < j$, to have exactly $\amapter(j-i,n)$ $\avartree$-children, each of them encoding a different number in $\interval{0}{\amapter(j-i,n)-1}$.
As we impose that $\aworld$ is a $\avartree$-node, it must have $\amapter(j,n)$ $\avartree$-children.
We assume $n$ to be fixed throughout the section and denote this formula by $\complete{j}$.
From the property above,
if $\amodel,\aworld \models \complete{j}$ then
for all $i \in \interval{1}{j{-}1}$ and all $\avartree$-nodes $\aworld' \in \arelation^i(\aworld)$ we have $\amodel,\aworld' \models \complete{j{-}i}$.

First, let us informally describe how numbers are encoded in the model $\pair{\amodel}{\aworld}$ satisfying $\complete{j}$. Let $i \in \interval{1}{j}$.
Given a $\avartree$-node $\aworld' \in \arelation^i(\aworld)$, $\nbexp{\aworld'}{i}$ denotes the number encoded by~$\aworld'$.
We omit the subscript $i$ when it is clear from the context.
When $i = j$,
we represent $\nb{\aworld'}$ by using the truth values
of the atomic propositions $\avarprop_1, \dots, \avarprop_n$.
The proposition  $\avarprop_b$ is responsible for the $b$-th bit of the number, with the least significant bit being encoded by $\avarprop_1$.
For example, for $n = 3$, we have $\amodel, \aworld' \models \avarprop_3 \land \avarprop_2 \land \lnot \avarprop_1$ whenever $\nb{\aworld'} = 6$.
The formula $\complete{1}$ forces the parent of $\aworld'$ (i.e.\  is a $\avartree$-node in $\arelation^{j-1}(\aworld)$)
to have exactly $2^n$ $\avartree$-children by requiring one $\avartree$-child for each possible
valuation upon  $\avarprop_1,\dots,\avarprop_n$.
Otherwise, for $i < j$ (and therefore $j \geq 2$), the number $\nbexp{\aworld'}{i}$ is represented by
the binary encoding of the truth values of $\treeval$ on the $\avartree$-children of $\aworld'$ which, since $\pair{\amodel}{\aworld'} \models \complete{j-i}$, are $\amapter(j-i,n)$ children implicitly ordered by the number they, in turn, encode.
The essential property of $\complete{j}$ is therefore the following:
the numbers encoded by the $\avartree$-children of a $\avartree$-node
$\aworld'' \in \arelation^i(\aworld)$, represent positions in the binary representation of the number $\nbexp{\aworld''}{i}$.
Thanks to this property, the formula $\complete{j}$
forces $\aworld$ to have exactly $\amapter(j,n)$
children, all encoding different numbers in $\interval{0}{\amapter(j,n)-1}$.
This is roughly represented in the picture below,
where $``1"$ stands for $\treeval$ being true whereas $``0"$ stands for $\treeval$ being false.
\begin{center}
\begin{tikzpicture}[xscale=0.96]
  \node (upleft) at (0,0) {};

  \path (upleft.center) -- ++(0.38\linewidth,-0.4) node[dot,label={$\aworld$}] (root) {.};

  \draw[pto] (root.center) -- ++ (-0.26\linewidth,-0.8) node[dot] (l1) {};
  \draw[pto] (root.center) -- ++ (0.09\linewidth,-0.8) node[dot] (c1) {};
  \draw[pto] (root.center) -- ++ (0.39\linewidth,-0.8) node[dot] (r1) {};

  \path (root.center) -- ++ (-0.086\linewidth,-1.2) node (dots) {$\scaleobj{1.8}{\dots}$};

  \foreach \x in {l1,c1,r1} {
    \draw[pto] (\x.center) -- ++ (-0.08\linewidth,-1) node[dot] (l\x) {};
    \draw[pto] (\x.center) -- ++ (0.03\linewidth,-1) node[dot] (c\x) {};
    \draw[pto] (\x.center) -- ++ (0.117\linewidth,-1) node[dot] (r\x) {};

      \path (\x.center) -- ++ (-0.024\linewidth,-1) node (dots) {$\scaleobj{0.7}{\dots}$};

      \path (\x.center) -- ++ (0.08\linewidth,-1.3) node (min2\x) {$\scaleobj{0.7}{<}$};

      \path (\x.center) -- ++ (-0.024\linewidth,-1.3) node (min\x) {$\scaleobj{0.7}{<}$};

    \foreach \y in {l,c,r}{
      \path (\y\x.center) -- ++ (-0.024\linewidth,-0.7) node (e1\y\x) {};
      \path (\y\x.center) -- ++ (0.035\linewidth,-0.7) node (e2\y\x) {};

      \draw[gray!80,fill=gray!80] (\y\x.center) -- (e1\y\x.center) -- (e2\y\x.center) -- (\y\x.center);
      \draw[alin] (\y\x.center) -- (e1\y\x.center);
      \draw[alin] (\y\x.center) -- (e2\y\x.center);
    }
  }

  \node[label={$\scaleobj{0.8}{1}$}] at (ll1) {};
  \node[label={[xshift=1pt]{$\scaleobj{0.8}{1}$}}] at (cl1) {};
  \node[label={$\scaleobj{0.8}{1}$}] at (rl1) {};

  \node[label={$\scaleobj{0.8}{0}$}] at (lc1) {};
  \node[label={[xshift=1pt]{$\scaleobj{0.8}{0}$}}] at (cc1) {};
  \node[label={$\scaleobj{0.8}{1}$}] at (rc1) {};

  \node[label={$\scaleobj{0.8}{0}$}] at (lr1) {};
  \node[label={[xshift=1pt]{$\scaleobj{0.8}{0}$}}] at (cr1) {};
  \node[label={$\scaleobj{0.8}{0}$}] at (rr1) {};

  \path (upleft) -- ++(\linewidth,0) node[label={[xshift=-37pt,yshift=-14pt]$\scaleobj{0.7}{\complete{j}, \text{ has } \amapter(j,n) \text{ children}}$}] (upright) {};
  \path let \p1 = (e1ll1) in (upleft) -- ++ (0,\y1) node (bottomleft) {};
  \path let \p1 = (e1ll1) in (upright) -- ++ (0,\y1) node (bottomright) {};

  \path (upleft) -- ++ (0,-1.1) node (midleft) {};
  \path (upright) -- ++ (0,-1.1) node[label={[xshift=-13pt,yshift=-15pt]$\scaleobj{0.8}{\complete{j{-}1}}$}] (midright) {};

  \path (midleft) -- ++ (0,-1) node (downleft) {};
  \path (midright) -- ++ (0,-1) node[label={[xshift=-9pt,yshift=-14pt]$\scaleobj{0.6}{\complete{j{-}2}}$}] (downright) {};

  \begin{pgfonlayer}{bg0}

    \draw[gray!5,fill=gray!5] (upleft.center) -- (bottomleft.center) -- (bottomright.east) -- (upright.east) -- (upleft.east);

    \draw[gray!25,fill=gray!25] (midleft.center) -- (bottomleft.center) -- (bottomright.east) -- (midright.east) -- (midleft.east);

    \draw[gray!50,fill=gray!50] (downleft.center) -- (bottomleft.center) -- (bottomright.east) -- (downright.east) -- (downleft.east);

  \end{pgfonlayer}

\end{tikzpicture}
\end{center}

To characterise these trees in \modallogicSC,  we simulate second-order quantification by using $\Aux$-nodes.
Informally, we require a pointed forest $\pair{\amodel}{\aworld}$ satisfying $\complete{j}$ to be
such that
{\bfseries \itshape{(i)}} every $\avartree$-node $\aworld' \in \arelation(\aworld)$ has exactly one $\avariable$-child, and one (different)
$\avariablebis$-child. These nodes do not satisfy any other auxiliary proposition;
{\bfseries \itshape{(ii)}} for every $i \geq 2$, every $\avartree$-node $\aworld' \in \arelation^i(\aworld)$ has exactly five $\Aux$-children, one for each $\aaux\in\Aux$.
We can simulate second-order existential quantification on $\avartree$-nodes with respect to the symbol $\aaux \in \Aux$
by using the  operator $\separate$ in order to remove edges leading to $\aaux$-nodes. Then, we  evaluate whether a property holds
on the resulting model where a $\avartree$-node ``satisfies'' $\aaux \in \Aux$ if it has a child satisfying $\aaux$.
To better emphasise the need to move along $\avartree$-nodes,
given a formula $\aformula$, we write
$\HMDiamond{\avartree} \aformula$ for the formula $\Diamond (\avartree \land \aformula)$.
Dually, $\HMBox{\avartree} \aformula \egdef \Box (\avartree \implies \aformula)$. $\HMDiamond{\avartree}^i$
and $\HMBox{\avartree}^i$ are also defined, as expected.

Let us start to formalise this encoding. Let $j \geq 1$.
First, we restrict ourselves to models where
every $\avartree$-node reachable in at most $j$ steps does not have two $\Aux$-children satisfying the same proposition. Moreover, these $\Aux$-nodes have no children and only satisfy exactly one $\aaux\in\Aux$.
We express this condition with the formula $\init{j}$ below:
\begin{nscenter}
$
\Boxbox^{j} {\displaystyle\bigwedge_{\mathclap{\aaux \in \Aux}}}
\Big(
  \big(
    \avartree \implies \lnot (\Diamond \aaux \separate \Diamond \aaux)
  \big)
\land
  \Box\big(
    \aaux \implies \Box \bottom \land\!{\displaystyle\bigwedge_{\mathclap{\aauxbis \in \Aux \setminus\{\aaux\}}}} \lnot \aauxbis
  \big)
 \Big),
$
\end{nscenter}
where $\Boxbox^0\aformula \egdef \aformula$ and $\Boxbox^{m+1}\aformula \egdef \aformula \land \Box\Boxbox^{m}(\aformula)$.
Notice that if $\amodel,\aworld \models \init{j}$ and $\amodel' \sqsubseteq \amodel$, then $\amodel',\aworld \models \init{j}$.

Among the models $\pair{\triple{\worlds}{\arelation}{\avaluation}}{\aworld}$ satisfying $\init{j}$,
we define the ones satisfying $\complete{j}$ described below
(see similar conditions in~\cite[Section IV]{Bednarczyk&Demri19}):
\begin{description}
  \item[\desclabel{(\apS$_j$)}{prop:apS}\phantomlabel{(\apS$_1$)}{prop:apS_1}\phantomlabel{(\apS$_k$)}{prop:apS_k}]
    every $\avartree$-node in $\arelation(\aworld)$ satisfies $\complete{j-1}$;
  \item[\desclabel{(\apZ$_j$)}{prop:apZ}\phantomlabel{(\apZ$_1$)}{prop:apZ_1}]
    there is a $\avartree$-node $\tilde\aworld \in \arelation(\aworld)$ such that $\nb{\tilde\aworld} = 0$;
  \item[\desclabel{(\apU$_j$)}{prop:apU}\phantomlabel{(\apU$_1$)}{prop:apU_1}] distinct $\avartree$-nodes in $\arelation(\aworld)$ encode different numbers;
  \item[\desclabel{(\apC$_j$)}{prop:apC}\phantomlabel{(\apC$_1$)}{prop:apC_1}] for every $\avartree$-node $\aworld_1 \in \arelation(\aworld)$, if $\nb{\aworld_1} < \amapter(j,n)-1$ then $\nb{\aworld_2} = \nb{\aworld_1}+1$ for some $\avartree$-node $\aworld_2 \in \arelation(\aworld)$;
  \item[\desclabel{(\apA)}{prop:apA}] $\aworld$ is a $\avartree$-node, every $\avartree$-node in $\arelation(\aworld)$ has one $\avariable$-child and one $\avariablebis$-child, and every $\avartree$-node in $\arelation^2(\aworld)$ has three children satisfying $\avarleft$, $\avarright$ and $\avarselect$, respectively.
\end{description}
We define $\complete{0}\egdef\true$, and
for $j \geq 1$, $\complete{j}$ is defined as
\begin{nscenter}
$
\complete{j} \egdef \pS{j} \land \pZ{j} \land \pU{j} \land \pC{j} \land \pA,
$
\end{nscenter}
where each conjunct expresses its homonymous property.
The formulae for $\pS{j}$, $\pA$ and $\pZ{j}$ can be defined as
\begin{center}
$\begin{aligned}
\pS{j} \ \egdef  &\ \HMBox{\avartree}\complete{j-1};\\[-2pt]
\pA \ \egdef &\ \avartree \land \HMBox{\avartree}(\Diamond \avariable \separate \Diamond \avariablebis) \land \HMBox{\avartree}^2(\Diamond \avarleft \separate \Diamond \avarselect \separate \Diamond \avarright);\\[-2pt]
\pZ{1} \ \egdef  &\ \HMDiamond{\avartree}\!\textstyle\bigwedge_{{b \in \interval{1}{n}}}\! \lnot \avarprop_b;\\
\pZ{j\,{+}\,1} \ \egdef&\ \HMDiamond{\avartree} \HMBox{\avartree} \lnot \treeval .
\end{aligned}
$
\end{center}

The challenge is therefore how to express $\pU{j}$ and $\pC{j}$,
to guarantee that the numbers of children of $\aworld$ span all over $\interval{0}{\amapter(j,n)-1}$.
The structural properties expressed by $\complete{j}$ lead to strong constraints, which permits to control the effects
of $\separate$ when submodels are constructed. This is a key point in designing  $\complete{j}$ as it helps us
to control which edges are lost when considering a submodel.
\subsubsection*{Nominals, forks and number comparisons.}
In order to define
$\pU{j}$ and $\pC{j}$ (completing the definition of $\complete{j}$), we introduce auxiliary formulae,
characterising classes of models that emerge naturally when trying to capture the semantics of \ref{prop:apU} and \ref{prop:apC}.

Let us consider a finite forest $\amodel = \triple{\worlds}{\arelation}{\avaluation}$ and $\aworld \in \worlds$.
A first ingredient is given by the concept of \defstyle{local nominals}, borrowed from~\cite{Bednarczyk&Demri19}.
We say that $\aaux \in \Aux$ is a (local) nominal for the depth $i \geq 1$ if there is exactly one $\avartree$-node $\aworld' \in \arelation^i(\aworld)$ having an $\aaux$-child.
In this case, $\aworld'$ is said to be the world that corresponds to the local nominal~$\aaux$.
The following formula states that $\aaux$ is a local nominal for the depth $i$:
\begin{center}
$
\begin{aligned}
\nominal{\aaux}{i} \egdef
  \HMDiamond{\avartree}^i\Diamond\aaux \land
  {\bigwedge_{\mathclap{k \in \interval{0}{i-1}}}} \HMBox{\avartree}^k \lnot \big(\HMDiamond{\avartree}^{i-k}\Diamond\aaux \separate \HMDiamond{\avartree}^{i-k}\Diamond \aaux\big).
\end{aligned}
$
\end{center}
We define the formula $\atnom{\aaux}{i} \aformula \egdef \HMDiamond{\avartree}^i(\Diamond \aaux \land \aformula)$ which, under the hypothesis
that $\aaux$ is a local nominal for the depth $i$, states that $\aformula$ holds on the $\avartree$-node that corresponds to~$\aaux$.
Moreover, we define 
$\twonoms{\aaux}{\aauxbis}{i} \egdef \nominal{\aaux}{i} \land \nominal{\aauxbis}{i} \land \lnot \atnom{\aaux}{i} \Diamond \aauxbis$,
which states that $\aaux$ and $\aauxbis$ are two nominals for the depth~$i$ with respect to two distinct $\avartree$-nodes.

As a second ingredient, we introduce the notion of \emph{fork} that is a specific type
of models naturally emerging
when trying to compare the numbers $\nb{\aworld_1}$ and $\nb{\aworld_2}$ of two worlds $\aworld_1,\aworld_2 \in \arelation^i(\aworld)$ (e.g. when checking whether $\nb{\aworld_1} = \nb{\aworld_2}$ or $\nb{\aworld_2} = \nb{\aworld_1}+1$ holds).
Given $j \geq i \geq 1$
we introduce the formula $\fork{\aaux}{\aauxbis}{i}{j}$
that is satisfied by $\pair{\amodel}{\aworld}$ iff:
\begin{itemize}
  \item $\aaux$ and $\aauxbis$ are nominals for the depth~$i$.
 \item
  $\aworld$ has exactly two $\avartree$-children, say $\aworld_U$ and $\aworld_D$.
 \item For every~$k \in \interval{1}{i-1}$, both $\arelation^k(\aworld_U)$ and $\arelation^k(\aworld_D)$
  contain exactly one $\avartree$-child.
 \item The only $\avartree$-node in $\arelation^{i-1}(\aworld_U)$, say $\aworld_{\aaux}$,
 corresponds to the nominal $\aaux$.
 The only $\avartree$-node in $\arelation^{i-1}(\aworld_D)$, say $\aworld_{\aauxbis}$,
 corresponds to the nominal $\aauxbis$.
 \item  If $i < j$, then $\pair{\amodel}{\aworld_{\aaux}}$ and $\pair{\amodel}{\aworld_{\aauxbis}}$ satisfy
  \begin{nscenter}
  $\completeplus{j - i} \egdef \complete{j - i} \land \HMBox{\avartree}(\Diamond \avarleft \land \Diamond \avarselect \land \Diamond \avarright)$.
  \end{nscenter}
\end{itemize}
It should be noted that, whenever $\pair{\amodel}{\aworld}$ satisfies the formula~$\fork{\aaux}{\aauxbis}{i}{j}$,
we witness two paths of length~$i$, both starting at~$\aworld$
and leading to~$\aworld_{\aaux}$ and $\aworld_{\aauxbis}$, respectively.
Worlds in this path may have~$\Aux$-children.
Below, we schematise a model satisfying $\fork{\aaux}{\aauxbis}{i}{j}$:
\begin{center}
\begin{tikzpicture}[yscale=.8]
  \node[dot,label={[yshift=12pt]$\scaleobj{0.7}{\fork{\aaux}{\aauxbis}{i}{j}}$}] (root) at (0,0) {.};
  \path
   (root.center) -- ++(-0.3,0) node (lab) {$\aworld$};

  \draw[pto] (root.center) -- ++(0.16\linewidth,0.7) node[dot] (up) {};
  \draw[pto] (root.center) -- ++(0.16\linewidth,-0.7) node[dot] (down) {};

  \foreach \x in {up,down} {
    \draw[pto] (\x.center) -- ++(0.1\linewidth,0) node[dot] (o\x) {};

    \draw[path] (o\x.center) -- ++(0.16\linewidth,0) node[dot] (t\x) {};

    \draw[pto] (t\x.center) -- ++(0.1\linewidth,0) node[dot,label={right:$\quad\scaleobj{0.7}{\completeplus{j{-}i}}$}] (f\x) {};

    \path (f\x.center) -- ++(0.22\linewidth,0.62) node (up\x) {};
    \path (f\x.center) -- ++(0.22\linewidth,-0.62) node (down\x) {};

    \begin{pgfonlayer}{bg0}
      \draw[gray!15,fill=gray!15] (f\x.center) -- (up\x.center) -- (down\x.center) -- (f\x.center);
      \draw[alin] (f\x.center) -- (up\x.center);
      \draw[alin] (f\x.center) -- (down\x.center);
    \end{pgfonlayer}
  }

  \draw[aux] (fup) -- ++(0,0.38) node[label={[yshift=-5pt]$\scaleobj{0.9}{\aaux}$}] (aaux) {};
  \draw[aux] (fdown) -- ++(0,0.38) node[label={[yshift=-5pt]$\scaleobj{0.9}{\aauxbis}$}] (aauxbis) {};

  \path (root.center) -- ++(0,-1.1) node (bottomline) {};

  \path let \p1 = (fdown) in (bottomline) -- ++ (\x1,0) node (end) {};

  \draw[segment] (bottomline.center) -- (end.center) node [midway, fill=white] {$i$};
\end{tikzpicture}
\end{center}
\noindent Since the definition of $\fork{\aaux}{\aauxbis}{i}{j}$ is recursive on $i$ and $j$ (due to $\complete{j-i}$), we postpone its
formal definition
to the next two sections where we
treat the base cases for $i = j$ and the inductive case for $j > i$ separately.

The last  auxiliary formulae are
$\less{\aaux}{\aauxbis}{i}{j}$\,and $\successor{\!\aaux}{\aauxbis\!}{j}$.
Under the hypothesis that $\pair{\amodel}{\aworld}$
satisfies $\fork{\aaux}{\aauxbis}{i}{j}$, the formula $\less{\aaux}{\aauxbis}{i}{j}$
is satisfied whenever the two (distinct) worlds $\aworld_{\aaux},\aworld_{\aauxbis} \in \arelation^i(\aworld)$ corresponding to the nominals $\aaux$ and $\aauxbis$ are such that $\nb{\aworld_{\aaux}} < \nb{\aworld_{\aauxbis}}$.
Similarly, under the hypothesis that $\pair{\amodel}{\aworld}$
satisfies $\fork{\aaux}{\aauxbis}{1}{j}$, the formula $\successor{\aaux}{\aauxbis}{j}$
is satisfied whenever $\nb{\aworld_{\aauxbis}} = \nb{\aworld_{\aaux}}+1$ holds.
Both  formulae are recursively defined, with base cases for $i = j$ and $j = 1$, respectively.

For the base case, we define the formulae  $\fork{\aaux}{\aauxbis}{j}{j}$ and $\less{\aaux}{\aauxbis}{j}{j}$ (for arbitrary $j$), as well as $\successor{\aaux}{\aauxbis}{1}$. From
these formulae, we are then able to define $\pU{1}$ and $\pC{1}$, which completes the characterisation of $\complete{1}$ and $\completeplus{1}$.
Afterwards, we consider the case $1 \leq i < j$ and $j \geq 2$, and define $\fork{\aaux}{\aauxbis}{i}{j}$, $\less{\aaux}{\aauxbis}{i}{j}$, ${\successor{\aaux}{\aauxbis}{j}}$, as well as $\pU{j}$ and $\pC{j}$,
by only relying on formulae that are already defined (by inductive reasoning).
\subsubsection*{Base cases: $i = j$ or $j = 1$.}
In what follows, we consider a finite forest  $\amodel = \triple{\worlds}{\arelation}{\avaluation}$ and a world $\aworld$.
Following its informal description, we have
\begin{center}
$
\begin{aligned}
    \fork{\aaux}{\aauxbis}{j}{j} \egdef \Diamond_{=2} \avartree\,
    {\land}\,\HMBox{\avartree}\Boxbox^{j-2}\!(\avartree{\implies}\Diamond_{=1} \avartree)\,
    {\land}\,\twonoms{\aaux}{\aauxbis}{j},\,
\end{aligned}
$
\end{center}
where
$\Boxbox^j \aformula \egdef \true$ for $j < 0$.
As previously explained, in the base case, the number $\nb{\aworld'}$ encoded by a $\avartree$-node $\aworld' \in \arelation^{j}(\aworld)$ is
represented by the truth values of
$\avarprop_1,\dots,\avarprop_n$.
Then, the formula $\less{\aaux}{\aauxbis}{j}{j}$ is defined as

\begin{nscenter}
$
\begin{aligned}
  \less{\aaux}{\aauxbis}{j}{j} \egdef
    {\bigvee_{\mathclap{u \in \interval{1}{n}}}}\!
      \big(&
        \atnom{\aaux}{j}\lnot \avarprop_u
        \land \atnom{\aauxbis}{j} \, \avarprop_u \land\!\!{\bigwedge_{\mathclap{v \in \interval{u+1}{n}}}}\!
        (\atnom{\aaux}{j} \, \avarprop_v \,{\iff}\,  \atnom{\aauxbis}{j} \, \avarprop_v)
      \big).
\end{aligned}
$
\end{nscenter}
The satisfaction of
$\pair{\amodel}{\aworld} \models \fork{\aaux}{\aauxbis}{j}{j}$ enforces that
the distinct $\avartree$-nodes $\aworld_{\aaux},\aworld_{\aauxbis} \in \arelation^j(\aworld)$ corresponding to $\aaux$ and $\aauxbis$
satisfy $\nb{\aworld_{\aaux}} < \nb{\aworld_{\aauxbis}}$, which can be shown by using standard properties about bit vectors.

The formula $\successor{\aaux}{\aauxbis}{1}$ is similarly defined:
\begin{center}
    \scalebox{0.96}{$\displaystyle{\bigvee_{\mathclap{\quad \, u \in \interval{1}{n}}}}
      \!\!\big(
        \atnom{\aaux}{1}(\lnot \avarprop_u {\land}\!\!\bigwedge_{\mathclap{v \in \interval{1}{u-1}}}\!\avarprop_v)
        \land
        \atnom{\aauxbis}{1} (\avarprop_u {\land}\!\!\bigwedge_{\mathclap{v \in \interval{1}{u-1}}}\!\!\lnot \avarprop_v)
        {\land}\!
        {\bigwedge_{\mathclap{\quad \quad v \in \interval{u+1}{n}}}}\!
        (\atnom{\aaux}{1} \avarprop_v {\iff} \atnom{\aauxbis}{1} \avarprop_v)
      \big).$}
\end{center}
Assuming
$\pair{\amodel}{\aworld} \models \fork{\aaux}{\aauxbis}{1}{1}$, this formula
states that the two distinct $\avartree$-nodes $\aworld_{\aaux},\aworld_{\aauxbis} \in \arelation(\aworld)$ corresponding to $\aaux$ and
$\aauxbis$ are
such that
$\nb{\aworld_{\aauxbis}} = \nb{\aworld_{\aaux}}+1$.
Again, correctness is guaranteed by standard analysis on bit vectors.
\cut{
As done for $\less{\aaux}{\aauxbis}{j}{j}$, this formula states that there must be a bit (encoded by $\avarprop_u$) which is set to $0$ in the binary encoding of $\nb{\aworld_{\aaux}}$ but is set to $1$ in the binary encoding of $\nb{\aworld_{\aauxbis}}$, and that
every the successive bit (encoded by $\avarprop_v$ with $v > u$)
is set to $1$ in $\nb{\aworld_{\aaux}}$
iff
it is set to $1$ also in $\nb{\aworld_{\aauxbis}}$.
However, differently from $\less{\aaux}{\aauxbis}{j}{j}$, this formula also requires that every previous bit of (encoded by $\avarprop_v$ with $v < u$) is set to $1$ in the binary encoding of $\nb{\aworld_{\aaux}}$ but is set to $0$ in the binary encoding of $\nb{\aworld_{\aauxbis}}$.
}

To define $\pU{1}$, we
recall that a model satisfying $\complete{1}$ satisfies the formula $\pA$ and hence every $\avartree$-node in $\arelation(\aworld)$ has two auxiliary children, one $\avariable$-node and one $\avariablebis$-node.
The idea is  to use these two $\Aux$-children and rely 
on~$\separate$ to state that it is not possible to find a submodel of $\amodel$ such that $\aworld$
 has only two distinct children $\aworld_{\avariable}$ and $\aworld_{\avariablebis}$ corresponding to the nominals $\avariable$ and $\avariablebis$, respectively,
and
such that ${\nb{\aworld_\avariable} = \nb{\aworld_\avariablebis}}$.
In a sense, the operator $\separate$
simulates a second-order quantification on $\avariable$ and $\avariablebis$.
Let $\equivalent{\avariable}{\avariablebis}{1}{1} \egdef \lnot (\less{\avariable}{\avariablebis}{1}{1} \lor \less{\avariablebis}{\avariable}{1}{1})$.
We define
$
\pU{1} \egdef \lnot \big(\true \separate (\fork{\avariable}{\avariablebis}{1}{1} \land \equivalent{\avariable}{\avariablebis}{1}{1})\big)
$.
\cut{
The corresponding formula is
$
\pU{1} \egdef \lnot \big(\true \separate (\fork{\avariable}{\avariablebis}{1}{1} \land \equivalent{\avariable}{\avariablebis}{1}{1})\big)
$.
}

To capture $\pC{1}$
we state that it is not possible to find a submodel of $\amodel$ that looses $\avariable$-nodes from
$\arelation^2(\aworld)$, keeps all $\avariablebis$-nodes, and is such that
{\bfseries \itshape{(i)}} $\avariable$ is a local nominal for the depth $1$, corresponding to a world $\aworld_{\avariable}$ encoding $\nb{\aworld_{\avariable}} < 2^n-1$;
{\bfseries \itshape{(ii)}}  there is no submodel where $\aworld$ has two $\avartree$-children, $\aworld_{\avariable}$ and a second
world $\aworld_{\avariablebis}$, such that $\aworld_{\avariablebis}$ corresponds to the nominal $\avariablebis$
and $\nb{\aworld_{\avariablebis}} = \nb{\aworld_{\avariable}}{+}1$.
Thus, $\pC{1}$ is defined as: 
\begin{nscenter}
$
\lnot
  \big(\Box \bottom \separate
    \big(
      \HMBox{\avartree}\Diamond \avariablebis \land \atnom{\avariable}{1} \lnot \one_{1} \land \lnot ( \true \separate (\fork{\avariable}{\avariablebis}{1}{1} \land \successor{\avariable}{\avariablebis}{1}))
    \big)
  \big).
$
\end{nscenter}
The  subscript ``$1$'' in the formula $\one_{1}$ refers to the fact that we are treating the base case of $\pC{j}$ with $j = 1$.
We have $\one_{1} \egdef \bigwedge_{i \in \interval{1}{n}}\avarprop_i$, reflecting the encoding of $2^n-1$.

This concludes the definition of $\complete{1}$ (and $\completeplus{1}$), which is established correct with respect to its specification.

\begin{lemma}\label{lemma:tower-hardness-base-case}
Let $\amodel,\aworld \models \init{1}$. We have $\amodel,\aworld \models \complete{1}$ iff $\pair{\amodel}{\aworld}$ satisfies \ref{prop:apS_1}, \ref{prop:apZ_1},
\ref{prop:apU_1}, \ref{prop:apC_1} and \ref{prop:apA}.
\end{lemma}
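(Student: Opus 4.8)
The plan is to verify, conjunct by conjunct, that $\complete{1} = \pS{1}\land\pZ{1}\land\pU{1}\land\pC{1}\land\pA$ expresses over pointed forests satisfying $\init{1}$ exactly the conjunction of \ref{prop:apS_1}, \ref{prop:apZ_1}, \ref{prop:apU_1}, \ref{prop:apC_1} and \ref{prop:apA}; since both sides are conjunctions, it suffices to match each conjunct with its homonymous property. Three matches are immediate. As $\complete{0}=\true$, the formula $\pS{1}=\HMBox{\avartree}\complete{0}$ is valid and \ref{prop:apS_1} holds vacuously. The formula $\pZ{1}=\HMDiamond{\avartree}\bigwedge_{b\in\interval{1}{n}}\lnot\avarprop_b$ says that some $\avartree$-child of $\aworld$ has all bits $\avarprop_1,\dots,\avarprop_n$ false, i.e.\ encodes $0$, which is \ref{prop:apZ_1}. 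For $\pA$, the conjunct $\avartree$ gives that $\aworld$ is a $\avartree$-node, while the remaining conjuncts are handled by observing that $\Diamond\avarleft\separate\Diamond\avarselect\separate\Diamond\avarright$ holds at a world iff it has three pairwise distinct children labelled $\avarleft$, $\avarselect$ and $\avarright$, and that (this is where $\init{1}$ is used here) $\init{1}$ forces every $\avartree$-node reachable in at most one step to have at most one $\aaux$-child for each $\aaux\in\Aux$ and forces auxiliary nodes to be leaves carrying a single label, so that $\Diamond\avariable\separate\Diamond\avariablebis$ at a $\avartree$-node becomes equivalent to ``exactly one $\avariable$-child and exactly one $\avariablebis$-child''; hence $\pA$ is equivalent to \ref{prop:apA}. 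By this equivalence it remains to show, assuming $\pair{\amodel}{\aworld}$ already satisfies $\init{1}$ and $\pA$, that $\amodel,\aworld\models\pU{1}$ iff \ref{prop:apU_1} and $\amodel,\aworld\models\pC{1}$ iff \ref{prop:apC_1}.

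For this I would record two facts. \textbf{(i)} Passing to a submodel deletes edges only; it alters neither the valuation nor which worlds are $\avartree$-nodes, so the number $\nb{\aworld'}$ encoded via the truth values of $\avarprop_1,\dots,\avarprop_n$ by a $\avartree$-node $\aworld'\in\arelation(\aworld)$ is invariant under taking submodels. Moreover $\amodel,\aworld\models\true\separate\aformulabis$ iff $\aformulabis$ holds at $\aworld$ in some submodel of $\amodel$, while $\amodel,\aworld\models\Box\bottom\separate\aformulabis$, which (as $\separate$ is commutative) equals $\SabDiamond\aformulabis$, iff $\aformulabis$ holds at $\aworld$ in some submodel retaining every child of $\aworld$. \textbf{(ii)} Unfolding the base-case definitions for $j=1$ (so $\Boxbox^{-1}(\cdot)=\true$), $\fork{\avariable}{\avariablebis}{1}{1}$ holds at $\pair{\amodel}{\aworld}$ iff $\aworld$ has exactly two $\avartree$-children and $\avariable,\avariablebis$ are local nominals for depth $1$ pointing at two distinct of them; and whenever $\aaux,\aauxbis$ are local nominals for depth $1$ with corresponding worlds $\aworld_\aaux,\aworld_\aauxbis$, a routine bit-vector argument yields $\less{\aaux}{\aauxbis}{1}{1}$ iff $\nb{\aworld_\aaux}<\nb{\aworld_\aauxbis}$ and $\successor{\aaux}{\aauxbis}{1}$ iff $\nb{\aworld_\aauxbis}=\nb{\aworld_\aaux}+1$; in particular $\equivalent{\avariable}{\avariablebis}{1}{1}$ iff $\nb{\aworld_\avariable}=\nb{\aworld_\avariablebis}$, and $\amodel,\aworld'\models\one_1$ iff $\nb{\aworld'}=2^n-1=\amapter(1,n)-1$.

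Given these, $\pU{1}\equiv$ \ref{prop:apU_1} follows by exhibiting witnessing submodels. First suppose \ref{prop:apU_1} fails: pick distinct $\avartree$-children $\aworld_1\neq\aworld_2$ of $\aworld$ with $\nb{\aworld_1}=\nb{\aworld_2}$, and take the submodel that, among the $\avartree$-children of $\aworld$, keeps only $\aworld_1$ and $\aworld_2$, keeps the $\avariable$-child of $\aworld_1$ and the $\avariablebis$-child of $\aworld_2$, and deletes the $\avariablebis$-child of $\aworld_1$ and the $\avariable$-child of $\aworld_2$ (all available by $\pA$): it satisfies $\fork{\avariable}{\avariablebis}{1}{1}$ with $\aworld_\avariable=\aworld_1$, $\aworld_\avariablebis=\aworld_2$, together with $\equivalent{\avariable}{\avariablebis}{1}{1}$ by fact (ii), so $\amodel,\aworld\models\true\separate(\fork{\avariable}{\avariablebis}{1}{1}\land\equivalent{\avariable}{\avariablebis}{1}{1})$, i.e.\ $\pU{1}$ fails. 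Conversely, any submodel of $\amodel$ satisfying $\fork{\avariable}{\avariablebis}{1}{1}\land\equivalent{\avariable}{\avariablebis}{1}{1}$ exhibits two distinct $\avartree$-children of $\aworld$ with equal encoded number, which by the invariance of $\nb{\cdot}$ are two distinct $\avartree$-children of $\aworld$ with equal number already in $\amodel$, contradicting \ref{prop:apU_1}. The equivalence $\pC{1}\equiv$ \ref{prop:apC_1} follows the same pattern with one more layer of nesting: writing $\pC{1}$ as $\lnot\SabDiamond\big(\HMBox{\avartree}\Diamond\avariablebis\land\atnom{\avariable}{1}\lnot\one_1\land\lnot(\true\separate(\fork{\avariable}{\avariablebis}{1}{1}\land\successor{\avariable}{\avariablebis}{1}))\big)$, one shows that a submodel retaining all children of $\aworld$ and satisfying the inner formula exists iff some $\avartree$-child $\aworld_1$ of $\aworld$ with $\nb{\aworld_1}<\amapter(1,n)-1$ has no $+1$-successor among the $\avartree$-children of $\aworld$; in one direction one prunes $\avariable$-edges so that $\avariable$ becomes a nominal pointing at such an $\aworld_1$, keeps every $\avariablebis$-child, and uses fact (ii) to rule out any sub-submodel providing a $\fork{\avariable}{\avariablebis}{1}{1}\land\successor{\avariable}{\avariablebis}{1}$ witness; in the other, from such a submodel and the invariance of $\nb{\cdot}$ one reads off the missing successor. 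Combining the five matches gives the lemma.

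I expect the bookkeeping inside the $\pC{1}$ equivalence to be the main obstacle: one simultaneously manages three nestings of $\separate$, namely the outer $\SabDiamond$, the inner $\true\separate(\cdot)$, and the copies of $\separate$ buried in $\fork{\avariable}{\avariablebis}{1}{1}$, $\successor{\avariable}{\avariablebis}{1}$ and the local-nominal formulae, and must check at each level that the required local-nominal and ``exactly two children'' conditions can indeed be forced by deleting edges, without disturbing any encoded number. Fact (i) is the conceptual lever: since submodels only delete edges and never touch the valuation nor the set of children of $\aworld$, every submodel one needs is obtained from the ambient one by a concrete, explicitly described pruning.
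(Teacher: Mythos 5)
Your proposal is correct and follows essentially the same route as the paper: it reduces the statement to a conjunct-by-conjunct match, establishes the auxiliary characterisations of the nominal, $\fork{\avariable}{\avariablebis}{1}{1}$, $\less{\avariable}{\avariablebis}{1}{1}$ and $\successor{\avariable}{\avariablebis}{1}$ formulae by bit-vector reasoning, and then handles $\pU{1}$ and $\pC{1}$ by explicitly constructing and analysing witnessing submodels, using the key observation that for $j=1$ the encoded numbers depend only on the valuation and are therefore invariant under taking submodels. This is exactly the decomposition the paper uses (its proof chains through the correctness lemmata for $\init{1}$, the nominal formulae, $\fork{}{}{1}{1}$, $\less{}{}{1}{1}$, $\successor{}{}{1}$, $\pU{1}$ and $\pC{1}$), so no further comparison is needed.
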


\subsubsection*{Inductive case: $1 \leq i < j$.}
As an implicit inductive hypothesis used to prove that the formulae are well-defined, we assume that
$\successor{\aaux}{\aauxbis}{j'}$ and $\complete{j'}$ are already defined for every $j' < j$,
whereas $\fork{\aaux}{\aauxbis}{i'}{j'},$ and $\less{\aaux}{\aauxbis}{i'}{j'}$ are already defined for every $1 \leq i' \leq j'$ such that $j' - i' < j - i$.
Therefore, we define:
 \begin{center}
  $
     \fork{\aaux}{\aauxbis}{i}{j} \egdef \fork{\aaux}{\aauxbis}{i}{i} \land \HMBox{\avartree}^i\completeplus{j-i}.
  $
\end{center}
It is easy to see that this formula is well-defined: $\fork{\aaux}{\aauxbis}{i}{i}$ is
from the base case, whereas $\completeplus{j {-} i}$ is defined by inductive hypothesis, since we have $j-i < j$.

Consider now $\less{\aaux}{\aauxbis}{i}{j}$. Assuming $\amodel,\!\aworld\,{\models}\, \fork{\aaux}{\aauxbis}{i}{j}$, 
we wish to express
 $\nb{\aworld_{\aaux}}\,{<}\,\nb{\aworld_{\aauxbis}}$ for the two distinct worlds $\aworld_{\aaux},\aworld_{\aauxbis}\,{\in}\,\arelation^i(\aworld)$
corresponding to the nominals $\aaux$ and $\aauxbis$, respectively.
 As $i\,{<}\,j$, $\nb{\aworld_{\aaux}}$ (resp. $\nb{\aworld_{\aauxbis}}$) is encoded using the truth value of $\treeval$ on the $\avartree$-children
 of $\aworld_{\aaux}$ (resp. $\aworld_{\aauxbis}$). To rely on
arithmetical properties of binary numbers used to define $\less{\aaux}{\aauxbis}{j}{j}$,
we need to find two partitions
 $P_\aaux\,{=}\,\{L_\aaux,S_\aaux,R_\aaux\}$ and $P_\aauxbis\,{=}\, \{L_\aauxbis,S_\aauxbis,R_\aauxbis\}$,
one for the $\avartree$-children of $\aworld_{\aaux}$ and another one for those of $\aworld_{\aauxbis}$ s.t.:
 \begin{description}
   \item[\desclabel{(LSR)}{desc:LSR}:] Given $b\,{\in}\,\{\aaux,\aauxbis\}$, $P_b$ splits the $\avartree$-children as follows:
 \begin{itemize}
   \item there is a $\avartree$-child $s_b$ of $\aworld_{b}$ such that $S_b = \{s_b\}$;
   \item $\nb{r} < \nb{s_b} < \nb{l}$, for every $r \in R_b$ and $l \in L_b$.
 \end{itemize}
 \item[\desclabel{(LESS)}{desc:LESS}:] $P_{\aaux}$ and $P_{\aauxbis}$ have constraints to satisfy $<$:
  \begin{itemize}
    \item $\nb{s_\aaux} = \nb{s_\aauxbis}$, $\amodel,s_\aaux \models \lnot \treeval$ and $\amodel,s_\aauxbis \models \treeval$;
    \item
      for every $l_\aaux \in L_\aaux$ and $l_\aauxbis \in L_\aauxbis$, if $\nb{l_\aaux} = \nb{l_\aauxbis}$
      then $\amodel,l_\aaux \models \treeval$ iff $\amodel,l_\aauxbis \models \treeval$.
  \end{itemize}
\end{description}
It is important to notice that these conditions essentially revolve around the numbers encoded by $\avartree$-children, which will be compared using the already defined (by inductive reasoning) formulae $\less{\aaux}{\aauxbis}{i'}{j'}$, where $j'-i' < j - i$.
Since the semantics of $\less{\aaux}{\aauxbis}{i}{j}$ is given under the hypothesis that $\amodel,\aworld \models \fork{\aaux}{\aauxbis}{i}{j}$,
we can assume that every child of $\aworld_{\aaux}$ and $\aworld_{\aauxbis}$ has  all the possible $\Aux$-children.
Then, we rely on the auxiliary propositions in $\set{\avarleft, \avarselect, \avarright}$ in order to mimic the reasoning done in~\ref{desc:LSR} and~\ref{desc:LESS}.

We start by considering the constraints involved in \ref{desc:LSR} and express them with the formula $\lsrpartition{j}$, which is satisfied by a pointed
forest $\pair{\amodel = (\worlds,\arelation,\avaluation)}{\aworld}$ whenever:
\begin{itemize}
  \item $\pair{\amodel}{\aworld}$ satisfies $\complete{j}$.
  \item Every $\avartree$-child of $\aworld$ has exactly one $\{\avarleft,\avarselect,\avarright\}$-child, and
  only one of these $\avartree$-children (say $\aworld'$) has an $\avarselect$-child.
  \item Every $\avartree$-child of $\aworld$ that has an $\avarleft$-child (resp. $\avarright$-child) encodes a number greater (resp. smaller) than $\nb{\aworld'}$.
\end{itemize}
Despite this formula being defined in terms of $\complete{j}$, we only rely on $\lsrpartition{j-i}$ (which is defined by inductive reasoning) in order to define $\less{\aaux}{\aauxbis}{i}{j}$.
The picture below
schematises a model satisfying $\lsrpartition{j}$.
\begin{center}
 \begin{tikzpicture}
  \node (upleft) at (0,0) {};

  \path (upleft.center) -- ++(0.475\linewidth,-0.4) node[dot,label={$\scaleobj{0.8}{\aworld}$}] (root) {.};

  \draw[pto] (root.center) -- ++ (-0.186\linewidth,-1.5) node[dot] (l1) {};
  \draw[pto] (root.center) -- ++ (-0.408\linewidth,-1.5) node[dot] (l2) {};
  \draw[pto] (root.center) -- ++ (0,-1.5) node[dot] (c1) {};
  \draw[pto] (root.center) -- ++ (0.186\linewidth,-1.5) node[dot] (r1) {};
  \draw[pto] (root.center) -- ++ (0.408\linewidth,-1.5) node[dot] (r2) {};

  \path (root.center) -- ++ (-0.2975\linewidth,-1.5) node (dots) {$\scaleobj{1.2}{\dots}$};
  \path (root.center) -- ++ (0.2975\linewidth,-1.5) node (dotsbis) {$\scaleobj{1.2}{\dots}$};

  \path (root.center) -- ++ (0.2975\linewidth,-2.2) node (min1) {$\scaleobj{1}{<}$};
  \path (root.center) -- ++ (0.09\linewidth,-2.2) node (min2) {$\scaleobj{1}{<}$};
  \path (root.center) -- ++ (-0.2975\linewidth,-2.2) node (min3) {$\scaleobj{1}{<}$};
  \path (root.center) -- ++ (-0.09\linewidth,-2.2) node (min4) {$\scaleobj{1}{<}$};

  \foreach \x in {l1,l2,c1,r1,r2} {
    \path (\x.center) -- ++ (-0.066\linewidth,-1) node (l\x) {};
    \path (\x.center) -- ++ (0.066\linewidth,-1) node (r\x) {};

      \draw[gray!30,fill=gray!30] (\x.center) -- (l\x.center) -- (r\x.center) -- (\x.center);
      \draw[alin] (\x.center) -- (l\x.center);
      \draw[alin] (\x.center) -- (r\x.center);
  }

  \path (upleft) -- ++(0.95\linewidth,0) node[label={[xshift=-38pt,yshift=-16pt]$\scaleobj{0.8}{\lsrpartition{j}, \text{ implies } \complete{j}}$}] (upright) {};

  \path (upleft) -- ++ (0,-1.7) node (midleft) {};
  \path (upright) -- ++ (0,-1.7) node (midright) {};

  \path (midleft) -- ++ (0.42\linewidth,0) node (selectleft) {};
  \path (midright) -- ++ (-0.42\linewidth,0) node (selectright) {};

  \path (midleft) -- ++ (0,-0.4) node (downleft) {};
  \path (midright) -- ++ (0,-0.4) node (downright) {};

  \path (downleft) -- ++ (0.42\linewidth,0) node (selectleftbottom) {};
  \path (downright) -- ++ (-0.42\linewidth,0) node (selectrightbottom) {};

  \begin{pgfonlayer}{bg0}

    \draw[gray!15,fill=gray!15] (selectright.center) -- (selectrightbottom.center) -- (downright.center) -- (midright.center) -- (selectright.center);

    \draw[gray!30,fill=gray!30] (selectleft.center) -- (selectleftbottom.center) -- (selectrightbottom.center) -- (selectright.center) -- (selectleft.center);

    \draw[gray!60,fill=gray!60] (selectleft.center) -- (selectleftbottom.center) -- (downleft.center) -- (midleft.center) -- (selectleft.center);

  \end{pgfonlayer}

  \draw[aux] (l1) -- ++(-0.32,0.32) node[label={[xshift=-3pt,yshift=-10pt]$\scaleobj{0.8}{\avarleft}$}] (labl1) {};
  \draw[aux] (l2) -- ++(-0.32,0.32) node[label={[xshift=-3pt,yshift=-10pt]$\scaleobj{0.8}{\avarleft}$}] (labl2) {};

  \draw[aux] (c1) -- ++(0.32,0.32) node[label={[xshift=3pt,yshift=-10pt]$\scaleobj{0.8}{\avarselect}$}] (labc1) {};

  \draw[aux] (r1) -- ++(0.32,0.32) node[label={[xshift=3pt,yshift=-10pt]$\scaleobj{0.8}{\avarright}$}] (labr1) {};
  \draw[aux] (r2) -- ++(0.32,0.32) node[label={[xshift=3pt,yshift=-10pt]$\scaleobj{0.8}{\avarright}$}] (labr2) {};

\end{tikzpicture}
\end{center}

\noindent The definition of $\lsrpartition{j}$ follows closely its specification:
\begin{center}
$
\begin{aligned}
\lsrpartition{j} \egdef \complete{j}
\,{\land}\,\nominal{\avarselect}{1} \land\!
 \lnot (\true\!\separate\!(\fork{\avarselect}{\avarleft}{1}{j}\!\land\!\lnot \less{\avarselect}{\avarleft}{1}{j}))\\[-2pt]
 \land
\lnot (\true\!\separate\!(\fork{\avarselect}{\avarright}{1}{j}\!\land\!\lnot \less{\avarright}{\avarselect}{1}{j}))
\land \HMBox{\avartree}\Diamond_{=1}(\avarleft\,{\lor}\,\avarselect\,{\lor}\, \avarright).
\end{aligned}
$
\end{center}

We define the formula $\less{\aaux}{\aauxbis}{i}{j}$ as follows:
  \begin{center}
  $
  \true \separate \big(
    \twonoms{\aaux}{\aauxbis}{i} \land \HMBox{\avartree}^i\lsrpartition{j-i} \land
  \selectpred{\aaux}{\aauxbis}{i}{j} \land \leftpred{\aaux}{\aauxbis}{i}{j}
  \big),
$
\end{center}
where $\selectpred{\aaux}{\aauxbis}{i}{j}$ and $\leftpred{\aaux}{\aauxbis}{i}{j}$ check the first and second condition in~\ref{desc:LESS}, respectively. In particular, by defining
$\equivalent{\aaux}{\aauxbis}{i}{j} \egdef \lnot (\less{\aaux}{\aauxbis}{i}{j} \lor \less{\aauxbis}{\aaux}{i}{j})$, we have
\begin{center}
    $
    \begin{aligned}
      & \selectpred{\aaux}{\aauxbis}{i}{j} \egdef
      \true \separate \big( \fork{\avariable}{\avariablebis}{i+1}{j} \land
      \atnom{\aaux}{i}\HMDiamond{\avartree}(\Diamond \avarselect \land \Diamond \avariable)\, \land\\[-3pt]
      &\pushright{\atnom{\aauxbis}{i}\HMDiamond{\avartree}(\Diamond \avarselect \land \Diamond \avariablebis) \land \equivalent{\avariable}{\avariablebis}{i+1}{j} \land \atnom{\avariable}{i+1}\lnot \treeval \land \atnom{\avariablebis}{i+1}\treeval
      \big)}\\[1pt]
      &\leftpred{\aaux}{\aauxbis}{i}{j} \egdef \lnot \big( \true \separate \big(
      \fork{\avariable}{\avariablebis}{i+1}{j} \land
      \atnom{\aaux}{i}\HMDiamond{\avartree}(\Diamond \avarleft \land \Diamond \avariable)\, \land\\[-3pt]
      &\pushright{\atnom{\aauxbis}{i}\!\HMDiamond{\avartree}(\Diamond \avarleft \land \Diamond \avariablebis)
      \land  \equivalent{\avariable}{\avariablebis}{i+1}{j}{\land}
      \lnot (\atnom{\avariable}{i+1}\treeval \iff \atnom{\avariablebis}{i+1}\treeval)
      \big)\big).}
    \end{aligned}
    $
\end{center}
Both $\fork{\avariable}{\avariablebis}{i+1}{j}$ and $\equivalent{\avariable}{\avariablebis}{i+1}{j}$ used in these formulae are
defined recursively.
The formula $\selectpred{\aaux}{\aauxbis}{i}{j}$ states that there is a submodel $\amodel' \sqsubseteq \amodel$ such that
\begin{enumerate}[label=\Roman*.,align = left]
  \item\  $\amodel', \aworld \models \fork{\avariable}{\avariablebis}{i+1}{j}$;
  \item\  $s_\aaux$ corresponds to the nominal $\avariable$ at depth $i+1$;
  \item\  $s_\aauxbis$ corresponds to the nominal $\avariablebis$ at depth $i+1$;
  \item[IV-VI.] $\nb{s_\aaux} = \nb{s_\aauxbis}$, $\amodel,s_\aaux \not\models \treeval$ and $\amodel,s_\aauxbis \models \treeval$.
\end{enumerate}
(The enumeration I-VI refers to the conjuncts in the formula)

$\selectpred{\aaux}{\aauxbis}{i}{j}$ correctly models the first condition of~\ref{desc:LESS}.
Regarding $\leftpred{\aaux}{\aauxbis}{i}{j}$ and~\ref{desc:LESS}, a similar analysis can be performed.
We define $\selectleftpred{\aaux}{\aauxbis}{i}{j}\egdef\leftpred{\aaux}{\aauxbis}{i}{j} \land \selectpred{\aaux}{\aauxbis}{i}{j}$.

Let us consider $\successor{\aaux}{\aauxbis}{j}$.
Under the hypothesis that $\amodel,\aworld \models \fork{\aaux}{\aauxbis}{i}{j}$, this formula must express
 $\nb{\aworld_{\aauxbis}} = \nb{\aworld_{\aaux}} + 1$ for the two (distinct) worlds $\aworld_{\aaux},\aworld_{\aauxbis} \in \arelation^i(\aworld)$.
 Then, as done for defining $\less{\aaux}{\aauxbis}{i}{j}$, we take advantage of arithmetical properties on
binary numbers and we search for two partitions
 $P_\aaux = \{L_\aaux,S_\aaux,R_\aaux\}$ and $P_\aauxbis = \{L_\aauxbis,S_\aauxbis,R_\aauxbis\}$ of the $\avartree$-children of $\aworld_{\aaux}$ and $\aworld_{\aauxbis}$, respectively, such that $P_\aaux$ and $P_\aauxbis$ satisfy \ref{desc:LSR} as well as the condition below:
 \begin{description}
  \item[\desclabel{(PLUS)}{desc:PLUS}:] $P_{\aaux}$ and $P_{\aauxbis}$ have the arithmetical properties of $+1$ :
  \begin{itemize}
    \item $P_{\aaux}$ and $P_{\aauxbis}$ satisfy \ref{desc:LESS};
    \item
      for every $r_\aaux \in R_\aaux$, we have $\amodel,r_\aaux \models \treeval$;
    \item  for every $r_\aauxbis \in R_\aauxbis$, we have $\amodel,r_\aaux \not\models \treeval$,
  \end{itemize}
\end{description}
where $S_\aaux = \{s_\aaux\}$ and $S_{\aauxbis} = \{s_\aauxbis\}$, as required by \ref{desc:LSR}.

The definition of $\successor{\aaux}{\aauxbis}{j}$ is similar to $\less{\aaux}{\aauxbis}{i}{j}$:
 \begin{nscenter}
  $
  \true  {\separate} \big(
    \twonoms{\aaux}{\aauxbis}{1} \land \HMBox{\avartree}\lsrpartition{j-1} \land
    \selectleftpred{\aaux}{\aauxbis}{1}{j}
    \land
    \rightpred{\aaux}{\aauxbis}
  \big),
  $
 \end{nscenter}
where $\rightpred{\aaux}{\aauxbis} \egdef \atnom{\aaux}{1}\HMBox{\avartree}(\Diamond \avarright \implies \treeval)
\land \atnom{\aauxbis}{1}\HMBox{\avartree}(\Diamond \avarright \implies \lnot \treeval) $
captures the last two conditions of~\ref{desc:PLUS}.

To define
$\pU{j}$ and $\pC{j}$, we rely on $\fork{\aaux}{\aauxbis}{i}{j}$, $\less{\aaux}{\aauxbis}{i}{j}$ and $\successor{\aaux}{\aauxbis}{j}$.
%
\begin{center}
    $
    \begin{aligned}
      \pU{j} \ \egdef  & \ \lnot \big(\true \separate (\fork{\avariable}{\avariablebis}{1}{j} \land \equivalent{\avariable}{\avariablebis}{1}{j})\big)\\[1pt]
    \pC{j} \ \egdef & \ \lnot
    \Big( \Box \false \separate
      \Big(\HMBox{\avartree} (\completeplus{j-1} \land \Diamond \avariablebis) \land
      \nominal{\avariable}{1}
    \land \\[-3pt]
    & \atnom{\avariable}{1} \lnot \one_{j} \land \lnot \big( \true \separate (\fork{\avariable}{\avariablebis}{1}{j} \land
   \successor{\avariable}{\avariablebis}{j})\big)\Big)\Big),
  \end{aligned}
  $
\end{center}
where $\one_j \egdef \HMBox{\avartree} \treeval$ reflects the encoding of $\amapter(j,n)-1$ for $j > 1$.
The main difference between $\pC{1}$ and $\pC{j}$ ($j > 1$) is that the conjunct $\HMBox{\avartree}\Diamond \avariablebis$ of $\pC{1}$ is replaced by $\HMBox{\avartree} (\completeplus{j-1} \land \Diamond \avariablebis)$
in $\pC{j}$, as needed to correctly evaluate $\fork{\avariable}{\avariablebis}{1}{j}$.
Indeed, the difference between $\fork{\avariable}{\avariablebis}{1}{1}$ and $\fork{\avariable}{\avariablebis}{1}{j}$ is precisely that the latter requires $\HMBox{\avartree}\completeplus{j-1}$.
The definition of $\complete{j}$ is now complete. We can state  its correctness.

\begin{lemma}\label{lemma:tower-hardness-inductive}
Let $\amodel,\aworld \models \init{j}$. We have $\amodel,\aworld \models \complete{j}$ iff $\pair{\amodel}{\aworld}$ satisfies \ref{prop:apS}, \ref{prop:apZ},
\ref{prop:apU}, \ref{prop:apC} and \ref{prop:apA}.
\end{lemma}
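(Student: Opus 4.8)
The plan is to prove the biconditional by induction on $j$, using \Cref{lemma:tower-hardness-base-case} as the base case $j=1$. For the inductive step I fix $j \geq 2$ and take as inductive hypotheses that the statement holds for every $j' < j$, and --- matching the recursive presentation given before the inductive case --- that $\fork{\aaux}{\aauxbis}{i'}{j'}$, $\less{\aaux}{\aauxbis}{i'}{j'}$, $\successor{\aaux}{\aauxbis}{j'}$ and $\lsrpartition{j'-i'}$ correctly capture their informal specifications whenever $j' < j$, or whenever $1 \leq i' \leq j'$ and $j'-i' < j-i$. A preliminary observation I would record is that $\init{j}$ is preserved by submodels and that its restriction to the subtree rooted at any $\avartree$-node of $\arelation(\aworld)$ equals $\init{j-1}$, so the inductive hypothesis on $\complete{j-1}$ may be invoked inside such subtrees; in particular, for any $\avartree$-node $\aworld'$ of $\arelation^i(\aworld)$ with $i \leq j$, satisfaction of the appropriate $\complete{\cdot}$ in a submodel pins down $\nbexp{\aworld'}{i}$ to the same value as in $\amodel$ (\emph{robustness} of the encoding under submodels).

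I would then split the biconditional conjunct by conjunct. The conjuncts $\pS{j}$, $\pA$ and $\pZ{j}$ are routine: $\pS{j} = \HMBox{\avartree}\complete{j-1}$ is equivalent to \ref{prop:apS} by the inductive hypothesis on $\complete{j-1}$ applied at each $\avartree$-child (where $\init{j-1}$ holds); $\pA$ is a direct transcription of the structural requirement \ref{prop:apA}; and, once \ref{prop:apS} is assumed so that each $\avartree$-child $\aworld'$ of $\aworld$ carries $\complete{j-1}$ and hence has $\amapter(j-1,n)$ $\avartree$-children ordered by the numbers they encode, the formula $\pZ{j} = \HMDiamond{\avartree}\HMBox{\avartree}\lnot\treeval$ says precisely that some $\aworld'$ has $\nbexp{\aworld'}{1} = 0$, i.e.\ \ref{prop:apZ}. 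Since the lemma concerns the conjunction, when treating $\pU{j}$ and $\pC{j}$ I may freely assume \ref{prop:apS} and \ref{prop:apA}, hence that $\nbexp{\cdot}{i}$ is defined on all $\avartree$-nodes of $\arelation^i(\aworld)$ for $i \leq j$.

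The core of the proof will be the correctness of $\fork{\aaux}{\aauxbis}{i}{j}$, $\less{\aaux}{\aauxbis}{i}{j}$ and $\successor{\aaux}{\aauxbis}{j}$ at level $j-i$, which I would establish by an inner induction on $j-i$ (bottoming out in the base-case formulae of \Cref{lemma:tower-hardness-base-case}). For $\fork{\aaux}{\aauxbis}{i}{j} = \fork{\aaux}{\aauxbis}{i}{i} \land \HMBox{\avartree}^i\completeplus{j-i}$ this is immediate. For $\less{\aaux}{\aauxbis}{i}{j}$ the argument is that the leading $\separate$ guesses a submodel $\amodel' \sqsubseteq \amodel$ which deletes exactly the $\Aux$-edges needed to select, via the markers $\avarleft, \avarselect, \avarright$, partitions $P_{\aaux}$ and $P_{\aauxbis}$ of the $\avartree$-children of $\aworld_{\aaux}$ and $\aworld_{\aauxbis}$, while leaving untouched everything needed to recompute the numbers they encode; the point is that $\init{j}$ together with $\lsrpartition{j-i}$ (available by the outer hypothesis since $j-i<j$) is rigid enough that such submodels are in bijection with genuine choices of partitions satisfying \ref{desc:LSR}, and that $\selectpred{\aaux}{\aauxbis}{i}{j}$ and $\leftpred{\aaux}{\aauxbis}{i}{j}$ then force exactly the two conditions of \ref{desc:LESS} --- the comparisons between numbers encoded by individual children being discharged by the already-proved correctness of $\less{\cdot}{\cdot}{i+1}{j}$, which has level $j-i-1<j-i$. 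The same bit-vector reasoning used at the base case then yields $\nb{\aworld_{\aaux}} < \nb{\aworld_{\aauxbis}}$, and, replacing \ref{desc:LESS} by \ref{desc:PLUS} and adding $\rightpred{\aaux}{\aauxbis}$, yields $\nb{\aworld_{\aauxbis}} = \nb{\aworld_{\aaux}} + 1$ for $\successor{\aaux}{\aauxbis}{j}$.

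Finally, $\pU{j}$ and $\pC{j}$ follow. A submodel witnessing the inner formula of $\pU{j} = \lnot(\true \separate (\fork{\avariable}{\avariablebis}{1}{j} \land \equivalent{\avariable}{\avariablebis}{1}{j}))$ carves out two distinct $\avartree$-children of $\aworld$, distinguished as the nominals $\avariable,\avariablebis$ and still carrying $\completeplus{j-1}$, that encode equal numbers; by robustness these encode the same numbers as in $\amodel$, so $\pU{j}$ holds iff no two $\avartree$-children of $\aworld$ encode the same number, i.e.\ \ref{prop:apU}. For $\pC{j}$, the ``$\Box\false \separate$'' prefix guesses a submodel that keeps every $\avartree$-child of $\aworld$ with its subtree intact (via $\completeplus{j-1}$) and every depth-two $\avariablebis$-edge, but drops all but one depth-two $\avariable$-edge, thus turning a chosen non-maximal $\avartree$-child into the nominal $\avariable$; the inner negated $\separate$ then asserts the absence of an $\avartree$-child that is a $+1$-successor of it, so $\pC{j}$ holds iff every $\avartree$-child $\aworld_1$ with $\nb{\aworld_1} < \amapter(j,n)-1$ has an $\avartree$-sibling $\aworld_2$ with $\nb{\aworld_2} = \nb{\aworld_1}+1$, i.e.\ \ref{prop:apC} (here one checks in passing that $\one_j = \HMBox{\avartree}\treeval$ names $\amapter(j,n)-1$, and that the replacement of $\HMBox{\avartree}\Diamond\avariablebis$ by $\HMBox{\avartree}(\completeplus{j-1} \land \Diamond\avariablebis)$ is exactly what makes $\fork{\avariable}{\avariablebis}{1}{j}$ evaluable in the guessed submodel). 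Both directions of the biconditional are obtained along the way, the ``$\Leftarrow$'' direction by producing or ruling out the relevant witnessing submodels. I expect the main obstacle to be the correctness step for $\less{\aaux}{\aauxbis}{i}{j}$ and $\successor{\aaux}{\aauxbis}{j}$: arguing that $\separate$ faithfully implements the partition/second-order quantification \emph{and} that the rigidity of $\complete{j}$ and $\init{j}$ forbids any submodel from altering the number encoded by a relevant world.
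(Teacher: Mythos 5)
Your proposal is correct and follows essentially the same route as the paper: an outer induction on $j$ grounded in the base case of \Cref{lemma:tower-hardness-base-case}, an inner induction on $j-i$ establishing correctness of $\fork{\aaux}{\aauxbis}{i}{j}$, $\lsrpartition{j-i}$, $\less{\aaux}{\aauxbis}{i}{j}$ and $\successor{\aaux}{\aauxbis}{j}$, the robustness of encoded numbers under submodels satisfying the relevant $\complete{\cdot}$, and then a conjunct-by-conjunct treatment in which $\pU{j}$ and $\pC{j}$ are handled by producing or ruling out witnessing submodels. This is exactly the decomposition the paper uses, so no further comment is needed.
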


The size of $\complete{j}$ is exponential in $j > 1$ and polynomial in $n \geq 1$.
As its size is elementary, we can use this formula as a starting point to reduce $\tiling_k$.


\subsection{Tiling a grid  $\interval{0}{\amapter(k,n)-1}  \times \interval{0}{\amapter(k,n)-1}$}\label{subsection:tiling-grid}

Below, we briefly explain how to use previous developments
to define a uniform reduction from $\tiling_k$, for every ${k \geq 2}$. Several adaptations
are needed
to encode smoothly the grid but the hardest part was the design of $\complete{j}$.
Let $k \! \geq \! 2$ and  $\pair{\cTT}{\atile}$ be an instance of
$\tiling_k$.
We can construct a formula $\atiling{k}{\cTT,\atile}$ that is satisfiable if and only if
$\pair{\cTT}{\atile}$ as a solution.
To represent $\interval{0}{\amapter(k,n)-1}^2$ in
some pointed forest $\pair{\amodel}{\aworld}$, where $\amodel = \triple{\worlds}{\arelation}{\avaluation}$,
we recycle the ideas for defining $\complete{k}$.
From \Cref{lemma:tower-hardness-inductive}, we know that if $\amodel,\aworld \models$ $\init{k} \land \complete{k}$ then
the $\avartree$-children of $\aworld$ encode the interval $\interval{0}{\amapter(k,n)-1}$.
A position in the grid is however a pair of numbers, hence
the {\em crux  of our encoding} rests on the fact that  each $\aworld' \in \arelation(\aworld)$ encodes {\em two} numbers
$\nbexp{\aworld'}{\cH}$ and $\nbexp{\aworld'}{\cV}$.
Similarly to $\complete{k}$, these numbers are represented by the truth values on the $\avartree$-children of $\aworld'$, with the help of {\em new propositions} $\treeval_\cH$ and $\treeval_\cV$.
We are in luck: since both numbers are from $\interval{0}{\amapter(k,n)-1}$, $\aworld'$ just needs as many children as when encoding a single number, and therefore if $\amodel,\aworld \,{\models}\, \atiling{k}{\cTT,\atile}$ then $\amodel,\aworld'\,{\models}\,\complete{k{-}1}$.
In fact, the portion of $\atiling{k}{\cTT,\atile}$ that encodes the grid can be described quite naturally by slightly updating the characterisation of $\complete{k}$. For example, \ref{prop:apU} becomes
\begin{description}
  \item[(\apU$_{\cTT,k}$)]
  $
  \begin{array}[t]{l}
  \text{for all distinct } \avartree\text{-nodes } \aworld_1,\aworld_2  \in \arelation(\aworld)
  \\[-1pt]
  \nbexp{\aworld_1}{\cH} \neq \nbexp{\aworld_2}{\cH} \text{ or }
  \nbexp{\aworld_1}{\cV} \neq \nbexp{\aworld_2}{\cV}.
  \end{array}
  $
\end{description}
The formula $\pU{k}$ has  to be updated accordingly, but without major differences or complications.
Of course, more is required as $\atiling{k}{\cTT,\atile}$ must also
encode the tiling conditions~\ref{tiling_c:1} and~\ref{tiling_c:2}.
Fortunately, the kit of formulae defined for $\complete{k}$ allows us to have access to
$\nbexp{\aworld'}{\cH}$ and $\nbexp{\aworld'}{\cV}$ in such a way that both  conditions can be expressed rather easily.
For example, to express vertical constraints, we design a formula stating that
 for all ${\avartree\text{-nodes}}$ $\aworld_1,\aworld_2\!\in\!\arelation(\aworld)$,
    if $\nbexp{\aworld_2}{\cV} = \nbexp{\aworld_1}{\cV}{+}1$ and $\nbexp{\aworld_2}{\cH} = \nbexp{\aworld_1}{\cH}$ then there is $\pair{\atile_1}{\atile_2} \in \cV$ such that
    $\aworld_1 \in \avaluation(\atile_1)$ and $\aworld_2 \in \avaluation(\atile_2)$.
Further details are omitted by lack of space.

\begin{theorem}\label{theorem:tower-completeness-SC}
  $\satproblem{\modallogicSC}$ is \tower-complete.
\end{theorem}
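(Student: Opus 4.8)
The proof splits into the \tower upper bound and the \tower lower bound, the latter being where all the work lies. For the upper bound, recall from the preliminaries that $\satproblem{\modallogicSC}$ admits a logspace reduction to $\satproblem{\msokt}$: one reads $\aformula_1 \separate \aformula_2$ as a guarded second-order quantification over the child relation and relativises $\Diamond$ to a fresh propositional symbol marking the current submodel. As $\msokt$ is a fragment of monadic second-order logic over finite trees, Rabin's theorem~\cite{Rabin69} (with the bound of~\cite{Schmitz16}) gives $\satproblem{\msokt} \in \tower$, hence $\satproblem{\modallogicSC} \in \tower$.

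For the lower bound, the plan is to exhibit, for every fixed $k \geq 2$, a polynomial-time reduction from $\tiling_k$ to $\satproblem{\modallogicSC}$, given by a single procedure uniform in $k$: from an instance $\pair{\cTT}{\atile}$ of $\tiling_k$ (with parameter $n$) it outputs a formula $\atiling{k}{\cTT,\atile}$, of size exponential in $k$ and polynomial in $n$ and $\card{\cT}$, that is satisfiable iff $\pair{\cTT}{\atile}$ has a solution. Since the procedure is uniform in $k$ and its blow-up is elementary, composing it with the standard encoding of an arbitrary \tower computation as a $\tiling_{h(n)}$-instance for elementary $h$ yields an elementary reduction witnessing \tower-hardness; this is the usual transfer argument (cf.~\cite{Schmitz16} and the analogous step in~\cite{Bednarczyk&Demri19}).

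The engine of the reduction is already in place. By \Cref{lemma:tower-hardness-inductive}, the elementary-size formula $\init{k} \land \complete{k}$ forces the $\avartree$-children of the evaluation world $\aworld$ to be exactly $\amapter(k,n)$ worlds that encode, bijectively, the values of $\interval{0}{\amapter(k,n)-1}$. I would first upgrade this so that each $\avartree$-child $\aworld'$ of $\aworld$ encodes \emph{two} numbers $\nbexp{\aworld'}{\cH},\nbexp{\aworld'}{\cV} \in \interval{0}{\amapter(k,n)-1}$, using two fresh copies $\treeval_\cH,\treeval_\cV$ of the bit-proposition $\treeval$ on the $\avartree$-children of $\aworld'$; since both values lie in the same interval, $\aworld'$ still needs only $\amapter(k{-}1,n)$ $\avartree$-children, so $\complete{k-1}$ and the entire scaffolding below depth one is reused verbatim, and only the top-level conjuncts $\pZ{k},\pU{k},\pC{k}$ are split into $\cH$- and $\cV$-variants (e.g.\ $\pU{k}$ becomes ``any two distinct $\avartree$-children of $\aworld$ differ in their $\cH$-value or in their $\cV$-value''). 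This produces a formula $\agrid{k}{\cTT}$ whose models present a bijection between $\arelation(\aworld)$ and $\interval{0}{\amapter(k,n){-}1}^{2}$. To it I would conjoin: (i)~a labelling formula, using one fresh proposition per tile type together with $\Box$-constraints forcing every $\avartree$-child of $\aworld$ to satisfy exactly one of them; (ii)~condition~\ref{tiling_c:1}, expressed as ``the unique $\avartree$-child of $\aworld$ whose $\cH$- and $\cV$-values are both $0$ satisfies $\atile$'', the $(0,0)$-child being pinned down by the $\cH/\cV$-analogues of the machinery behind $\pZ{k}$; and (iii)~condition~\ref{tiling_c:2}, expressed as ``whenever two $\avartree$-children of $\aworld$ are $\cH$-successors with equal $\cV$-value, their tile pair lies in $\cH$'', and symmetrically for $\cV$. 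For (iii), the crucial observation is that comparing grid coordinates is done by precisely the $\separate$-based \emph{robust} gadgets already built, i.e.\ $\fork{\aaux}{\aauxbis}{1}{k}$, $\successor{\aaux}{\aauxbis}{k}$ and $\equivalent{\aaux}{\aauxbis}{i}{j}$, now applied to the $\treeval_\cH$- (resp.\ $\treeval_\cV$-) bits; their soundness rests on the invariance property of $\complete{k}$, namely that no submodel $\amodel' \sqsubseteq \amodel$ can alter the number a surviving world encodes.

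Correctness then consists of two routine directions: from a solution $\tau$ one builds the pointed forest prescribed by the recipe behind $\complete{k}$ and labels the child encoding position $(i,j)$ by the proposition of $\tau(i,j)$, whence $\atiling{k}{\cTT,\atile}$ holds by \Cref{lemma:tower-hardness-inductive} and the added conjuncts; conversely, from any model of $\atiling{k}{\cTT,\atile}$ the $\avartree$-children of $\aworld$ and their tile-propositions read off a map $\interval{0}{\amapter(k,n){-}1}^{2} \to \cT$ that conjuncts~(ii)--(iii) force to be a solution. A final check confirms that $\fsize{\atiling{k}{\cTT,\atile}}$ is elementary in the input, so the reduction is well within the required resource bounds. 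The main obstacle of the whole development — controlling exactly which edges survive an application of $\separate$, so that the numbers worlds encode are invariant under taking submodels, which is what makes the fork and comparison formulae sound — has already been confronted in the design of $\complete{j}$ and settled by \Cref{lemma:tower-hardness-base-case,lemma:tower-hardness-inductive}; what remains here, the coordinate duplication and the insertion of the tiling constraints, is careful but unremarkable bookkeeping.
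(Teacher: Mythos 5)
Your proposal matches the paper's own proof: the \tower membership is obtained exactly as in the paper via the embedding into \msokt and Rabin's theorem, and the hardness is the paper's uniform reduction from $\tiling_k$ ($k \geq 2$), reusing $\complete{k}$ from \Cref{lemma:tower-hardness-inductive}, encoding two coordinates per $\avartree$-child with $\treeval_\cH,\treeval_\cV$, adapting $\pZ{k},\pU{k},\pC{k}$, and expressing \ref{tiling_c:1} and \ref{tiling_c:2} with the robust $\fork{\aaux}{\aauxbis}{1}{k}$, $\successor{\aaux}{\aauxbis}{k}$ and equality gadgets whose soundness under $\separate$ is the point of the $\complete{j}$ construction. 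The argument is correct and essentially identical in structure to \Cref{subsection:tiling-grid} and the corresponding appendix lemmas.
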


\section{\modallogicSC  Strictly Less Expressive Than \GML}
\label{section-expressivity-SC}

Below, we focus on the expressivity of \modallogicSC. We first show \modallogicSC $\preceq$ \GML and
then we prove the strictness of the inclusion.
The former result takes advantage of the notion of g-bisimulation,
i.e. the underlying structural indistinguishability relation of \GML, studied in~\cite{DeRijke00}.
To show ${\modallogicSC \prec \GML}$,
we define an ad hoc notion of Ehrenfeucht-Fra\"iss\'e games for \modallogicSC,
see e.g. classical definitions in~\cite{Libkin04} and
similar approaches in~\cite{DawarGG04,Calcagno10}.
Then, we design a simple formula in \GML that cannot be expressed in~\modallogicSC.

\subsection{\modallogicSC is not more expressive than \GML}
\label{section-sc-into-gml}

To establish that \modallogicSC $\preceq$ \GML, we proceed as in \Cref{section-cc-less-gml}. In fact, by \Cref{lemma:SC-CC-tautology},
given $\aformula_1$, $\aformula_2$ in \GML, the formula $\aformula_1 \separate \aformula_2$ is equivalent to $\SabDiamond(\aformula_1 \chopop \aformula_2)$. Moreover, we know that given
$\aformula_1$, $\aformula_2$ in \GML,
$\aformula_1 \chopop \aformula_2$ is equivalent to some formula in \GML, as shown in~\Cref{section-CC}.
So,
to prove that \modallogicSC $\preceq$ \GML by applying the proof schema of \Cref{theorem:clean-cut}, it is sufficient to show that
given $\aformula$ in \GML,
 there is
$\aformulabis$ in \GML such that $\SabDiamond \aformula \equiv \aformulabis$.
To do so, we rely on the indistinguishability relation of \GML, called g-bisimulation~\cite{DeRijke00}.


A g-bisimulation is a refinement of the classical back-and-forth conditions
of a bisimulation
(see e.g.~\cite{Blackburn&deRijke&Venema01}), tailored towards capturing graded modalities.
It relates models with similar structural properties, but up to  parameters~${m,k\in \Nat}$
responsible for the modal degree and the graded rank,
respectively.
The following invariance result holds: g-bisimilar models are
modally equivalent in \GML (up to formulae of modal degree $m$ and graded rank at most $k$).
For simplicity, we present the construction of the above-mentioned formula~$\aformulabis$ by directly using
the notion of model equivalence, without going explicitly through g-bisimulations.

Given $m, k \in \Nat$ and $\apropset \subseteq_{\fin} \varprop$, we write $\GML[m,k,\apropset]$ to denote
the set of \GML formulae $\aformulabis$ having $\md{\aformulabis} \leq m$, $\gr{\aformulabis} \leq k$ and propositional variables from $\apropset$.
$\GML[m,k,\apropset]$
is finite up to logical equivalence~\cite{DeRijke00}.
Given pointed
forests
$\pair{\amodel}{\aworld}$ and $\pair{\amodel'}{\aworld'}$,
we write
$\pair{\amodel}{\aworld} \typeeqclass{m,k}{\apropset}\! \pair{\amodel'}{\aworld'}$
whenever
$\pair{\amodel}{\aworld}$ and $\pair{\amodel'}{\aworld'}$ are \defstyle{$\GML[m,k,\apropset]$-indistinguishable},
i.e.\
for every $\aformulabis$ in $\GML[m,k,\apropset]$, 
$\amodel, \aworld \models \aformulabis$
iff  $\amodel', \aworld' \models \aformulabis$.
We write $\atypeset{m,k}{\apropset}$ to denote the quotient set induced by the equivalence relation $\typeeqclass{m,k}{\apropset}$.
As~$\GML[m,k,\apropset]$ is finite up to logical equivalence,
we get that $\atypeset{m,k}{\apropset}$ is finite.

To establish that \GML is closed under $\SabDiamond$, we
show that there is a function $\amap : \Nat^2 \to \Nat$ such that for all
$m, k \in \Nat$ and $\apropset \subseteq_{\fin} \varprop$, if two models
are in the same equivalence class of $\typeeqclass{m,\amap(m,k)}{\apropset}$,
then they satisfy the same formulae of the form $\SabDiamond \aformula$, where $\aformula$ is in $\GML[m,k,\apropset]$.
By standard arguments and using the fact that $\GML[m,\amap(m,k),\apropset]$ is finite up to logical equivalence, we then conclude that $\SabDiamond \aformula$ is equivalent to a formula in $\GML[m,\amap(m,k),\apropset]$.
Similar approaches are followed in~\cite{Mansutti18,Demri&Fervari&Mansutti19,Echenim&Iosif&Peltier19}.
As we are not interested in the size of the equivalent formula, we can simply use the cardinality of $\atypeset{m,k}{\apropset}$ in order to inductively define a suitable function:
\begin{nscenter}
\hfill$\amap(0,k) \egdef k$,\hfill $\amap(m+1,k) \egdef k \times (\card{\atypeset{m,\amap(m,k)}{\apropset}}+1)$.\hfill\,
\end{nscenter}
In conformity with the results in~\Cref{section-tower-SC}, the map $\amap$ can be shown to be a non-elementary function.
To prove that $\amap$ satisfies the required properties, we start by showing a technical lemma which essentially formalises a simulation argument
on the relation $\typeeqclass{m,\amap(m,k)}{\apropset}$
with respect to the submodel relation.
By taking submodels as with the $\SabDiamond$
operator, equivalence in $\GML$ is preserved.

\begin{lemma}\label{lemma:sabotage-elimination}
Let
$\pair{\amodel}{\aworld}\,{\typeeqclass{m,\amap(m,k)}{\apropset}}\, \pair{\amodel'}{\aworld'}$
where
$m,k\in \Nat$, $\apropset \subseteq_\fin \varprop$,
$\amodel = \triple{\worlds}{\arelation}{\avaluation}$ and $\amodel' = \triple{\worlds'}{\arelation'}{\avaluation'}$.
Let $\arelation_1 \subseteq \arelation$.
There is $\arelation_1' \subseteq \arelation'$ s.t.\
$\pair{\triple{\worlds}{\arelation_1}{\avaluation}}{\aworld}
\,{\typeeqclass{m,k}{\apropset}}\,
\pair{\triple{\worlds'}{\arelation_1'}{\avaluation'}}{\aworld'}$
and
if $\arelation_1(\aworld) = \arelation(\aworld)$, then
$\arelation_1'(\aworld') = \arelation'(\aworld')$.
\end{lemma}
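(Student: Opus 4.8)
The plan is to prove the statement by induction on $m$ (with $\apropset$ fixed), the induction hypothesis being the full statement at modal degree $m$. I will use two standard facts about graded modal logic~\cite{DeRijke00}: every $\typeeqclass{m,l}{\apropset}$-class of pointed forests admits a characteristic formula $\chi$ with $\md{\chi}\leq m$, $\gr{\chi}\leq l$ and propositions from $\apropset$; and, from the definition of $\amap$, $k\leq\amap(m,k)\leq\amap(m{+}1,k)$. For the base case $m=0$ the hypothesis reduces to ``$\aworld$ and $\aworld'$ agree on the atomic propositions of $\apropset$'', and since modal-degree-$0$ formulae never inspect the accessibility relation, $\arelation_1'\egdef\arelation'$ works: the restricted pointed forests still agree on $\GML[0,k,\apropset]$, and $\arelation_1'(\aworld')=\arelation'(\aworld')$ holds trivially.

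For the inductive step $m\to m{+}1$, put $K\egdef\amap(m{+}1,k)=k\cdot(\card{\atypeset{m,\amap(m,k)}{\apropset}}{+}1)$ and $S\egdef\card{\atypeset{m,k}{\apropset}}$; since $\amap(m,k)\geq k$, the class space $\atypeset{m,\amap(m,k)}{\apropset}$ refines $\atypeset{m,k}{\apropset}$, so $K\geq kS+k$. For each $\typeeqclass{m,\amap(m,k)}{\apropset}$-class $\theta$ and each $j\leq K$, the formula $\Gdiamond{\geq j}{\chi_\theta}$ lies in $\GML[m{+}1,K,\apropset]$ (as $\gr{\chi_\theta}\leq\amap(m,k)\leq K$), so from $\pair{\amodel}{\aworld}\typeeqclass{m+1,K}{\apropset}\pair{\amodel'}{\aworld'}$ we obtain $\min(c_\theta,K)=\min(c'_\theta,K)$ for every $\theta$, where $c_\theta$ (resp.\ $c'_\theta$) counts the children of $\aworld$ in $\amodel$ (resp.\ of $\aworld'$ in $\amodel'$) of class $\theta$ --- the class of a child being computed inside its own subtree, which is sound because $\GML$ formulae are downward-looking and $\amodel,\amodel'$ are forests, so distinct children have disjoint subtrees, each unaffected by relation changes elsewhere.

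Now fix $\arelation_1\subseteq\arelation$ and work class by class. For each $\theta$ let $A_\theta$ be the set of class-$\theta$ children of $\aworld$ surviving in $\arelation_1$, and for each $\typeeqclass{m,k}{\apropset}$-class $\rho$ let $n_{\theta,\rho}$ be the number of $u\in A_\theta$ with $\pair{\triple{\worlds}{\arelation_1}{\avaluation}}{u}$ of class $\rho$. I choose, among the $c'_\theta$ class-$\theta$ children of $\aworld'$, a set $A'_\theta$ with a prescribed $\typeeqclass{m,k}{\apropset}$-class for each member, so that the resulting $\rho$-profile of $A'_\theta$ capped at $k$ equals $\bigl(\min(n_{\theta,\rho},k)\bigr)_\rho$, and so that $A'_\theta$ comprises \emph{all} class-$\theta$ children of $\aworld'$ whenever $\arelation_1(\aworld)=\arelation(\aworld)$. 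A target class $\rho$ with $n_{\theta,\rho}\geq1$ is realisable on any class-$\theta$ child $v$ of $\aworld'$ via the induction hypothesis: taking $u\in A_\theta$ of restricted class $\rho$ gives $\pair{\amodel}{u}\typeeqclass{m,\amap(m,k)}{\apropset}\pair{\amodel'}{v}$, and the IH applied with $\arelation_1$ restricted to $u$'s subtree yields a restriction of $\arelation'$ on $v$'s subtree making $v$ of class $\rho$. If $\arelation_1(\aworld)\neq\arelation(\aworld)$, set $\card{A'_\theta}=\sum_\rho\min(n_{\theta,\rho},k)$: this is $\leq kS<K\leq c'_\theta$ when $c_\theta\geq K$, and $\leq\card{A_\theta}\leq c_\theta=c'_\theta$ otherwise (if either of $c_\theta,c'_\theta$ is below $K$ they coincide). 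If $\arelation_1(\aworld)=\arelation(\aworld)$, all $c'_\theta$ children must be kept: when $c_\theta<K$ then $c_\theta=c'_\theta=\card{A_\theta}$ and we match bijectively; when $c_\theta\geq K$, then $c_\theta>kS$ forces some $\rho^\star$ with $n_{\theta,\rho^\star}\geq k$, and we assign class $\rho$ to $\min(n_{\theta,\rho},k)$ children for $\rho\neq\rho^\star$ and class $\rho^\star$ to the remaining $c'_\theta-\sum_{\rho\neq\rho^\star}\min(n_{\theta,\rho},k)\geq c'_\theta-kS\geq k$ children.

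Finally assemble $\arelation_1'$ from the subtree-restrictions obtained above, discarding every $\arelation'$-child of $\aworld'$ outside $\bigcup_\theta A'_\theta$ together with its subtree and leaving $\arelation'$ unchanged outside the subtree of $\aworld'$; then $\arelation_1'\subseteq\arelation'$, and $\arelation_1'(\aworld')=\arelation'(\aworld')$ when $\arelation_1(\aworld)=\arelation(\aworld)$. To check $\pair{\triple{\worlds}{\arelation_1}{\avaluation}}{\aworld}\typeeqclass{m,k}{\apropset}\pair{\triple{\worlds'}{\arelation_1'}{\avaluation'}}{\aworld'}$, observe that a $\GML[m{+}1,k,\apropset]$ formula is a Boolean combination of $\apropset$-literals (which agree at $\aworld$ and $\aworld'$) and formulae $\Gdiamond{\geq j}{\aformulabis}$ with $j\leq k$ and $\aformulabis\in\GML[m,k,\apropset]$; the number of $\arelation_1$-children of $\aworld$ (resp.\ $\arelation_1'$-children of $\aworld'$) satisfying $\aformulabis$ equals $\sum_\theta\sum_{\rho}n_{\theta,\rho}$ (resp.\ $\sum_\theta\sum_\rho n'_{\theta,\rho}$), the inner sums ranging over the classes $\rho$ on which $\aformulabis$ holds, and since $\min(n_{\theta,\rho},k)=\min(n'_{\theta,\rho},k)$ for all $\theta,\rho$, a routine case analysis gives that these two counts coincide once capped at $j$; hence $\Gdiamond{\geq j}{\aformulabis}$ has the same truth value at $\aworld$ and $\aworld'$, and so does the whole formula. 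The main obstacle is exactly this combinatorial bookkeeping --- pruning the children of $\aworld'$ to match the $\typeeqclass{m,k}{\apropset}$-profile of the survivors of $\aworld$ only up to the cap $k$, while never losing a child of $\aworld'$ if $\arelation_1$ loses none of $\aworld$'s --- and the definition $\amap(m{+}1,k)=k\cdot(\card{\atypeset{m,\amap(m,k)}{\apropset}}+1)$ is calibrated precisely so that the at most $\card{\atypeset{m,k}{\apropset}}$ target classes, each needing up to $k$ witnesses, plus the extra $k$ witnesses provided by the ``$+1$'' that $\rho^\star$ absorbs, all fit among the available class-$\theta$ children of $\aworld'$.
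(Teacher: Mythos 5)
Your proof is correct and follows essentially the same route as the paper's: induction on $m$, the capped cardinality equality $\min(c_\theta,\amap(m{+}1,k))=\min(c'_\theta,\amap(m{+}1,k))$ for each refined class of children, the inductive hypothesis used to re-type the class-$\theta$ children of $\aworld'$, and the pigeonhole argument in which the ``$+1$'' in the definition of $\amap$ absorbs the overflow/discarded children. The only difference is presentational: the paper extracts the counting invariants and performs the final verification by explicitly building g-bisimulations, whereas you work directly with characteristic formulae and child counts, which is equivalent by the results of de~Rijke that the paper itself invokes.
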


The proof of Lemma~\ref{lemma:sabotage-elimination} is by induction on $m$.
The last condition about~$\arelation_1(\aworld) = \arelation(\aworld)$
will serve in the proof of Lemma~\ref{lemma:sabotage-equivalent-formula},
as it
allows us to capture the semantics of~$\SabDiamond$, by preserving
the children of the world~$\aworld'$.
In the proof,
we
rely on the properties of g-bisimulations~\cite{DeRijke00}
to
define a binary relation $\leftrightarrow$
between worlds of $\arelation(\aworld)$ and~$\arelation'(\aworld')$.
Every $\aworld_1 \leftrightarrow \aworld_1'$ is such that
$\pair{\amodel}{\aworld_1}
\,{\typeeqclass{m-1,\amap(m-1,k)}{\apropset}}\,
\pair{\amodel'}{\aworld_1'}$.
The operator~$\SabDiamond$ does not necessarily preserve the children of $\aworld_1$ and $\aworld_1'$,
so that the induction hypothesis, naturally defined from the statement of~\Cref{lemma:sabotage-elimination},
is applied on models where the condition $\arelation_1(\aworld_1) = \arelation(\aworld_1)$ may not hold.
We show that for all~$\arelation_1\subseteq \arelation$,
it is possible to construct~$\arelation_1' \subseteq \arelation'$ such that, for all~$\aworld_1 \leftrightarrow \aworld_1'$, $\pair{\triple{\worlds}{\arelation_1}{\avaluation}}{\aworld_1}$
$\typeeqclass{m-1,k}{\apropset}$
$\pair{\triple{\worlds'}{\arelation_1'}{\avaluation'}}{\aworld'_1}$.
The result is then lifted to $\pair{\triple{\worlds}{\arelation_1}{\avaluation}}{\aworld}$
$\typeeqclass{m,k}{\apropset}$\!
$\pair{\triple{\worlds'}{\arelation_1'}{\avaluation'}}{\aworld'}$, again thanks to the properties of the g-bisimulation.

Intuitively, Lemma~\ref{lemma:sabotage-elimination} states that given two models satisfying
the same formulae up to the parameters $m$ and $\amap(m,k)$, we can extract submodels
satisfying the same formulae up to $m$ and $k$ (reduced  graded rank).
This allows us to conclude
that  if $\aformula$ is in \GML, there is some \GML formula equivalent to
$\SabDiamond \aformula$ (Lemma~\ref{lemma:sabotage-equivalent-formula}).
In other words, the  operator $\SabDiamond$ can be eliminated to obtain a $\GML$ formula.
This, together with
\Cref{lemma:SC-CC-tautology} and \Cref{theorem:clean-cut} entail
\modallogicSC $\preceq$ \GML.
\begin{lemma}\label{lemma:sabotage-equivalent-formula}
For every $\aformula \in \GML[m,k,\apropset]$
there is a formula $\aformulabis \in \GML[m,\amap(m,k),\apropset]$
such that \
$\SabDiamond \aformula \equiv \aformulabis$.
\end{lemma}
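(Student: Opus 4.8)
The plan is to prove that $\SabDiamond\aformula$ is invariant under $\typeeqclass{m,\amap(m,k)}{\apropset}$ whenever $\aformula \in \GML[m,k,\apropset]$, and then to obtain $\aformulabis$ as a finite disjunction of characteristic formulae of the $\typeeqclass{m,\amap(m,k)}{\apropset}$-classes. Since $\GML[m,\amap(m,k),\apropset]$ is finite up to logical equivalence~\cite{DeRijke00}, I would first fix representatives $\psi_1,\dots,\psi_N$ of this set modulo $\equiv$, and to every class $C \in \atypeset{m,\amap(m,k)}{\apropset}$ associate its \emph{characteristic formula} $\chi_C \egdef \bigwedge\set{\psi_i \mid \amodel,\aworld \models \psi_i} \land \bigwedge\set{\lnot\psi_i \mid \amodel,\aworld \not\models \psi_i}$, where $\pair{\amodel}{\aworld}$ is any element of $C$ (the choice is irrelevant, as all elements of $C$ agree on every $\psi_i$). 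Boolean combinations raise neither the modal degree, the graded rank, nor the set of propositional variables, so $\chi_C \in \GML[m,\amap(m,k),\apropset]$, and for every pointed forest $\pair{\amodel'}{\aworld'}$ one has $\amodel',\aworld' \models \chi_C$ iff $\pair{\amodel'}{\aworld'} \in C$.

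The core step is the invariance claim: if $\pair{\amodel}{\aworld} \typeeqclass{m,\amap(m,k)}{\apropset} \pair{\amodel'}{\aworld'}$ and $\amodel,\aworld \models \SabDiamond\aformula$, then $\amodel',\aworld' \models \SabDiamond\aformula$. Writing $\amodel = \triple{\worlds}{\arelation}{\avaluation}$ and $\amodel' = \triple{\worlds'}{\arelation'}{\avaluation'}$, unfolding the semantics of $\SabDiamond$ gives some $\arelation_1 \subseteq \arelation$ with $\arelation_1(\aworld) = \arelation(\aworld)$ and $\triple{\worlds}{\arelation_1}{\avaluation},\aworld \models \aformula$. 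I would then apply \Cref{lemma:sabotage-elimination} to $\arelation_1$, obtaining $\arelation_1' \subseteq \arelation'$ with $\triple{\worlds}{\arelation_1}{\avaluation},\aworld \typeeqclass{m,k}{\apropset} \triple{\worlds'}{\arelation_1'}{\avaluation'},\aworld'$ and, thanks to the last clause of that lemma together with $\arelation_1(\aworld) = \arelation(\aworld)$, also $\arelation_1'(\aworld') = \arelation'(\aworld')$. As $\aformula \in \GML[m,k,\apropset]$, the $\typeeqclass{m,k}{\apropset}$-indistinguishability of the two submodels transfers $\aformula$, so $\triple{\worlds'}{\arelation_1'}{\avaluation'},\aworld' \models \aformula$; since $\arelation_1' \subseteq \arelation'$ preserves the children of $\aworld'$, the semantics of $\SabDiamond$ yields $\amodel',\aworld' \models \SabDiamond\aformula$. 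By symmetry, $\SabDiamond\aformula$ is $\typeeqclass{m,\amap(m,k)}{\apropset}$-invariant; in particular it makes sense to write $C \models \SabDiamond\aformula$ for $C \in \atypeset{m,\amap(m,k)}{\apropset}$, meaning that some (equivalently, every) element of $C$ satisfies $\SabDiamond\aformula$.

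With this in hand I would set $\aformulabis \egdef \bigvee\set{\chi_C \mid C \in \atypeset{m,\amap(m,k)}{\apropset} \text{ and } C \models \SabDiamond\aformula}$. This disjunction is finite since $\atypeset{m,\amap(m,k)}{\apropset}$ is finite, and it lies in $\GML[m,\amap(m,k),\apropset]$ because each disjunct does. For an arbitrary $\pair{\amodel'}{\aworld'}$, let $C$ be its $\typeeqclass{m,\amap(m,k)}{\apropset}$-class; then $\amodel',\aworld' \models \aformulabis$ iff $C \models \SabDiamond\aformula$ iff $\amodel',\aworld' \models \SabDiamond\aformula$. Hence $\SabDiamond\aformula \equiv \aformulabis$, which together with \Cref{lemma:SC-CC-tautology} and \Cref{theorem:clean-cut} also yields \modallogicSC $\preceq$ \GML.

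The only genuine difficulty is the invariance claim, and it reduces entirely to \Cref{lemma:sabotage-elimination}: the crucial feature is its extra conclusion that preserving all children of the current world on the source side forces the same on the target side, which is exactly what the condition $\arelation'(\aworld) = \arelation(\aworld)$ in the semantics of $\SabDiamond$ demands. Everything else is the routine ``finitely many types, take the disjunction of their characteristic formulae'' packaging, the only subtlety being that these characteristic formulae stay within graded rank $\amap(m,k)$ and use no propositional variable outside $\apropset$.
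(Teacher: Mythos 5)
Your proposal is correct and follows essentially the same route as the paper: both define $\aformulabis$ as the finite disjunction of formulae characterising the $\typeeqclass{m,\amap(m,k)}{\apropset}$-classes on which $\SabDiamond\aformula$ holds, and both reduce the only nontrivial point (invariance of $\SabDiamond\aformula$ under $\typeeqclass{m,\amap(m,k)}{\apropset}$) to \Cref{lemma:sabotage-elimination}, crucially using its clause that $\arelation_1(\aworld)=\arelation(\aworld)$ forces $\arelation_1'(\aworld')=\arelation'(\aworld')$. The only superficial difference is that you state and prove the invariance claim up front while the paper applies the same transfer argument inside the converse direction of the equivalence proof.
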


\subsection{Showing  \modallogicSC $\prec$ \GML with EF games for \modallogicSC}
\label{section-sc-less-gml}
We tackle the problem of showing that
\modallogicSC is strictly less expressive than \GML.
To do so, we adapt the notion of Ehrenfeucht-Fra\"iss\'e games (EF games, in short)~\cite{Libkin04} to \modallogicSC, and use it to design
a \GML formula  that is not expressible in \modallogicSC.
We write $\aformulaerankset{m,s}{\apropset}$ for the set of formulae~$\aformula$ of \modallogicSC having $\md{\aformula} \leq m$, at most $s$ nested
$\separate$, and atomic propositions from ${\apropset \subseteq_{\fin} \varprop}$.
It is easy to see that $\aformulaerankset{m,s}{\apropset}$ is finite up to logical equivalence.

%
%

%
%
We introduce the EF games for \modallogicSC.
A game is played between two players: the \defstyle{spoiler} and the \defstyle{duplicator}.
A game state is a triple made of two 
pointed forests $\pair{\amodel}{\aworld}$ and $\pair{\amodel'}{\aworld'}$ and a rank $\triple{m}{s}{\apropset}$, where $m,s \in \Nat$ and $\apropset \subseteq_{\fin} \varprop$.
The goal of the spoiler is to show that the two models are different. The goal of the duplicator is to counter the spoiler and to show that the two models are similar.
Two models are different
whenever
there is
$\aformula \in \aformulaerankset{m,s}{\apropset}$ that is satisfied by only one of the two models.
The EF games for \modallogicSC are formally defined in ~\Cref{figure:EF-games}.
The exact correspondence between the game and the logic is formalised in \Cref{lemma:EF-game-sound-complete}.

\begin{figure}
  \footnotesize
\begin{algorithmic}
  \itemsep 0cm
\State \textbf{Game on} {[$\pair{\amodel_1 {=} \triple{\worlds_1}{\arelation_1}{\avaluation_1}}{\aworld_1}$, $\pair{\amodel_2 {=} \triple{\worlds_2}{\arelation_2}{\avaluation_2}}{\aworld_2}$, $\triple{m}{s}{\apropset}$]\textbf{.}}
\vspace{2pt}
\hrule
\vspace{4pt}
\State \textbf{if} there is $\avarprop \in \apropset$ s.t. $\aworld_1 \in \avaluation_1(\avarprop)$ iff $\aworld_2 \notin \avaluation_2(\avarprop)$ \text{then} the spoiler wins.
\State \textbf{else} the spoiler chooses $i\, {\in}\, \{1,2\}$ and plays on $\amodel_i$.
The duplicator replies on $\amodel_j$ where $j \neq i$.
The spoiler must choose one of the following moves, otherwise the duplicator wins:
  \State {\textbf{modal move}}: if $m \geq 1$ and $\arelation_i(\aworld_i) \neq \emptyset$ then the spoiler \textbf{can} choose to play a modal move by selecting an element $\aworld_i' \in \arelation_i(\aworld_i)$. Then,
      \State \hskip0.5em \labelitemi\ the duplicator must reply with a $\aworld_j' \in \arelation_j(\aworld_j)$ (else, the spoiler wins);
      \State \hskip0.5em \labelitemi\ the game continues on [$\pair{\amodel_1}{\aworld_1'}$, $\pair{\amodel_2}{\aworld_2'}$, $\triple{m-1}{s}{\apropset}$].
  \State {\textbf{spatial move}}: if $s \geq 1$ then the spoiler \textbf{can} choose to play a spatial move by selecting two finite forests $\amodel_i^1$ and $\amodel_i^2$ s.t.\ $\amodel_i^1 + \amodel_i^2 = \amodel_i$. Then,
      \State \hskip0.5em \labelitemi\ the duplicator replies with two forests $\amodel_j^1$ and $\amodel_j^2$ s.t.\ $\amodel_j^1 + \amodel_j^2 = \amodel_j$;
      \State \hskip0.5em \labelitemi\ The game continues on [$\pair{\amodel_1^k}{\aworld_1}$, $\pair{\amodel_2^k}{\aworld_2}$, $\triple{m}{s-1}{\apropset}$], where
      \State \hskip1.4em $k \in \{1,2\}$ is chosen by the spoiler.
\end{algorithmic}
\hrule
\caption{Ehrenfeucht-Fra\"iss\'e games for \modallogicSC}\label{figure:EF-games}
\end{figure}
Using the standard definitions in~\cite{Libkin04},
the duplicator has a \defstyle{winning strategy} for the game
$\triple{\pair{\amodel}{\aworld}}{\pair{\amodel'}{\aworld'}}{\triple{m}{s}{\apropset}}$
if she can play in a way that guarantees her to win regardless how the spoiler plays.
When this is the case, we write $\pair{\amodel}{\aworld}\,{\gamerel{m,s}{\apropset}}\, \pair{\amodel'}{\aworld'}$.
Similarly, the spoiler has a \defstyle{winning strategy}, written
$\pair{\amodel}{\aworld} {\neggamerel{m,s}{\apropset}} \pair{\amodel'}{\aworld'}$,
if he can play in a way that guarantees him to win, regardless how the duplicator plays.
%
%
\Cref{lemma:EF-game-sound-complete} guarantees that the games are well-defined.

\begin{lemma}\label{lemma:EF-game-sound-complete}
$\pair{\amodel}{\aworld} {\neggamerel{m,s}{\apropset}} \pair{\amodel'}{\aworld'}$ iff there is a formula $\aformula$ in $\aformulaerankset{m,s}{\apropset}$ such that
$\amodel,\aworld \models \aformula$ and \ $\amodel',\aworld' \not \models \aformula$.
\end{lemma}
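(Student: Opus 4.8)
The plan is to prove both directions simultaneously by induction on $m+s$, exploiting that the EF game of \Cref{figure:EF-games} is symmetric under exchanging $\amodel_1$ and $\amodel_2$ (mirroring the spoiler's strategy), so the right-to-left half of the equivalence is also symmetric (replace a witness by its negation). In the \textbf{base case} $m=s=0$ no move is available, so the spoiler wins iff the atomic test fires, i.e.\ iff $\aworld$ and $\aworld'$ disagree on some $\avarprop\in\apropset$; and a distinguishing formula in $\aformulaerankset{0,0}{\apropset}$ (a Boolean combination of atoms of $\apropset$) exists exactly under the same condition, witnessed by $\avarprop$ or $\lnot\avarprop$. For the \textbf{inductive step} one may assume the atomic test does not fire (otherwise both sides of the equivalence hold, as above), so the spoiler's first move, if any, is a modal or spatial move, and the standing assumption is that $\aworld,\aworld'$ agree on all atoms of $\apropset$.

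For the direction \emph{``distinguishing formula $\Rightarrow$ spoiler wins''} at rank $(m,s)$, I would fix $\aformula\in\aformulaerankset{m,s}{\apropset}$ with $\amodel,\aworld\models\aformula$ and $\amodel',\aworld'\not\models\aformula$ and run a secondary induction on the structure of $\aformula$. The cases $\top$ and $\avarprop$ are ruled out by the standing assumptions; $\lnot$ reduces to the same rank with the models swapped, handled by game symmetry; a conjunction $\aformula_1\land\aformula_2$ reduces to a conjunct failing at $\pair{\amodel'}{\aworld'}$, of the same rank. If $\aformula=\Diamond\aformulabis$ (so $m\geq 1$ and $\arelation(\aworld)\neq\emptyset$), the spoiler plays a modal move on $\amodel$, picking $\aworld_1\in\arelation(\aworld)$ with $\amodel,\aworld_1\models\aformulabis$; any reply $\aworld_1'\in\arelation'(\aworld')$ satisfies $\amodel',\aworld_1'\not\models\aformulabis$, and $\aformulabis\in\aformulaerankset{m-1,s}{\apropset}$, so by the primary induction hypothesis the spoiler wins the continued rank-$(m-1,s)$ game (and if $\aworld'$ has no children the duplicator cannot even reply). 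If $\aformula=\aformula_1\separate\aformula_2$ (so $s\geq 1$), the spoiler plays the spatial move given by a decomposition $\amodel=\amodel_1+\amodel_2$ witnessing $\aformula$; for any reply $\amodel'=\amodel'_1+\amodel'_2$ some $\aformula_k$ fails at $\pair{\amodel'_k}{\aworld'}$, with $\aformula_k\in\aformulaerankset{m,s-1}{\apropset}$, so the primary induction hypothesis again yields a win.

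For the converse \emph{``spoiler wins $\Rightarrow$ distinguishing formula''} at rank $(m,s)$, I would case on the spoiler's first winning move; an immediate win via the atomic test gives a distinguishing $\avarprop$ or $\lnot\avarprop$ directly. If the spoiler's winning first move is a modal move on $\amodel$ (the case on $\amodel'$ follows by symmetry and negation), picking $\aworld_1\in\arelation(\aworld)$, then for every child $\aworld_1'$ of $\aworld'$ the spoiler wins the continued rank-$(m-1,s)$ game, so by the induction hypothesis there is $\aformula_{\aworld_1'}\in\aformulaerankset{m-1,s}{\apropset}$ true at $\pair{\amodel}{\aworld_1}$ and false at $\pair{\amodel'}{\aworld_1'}$; since $\worlds'$ is finite, $\Diamond\bigwedge_{\aworld_1'\in\arelation'(\aworld')}\aformula_{\aworld_1'}$ (taking $\Diamond\top$ if $\aworld'$ is a leaf) lies in $\aformulaerankset{m,s}{\apropset}$ and distinguishes the two pointed forests. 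If the spoiler's winning first move is a spatial move on $\amodel$ via $\amodel=\amodel_1+\amodel_2$, I would let $\psi_i$ be the conjunction of fixed representatives of all formulas of $\aformulaerankset{m,s-1}{\apropset}$ satisfied by $\pair{\amodel_i}{\aworld}$ — a finite conjunction because $\aformulaerankset{m,s-1}{\apropset}$ is finite up to logical equivalence — so that $\psi_1\separate\psi_2\in\aformulaerankset{m,s}{\apropset}$ holds at $\pair{\amodel}{\aworld}$ by construction; if it also held at $\pair{\amodel'}{\aworld'}$ through some $\amodel'=\amodel'_1+\amodel'_2$, the spoiler would follow his winning strategy against this reply, win the continued rank-$(m,s-1)$ game on some component $k$, and the induction hypothesis would give a formula of $\aformulaerankset{m,s-1}{\apropset}$ true at $\pair{\amodel_k}{\aworld}$ (hence a conjunct of $\psi_k$) but false at $\pair{\amodel'_k}{\aworld'}$, contradicting $\pair{\amodel'_k}{\aworld'}\models\psi_k$. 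The main obstacle is exactly this spatial case in the converse: since the duplicator chooses the decomposition of $\amodel'$ and the spoiler may respond on a different component for each choice, one cannot follow a single subformula, and passing to the ``$\aformulaerankset{m,s-1}{\apropset}$-type'' formulas $\psi_i$ — finite only by finiteness up to logical equivalence — is what makes the argument uniform; the analogous device (a conjunction over all children of $\aworld'$) is needed in the modal case.
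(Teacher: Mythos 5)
Your proposal is correct and follows essentially the same route as the paper: soundness by structural induction on the distinguishing formula, completeness by induction on the rank $\triple{m}{s}{\apropset}$ with a case analysis on the spoiler's first winning move, where the spatial case uses exactly the paper's device of characteristic formulae (your $\psi_i$ is the finite-up-to-equivalence conjunction $\acharformulaSC{\amodel_i,\aworld}{m,s-1}{\apropset}$). The only, inessential, deviation is the modal case of the completeness direction: you take $\Diamond$ of a conjunction of IH-provided distinguishing formulas over the finitely many children of $\aworld'$, whereas the paper takes $\Diamond\acharformulaSC{\amodel,\aworld_1}{m-1,s}{\apropset}$ and derives a contradiction (invoking determinacy); both variants are sound.
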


\Cref{lemma:EF-game-sound-complete} is proven with standard arguments from~\cite{Libkin04},
for instance the left-to-right direction, i.e.\ the \emph{completeness of the game},
is by induction on the rank $\triple{m}{s}{\apropset}$.
%
\cut{
\Cref{lemma:EF-game-sound-complete} is proven with standard arguments from~\cite{Libkin04}. The right-to-left direction, i.e.\ the \emph{soundness of the game}, is by structural induction on $\aformula$. The other direction, i.e.\ the \emph{completeness of the game}, is by induction on the rank $\triple{m}{s}{\apropset}$.
%
%
}
Thanks to the EF games,  we are able  to find a \GML formula $\aformula$ that is not expressible in
\modallogicSC. By~\Cref{lemma:SC-CC-equiv-modal-rank-1} and as ${\modallogicCC \approx \GML}$, such a formula is necessarily of
modal degree at least 2. Happily, $\aformula = \Gdiamond{=2}\Gdiamond{=1}\true$ does the job and cannot be expressed in \modallogicSC.
For the proof, we show that for every rank $\triple{m}{s}{\apropset}$, there are two structures $\pair{\amodel}{\aworld}$ and $\pair{\amodel'}{\aworld'}$ such that
$\pair{\amodel}{\aworld} \gamerel{m,s}{\apropset} \pair{\amodel'}{\aworld'}$,
$\amodel,\aworld \models \aformula$ and ${\amodel',\aworld' \not\models \aformula}$.
The inexpressibility of~$\aformula$ then stems from~\Cref{lemma:EF-game-sound-complete}.
The two structures are represented below ($\pair{\amodel}{\aworld}$ on the left).
\begin{center}
\scalebox{0.68}{
\begin{tikzpicture}[baseline=-1cm]
\node (w) {$\aworld$};
\node[dot] (w1) [below left = 1cm and 2.2cm of w] {};
\node (h1) [right = 0.4cm of w1] {$\dots$};
\node[dot] (w2) [right = 0.3cm of h1] {};

\node (h2) [below = 0.9cm of w] {};
\node[dot] (w3) [left = 0.1cm of h2] {};
\node[dot] (w4) [right = 0.1cm of h2] {};

\node[dot] (w5) [below right = 1cm and 2.2cm of w] {};
\node (h3) [left = 0.3cm of w5] {$\dots$};
\node[dot] (w6) [left = 0.3cm of h3] {};

\node[dot] (ww1) [below = 0.5cm of w3] {};
\node[dot] (ww2) [below = 0.5cm of w4] {};

\node[dot] (ww3) [below left = 0.5cm and 0.15cm of w5] {};
\node[dot] (ww4) [below right = 0.5cm and 0.15cm of w5] {};

\node[dot] (ww5) [below left = 0.5cm and 0.15cm of w6] {};
\node[dot] (ww6) [below right = 0.5cm and 0.15cm of w6] {};

\draw[pto] (w) -- (w1);
\draw[pto] (w) -- (w2);

\draw[pto] (w) -- (w3);
\draw[pto] (w) -- (w4);

\draw[pto] (w) -- (w5);
\draw[pto] (w) -- (w6);

\draw[pto] (w3) -- (ww1);
\draw[pto] (w4) -- (ww2);

\draw[pto] (w5) -- (ww3);
\draw[pto] (w5) -- (ww4);

\draw[pto] (w6) -- (ww5);
\draw[pto] (w6) -- (ww6);

\node (h4) [below = 0.5cm  of w1] {};
\node (h5) [below = 0.5cm of w2] {};

\node (h8) [below left = 0.5cm and 0.1cm of w6] {};
\node (h9) [below right = 0.5cm and 0.1cm of w5] {};

\draw [decorate,decoration={brace,amplitude=10pt}]
(h5.south east) -- node[below = 0.3cm]
{{$\geq 2^s + 1$}} (h4.south west);

\draw [decorate,decoration={brace,amplitude=10pt}]
(h9.south east) -- node[below = 0.3cm]
{{$\geq 2^{s-1}(s+1)(s+2)+1$}} (h8.south west);
\end{tikzpicture}
}%
\!\!\!\!\!\!$\gamerel{m,s}{\apropset}$%
\scalebox{0.68}{
\begin{tikzpicture}[baseline=-1cm]
\node (w) {$\aworld'$};
\node[dot] (w1) [below left = 1cm and 1.8cm of w] {};
\node (h1) [right = 0.4cm of w1] {$\dots$};
\node[dot] (w2) [right = 0.3cm of h1] {};

\node[dot] (w3) [below = 0.9cm of w] {};

\node[dot] (w5) [below right = 1cm and 1.8cm of w] {};
\node (h3) [left = 0.3cm of w5] {$\dots$};
\node[dot] (w6) [left = 0.3cm of h3] {};

\node[dot] (ww1) [below = 0.5cm of w3] {};

\node[dot] (ww3) [below left = 0.5cm and 0.15cm of w5] {};
\node[dot] (ww4) [below right = 0.5cm and 0.15cm of w5] {};

\node[dot] (ww5) [below left = 0.5cm and 0.15cm of w6] {};
\node[dot] (ww6) [below right = 0.5cm and 0.15cm of w6] {};

\draw[pto] (w) -- (w1);
\draw[pto] (w) -- (w2);

\draw[pto] (w) -- (w3);

\draw[pto] (w) -- (w5);
\draw[pto] (w) -- (w6);

\draw[pto] (w3) -- (ww1);

\draw[pto] (w5) -- (ww3);
\draw[pto] (w5) -- (ww4);

\draw[pto] (w6) -- (ww5);
\draw[pto] (w6) -- (ww6);

\node (h4) [below = 0.5cm of w1] {};
\node (h5) [below = 0.5cm of w2] {};

\node (h8) [below left = 0.5cm and 0.1cm of w6] {};
\node (h9) [below right = 0.5cm and 0.1cm of w5] {};

\draw [decorate,decoration={brace,amplitude=10pt}]
(h5.south east) -- node[below = 0.3cm]
{{$\geq 2^s + 1$}} (h4.south west);

\draw [decorate,decoration={brace,amplitude=10pt}]
(h9.south east) -- node[below = 0.3cm]
{{$\geq 2^{s-1}(s+1)(s+2)+1$}} (h8.south west);
\end{tikzpicture}
}
\end{center}

In the following, we say that a world has \emph{type} $i$ if it has $i$ children.
As one can see in the figure above,
children of the current worlds~$\aworld$ and~$\aworld'$ are of three types:
$0$, $1$ or $2$.
When the spoiler performs a spatial move in the game, a world of type $i$
can take, in the submodels, a type between $0$ and $i$.
That is, the number of children of a world weakly monotonically decreases
when taking submodels.
This monotonicity, together with the finiteness of the game,
lead to bounds on the number of
children of each type, over which the duplicator is guaranteed to win.
For instance, the bound for worlds of type $2$ is given by the value $2^s(s+1)(s+2)$,
where~$s$ is the number of spatial moves in the game.
In the two presented pointed forests, one child of type~$0$ and one of type~$2$ are added
with respect to these bounds, so that the duplicator can make up for the different
numbers of children of type~$1$.


\begin{lemma}\label{lemma:GML-more-SC}
\modallogicSC cannot characterise the class of models satisfying the \GML formula $\Gdiamond{=2}\Gdiamond{=1}\true$.
\end{lemma}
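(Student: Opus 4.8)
The plan is to argue by contradiction through the Ehrenfeucht--Fra\"iss\'e games of~\Cref{figure:EF-games}, invoking \Cref{lemma:EF-game-sound-complete}. Set $\aformula \egdef \Gdiamond{=2}\Gdiamond{=1}\true$ and suppose, for contradiction, that there is a \modallogicSC formula $\aformulabis$ with $\aformulabis \equiv \aformula$. Let $m \egdef \md{\aformulabis}$, let $s$ be the maximal number of nested occurrences of $\separate$ in $\aformulabis$, and let $\apropset \subseteq_{\fin} \varprop$ collect the propositional variables of $\aformulabis$, so that $\aformulabis \in \aformulaerankset{m,s}{\apropset}$. By \Cref{lemma:EF-game-sound-complete}, it then suffices to exhibit two pointed forests $\pair{\amodel}{\aworld}$ and $\pair{\amodel'}{\aworld'}$ with $\amodel,\aworld \models \aformula$, $\amodel',\aworld' \not\models \aformula$, and $\pair{\amodel}{\aworld} \gamerel{m,s}{\apropset} \pair{\amodel'}{\aworld'}$: this contradicts the existence of a distinguishing formula of rank $\triple{m}{s}{\apropset}$, in particular $\aformulabis$.

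The two forests are the ones depicted above. Concretely, following the notion that a world has \emph{type} $i$ if it has exactly $i$ children, in $\amodel$ the root $\aworld$ has $n_0 = 2^s + 1$ children of type $0$ (leaves), exactly $2$ children of type $1$ (a node with a single leaf child), and $n_2 = \mathrm{thr}(s)+1$ children of type $2$ (a node with two leaf children), where $\mathrm{thr}$ is the threshold function (one may take $\mathrm{thr}(s) = 2^s(s+1)(s+2)$); $\amodel'$ is identical except that $\aworld'$ has exactly $1$ child of type $1$. No world lies in $\avaluation(\avarprop)$ for any $\avarprop \in \apropset$, so the propositional clause of the game can never be won by the spoiler at any reachable world. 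Since $\Gdiamond{=1}\true$ holds at a world precisely when that world has exactly one child, the children of $\aworld$ satisfying $\Gdiamond{=1}\true$ are exactly its two type-$1$ children, whence $\amodel,\aworld \models \aformula$; likewise $\amodel',\aworld' \not\models \aformula$, as $\aworld'$ has only one such child.

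It remains to describe a winning strategy for the duplicator on $[\pair{\amodel}{\aworld},\pair{\amodel'}{\aworld'},\triple{m}{s}{\apropset}]$. First I would fix a quantitative refinement of $\ML$-bisimulation as the invariant to be maintained: at the current pair of pointed forests, the children of the two current worlds can be related so that related children have the \emph{same type}, and the number of children of each type on the two sides is controlled by the threshold function of the number $s'$ of spatial moves still to be played --- in particular the unit discrepancy between the type-$1$ counts is absorbed by the one spare type-$0$ child and the one spare type-$2$ child present on each side. The key structural fact, already observed above, is monotonicity of degrees under $+$: when a forest is decomposed with $\separate$, a type-$i$ child can reappear only as a child of type $\le i$, or disappear from the root's children entirely. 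A \textbf{modal move} is then answered by moving to a related child of the same type; the invariant for the resulting pair is immediate, with $m$ decremented and $s'$ unchanged, and once $s' = 0$ the absence of grading in \modallogicSC makes any two empty-valued leaf-like forests equivalent. A \textbf{spatial move} $\amodel_i = \amodel_i^1 + \amodel_i^2$ is answered by splitting the duplicator's forest so as to mirror, child by child, what the spoiler did to degrees and to root-edges, converting spare type-$2$ (and, if needed, type-$0$) children to lower types to keep the type-$1$ counts matched; one then checks that the invariant survives on each half $[\pair{\amodel_1^k}{\aworld_1},\pair{\amodel_2^k}{\aworld_2},\triple{m}{s'-1}{\apropset}]$, which is exactly where the recurrence $\mathrm{thr}(s') \ge 2\,\mathrm{thr}(s'-1) + p(s')$ for a suitable polynomial $p$, and hence the choice of $n_2$, is used.

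The hard part will be the spatial-move case. A single $\separate$-split acts on the whole model at once: it partitions the root's outgoing edges into the two halves \emph{and} partitions each child's own edges, so the spoiler can, for instance, convert a type-$2$ child into a type-$1$ child of the root in one half while removing it from the other, thereby altering the type profile of the root's children. The duplicator must choose her split so that \emph{both} resulting pairs satisfy the invariant simultaneously, and since the spoiler then picks which half the game continues on, the budget of spare type-$2$ children consumed roughly doubles per round; the technical heart of the proof is the bookkeeping showing that $\mathrm{thr}(s)$-many type-$2$ children, plus a single spare, indeed suffice against every possible sequence of splits over all $s$ spatial moves, and similarly for type-$0$. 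Once the invariant is shown to be preserved by both kinds of move, the duplicator never reaches a losing position, so $\pair{\amodel}{\aworld} \gamerel{m,s}{\apropset} \pair{\amodel'}{\aworld'}$, contradicting $\aformulabis \equiv \aformula$. Hence no \modallogicSC formula is equivalent to $\Gdiamond{=2}\Gdiamond{=1}\true$, and \modallogicSC cannot characterise the class of models satisfying it.
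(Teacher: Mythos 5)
Your proposal is correct in outline and follows essentially the same route as the paper: the same two pointed forests (empty valuation, children of types $0$, $1$, $2$, depth bounded by $2$), the same appeal to \Cref{lemma:EF-game-sound-complete}, the same monotonicity-of-types observation under $+$, and the same invariant bounding each type's count by a threshold in the remaining number $s'$ of spatial moves, with the spare type-$0$ and type-$2$ children absorbing the one-unit discrepancy in the type-$1$ counts. The bookkeeping you defer (the recurrence splitting each threshold across the two halves and across the possible type demotions $2 \to 1 \to 0$) is exactly what the paper's appendix carries out, with thresholds $2^s$, $2^s(s+1)$ and $2^{s-1}(s+1)(s+2)$ for types $0$, $1$ and $2$ respectively.
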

%
\noindent
\begin{minipage}{0.75\linewidth}
Notice that \modallogicSC is more expressive than \ML.
Indeed, the formula
$\Diamond \true \separate \Diamond \true$ distinguishes the two models on the right, which are bisimilar and hence indistinguishable in \ML \cite{VanBenthemThesis}.
\end{minipage}%
\begin{minipage}{0.25\linewidth}
\quad\hfill
\begin{tikzpicture}[baseline]
\node[dot] (o) {};
\node[dot] (d) [below = 0.6cm of o] {};

\node (nn) [below left = 0.1cm and 0.13cm of o] {$\not\approx$};

\node[dot] (a) [left = 1cm of o] {};
\node[dot] (b) [below left = 0.6cm and 0.3cm of a] {};
\node[dot] (c) [below right = 0.6cm and 0.3cm of a] {};

\node (h2) [right = 0cm of o] {};
\draw[pto] (o) -- (d);
\draw[pto] (a) -- (b);
\draw[pto] (a) -- (c);
\end{tikzpicture}\hfill\,
\end{minipage}

\noindent
By $\modallogicSC\preceq\GML$, \Cref{lemma:GML-more-SC} and \Cref{theorem:clean-cut}, we conclude.

\begin{theorem}
\label{theorem:expressive-SC}
$\ML \prec \modallogicSC \prec \GML \approx \modallogicCC$.
\end{theorem}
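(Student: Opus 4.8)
The plan is to assemble the statement from results already established, handling the two inclusions and the two strictness claims in turn. The inclusion $\ML \preceq \modallogicSC$ is immediate, since every $\ML$ formula is syntactically a $\modallogicSC$ formula, and the equivalence $\GML \approx \modallogicCC$ is \Cref{theorem:clean-cut} (together with $\GML \preceq \modallogicCC$, which follows from equivalence~\ref{logical-equiv-four}). So the only inclusion requiring argument is $\modallogicSC \preceq \GML$. Here I would follow the pattern of \Cref{section-cc-less-gml}: given a $\modallogicSC$ formula, rewrite its subformulae innermost-first, so that when a subformula $\aformula_1 \separate \aformula_2$ is reached, both $\aformula_1$ and $\aformula_2$ are already $\GML$ formulae. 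Then \Cref{lemma:SC-CC-tautology} gives $\aformula_1 \separate \aformula_2 \equiv \SabDiamond(\aformula_1 \chopop \aformula_2)$; \Cref{theorem:clean-cut} turns $\aformula_1 \chopop \aformula_2$ into an equivalent $\GML$ formula $\aformulater$; and \Cref{lemma:sabotage-equivalent-formula} turns $\SabDiamond \aformulater$ into an equivalent $\GML$ formula. Iterating this rewriting removes every occurrence of $\separate$, yielding $\modallogicSC \preceq \GML$.

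For the strictness $\modallogicSC \prec \GML$, given the inclusion just proved it remains to exhibit a $\GML$-definable class of models that is not $\modallogicSC$-definable. This is exactly \Cref{lemma:GML-more-SC}: the formula $\Gdiamond{=2}\Gdiamond{=1}\true$ is not equivalent to any formula of $\modallogicSC$, so $\modallogicSC \not\approx \GML$, hence $\modallogicSC \prec \GML$. (By \Cref{lemma:SC-CC-equiv-modal-rank-1} any separating witness must have modal degree at least two, which is why this particular formula is the natural candidate.) Symmetrically, $\ML \prec \modallogicSC$ follows by exhibiting a $\modallogicSC$-definable class that is not $\ML$-definable: the formula $\Diamond \true \separate \Diamond \true$ expresses "the current world has at least two children", and it separates the pointed forest with a single leaf child from the one with two leaf children; these two pointed forests are bisimilar, hence $\ML$-equivalent by van Benthem's characterisation~\cite{VanBenthemThesis}, so no $\ML$ formula distinguishes them. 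This gives $\ML \not\approx \modallogicSC$ and completes the proof.

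At the level of this theorem everything is bookkeeping; the substance lives in the lemmas it invokes. The two loci of real difficulty are \Cref{lemma:sabotage-equivalent-formula} --- eliminating $\SabDiamond$ over $\GML$, which rests on the g-bisimulation simulation argument of \Cref{lemma:sabotage-elimination} and the non-elementary counting function $\amap$ --- and \Cref{lemma:GML-more-SC}, whose proof requires the Ehrenfeucht-Fra\"iss\'e games of \Cref{figure:EF-games} together with the observation that the number of children of a world decreases monotonically along submodels, yielding the tight bounds (such as $2^s(s+1)(s+2)$ on worlds of type $2$) over which the duplicator is guaranteed to win. Of these, the inexpressibility argument is the one I would expect to be the main obstacle, since it demands constructing, for every rank $\triple{m}{s}{\apropset}$, a pair of pointed forests that the duplicator can defend while $\Gdiamond{=2}\Gdiamond{=1}\true$ still tells them apart.
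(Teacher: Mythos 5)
Your proposal is correct and follows essentially the same route as the paper: $\modallogicSC \preceq \GML$ via innermost rewriting using \Cref{lemma:SC-CC-tautology}, \Cref{theorem:clean-cut} and \Cref{lemma:sabotage-equivalent-formula}, strictness over $\GML$ via \Cref{lemma:GML-more-SC} with the formula $\Gdiamond{=2}\Gdiamond{=1}\true$, and strictness over $\ML$ via $\Diamond\true\separate\Diamond\true$ distinguishing two bisimilar pointed forests. This matches the paper's own assembly of the theorem, including the identification of the genuinely hard ingredients.
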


%
\section{\modallogicCC, \modallogicSC and Sister  Logics}
\label{section-other-logics}

Below, we show how our new results on \modallogicCC and \modallogicSC allow us to
make substantial contributions for
sister logics.

\subsection{Static ambient logic}
\label{section-SAL}
Static ambient logic (\fullSAL) is a formalism proposed to reason about spatial properties of
concurrent processes specified in the
ambient calculus~\cite{Cardelli01}.
In~\cite{CCG03}, the satisfiability and validity problems for a very expressive fragment of \fullSAL are shown to be decidable
and conjectured to be in \pspace (see~\cite[Section 6]{CCG03}).
We invalidate this conjecture by showing that the intensional fragment of \fullSAL (see~\cite{Lozes04}), herein denoted \SAL, is already \aexppol-complete.
More precisely, we
design semantically faithful reductions between
\satproblem{\modallogicCC} and \satproblem{\SAL} (in both directions),
leading to the above-mentioned result by \Cref{cor:aexpol}.
\SAL formulae are from
\begin{center}
$
  \aformula \grammardef\ \true \ \mid\ \zero\ \mid\ \aname[\aformula]\ \mid\ \aformula \land \aformula\ \mid \lnot\aformula\ \mid\ \aformula \ambientchop \aformula
$,
\end{center}
where $\aname \in \AP$ is an \emph{ambient name}.
Historically, the semantics of \fullSAL is given on a class of syntactically defined finite trees. However, 
this class of models is isomorphic to the class of finite
trees $\amodel = \triple{\worlds}{\arelation}{\avaluation}$,
such that
each world in $\worlds$ satisfies exactly one atomic proposition (its ambient name).
Then, the satisfaction relation $\models$ for \SAL is standard for $\true$ and Boolean connectives,  $\aformula_1 \chopop \aformula_2$ is as in \modallogicCC, and otherwise
\begin{center}
$\begin{array}{l@{\, }c@{\ }l}
\amodel, \aworld \models \zero & \iff & \arelation(\aworld) = \emptyset;\\
\amodel, \aworld \models \aname[\aformula] & \iff & \text{there is } \aworld' \in \worlds \text{ such that}\ \arelation(\aworld) = \{\aworld'\},\\
& & \aworld' \in \avaluation(\aname) \text{ and } \amodel,\aworld' \models \aformula.
\end{array}
$
\end{center}
With such a presentation, \SAL is a fragment of \modallogicCC,
where $\zero$ and $\aname[\aformula]$ correspond to
$\Box \false$ and 
$\Gdiamond{=1}{\!\true} \! \land \! \Diamond( \aname \!\land\! \aformula)$, respectively.
However, to reduce $\satproblem{\SAL}$ to $\satproblem{\modallogicCC}$, we must deal with the constraint on
$\avaluation$ (uniqueness of the ambient name).
Let  $\aformula$ be in \SAL written with the ambient names in $\apropsetN \!=\! \{\aname_1,\ldots,\aname_m\}$. It is known (see~\cite[Lemma 8]{CCG03}) that if $\aformula$ is satisfiable, then it can be satisfied by a tree having ambient names from $\apropsetN \cup \{\overline{\aname}\}$,
where $\overline{\aname}$ is a fresh name.
Thus, we can show that $\aformula$ is satisfiable
iff
so is the~\modallogicCC~formula
\begin{center}
$\aformula \land \Boxbox^{\md{\aformula}}(\bigvee_{\aname \in \apropsetN \cup \{\overline{\aname}\}} (\aname \land \bigwedge_{\aname' \in (\apropsetN \cup \{\overline{\aname}\}) \setminus \{\aname\}} \lnot \aname')),$
\end{center}
where the right conjunct
states that
$\avaluation$, restricted to the propositions in $\apropsetN \cup \{\overline{\aname}\}$, forms a partition of the worlds reachable
from the current one in at most $\md{\aformula}$ steps.

Reducing $\satproblem{\modallogicCC}$ to $\satproblem{\SAL}$ requires a bit more work.
Let $\amodel = \triple{\auniverse}{\aaccessrelation}{\apropeval}$ be a finite forest and $\aworld \in \auniverse$.
Assume we want to check the satisfiability status of $\aformula$ in \modallogicCC having atomic propositions from $\apropset = \{\avarprop_1,\dots,\avarprop_m\}$
and with $n$ occurrences of
$\chopop$\,.
We encode $\pair{\amodel}{\aworld}$ into a model $\pair{\amodel' = \triple{\worlds'}{\arelation'}{\avaluation'}}{\aworld}$ of \SAL as follows.
Let $\ambientchild$ and $\ambientprop$ be two ambient names not in $\apropset$.
The ambient name $\ambientchild$ encodes the relation $\aaccessrelation$ whereas $\ambientprop$ can be seen as a
\emph{container} for propositional variables holding on the current world.
{\bfseries \itshape{(i)}} We require $\worlds \subseteq \worlds'$, $\arelation \subseteq \arelation'$ and $\bigcup_{i \in \interval{0}{\md{\aformula}}} \arelation^i(\aworld) \subseteq \avaluation'(\ambientchild)$, 
i.e.,
 every world reachable from $\aworld$ in at most $\md{\aformula}$ steps has the ambient name $\ambientchild$.
Let $\aworld'$ be one of these worlds and suppose that
 $\set{\avarprop  \mid  \aworld' \in \avaluation(\avarprop)}  \cap \apropset = \{\avarpropbis_1 ,\dots, \avarpropbis_l \}$.
{\bfseries \itshape{(ii)}} We require $\worlds'$ to contain
$n+1$ worlds $\aworld_1',\dots,\aworld_{n+1}' \in \arelation'(\aworld') \setminus \arelation(\aworld')$, all having ambient name $\ambientprop$. These worlds encode copies of $\aworld'$'s valuation,
similarly to what is done in \Cref{section-aexppol} 
to encode teams from PL[\plcnot].
{\bfseries \itshape{(iii)}} 
For all $j \in \interval{1}{n+1}$, $\arelation'(\aworld_j')$
contains $l$ worlds, all satisfying $\zero$ and a distinct ambient name from  $\{\avarpropbis_1 ,\dots, \avarpropbis_l \}$.
Below we schematise the encoding (w.r.t. $\aworld'$).

 \begin{center}
   \scalebox{0.98}{
    \begin{tikzpicture}
    \node[dot] (w) [ label=above:$\aworld'$, label=left:\scriptsize{$\{\avarpropbis_1,\ldots,\avarpropbis_l\}$}]{ };

    \node[dot] (w1) [below left = 0.7cm and 0.3cm of w,label=below:$\aworld_1$] {};
    \node [below   =  0.6cm of w] {\scriptsize{$\ldots$}};
    \node[dot] (w3) [below right = 0.7cm and 0.3cm of w,label=below:$\aworld_k$] {};

    \draw[pto] (w) -- (w1);
    \draw[pto] (w) -- (w3);

    \node (arr) [below right = 0.15 and 0.6cm of w] {$\rightsquigarrow$};

    \node[dot] (r) [right= 3.2cm of w, label={above}:$\aworld'$, label=left:{\scriptsize{$\ambientchild$\ \ }}]{ };
    \node[dot] (p3) [below left = 0.7cm and 0.7cm of r,label=below:$\aworld_k$, label=right:{\scriptsize{$\ambientchild$}}] {};

    \node[dot,label={right:{\scriptsize{$\ambientprop$}}}] (pL) [below right = 0.7cm and 0.2cm of r] {};
    \node[dot,label={left:{\scriptsize{$\avarpropbis_1$}}}] (oL) [below left = 0.5cm and 0.3 cm of pL, label=below:$\zero$] {};
    \node[dot,label={right:{\scriptsize{$\avarpropbis_l$}}}] (oLL) [below right = 0.5cm and 0.3 cm of pL, label=below:$\zero$] {};

    \node[dot,label={right:{\scriptsize{$\ambientprop$}}}] (p1) [right = 2cm of pL] {};
    \node[dot,label={left:{\scriptsize{$\avarpropbis_1$}}}] (o1) [below left = 0.5cm and 0.3 cm of p1, label=below:$\zero$] {};
    \node[dot,label={right:{\scriptsize{$\avarpropbis_l$}}}] (o11) [below right = 0.5cm and 0.3 cm of p1, label=below:$\zero$] {};

    \node (aa) [below right = -0.22cm and 0.4cm of pL] {\scriptsize{$\dots$}\footnotesize{\,$n{+}1$\,}\scriptsize{$\dots$}};
    \node (aabis) [below = -5pt of aa] {\footnotesize{times}};

    \node[dot] (p4) [left = 0.8cm of p3, label=below:{$\aworld_1$}, label=left:{\scriptsize{$\ambientchild$}}] {};

    \node [left = 0.02cm of o11] {\scriptsize{$\ldots$}};
    \node [left = 0.02cm of oLL] {\scriptsize{$\ldots$}};
    \node [left = 0.05cm of p3] {\scriptsize{$\ldots$}};

    \node (g1) [below = 0.15cm of o11] {};
    \node (gL) [below = 0.15cm of oL] {};

     \draw[pto] (r) -- (p1);
     \draw[pto] (r) -- (pL);
     \draw[pto] (p1) -- (o1);
     \draw[pto] (pL) -- (oL);
     \draw[pto] (p1) -- (o11);
     \draw[pto] (pL) -- (oLL);
     \draw[pto] (r) -- (p3);
     \draw[pto] (r) -- (p4);

    \end{tikzpicture}}
\end{center}
Let $\aname \in \AP$. We define the modality
$\HMDiamond{\aname} \aformula \egdef \aname[\aformula] \chopop \true$ and its dual
$\HMBox{\aname} \aformula \egdef \lnot \HMDiamond{\aname} \lnot \aformula$.
We write $\forall[\aname]$ for $\lnot((\lnot \zero \land \lnot \aname[\true]) \chopop \true)$, so that $\pair{\amodel}{\aworld} \models \forall[\aname]$ whenever every child of $\aworld$ has the ambient name $\aname$. Moreover,
$\numchildgeq{0} \egdef \true$ and $\numchildgeq{\beta{+}1} \egdef \lnot \zero \chopop \numchildgeq{\beta}$, so
that $\pair{\amodel}{\aworld} \models \numchildgeq{\beta}$ whenever $\aworld$ has at least $\beta$ children. Lastly, $\numchildeq{\beta} \egdef \numchildgeq{\beta} \land \lnot \numchildgeq{\beta{+}1}$.
The models of \SAL encoding models of \modallogicCC are characterised by
\begin{nscenter}
  \scalebox{0.98}{
    $
    \begin{aligned}
    C_{\aformula}\,{\egdef}\!\displaystyle \bigwedge_{\mathclap{\quad j \in \interval{0}{\md{\aformula}}}} \HMBox{\ambientchild}^j
    \Big(
    \forall[\ambientchild]
    \,{\chopop}\,
    \big(\forall[\ambientprop]\,{\land}\,\numchildeq{\mbox{$n$}{+}1}
    \,{\land}\,
    \HMBox{\ambientprop}\big((\avarprop_1[\zero] {\lor} \zero)\chopop \\[-6pt]
    \dots \chopop\, (\avarprop_m[\zero] {\lor} \zero)\big)
    \land
    \textstyle
    \bigwedge_{i \in \interval{1}{m}}
    (
    \HMDiamond{\ambientprop}\HMDiamond{\avarprop_i}\true \implies \HMBox{\ambientprop}\HMDiamond{\avarprop_i} \true)
    \big)
     \Big).
    \end{aligned}
    $
  }
\end{nscenter}
Lastly, we define the translation of
$\aformula$, written $\atranslation(\aformula)$,
into \SAL. It is homomorphic for Boolean connectives and $\true$, $\atranslation(\avarprop)$ $\egdef$ $\HMDiamond{\ambientprop}\HMDiamond{\avarprop}\true$ and otherwise it is inductively defined:
\begin{center}
  $
  \begin{aligned}
  \atranslation(\Diamond \aformulabis) & \egdef \HMDiamond{\ambientchild}\atranslation(\aformulabis);\\[-1pt]
  \atranslation(\aformulabis_1 \chopop \aformulabis_2) & \egdef \big(\atranslation(\aformulabis_1) \land \HMDiamond{\ambientprop}_{\geq j} \true \big)\ \ambientchop \
\big( \atranslation(\aformulabis_2) \land \HMDiamond{\ambientprop}_{\geq {k}} \true \big),
\end{aligned}
$
\end{center}
where in
$\atranslation(\aformulabis_1 \chopop \aformulabis_2)$, $j$ (resp. $k$) is the number of occurrences of~$\chopop$ in $\aformulabis_1$ (resp. $\aformulabis_2$)
plus one and  $\HMDiamond{\ambientprop}_{\geq {\alpha}} \true \egdef (\forall[\ambientprop]\,{\land}\,\numchildeq{\alpha}) \chopop \true$.

We show that $\aformula$ is satisfiable in \modallogicCC iff
$C_{\aformula} \land \atranslation(\aformula)$ is satisfiable in \SAL,
leading to the following results about the complexity of static ambient logics.

\begin{corollary}
\satproblem{\SAL} is \aexppol-complete. \satproblem{\fullSAL} with \fullSAL from~\cite{CCG03} is \aexppol-hard.
\end{corollary}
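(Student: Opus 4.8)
The plan is to establish the two logspace reductions announced above --- from $\satproblem{\SAL}$ to $\satproblem{\modallogicCC}$ and back --- and then to invoke \Cref{cor:aexpol}.

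For the upper bound, I would first observe that \SAL sits inside \modallogicCC: rewriting $\zero$ as $\Box\false$ and each $\aname[\aformulabis]$ as $\Gdiamond{=1}{\true} \land \Diamond(\aname \land \aformulabis)$ turns a \SAL formula $\aformula$ over ambient names $\apropsetN = \{\aname_1,\dots,\aname_m\}$ into an equivalent \modallogicCC formula, \emph{provided} the model's valuation assigns exactly one name from $\apropsetN$ to each world. To remove this side condition, conjoin $\Boxbox^{\md{\aformula}}\big(\bigvee_{\aname \in \apropsetN \cup \{\overline{\aname}\}}(\aname \land \bigwedge_{\aname' \neq \aname}\lnot \aname')\big)$, using a fresh name $\overline{\aname}$, so that the worlds reachable in at most $\md{\aformula}$ steps form a partition indexed by $\apropsetN \cup \{\overline{\aname}\}$. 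Soundness of the reduction is immediate, since any \SAL model is already a pointed forest satisfying the conjunction; completeness uses \cite[Lemma 8]{CCG03} to extract from a \modallogicCC model of the conjunction a genuine \SAL tree by restricting to the relevant depth-$\md{\aformula}$ fragment and renaming superfluous worlds, the translated atoms being interpreted identically on both sides. As the reduction is logspace and $\satproblem{\modallogicCC} \in$ \aexppol by \Cref{theorem:complexity-clean-cut}, this yields $\satproblem{\SAL} \in$ \aexppol.

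For the lower bound, I would reduce $\satproblem{\modallogicCC}$ to $\satproblem{\SAL}$ via the encoding sketched in the body: given $\aformula$ in \modallogicCC over $\apropset = \{\avarprop_1,\dots,\avarprop_m\}$ with $n$ occurrences of $\chopop$, the characteristic formula $C_{\aformula}$ pins down exactly the \SAL models that faithfully represent a pointed forest $\pair{\amodel}{\aworld}$ --- every world reachable in at most $\md{\aformula}$ many $\ambientchild$-steps carries the name $\ambientchild$, has precisely $n{+}1$ children named $\ambientprop$, each of which collects, as ambient children satisfying $\zero$, the propositions true at that world, all $n{+}1$ copies agreeing on which $\avarprop_i$ they carry. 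The translation $\atranslation$ reads $\avarprop$ through $\HMDiamond{\ambientprop}\HMDiamond{\avarprop}\true$, maps $\Diamond$ to $\HMDiamond{\ambientchild}$, and implements $\chopop$ by splitting the forest while forcing each side to retain at least as many $\ambientprop$-copies of every valuation as the number of $\chopop$'s occurring on that side, plus one. The crucial lemma to prove is the copy-maintenance invariant, exactly as in \Cref{section-aexppol} for PL[\plcnot] but now transported to \SAL: for a \SAL model of $C_{\aformula}$ encoding $\pair{\amodel}{\aworld}$ and any subformula $\aformulabis$ of $\aformula$ with at most $t$ occurrences of $\chopop$, if at least $t{+}1$ copies of each valuation are present then $\amodel,\aworld \models \aformulabis$ iff the \SAL model satisfies $\atranslation(\aformulabis)$. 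This is the main obstacle: one must show that any $+_{\aworld}$-decomposition $\amodel = \amodel_1 +_{\aworld} \amodel_2$ can be mirrored by an \SAL-level split of the $\ambientprop$-children preserving the invariant on both sides, and conversely that every such split of a $C_{\aformula}$-model induces a legitimate $+_{\aworld}$-decomposition of the encoded forest (the $+_{\aworld}$ constraint being automatically satisfied because the encoding's $\ambientchild$-subtrees are carried wholesale). With this lemma, $\aformula$ is satisfiable in \modallogicCC iff $C_{\aformula} \land \atranslation(\aformula)$ is satisfiable in \SAL, so \aexppol-hardness of $\satproblem{\SAL}$ follows from \Cref{cor:aexpol}.

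Finally, since \SAL is by definition the intensional fragment of \fullSAL, every \SAL formula is a \fullSAL formula with the same class of models, so the identity map reduces $\satproblem{\SAL}$ to $\satproblem{\fullSAL}$; hence $\satproblem{\fullSAL}$ is \aexppol-hard as well, which refutes the \pspace conjecture of \cite[Section 6]{CCG03}.
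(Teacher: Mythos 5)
Your proposal follows essentially the same route as the paper: the upper bound embeds \SAL into \modallogicCC by rewriting $\zero$ and $\aname[\aformulabis]$ and adding the name-partition conjunct over $\apropsetN \cup \{\overline{\aname}\}$, the lower bound reduces $\satproblem{\modallogicCC}$ to $\satproblem{\SAL}$ via the $\ambientchild$/$\ambientprop$ encoding with copies of each world's valuation (the copy-maintenance lemma you identify is exactly the paper's key correctness lemma, and your remark that the $+_{\aworld}$ constraint comes for free because $\ambientchild$-subtrees are moved wholesale is the right observation), and the conclusion invokes \Cref{cor:aexpol} together with the inclusion of \SAL in \fullSAL. One small correction: \cite[Lemma 8]{CCG03} is needed in the \SAL-to-\modallogicCC soundness direction, not the converse --- a tree satisfying $\aformula$ may use ambient names outside $\apropsetN$, so it does not immediately satisfy the partition conjunct, and the lemma is what lets you rename all foreign names to the single fresh name $\overline{\aname}$; in the other direction one simply reads a \SAL tree off the \modallogicCC model restricted to depth $\md{\aformula}$, so your two uses are swapped, though the fix is immediate.
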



%

\subsection{Modal separation logic}\label{section:MSL}
The family of \defstyle{modal
separation logics} (\fullMSL), combining
separating and modal connectives, has been recently introduced in~\cite{DemriF19}. Its
models, inspired from the memory states used in separation logic (see also~\cite{Courtault&Galmiche18}),
are Kripke-style structures $\amodel=\triple{\worlds}{\arelation}{\avaluation}$, where
$\worlds=\Nat$ and
$\arelation \subseteq \worlds \times \worlds$ is finite and functional.
Hence, unlike finite forests, $\amodel$ may have loops.

Among the fragments studied in~\cite{DemriF19},
the modal separation logic $\MSL{}{\separate,\Diamondminus}$ was left with a huge complexity gap (between \pspace and \tower).
Its formulae are defined from
\begin{center}
$
\aformula \grammardef\
\avarprop \ \mid \
\Diamond^{-1} \aformula \ \mid \
\aformula \wedge \aformula \ \mid \
\lnot \aformula \ \mid \
\aformula \separate \aformula \
$.
\end{center}
The satisfaction relation
is as in \modallogicSC for $\avarprop\in\AP$, Boolean connectives and  $\aformula_1 \separate \aformula_2$, otherwise
\begin{center}
$\begin{array}{l@{\,}c@{\ }l}
\amodel, \aworld \models \Diamondminus \aformula & \iff & \exists \aworld' \text{ s.t.}\ \pair{\aworld'}{\aworld}\in\arelation \text{ and } \amodel, \aworld' \models \aformula.
\end{array}
$
\end{center}
Since  $\MSL{}{\separate,\Diamondminus}$ is interpreted over a finite and functional relation,
$\Diamondminus$ effectively works as the $\Diamond$ modality of \modallogicSC. Then, assume we want to check the satisfiability of $\aformula$ in \modallogicSC by relying on an algorithm for $\satproblem{\MSL{}{\separate,\Diamondminus}}$.
We simply need to consider the formula $\aformula[\Diamond\,{\gets}\,\Diamondminus]$
obtained from $\aformula$ by replacing every occurrence of $\Diamond$ by $\Diamondminus$, and check if it can be satisfied by a \emph{locally acyclic} model $\pair{\amodel}{\aworld}$ of \fullMSL, i.e.\ one where $\aworld$ does not belong to a loop of length $\leq\md{\aformula}$.
Local acyclicity can be enforced by the formula 
\begin{center}
$\mathtt{locacycl} \egdef r \land\!\bigwedge_{i \in \interval{1}{\md{\aformula}}}(\Box^{-1})^i \lnot r$,
\end{center}
where $r \in \varprop$ is fresh.
Then, $\aformula$ in \modallogicSC is satisfiable iff
$\aformula[\Diamond\,{\gets}\,\Diamondminus] \land \mathtt{locacycl}$
in $\MSL{}{\separate,\Diamondminus}$ is satisfiable.
Hence, the results in~\Cref{section-tower-SC} allow us to close the complexity gap.

\begin{corollary}
$\satproblem{\MSL{}{\separate,\Diamondminus}}$ is \tower-complete.
\end{corollary}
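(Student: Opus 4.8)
The plan is to establish \tower-completeness by pairing the already-known \tower upper bound with a \tower-hardness argument via the reduction sketched above. For the upper bound, $\satproblem{\MSL{}{\separate,\Diamondminus}}$ is in \tower: this is the upper end of the complexity gap recalled in this section (see~\cite{DemriF19}), and it can also be re-obtained, exactly as for \satproblem{\modallogicSC}, by noting that $\MSL{}{\separate,\Diamondminus}$ is interpretable as a fragment of \MSO over the finite tree obtained by unravelling the finite functional accessibility relation, and invoking Rabin's theorem~\cite{Rabin69}. So the real work is the \tower lower bound, obtained by a (logspace) reduction from \satproblem{\modallogicSC}, which is \tower-hard by \Cref{theorem:tower-completeness-SC}.

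First I would fix the translation $\aformula \mapsto \aformula[\Diamond\,{\gets}\,\Diamondminus] \land \mathtt{locacycl}$ with $r$ fresh and $\mathtt{locacycl} \egdef r \land \bigwedge_{i \in \interval{1}{\md{\aformula}}}(\Box^{-1})^i \lnot r$, which forces the evaluation world to lie on no cycle of length at most $\md{\aformula}$. Then I would set up a correspondence between pointed finite forests and \emph{locally acyclic} pointed \fullMSL models: from a finite forest $\amodel = \triple{\worlds}{\arelation}{\avaluation}$ with distinguished world $\aworld$, fix an injection of $\worlds$ into $\Nat$ and take the \fullMSL model whose relation is $\arelation^{-1}$ --- which is functional, as \fullMSL demands, precisely because $\arelation^{-1}$ is functional in a finite forest --- with the remaining elements of $\Nat$ having no successor, the old valuation kept, and $r$ true only at $\aworld$. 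Conversely, given a pointed \fullMSL model $\pair{\amodelprime}{\aworldprim}$ satisfying $\mathtt{locacycl}$: since its relation is finite and functional, the worlds from which $\aworldprim$ is reachable in at most $\md{\aformula}$ steps form a finite set whose induced subgraph, reversed, is an inverse-functional relation, and $\mathtt{locacycl}$ (together with functionality) guarantees it is acyclic --- hence the child relation of a finite forest. I would check that these two maps are mutually inverse on the relevant structures.

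The crux is the \emph{transfer lemma}: for every $\aformulabis \in \subf{\aformula}$, every pointed \fullMSL model $\pair{\amodelprime}{\aworldprim}$ in which $\aworldprim$ lies on no cycle of length at most $\md{\aformula}$, and the pointed finite forest $\pair{\amodel}{\aworldprim}$ extracted from its radius-$\md{\aformula}$ backward cone, one has $\amodelprime, \aworldprim \models \aformulabis[\Diamond\,{\gets}\,\Diamondminus]$ iff $\amodel, \aworldprim \models \aformulabis$. I would prove this by induction on $\aformulabis$. Atomic and Boolean cases are immediate; for $\Diamondminus$, a predecessor step from $\aworldprim$ along the relation of $\amodelprime$ lands exactly on a child of $\aworldprim$ in $\amodel$, and acyclicity keeps us inside the cone; for $\separate$, any splitting of the relation of $\amodelprime$ restricts to a splitting of the (finitely many) cone-edges, edges outside the cone are never probed by a formula of modal degree $\le \md{\aformula}$ that uses only $\Diamondminus$ (and, since $\separate$ keeps the current world fixed and only deletes edges, local acyclicity is preserved in both components), and conversely any splitting of the forest relation lifts to a splitting of the relation of $\amodelprime$. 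I expect this $\separate$ case to be the main obstacle: one must be careful that the infinite domain and the arbitrary ``junk'' edges of a \fullMSL model outside the backward cone are irrelevant, and that local acyclicity is genuinely robust under taking submodels. Finally I would conclude both directions of the equivalence --- a satisfying pointed forest for $\aformula$ gives a \fullMSL model satisfying $\mathtt{locacycl}$ (by acyclicity of forests) and $\aformula[\Diamond\,{\gets}\,\Diamondminus]$ (by the transfer lemma), and a \fullMSL model of $\aformula[\Diamond\,{\gets}\,\Diamondminus] \land \mathtt{locacycl}$ is locally acyclic at $\aworld$, so its radius-$\md{\aformula}$ backward cone is a finite forest on which, by the transfer lemma, $\aformula$ holds. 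Hence \satproblem{\modallogicSC} logspace-reduces to $\satproblem{\MSL{}{\separate,\Diamondminus}}$, which is therefore \tower-hard, and with the upper bound it is \tower-complete.
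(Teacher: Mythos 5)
Your proposal is correct and follows essentially the same route as the paper: the same reduction $\aformula \mapsto \aformula[\Diamond\,{\gets}\,\Diamondminus] \land \mathtt{locacycl}$, the same transfer lemma proved by induction (with the $\Diamondminus$ and $\separate$ cases handled via the backward cone and the restriction/lifting of splittings), and the same acyclicity-of-the-cone argument from local acyclicity plus functionality, with the \tower upper bound cited from prior work as in the paper. The only aside I would drop is the optional re-derivation of the upper bound by ``unravelling and Rabin'', which is not needed (the bound from~\cite{DemriF19} suffices) and would require care since $\separate$ splits the original finite functional relation rather than its unravelling.
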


\cut{
\subsection{Second-order modal logic K over finite trees}\label{section:msotk}
As  described in \Cref{section-preliminaries}, \modallogicCC and \modallogicSC
can be seen as fragments of monadic second-order logic on trees,
since both logics can be shown to be fragments of
\msokt, the extension of the 
modal logic K with second-order quantification interpreted on finite
trees~\cite{Bednarczyk&Demri19}. \msokt
is defined by 
$
\aformula \grammardef\
\avarprop \ \mid \
\Diamond \aformula \ \mid \
\aformula \wedge \aformula \ \mid \
\lnot \aformula \ \mid \
\exists \avarprop\, \aformula, \
$
is interpreted on pointed finite trees $\pair{\amodel}{\aworld}$
and
  $\amodel,\aworld \models \exists \avarprop\, \aformula$ iff
  $\exists \worlds' \subseteq \worlds$ s.t.\ $\triple{\worlds}{\arelation}{\avaluation[\avarprop \gets \worlds']},\aworld \models \aformula$.
Thus, we can define 
logspace reductions from \modallogicCC and \modallogicSC to
\msokt by simply reinterpreting the operators $\separate$ and $\chopop$ as restrictive forms of second-order quantification,
and by relativising the modality  $\Diamond$ to appropriate propositional symbols 
to capture the notion of submodel (details are omitted).
\cut{
Assume we want to check the satisfiability status of $\aformula$ in  either \modallogicCC or \modallogicSC, built over 
$\apropset = \{\avarprop_1, \dots, \avarprop_m\}\subseteq\varprop$.
In order to represent submodels,
we consider a set $\apropsetbis = \{\avarpropbis_1,\avarpropbis_2,\avarpropbis_3\}\subseteq\varprop$ 
disjoint from $\apropset$.
Given $\{\avarpropbis_i,\avarpropbis_j,\avarpropbis_k\} = \apropsetbis$, we define
$[\avarpropbis_i = \avarpropbis_j + \avarpropbis_k]^n \egdef \Boxbox^n (
(\avarprop_i \iff \avarprop_j \lor \avarprop_k ) \land \lnot(\avarprop_j \land \avarprop_k))$.
This formula simulates the composition $+$ from \Cref{section-preliminaries}
by requiring that, among the worlds reachable from the current world in at most $n$ steps, the ones satisfying $\avarprop_i$ are the disjoint union of the ones satisfying $\avarprop_j$ and $\avarprop_k$.
The translation $\atranslation_i(\aformula)$ ($i \in \interval{1}{3}$)
in \msokt is homomorphic for the 
propositional fragment and
\begin{center}
  $
  \arraycolsep=0.5pt
  \begin{array}{ll}
       \atranslation_i(\Diamond \aformula) & \egdef \,   \HMDiamond{\avarpropbis_i}\, \atranslation_i(\aformula);\\
       \atranslation_i(\aformula\,{\separate}\,\aformulabis) & \egdef \,
       \exists \avarpropbis_j\, \exists \avarpropbis_k
       \big(
          [\avarpropbis_i = \avarpropbis_j + \avarpropbis_k]^{\md{\phi \separate \psi}}
          \land \atranslation_j(\aformula) \land \atranslation_k(\aformulabis)
       \big);\\
       \atranslation_i(\aformula \chopop \aformulabis) & \egdef \,
       \exists \avarpropbis_j\, \exists \avarpropbis_k
       \big(
          [\avarpropbis_i = \avarpropbis_j + \avarpropbis_k]^{\md{\phi \chopop \psi}}
          \land \atranslation_j(\aformula) \land \atranslation_k(\aformulabis)
       \land\\
       & \Box(\avarpropbis_j\,{\implies}\Boxbox^{\md{\phi {\chopop} \psi}{-}1}\!\lnot \avarpropbis_k)\!\land\!\Box(\avarpropbis_k\,{\implies}\Boxbox^{\md{\phi {\chopop} \psi}{-}1}\!\lnot \avarpropbis_j)\big),
  \end{array}
  $
\end{center}
where in the clauses for $\aformula \separate \aformulabis$ and $\aformula \chopop \aformulabis$, the indices $j,k \in \interval{1}{3}$ are such that $j < k$ and $j \neq i \neq k$, so that $\{\avarpropbis_i,\avarpropbis_j,\avarpropbis_k\} = \apropsetbis$.
One can easily prove that $\aformula$ and $\atranslation_1(\aformula)$ are equisatisfiable,
}
Since \modallogicSC is a non-trivial fragment of \msokt,
by~\Cref{theorem:expressive-SC} $\modallogicSC \prec \msokt$ and \tower-hardness of  \msokt.

\begin{corollary}\label{corr:MSOK-tower}
$\satproblem{\msokt}$ is \tower-complete.
\end{corollary}
}

\cut{
\subsection{Second-order modal logic K over finite trees}\label{section:msotk}
As  described in \Cref{section-preliminaries}, \modallogicCC and \modallogicSC
can be seen as fragments of monadic second-order logic on trees,
since both logics can be shown to be fragments of
\msokt, the extension of the 
modal logic K with second-order quantification interpreted on finite
trees studied in~\cite{Bednarczyk&Demri19}. \msokt
is defined by 
$
\aformula \grammardef\
\avarprop \ \mid \
\Diamond \aformula \ \mid \
\aformula \wedge \aformula \ \mid \
\lnot \aformula \ \mid \
\exists \avarprop\, \aformula, \
$
is interpreted on pointed finite trees $\pair{\amodel}{\aworld}$
and 
$\exists \avarprop\, \aformula$ is defined as
\begin{center}
  $\amodel,\aworld \models \exists \avarprop\, \aformula$ $\iff$
  $\exists \worlds' \subseteq \worlds$ s.t.\ $\triple{\worlds}{\arelation}{\avaluation[\avarprop \gets \worlds']},\aworld \models \aformula$,
\end{center}
where $\avaluation[\avarprop \gets \worlds']$ stands for the
valuation obtained from $\avaluation$ by updating the evaluation of $\avarprop$ from $\avaluation(\avarprop)$ to $\worlds'$.
The definition of the other features of \msokt is as in \ML (see \Cref{section-preliminaries}).

Thus, we can define 
logspace reductions from \modallogicCC and \modallogicSC to
\msokt by simply reinterpreting the operators $\separate$ and $\chopop$ as restrictive forms of second-order quantification,
and by relativising the modality  $\Diamond$ to appropriate propositional symbols 
to capture the notion of submodel.
Assume we want to check the satisfiability status of $\aformula$ in  either \modallogicCC or \modallogicSC, built over 
$\apropset = \{\avarprop_1, \dots, \avarprop_m\}\subseteq\varprop$.
In order to represent submodels,
we consider a set $\apropsetbis = \{\avarpropbis_1,\avarpropbis_2,\avarpropbis_3\}\subseteq\varprop$ 
disjoint from $\apropset$.
Given $\{\avarpropbis_i,\avarpropbis_j,\avarpropbis_k\} = \apropsetbis$, we define
$[\avarpropbis_i = \avarpropbis_j + \avarpropbis_k]^n \egdef \Boxbox^n (
(\avarprop_i \iff \avarprop_j \lor \avarprop_k ) \land \lnot(\avarprop_j \land \avarprop_k))$.
This formula simulates the composition $+$ from \Cref{section-preliminaries}
by requiring that, among the worlds reachable from the current world in at most $n$ steps, the ones satisfying $\avarprop_i$ are the disjoint union of the ones satisfying $\avarprop_j$ and $\avarprop_k$.
The translation $\atranslation_i(\aformula)$ ($i \in \interval{1}{3}$)
in \msokt is homomorphic for the 
propositional fragment and
\begin{center}
  $
  \arraycolsep=0.5pt
  \begin{array}{ll}
       \atranslation_i(\Diamond \aformula) & \egdef \,   \HMDiamond{\avarpropbis_i}\, \atranslation_i(\aformula);\\
       \atranslation_i(\aformula\,{\separate}\,\aformulabis) & \egdef \,
       \exists \avarpropbis_j\, \exists \avarpropbis_k
       \big(
          [\avarpropbis_i = \avarpropbis_j + \avarpropbis_k]^{\md{\phi \separate \psi}}
          \land \atranslation_j(\aformula) \land \atranslation_k(\aformulabis)
       \big);\\
       \atranslation_i(\aformula \chopop \aformulabis) & \egdef \,
       \exists \avarpropbis_j\, \exists \avarpropbis_k
       \big(
          [\avarpropbis_i = \avarpropbis_j + \avarpropbis_k]^{\md{\phi \chopop \psi}}
          \land \atranslation_j(\aformula) \land \atranslation_k(\aformulabis)
       \land\\
       & \Box(\avarpropbis_j\,{\implies}\Boxbox^{\md{\phi {\chopop} \psi}{-}1}\!\lnot \avarpropbis_k)\!\land\!\Box(\avarpropbis_k\,{\implies}\Boxbox^{\md{\phi {\chopop} \psi}{-}1}\!\lnot \avarpropbis_j)\big),
  \end{array}
  $
\end{center}
where in the clauses for $\aformula \separate \aformulabis$ and $\aformula \chopop \aformulabis$, the indices $j,k \in \interval{1}{3}$ are such that $j < k$ and $j \neq i \neq k$, so that $\{\avarpropbis_i,\avarpropbis_j,\avarpropbis_k\} = \apropsetbis$.
One can easily prove that $\aformula$ and $\atranslation_1(\aformula)$ are equisatisfiable,
which allows us to improve the  \tower-hardness
of \msokt by considering only its non-trivial fragment \modallogicSC, see~\Cref{theorem:tower-completeness-SC}.
Moreover, by \Cref{theorem:expressive-SC} we also get $\modallogicSC \prec \msokt$.
\cut{
It is quite straightforward to prove (by structural induction)
that $\aformula$ and $\atranslation_1(\aformula)$ are equisatisfiable.
Then, \Cref{theorem:tower-completeness-SC} improves the \tower-hardness
of \msokt.
Moreover, by \Cref{theorem:expressive-SC} we also deduce that $\modallogicSC \prec \msokt$.
Moreover,
non-e\-le\-men\-ta\-ri\-ty of \msokt
already occurs when the second-order quantification is non-trivially restricted.
}
\begin{corollary}\label{corr:MSOK-tower}
$\satproblem{\msokt}$ is \tower-complete.
\end{corollary}
}

\section{Conclusion}
\label{section-conclusion}

We have studied and compared \modallogicCC and \modallogicSC, two modal
logics interpreted on finite forests and featuring composition operators.
We have not only characterised the expressive power and the complexity
for both logics, but also identified remarkable differences and export our results to other logics.
\modallogicCC is shown as expressive as \GML, and its satisfiability
problem is found to be \aexppol-complete.
Besides the obvious similarities between~\modallogicCC and~\modallogicSC, these results are counter-intuitive:
though the logic $\modallogicSC$ is  strictly less expressive than $\GML$ (and consequently, than \modallogicCC),
\satproblem{$\modallogicSC$} is $\tower$-complete.
We also recalled that there are logspace reductions from \modallogicSC and
\modallogicCC to the second-order modal logic \msokt from~\cite{Bednarczyk&Demri19}.

Our proof techniques go beyond what is known in the literature.
For instance, to design
the \tower-hardness proof we needed substantial modifications from the
proof introduced in~\cite{Bednarczyk&Demri19} for \msokt.
On the other hand, to show the expressivity inclusion of \modallogicSC within \GML, we provided
a novel definition of Ehrenfeucht-Fra\"iss\'e games for~\modallogicSC.

Lastly, our framework led to the characterisation of
 the satisfiability problems for two sister logics
.
We proved that the satisfiability problem for
the modal separation logic $\MSL{}{\separate,\Diamondminus}$
is \tower-complete~\cite{DemriF19}. Moreover,
the satisfiability problem
for the static ambient logic \SAL is \aexppol-complete, solving open problems from~\cite{CCG03,DemriF19} and paving the way to study the complexity of the full \fullSAL.

\clearpage
\section*{Acknowledgements}
We would like to thank the anonymous reviewers for their comments and suggestions that helped
us to improve the quality of the document.
B. Bednarczyk is supported by the Polish Ministry of Science and Higher Education program ``Diamentowy Grant'' no. DI2017 006447.
S. Demri and A. Mansutti are supported by the
Centre National de la Recherche Scientifique (CNRS).
R. Fervari is supported by ANPCyT-PICTs-2017-1130 and 2016-0215, and
by the Laboratoire International Associ\'e SINFIN.


\bibliographystyle{ACM-Reference-Format}
\bibliography{bibliography}

\ifLongVersionWithAppendix
  \newpage
  \appendix
  \onecolumn
  \tableofcontents
  \label{section-appendix}

\section{Proofs of \Cref{section-preliminaries}}
\label{appendix-preliminary-proofs}

We start this appendix by showing a classical property of \ML and \GML which carries over to \modallogicSC and \modallogicCC.
Let $\amodel = \triple{\worlds}{\arelation}{\avaluation}$ be a finite forest and  $\aworld \in \worlds$. We introduce the notation
\begin{nscenter}
  $\arelation|_{\aworld}^{\leq n} \egdef \{\pair{\aworld'}{\aworld''} \in \arelation \mid \aworld' \in \arelation^{i}(\aworld) \text{ for some } i \in \interval{0}{n-1} \}$.
\end{nscenter}
Informally, $\arelation|_{\aworld}^{\leq n}$ is the maximal subset of $\arelation$ encoding  exactly
a subtree rooted at $\aworld$ having only paths of length at most $n$. We denote with
$\arelation|_{\aworld}$ the set $\{\pair{\aworld'}{\aworld''} \in \arelation \mid \aworld' \subseteq \arelation^*(\aworld)\}$, i.e.\ the maximal subset of $R$ encoding exactly a subtree rooted at $\aworld$. Alternatively, $\arelation|_{\aworld} = \bigcup_{n \in \Nat} \arelation|_{\aworld}^{\leq n}$.

\begin{lemma}\label{lemma:height-at-most-n}
Let $n \in \Nat$ and  $\aformula$ be a formula of \modallogicCC or \modallogicSC such that $\md{\aformula} \leq n$.
Let $\amodel = \triple{\worlds}{\arelation}{\avaluation}$ be a finite forest and $\aworld \in \worlds$.
$\amodel , \aworld \models \aformula$ if and only if $\triple{\worlds}{\arelation|_{\aworld}^{\leq n}}{\avaluation}, \aworld \models \aformula$.
\end{lemma}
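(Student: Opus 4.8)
The plan is to prove the lemma by structural induction on $\aformula$, with the statement strengthened so that the induction hypothesis is quantified universally over the finite forest, the designated world, and every $n \geq \md{\aformula}$. The base cases ($\aformula = \top$ or $\aformula = \avarprop$ with $\avarprop \in \varprop$) and the Boolean cases $\aformula = \aformula_1 \land \aformula_2$, $\aformula = \lnot\aformula_1$ are immediate, since passing from $\arelation$ to $\arelation|_{\aworld}^{\leq n}$ alters neither $\worlds$ nor $\avaluation$. For the modal case $\aformula = \Diamond\aformula_1$ (and analogously $\Gdiamond{\geq k}\aformula_1$, if one also treats graded modalities), the hypothesis $\md{\aformula} \leq n$ forces $n \geq 1$ and hence $\arelation|_{\aworld}^{\leq n}(\aworld) = \arelation(\aworld)$, so it suffices to check that each child $\aworld' \in \arelation(\aworld)$ satisfies $\aformula_1$ in $\amodel$ iff it does in $\triple{\worlds}{\arelation|_{\aworld}^{\leq n}}{\avaluation}$. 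Applying the induction hypothesis to $\aformula_1$ (whose modal degree is $\leq n-1$) twice --- once inside $\amodel$, once inside the truncated forest --- this reduces to the elementary identity $(\arelation|_{\aworld}^{\leq n})|_{\aworld'}^{\leq n-1} = \arelation|_{\aworld'}^{\leq n-1}$, which holds because $\aworld'$ lies at depth $1$ below $\aworld$ and $\arelation^{-1}$ is functional.

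The substance of the argument lies in the binary modalities, where $\md$ does not decrease. The key auxiliary fact is that for every $\arelation_1 \subseteq \arelation$ one has $\arelation_1|_{\aworld}^{\leq n} = (\arelation_1 \cap \arelation|_{\aworld}^{\leq n})|_{\aworld}^{\leq n}$. One inclusion is immediate from $\arelation_1 \cap \arelation|_{\aworld}^{\leq n} \subseteq \arelation_1$; the other uses that $\arelation^{-1}$ is functional and acyclic, so every world reachable from $\aworld$ admits a unique witnessing path, and every edge on a path of length $\leq n-1$ starting at $\aworld$ has its source at depth $\leq n-1$ and therefore survives the truncation. Granting this, take $\aformula = \aformula_1 \separate \aformula_2$ with $\md{\aformula_i} \leq n$. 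For the forward direction, given $\amodel = \amodel_1 + \amodel_2$ with $\amodel_i, \aworld \models \aformula_i$, set $\arelation_i^{\ast} \egdef \arelation_i \cap \arelation|_{\aworld}^{\leq n}$; then $\arelation_1^{\ast} \uplus \arelation_2^{\ast} = \arelation|_{\aworld}^{\leq n}$, and combining the induction hypothesis applied to $\amodel_i$ with the one applied to $\triple{\worlds}{\arelation_i^{\ast}}{\avaluation}$, through the auxiliary fact, yields $\triple{\worlds}{\arelation_i^{\ast}}{\avaluation}, \aworld \models \aformula_i$, so the truncation satisfies $\aformula_1 \separate \aformula_2$. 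For the converse, given a decomposition $\arelation_1^{\ast} \uplus \arelation_2^{\ast} = \arelation|_{\aworld}^{\leq n}$ of the truncation with $\triple{\worlds}{\arelation_i^{\ast}}{\avaluation}, \aworld \models \aformula_i$, extend it to $\arelation$ by adding all edges of $\arelation \setminus \arelation|_{\aworld}^{\leq n}$ to $\arelation_1^{\ast}$; both pieces are again finite forests (as sub-relations of $\arelation$), one has $\arelation_i \cap \arelation|_{\aworld}^{\leq n} = \arelation_i^{\ast}$, and the auxiliary fact plus the induction hypothesis give $\amodel_i, \aworld \models \aformula_i$.

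The case $\aformula = \aformula_1 \chopop \aformula_2$ follows the same pattern, with the extra obligation that the decompositions produced respect the stronger constraint imposed by $+_{\aworld}$ --- each child of $\aworld$ keeping its entire $\arelation$-subtree on one side. Left to right this is inherited automatically: $\arelation_i^{\ast}$ and $\arelation_i$ have the same edges out of $\aworld$, and the $+_{\aworld}$-condition on $\amodel$ already forces the whole $\arelation$-subtree of a side-$i$ child into $\arelation_i$, hence its truncation into $\arelation_i^{\ast}$. Right to left, rather than dumping all surplus edges on one side, we route each edge of $\arelation \setminus \arelation|_{\aworld}^{\leq n}$ lying inside the subtree of a child $\aworld'$ of $\aworld$ to the side assigned to $\aworld'$, and all remaining edges (those outside $\aworld$'s subtree, which $+_{\aworld}$ does not constrain) to side $1$; a short check confirms this is a genuine $+_{\aworld}$-decomposition of $\arelation$ that restricts to the given one, and the satisfaction claims then transfer as before.

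I expect the main obstacle to be exactly this bookkeeping in the backward direction for $\chopop$: verifying that the chosen extension of the depth-$n$ decomposition to the whole forest respects $+_{\aworld}$ and, upon truncation, recovers the original pieces --- that is, arguing rigorously that although $\separate$ and $\chopop$ nominally act on all of $\arelation$ (ancestors of $\aworld$, sibling subtrees, and other trees of the forest included), this global reach is invisible to formulae of modal degree at most $n$, which is precisely the content that the auxiliary fact isolates.
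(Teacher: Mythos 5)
Your proof is correct and follows essentially the same route as the paper's: structural induction on $\aformula$, where the binary modalities are handled by intersecting each piece $\arelation_i$ of a decomposition with $\arelation|_{\aworld}^{\leq n}$ and using the identity $(\arelation_i \cap \arelation|_{\aworld}^{\leq n})|_{\aworld}^{\leq n} = \arelation_i|_{\aworld}^{\leq n}$ together with the induction hypothesis applied in both directions --- exactly the manipulation the paper performs (explicitly for $\separate$, more tersely for $\chopop$). Your treatment is merely a bit more explicit than the paper's on the backward directions, in particular on extending a $+_{\aworld}$-decomposition of the truncated forest to one of the full forest.
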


\begin{proof}
The proof is by structural induction on $\aformula$. As this is the first proof by structural induction of the appendix, we depict also the trivial cases for $\land$ and $\lnot$.
In later proofs, these cases will be omitted (when straightforward) in order to shorten the presentation.
Every case but the ones for $\chopop$ and $\separate$ are from the analogous lemma for \ML.
\begin{description}
\item[Base case: $\aformula = \avarprop$.] This formula only depends on $\aworld$ and $\avaluation$, hence the statement of the lemma trivially holds for these formulae.
\item[Induction case: $\aformula = \lnot \aformulabis$.] The statements below are equivalent.
\begin{itemize}
\item $\amodel,\aworld \models \lnot \aformulabis$
\item $\amodel,\aworld \not\models \aformulabis$ (by definition of $\models$)
\item $\triple{\worlds}{\arelation|_{\aworld}^{\leq n}}{\avaluation},\aworld \not \models \aformulabis$ (by the induction hypothesis, as $\md{\aformulabis} = \md{\lnot \aformulabis} \leq n$)
\item $\triple{\worlds}{\arelation|_{\aworld}^{\leq n}}{\avaluation},\aworld \models \lnot\aformulabis$ (by definition of $\models$).
\end{itemize}
\item[Induction case: $\aformula = \aformulabis \land \aformulater$.] The statements below are equivalent.
\begin{itemize}
\item $\amodel, \aworld \models \aformulabis \land \aformulater$
\item $\amodel, \aworld \models \aformulabis$ and $\amodel, \aworld \models \aformulater$ (by definition of $\models$)
\item $\triple{\worlds}{\arelation|_{\aworld}^{\leq n}}{\avaluation},\aworld \models \aformulabis$ and $\triple{\worlds}{\arelation'}{\avaluation},\aworld \models \aformulater$\\
(by the induction hypothesis, as $\max(\md{\aformulabis},\md{\aformulater}) = \md{\aformulabis \land \aformulater} \leq n$)
\item $\triple{\worlds}{\arelation|_{\aworld}^{\leq n}}{\avaluation},\aworld \models \aformulabis \land \aformulater$ (by definition of $\models$).
\end{itemize}
\item[Induction case: $\aformula = \Diamond \aformulabis$.] The statements below are equivalent.
\begin{itemize}
\item $\amodel, \aworld \models \Diamond \aformulabis$
\item  there is $\aworld_1 \in \arelation(\aworld)$ such that $\amodel, \aworld_1 \models \aformulabis$ (by definition of $\models$)
\item  there is $\aworld_1 \in \arelation(\aworld)$ such that $\triple{\worlds}{\arelation|_{\aworld_1}^{\leq n-1}}{\avaluation}, \aworld_1 \models \aformulabis$\\
(by the induction hypothesis, as $\md{\aformulabis} = \md{\Diamond \aformulabis} - 1 \leq n - 1$)
\item there is $\aworld_1 \in \arelation(\aworld)$ such that $\triple{\worlds}{\arelation|_{\aworld_1}^{\leq n-1} \cup \{\pair{\aworld}{\aworld_1}\}}{\avaluation},\aworld \models \Diamond \aformulabis$ (by definition of $\models$ and by recalling that our models are forests).
\item
$\triple{\worlds}{\arelation|_{\aworld}^{\leq n}}{\avaluation},\aworld \models \Diamond \aformulabis$
(since
$\{\pair{\aworld'}{\aworld''} \in \arelation|_{\aworld}^{\leq n} | \aworld' \in \arelation^*(\aworld_1) \} = \arelation|_{\aworld_1}^{\leq n-1} $).
\end{itemize}
\item[Induction case: $\aformula = \aformulabis \chopop \aformulater$.]  The statements below are equivalent.
\begin{itemize}
\item $\amodel, \aworld \models \aformulabis \chopop \aformulater$
\item  there are $\amodel_1 = \triple{\worlds}{\arelation_1}{\avaluation}$ and $\amodel_2 = \triple{\worlds}{\arelation_2}{\avaluation}$ such that $\amodel_1 +_{\aworld} \amodel_2 = \amodel$,
$\amodel_1, \aworld \models \aformulabis$ and $\amodel_2, \aworld \models \aformulater$ (by definition of $\models$)
\item 
there are $\amodel_1 = \triple{\worlds}{\arelation_1}{\avaluation}$ and $\amodel_2 = \triple{\worlds}{\arelation_2}{\avaluation}$ s.t.\ $\amodel_1 +_{\aworld} \amodel_2 = \amodel$,\\
$\triple{\worlds}{\arelation_1|_{\aworld}^{\leq n}}{\avaluation},\aworld \models \aformulabis$ and
$\triple{\worlds}{\arelation_2|_{\aworld}^{\leq n}}{\avaluation},\aworld \models \aformulater$\\
(by the induction hypothesis, as $\max(\md{\aformulabis},\md{\aformulater}) = \md{\aformulabis \chopop \aformulater} \leq n$)
\item  $\triple{\worlds}{\arelation|_{\aworld}^{\leq n}}{\avaluation},\aworld \models \aformulabis \chopop \aformulater$ (by definition of
$\models$ and as $\arelation|_{\aworld}^{\leq n} = \arelation_1|_{\aworld}^{\leq n} \cup \arelation_2|_{\aworld}^{\leq n}$).
\end{itemize}
\item[Induction case: $\aformula = \aformulabis \separate \aformulater$.]  The statements below are equivalent.
\begin{itemize}
\item $\amodel, \aworld \models \aformulabis \separate \aformulater$
\item  there are $\amodel_1 = \triple{\worlds}{\arelation_1}{\avaluation}$ and $\amodel_2 = \triple{\worlds}{\arelation_2}{\avaluation}$ such that $\amodel_1 + \amodel_2 = \amodel$,
$\amodel_1, \aworld \models \aformulabis$ and $\amodel_2, \aworld \models \aformulater$ (by definition of $\models$)
\item 
there are $\amodel_1 = \triple{\worlds}{\arelation_1}{\avaluation}$ and $\amodel_2 = \triple{\worlds}{\arelation_2}{\avaluation}$ s.t.\ $\amodel_1 + \amodel_2 = \amodel$,\\
$\triple{\worlds}{\arelation_1|_{\aworld}^{\leq n}}{\avaluation},\aworld \models \aformulabis$ and
$\triple{\worlds}{\arelation_2|_{\aworld}^{\leq n}}{\avaluation},\aworld \models \aformulater$\\
(by the induction hypothesis, as $\max(\md{\aformulabis},\md{\aformulater}) = \md{\aformulabis \chopop \aformulater} \leq n$)
\item 
there are $\amodel_1 = \triple{\worlds}{\arelation_1}{\avaluation}$ and $\amodel_2 = \triple{\worlds}{\arelation_2}{\avaluation}$ such that $\amodel_1 + \amodel_2 = \amodel$,\\
$\triple{\worlds}{\arelation_1'}{\avaluation},\aworld \models \aformulabis$ and
$\triple{\worlds}{\arelation_2'}{\avaluation},\aworld \models \aformulater$ where for every $j \in \{1,2\}$
$$
\arelation_j' \egdef \arelation_j \cap  \arelation|_{\aworld}^{\leq n}
$$
(again by the induction hypothesis, right to left direction, as $\arelation_j'|_{\aworld}^{\leq n} = \arelation_j|_{\aworld}^{\leq n}$)
\item iff $\triple{\worlds}{\arelation|_{\aworld}^{\leq n}}{\avaluation},\aworld \models \aformulabis \separate \aformulater$ (by definition of
$\models$ and as $\arelation|_{\aworld}^{\leq n} = \arelation_1'|_{\aworld}^{\leq n} \cup \arelation_2'|_{\aworld}^{\leq n}$). \qedhere
\end{itemize}
\end{description}
\end{proof}

\subsection{Proof of \Cref{lemma:SC-CC-equiv-modal-rank-1}}

\begin{proof}
Let $\amodel = \triple{\worlds}{\arelation}{\avaluation}$ be a finite forest and  $\aworld \in \worlds$.
Notice that if $\md{\aformula}$ is at most $1$, by Lemma~\ref{lemma:height-at-most-n}  the satisfaction of $\aformula$ only depends on the set of worlds $\{\aworld\} \cup \arelation(\aworld)$.
More precisely, $\amodel, \aworld \models \aformula$ iff $\triple{\worlds}{\arelation|_{\aworld}^{\leq 1}}{\avaluation} , \aworld \models \aformula$.
The same holds for formulae in \modallogicSC.
Similarly, $\aformulabis \egdef \aformula[\,\chopop \leftarrow \separate]$ 
(as in the statement) has modal degree at most $1$ and again by Lemma~\ref{lemma:height-at-most-n} we have
 $\amodel, \aworld \models \aformulabis$ iff $\triple{\worlds}{\arelation|_{\aworld}^{\leq 1}}{\avaluation} , \aworld \models \aformulabis$.
To conclude the proof it is sufficient then to prove the following:
\begin{nscenter}
$\triple{\worlds}{\arelation|_{\aworld}^{\leq 1}}{\avaluation} , \aworld \models \aformula$ \ \ iff \ \
$\triple{\worlds}{\arelation|_{\aworld}^{\leq 1}}{\avaluation} , \aworld \models \aformulabis$.
\end{nscenter}
Notice that this result already trivially holds for $\md{\aformula} = 0$. Indeed, in this case the satisfaction of $\aformula$ and $\aformulabis$
only depends on the satisfaction of propositional variables on the current world $\aworld$ and 
therefore not at all on the accessibility relation.
Instead, the proof for $\md{\aformula} = 1$  boils down to the proof of the equivalence
\begin{nscenter}
$\triple{\worlds}{\arelation|_{\aworld}^{\leq 1}}{\avaluation} , \aworld \models \aformula_1 \chopop \aformula_2$ \ \ iff \ \
$\triple{\worlds}{\arelation|_{\aworld}^{\leq 1}}{\avaluation} , \aworld \models \aformula_1 \separate \aformula_2$.
\end{nscenter}
depicted as follows. The statements below are equivalent.
\begin{itemize}
\item $\triple{\worlds}{\arelation|_{\aworld}^{\leq 1}}{\avaluation} , \aworld \models \aformulabis \chopop \aformulater$
\item  there are $\amodel_1 = \triple{\worlds}{\arelation_1}{\avaluation}$ and $\amodel_2 = \triple{\worlds}{\arelation_2}{\avaluation}$ s.t.\ $\amodel_1 +_{\aworld} \amodel_2 = \triple{\worlds}{\arelation|_{\aworld}^{\leq 1}}{\avaluation}$,
$\amodel_1, \aworld \models \aformulabis$ and $\amodel_2, \aworld \models \aformulater$ (by definition of $\models$)
\item  there are disjoint $\arelation_1$ and $\arelation_2$ such that $\arelation_1 \cup \arelation_2 = \arelation|_{\aworld}^{\leq 1}$,
$\triple{\worlds}{\arelation_1}{\avaluation}, \aworld \models \aformulabis$ and $\triple{\worlds}{\arelation_2}{\avaluation}, \aworld \models \aformulater$ (by definition of $+_{\aworld}$, as
$\arelation|_{\aworld}^{\leq 1} = \{\aworld\} \times \arelation(\aworld)$)
\item  there are $\amodel_1 = \triple{\worlds}{\arelation_1}{\avaluation}$ and $\amodel_2 = \triple{\worlds}{\arelation_2}{\avaluation}$ s.t.\ $\amodel_1 + \amodel_2 = \triple{\worlds}{\arelation|_{\aworld}^{\leq 1}}{\avaluation}$,
$\amodel_1, \aworld \models \aformulabis$ and $\amodel_2, \aworld \models \aformulater$ (by definition of $+$)
\item  $\triple{\worlds}{\arelation|_{\aworld}^{\leq 1}}{\avaluation},\aworld \models \aformulabis \separate \aformulater$ (by definition of
$\models$). \qedhere
\end{itemize}
\end{proof}

\subsection{Proof of \Cref{lemma:SC-CC-tautology}}

\begin{proof}
Let $\amodel = \triple{\worlds}{\arelation}{\avaluation}$ be a finite forest and $\aworld \in \worlds$.

For the left to right direction, suppose $\amodel,\aworld \models \aformula \separate \aformulabis$.
Then, by definition of $\models$, there are $\amodel_1 =  \triple{\worlds}{\arelation_1}{\avaluation}$ and $\amodel_2 =  \triple{\worlds}{\arelation_2}{\avaluation}$
such that $\amodel_1 + \amodel_2 = \amodel$, $\amodel_1, \aworld \models \aformula$ and $\amodel_2, \aworld \models \aformulabis$.
By Lemma~\ref{lemma:height-at-most-n} we can easily conclude that $\triple{\worlds}{\arelation_1|_{\aworld}}{\avaluation},\aworld \models \aformula$ and $\triple{\worlds}{\arelation_2|_{\aworld}}{\avaluation},\aworld \models \aformulabis$, where $\arelation|_{\aworld} \egdef \{\pair{\aworld'}{\aworld''} \in \arelation \mid \aworld' \in \arelation^*(\aworld) \}$.
Indeed, this holds as by definition, for every $n \in \Nat$, $(\arelation|_{\aworld})|_\aworld^{\leq n} = \arelation|^{\leq n}_{\aworld}$.
Now, consider the model
$\widehat{\amodel} = \triple{\worlds}{\arelation_1|_{\aworld} \cup \arelation_2|_{\aworld}}{\avaluation}$.
It is easy to see that $\triple{\worlds}{\arelation_1|_{\aworld}}{\avaluation}$ and $\triple{\worlds}{\arelation_2|_{\aworld}}{\avaluation}$ are such that
$\triple{\worlds}{\arelation_1|_{\aworld}}{\avaluation} +_{\aworld} \triple{\worlds}{\arelation_2|_{\aworld}}{\avaluation} = \widehat{\amodel}$.
Hence $\widehat{\amodel},\aworld \models \aformula \chopop \aformulabis$.
Moreover by definition $\arelation_1|_{\aworld} \cup \arelation_2|_{\aworld} \subseteq \arelation$ and $(\arelation_1|_{\aworld} \cup \arelation_2|_{\aworld})(\aworld) = \arelation(\aworld)$.
We conclude that $\amodel,\aworld \models \SabDiamond(\aformula \chopop \aformulabis)$.

For the right to left direction, suppose $\amodel,\aworld \models \SabDiamond(\aformula \chopop \aformulabis)$.
Then by definition of $\models$ there is a model $\widehat{\amodel} = \triple{\worlds}{\widehat{\arelation}}{\avaluation}$ such that $\widehat{\arelation} \subseteq \arelation$, $\widehat{\arelation}(\aworld) = \arelation(\aworld)$ and $\widehat{\amodel}, \aworld \models \aformula \chopop \aformulabis$. Again by definition of $\models$, there are
$\amodel_1 =  \triple{\worlds}{\arelation_1}{\avaluation}$ and $\amodel_2 =  \triple{\worlds}{\arelation_2}{\avaluation}$
such that $\amodel_1 +_{\aworld} \amodel_2 = \widehat{\amodel}$ and $\amodel_1, \aworld \models \aformula$ and $\amodel_2, \aworld \models \aformulabis$.
Consider now the set $\overline{\arelation} = \arelation \setminus \widehat{\arelation}$. We define:
\begin{nscenter}
$
\begin{aligned}[t]
\arelation_1' &\egdef \arelation_1 \cup \{ \pair{\aworld'}{\aworld''} \in \overline{\arelation} \mid \aworld' \not \in \arelation_1^*(\aworld) \}\\
\arelation_2' &\egdef \arelation_2 \cup (\overline{\arelation} \setminus \arelation_1')
\end{aligned}
$
\end{nscenter}
By definition, it is easy to see that $\arelation_1'|_{\aworld} = \arelation_1|_{\aworld}$ and $\arelation_2'|_{\aworld} = \arelation_2|_\aworld$. Moreover, $\arelation_1' \cap \arelation_2' = \emptyset$ and $\arelation_1' \cup \arelation_2' = \arelation$.
Hence, again by using Lemma~\ref{lemma:height-at-most-n} we can easily conclude that $\triple{\worlds}{\arelation_1'}{\avaluation}, \aworld \models \aformula$ and $\triple{\worlds}{\arelation_2'}{\avaluation},\aworld \models \aformulabis$.
From the properties of $\arelation_1'$ and $\arelation_2'$ expressed above, we obtain $\amodel, \aworld \models \aformula \separate \aformulabis$.
\end{proof}

\section{Proofs of \Cref{section-CC}}

\cut{
\subsection{\GML formula $\aformula_1 \wedge \aformula_2$ in good shape}
\label{appendix-good-shape}

The goal of this section is to establish the lemma below.

\begin{lemma}
\label{lemma:good-shape}
Let $\aformulater_1$
and  $\aformulater_2$ be  \GML formulae.
$\aformulater_1 \chopop \aformulater_2$ is logically equivalent
to a finite disjunction of formulae of the form $\aformulater' \chopop \aformulater''$
where $\aformulater' \wedge \aformulater''$ is in good shape.
\end{lemma}

Let $\aformula_1, \aformula_2$ be \GML formulae built over
the propositional variables
$\avarprop_1, \ldots, \avarprop_n$
and a valuation $v: \set{\avarprop_1, \ldots, \avarprop_n} \rightarrow \set{\false,\true}$, we write
$v \leadsto (\aformula_1 \chopop \aformula_2)$ to denote the formula
$\aliteral_1 \wedge \cdots \wedge \aliteral_n \wedge (\redformula_1 \chopop \redformula_2)$
where for all $i \in \interval{1}{n}$, $\aliteral_i \egdef \avarprop_i$ if $v(\avarprop_i) = \true$, otherwise
$\aliteral_i \egdef \neg \avarprop_i$. Moreover, $\redformula_j$ is obtained from $\aformula_j$ by replacing
every occurrence of $\avarprop_i$ that is not in the scope of a graded modality by
$\true$ (resp. by $\false$).
The next lemma establishes that in a formula involving the  connective $\chopop$, we can abstract the propositional
part that is not under the scope of a graded modality.

\begin{lemma}
\label{lemma:elimination-one}
Let  $\aformula_1$, $\aformula_2$ be  in \GML.
Then, we have
$$
\aformula_1 \chopop \aformula_2 \equiv
\bigvee_{v: \set{\avarprop_1, \ldots, \avarprop_n} \rightarrow \set{\false,\true}}
v \leadsto (\aformula_1 \chopop \aformula_2).
$$
\end{lemma}


\begin{proof}
Let $\aformula_1, \aformula_2$ be  \GML formulae
built over the propositional variables $\avarprop_1, \ldots, \avarprop_n$.
Below, we show that
\begin{nscenter}
$
\aformula_1 \chopop \aformula_2 \equiv
\bigvee_{v: \set{\avarprop_1, \ldots, \avarprop_n} \rightarrow \set{\false,\true}}
v \leadsto (\aformula_1 \chopop \aformula_2),
$
\end{nscenter}
where $v \leadsto (\aformula_1 \chopop \aformula_2)$ is defined as
$
\aliteral_1 \wedge \cdots \wedge \aliteral_n \wedge (\redformula_1 \chopop \redformula_2)
$.
By definition,
\begin{itemize}
\item for all $i \in \interval{1}{n}$, $\aliteral_i = \avarprop_i$ if $v(\avarprop_i) = \true$, otherwise
$\aliteral_i = \neg \avarprop_i$.
\item $\redformula_j$ is obtained from $\aformula_j$ by replacing
every occurrence of $\avarprop_i$ that is not in the scope of a graded modality by $v(\avarprop_i)$.
\end{itemize}

First, suppose that $\amodel, \aworld \models \aformula_1 \chopop \aformula_2$.
Let $v: \set{\avarprop_1, \ldots, \avarprop_n} \rightarrow \set{\false,\true}$ be the valuation
such that $v(\avarprop_i) \egdef \true$ if $\aworld \in \avaluation(\avarprop_i)$, otherwise
$v(\avarprop_i) \egdef \false$. Let $\aliteral_1 \wedge \cdots \wedge \aliteral_n$ be the conjunction
of literals such that  for all $i \in \interval{1}{n}$, $\aliteral_i \egdef \avarprop_i$ if $v(\avarprop_i) = \true$, otherwise
$\aliteral_i \egdef \neg \avarprop_i$. We have $\amodel, \aworld \models \aliteral_1 \wedge \cdots \wedge \aliteral_n$.
Indeed,   $\amodel, \aworld \models \avarprop_i$ iff $\aworld \in \avaluation(\avarprop_i)$ (by definition of $\models$)
iff  $v(\avarprop_i) \egdef \true$ (by definition of $v$)
iff $\aliteral_i = \avarprop_i$ (by definition of $\aliteral_i$).
By definition of $\models$, there are $\amodel_1$, $\amodel_2$ such that
$\amodel = \amodel_1 +_{\aworld} \amodel_2$,
$\amodel_1, \aworld \models \aformula_1$ and  $\amodel_2, \aworld \models \aformula_2$.
Let $j \in \set{1,2}$. Let us show that $\amodel_j, \aworld \models \redformula_j$.
The proof is by structural induction on $\aformula_j$ by considering  subformulae $\aformulabis$ of $\aformula_j$.
We show that $\amodel_j, \aworld \models \aformulabis$ iff  $\amodel_j, \aworld \models \redformulabis$
at the top level.
\begin{description}
\item[Base case $\aformulabis = \avarprop_i$.]  If $\aworld \in \avaluation(\avarprop_i)$, then
$\redformulabis = \true$ and therefore
 $\amodel_j, \aworld \models \redformulabis$. Otherwise, $\redformulabis =
\false$
and $\amodel_j, \aworld \not \models \aformulabis$ implies  $\amodel_j, \aworld \not \models \redformulabis$.
\item[Induction case: Boolean connectives.] The proof is by an easy verification.
\item[Induction case: $\aformulabis = \Gdiamond{\geq k} \aformulater$.] We have $\redformulabis = \aformulabis$
and therefore the property trivially holds true.
\end{description}
Consequently, we have  $\amodel_1, \aworld \models \redformula_1$ and $\amodel_2, \aworld \models \redformula_2$,
which is equivalent to $\amodel, \aworld \models \redformula_1 \chopop \redformula_2$ by definition of $\models$.
Hence, we have $\amodel, \aworld \models v \leadsto (\aformula_1 \chopop \aformula_2)$, and therefore we get
$\amodel, \aworld \models \bigvee_{v: \set{\avarprop_1, \ldots, \avarprop_n} \rightarrow \set{\false,\true}}
v \leadsto (\aformula_1 \chopop \aformula_2)$.

Conversely, suppose that $\amodel, \aworld \models \bigvee_{v: \set{\avarprop_1, \ldots, \avarprop_n} \rightarrow \set{\false,\true}}
v \leadsto (\aformula_1 \chopop \aformula_2)$. So, there is $v: \set{\avarprop_1, \ldots, \avarprop_n} \rightarrow \set{\false,\true}$
such that $\amodel, \aworld \models v \leadsto (\aformula_1 \chopop \aformula_2)$, i.e.
$\amodel, \aworld \models \aliteral_1 \wedge \cdots \wedge \aliteral_n \wedge (\redformula_1 \chopop \redformula_2)$
with the above conditions satisfied. By definition of
 $\models$, there are $\amodel_1$, $\amodel_2$ such that
$\amodel = \amodel_1 +_{\aworld} \amodel_2$,
$\amodel_1, \aworld \models \redformula_1$ and  $\amodel_2, \aworld \models \redformula_2$.
Using the fact that $\amodel, \aworld \models \aliteral_1 \wedge \cdots \wedge \aliteral_n$, as above,
the proof is by structural induction on $\aformula_j$ ($j \in \set{1,2}$) by considering  subformulae $\aformulabis$ of $\aformula_j$.
We can show that $\amodel_j, \aworld \models \redformulabis$ iff  $\amodel_j, \aworld \models \aformulabis$
at the top level.  We only provide details for the base case $\aformulabis = \avarprop_i$, for which we have two cases:
\begin{itemize}
    \item If $\aliteral_i = \avarprop_i$, then $\aworld \in \avaluation(\avarprop_i)$ and $\redformulabis = \true$. Hence, $\amodel_j, \aworld \models \avarprop_i$.
    \item If $\aliteral_i = \neg \avarprop_i$, then $\aworld \not \in \avaluation(\avarprop_i)$ and $\redformulabis =
 \false$. Then, $\amodel_j, \aworld \models\neg  \avarprop_i$.
\end{itemize}
By omitting the cases for the induction step, we get that
$\amodel_1, \aworld \models \aformula_1$ and  $\amodel_2, \aworld \models \aformula_2$
and by definition of $\models$, we obtain $\amodel, \aworld \models \aformula_1 \chopop \aformula_2$.
\end{proof}

Notice that for each formula of the form $v \leadsto (\aformula_1 \chopop \aformula_2)$,
the occurrences of the propositional variables in the scope of $\chopop$ are also in the scope of
some graded modality. We present another translation so that for  $\Gdiamond{\geq k} \aformulabis$
and $\Gdiamond{\geq k'} \aformulabis'$ in some $\aformula_i$ with $\aformulabis \neq \aformulabis'$,
$\aformulabis \wedge \aformulabis'$ is unsatisfiable. Essentially, this property will be crucial in order to simulate the separation principle given by $\chopop$: $\aformulabis\wedge\aformulabis'$ unsatisfiable means that $\aformulabis$ and $\aformulabis'$ are exact characterisations of two disjoint subtrees.
\begin{lemma}
\label{lemma:elimination-two}
Let  $\aformula_1$ and $\aformula_2$ be formulae in \GML such that $\maxpc{\aformula_1 \wedge \aformula_2} \subseteq \set{\true,\false}$.
There are formulae $\aformula_1'$ and $\aformula_2'$ in \GML such that $\aformula_1 \chopop \aformula_2 \equiv
\aformula_1' \chopop \aformula_2'$ and $\aformula_1' \chopop \aformula_2'$ is in good shape.
\end{lemma}


\begin{proof} Let $\aformula_1$ and $\aformula_2$ be formulae in \GML such that
 $\maxpc{\aformula_1 \wedge \aformula_2} \subseteq \set{\true,\false}$.
Consequently, for all $i \in \set{1,2}$, $\aformula_i$ is a Boolean combination of formulae from 
$\maxgmod{\aformula_i}$. 

Let $\set{\aformulabis_1, \ldots, \aformulabis_n}$ be 
the set $\set{\aformulabis \mid \Gdiamond{\geq k} \aformulabis \in \maxgmod{\aformula_1 \wedge \aformula_2}}$.
A \defstyle{$\pair{k}{i}$-distribution} is a map 
$\amap: \set{\asetbis \in \powerset{\set{\aformulabis_1, \ldots, \aformulabis_n}} \mid \aformulabis_i \in \asetbis} 
\rightarrow \interval{0}{k}$ such that
$$
(\sum_{\aset \in \set{\asetbis \in \powerset{\set{\aformulabis_1, \ldots, \aformulabis_n}} \mid \aformulabis_i \in \asetbis} } 
\amap(\aset)) = k.
$$
Roughly speaking, when $\amap(\asetbis) = m$, the number of children satisfying all the formulae in $\asetbis$ and not satisfying
the formulae in $\set{\aformulabis_1, \ldots, \aformulabis_n} \setminus \asetbis$, is at least $m$. 
Note that if $\amap$ is a $\pair{k}{i}$-distribution, there are at most $k$ distinct sets $\aset$
such that $\amap(\aset)$ is different from zero. 
Let us define the formula $\amap \leadsto  \Gdiamond{\geq k}{\aformulabis_i}$. In the case $k = 0$, then
$\amap \leadsto \Gdiamond{\geq k}{\aformulabis_i}$ is equal to $\Gdiamond{\geq 0}{(\aformulabis_1 \wedge \cdots \wedge \aformulabis_n)}$.
Otherwise (i.e., $k > 0$), $\amap \leadsto  \Gdiamond{\geq k}{\aformulabis_i}$ is equal to
$$
\bigwedge_{\aset \in  \set{\asetbis \in \powerset{\set{\aformulabis_1, \ldots, \aformulabis_n}} \mid \aformulabis_i \in \asetbis},
\amap(\aset) \neq 0} \Gdiamond{\geq \amap(\aset)}{(\aformulater_1 \wedge \cdots \wedge \aformulater_n)},
$$
where for all $j \in \interval{1}{n}$, if $\aformulabis_j \in \aset$, then $\aformulater_j \egdef \aformulabis_j$, otherwise
$\aformulater_j \egdef \neg \aformulabis_j$. So, $\amap \leadsto  \Gdiamond{\geq k}{\aformulabis_i}$ is a conjunction with at most
$k$ conjuncts.

The formula $\aformula_1' \chopop \aformula_2'$ is obtained from $\aformula_1 \chopop \aformula_2$ by replacing
every occurrence of $\Gdiamond{\geq k}{\aformulabis_i}$ that is not in the scope of a graded modality (i.e., it is maximal) by
$$
\bigvee_{\amap: (k,i){\rm \mbox{-}distribution}} \amap \leadsto \Gdiamond{\geq k}{\aformulabis_i}.
$$
It remains to check that $\aformula_1' \chopop \aformula_2'$ satisfies the announced syntactic properties
as well as $\aformula_1' \chopop \aformula_2' \equiv \aformula_1 \chopop \aformula_2$. Concerning the syntactic restriction, 
assume that maximal  subformulae $\Gdiamond{\geq k}{\aformulabis}$ and $\Gdiamond{\geq k'}{\aformulabis'}$ occur in 
$\aformula_1' \chopop \aformula_2'$ with $\aformulabis \neq \aformulabis'$. Necessarily, 
$\aformulabis$ is of the form $\aformulater_1 \wedge \cdots \wedge \aformulater_n$, 
$\aformulabis'$ is of the form $\aformulater_1' \wedge \cdots \wedge \aformulater_n'$, and for some
$j \in \interval{1}{n}$, we have $\set{\aformulater_j,\aformulater_j'} = \set{\aformulabis_j, \neg \aformulabis_j}$.
Obviously $\aformulabis \wedge \aformulabis'$ is unsatisfiable. 

The fact that $\aformula_1' \chopop \aformula_2' \equiv \aformula_1 \chopop \aformula_2$ relies on the following property. 
Given a set of formulae $\aset = \set{\aformula'_1, \ldots, \aformula'_m}$ such that for all $i \neq j$, $\aformula_i' \wedge \aformula_j'$ is unsatisfiable,
we have:
$$
\Gdiamond{\geq k}{(\aformula_1' \vee \cdots \vee \aformula_m')} 
\equiv
\bigvee_{\amap: \aset \rightarrow \interval{0}{k}, \sum_{i} \amap(\aformula'_i) = k}
(\bigwedge_{i=1}^{m} \Gdiamond{\geq \amap(\aformula_i')} \aformula'_i).
$$
\end{proof}

The proof of Lemma~\ref{lemma:good-shape} is obtained by composing the transformations involved in the respective proofs of
Lemma~\ref{lemma:elimination-one} and Lemma~\ref{lemma:elimination-two}.

}

\subsection{Proof of \Cref{lemma:elimination-three-two}}

Before proving Lemma~\ref{lemma:elimination-three-two}, we establish the lemma below.

\begin{lemma}
\label{lemma:elimination-three-one}
Let  $\aformula_1$, $\aformula_2$ be in \GML such that $\aformula_1 \wedge \aformula_2$
is in good shape.
If there is some quantifier-free  $\aformulater$ equivalent to $[\aformula_1,\aformula_2]^{\PA}$
whose  atomic formulae are of the form
$\avariable_j \geq k$, we have $\aformula_1 \chopop \aformula_2 \equiv
\aformulater^{\GML}$.
\end{lemma}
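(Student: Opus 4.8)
The plan is to show the equivalence $\amodel,\aworld \models \aformula_1 \chopop \aformula_2$ iff $\amodel,\aworld \models \aformulater^{\GML}$ by passing through the Presburger formula $[\aformula_1,\aformula_2]^{\PA}$, using the structural constraints guaranteed by ``good shape''. The crucial observation is that when $\aformula_1 \wedge \aformula_2$ is in good shape, the formulae $\aformulabis_1,\dots,\aformulabis_n$ collected from $\maxgmod{\aformula_1\wedge\aformula_2}$ are \emph{pairwise mutually unsatisfiable} (by property~\ref{good-shape-2}), so each child $\aworld'$ of $\aworld$ satisfies at most one of them. Hence, given a pointed forest $\pair{\amodel}{\aworld}$, the tuple of counts $(c_1,\dots,c_n)$ with $c_j = \card{\set{\aworld' \in \arelation(\aworld) \mid \amodel,\aworld' \models \aformulabis_j}}$ is a faithful numerical abstraction of the relevant part of $\aworld$'s neighbourhood, and moreover these counts are \emph{independent}: for any target tuple $(c_1,\dots,c_n)$ with $c_j \leq \card{\arelation(\aworld)}$ componentwise and $\sum_j c_j \leq \card{\arelation(\aworld)}$ (plus, if one wishes, arbitrarily many leftover children satisfying none of the $\aformulabis_j$), one can realise it. First I would make this precise: by property~\ref{good-shape-1}, $\maxpc{\aformula_1\wedge\aformula_2}\subseteq\set{\true,\false}$, so both $\aformula_i$ depend on $\aworld$ only through the truth values of the $\Gdiamond{\geq k}{\aformulabis_j}$, and therefore $\amodel,\aworld\models\aformula_i$ iff $\aformula_i^{\PA}(c_1,\dots,c_n)$ holds, where $c_j$ is the count above.

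Next I would handle the splitting. For the left-to-right direction, suppose $\amodel = \amodel_1 +_{\aworld} \amodel_2$ with $\amodel_i,\aworld \models \aformula_i$. Because $+_{\aworld}$ partitions the children of $\aworld$ (and their whole subtrees) between $\amodel_1$ and $\amodel_2$, and because whether a child $\aworld'$ of $\aworld$ satisfies $\aformulabis_j$ depends only on the subtree rooted at $\aworld'$ (so it is the same in $\amodel_i$ as in $\amodel$), the counts split: writing $c_j^{(i)}$ for the count in $\amodel_i$, we get $c_j = c_j^{(1)} + c_j^{(2)}$ for every $j$. Since $\amodel_i,\aworld\models\aformula_i$ gives $\aformula_i^{\PA}(c_1^{(i)},\dots,c_n^{(i)})$, the witnesses $\avariablebis_j^i := c_j^{(i)}$ show $[\aformula_1,\aformula_2]^{\PA}(c_1,\dots,c_n)$ holds, hence $\aformulater(c_1,\dots,c_n)$ holds, hence $\amodel,\aworld\models\aformulater^{\GML}$ (translating each $\avariable_j \geq k$ back to $\Gdiamond{\geq k}{\aformulabis_j}$, which is justified precisely by $c_j$ being that count). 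For the right-to-left direction, from $\amodel,\aworld\models\aformulater^{\GML}$ we get $\aformulater(c_1,\dots,c_n)$, hence $[\aformula_1,\aformula_2]^{\PA}(c_1,\dots,c_n)$, so there are $\avariablebis_j^1,\avariablebis_j^2$ with $\avariablebis_j^1 + \avariablebis_j^2 = c_j$ and $\aformula_i^{\PA}(\avariablebis_1^i,\dots,\avariablebis_n^i)$ for $i\in\set{1,2}$. Now I would build the decomposition $\amodel = \amodel_1 +_\aworld \amodel_2$ explicitly: for each $j$, among the $c_j$ children satisfying $\aformulabis_j$, assign (together with their subtrees) $\avariablebis_j^1$ of them to $\amodel_1$ and the remaining $\avariablebis_j^2$ to $\amodel_2$; children satisfying none of the $\aformulabis_j$ can be distributed arbitrarily, say all to $\amodel_1$; the edges internal to subtrees stay with whichever side received the root child. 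This is a legal $+_\aworld$ decomposition by construction, and in $\amodel_i$ the count of children satisfying $\aformulabis_j$ is exactly $\avariablebis_j^i$, so $\amodel_i,\aworld\models\aformula_i$, giving $\amodel,\aworld\models\aformula_1\chopop\aformula_2$.

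I expect the main obstacle to be the bookkeeping in two subsidiary facts that I would isolate as sub-claims. The first is the \emph{counts-determine-satisfaction} claim: that $\amodel,\aworld\models\aformula_i \iff \aformula_i^{\PA}(c_1,\dots,c_n)$. This needs the pairwise-unsatisfiability of the $\aformulabis_j$ (so that a child contributes to at most one count, matching the way $[\cdot]^{\PA}$ reads off ``at least $k$ children with $\aformulabis_j$'' as $\avariable_j \geq k$) together with property~\ref{good-shape-1} (so that $\aformula_i$ really is just a Boolean combination of the $\Gdiamond{\geq k}{\aformulabis_j}$ and the constants); it is proved by a routine structural induction on $\aformula_i$. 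The second is the \emph{realisability/splitting} claim used in both directions: that subtree-satisfaction of $\aformulabis_j$ is invariant under the $+_\aworld$ decomposition at the root (this is immediate since $+_\aworld$ keeps each child's subtree intact on one side), and conversely that any componentwise split of the count tuple is witnessed by an actual $+_\aworld$ decomposition. Once these two are in place, the lemma follows by chaining the equivalences as above, and the statements about the free variables of $\aformulater$ being among $\avariable_1,\dots,\avariable_n$ are inherited from the shape of $[\aformula_1,\aformula_2]^{\PA}$.
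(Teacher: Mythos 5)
Your proposal is correct and follows essentially the same route as the paper's own proof: the counts-determine-satisfaction correspondence between $\aformula_i$ and $\aformula_i^{\PA}$, splitting of the counts under $+_{\aworld}$ for one direction, and the explicit reassembly of a decomposition from the Presburger witnesses (using the pairwise unsatisfiability from good shape to place each child, with its whole subtree, on exactly one side) for the other. The only cosmetic difference is that you attribute pairwise unsatisfiability to the counts-determine-satisfaction claim as well, where it is not strictly needed; its essential use, as in the paper, is in the realisability of the split.
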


\begin{proof}
Let  $\aformula_1$ and $\aformula_2$ be formulae in \GML such that 
$\maxpc{\aformula_1 \wedge \aformula_2} \subseteq \set{\true,\false}$ and 
for all  $\Gdiamond{\geq k} \aformulabis$
and $\Gdiamond{\geq k'} \aformulabis'$ in $\maxgmod{\aformula_1 \wedge \aformula_2}$  with $\aformulabis \neq \aformulabis'$,
the formula $\aformulabis \wedge \aformulabis'$ is unsatisfiable, i.e. $\aformula_1 \wedge \aformula_2$ is in good shape. 
Let $\set{\aformulabis_1, \ldots, \aformulabis_n}$ be 
the set $\set{\aformulabis \mid \Gdiamond{\geq k} \aformulabis \in \maxgmod{\aformula_1 \wedge \aformula_2}}$.
By assumption, for all $i \neq j$, the formula $\aformulabis_i \wedge \aformulabis_j$ is unsatisfiable. 

In order to grasp the relationship between $\aformula_i$ and its arithmetical counterpart $\aformula^{\PA}_i$, 
let $\amodel_i = \triple{\worlds_i}{\arelation_i}{\avaluation_i}$ be a model,
$\aworld \in \worlds_i$, and for each $j \in \interval{1}{n}$, let 
$\beta_j^i = \card{\set{\aworld' \in \worlds_i \mid \amodel_i, \aworld' \models \aformulabis_j \ {\rm and} \ \pair{\aworld}{\aworld'} \in \arelation_i}}$. 
Moreover, let $v_{\aworld}: \set{\avariable_1, \ldots, \avariable_n} \rightarrow \Nat$ be the arithmetical valuation
such that $v_{\aworld}(\avariable_j) \egdef \beta_j^i$
for all $j \in \interval{1}{n}$. 
We have the following equivalence
$$
(\dag) \ \ 
\amodel_i, \aworld \models \aformula_i \ \ {\rm iff} \ \ 
v_{\aworld} \models_{\PA} \aformula^{\PA}_i,
$$
where  $\models_{\PA}$ is the satisfaction relation in \PA. 
Below, we also use the notation ``$\aformula^{\PA}_i(\beta_1^i, \ldots, \beta_n^i)$'' instead of 
``$v_{\aworld} \models_{\PA} \aformula^{\PA}_i$''. 

Now, let us show that $\aformula_1 \chopop \aformula_2 \equiv 
\aformulater^{\GML}$. We start by showing that  $\aformula_1 \chopop \aformula_2 \Rightarrow \aformulater^{\GML}$ is valid. 
Let $\amodel = \triple{\worlds}{\arelation}{\avaluation}$ be a model,
$\aworld \in \worlds$ such that $\amodel, \aworld \models \aformula_1 \chopop \aformula_2$.
By definition of $\models$, there are $\amodel_1$, $\amodel_2$ such that 
$\amodel = \amodel_1 +_{\aworld} \amodel_2$,  $\amodel_1, \aworld \models \aformula_1$ and  $\amodel_2, \aworld \models \aformula_2$.
Let us keep the definition of the $\beta_j^i$'s from above, and 
for each $j \in \interval{1}{n}$, let 
$\alpha_j = \card{\set{\aworld' \in \worlds \mid \amodel, \aworld' \models \aformulabis_j \ {\rm and} \ 
\pair{\aworld}{\aworld'} \in \arelation}}$. 
By $(\dag)$ and as $\amodel = \amodel_1 + \amodel_2$ holds too, 
we have the  following relationships:
\begin{nscenter}
$
(j \in \interval{1}{n}) \ \alpha_j = \beta_j^1 + \beta_j^2 \ \ \ \ \ \ 
\aformula^{\PA}_1(\beta_1^1, \ldots,\beta_n^1)  \ \ \ \ \ \ 
\aformula^{\PA}_2(\beta_1^2, \ldots,\beta_n^2).
$
\end{nscenter}
We recall the definition of the arithmetical formula $[\aformula_1,\aformula_2]^{\PA}$: 
$$
[\aformula_1,\aformula_2]^{\PA} \egdef  \exists \ \avariablebis_1^1, \avariablebis_1^2, \ldots, \avariablebis_n^1, \avariablebis_n^2 
 \ (\bigwedge_{j=1}^{n} \avariable_j = \avariablebis_j^1 + \avariablebis_j^2) \wedge
 \aformula^{\PA}_1(\avariablebis_1^1, \ldots,\avariablebis_n^1) \wedge
\aformula^{\PA}_2(\avariablebis_1^2, \ldots,\avariablebis_n^2).
$$
\cut{
\begin{nscenter}
$
\begin{array}{lcl}
[\aformula_1,\aformula_2]^{\PA} & \egdef & \exists \ \avariablebis_1^1, \avariablebis_1^2, \ldots, \avariablebis_n^1, \avariablebis_n^2 
 \ (\bigwedge_{j=1}^{n} \avariable_j = \avariablebis_j^1 + \avariablebis_j^2) \wedge \\ 
 & & \ \ \ \ \aformula^{\PA}_1(\avariablebis_1^1, \ldots,\avariablebis_n^1) \wedge
\aformula^{\PA}_2(\avariablebis_1^2, \ldots,\avariablebis_n^2).
\end{array}
$
\end{nscenter}
}

By assumption, there is a quantifier-free formula $\aformulater$ with free variables among $\avariable_1, \ldots, \avariable_n$
such that $\aformulater$ is logically equivalent to $[\aformula_1,\aformula_2]^{\PA}$ and 
its atomic formulae are of the form $\avariable_j \geq k$.
The formula $\aformulater^{\GML}$ is defined as the \GML formula
obtained from $\aformulater$ by replacing every occurrence of $\avariable_j \geq k$
by $\Gdiamond{\geq k}{\aformulabis_j}$. 
Let  $v_{\aworld}: \set{\avariable_1, \ldots, \avariable_n} \rightarrow \Nat$ be the arithmetical valuation
such that $v_{\aworld}(\avariable_j) \egdef \alpha_j$ for all $j$. Obviously $v_{\aworld} \models_{\PA} \aformulabis^{\PA}$, which is
equivalent to $v_{\aworld} \models_{\PA} \aformulater$. Similarly to $(\dag)$, we can get $\amodel, \aworld
\models \aformulater^{\GML}$. 

Now, we show that $\aformulater^{\GML} \Rightarrow \aformula_1 \chopop \aformula_2$ is valid. 
Let $\amodel = \triple{\worlds}{\arelation}{\avaluation}$ be a model,
$\aworld \in \worlds$ such that $\amodel, \aworld \models \aformulater^{\GML}$. 
As above, for each $j \in \interval{1}{n}$, let 
$\alpha_j = \card{\set{\aworld' \in \worlds \mid \amodel, \aworld' \models \aformulabis_j \ {\rm and} \ 
\pair{\aworld}{\aworld'} \in \arelation}}$. 
Let  $v_{\aworld}: \set{\avariable_1, \ldots, \avariable_n} \rightarrow \Nat$ be the arithmetical valuation
such that $v_{\aworld}(\avariable_j) \egdef \alpha_j$ for all $j$. 
Similarly to $(\dag)$,  we can get $v_{\aworld} \models_{\PA} \aformulater$ and equivalently 
$v_{\aworld} \models_{\PA} [\aformula_1,\aformula_2]^{\PA}$. So, by the semantics of the arithmetical formula $[\aformula_1,\aformula_2]^{\PA}$,
there are natural numbers $\beta_1^1, \beta_1^2, \ldots, \beta_n^1,\beta_n^2$ such that
$$
(j \in \interval{1}{n}) \ \alpha_j = \beta_j^1 + \beta_j^2 \ \ \ \ \ \ 
\aformula^{\PA}_1(\beta_1^1, \ldots,\beta_n^1)  \ \ \ \ \ \ 
\aformula^{\PA}_2(\beta_1^2, \ldots,\beta_n^2).
$$
For each $i \in \set{1,2}$ let us build $\amodel_i$ such that for all $j \in \interval{1}{n}$, 
$\aworld$ has $\beta_j^i$ children in $\amodel_i$, and by construction for each such a child, its whole subtree
in $\pair{\worlds}{\arelation}$ is present in $\pair{\worlds}{\arelation_i}$ too. Such a division is possible
because if a child of $\aworld$ contributes to the value $\alpha_j$ in $\amodel$ (and therefore it satisfies $\aformulabis_j$),
it cannot contribute to any value $\alpha_{j'}$ with $j' \neq j$ (as by assumption $\aformulabis_j \wedge \aformulabis_{j'}$
is unsatisfiable). Hence, by construction $\amodel = \amodel_1 +_{\aworld} \amodel_2$. 
Moreover, for any child $\aworld'$ of $\aworld$ in $\amodel_i$, we have
$\amodel_i, \aworld' \models \aformulabis_j$ iff $\amodel, \aworld' \models \aformulabis_j$ (for all $j \in \interval{1}{n}$)
as the whole subtree of $\aworld'$ in $\amodel$ is present in $\amodel_i$. 
For each $i \in \set{1,2}$, let $v_{\aworld}^i$ be the arithmetical valuation such that
for all $j \in \interval{1}{n}$, we have  $v_{\aworld}^i(\avariable_j) \egdef \beta_j^i$. 
So, obviously, $v_{\aworld}^i \models_{\PA} \aformula^{\PA}_i(\beta_1^i, \ldots,\beta_n^i)$ and therefore
by $(\dag)$, we have $\amodel_i, \aworld \models \aformula_i$. Consequently, we get $\amodel, \aworld \models 
\aformula_1 \chopop \aformula_2$. 
\end{proof}

Condition 2. in the definition of $\aformula_1 \wedge \aformula_2$
in good shape is essential here to obtain  $\aformula_1 \chopop \aformula_2 \equiv
\aformulater^{\GML}$.
Here is a simple counter-example.
The formula $[\aformula_1,\aformula_2]^{\PA}$  obtained from
$\Gdiamond{\geq 1}{\avarprop} \chopop \Gdiamond{\geq 1}{\avarpropbis}$
is  defined as
$
\ \exists \ \avariablebis_1^1, \avariablebis_1^2,  \avariablebis_2^1, \avariablebis_2^2 \
(\avariable_1 = \avariablebis_1^1 + \avariablebis_1^2) \wedge (\avariable_2 = \avariablebis_2^1 + \avariablebis_2^2)
\wedge (\avariablebis_1^1 \geq 1) \wedge (\avariablebis_2^2 \geq 1)
$.
Obviously, $[\aformula_1,\aformula_2]^{\PA}$  is
arithmetically equivalent to $(\avariable_1 \geq 1) \wedge (\avariable_2 \geq 1)$
but
$
\Gdiamond{\geq 1}{\avarprop} \chopop \Gdiamond{\geq 1}{\avarpropbis} \not \equiv \Gdiamond{\geq 1}{\avarprop} \wedge
\Gdiamond{\geq 1}{\avarpropbis}
$.
Indeed, when $\amodel, \aworld \models \Gdiamond{\geq 1}{\avarprop} \wedge \Gdiamond{\geq 1}{\avarpropbis}$
and $\aworld$ has a unique child satisfying $\avarprop \wedge \avarpropbis$, there is no way for $\aworld$ to satisfy
$\Gdiamond{\geq 1}{\avarprop} \chopop \Gdiamond{\geq 1}{\avarpropbis}$.
So the aforementioned assumption is crucial in order to simulate the appropriate partitioning of subtrees.

To prove the result in full generality, we need to establish that such a quantifier-free formula  $\aformulater$ always exists.
Here is the proof of  Lemma~\ref{lemma:elimination-three-two}.

\begin{proof} 
For each $i \in \set{1,2}$, let $\aformula_i'$ be an arithmetical formula
logically equivalent to $\aformula_i^{\PA}$ such that:
\begin{itemize}
\item $\aformula_i'$ is in disjunctive normal form (DNF),
\item each disjunct of $\aformula_i'$ is a conjunction such that for each $j \in \interval{1}{n}$,
the variable $\avariablebis_j^i$ is in at most two literals with the following three options:
     \begin{itemize}
     \item $\avariablebis_j^i$ occurs in a unique literal of the form $\avariablebis_j^i \geq k$,
     \item $\avariablebis_j^i$ occurs in a unique (negative) literal of the form $\neg (\avariablebis_j^i \geq k)$,
     \item $\avariablebis_j^i$ occurs in two  literals whose conjunction is $\avariablebis_j^i \geq k_1 \wedge \neg (\avariablebis_j^i \geq k_2)$
     and $k_2 > k_1$.
     \end{itemize}
\end{itemize}
In the case such a formula  $\aformula_i'$  does not exist, typically when  $\aformula_i'$ is inconsistent, 
$\aformulater$ can simply take the value $\perp$. In the sequel, we assume that both $\aformula_1'$ and $\aformula_2'$ exist.
Using propositional reasoning and the fact that disjunction distributes over
existential first-order quantification, the formula $[\aformula_1,\aformula_2]^{\PA}$ is therefore logically equivalent to a formula
of the form
$$
\bigvee_{\alpha,\beta} 
\exists \ \avariablebis_1^1, \avariablebis_1^2, \ldots, \avariablebis_n^1, \avariablebis_n^2 \
(\bigwedge_{j=1}^{n} \avariable_j = \avariablebis_j^1 + \avariablebis_j^2) \wedge
C_{\alpha}^1 \wedge C_{\beta}^2
$$
where $C_{\alpha}^1$ (resp. $ C_{\beta}^2$) is a conjunction from $\aformula_1'$ (resp. from $\aformula_2'$).  
In order to build $\aformulater$ from $[\aformula_1,\aformula_2]^{\PA}$, 
we take advantage of quantifier elimination in \PA and we explain below how this can be done.
It is sufficient to explain how to eliminate quantifiers for subformulae of the form 
$$
\Psi = 
\exists \ \avariablebis_1^1, \avariablebis_1^2, \ldots, \avariablebis_n^1, \avariablebis_n^2 \
(\bigwedge_{j=1}^{n} \avariable_j = \avariablebis_j^1 + \avariablebis_j^2) \wedge
C_{\alpha}^1 \wedge C_{\beta}^2.
$$
Let $j \in \interval{1}{n}$ and suppose that by performing quantifier elimination on 
$\exists \ \avariablebis_{j+1}^1,\avariablebis_{j+1}^2, \ldots,
\avariablebis_{n}^1,\avariablebis_{n}^2$,  the formula $\Psi$ is equivalent to
$$ 
 \exists \ \avariablebis_1^1, \avariablebis_1^2, \ldots, \avariablebis_j^1, \avariablebis_j^2 \
\Psi_{j+1}.
$$
with $\Psi_{n+1} = (\bigwedge_{j=1}^{n} \avariable_j = \avariablebis_j^1 + \avariablebis_j^2) \wedge
C_{\alpha}^1 \wedge C_{\beta}^2$, 
and,
\begin{enumerate}
\item $\Psi_{j+1}$ is quantifier-free with no occurrences of the variables
      $\avariablebis_{j+1}^1,\avariablebis_{j+1}^2, \ldots,\avariablebis_{n}^1,\avariablebis_{n}^2$,
\item $\Psi_{j+1}$ is of the form 
      $$
      (\bigwedge_{a=1}^{j} \avariable_a = \avariablebis_a^1 + \avariablebis_a^2)
      \wedge D \wedge C'_1 \wedge C'_2
      $$
      where 
      \begin{enumerate}
      \item $D$ is a conjunction of literals built from constraints of the form $\avariable_{j'} \geq k$  with $j' \in \interval{j}{n}$,
      \item for each $i \in \set{1,2}$, $C'_i$ a conjunction such that for each $j' \in \interval{1}{j}$,
       $\avariablebis_{j'}^i$ is in at most two literals with the following three options:
     \begin{itemize}
     \item $\avariablebis_{j'}^i$ occurs in a unique literal of the form $\avariablebis_{j'}^i \geq k$,
     \item $\avariablebis_{j'}^i$ occurs in a unique (negative) literal of the form $\neg (\avariablebis_{j'}^i \geq k)$,
     \item $\avariablebis_{j'}^i$ occurs in two  literals whose conjunction is 
           $\avariablebis_{j'}^i \geq k_1 \wedge \neg (\avariablebis_{j'}^i \geq k_2)$  and $k_2 > k_1$.
     \end{itemize}
      \end{enumerate}
\end{enumerate}
Now, let us show how to perform quantifier elimination of $\exists \ \avariablebis_j^1 \ \exists \ \avariablebis_j^2 \ \Psi_{j+1}$
to preserve the property for $j-1$. First note that  $\exists \ \avariablebis_j^1 \ \exists \ \avariablebis_j^2 \ \Psi_{j+1}$
 is logically equivalent to
$$
(\bigwedge_{a=1}^{j-1} \avariable_a = \avariablebis_a^1 + \avariablebis_a^2) \wedge D 
\wedge C_1'' \wedge C_2'' \wedge 
\  \exists \ \avariablebis_j^1 \ \exists \ \avariablebis_j^2 \ 
(\avariable_j = \avariablebis_j^1 + \avariablebis_j^2) \wedge
D_1 \wedge D_2,
$$
\cut{
$$
\begin{array}{l}
(\bigwedge_{a=1}^{j-1} \avariable_a = \avariablebis_a^1 + \avariablebis_a^2) \wedge D 
\wedge C_1'' \wedge C_2'' \wedge \\ 
\ \ \ \ \exists \ \avariablebis_j^1 \ \exists \ \avariablebis_j^2 \ 
(\avariable_j = \avariablebis_j^1 + \avariablebis_j^2) \wedge
D_1 \wedge D_2,
\end{array}
$$
}
where $C_1' = C_1'' \wedge D_1$ (assuming abusively that $A \wedge \top = A$), $C_2' = C_2'' \wedge D_2$ and each variable $\avariablebis_j^i$ does not occur in 
$C_i''$, and each $D_i$ is either $\true$, or contains at most 2 literals involving the variable $\avariablebis_j^i$. It is then
easy to eliminate quantifiers in $\exists \ \avariablebis_j^1 \ \exists \ \avariablebis_j^2 \ 
(\avariable_j = \avariablebis_j^1 + \avariablebis_j^2)  \wedge 
D_1 \wedge D_2$ and below we treat all the cases depending on the value for $ D_1 \wedge D_2$ leading to
the formula $D_{12}$ 
(we omit the symmetrical cases):
\begin{itemize}
\itemsep 0 cm 
\item $\true \wedge \true$:  $D_{12} \egdef \true$,
\item $(\avariablebis_j^1 \geq k) \wedge \true$:
      $D_{12} \egdef (\avariable_j \geq k)$,
\item $\neg (\avariablebis_j^1 \geq k) \wedge \true$: 
      $D_{12} \egdef \true$,
\item $(\avariablebis_j^1 \geq k) \wedge \neg (\avariablebis_j^1 \geq k') 
      \wedge \true$: 
      $D_{12} \egdef (\avariable_j \geq k)$,
\item $(\avariablebis_j^1 \geq k) \wedge (\avariablebis_j^2 \geq k'')$: 
      $D_{12} \egdef (\avariable_j \geq k+k'')$,
\item $\neg (\avariablebis_j^1 \geq k) \wedge (\avariablebis_j^2 \geq k'')$:  
      $D_{12} \egdef (\avariable_j \geq k'')$,
\item $(\avariablebis_j^1 \geq k) \wedge \neg (\avariablebis_j^1 \geq k') 
      \wedge (\avariablebis_j^2 \geq k'')$: 
      $D_{12} \egdef (\avariable_j \geq k+k'')$,
\item $(\avariablebis_j^1 \geq k) \wedge \neg (\avariablebis_j^1 \geq k') 
      \wedge (\avariablebis_j^2 \geq k'') \wedge \neg (\avariablebis_j^2 \geq k''')$: 
      $D_{12} \egdef (\avariable_j \geq k+k'') \wedge \neg (\avariable_j \geq k'+k''')$.
\end{itemize}
It is now easy to check that the
formula
$$
 \exists \ \avariablebis_1^1, \avariablebis_1^2, \ldots, \avariablebis_{j-1}^1, \avariablebis_{j-1}^2 \
(\bigwedge_{a=1}^{j-1} \avariable_a = \avariablebis_a^1 + \avariablebis_a^2) \wedge (D \wedge D_{12}) 
\wedge C_1'' \wedge C_2'',
$$
satisfies the conditions for $\Psi_{j}$. 
By iterating the process of  quantifier elimination, we get the desired formula $\aformulater$. 
By Lemma~\ref{lemma:elimination-three-one}, we conclude that  $\aformula_1 \chopop \aformula_2 \equiv
\aformulater^{\GML}$.
\end{proof}

\subsection{Proof of  \Cref{theorem:clean-cut}}
\begin{proof} Let $\aformula$ be a formula in \modallogicCC. As $\Diamond \aformulabis \equiv \Gdiamond{\geq 1}{\aformulabis}$, we can assume
that the only modalities in $\aformula$ are of the form $\Gdiamond{\geq 1}{}$ or $\chopop$. If $\aformula$ has no occurrence of $\chopop$, we are done.
Otherwise, let $\aformulabis$ be a subformula of $\aformula$ whose outermost connective is $\chopop$ and the arguments are in \GML,
say $\aformulabis = \aformula_1 \chopop \aformula_2$. 
By Lemma~\ref{lemma:f-one},
there is a formula $\aformulabis'$ in \GML such that $\aformula_1 \chopop \aformula_2 \equiv \aformulabis'$.
One can show that $\aformula \equiv \aformula[\aformulabis \leftarrow \aformulabis']$, where  $\aformula[\aformulabis \leftarrow \aformulabis']$
is obtained from $\aformula$ by replacing every occurrence of $\aformulabis$ by $\aformulabis'$. Note that the number of occurrences of $\chopop$ in 
$\aformula[\aformulabis \leftarrow \aformulabis']$ is strictly less than  the number of occurrences of $\chopop$ in 
$\aformula$. By repeating such a type of replacement, eventually we obtain a formula $\aformula'$ in \GML such that $\aformula \equiv \aformula'$. 
\end{proof}

\subsection{$\GML$ is closed under the operator $\chopop$}\label{proof:lemma-f-one}

Given $\aformula \in \GML$, we write $\submax{\aformula}$ to denote
the set $\set{\aformulater \mid \Gdiamond{\geq k} \aformulater \in \maxgmod{\aformula}}$.

\begin{lemma}
\label{lemma:f-one}
Let $\aformulabis_1$ and $\aformulabis_2$ be two formulae in \GML with
$\maxgmod{\aformulabis_1} \cup \maxgmod{\aformulabis_2} = \set{\Gdiamond{\geq k_1} \aformulater_1, \ldots,
     \Gdiamond{\geq k_n} \aformulater_n}$ and $\widehat{k} = \max \set{k_1, \ldots, k_n}$.
There is a \GML
formula $\aformulabis$ such that
$\aformulabis\,{\equiv}\,\aformulabis_1 \chopop \aformulabis_2$,
$\newbd{0}{\aformulabis}\,{\leq}\, \widehat{k} \, 2^{n+1}$ and
$\newbd{1}{\aformulabis}\,{\leq}\, n \, \newbd{1}{\aformulabis_1 {\land} \aformulabis_2}$.
\end{lemma}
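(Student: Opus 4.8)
The plan is to recognise this as the composition-degree-one instance of \Cref{lemma:f-gml}. Since $\aformulabis_1$ and $\aformulabis_2$ are \GML formulae, every maximal subformula $\Gdiamond{\geq k_i}{\aformulater_i}$ is a legal base case of the fragment $\sf{F}$, so $\aformulabis_1 \chopop \aformulabis_2$ is itself a formula of $\sf{F}$ with $\cd{\aformulabis_1 \chopop \aformulabis_2} = 1$ (its two arguments contain no $\chopop$). By the definitions of $\maxgmod{\cdot}$ and of $\mathtt{bd}$ on $\sf{F}$ we have $\maxgmod{\aformulabis_1 \chopop \aformulabis_2} = \maxgmod{\aformulabis_1 \land \aformulabis_2} = \maxgmod{\aformulabis_1} \cup \maxgmod{\aformulabis_2} = \{\Gdiamond{\geq k_1}{\aformulater_1}, \dots, \Gdiamond{\geq k_n}{\aformulater_n}\}$ and $\newbd{1}{\aformulabis_1 \chopop \aformulabis_2} = \newbd{1}{\aformulabis_1 \land \aformulabis_2}$. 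Applying \Cref{lemma:f-gml} to $\aformula := \aformulabis_1 \chopop \aformulabis_2$ then yields a \GML formula $\aformulabis$ with $\aformulabis \equiv \aformulabis_1 \chopop \aformulabis_2$, $\newbd{0}{\aformulabis} \leq \widehat{k}\cdot 2^{n + \cd{\aformula}} = \widehat{k}\cdot 2^{n+1}$, and $\newbd{1}{\aformulabis} \leq n \cdot \newbd{1}{\aformula} = n \cdot \newbd{1}{\aformulabis_1 \land \aformulabis_2}$, which is exactly the claim. (There is no circularity: the proof of \Cref{lemma:f-gml} uses \Cref{lemma:elimination-three-two} and the equivalences \ref{axiom:guess}, \ref{axiom:gdistr}, \ref{axiom:chopdist} directly, not \Cref{theorem:clean-cut} or this lemma.)

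If a self-contained argument is preferred, I would replay the proof of \Cref{lemma:f-gml} specialised to a single occurrence of $\chopop$. First apply \ref{axiom:guess} to each maximal $\Gdiamond{\geq k_i}{\aformulater_i}$ appearing in $\aformulabis_1$ or $\aformulabis_2$, rewriting its body as $\bigvee_{\amap}(\aformulater_i \land [\aformulater_1]^{\amap(1)} \land \cdots \land [\aformulater_n]^{\amap(n)})$ over $\amap\colon\interval{1}{n}\to\{\true,\false\}$; the disjuncts with $\amap(i) = \false$ are unsatisfiable and discarded, leaving $2^{n-1}$ pairwise-unsatisfiable ``atoms'' over $\aformulater_1, \dots, \aformulater_n$. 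This leaves the top-level graded ranks unchanged but replaces each body by a conjunction of $n$ (possibly negated) formulae among $\aformulater_1, \dots, \aformulater_n$, so the resulting formula $\aformula'$ satisfies $\newbd{1}{\aformula'} \leq n\cdot\newbd{1}{\aformulabis_1\land\aformulabis_2}$ and $\newbd{m}{\aformula'} = \newbd{m}{\aformulabis_1\land\aformulabis_2}$ for $m \neq 1$ — this is the source of the bound on $\newbd{1}{\aformulabis}$. Then apply \ref{axiom:gdistr} to distribute every $\Gdiamond{\geq k_i}{(\bigvee_l D_l)}$ as $\bigvee_{\sum_l m_l = k_i}\bigwedge_l \Gdiamond{\geq m_l}{D_l}$ and push the outer conjunctions into disjunctive normal form; after merging graded modalities that share a body $D$, each disjunct of $\aformulabis_1$ (resp. $\aformulabis_2$) becomes a conjunction of at most $2^n$ formulae $\Gdiamond{\geq m}{D}$ with $m \leq \widehat{k}$, hence of graded rank at most $\widehat{k}$. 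Using \ref{axiom:chopdist} to distribute $\chopop$ over these disjunctions, every resulting conjunct $\aformula_1^{\star}\chopop \aformula_2^{\star}$ has $\aformula_1^{\star}\land\aformula_2^{\star}$ in good shape, so \Cref{lemma:elimination-three-two} replaces it by an equivalent \GML formula of graded rank at most $\gr{\aformula_1^{\star}}+\gr{\aformula_2^{\star}} \leq 2\widehat{k}$. Re-assembling the disjunction (and reinstating the propositional literals abstracted away by \ref{axiom:guess}) and choosing a presentation with a single maximal graded modality per body $D$ gives $\aformulabis$ with at most $2^n$ maximal graded modalities of rank $\leq 2\widehat{k}$, i.e.\ $\newbd{0}{\aformulabis} \leq 2\widehat{k}\cdot 2^n = \widehat{k}\cdot 2^{n+1}$.

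The main obstacle, in the self-contained version, is the bookkeeping for $\newbd{0}{\aformulabis}$: \Cref{lemma:elimination-three-two} bounds only the graded \emph{rank} of the formula it produces, not the sum of the ranks of its top-level graded modalities, so one has to argue that the quantifier-elimination step and the re-assembly of the disjunction over all distributions can be organised so that the final \GML formula still mentions at most $2^n$ distinct maximal graded modalities, each of rank at most $2\widehat{k}$, rather than suffering an uncontrolled blow-up. This is precisely the care that is already taken in the proof of \Cref{lemma:f-gml}, which is why I would take the first route; everything else — logical equivalence, the factor-$n$ growth of the depth-one $\mathtt{bd}$-value, and the invariance of the structure at depth $\geq 2$ — is routine Boolean and modal reasoning.
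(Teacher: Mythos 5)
Your self-contained (second) argument is, in substance, the paper's own proof of Lemma~\ref{lemma:f-one}: the paper likewise first makes the two sides agree on their maximal graded subformulae, rewrites each body as a disjunction of pairwise-unsatisfiable ``atoms'' over $\aformulater_1,\dots,\aformulater_n$ (its distribution step over subsets $\asetbis$ of $\set{\aformulater_1^{\star},\dots,\aformulater_n^{\star}}$ plays exactly the role of your \ref{axiom:guess}/\ref{axiom:gdistr} combination), pulls the top-level propositional literals out of the scope of $\chopop$ via $(\avarprop\land\aformula)\chopop(\avarprop'\land\aformula')\equiv\avarprop\land\avarprop'\land(\aformula\chopop\aformula')$ (not via \ref{axiom:guess}, a minor misattribution on your part), distributes $\chopop$ over the resulting disjunctions with \ref{axiom:chopdist}, and applies Lemma~\ref{lemma:elimination-three-two} to each good-shape piece. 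Your worry about $\newbd{0}{\aformulabis}$ is the right one, and it is resolved exactly as you suggest: since the variables $\avariable_j$ in Lemma~\ref{lemma:elimination-three-two} correspond to the at most $2^n$ atoms and the produced constraints are of the form $\avariable_j\geq k$ with $k$ bounded by the graded rank $2\widehat{k}$, the bodies of the maximal graded modalities of $\aformulabis$ stay among those atoms, giving $\newbd{0}{\aformulabis}\leq 2\widehat{k}\cdot 2^n$ and $\newbd{1}{\aformulabis}\leq n\,\newbd{1}{\aformulabis_1\land\aformulabis_2}$; the paper asserts this with no more detail than you do.

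Your preferred first route, however, is circular with respect to this paper. The proof actually given for Lemma~\ref{lemma:f-gml} handles each innermost subformula $\aformulater_1'\chopop\aformulater_2'$ ``by using Lemma~\ref{lemma:f-one} and its proof'', so within the paper Lemma~\ref{lemma:f-one} is an ingredient of Lemma~\ref{lemma:f-gml}, not a corollary of it; your parenthetical claim that the proof of Lemma~\ref{lemma:f-gml} relies only on Lemma~\ref{lemma:elimination-three-two} and the three equivalences reflects the main-body sketch, not the proof on record. The specialisation itself is arithmetically fine ($\aformulabis_1\chopop\aformulabis_2$ lies in $\sf{F}$ with $\cd{\aformulabis_1\chopop\aformulabis_2}=1$, $\maxgmod{\aformulabis_1\chopop\aformulabis_2}=\maxgmod{\aformulabis_1}\cup\maxgmod{\aformulabis_2}$, and $\newbd{1}{\aformulabis_1\chopop\aformulabis_2}=\newbd{1}{\aformulabis_1\land\aformulabis_2}$, so the bounds of Lemma~\ref{lemma:f-gml} instantiate to $\widehat{k}\,2^{n+1}$ and $n\,\newbd{1}{\aformulabis_1\land\aformulabis_2}$), and it would be a legitimate derivation if Lemma~\ref{lemma:f-gml} had an independent proof — but the only independent proof available is essentially your second argument. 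So keep the second route as the proof, and downgrade the first to the observation that Lemma~\ref{lemma:f-one} is the composition-degree-one instance of the general statement.
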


\begin{proof} Without loss of generality, we assume that  $\submax{\aformulabis_1}  = \submax{\aformulabis_2}$.
Otherwise,  if $\aformulater \in \submax{\aformulabis_j} \setminus \submax{\aformulabis_{3-j}}$, then we add to
 $\aformulabis_{3-j}$ the conjunct $\Gdiamond{\geq 0}{\aformulater}  \vee \neg (\Gdiamond{\geq 0}{\aformulater})$, and we repeat the process
until $\submax{\aformulabis_1} = \submax{\aformulabis_2}$.
Moreover, we assume that the propositional variables not in the scope of a modality are among $\avarprop_1$, \ldots, $\avarprop_{\alpha}$.

In order to compute $\aformulabis$, we perform the following steps.

\begin{enumerate}
\item For each $i \in \set{1,2}$, let $\hat{\aformulabis_i}$ be a formula logically equivalent to $\aformulabis_i$
such that $\hat{\aformulabis_i}$ is in disjunctive normal form (DNF) with respect to the atoms in  $\maxgmod{\aformulabis_i} \cup \set{\avarprop_1, \ldots, \avarprop_{\alpha}}$.
Assume that $\maxgmod{\aformulabis_i} = \set{\Gdiamond{\geq k_1}{\aformulater_1}, \ldots,
     \Gdiamond{\geq k_{n'}}{\aformulater_{n'}}}$ with
    $\set{\aformulater_1^{\star}, \ldots,
     \aformulater_n^{\star}} = \set{\aformulater_1, \ldots,
     \aformulater_{n'}}$, i.e. some $\aformulater_i^{\star}$ may occur more than once but with
     different graded rank.

Let $\Bool = \set{0,1}$. Given a formula $\aformula$, we write $\aformula^1$ for 
$\aformula$ and $\aformula^0$ for $\neg \aformula$.
Hence, the  formula $\hat{\aformulabis_i}$ satisfies
$$
\hat{\aformulabis_i} \subseteq \bigvee_{\amap \colon \interval{1}{n'+\alpha} \rightarrow \Bool} ((\Gdiamond{\geq k_1}{\aformulater_1})^{\amap(1)} \wedge \cdots \wedge
(\Gdiamond{\geq k_{n'}}{\aformulater_{n'}})^{\amap(n')}) \wedge (\avarprop_1^{\amap(n'+1)} \wedge \cdots \wedge \avarprop_{\alpha}^{\amap(n'+\alpha)}),
$$
where the relation $\subseteq$ in that context means that $\hat{\aformulabis_i}$ is subdisjunction of the generalised disjunction on the right-hand side.
Note that $\newbd{0}{\aformulabis_i} = \newbd{0}{\hat{\aformulabis_i}}$.
\item The second step consists in partitioning the modalities so that $\tilde{\aformulabis_i}$ is obtained from
$\hat{\aformulabis_i}$  by replacing any occurrence of $(\Gdiamond{\geq k_j}{\aformulater_j})^{\amap(j)}$ by
$$
(\bigvee_{\amapbis \colon \interval{1}{k_j} \rightarrow \set{\asetbis \mid \aformulater_j \in \asetbis \ {\rm and} \ \asetbis
\subseteq \set{\aformulater_1^{\star}, \ldots, \aformulater_n^{\star}}}} \ \ \ \bigwedge_{\asetbis \in \range{\amapbis}} \Gdiamond{\geq \card{\amapbis^{-1}(\asetbis)}}{
(\asetbis \wedge \bar{\asetbis})})^{\amap(j)},
$$
where $\asetbis$ stands for $\bigwedge_{\aformulabis \in \asetbis} \aformulabis$ and
 $\bar{\asetbis}$ stands for $\bigwedge_{\aformulabis \in (\set{\aformulater_1^{\star}, \ldots, \aformulater_n^{\star}} \setminus \asetbis)} \neg \aformulabis$.
\end{enumerate}
It is easy to check that $\aformulabis_i \equiv \hat{\aformulabis_i}$ and $\hat{\aformulabis_i} \equiv \tilde{\aformulabis_i}$.
We write $\hat{\tilde{\aformulabis_i}}$ to denote $\tilde{\aformulabis_i}$ in DNF of the form below
$$
\hat{\tilde{\aformulabis_i}} \subseteq \bigvee_{\amap \colon \interval{1}{n''+\alpha} \rightarrow \Bool} ((\Gdiamond{\geq l_1}{\aformulater_1^{\star \star}})^{\amap(1)} \wedge \cdots \wedge
(\Gdiamond{\geq l_{n''}}{\aformulater_{n''}^{\star \star}})^{\amap(n'')}) \wedge (\avarprop_1^{\amap(n''+1)} \wedge \cdots \wedge \avarprop_{\alpha}^{\amap(n''+\alpha)}),
$$
with $l_i \leq \widehat{k}$ (because $\card{\amapbis^{-1}(\asetbis)}$ above is always bounded
by $\widehat{k}$), and there are at most $2^n$ distinct $\aformulater_j^{\star \star}$.
Consequently, $\hat{\tilde{\aformulabis_1}} \chopop \hat{\tilde{\aformulabis_2}}$ is logically equivalent to a disjunction of the form
below as the disjunction distributes over the composition operator:
$$
\bigvee
\Big(
(\avarprop_1^{\amap(n''+1)} \wedge \cdots \wedge \avarprop_{\alpha}^{\amap(n''+\alpha)}) \wedge
(\avarprop_1^{\amap'(n''+1)} \wedge \cdots \wedge \avarprop_{\alpha}^{\amap'(n''+\alpha)}) \wedge
$$
$$
\big((\Gdiamond{\geq l_1}{\aformulater_1^{\star \star}})^{\amap(1)} \wedge \cdots \wedge
(\Gdiamond{\geq l_{n''}}{\aformulater_{n''}^{\star \star}})^{\amap(n'')}
\chopop
(\Gdiamond{\geq l_1}{\aformulater_1^{\star \star}})^{\amap'(1)} \wedge \cdots \wedge
(\Gdiamond{\geq l_{n''}}{\aformulater_{n''}^{\star \star}})^{\amap'(n'')}
\big)
\Big).
$$
Observe that $(\avarprop \wedge \aformulabis) \chopop (\avarprop' \wedge \aformulabis')$ is logically equivalent
to $\avarprop \wedge \avarprop' \wedge (\aformulabis \chopop \aformulabis')$.
By Lemma~\ref{lemma:elimination-three-two}, the subformula with outermost connective $\chopop$ can be rewritten
as a \GML formula $\aformula$ with  graded rank at most twice the maximal graded rank (i.e. $2 \times \widehat{k}$)
and with $\card{\submax{\aformula}} \leq 2^n$. Note that the condition of being in good shape is guaranteed by
construction of $\tilde{\aformulabis_i}$. The formula $\aformulabis$ is obtained by applying
Lemma~\ref{lemma:elimination-three-two} on the large disjunction above as much as needed.
It is now easy to check that $\newbd{0}{\aformulabis} \leq (2 \times \widehat{k}) \times 2^n$.
\end{proof}

\subsection{Proof of \Cref{lemma:small-model-newbd}}

\begin{proof} 
The proof is by induction on the modal degree of $\aformula$ and we show that
the branching degree of the models is at most $\maxbd{\aformula}$ (which allows us
to get the number of worlds at most $\maxbd{\aformula}^{\md{\aformula}+1}$
as only nodes reachable in at most $\md{\aformula}$ steps are relevant for satisfaction). 
The base case with $\md{\aformula} =0$
is by an easy verification as then $\maxbd{\aformula} = \newbd{0}{\aformula} = 0$ 
and therefore satisfaction of $\aformula$ can be witnessed
on a single node model. For the induction step, let us suppose that for all formulae $\aformulabis$ of modal
depth less than $d$, if $\aformulabis$ has a model then it has model in which  each node has at 
most $\maxbd{\aformulabis}$ children.

Let $\aformula$ be a satisfiable formula in \GML of modal depth $d+1$.
Let $\maxgmod{\aformula} = \set{\aformulabis_1, \ldots,\aformulabis_n}$ and
$\avarprop_1, \ldots, \avarprop_m$ be
the propositional variables in $\aformula$ that are not in the scope of a graded modality.
We write $\DNF{\aformula}$ to denote the set of formulae in disjunctive normal form
logically equivalent to $\aformula$ with atomic formulae among
$\set{\aformulabis_1, \ldots,\aformulabis_n,\avarprop_1, \ldots, \avarprop_m}$.
We exclude from $\DNF{\aformula}$ the conjunctions and disjunctions with repetitions as well as
conjunctions that do not respect the  conditions below to avoid obvious inconsistencies.
Typically, the conjunctions are of the form  (modulo AC and without repetitions)
$$
\Gdiamond{\geq k_1}{\aformula_1} \wedge \ldots \wedge \Gdiamond{\geq k_m}{\aformula_m}
\wedge
\neg \Gdiamond{\geq k_1'}{\aformula_1'} \wedge \ldots \wedge \neg \Gdiamond{\geq k_m'}{\aformula_{m'}'}
\wedge
\aliteral_1 \wedge \cdots \wedge \aliteral_{m''},
$$
where the $\aliteral_i$'s are literals built over $\avarprop_1, \ldots, \avarprop_m$.  Without loss of generality,
we assume that if $\aformula_i = \aformula'_{j}$, then $k'_j > k_i$ and there are no contradictory
literals in $\aliteral_1 \wedge \cdots \wedge \aliteral_{m''}$.

Let $\aformula' \in \DNF{\aformula}$.
As $\aformula'$ is satisfiable too, there is a conjunction $\aformula''$ in $\aformula'$  that is satisfiable, say of the form below:
$$
\aformula'' =
\Gdiamond{\geq k_1}{\aformula_1} \wedge \ldots \wedge \Gdiamond{\geq k_m}{\aformula_m}
\wedge
\neg \Gdiamond{\geq k_1'}{\aformula_1'} \wedge \ldots \wedge \neg \Gdiamond{\geq k_m'}{\aformula_{m'}'}
\wedge
\aliteral_1 \wedge \cdots \wedge \aliteral_{m''}.
$$
By definition of  $\bd{0,\aformula'}$, we have $(k_1 + \cdots + k_m) =  \bd{0,\aformula'} \leq \maxbd{\aformula}$.
Let $\amodel = \triple{\worlds}{\arelation}{\avaluation}$ be a model and $\aworld \in \worlds$ such that
$\amodel, \aworld \models \aformula''$.
By definition of $\models$, for each $i \in \interval{1}{m}$, there is a set $\aset_i$ made
of $k_i$ children of $\aworld$ such that each child in  $\aset_i$  satisfies $\aformula_i$.
Let  $\amodel' = \triple{\worlds'}{\arelation'}{\avaluation'}$ be the model such that
$\worlds' \egdef \set{\aworld} \cup \set{\aworld' \mid \aworld' \in \arelation^*(\aworld''),
\aworld'' \in \aset_1 \cup \cdots \cup \aset_m}$,
$\arelation' = \arelation \cap (\worlds' \times \worlds')$ and $\avaluation'$ is the restriction of
$\avaluation$ to $\worlds'$. It is easy to verify that $\amodel', \aworld \models
\aformula''$ and $\aworld$ has at most $(k_1 + \cdots + k_m)$ children in $\amodel'$.
By the induction hypothesis, for each $i \in \interval{1}{m}$,
there is a model $\amodel_i = \triple{\worlds_i}{\arelation_i}{\avaluation_i}$ and
$\aworld_i \in \worlds_i$ (say $\pair{\worlds_i}{\arelation_i}$ is rooted at $\aworld_i$)
such that $\amodel_i, \aworld_i \models \aformula_i$ and
each node in $\amodel_i$ has at most $\maxbd{\aformula_i}$ children.
As $\aformula_i$ is a subformula of $\aformula$, by definition of $\bd{\aformula}$, we have also
$\maxbd{\aformula_i} \leq \maxbd{\aformula}$.
Let us build the model $\amodel''$ obtained from $\amodel'$ such that for all $i \in \interval{1}{m}$
and for all children $\aworld' \in \aset_i$, we replace the subtree rooted at $\aworld'$ in $\amodel'$
by a copy of $\amodel_i$. It is then easy to verify that $\amodel'', \aworld \models \aformula''$
and each node in $\amodel''$ has at most $\maxbd{\aformula}$ children.
This completes the proof for the induction step.
\end{proof}







\subsection{Proof of \Cref{lemma:f-gml}}

\begin{proof} 
Let $\aformula$ be a formula in
 $\sf{F}$ built over the \GML formulae in
$\set{\Gdiamond{\geq k_1} \aformulater_1, \ldots,
     \Gdiamond{\geq k_{n'}} \aformulater_{n'}}$ and the propositional variables $\avarprop_1$, \ldots, $\avarprop_{\alpha}$.
We write $\set{\aformulater_1^{\star}, \ldots,
     \aformulater_n^{\star}}$ to denote the set $\set{\aformulater_1, \ldots,
     \aformulater_{n'}}$ (therefore $n \leq n'$).
Without loss of generality, we can assume that for all subformulae $(\aformulabis_1 \chopop \aformulabis_2)$
of $\aformula$, we have $\submax{\aformulabis_1} = \submax{\aformulabis_2} = \set{\aformulater_1^{\star}, \ldots,
     \aformulater_n^{\star}}$ (see \Cref{proof:lemma-f-one} for the definition fo $\submax{\aformula}$). In the previous equality, we need to define $\submax{\aformulabis}$ for
the formulae $\aformulabis$ in $\sf{F}$ (as it was done only for \GML formulae so far).
Assuming that  $\aformulabis$ in $\sf{F}$ is built over
$\set{\Gdiamond{\geq l_1} \aformula_1, \ldots,
     \Gdiamond{\geq l_{m}} \aformula_{m}}$, we set
$\submax{\aformulabis_1} \egdef \set{\aformula_1, \ldots, \aformula_{m}}$.

In the case the assumption above is not satisfied, we proceed as follows to lead to a logically equivalent
formula satisfying the condition, at a polynomial computational cost only.
\begin{enumerate}
\itemsep 0 cm
\item Compute the outermost \GML formulae of the form $\Gdiamond{\geq k} \aformulabis$ in $\aformula$.
\item Let $\Phi$ be the tautology $\bigwedge (\Gdiamond{\geq k} \aformulabis \vee \neg \Gdiamond{\geq k} \aformulabis)$
      where the generalised conjunction goes through all the above $\Gdiamond{\geq k} \aformulabis$ in $\aformula$.
\item In a bottom-up fashion, replace $\aformulabis_1 \chopop \aformulabis_2$ by
      $(\aformulabis_1 \wedge \Phi) \chopop (\aformulabis_2 \wedge \Phi)$.
\end{enumerate}
The resulting formula is of polynomial size in the size of $\aformula$. So, in the sequel, we can assume that
for all subformulae $(\aformulabis_1 \chopop \aformulabis_2)$
of $\aformula$, we have $\submax{\aformulabis_1} = \submax{\aformulabis_2} = \set{\aformulater_1^{\star}, \ldots,
     \aformulater_n^{\star}}$.

Let $CD = \cd{\aformula}$ (composition degree of $\aformula$).
In order to define $\aformulabis$ from $\aformula$, we construct
a sequence of formulae
    $\aformula = \aformula_0, \ldots, \aformula_{M} = \aformulabis$ such that:
    \begin{enumerate}
    \itemsep 0 cm
    \item The number of occurrences of $\chopop$ decreases strictly from $\aformula_{i}$ to $\aformula_{i+1}$.
    \item Suppose that $\aformulater_1' \chopop \aformulater_2'$ is a subformula of $\aformula_i$
          at the composition depth $CD' \leq CD$
            such that
           $\aformulater_1', \aformulater_2'$ are  \GML formulae
          and any subformula of $\aformulater_1' \wedge \aformulater_2'$  of the form
          $\Gdiamond{\geq k} \aformulater$ has $k \leq \widehat{k} \times 2^{(CD-CD')}$ and $\newbd{0}{\aformulater}
          \leq n \times \newbd{1}{\aformula}$.
          By using Lemma~\ref{lemma:f-one} and its proof, we replace $\aformulater_1' \chopop \aformulater_2'$
         by the formula $A$ in \GML with $\newbd{0}{A} \leq \widehat{k} \times 2^{(CD+1-CD')} \times 2^n$,
         $\newbd{1}{A} \leq n \times \newbd{1}{\aformula}$ and for all
         $m \geq 2$, $\newbd{m}{A} = \newbd{m}{\aformula}$.
    \end{enumerate}

Let us explain below how to perform the transformation in (2.).
It is worth noting that all the subformulae $\Gdiamond{\geq k} \aformulater$ belonging to $\maxgmod{\aformulater_j'}$ for some $j \in
\set{1,2}$
and obtained by a transformation using Lemma~\ref{lemma:f-one}, has $\aformulater$ already equal to
$\asetbis \wedge \bar{\asetbis}$ for some $\asetbis \subseteq \set{\aformulater_1^{\star}, \ldots,
     \aformulater_n^{\star}}$ and $k \leq  \widehat{k} \times 2^{(CD-CD')}$.
In order to compute $A$  from  $\aformulater_1' \chopop \aformulater_2'$, we perform the following steps.
\begin{enumerate}

\item Let $\tilde{\aformulater_j'}$ be the formula obtained from $\aformulater_j'$
      by replacing any occurrence of $\Gdiamond{\geq k}{\aformulater}$ with $\aformulater \in
      \set{\aformulater_1^{\star}, \ldots,
     \aformulater_n^{\star}}$, by
$$
\bigvee_{\amapbis \colon \interval{1}{k_j} \rightarrow \set{\asetbis \mid \aformulater \in \asetbis \ {\rm and} \ \asetbis
\subseteq \set{\aformulater_1^{\star}, \ldots, \aformulater_n^{\star}}}} \ \ \ \bigwedge_{\asetbis \in \range{\amapbis}} \Gdiamond{\geq \card{\amapbis^{-1}(\asetbis)}}{
(\asetbis \wedge \bar{\asetbis})}.
$$
Hence, if $\aformulater$ were already of the form $\asetbis \wedge \bar{\asetbis}$
in $\aformulater_j'$, nothing is done at this stage.

\item It is easy to check that $\aformulabis_j' \equiv \tilde{\aformulabis_j'}$.
We write $\hat{\aformulabis_j'}$ to denote $\tilde{\aformulabis_j'}$ in DNF of the form below
$$
\hat{\aformulabis_j'} \subseteq \bigvee_{\amap \colon \interval{1}{n''+\alpha} \rightarrow \Bool} ((\Gdiamond{\geq l_1}{\aformulater_1^{\star \star}})^{\amap(1)} \wedge \cdots \wedge
(\Gdiamond{\geq l_{n''}}{\aformulater_{n''}^{\star \star}})^{\amap(n'')}) \wedge (\avarprop_1^{\amap(n''+1)} \wedge \cdots \wedge \avarprop_{\alpha}^{\amap(n''+\alpha)}),
$$
with $l_k \leq  \widehat{k} \times 2^{(CD-CD')}$, and there are at most $2^n$ distinct $\aformulater_k^{\star \star}$.
Consequently, $\hat{\aformulater_1'} \chopop \hat{\aformulater_2'}$ is logically equivalent to a disjunction of the form:
$$
\bigvee_{\amap, \amap'}
\Big(
(\avarprop_1^{\amap(n''+1)} \wedge \cdots \wedge \avarprop_{\alpha}^{\amap(n''+\alpha)}) \wedge
(\avarprop_1^{\amap'(n''+1)} \wedge \cdots \wedge \avarprop_{\alpha}^{\amap'(n''+\alpha)}) \wedge
$$
$$
\big((\Gdiamond{\geq l_1}{\aformulater_1^{\star \star}})^{\amap(1)} \wedge \cdots \wedge
(\Gdiamond{\geq l_{n''}}{\aformulater_{n''}^{\star \star}})^{\amap(n'')}
\chopop
(\Gdiamond{\geq l_1}{\aformulater_1^{\star \star}})^{\amap'(1)} \wedge \cdots \wedge
(\Gdiamond{\geq l_{n''}}{\aformulater_{n''}^{\star \star}})^{\amap'(n'')}
\big)
\Big).
$$
By Lemma~\ref{lemma:elimination-three-two}, the subformula with outermost connective $\chopop$ can be rewritten
as a \GML formula $A$ with  graded rank at most twice the maximal graded rank (i.e. $2 \times \widehat{k} \times  2^{(CD-CD')}$)
and with $\card{\submax{A}} \leq 2^n$. Note that the condition of being in good shape is guaranteed by
construction of $\tilde{\aformulater_j'}$.
\end{enumerate}

The formula $\aformulabis$ is obtained from $\aformula$ by applying the above transformations. As the number
of occurrences of $\chopop$ decreases strictly, we get some formula $\aformula_M$ in \GML logically equivalent to $\aformula$.
It remains to  check that the outcome formula $\aformulabis = \aformula_M$ satisfies the announced quantitative properties.
\end{proof}

\subsection{Proof of \Cref{lemma:small-gml-formula}}
 Given a formula $\aformula$ in \modallogicCC or in $\sf{F}$, recall that we write $\cd{\aformula}$ to denote
its \defstyle{composition degree}, i.e. the maximal number of imbrications of $\chopop$ in $\aformula$.
Similarly, we write $\dweight{\aformula}$ to denote
its \defstyle{diamond weight}, i.e. the  number of distinct subformulae of $\aformula$
whose outermost connective is a modality $\Diamond$ or $\Gdiamond{\geq k}$.

The following lemma subsumes \Cref{lemma:small-gml-formula}.

\begin{lemma}[Exponential-size model property]
\label{lemma:exponential-size-property-CC}
Let $\aformula$ be a formula in \modallogicCC.
Then, there is a \GML formula $\aformula'$ such that $\aformula' \equiv \aformula$
and $\maxbd{\aformula'} \leq \gr{\aformula} \times (\dweight{\aformula})^{\md{\aformula}} \times 2^{\cd{\aformula}} \times 2^{\dweight{\aformula}}$ and
$\md{\aformula'} \leq \md{\aformula}$.
\end{lemma}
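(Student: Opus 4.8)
The plan is to prove \Cref{lemma:exponential-size-property-CC} by induction on the composition degree $\cd{\aformula}$, using \Cref{lemma:f-gml} as the engine for eliminating one ``layer'' of $\chopop$ at a time, while carefully tracking how the four quantitative parameters ($\gr{}$, $\dweight{}$, $\cd{}$, $\md{}$) and the quantities $\newbd{m}{}$ evolve. First I would handle the base case: if $\cd{\aformula} = 0$ then $\aformula$ is already a \GML formula, so take $\aformula' = \aformula$; here $\maxbd{\aformula} = \max_m \newbd{m}{\aformula}$ and each $\newbd{m}{\aformula}$ is a sum of at most $\dweight{\aformula}$ graded ranks each bounded by $\gr{\aformula}$, giving $\maxbd{\aformula} \le \gr{\aformula}\times\dweight{\aformula}$, which is comfortably below the claimed bound.

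For the inductive step, I would locate a \emph{maximal subformula of the fragment $\sf{F}$} inside $\aformula$ --- that is, a subformula built only from $\chopop$, $\wedge$, $\neg$, propositional variables and \GML-formulae $\Gdiamond{\geq k}\aformulabis$ at its leaves, and maximal with this property --- apply \Cref{lemma:f-gml} to it to obtain an equivalent \GML formula, substitute back, and observe that this strictly decreases $\cd{}$ (indeed the whole $\sf{F}$-subformula collapses to \GML). The crux is the bookkeeping: \Cref{lemma:f-gml} tells us that after one such replacement, $\newbd{0}{}$ of the new formula is at most $\widehat{k}\times 2^{n+\cd{}}$ where $n \le \dweight{\aformula}$ is the number of distinct graded subformulae in the $\sf{F}$-block and $\widehat k \le \gr{\aformula}$; that $\newbd{1}{}$ is multiplied by at most $n \le \dweight{\aformula}$; and that $\newbd{m}{}$ for $m\ge 2$ is untouched. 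Also $\md{}$ does not increase (item 1 of \Cref{lemma:f-gml}) and the graded rank bound from \Cref{lemma:elimination-three-two} keeps $\gr{}$ from blowing up uncontrollably. Iterating the $\newbd{1}{}$-multiplication across the at most $\md{\aformula}$ modal layers of the formula is exactly what produces the factor $(\dweight{\aformula})^{\md{\aformula}}$, the factor $2^{\cd{\aformula}}$ comes from the nesting depth of $\chopop$, and the factor $2^{\dweight{\aformula}}$ from the $2^{n}$ expansion in \ref{axiom:guess}; multiplying by $\gr{\aformula}$ for the base-level graded ranks gives the stated bound on $\maxbd{\aformula'}$. Once $\maxbd{\aformula'}$ is exponential in $\fsize{\aformula}$ and $\md{\aformula'}\le\md{\aformula}$, \Cref{lemma:small-model-newbd} yields a pointed forest of size at most $\maxbd{\aformula'}^{\md{\aformula'}+1}$, which is still exponential in $\fsize{\aformula}$, establishing \Cref{lemma:small-gml-formula}.

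The main obstacle I anticipate is precisely this quantitative invariant: one must phrase the induction hypothesis strongly enough that the parameters $\gr{}$, $\dweight{}$, $\newbd{1}{}$ of the \emph{subformulae} that are fed into later applications of \Cref{lemma:f-gml} stay polynomially controlled, so that the only genuinely exponential growth is the one-shot blow-up on $\newbd{0}{}$ and the $2^{\dweight{}}$ from the guess-expansion, rather than an exponential compounding at every layer. Concretely, the danger is that applying \Cref{lemma:f-gml} to an inner $\sf{F}$-block enlarges the graded subformulae appearing at shallower modal depths, which would feed back into $\dweight{}$ or $\newbd{1}{}$ of an outer block and cause a tower rather than a single exponential. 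The way around this is to exploit that \Cref{lemma:f-gml} leaves $\newbd{m}{}$ for $m\ge 2$ exactly unchanged and only multiplies $\newbd{1}{}$ by $n$, together with processing the $\sf{F}$-blocks from innermost to outermost so that each modal layer is multiplied by $\dweight{}$ at most once; a careful induction on modal depth (nested inside the induction on $\cd{}$) makes this precise. I would also need to verify that the substitution step preserves logical equivalence, which is routine since $\equiv$ is a congruence for all the connectives of \modallogicCC.
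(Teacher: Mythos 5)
Your plan matches the paper's proof essentially exactly: the paper likewise eliminates innermost $\sf{F}$-blocks one at a time via \Cref{lemma:f-gml}, maintaining the invariant that the graded subformulae occurring at modal depth $D'$ have $\mathtt{bd}$ values bounded by $\gr{\aformula}\times(\dweight{\aformula})^{\md{\aformula}-D'}\times 2^{\cd{\aformula}}\times 2^{\dweight{\aformula}}$, so that the factor $(\dweight{\aformula})^{\md{\aformula}}$ arises from the per-layer multiplication of $\newbd{1}{\cdot}$ while the blow-ups $2^{\cd{\aformula}}$ and $2^{\dweight{\aformula}}$ hit only $\newbd{0}{\cdot}$ and do not compound. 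The only cosmetic difference is the termination measure: the paper decreases the number of occurrences of $\chopop$ rather than $\cd{\aformula}$ (replacing a single maximal block need not lower the global composition degree), which does not affect the argument.
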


We recall that $\newbd{m}{\aformula}$ can be understood as the maximal
$\newbd{0}{\aformulabis}$ for some subformula $\aformulabis$ occurring at the modal depth $m$ within 
$\aformula$ and 
$\maxbd{\aformula}$ is equal to $\max \set{\newbd{m}{\aformula}
\mid  m \in \interval{0}{\md{\aformula}}}$.

\begin{proof} 
Based on Lemma~\ref{lemma:f-gml} and on its proof, one can show the following property.
Then, we shall explain how to compute  $\aformula'$  from $\aformula$.

Let $\aformula$ be a formula in the fragment $\sf{F}$ built over \GML formulae in
$\set{\Gdiamond{\geq k_1} \aformulater_1, \ldots,
     \Gdiamond{\geq k_n} \aformulater_n}$, $k_{max} = \max \set{k_1, \ldots, k_n}$
and $\maxbd{\aformulater_i} \leq B$ for all $i \in \interval{1}{n}$ for some $B \geq 0$.
By \Cref{lemma:f-gml}, there is a \GML
formula $\aformulabis$ such that
\begin{enumerate}
\itemsep 0 cm
\item $\aformula \equiv \aformulabis$,
\item $\newbd{0}{\aformulabis} \leq k_{max}  \times 2^{\cd{\aformula}} \times 2^n$,
\item $\newbd{1}{\aformulabis} \leq n \times B$,
\item $\newbd{m}{\aformulabis} \leq B$ for all $m \geq 2$,
\item $\md{\aformulabis} \leq \md{\aformula}$.
\end{enumerate}
Consequently, $\maxbd{\aformulabis} \leq \max \set{ k_{max}  \times 2^{\cd{\aformula}} \times 2^n, n \times B}$.
Let $\aformula$ be an \modallogicCC formula with $D = \md{\aformula}$.
In order to define $\aformula'$ from $\aformula$, we define a sequence of formulae
    $\aformula = \aformula_0, \ldots, \aformula_{M} = \aformula'$ such that:
    \begin{enumerate}
    \itemsep 0 cm
    \item The number of occurrences of $\chopop$ decreases strictly from $\aformula_{i}$ to $\aformula_{i+1}$.
    \item Suppose that $\Gdiamond{\geq k} \aformulabis$ is a subformula of $\aformula_i$ at modal depth $D' \leq D$ such that
          $\aformulabis$ belongs to the fragment $\sf{F}$ and it contains at least one occurrence of $\chopop$.
          If $\aformulabis$ is built upon
          $\Gdiamond{\geq k_1} \aformulabis_1, \ldots,\Gdiamond{\geq k_n} \aformulabis_n$, then
          $n \leq \dweight{\aformula}$ and
          for all $i \in \interval{1}{n}$, we have  $\maxbd{\aformulabis_i} \leq
          \gr{\aformula} \times (\dweight{\aformula})^{D-D'} \times 2^{\cd{\aformula}} \times 2^{\dweight{\aformula}}$.
    \end{enumerate}

    Let us explain how the substitutions are operated.
    If $\aformula$ belongs to the fragment $\sf{F}$, then we apply Lemma~\ref{lemma:f-gml} getting
    $\aformulabis \equiv \aformula$ with $\aformulabis$ in \GML and
    $
    \maxbd{\aformulabis} \leq
    \max(
     \gr{\aformula}  \times 2^{\cd{\aformula}} \times 2^{\dweight{\aformula}},
     \dweight{\aformula} \times (\dweight{\aformula} \times  \gr{\aformula})
    )
    $.

    Now assume that $\aformula_i$ contains some occurrences of $\chopop$ in the scope of a graded modality.
    There is necessarily a subformula $\Gdiamond{\geq k} \aformulabis$ of $\aformula_i$, say at modal depth $D' \leq D$ such that
    $\aformulabis$ belongs to the fragment $\sf{F}$ and it contains at least one occurrence of $\chopop$.
    We can assume that $\aformulabis$ is built from
    $\Gdiamond{\geq k_1} \aformulabis_1, \ldots,\Gdiamond{\geq k_n} \aformulabis_n$ with $n \leq \dweight{\aformula}$
    and by the induction hypothesis,
    $\maxbd{\aformulabis_i} \leq
          \gr{\aformula} \times (\dweight{\aformula})^{D-D'} \times 2^{\cd{\aformula}} \times 2^{\dweight{\aformula}}$.
    By the variant of Lemma~\ref{lemma:f-gml} stated above,
    there is $\aformulabis'$ in \GML such that
    \begin{enumerate}
    \itemsep 0 cm
    \item $\aformulabis \equiv \aformulabis'$,
    \item $\newbd{0}{\aformulabis'} \leq k_{max}  \times 2^{\cd{\aformula}} \times 2^n \leq \gr{\aformula} \times  2^{\cd{\aformula}} \times 2^{\dweight{\aformula}}$,
    \item $\newbd{1}{\aformulabis'} \leq n \times  \gr{\aformula} \times (\dweight{\aformula})^{D-D'} \times 2^{\cd{\aformula}} \times 2^{\dweight{\aformula}} \leq
    \gr{\aformula} \times (\dweight{\aformula})^{D+1-D'} \times 2^{\cd{\aformula}} \times 2^{\dweight{\aformula}}$.
    \item $\newbd{m}{\aformulabis'} \leq \gr{\aformula} \times (\dweight{\aformula})^{D-D'} \times 2^{\cd{\aformula}} \times 2^{\dweight{\aformula}}$ for all $m \geq 2$,
    \item $\md{\aformulabis'} \leq \md{\aformula}$.
    \end{enumerate}
    Let $\aformula_{i+1}$ be obtained from $\aformula_{i}$ by replacing $\Gdiamond{\geq k} \aformulabis$ by $\Gdiamond{\geq k} \aformulabis'$.
    Since the substitution is performed in a bottom-up manner, still, if
    $\Gdiamond{\geq k} \aformulater$ is a subformula  of $\aformula_{i+1}$ such that
    $\aformulater$ belongs to the fragment $\sf{F}$, it contains at least one occurrence of $\chopop$
    and it is built over
    $\Gdiamond{\geq k_1} \aformulater_1, \ldots,\Gdiamond{\geq k_{\alpha}} \aformulater_{\alpha}$ then $\alpha \leq \dweight{\aformula}$.
\end{proof}

\cut{
\begin{proof}  In short, we use the constructions involved in the proofs of
    Lemma~\ref{lemma:elimination-one}, Lemma~\ref{lemma:elimination-two} and
    Lemma~\ref{lemma:elimination-four} but some refinements are operated.
    For instance, we present a way to eliminate the operator $\chopop$ for a fragment
     of \modallogicCC that does not provide a systematic  exponential blow-up for the elimination
    of each occurrence of $\chopop$ but for a family of the occurrences of $\chopop$ in a given modal context, which allows
    us to tame the combinatorial explosion.

    The main property, let us call it ($\heartsuit$), is stated below and this allows us
to perform successive  replacements.
    Let us consider the following fragment of \modallogicCC (let us call it $\sf{F}$) defined by the grammar below:
    $$
    \aformula::= \Gdiamond{\geq k} \aformulabis \ \mid \ \avarprop \ \mid \
                  \aformula \chopop \aformula  \ \mid \
                  \aformula \wedge \aformula  \ \mid \
                  \neg \aformula,
    $$
    where $\avarprop \in \varprop$ and $\Gdiamond{\geq k} \aformulabis$ is a formula in \GML (abusively
    assumed in \modallogicCC but we have seen that $\GML \preceq \modallogicCC$).
    We write $\toppybd{\aformula}$ to denote the value computed as follows:
    \begin{enumerate}
    \itemsep 0 cm
    \item $\toppybd{\avarprop} = 0$; $\toppybd{\Gdiamond{\geq k} \aformulabis} = k$,
    \item $\toppybd{\aformula \wedge \aformula'} = \max(\toppybd{\aformula},\toppybd{\aformula'})$;
          $\toppybd{\neg \aformula} =  \toppybd{\aformula}$,
    \item $\toppybd{\aformula \chopop \aformula'} = \toppybd{\aformula} + \toppybd{\aformula'}$.
    \end{enumerate}
    When $\aformula$ is in $\sf{F}$ and in \GML, the value $\toppybd{\aformula}$ is simply the maximal $k$ occurring in
    a maximal graded subformula in $\aformula$. In full generality, a formula in $\sf{F}$ may not belong to \GML because of the
    possible presence of $\chopop$.

    Let $\aformula$ be a formula in the  fragment $\sf{F}$ built over the \GML formulae
    $\Gdiamond{\geq k_1} \aformulabis_1, \ldots,\Gdiamond{\geq k_n} \aformulabis_n$. Then, one can show that
    by using mainly transformations from Section~\ref{section-cc-less-gml} (or variants), ($\heartsuit$) there is $\aformula'$ in \GML such that
    \begin{itemize}
    \itemsep 0 cm
    \item $\aformula \equiv \aformula'$,
    \item $\card{\maxgmod{\aformula'}} \leq 2^n$; $\toppybd{\aformula'} \leq  \toppybd{\aformula}$,
    \end{itemize}
    Hence, the satisfaction of the two above conditions entails $\topbd{\aformula'} \leq 2^n \times \toppybd{\aformula}$.

    Before proving the property ($\heartsuit$), let us show that this leads to the property  $\bd{\aformula'} \leq
    \fsize{\aformula} \times 2^{\fsize{\aformula}}$ ($\aformula$ is an arbitrary formula in \modallogicCC and $\aformula'$ is in
    \GML).
    In order to define $\aformula'$ from $\aformula$, we define a sequence of formulae
    $\aformula = \aformula_0, \ldots, \aformula_{M} = \aformula'$ such that:
    \begin{enumerate}
    \item The number of occurrences of $\chopop$ decreases strictly from $\aformula_{i}$ to $\aformula_{i+1}$.
    \item For every subformula $\aformulabis$ of $\aformula_i$ in \GML, $\bd{\aformulabis} \leq \fsize{\aformula} \times 2^{\fsize{\aformula}}$.
    \item Suppose that $\Gdiamond{\geq k} \aformulabis$ is a subformula of $\aformula_i$ such that
          $\aformulabis$ belongs to the fragment $\sf{F}$ and it contains at least one occurrence of $\chopop$.
          If $\aformulabis$ is built upon
          $\Gdiamond{\geq k_1} \aformulabis_1, \ldots,\Gdiamond{\geq k_n} \aformulabis_n$, then
          $n \leq 2^{\fsize{\aformula}}$ and $\toppybd{\aformulabis} \leq \fsize{\aformula}$.
    \end{enumerate}

    Let us explain how the substitutions are operated.
    If $\aformula$ belongs to the fragment $\sf{F}$, it is easy to check that we are done.
    Now assume that $\aformula_i$ contains some occurrences of $\chopop$ in the scope of a graded modality.
    There is necessarily a subformula $\Gdiamond{\geq k} \aformulabis$ of $\aformula_i$ such that
    $\aformulabis$ belongs to the fragment $\sf{F}$ and it contains at least one occurrence of $\chopop$.
    We can assume that $\aformulabis$ is built from
    $\Gdiamond{\geq k_1} \aformulabis_1, \ldots,\Gdiamond{\geq k_n} \aformulabis_n$
    and by the induction hypothesis,  $\bd{\aformulabis_j} \leq \fsize{\aformula} \times 2^{\fsize{\aformula}}$.
    By the property ($\heartsuit$) --yet to be proved,
    there is $\aformulabis'$ in \GML such that
    \begin{itemize}
    \item $\aformulabis \equiv \aformulabis'$ and $\topbd{\aformulabis'} \leq 2^n \times \toppybd{\aformulabis}$ with
    $\max(n,\toppybd{\aformulabis}) \leq \fsize{\aformula}$,
    \item $\toppybd{\aformulabis'} \leq  \toppybd{\aformulabis}$ and $\card{\maxgmod{\aformulabis'}} \leq 2^n$.
    \end{itemize}
    Note that we can assume that $n \leq \fsize{\aformula}$ as the substitutions are performed in a bottom-up manner.
    The formula $\aformula_{i+1}$ is built from $\aformula_i$ by replacing $\aformulabis$ by $\aformulabis'$ and logical equivalence
    is guaranteed.
    One can check that
    for every subformula $\aformulabis$ of $\aformula_{i+1}$ in \GML,
    $\bd{\aformulabis} \leq \fsize{\aformula} \times 2^{\fsize{\aformula}}$ and the condition (3.) holds.

    Now, let us conclude the proof by showing the property ($\heartsuit$) itself.
    Let $\aformula$ be a formula in the  fragment $\sf{F}$ built over the \GML formulae
    $\Gdiamond{\geq k_1} \aformulabis_1, \ldots,\Gdiamond{\geq k_n} \aformulabis_n$.
    Let us build a formula $\aformula'$ in \GML by defining (again) a sequence of formulae in $\sf{F}$, say
    $\aformula = \aformula_0, \ldots, \aformula_{N} = \aformula'$, such that:
    \begin{enumerate}
    \item The number of occurrences of $\chopop$ decreases strictly from $\aformula_{i}$ to $\aformula_{i+1}$.
    \item Suppose that $\aformulater_1' \chopop \aformulater_2'$ is a subformula of $\aformula_i$ such that
           $\aformulater_1', \aformulater_2'$ are in \GML,
          $\aformulater_1' \chopop \aformulater_2'$ is at the occurrence $\rho$ in $\aformula_i$ and
          $\aformulater_1^{\star}  \chopop \aformulater_2^{\star} $ is the subformula at the occurrence $\rho$ in $\aformula$.
          An occurrence $\rho$ is understood as an element of $\Nat^*$ corresponding to a node in the tree encoding $\aformula$.
          The formulae $\aformulater_1' \chopop \aformulater_2'$ and $\aformulater_1^{\star}  \chopop \aformulater_2^{\star} $ can be at the same occurrence
          in their respective formulae, as the transformations are performed in a bottom-up manner.
          Moreover, for $i \in \set{1,2}$, we assume  that $\maxgmod{\aformulater_i'}$
          contains original formulae $\Gdiamond{\geq k_1} \aformulabis_1, \ldots,\Gdiamond{\geq k_n} \aformulabis_n$
          from $\aformula$, and formulae $\Gdiamond{\geq k_1^1} \aformulabis_1^1, \ldots,\Gdiamond{\geq k_n^1} \aformulabis_{m_1}^1$,
         \ldots,
         $\Gdiamond{\geq k_1^{h}}{\aformulabis_1^{h}}, \ldots,\Gdiamond{\geq k_n^{h}}{\aformulabis_{m_h}^h}$,
         such that for each $j \in \interval{1}{h}$,
         $\set{\aformulabis_1^{j}, \ldots,\aformulabis_{m_j}^j}$ defines a partition
         ($\aformulabis_1^j \vee \cdots \vee \aformulabis_m^j$ is valid and for all $u \neq u'$, we have
    $\aformulabis_u^j \wedge \aformulabis_{u'}^j$ is unsatisfiable).
         For  each $j \in \interval{1}{h}$ and for each $j' \in \interval{1}{m_j}$,
         the formula $\aformulabis_{j'}^{j}$ is a conjunction of at most $l_j$ formulae from $\aformula$ or their negations.
        Finally, we also have $l_1 + \cdots + l_h + n \leq \fsize{\aformulater_i^{\star}}$.
    \end{enumerate}
    The essential point above rests on the satisfaction of the inequality $l_1 + \cdots + l_h + n \leq \fsize{\aformulater_i^{\star}}$, which guarantees
    that the substitutions induced by the proof of Lemma~\ref{lemma:elimination-two} do not produce conjunctions of length greater than
    the size of the original formulae (and therefore the length does not reach a  value that is a tower of exponentials of height linear
    in the size of the original formula).

    It is worth observing that the above property already holds for $i = 0$ as if $\aformulater_1' \chopop \aformulater_2'$
    is  a subformula of $\aformula$, then each set  $\maxgmod{\aformulater_i'}$ can only contain subformulae of $\aformula$.
    In general, if $\aformula$ is $\chopop$-free, it is easy to check that we are done.
    Otherwise, suppose that $\aformula_i$ contains an occurrence of $\chopop$. So, there is a subformula
    $\aformulater_1' \chopop \aformulater_2'$ of  $\aformula_i$, $\aformulater_1', \aformulater_2'$ are in \GML,
    and $\aformulater_1' \chopop \aformulater_2'$ satisfies the above conditions by the induction hypothesis.
    By using the transformation from Lemma~\ref{lemma:elimination-one}, one can easily obtain
    a formula  $\aformula_0$ such that
    $\toppybd{\aformulater_1' \chopop \aformulater_2'} = \toppybd{\aformula_0}$ and $\aformula_0$ also satisfies
    the above  conditions. Without loss of generality, we assume that $\aformulater_1' \chopop \aformulater_2'$
    has no occurrence of a propositional variable at the top level.
    Instead of applying the transformation from the proof of Lemma~\ref{lemma:elimination-two}
    to $\aformulater_1' \chopop \aformulater_2'$, we consider a slight variant
    so that to obtain a formula $\aformulater_1'' \chopop \aformulater_2''$, where
    maximal subformulae whose outermost connective is a graded modality, are of the form
    $$
    \aformulater_1 \wedge \cdots \wedge \aformulater_n \wedge \aformulabis^{1}_{i_1} \wedge \cdots \wedge \aformulabis^{h}_{i_h},
    $$
    each $\aformulater_i$ is either $\aformulabis_i$ or $\neg \aformulabis_i$.
    Indeed, since for each $j \in \interval{1}{h}$,
    $\set{\aformulabis_1^{j}, \ldots,\aformulabis_{m_j}^j}$ defines a partition,
    $\aformulabis_{i}^{j} \wedge \aformulabis_{i'}^j$ is unsatisfiable and
    $\aformulabis_{i}^{j} \wedge \neg \aformulabis_{i'}^j$ is logically equivalent
    to $\aformulabis_{i}^{j}$. This avoids the blow-up strictly above exponential values
    and it is easy to check that
    the set of all these conjunctions forms a partition.

    Before applying the transformation in  Lemma~\ref{lemma:elimination-three} on the subformula
     $\aformulater_1'' \chopop \aformulater_2''$, let us note that whenever
    $\Gdiamond{\geq k}{\aformulabis''}$ occurs in some $\aformulater_i''$,
    by the induction hypothesis, $\aformulabis''$ is a conjunction made of at most
    $\fsize{\aformulater^{\star}_i}$ conjuncts that are formulae of $\aformula$ or its negations.
    By Lemma~\ref{lemma:elimination-three}, we get a formula $\aformulabis$ logically equivalent to
    $\aformulater_1'' \chopop \aformulater_2''$ such that the conjunctions in outermost graded formulae
    have at most $\fsize{\aformulater_1^{\star} \chopop \aformulater_2^{\star}}$ elements, and the maximal $k$ in
    $\aformulabis$ is at most the sum of the respective maximal $k_i$ in $\aformulater_i^{\star}$ (or the maximal
    value in $\aformulater_i'$ or in $\aformulater_i''$).  Hence, $\toppybd{\aformulabis}
    \leq \toppybd{\aformulater_1' \chopop \aformulater_2'}$ and
    $\topbd{\aformulabis} \leq \fsize{\aformula} \times 2^{\fsize{\aformula}}$.
    By defining $\aformula_{i+1}$ as the formula obtained from $\aformula_i$
    by replacing the occurrences of $\aformulater_1' \chopop \aformulater_2'$
    by the formula obtained from $\aformulater_1' \chopop \aformulater_2'$ after applications
    of the transformations for Lemma~\ref{lemma:elimination-one}  and Lemma~\ref{lemma:elimination-two},
    and by replacing each  subformula $\aformulater_1'' \chopop \aformulater_2''$ by its equivalent \GML formula $\aformulabis$,
    we can show that $\aformula_{i+1}$ satisfies also the conditions for $i+1$.
    Eventually, we get $\aformula_N$  such that $\topbd{\aformula_N} \leq \fsize{\aformula} \times 2^{\fsize{\aformula}}$.

    \cut{
    One can also easily verify that the final formula $\aformula'$ such that
    $\aformula \equiv \aformula'$ and $\aformula'$ is in \GML can be computed in exponential-time in $\fsize{\aformula}$
    and $\fsize{\aformula'}$ is in $2^{\mathcal{O}(\fsize{\aformula}^2)}$.
    Each application of the transformation from  Lemma~\ref{lemma:elimination-one} (resp.
    Lemma~\ref{lemma:elimination-two}, Lemma~\ref{lemma:elimination-three})  requires an exponential-time in  $\fsize{\aformula}$
    but hopefully this applies only an $\mathcal{O}(\fsize{\aformula})$ number of times. So, one application
    of the transformation related to the property ($\heartsuit$) induces (only) an exponential blow-up.
    By applying recursively this transformation (leading to the sequence $\aformula_0, \ldots, \aformula_M$),
    one gets that $\fsize{\aformula_{i+1}} \leq \fsize{\aformula} \times 2^{\fsize{\aformula}} \times \fsize{\aformula_{i}}$,
    which allows to obtain the final size. Moreover, all these transformations can be performed in exponential-time in
    $\fsize{\aformula_0}$.
    }
    \end{proof}
}

\subsection{Proof of Lemma~\ref{lemma:correctness-PL}}


This section contains the proof of Lemma~\ref{lemma:correctness-PL} 
and its first part  is dedicated to preliminary definitions and results. 

Given $\apropset = \set{\avarprop_1, \dots, \avarprop_m}$ and
a finite forest $\amodel = \triple{\worlds}{\arelation}{\avaluation}$, 
for all $\aworld', \aworld'' \in \worlds$, we write $\aworld' \approx_{\apropset} \aworld''$ iff
for all $i \in \interval{1}{m}$, we have $\amodel, \aworld' \models \avarprop_i$ iff 
 $\amodel, \aworld'' \models \avarprop_i$, i.e. $\aworld'$ and $\aworld''$ agree on the
truth values of all the propositional variables in $\apropset$. 
As done in Section~\ref{section-aexppol}, we recall that $\apropsetbis = \set{ \avarpropbis_1, \dots, \avarpropbis_{n+1}}$. 

\begin{lemma} 
\label{lemma-plcopies}
Let 
$\emptyset \neq \aset \subseteq \interval{1}{n+1}$ and 
$\pair{\amodel}{\aworld}$ be a pointed forest such that
$\amodel, \aworld \models \tluniq{\apropsetbis}$. We have
$\amodel, \aworld \models \tlcopies{\aset}$ iff for all $\aworld' 
\in \arelation(\aworld) \cap (\bigcup_{k \in \aset} \avaluation(\avarpropbis_k))$, 
$
\aset \subseteq \set{k \in \interval{1}{n+1} 
\mid \ {\rm there \ is} \ \aworld'' \in \arelation(\aworld) \ {\rm such \ that} \ \aworld' \approx_{\apropset} \aworld'' \ {\rm and} \
\amodel, \aworld'' \models \avarpropbis_k}
$.
\end{lemma}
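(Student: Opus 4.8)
The plan is to unwind, for each pair $k\neq k'\in\aset$, the satisfaction at $\aworld$ of the single conjunct of $\tlcopies{\aset}$ indexed by $(k,k')$, using the hypothesis $\amodel,\aworld\models\tluniq{\apropsetbis}$ to control how the composition operator $\chopop$ may distribute the children of $\aworld$. Write $\tlcopies{\aset}=\bigwedge_{k\neq k'\in\aset}\neg(\Box\avarpropbis_k\chopop\psi_{k,k'})$ with $\psi_{k,k'}\egdef\Gdiamond{=1}{\avarpropbis_k}\wedge\neg(\true\chopop\chi_{k,k'})$ and $\chi_{k,k'}\egdef\Gdiamond{=1}{\avarpropbis_k}\wedge\Gdiamond{=1}{\avarpropbis_{k'}}\wedge\bigwedge_{j\in\interval{1}{m}}(\Diamond\avarprop_j\Rightarrow\Box\avarprop_j)$.

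First I would record two harmless facts. Every formula occurring in $\tlcopies{\aset}$ has modal degree at most one, so by \Cref{lemma:height-at-most-n} its satisfaction at $\aworld$ depends only on $\aworld$, its children, and their truth values for the propositions in $\apropset\cup\apropsetbis$; moreover each $\avarprop_j$ and $\avarpropbis_i$ is propositional, hence preserved under $+$ and $+_{\aworld}$, and so is the relation $\approx_{\apropset}$. Since $\amodel,\aworld\models\tluniq{\apropsetbis}$, every child of $\aworld$ satisfies exactly one $\avarpropbis_i$; I call such a child an $i$-child.

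Next, the key computation. Suppose $\amodel=\amodel_1+_{\aworld}\amodel_2$ witnesses $\amodel,\aworld\models\Box\avarpropbis_k\chopop\psi_{k,k'}$. From $\amodel_1,\aworld\models\Box\avarpropbis_k$, every child of $\aworld$ in $\amodel_1$ is a $k$-child, and since $+_{\aworld}$ partitions $\arelation(\aworld)$, every child of $\aworld$ that is not a $k$-child lies in $\amodel_2$; from $\amodel_2,\aworld\models\Gdiamond{=1}{\avarpropbis_k}$, exactly one $k$-child of $\aworld$, say $\aworld'$, lies in $\amodel_2$. In particular the $k'$-children of $\aworld$ in $\amodel_2$ coincide with the $k'$-children of $\aworld$ in $\amodel$ (as $k'\neq k$). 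I then claim $\amodel_2,\aworld\models\true\chopop\chi_{k,k'}$ iff some $k'$-child $\aworld''$ of $\aworld$ satisfies $\aworld'\approx_{\apropset}\aworld''$: in a witnessing decomposition $\amodel_2=\amodel_2^1+_{\aworld}\amodel_2^2$ the two conjuncts $\Gdiamond{=1}{\avarpropbis_k}$, $\Gdiamond{=1}{\avarpropbis_{k'}}$ pin the unique $k$-child and $k'$-child of $\aworld$ in $\amodel_2^2$ to $\aworld'$ and some (necessarily distinct, by $\tluniq{\apropsetbis}$) $\aworld''$, while $\bigwedge_j(\Diamond\avarprop_j\Rightarrow\Box\avarprop_j)$ forces all children of $\aworld$ in $\amodel_2^2$ to agree on every $\avarprop_j$, whence $\aworld'\approx_{\apropset}\aworld''$; conversely, given such a $k'$-child $\aworld''$ of $\aworld$ (which, not being a $k$-child, lies in $\amodel_2$), keeping exactly $\aworld'$ and $\aworld''$ together with their subtrees as the $\aworld$-children of $\amodel_2^2$ produces a decomposition satisfying all three conjuncts of $\chi_{k,k'}$. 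Combining, $\amodel,\aworld\models\Box\avarpropbis_k\chopop\psi_{k,k'}$ iff $\aworld$ has a $k$-child admitting no $\approx_{\apropset}$-equivalent $k'$-child of $\aworld$; negating, $\amodel,\aworld\models\neg(\Box\avarpropbis_k\chopop\psi_{k,k'})$ iff every $k$-child $\aworld'$ of $\aworld$ has some $k'$-child $\aworld''$ of $\aworld$ with $\aworld'\approx_{\apropset}\aworld''$.

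Finally, conjoining over all $k\neq k'\in\aset$ yields: $\amodel,\aworld\models\tlcopies{\aset}$ iff for all $k,k'\in\aset$ every $k$-child $\aworld'$ of $\aworld$ has a $k'$-child $\aworld''$ of $\aworld$ with $\aworld'\approx_{\apropset}\aworld''$ — the extra diagonal instances $k=k'$ being vacuous, taking $\aworld''=\aworld'$. Observing that ``$\aworld'$ ranges over the $k$-children of $\aworld$ for some $k\in\aset$'' is precisely ``$\aworld'\in\arelation(\aworld)\cap\bigcup_{k\in\aset}\avaluation(\avarpropbis_k)$'', and that ``for every $k'\in\aset$ there is a $k'$-child of $\aworld$ that is $\approx_{\apropset}$-equivalent to $\aworld'$'' rewrites as $\aset\subseteq\{k\in\interval{1}{n+1}\mid\exists\,\aworld''\in\arelation(\aworld),\ \aworld'\approx_{\apropset}\aworld''\text{ and }\amodel,\aworld''\models\avarpropbis_k\}$, one obtains exactly the statement. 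The main obstacle is the bookkeeping of the nested $\chopop$'s: tracking which children of $\aworld$ are forced onto which side of the decomposition at each level, and checking that an $\approx_{\apropset}$-witness extracted inside a submodel is a genuine child of $\aworld$ in $\amodel$; the two preliminary remarks (modal degree one and propositional preservation) are what keep this routine.
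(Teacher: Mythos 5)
Your proposal is correct and follows essentially the same route as the paper: the paper likewise fixes a pair $k\neq k'$, unwinds the nested $\chopop$'s via intermediate characterisations (unicity from $\tluniq{\apropsetbis}$, uniformity from $\bigwedge_j(\Diamond\avarprop_j\Rightarrow\Box\avarprop_j)$, a ``two-witnesses'' reading of the inner $\true\chopop\chi_{k,k'}$, and the resulting ``no-witness'' reading of $\Box\avarpropbis_k\chopop\psi_{k,k'}$), and then conjoins over all pairs; your write-up just spells out the decomposition bookkeeping that the paper dismisses as easy verification.
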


The second condition can be restated as follows: whenever a child of $\aworld$ satisfies a valuation with respect to $\apropset$
and belongs to $(\bigcup_{k \in \aset} \avaluation(\avarpropbis_k))$, 
then the valuation is satisfied in a child of $\aworld$ satisfying $\avarpropbis_k$ for all $k \in \aset$.
We recall that  $\tlcopies{\aset}$ is defined as follows. 
\begin{center}
      \scalebox{0.9}{$
      \displaystyle\bigwedge_{\mathclap{\ \ k \neq k' \in \aset}}\!\!\neg \big(
      \Box \avarpropbis_{k} \chopop
      (
      \Gdiamond{= 1} \avarpropbis_{k} \wedge
      \neg
     (\true \chopop \Gdiamond{= 1} \avarpropbis_{k} \wedge \Gdiamond{= 1} \avarpropbis_{k'}  \wedge
      \!\bigwedge_{\mathclap{j \in \interval{1}{m}}}\!\!\Diamond \avarprop_j \Rightarrow \Box \avarprop_j
     )
      )
       \big).
       $}
\end{center}

\begin{proof} In order to show the main equivalence of the statement, we proceed by showing intermediate properties for
subformulae of $\tlcopies{\aset}$. Actually, we shall state the properties, assuming that their proof are by an easy verification. 
In what follows, we always assume that $\pair{\amodel}{\aworld}$ be a pointed forest such that
$\amodel, \aworld \models \tluniq{\apropsetbis}$.

\begin{description}

\item[(unicity)] The first intermediate property is related to the formula $\tluniq{\apropsetbis}$, which allows us to state
a unicity property. We have $\amodel, \aworld \models \tluniq{\apropsetbis}$ with
$\tluniq{\apropsetbis}$ equal to 
$\Box (\bigwedge_{i \neq i' \in \interval{1}{n+1}} \neg (\avarpropbis_i \wedge \avarpropbis_{i'}) \wedge \bigvee_{i \in \interval{1}{n+1}} \avarpropbis_i)$
iff for all $\aworld' \in \arelation(\aworld)$, there is a unique $i \in \interval{1}{n+1}$ such that
$\amodel, \aworld' \models \avarpropbis_{i}$.

\item[(uniformity)] The second property  is related to the subformula 
$\bigwedge_{j \in \interval{1}{m}}\!\!\Diamond \avarprop_j \Rightarrow \Box \avarprop_j$ that states a uniformity condition. 
We have $\amodel, \aworld \models \bigwedge_{j \in \interval{1}{m}}\!\!\Diamond \avarprop_j \Rightarrow \Box \avarprop_j$
iff for all $\aworld', \aworld'' \in \arelation(\aworld)$, we have $\aworld' \approx_{\apropset} \aworld''$. 

\item[(two-witnesses)] Let $k \neq k' \in \aset$ and $\aformulabis_{k,k'}$ be the
formula $(\true \chopop \Gdiamond{= 1} \avarpropbis_{k} \wedge \Gdiamond{= 1} \avarpropbis_{k'}  \wedge
      \!\bigwedge_{j \in \interval{1}{m}}\!\!\Diamond \avarprop_j \Rightarrow \Box \avarprop_j
     )$. We have $\amodel, \aworld \models \aformulabis_{k,k'}$  iff there are $\aworld' \neq \aworld'' \in \arelation(\aworld)$ such that
     $\amodel, \aworld' \models \avarpropbis_{k}$, $\amodel, \aworld'' \models \avarpropbis_{k'}$
     and $\aworld' \approx_{\apropset} \aworld''$. 

\item[(no-witness-1)] Again, let $k \neq k' \in \aset$. We have $\amodel, \aworld \models \Gdiamond{= 1} \avarpropbis_{k} \wedge \neg \aformulabis_{k,k'}$ iff
     there is a unique $\aworld' \in \arelation(\aworld)$ such that $\amodel, \aworld' \models \avarpropbis_k$
     and there is no $\aworld''  \in \arelation(\aworld)$  such that $\amodel, \aworld'' \models \avarpropbis_{k'}$
     and $\aworld' \approx_{\apropset} \aworld''$. 

\item[(no-witness-2)] Finally, we have $\amodel, \aworld \models \Box \avarpropbis_{k} \chopop (\Gdiamond{= 1} \avarpropbis_{k} \wedge 
\neg \aformulabis_{k,k'})$
there is  $\aworld' \in \arelation(\aworld)$ such that $\amodel, \aworld' \models \avarpropbis_k$
     and there is no $\aworld''  \in \arelation(\aworld)$  such that $\amodel, \aworld'' \models \avarpropbis_{k'}$
     and $\aworld' \approx_{\apropset} \aworld''$. 
\end{description}
Consequently, $\amodel, \aworld \models \tlcopies{\aset}$ iff for all $k \neq k' \in \aset$, there is no
$\aworld' \in \arelation(\aworld)$ such that $\amodel, \aworld' \models \avarpropbis_k$ and for which 
there is no $\aworld''  \in \arelation(\aworld)$  such that $\amodel, \aworld'' \models \avarpropbis_{k'}$
     and $\aworld' \approx_{\apropset} \aworld''$. Otherwise said, for all $\aworld' \in \arelation(\aworld)$  such that 
 $\amodel, \aworld' \models \avarpropbis_k$, there is $\aworld''  \in \arelation(\aworld)$ such that 
 $\amodel, \aworld'' \models \avarpropbis_{k'}$
     and $\aworld' \approx_{\apropset} \aworld''$ ($\apropset$ and $\apropsetbis$ are disjoint). 
\end{proof}

Let $\pair{\amodel}{\aworld}$ be a pointed forest satisfying $\tluniq{\apropsetbis}$, $\ateam$ be a team  built upon $\apropset$ and
$\emptyset \neq \aset \subseteq \interval{1}{n+1}$. We write $\pair{\amodel}{\aworld} \equiv_{\apropset}^{\aset} \ateam$ iff
the conditions below are satisfied.
\begin{enumerate}
\item For all valuations $\aplvaluation \in \ateam$, for all $k \in \aset$, there is $\aworld' \in \arelation(\aworld)$ such that
for all $i \in \interval{1}{m}$, we have $\amodel, \aworld' \models \avarprop_i$ iff $\aplvaluation(\avarprop_i) = \top$ (written $\amodel, \aworld'
\models \aplvaluation$) and $\amodel, \aworld' \models \avarpropbis_{k}$. 
\item For all valuations $\aplvaluation$ such that 
      (for all $k \in \aset$, there is $\aworld'_{k} \in \arelation(\aworld)$ such that
      $\amodel, \aworld'_k
\models \aplvaluation$ and $\amodel, \aworld'_k
\models \avarpropbis_{k}$), we have $\aplvaluation \in \ateam$. 
\end{enumerate}
Hence, when $\pair{\amodel}{\aworld} \equiv_{\apropset}^{\aset} \ateam$, the children of $\aworld$ encodes the team $\ateam$ with the property
that each encoding of $\aplvaluation \in \ateam$ is witnessed by $\card{\aset}$ witness worlds.

Given an PL[\plcnot] formula $\aformula$, its \defstyle{$\plvee$-weight}, written $\veeweight{\aformula}$, is defined
as the number of occurrences of $\plvee$ in $\aformula$.

\begin{lemma}
\label{lemma-correctness-team-logic}
Let $\emptyset \neq \aset \subseteq \interval{1}{n+1}$,
$\pair{\amodel}{\aworld}$ be a pointed forest such that $\amodel, \aworld \models \tluniq{\apropsetbis} \wedge \tlcopies{\aset}$
and $\ateam$ be a team built over $\apropset$ such that $\pair{\amodel}{\aworld} \equiv_{\apropset}^{\aset} \ateam$. For all 
PL[\plcnot] formula $\aformulabis$ built over $\apropset$ such that  $\veeweight{\aformulabis} \leq \card{\aset}-1$, we have
$\ateam \models \aformulabis$ iff $\amodel, \aworld \models \atranslation(\aformulabis, \aset)$. 
\end{lemma}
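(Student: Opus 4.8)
The plan is to prove the equivalence by structural induction on $\aformulabis$. Throughout, the hypothesis $\veeweight{\aformulabis} \leq \card{\aset} - 1$ propagates to the recursive calls: every subformula $\aformulabis'$ of $\aformulabis$ has $\veeweight{\aformulabis'} \leq \veeweight{\aformulabis}$, and in the $\plvee$-case the sets $\aset_1,\aset_2$ produced by the choice functions $\mathfrak{c}_1,\mathfrak{c}_2$ satisfy $\card{\aset_i} \geq \veeweight{\aformulabis_i} + 1$, which is exactly what the induction hypothesis on the immediate subformulae requires. The atomic and Boolean cases are routine; the whole weight of the argument sits in the case $\aformulabis = \aformulabis_1 \plvee \aformulabis_2$.

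For the base cases $\aformulabis = \avarprop$ and $\aformulabis = \plneg\avarprop$, I would argue directly: by Lemma~\ref{lemma-plcopies} together with $\pair{\amodel}{\aworld} \equiv_{\apropset}^{\aset} \ateam$, the set of $\apropset$-valuations realised by the children of $\aworld$ carrying some marker $\avarpropbis_j$ with $j \in \aset$ equals $\ateam$; hence $\ateam \models \avarprop$ (i.e.\ every valuation of $\ateam$ sets $\avarprop$ to true) holds iff all such children satisfy $\avarprop$, which is precisely $\amodel,\aworld \models \Box((\bigvee_{j \in \aset}\avarpropbis_j) \Rightarrow \avarprop)$, the formula $\atranslation(\avarprop,\aset)$; the case $\plneg\avarprop$ is symmetric. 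For $\aformulabis = \aformulabis_1 \land \aformulabis_2$ and $\aformulabis = \plcnot\aformulabis'$ the equivalence is immediate from the induction hypothesis, since $\land$ and $\plcnot$ are interpreted as classical conjunction and negation on both sides and $\atranslation$ commutes with them (up to $\lnot$), with $\veeweight{\aformulabis_i} \leq \veeweight{\aformulabis} \leq \card{\aset}-1$.

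For $\aformulabis = \aformulabis_1 \plvee \aformulabis_2$ we have $\atranslation(\aformulabis,\aset) = (\atranslation(\aformulabis_1,\aset_1) \land \tlcopies{\aset_1}) \chopop (\atranslation(\aformulabis_2,\aset_2) \land \tlcopies{\aset_2})$ with $\aset_i = \mathfrak{c}_i(\aset,n_1,n_2)$, $n_i = \veeweight{\aformulabis_i}+1$, so $\aset_1 \uplus \aset_2 = \aset$ and $\card{\aset_i} \geq n_i$. For the left-to-right direction, from $\ateam = \ateam_1 \cup \ateam_2$ with $\ateam_i \models \aformulabis_i$ I would build $\amodel_1,\amodel_2$ with $\amodel = \amodel_1 +_{\aworld} \amodel_2$ by partitioning the children of $\aworld$: for each $i$, each $\aplvaluation \in \ateam_i$ and each $k \in \aset_i$, put into $\amodel_i$ one child of $\aworld$ carrying $\avarpropbis_k$ and realising $\aplvaluation$ (it exists by $\equiv_{\apropset}^{\aset}$, and distinct triples give distinct children by $\tluniq{\apropsetbis}$ and the disjointness of $\aset_1,\aset_2$); the remaining children are distributed so that no child carrying a marker from $\aset_i$ and realising a valuation outside $\ateam_i$ lands in $\amodel_i$ — possible because, by $\tlcopies{\aset}$ and Lemma~\ref{lemma-plcopies}, every such remaining child realises a valuation of $\ateam = \ateam_1 \cup \ateam_2$, and markers foreign to $\aset_i$ are harmless. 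One then checks via Lemma~\ref{lemma-plcopies} that $\pair{\amodel_i}{\aworld} \equiv_{\apropset}^{\aset_i} \ateam_i$ and $\amodel_i,\aworld \models \tlcopies{\aset_i}$, applies the induction hypothesis to get $\amodel_i,\aworld \models \atranslation(\aformulabis_i,\aset_i)$, and concludes $\amodel,\aworld \models \atranslation(\aformulabis,\aset)$ by definition of $\chopop$. For the right-to-left direction, from $\amodel = \amodel_1 +_{\aworld} \amodel_2$ with $\amodel_i,\aworld \models \atranslation(\aformulabis_i,\aset_i) \land \tlcopies{\aset_i}$ I would let $\ateam_i$ be the set of valuations witnessed in $\amodel_i$ on every marker of $\aset_i$; Lemma~\ref{lemma-plcopies} gives $\pair{\amodel_i}{\aworld} \equiv_{\apropset}^{\aset_i} \ateam_i$, the induction hypothesis yields $\ateam_i \models \aformulabis_i$, and it remains to establish $\ateam = \ateam_1 \cup \ateam_2$.

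I expect this last identity, and the symmetric distribution of children in the left-to-right direction, to be the main obstacle. The inclusion $\ateam_1 \cup \ateam_2 \subseteq \ateam$ is straightforward, but the reverse inclusion requires that for every valuation of $\ateam$ at least one of the two submodels retains a full set of witnesses over its own marker block; the argument must exploit that the $\card{\aset} \geq n_1 + n_2$ marked copies of each valuation (one per marker of $\aset$, available by $\tlcopies{\aset}$ and Lemma~\ref{lemma-plcopies}) are split by $\chopop$ between $\amodel_1$ and $\amodel_2$, so that the surplus $\card{\aset_i} \geq n_i$ together with the saturation property of $\tlcopies{\aset}$ — a valuation realised on one marker of $\aset$ is realised on all of them — prevents both submodels from simultaneously dropping a valuation. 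Since teams in $PL[\plcnot]$ need not be disjoint, the same valuation may have to be recovered in both $\ateam_1$ and $\ateam_2$, so it is the distribution of the marked \emph{copies}, not of the valuations themselves, that must be controlled; managing this marker bookkeeping (and, via Lemma~\ref{lemma-plcopies}, the fact that copies carrying markers foreign to $\aset_i$ are transparent to $\atranslation(\aformulabis_i,\aset_i) \land \tlcopies{\aset_i}$) is the delicate and most error-prone part of the proof, after which the induction closes.
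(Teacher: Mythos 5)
Your proposal follows the same route as the paper's proof: a structural induction in which the atomic cases are discharged through Lemma~\ref{lemma-plcopies} and the two conditions of $\equiv_{\apropset}^{\aset}$, the Boolean cases are immediate, and the $\plvee$ case is handled by matching a team split $\ateam = \ateam_1 \cup \ateam_2$ with a split $\amodel = \amodel_1 +_{\aworld} \amodel_2$ of the children of $\aworld$ guided by the marker blocks $\aset_1 \uplus \aset_2 = \aset$. Your left-to-right construction is essentially the paper's (which routes the copies of a valuation in $\ateam_1 \cap \ateam_2$ according to their marker and sends all copies of a valuation in $\ateam_j \setminus \ateam_{3-j}$ to $\amodel_j$), and your right-to-left definition of $\ateam_j$ as the set of valuations witnessed on every marker of $\aset_j$ inside $\amodel_j$ is literally the paper's definition.

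The step you defer --- $\ateam \subseteq \ateam_1 \cup \ateam_2$ in the right-to-left direction --- is, however, not closed by the mechanism you invoke, and this is a genuine gap rather than mere bookkeeping. You argue that the saturation property of $\tlcopies{}$ ``prevents both submodels from simultaneously dropping a valuation'', but $\amodel_i,\aworld \models \tlcopies{\aset_i}$ only constrains those children of $\aworld$ in $\amodel_i$ whose marker lies in $\aset_i$; it is silent about children of $\amodel_i$ carrying a marker from $\aset_{3-i}$. Nothing you have stated forbids a split in which, for some $\aplvaluation \in \ateam$, every child with a marker in $\aset_1$ realising $\aplvaluation$ is routed to $\amodel_2$ and every child with a marker in $\aset_2$ realising $\aplvaluation$ is routed to $\amodel_1$: then $\tlcopies{\aset_1}$ and $\tlcopies{\aset_2}$ hold vacuously with respect to $\aplvaluation$, both $\ateam_1$ and $\ateam_2$ omit $\aplvaluation$, and (for instance when $\aformulabis_1 = \aformulabis_2 = \avarprop$, whose translation $\Box((\bigvee_{j \in \aset_i}\avarpropbis_j) \Rightarrow \avarprop)$ is vacuously satisfied when no child of $\amodel_i$ carries a marker of $\aset_i$) both conjuncts of $\atranslation(\aformulabis_1 \plvee \aformulabis_2, \aset)$ can still be satisfied even though $\aplvaluation$ falsifies $\avarprop$. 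To finish, you must either show that the particular split witnessing $\atranslation(\aformulabis_1 \plvee \aformulabis_2,\aset)$ can always be replaced by one in which each marked copy stays on the side of its own marker block, or otherwise repair $\ateam_1,\ateam_2$ so that every $\aplvaluation \in \ateam$ is recovered in at least one of them while preserving the hypotheses of the induction. The paper's own write-up dispatches exactly this point with an unargued ``one can verify'', so you have correctly located the crux, but your proposal as written does not yet prove it.
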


\begin{proof} The proof is by structural induction.  

\begin{description}
\item[Base case with $\aformulabis = \avarprop_i$, $i \in \interval{1}{m}$.] First, assume that $\ateam \models \avarprop_i$, which means that for
all valuations $\aplvaluation \in \ateam$, we have $\aplvaluation(\avarprop_i) = \top$. {\em Ad absurdum}, suppose that
there is $\aworld' \in \arelation(\aworld) \cap (\bigcup_{k \in \aset} \avaluation(\avarpropbis_k))$,
such that $\amodel, \aworld' \not \models \avarprop_i$. Let $\aplvaluation$ be the valuation over $\apropset$
satisfied by $\aworld'$. 
As $\amodel, \aworld \models  \tlcopies{\aset}$,  by Lemma~\ref{lemma-plcopies}, 
the valuation $\aplvaluation$ is satisfied in a  child of $\aworld$  satisfying $\avarpropbis_k$ for all $k \in \aset$.
By (2.) in the definition of $\equiv_{\apropset}^{\aset}$, this implies that $\aplvaluation \in \ateam$, which leads to
a contradiction.  Consequently, for all $\aworld' \in \arelation(\aworld) \cap (\bigcup_{k \in \aset} \avaluation(\avarpropbis_k))$,
we have $\amodel, \aworld'  \models \avarprop_i$, which can be expressed precisely with $\amodel, \aworld \models 
\Box ((\bigvee_{j \in \aset} \avarpropbis_j) \Rightarrow \avarprop_i)$. Hence, $\amodel, \aworld \models \atranslation(\avarprop_i, \aset)$
by definition of $\atranslation$. 
For the proof of the other direction, we assume that  $\amodel, \aworld \models 
\Box ((\bigvee_{j \in \aset} \avarpropbis_j) \Rightarrow \avarprop_i)$ and one can show $\ateam \models \avarprop_i$
by using this time (1.). 
Indeed, {\em ad absurdum}, suppose that $\ateam \not \models \avarprop_i$. So, there is a
valuation $\aplvaluation$ such that $\aplvaluation(\avarprop_i) = \perp$. 
By (1.),  for all $k \in \aset$, there is $\aworld'_{k} \in \arelation(\aworld)$ such that
$\amodel, \aworld'_{k} \not \models \avarprop_i$  and $\amodel, \aworld'_{k} \models \avarpropbis_{k}$. 
Since  $\aworld'_{k} \in \arelation(\aworld)$, $\amodel, \aworld'_{k} \models \avarpropbis_{k}$
and $\amodel, \aworld \models 
\Box ((\bigvee_{j \in \aset} \avarpropbis_j) \Rightarrow \avarprop_i)$, we get 
$\amodel, \aworld'_{k} \models \avarprop_i$, which leads to a contradiction. 

\item[Base case with $\aformulabis = \plneg \avarprop_i$, $i \in \interval{1}{m}$.] Similar to the case $\aformulabis = \avarprop_i$.
\item[Induction step.] The cases in the induction step for which the outermost connective of $\aformulabis$ is
either $\wedge$ or $\plcnot$ are by an easy verification. Let us consider the case $\aformulabis = \aformulabis_1 \plvee \aformulabis_2$. 
Observe that $\veeweight{\aformulabis} = \veeweight{\aformulabis_1} + \veeweight{\aformulabis_2} + 1$
      and recall that $\veeweight{\aformulabis} \leq \card{\aset}-1$. 
      Consequently, $\veeweight{\aformulabis_1} + \veeweight{\aformulabis_2} +2 \leq \card{\aset}$ 
      and let $\aset_i = \mathfrak{c}_i(\aset,\veeweight{\aformulabis_1}+1, \veeweight{\aformulabis_2} + 1)$ for $i \in \set{1,2}$. 
     \begin{description}
     \item[Assume $\ateam \models \aformulabis_1 \plvee \aformulabis_2$.] 
       By definition of $\models$ for PL[\plcnot], 
      there are $\ateam_1$ and $\ateam_2$ such that $\ateam = \ateam_1 \cup \ateam_{2}$,   
      $\ateam_1 \models \aformulabis_1$ and $\ateam_2 \models \aformulabis_2$.
      Let us define $\amodel_1 = \triple{\worlds}{\arelation_1}{\avaluation_1}$ and 
      $\amodel_2 = \triple{\worlds}{\arelation_2}{\avaluation_2}$ such that $\amodel = \amodel_1 +_{\aworld} \amodel_2$ and satisfying
      the conditions below (only the relevant part is explicitly specified). 
      \begin{itemize}
      \item Assume $\aplvaluation \in \ateam_1 \cap \ateam_2$. As $\pair{\amodel}{\aworld} \equiv_{\apropset}^{\aset} \ateam$, 
            for all $k \in \aset$, there is $\aworld'_k \in \arelation(\aworld)$ such that
            $\amodel, \aworld'_k \models \aplvaluation$ and $\amodel, \aworld'_k \models \avarpropbis_{k}$.
            For all $i \in \set{1,2}$ and  $k \in \aset$, for all  $\aworld' \in \arelation(\aworld) \cap \avaluation(\avarpropbis_k)$
            such that $\amodel, \aworld' \models \aplvaluation$,
             if $k \in \aset_i$, then $\pair{\aworld}{\aworld'} \in \arelation_i$
            by definition, otherwise $\pair{\aworld}{\aworld'} \in \arelation_{3-i}$.
            For all $\aworld' \in \arelation(\aworld)$ such that $\aworld' \not \in  (\bigcup_{k \in \aset} \avaluation(\avarpropbis_k))$ and 
             $\amodel, \aworld' \models \aplvaluation$, 
            it is irrelevant whether $\pair{\aworld}{\aworld'}$ belongs to $\arelation_1$ or to  $\arelation_2$.
      \item Assume that  $\aplvaluation \in \ateam_j \setminus \ateam_{3-j}$ for some $j \in \set{1,2}$.    
             For all  $\aworld' \in \arelation(\aworld)$
            such that $\amodel, \aworld' \models \aplvaluation$, $\pair{\aworld}{\aworld'} \in \arelation_j$
            by definition.
      \end{itemize}
      One can check that $\amodel_1, \aworld \equiv_{\apropset}^{\aset_1} \ateam_1$,  $\amodel_2, \aworld \equiv_{\apropset}^{\aset_2} \ateam_2$,
      $\veeweight{\aformulabis_1} \leq \card{\aset_1} -1$ and  $\veeweight{\aformulabis_2} \leq \card{\aset_2} -1$.
      By the induction hypothesis, we have $\amodel_1, \aworld \models \atranslation(\aformulabis_1, \aset_1)$ and 
      $\amodel_2, \aworld \models \atranslation(\aformulabis_2, \aset_2)$. Moreover, as $\amodel, \aworld \models \tlcopies{\aset}$, 
      it is also easy to check that  $\amodel_1, \aworld \models \tlcopies{\aset_1}$ and $\amodel_2, \aworld \models \tlcopies{\aset_2}$. 
      Hence, $\amodel, \aworld \models (\atranslation(\aformulabis_1,\aset_1) \land 
      \tlcopies{\aset_1}) \chopop\, (\atranslation(\aformulabis_2,\aset_2)
      \land 
       \tlcopies{\aset_2})$, i.e.  $\amodel, \aworld \models \atranslation(\aformulabis, \aset)$ by definition of $\atranslation$.

     \item[Assume $\amodel, \aworld \models \atranslation(\aformulabis_1 \plvee \aformulabis_2, \aset)$.] There are $\amodel_1$, $\amodel_2$
     such that $\amodel = \amodel_1 +_{\aworld} \amodel_2$, $\amodel_1, \aworld \models \tlcopies{\aset_1} \wedge \atranslation(\aformulabis_1, \aset_1)$
     and $\amodel_2, \aworld \models \tlcopies{\aset_2} \wedge \atranslation(\aformulabis_2, \aset_2)$.
     Let us define $\ateam_1$ and $\ateam_2$ such that $\ateam = \ateam_1 \cup \ateam_2$,
     $\amodel_1, \aworld \equiv_{\apropset}^{\aset_1} \ateam_1$ and  $\amodel_2, \aworld \equiv_{\apropset}^{\aset_2} \ateam_2$.
     Let $\aplvaluation \in \ateam$ and $j \in \set{1,2}$. We have $\aplvaluation \in \ateam_j$ $\equivdef$
     for all $k \in \aset_j$, there is $\aworld'_{k} \in \arelation_j(\aworld)$ such that
      $\amodel_j, \aworld'_k
\models \aplvaluation$ and $\amodel_j, \aworld'_k \models \avarpropbis_{k}$.
     As $\amodel, \aworld \models \tlcopies{\aset}$ and $\aset = \aset_1 \uplus \aset_2$, 
     one can verify that the definition of $\ateam_1$ and $\ateam_2$ is well-designed
     and the teams $\ateam_1$ and $\ateam_2$ satisfy the expected properties. 
     Using that  $\veeweight{\aformulabis_1} +1 \leq \card{\aset_1}$ and  $\veeweight{\aformulabis_2} +1 \leq \card{\aset_2}$, 
     by the induction hypothesis, we have $\ateam_1 \models \aformulabis_1$ and  $\ateam_2 \models \aformulabis_2$. 
     Consequently, $\ateam \models \aformulabis$. 
     \end{description}
\end{description}

\end{proof}

The proof of Lemma~\ref{lemma:correctness-PL} is now by an easy verification. 

\begin{proof} (Lemma~\ref{lemma:correctness-PL}) Let $\aformula$ be an PL[\plcnot] formula built upon $\apropset = \set{\avarprop_1, \dots, \avarprop_m}$
with $\veeweight{\aformula} = n$ and $\apropsetbis = \set{ \avarpropbis_1, \dots, \avarpropbis_{n+1}}$.

Suppose that $\aformula$ is satisfiable, meaning that there is a team $\ateam = \set{\aplvaluation_1, \ldots, \aplvaluation_K}$ satisfying $\aformula$. 
Let $\amodel =  \triple{\worlds}{\arelation}{\avaluation}$ be the finite forest such that
$\aworld = \set{0} \cup \interval{1}{K} \times \interval{1}{n+1}$, 
$\arelation = \set{\pair{0}{\pair{i}{j}} \mid \pair{i}{j} \in \interval{1}{K} \times \interval{1}{n+1}}$,
and $\avaluation$ is a valuation such that,
\begin{itemize}
\item $\avaluation(\avarpropbis_j) = \interval{1}{K} \times \set{j}$ for all $j \in \interval{1}{n+1}$,
\item $\avaluation(\avarprop_s) = \set{\pair{i}{j} \mid \aplvaluation_{i}(\avarprop_s) = \top}$ for all $s \in \interval{1}{m}$. 
\end{itemize}
One can show that $\amodel, \aworld \models \tluniq{\apropsetbis} \wedge \tlcopies{\interval{1}{n+1}}$ and 
$\amodel, \aworld \equiv_{\apropset}^{\interval{1}{n+1}} \ateam$. As $\veeweight{\aformula} = \card{\interval{1}{n+1}}-1$ ($=n$),
by Lemma~\ref{lemma-correctness-team-logic}, we have $\amodel, \aworld  \models \atranslation(\aformula, \interval{1}{n+1})$. 

Conversely, suppose that $\tluniq{\apropsetbis} \wedge \tlcopies{\interval{1}{n+1}} \wedge \atranslation(\aformula, \interval{1}{n+1})$
is satisfiable, meaning that there is a pointed forest $\pair{\amodel}{\aworld}$ satisfying it with $\amodel =  \triple{\worlds}{\arelation}{\avaluation}$.
We define the team $\ateam$ such that for all valuations $\aplvaluation$ built over $\apropset$, 
$\aplvaluation$ belongs to $\ateam$ iff there is $\aworld' \in \arelation(\aworld)$ such that $\amodel, \aworld' \models \avarpropbis_k$ for some
$k \in \interval{1}{n+1}$ and $\amodel, \aworld' \models \aplvaluation$. 
Again, one can check that $\amodel, \aworld \equiv_{\apropset}^{\interval{1}{n+1}} \ateam$ (here we use the fact the
$\amodel, \aworld \models \tluniq{\apropsetbis} \wedge \tlcopies{\interval{1}{n+1}}$) and 
by Lemma~\ref{lemma-correctness-team-logic}, we have $\ateam \models \aformula$. 
\end{proof}


\section{Proofs of \Cref{section-tower-SC}}

\subsection{Correctness of $\init{j}$, $\nominal{\aaux}{i}$, $\atnom{\aaux}{i} \aformula$ and
$\twonoms{\aaux}{\aauxbis}{i}$}

In the following statements and proofs, let $\amodel = \triple{\worlds}{\arelation}{\avaluation}$ be a finite forest and $\aworld \in \worlds$.

\begin{lemma}
  \label{lemma:init}
Let $j \geq 1$.
$\amodel,\aworld \models \init{j}$
if and only if for every $0 \leq i \leq j$, every $\aworld' \in \arelation^i(\aworld)$ and every $\aaux \in \Aux$,
\begin{enumerate}
  \item if \ $\amodel, \aworld' \models \avartree$ then
  $\forall \aworld_1',\aworld_2' \in \arelation(\aworld')$,
  if $\amodel, \aworld_1' \models \aaux$ and $\amodel, \aworld_2' \models \aaux$
  then $\aworld_1' = \aworld_2'$ (i.e. at most one child of $\aworld'$ satisfies $\aaux$);
  \item for every $\aworld'' \in \arelation(\aworld')$, if \ $\amodel, \aworld'' \models \aaux$, then
    $\arelation(\aworld'') = \emptyset$ (i.e. $\aworld''$ does not have children) and it cannot be that \
    $\amodel, \aworld'' \models \aauxbis$ for some $\aauxbis \in \Aux$ syntactically different from $\aaux$ (i.e. among the propositions in $\Aux$, $\aworld''$ only satisfies $\aaux$).
\end{enumerate}
Moreover, given $\amodel' \sqsubseteq \amodel$, $\amodel',\aworld \models \init{j}$.
\end{lemma}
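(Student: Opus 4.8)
The plan is to prove both directions of the biconditional together by analysing the semantics of $\init{j}$, whose top-level structure is $\Boxbox^{j}$ applied to a conjunction indexed by $\aaux \in \Aux$, and then to observe that the characterising conditions are monotone under taking submodels. First I would unfold the semantics of $\Boxbox^{m}$: an easy induction on $m$ shows that $\amodel,\aworld \models \Boxbox^{m}\aformula$ iff $\amodel,\aworld' \models \aformula$ for every $\aworld' \in \bigcup_{i=0}^{m}\arelation^i(\aworld)$, i.e. every world reachable from $\aworld$ in at most $m$ steps. Applying this with $m = j$, satisfaction of $\init{j}$ reduces to: for every $0 \le i \le j$ and every $\aworld' \in \arelation^i(\aworld)$, and every $\aaux \in \Aux$, we have
\begin{nscenter}
$\amodel,\aworld' \models \big(\avartree \implies \lnot (\Diamond \aaux \separate \Diamond \aaux)\big) \land \Box\big(\aaux \implies \Box \bottom \land \bigwedge_{\aauxbis \in \Aux \setminus\{\aaux\}} \lnot \aauxbis\big)$.
\end{nscenter}

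Next I would match each conjunct to the corresponding item of the lemma. For the first conjunct: assuming $\amodel,\aworld' \models \avartree$, the formula $\lnot(\Diamond\aaux \separate \Diamond\aaux)$ holds iff there is no way to split the parent-child relation of the subtree at $\aworld'$ (more precisely, no $+$-decomposition of $\amodel$) so that $\aworld'$ has an $\aaux$-child on each side; since our models are forests and $+$ simply partitions $\arelation$, this is possible exactly when $\aworld'$ has at least two distinct $\aaux$-children. Hence the first conjunct says $\aworld'$ has at most one $\aaux$-child, which is item~1. Here I would invoke \Cref{lemma:height-at-most-n} (or just reason directly from the definition of $+$ and $\separate$) to make the "no $+$-decomposition" statement precise. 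For the second conjunct: $\amodel,\aworld' \models \Box(\aaux \implies \Box\bottom \land \bigwedge_{\aauxbis \neq \aaux}\lnot\aauxbis)$ says that every child $\aworld''$ of $\aworld'$ satisfying $\aaux$ has no children (that is $\Box\bottom$, i.e. $\arelation(\aworld'') = \emptyset$) and satisfies no other auxiliary proposition; this is exactly item~2. Quantifying over all $\aaux \in \Aux$ and all $\aworld'$ reachable in $\le j$ steps gives the claimed equivalence; conversely, if the two itemised conditions hold for all such $\aworld'$ and $\aaux$, then each conjunct is satisfied at each $\aworld'$, so $\Boxbox^{j}(\cdots)$ holds at $\aworld$, i.e. $\amodel,\aworld \models \init{j}$.

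For the final sentence of the lemma, suppose $\amodel,\aworld \models \init{j}$ and $\amodel' = \triple{\worlds}{\arelation'}{\avaluation} \sqsubseteq \amodel$, so $\arelation' \subseteq \arelation$ and the worlds and valuation are unchanged (this is what $\sqsubseteq$ means, via $+$). I would simply check the two itemised conditions in $\amodel'$ using the characterisation just established: for item~1, if $\aworld'$ has two distinct $\aaux$-children in $\amodel'$ then it has them in $\amodel$ too (since $\arelation'(\aworld') \subseteq \arelation(\aworld')$ and the valuation is the same), contradicting $\amodel,\aworld \models \init{j}$ --- but one must also note $\aworld' \in \arelation'^{i}(\aworld)$ implies $\aworld' \in \arelation^{i}(\aworld)$; for item~2, if $\aworld''$ is an $\aaux$-child of $\aworld'$ in $\amodel'$ it is one in $\amodel$, so it has no $\arelation$-children and satisfies no other auxiliary proposition, hence a fortiori no $\arelation'$-children. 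Thus $\amodel',\aworld \models \init{j}$. The only mildly delicate point --- the "main obstacle", though it is not really hard --- is phrasing the translation between the separating-conjunction formula $\Diamond\aaux \separate \Diamond\aaux$ and the combinatorial statement "$\aworld'$ has $\geq 2$ distinct $\aaux$-children", which requires recalling that $+$ partitions edges disjointly while keeping $\worlds$ and $\avaluation$ fixed, so that the only freedom in placing an $\aaux$-child of $\aworld'$ is which side of the partition its incoming edge goes to.
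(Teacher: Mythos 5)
Your proposal is correct and follows essentially the same route as the paper's (sketched) proof: unfold $\Boxbox^{j}$ to quantify over all worlds reachable in at most $j$ steps, match the conjunct $\avartree \implies \lnot(\Diamond\aaux \separate \Diamond\aaux)$ with item~1 via the observation that $+$ partitions the edge set while fixing $\worlds$ and $\avaluation$, match the $\Box(\aaux \implies \cdots)$ conjunct with item~2, and verify the submodel claim by monotonicity of the two itemised conditions under $\arelation' \subseteq \arelation$. Your write-up is in fact more detailed than the paper's, which only states these correspondences, and your direct argument for the $\separate$ step makes the appeal to \Cref{lemma:height-at-most-n} unnecessary, as you note.
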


\begin{proof}{(sketch)}.
  Recall that $\init{j}$ is defined as follows:
  \begin{nscenter}
  $
  \Boxbox^{j} {\displaystyle\bigwedge_{{\aaux \in \Aux}}}
  \Big(
    \big(
      \avartree \implies \lnot (\Diamond \aaux \separate \Diamond \aaux)
    \big)
  \land
    \Box\big(
      \aaux \implies \Box \bottom \land {\displaystyle\bigwedge_{\mathclap{\aauxbis \in \Aux \setminus\{\aaux\}}}} \lnot \aauxbis
    \big)
   \Big)
  $
  \end{nscenter}
  The proof is straightforward (and hence here only sketched).
  Indeed, the statement ``for every $0 \leq i \leq j$, every $\aworld' \in \arelation^i(\aworld)$ and every $\aaux \in \Aux$''
  is captured by the prefix $\Boxbox^{j} {\bigwedge_{{\aaux \in \Aux}}}$ of $\init{j}$.
  Then, (1) corresponds to the conjunct $\avartree \implies \lnot (\Diamond \aaux \separate \Diamond \aaux)$ whereas
  (2) corresponds to the conjunct $\Box\big(
    \aaux \implies \Box \bottom \land {\bigwedge_{{\aauxbis \in \Aux \setminus\{\aaux\}}}} \lnot \aauxbis
  \big)$.
\end{proof}

\begin{lemma}
  \label{lemma:nom}
  Let $\aaux \in \Aux$ and $0 < i \leq j \in \Nat$.
  Suppose $\amodel,\aworld \models \init{j}$.

  $\amodel,\aworld \models \nominal{\aaux}{i}$ if and only if $\aaux$ is a nominal
  for the depth $i$. Recall that $\aaux$ is a nominal for the depth $i$ if there is exactly one $\avartree$-world in $\arelation^i(\aworld)$ having a child satisfying $\aaux$.
\end{lemma}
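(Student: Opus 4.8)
The statement to prove is \Cref{lemma:nom}, characterising when $\nominal{\aaux}{i}$ holds.

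\medskip

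The plan is to unfold the definition of $\nominal{\aaux}{i} \egdef \HMDiamond{\avartree}^i\Diamond\aaux \land \bigwedge_{k \in \interval{0}{i-1}} \HMBox{\avartree}^k \lnot (\HMDiamond{\avartree}^{i-k}\Diamond\aaux \separate \HMDiamond{\avartree}^{i-k}\Diamond \aaux)$ and to match each conjunct with the two halves of the target property: existence of an $\aaux$-witnessing $\avartree$-node at depth $i$, and its uniqueness. First I would observe that $\amodel,\aworld \models \HMDiamond{\avartree}^i\Diamond\aaux$ holds iff there is a $\avartree$-path $\aworld = \aworld_0, \aworld_1, \dots, \aworld_i$ with each $\aworld_\ell \in \arelation(\aworld_{\ell-1})$ a $\avartree$-node and $\aworld_i$ having an $\aaux$-child; in particular this gives at least one $\avartree$-node in $\arelation^i(\aworld)$ with an $\aaux$-child. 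This handles the ``at least one'' direction and is routine from the semantics of $\Diamond$, $\avartree$, and the abbreviation $\HMDiamond{\avartree}$.

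\medskip

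The crux is the uniqueness half, captured by the big conjunction of negated separating formulae. The key claim to establish, by downward induction on $k$ from $k = i-1$ to $k = 0$, is: $\amodel,\aworld \models \HMBox{\avartree}^k \lnot (\HMDiamond{\avartree}^{i-k}\Diamond\aaux \separate \HMDiamond{\avartree}^{i-k}\Diamond \aaux)$ holds iff, for every $\avartree$-node $\aworld'$ reachable from $\aworld$ by a $\avartree$-path of length $k$, there do not exist two \emph{distinct} $\avartree$-nodes in $\arelation^{i-k}(\aworld')$ each having an $\aaux$-child. The heart of the argument is the equivalence at a single node $\aworld'$: $\amodel,\aworld' \models \lnot (\HMDiamond{\avartree}^{i-k}\Diamond\aaux \separate \HMDiamond{\avartree}^{i-k}\Diamond\aaux)$ iff no submodel $\amodel_1 + \amodel_2 = \amodel$ has $\aworld'$ satisfying $\HMDiamond{\avartree}^{i-k}\Diamond\aaux$ in both $\amodel_1$ and $\amodel_2$. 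If two distinct $\avartree$-nodes $u_1 \neq u_2 \in \arelation^{i-k}(\aworld')$ each have an $\aaux$-child, then, since $\arelation^{-1}$ is functional and our models are forests, the two $\avartree$-paths from $\aworld'$ to $u_1$ and to $u_2$ are edge-disjoint from the point where they diverge, and one can split $\arelation$ so that the path-to-$u_1$ plus its $\aaux$-edge goes to $\arelation_1$ and the path-to-$u_2$ plus its $\aaux$-edge goes to $\arelation_2$ (assigning all remaining edges arbitrarily), witnessing the separating conjunction. Conversely, if $\amodel = \amodel_1 + \amodel_2$ witnesses the separating conjunction, then $\amodel_i$ gives a $\avartree$-path of length $i-k$ from $\aworld'$ ending at a node with an $\aaux$-child, and these two endpoints must be distinct because the last edges (leading to the $\aaux$-children) lie in disjoint $\arelation_1$ and $\arelation_2$, hence so do the witnessing $\aaux$-children, and by \Cref{lemma:init}(1) a single $\avartree$-node has at most one $\aaux$-child, so the two endpoints cannot coincide.

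\medskip

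Having proved that single-node equivalence, the downward induction on $k$ is mechanical: $\HMBox{\avartree}^k\psi$ says $\psi$ holds at all $\avartree$-nodes at $\avartree$-depth $k$, and combining the statements for all $k \in \interval{0}{i-1}$ yields: for no $\avartree$-node $\aworld'$ at $\avartree$-depth $k$ (any $k < i$) are there two distinct $\avartree$-descendants at relative depth $i-k$ with $\aaux$-children. A short combinatorial remark — again using that the models are forests so that a $\avartree$-node at depth $i$ has a unique ancestor chain — shows this aggregate condition is equivalent to: there is \emph{at most one} $\avartree$-node in $\arelation^i(\aworld)$ with an $\aaux$-child. Conjoining with the ``at least one'' fact from the first conjunct gives exactly ``there is exactly one'', i.e.\ $\aaux$ is a nominal for depth $i$. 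The main obstacle is getting the forest/functionality bookkeeping right in the single-node equivalence — specifically, arguing cleanly that two distinct depth-$(i-k)$ $\avartree$-nodes admit edge-disjoint witnessing path segments so that the required partition of $\arelation$ exists — but this is precisely the place where the hypothesis $\amodel,\aworld \models \init{j}$ (via \Cref{lemma:init}) and the acyclicity of $\arelation^{-1}$ do the work.
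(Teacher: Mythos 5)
There is a genuine gap in the uniqueness half: your single-node equivalence is false as stated, and its left-to-right direction is exactly where the argument breaks. You claim that if two \emph{distinct} $\avartree$-nodes $u_1 \neq u_2 \in \arelation^{i-k}(\aworld')$ each have an $\aaux$-child, then one can split $\arelation$ so that the path to $u_1$ (plus its $\aaux$-edge) goes to $\arelation_1$ and the path to $u_2$ goes to $\arelation_2$. This is impossible whenever the two paths share a common prefix from $\aworld'$: the shared edges can be placed in only one of the two submodels, and since in a forest a path of length $i-k$ from $\aworld'$ is determined by its endpoint, the other submodel then has no witnessing path at all. Concretely, take $i-k=2$ with $\aworld'$ having a single $\avartree$-child $v$, and $v$ having two $\avartree$-children each with an $\aaux$-child: here $\aworld'$ does \emph{not} satisfy $\HMDiamond{\avartree}^{2}\Diamond\aaux \separate \HMDiamond{\avartree}^{2}\Diamond\aaux$ (the edge from $\aworld'$ to $v$ ends up on one side only), so your per-$k$ biconditional fails, and with it the downward induction built on it. The conjunct indexed by $k$ only detects pairs of witnesses whose paths \emph{diverge exactly at depth $k$}; it does not characterise ``no two distinct descendants at relative depth $i-k$''.

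The repair is the paper's argument: do not try to characterise each conjunct separately. Assuming \emph{ad absurdum} two distinct $\avartree$-nodes $\aworld_i \neq \aworld_i'$ in $\arelation^i(\aworld)$ with $\aaux$-children, use $\init{j}$ (Aux-nodes have no children) to see that both witnessing paths consist of $\avartree$-nodes, and let $k \in \interval{0}{i-1}$ be the depth of their last common node $\aworld_k$. At $\aworld_k$ the two continuation paths start with distinct children and are therefore edge-disjoint, so the split you describe does exist \emph{there}, giving $\amodel,\aworld_k \models \HMDiamond{\avartree}^{i-k}\Diamond\aaux \separate \HMDiamond{\avartree}^{i-k}\Diamond\aaux$ and contradicting the conjunct $\HMBox{\avartree}^{k}\lnot(\cdots)$ for that specific $k$. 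Your ``at least one'' direction and your converse use of \Cref{lemma:init}(1) (distinct sides of the split yield distinct $\aaux$-children, hence distinct endpoints) are fine; only the choice of where the divergence is exploited needs to be made explicit, which is precisely what the quantification over $k$ in $\nominal{\aaux}{i}$ is for.
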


\begin{proof}
Recall that $\nominal{\aaux}{i}$ is defined as follows:
\begin{nscenter}
  $
  \begin{aligned}
      \HMDiamond{\avartree}^i\Diamond\aaux \land
    {\bigwedge_{\mathclap{k \in \interval{0}{i-1}}}} \HMBox{\avartree}^k \lnot \big(\HMDiamond{\avartree}^{i-k}\Diamond\aaux \separate \HMDiamond{\avartree}^{i-k}\Diamond \aaux\big).
  \end{aligned}
  $
  \end{nscenter}

  ($\Rightarrow$): Suppose $\amodel,\aworld\models\nominal{\aaux}{i}$, then by definition of $\models$ and the relativised modality $\HMDiamond{\avartree}$, there exists a path of $\avartree$-worlds $\aworld_1,\aworld_2,\ldots,\aworld_{i}$, such that $\aworld\arelation\aworld_1\arelation\aworld_2\ldots\arelation\aworld_{i}$, and there exists $\aworld'$ such that
  $(\aworld_{i},\aworld')\in R$ and $\amodel, \aworld'\models\aaux$.
  The second conjunct of $\nominal{\aaux}{i}$ guarantees that there is only one  such paths, leading to $\aworld_{i}$ being a nominal for the depth $i$.
  Indeed, suppose \emph{ad absurdum} that there is a second world $\aworld_i' \in \arelation^i(\aworld)$, distinct from $\aworld_i$, such that $\amodel,\aworld_i' \models \Diamond \aaux$.
  Since $\amodel,\aworld \models \init{j}$, $\aworld_i'$ must be a $\avartree$-node
  and there must be a path of $\avartree$-worlds $\aworld_1',\aworld_2',\ldots,\aworld_{i}'$ such
  that $\aworld\arelation\aworld_1'\arelation\aworld_2'\ldots\arelation\aworld_{i}'$.
  Then, there must be $k \in \interval{0}{i-1}$ such that
  for every $j \leq k$, $\aworld_j = \aworld_j'$, and for every $l \in \interval{j+1}{i}$, $\aworld_l \neq \aworld_l'$.
  By considering the pointed forest $\pair{\amodel}{\aworld_k}$,
  we can easily show that $\amodel,\aworld_k \models \HMDiamond{\avartree}^{i-k}\Diamond\aaux \separate \HMDiamond{\avartree}^{i-k}\Diamond \aaux$.
  This implies that $\amodel,\aworld \models \HMDiamond{\avartree}^k \big(\HMDiamond{\avartree}^{i-k}\Diamond\aaux \separate \HMDiamond{\avartree}^{i-k}\Diamond \aaux\big)$, in contradiction with the second conjunct of $\nominal{\aaux}{i}$. Hence, $\aworld_i'$ cannot be distinct from $\aworld_i$.

  ($\Leftarrow$): This direction is analogous. Suppose that $\amodel,\aworld \models \init{j}$
  and  $\aaux$ is a nominal for the depth $i$.
  By definition,  there is a unique $\avartree$-world $\aworld'$ in $\arelation^i(\aworld)$ having a child satisfying $\aaux$.
  Since $\amodel,\aworld \models \init{j}$, the path from $\aworld$ to $\aworld'$ must only witness $\avartree$-nodes. Hence $\amodel,\aworld \models \HMDiamond{\avartree}^i\Diamond\aaux$.
  Moreover, by the uniqueness of this path we conclude that
  $\amodel,\aworld \models {\bigwedge_{{k \in \interval{0}{i-1}}}} \HMBox{\avartree}^k \lnot \big(\HMDiamond{\avartree}^{i-k}\Diamond\aaux \separate \HMDiamond{\avartree}^{i-k}\Diamond \aaux\big)$ also holds.
  Thus, $\amodel,\aworld\models\nominal{\aaux}{i}$.
\end{proof}

\begin{lemma}
  \label{lemma:at}
  Let $\aaux \in \Aux$ and $0 < i \leq j \in \Nat$.
  Suppose $\amodel,\aworld \models \init{j} \land \nominal{\aaux}{i}$.

  $\amodel,\aworld \models \atnom{\aaux}{i} \aformula$
  if and only if the world (say $\aworld'$) corresponding to the nominal $\aaux$ for the depth $i$ is such that $\amodel,\aworld' \models \aformula$.
\end{lemma}

\begin{proof}
Both directions are straightforward.
Recall that $\atnom{\aaux}{i} \aformula$ is defined as
  $\HMDiamond{\avartree}^{i}(\Diamond\aaux\wedge\varphi)$.
Moreover, as we are working under the hypothesis that $\amodel,\aworld \models \init{j} \land \nominal{\aaux}{i}$, by \Cref{lemma:nom}, $\aaux$ is a nominal for the depth $i$. In the following, let $\aworld'$ be the world in $\arelation^i(\aworld)$ corresponding to the nominal $\aaux$ (i.e. $\aworld'$ has an $\aaux$-child).

($\Rightarrow$): Suppose $\amodel,\aworld \models \atnom{\aaux}{i} \aformula$.
By definition, there is $\aworld'' \in \arelation^i(\aworld)$ s.t. $\amodel,\aworld'' \models \Diamond \aaux \land \aformula$. Since $\aaux$ is a nominal for the depth $i$, we conclude that $\aworld' = \aworld''$ and hence $\amodel,\aworld'' \models \aformula$.

($\Leftarrow$): Suppose that $\aworld'$ is such that $\amodel,\aworld' \models \aformula$. By definition,
$\aworld'$ is the world corresponding to the nominal $\aaux$ (for the depth $i$).
Hence $\amodel,\aworld' \models \Diamond \aaux$. Since $\aworld' \in \arelation^i(\aworld)$ by $\amodel,\aworld \models \init{j}$ we conclude that
there is a path of $\avartree$-nodes from $\aworld$ to $\aworld'$, of length $i$.
Thus, $\amodel,\aworld \models \HMDiamond{\avartree}^{i}(\Diamond\aaux\wedge\varphi)$.
\end{proof}

\begin{lemma}\label{lemma:twonoms}
  Let $\aaux \neq \aauxbis \in \Aux$ and $0 < i \leq j \in \Nat$.
  Suppose $\amodel,\aworld \models \init{j}$.

  $\amodel,\aworld \models \twonoms{\aaux}{\aauxbis}{i}$ if and only if $\aaux$ and $\aauxbis$ are nominals
  for the depth $i$, corresponding to two different worlds.
\end{lemma}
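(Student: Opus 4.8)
\textbf{Proof plan for \Cref{lemma:twonoms}.}
The plan is to reduce the statement to the already-established \Cref{lemma:nom} together with the semantics of $\atnom{\aaux}{i}$ given by \Cref{lemma:at}. Recall that $\twonoms{\aaux}{\aauxbis}{i}$ is defined as $\nominal{\aaux}{i} \land \nominal{\aauxbis}{i} \land \lnot \atnom{\aaux}{i} \Diamond \aauxbis$. Working throughout under the standing hypothesis $\amodel,\aworld \models \init{j}$, the first two conjuncts are handled verbatim by \Cref{lemma:nom}: $\amodel,\aworld \models \nominal{\aaux}{i} \land \nominal{\aauxbis}{i}$ iff both $\aaux$ and $\aauxbis$ are nominals for the depth $i$. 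So the only real content is to show that, assuming $\aaux$ and $\aauxbis$ are both nominals for the depth $i$, we have $\amodel,\aworld \models \lnot\atnom{\aaux}{i}\Diamond\aauxbis$ iff the world $\aworld_{\aaux}$ corresponding to $\aaux$ and the world $\aworld_{\aauxbis}$ corresponding to $\aauxbis$ are distinct.

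For that equivalence I would argue as follows. Since $\aaux$ is a nominal for the depth $i$ and $\amodel,\aworld\models\init{j}$, \Cref{lemma:at} (instantiated with $\aformula = \Diamond\aauxbis$) gives that $\amodel,\aworld\models\atnom{\aaux}{i}\Diamond\aauxbis$ iff $\amodel,\aworld_{\aaux}\models\Diamond\aauxbis$, i.e. iff $\aworld_{\aaux}$ has a child satisfying $\aauxbis$. Now if $\aworld_{\aaux} = \aworld_{\aauxbis}$, then since $\aworld_{\aauxbis}$ is (by definition of being the world corresponding to the nominal $\aauxbis$) a $\avartree$-node with an $\aauxbis$-child, $\aworld_{\aaux}$ has an $\aauxbis$-child, so $\amodel,\aworld\models\atnom{\aaux}{i}\Diamond\aauxbis$ and the negation fails. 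Conversely, if $\aworld_{\aaux} \neq \aworld_{\aauxbis}$, then because $\aauxbis$ is a nominal for the depth $i$, $\aworld_{\aauxbis}$ is the \emph{unique} $\avartree$-node in $\arelation^i(\aworld)$ with an $\aauxbis$-child; as $\aworld_{\aaux}\in\arelation^i(\aworld)$ is a $\avartree$-node distinct from $\aworld_{\aauxbis}$, it cannot have an $\aauxbis$-child, hence $\amodel,\aworld_{\aaux}\not\models\Diamond\aauxbis$ and so $\amodel,\aworld\models\lnot\atnom{\aaux}{i}\Diamond\aauxbis$. Combining this with the treatment of the first two conjuncts yields both directions of the lemma.

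I do not expect a genuine obstacle here: the statement is essentially a routine unfolding of definitions once \Cref{lemma:nom} and \Cref{lemma:at} are in hand, and the only point requiring a little care is making sure that $\aworld_{\aaux}$ and $\aworld_{\aauxbis}$ are indeed $\avartree$-nodes reachable in exactly $i$ $\avartree$-steps (which follows from $\amodel,\aworld\models\init{j}$, since any world in $\arelation^i(\aworld)$ with an $\Aux$-child must, by \Cref{lemma:init}, be a $\avartree$-node, and the path to it must consist of $\avartree$-nodes). Thus the write-up is short: invoke \Cref{lemma:nom} for the nominal conjuncts, invoke \Cref{lemma:at} to rewrite $\atnom{\aaux}{i}\Diamond\aauxbis$ as ``$\aworld_{\aaux}$ has an $\aauxbis$-child'', and then do the two-line case split on whether $\aworld_{\aaux}=\aworld_{\aauxbis}$, using uniqueness of the $\aauxbis$-nominal world for the nontrivial direction.
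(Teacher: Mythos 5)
Your proposal is correct and follows essentially the same route as the paper: both reduce the statement to Lemma~\ref{lemma:nom} for the two nominal conjuncts and to Lemma~\ref{lemma:at} to read $\atnom{\aaux}{i}\Diamond\aauxbis$ as ``the world corresponding to $\aaux$ has an $\aauxbis$-child'', then conclude distinctness (or its failure) by comparing with the fact that the world corresponding to $\aauxbis$ does satisfy $\Diamond\aauxbis$ and is the unique such world. The case split you spell out is just the contrapositive phrasing of the paper's two directions, so there is nothing substantively different.
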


\begin{proof}
Given \Cref{lemma:nom,lemma:at}, this proof is straightforward.
Recall that $\twonoms{\aaux}{\aauxbis}{i} \egdef \nominal{\aaux}{i} \wedge \nominal{\aauxbis}{i} \wedge \neg\atnom{\aaux}{i}\Diamond\aauxbis$.

($\Rightarrow$): Suppose $\amodel,\aworld \models \twonoms{\aaux}{\aauxbis}{i}$.
By~\Cref{lemma:nom} $\aaux$ and
$\aauxbis$ are nominals for depth $i$. Let $\aworld_{\aaux}$ (resp. $\aworld_{\aauxbis}$) be the world in $\arelation^i(\aworld)$ corresponding to the nominal $\aaux$ (resp. $\aauxbis$). Notice that, in particular, $\amodel,\aworld_{\aauxbis} \models \Diamond \aauxbis$.
By $\amodel,\aworld \models \neg\atnom{\aaux}{i}\Diamond\aauxbis$  and~\Cref{lemma:at},
we conclude that $\amodel,\aworld_{\aaux} \not\models \Diamond \aauxbis$.
Thus, $\aworld_{\aaux} \neq \aworld_{\aauxbis}$.

($\Leftarrow$): This direction is analogous and simply relies on~\Cref{lemma:nom,lemma:at}.
\end{proof}

\subsection{Formal semantics of the inductively defined formulae used for $\complete{j}$}\label{appendix:inductive-formulae}
Let us formalise the expected semantics of the formulae introduced in order to define $\complete{j}$, and whose definition is inductive.
Let $\amodel = \triple{\worlds}{\arelation}{\avaluation}$ be a finite forest and $\aworld \in \worlds$. Let $1 \leq i \leq j$ and let $\aaux \neq \aauxbis \in \Aux$.

\begin{description}
\item[(\,$\fork{\aaux}{\aauxbis}{i}{j}$\,):]  Suppose $\amodel,\aworld \models \init{j}$.

  $\amodel,\aworld \models \fork{\aaux}{\aauxbis}{i}{j}$ if and only if
  {\bfseries \itshape{(i)}} $\aworld$ has exactly two $\avartree$-children and exactly two paths of $\avartree$-nodes, both of length $i$;
  {\bfseries \itshape{(ii)}} one of these two paths ends on a world (say $\aworld_{\aaux}$) corresponding to the nominal $\aaux$ whereas the other ends on a world (say $\aworld_{\aauxbis}$) corresponding to the nominal $\aauxbis$;
  {\bfseries \itshape{(iii)}} if $i < j$ then $\pair{\amodel}{\aworld_{\aaux}}$ and $\pair{\amodel}{\aworld_{\aauxbis}}$ satisfy
   $\completeplus{j - i} \egdef \complete{j - i} \land \HMBox{\avartree}(\Diamond \avarleft \land \Diamond \avarselect \land \Diamond \avarright)$.
\item[(\,$\less{\aaux}{\aauxbis}{i}{j}$\,):] Suppose $\amodel,\aworld \models \init{j} \land \fork{\aaux}{\aauxbis}{i}{j}$.

$\amodel,\aworld \models \less{\aaux}{\aauxbis}{i}{j}$ if and only if
there are two distinct $\avartree$-nodes $\aworld_{\aaux},\aworld_{\aauxbis} \in \arelation^i(\aworld)$ such that $\aworld_{\aaux}$ corresponds to the nominal $\aaux$, $\aworld_{\aauxbis}$ corresponds to the nominal $\aauxbis$ and
$\nb{\aworld_{\aaux}} < \nb{\aworld_{\aauxbis}}$.
\item[(\,$\successor{\aaux}{\aauxbis}{j}$\,):] Suppose $\amodel,\aworld \models \init{j} \land \fork{\aaux}{\aauxbis}{1}{j}$.

$\amodel,\aworld \models \successor{\aaux}{\aauxbis}{j}$ if and only if
there are two distinct $\avartree$-nodes $\aworld_{\aaux},\aworld_{\aauxbis} \in \arelation(\aworld)$ such that $\aworld_{\aaux}$ corresponds to the nominal $\aaux$, $\aworld_{\aauxbis}$ corresponds to the nominal $\aauxbis$ and
$\nb{\aworld_{\aauxbis}} = \nb{\aworld_{\aaux}}+1$.

\item[(\,$\pU{j}$\,):]
Suppose $\amodel,\aworld \models \init{j} \land \pS{j} \land \pA$.

$\amodel,\aworld \models \pU{j}$ if and only if
$\pair{\amodel}{\aworld}$ satisfies
\ref{prop:apU}, i.e.\ distinct $\avartree$-nodes in $\arelation(\aworld)$
encode different numbers.

\item[(\,$\pC{j}$\,):]
Suppose $\amodel,\aworld \models \init{j} \land \pS{j} \land \pA$.

$\amodel,\aworld \models \pC{j}$ if and only if
$\pair{\amodel}{\aworld}$ satisfies
\ref{prop:apC}, i.e.\ for every $\avartree$-node $\aworld_1 \in \arelation(\aworld)$, if $\nb{\aworld_1} < \amapter(j,n)-1$ then $\nb{\aworld_2} = \nb{\aworld_1}+1$ for some $\avartree$-node $\aworld_2 \in \arelation(\aworld)$.
\item[(\,$\complete{j}$\,):]
Suppose $\amodel,\aworld \models \init{j}$.

$\amodel,\aworld \models \complete{j}$ if and only if
$\pair{\amodel}{\aworld}$ satisfies \ref{prop:apS}, \ref{prop:apZ},
\ref{prop:apU}, \ref{prop:apC} and \ref{prop:apA}.
\end{description}

The formulae $\pS{j}$, $\pA$ and $\pZ{j}$ ($j \geq 1$) are also required in order to define correctly $\complete{j}$. However their definition and proof of correctness are straightforward. Hence we omit the proofs, and simply state the expected semantics of these formulae. It should be noted that a formal proof of $\pZ{j}$ relies on $\complete{j-1}$, which (as we will see multiple times in the next sections), we can assume to be correctly defined by inductive hypothesis (on $j$). 

\begin{lemma}\label{lemma:basicform}
Let $j \geq 1$.
Let $\amodel = \triple{\worlds}{\arelation}{\avaluation}$ be a finite forest and $\aworld \in \worlds$.
\begin{itemize}
  \item $\amodel,\aword \models \pS{j}$ if and only if $\pair{\amodel}{\aworld}$ satisfies \ref{prop:apS}, i.e.\ every $\avartree$-node in $\arelation(\aworld)$ satisfies $\complete{j-1}$.
  \item $\amodel,\aword \models \pA$ if and only if
   $\pair{\amodel}{\aworld}$ satisfies \ref{prop:apA}, i.e.\
  $\aworld$ is a $\avartree$-node, every $\avartree$-node in $\arelation(\aworld)$ has one $\avariable$-child and one $\avariablebis$-child, and every $\avartree$-node in $\arelation^2(\aworld)$ has three children satisfying $\avarleft$, $\avarright$ and $\avarselect$, respectively.
  \item Suppose $\amodel,\aword \models \pS{j}$. $\amodel,\aworld \models \pZ{j}$ if and only if
  $\pair{\amodel}{\aworld}$ satisfies \ref{prop:apZ}, i.e.\
    there is a $\avartree$-node $\tilde\aworld \in \arelation(\aworld)$ s.t.\ $\nb{\tilde\aworld} = 0$.
\end{itemize}
\end{lemma}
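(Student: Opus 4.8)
The statement to prove is \Cref{lemma:basicform}, which asserts the correctness of the auxiliary formulae $\pS{j}$, $\pA$ and $\pZ{j}$ with respect to the properties \ref{prop:apS}, \ref{prop:apA} and \ref{prop:apZ}.

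\textbf{Plan of proof.} The plan is to prove the three bullet points in turn, exploiting the fact that every formula involved is essentially a direct syntactic transcription of its target property through the relativised modalities $\HMDiamond{\avartree}$ and $\HMBox{\avartree}$. For $\pA$ and $\pZ{j}$, the argument is entirely self-contained; for $\pS{j}$ we need to invoke the correctness of $\complete{j-1}$, which by the inductive setup of \Cref{section-tower-SC} (the implicit induction on $j$) is available --- in fact \Cref{lemma:basicform} is exactly one of the ingredients used in proving \Cref{lemma:tower-hardness-base-case} and \Cref{lemma:tower-hardness-inductive}, so the induction hypothesis gives us the correctness of $\complete{j-1}$ when we argue about $\pS{j}$.

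\textbf{First, the case $\pA$.} Recall $\pA \egdef \avartree \land \HMBox{\avartree}(\Diamond \avariable \separate \Diamond \avariablebis) \land \HMBox{\avartree}^2(\Diamond \avarleft \separate \Diamond \avarselect \separate \Diamond \avarright)$. The conjunct $\avartree$ states that $\aworld$ is a $\avartree$-node. For the second conjunct, $\HMBox{\avartree}(\Diamond \avariable \separate \Diamond \avariablebis)$ unfolds, by definition of $\HMBox{\avartree}$ and of $\separate$, to: for every $\avartree$-child $\aworld'$ of $\aworld$, the relation $\arelation$ restricted to the subtree of $\aworld'$ can be split into two disjoint pieces, one in which $\aworld'$ has an $\avariable$-child, one in which $\aworld'$ has an $\avariablebis$-child. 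Since $\separate$ partitions edges, this is equivalent to $\aworld'$ having at least one $\avariable$-child and at least one $\avariablebis$-child with the two being reached by distinct edges; a minor point is that these children are distinct worlds, which I would note follows because the two summands are edge-disjoint. Similarly the third conjunct handles the $\avarleft$, $\avarselect$, $\avarright$ children of $\avartree$-nodes in $\arelation^2(\aworld)$ via a three-way separation. This matches \ref{prop:apA} verbatim; the proof is by unfolding $\models$ in both directions, with \Cref{lemma:height-at-most-n} used implicitly to restrict attention to the relevant depth.

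\textbf{Next, $\pZ{j}$ and $\pS{j}$.} For $\pS{j} \egdef \HMBox{\avartree}\complete{j-1}$, unfolding $\HMBox{\avartree}$ gives: for every $\avartree$-child $\aworld'$ of $\aworld$, $\amodel,\aworld' \models \complete{j-1}$, which is literally \ref{prop:apS}; here the only subtlety is that the statement of \Cref{lemma:basicform} does not require any $\init{\cdot}$ hypothesis, so I should note that $\pS{j}$ holds iff \ref{prop:apS} holds as a purely syntactic equivalence, independently of what $\complete{j-1}$ ``means''. For $\pZ{j}$, I split on $j$. When $j=1$, $\pZ{1} \egdef \HMDiamond{\avartree}\bigwedge_{b \in \interval{1}{n}} \lnot \avarprop_b$ says there is a $\avartree$-child of $\aworld$ in which all of $\avarprop_1,\dots,\avarprop_n$ are false; since for depth-$1$ encoding $\nb{\tilde\aworld}$ is given by the truth values of $\avarprop_1,\dots,\avarprop_n$, this $\tilde\aworld$ encodes $0$. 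When $j \geq 2$, $\pZ{j} \egdef \HMDiamond{\avartree}\HMBox{\avartree}\lnot\treeval$ says there is a $\avartree$-child $\tilde\aworld$ of $\aworld$ all of whose $\avartree$-children falsify $\treeval$; under the hypothesis $\amodel,\aworld \models \pS{j}$ (hence $\tilde\aworld \models \complete{j-1}$, so $\tilde\aworld$ has exactly $\amapter(j-1,n)$ $\avartree$-children encoding $\interval{0}{\amapter(j-1,n)-1}$), every bit position of $\nb{\tilde\aworld}$ is $0$, i.e.\ $\nb{\tilde\aworld} = 0$. The converse directions are symmetric. The main (and essentially only) obstacle, such as it is, is the $j \geq 2$ case of $\pZ{j}$: it is the sole place where the semantics of $\complete{j-1}$ is genuinely used, so one must be careful to invoke only the already-established (by induction on $j$) characterisation of $\complete{j-1}$ via \ref{prop:apS}--\ref{prop:apA}, and in particular its consequence that a model of $\complete{j-1}$ has exactly $\amapter(j-1,n)$ appropriately numbered $\avartree$-children; given that, the equivalence ``all bits zero iff the number is zero'' is immediate.
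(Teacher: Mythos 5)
Your proof is correct and matches the argument the paper itself intends: the paper omits the proof of this lemma as straightforward, noting only (as you do) that the case of $\pZ{j}$ for $j\geq 2$ is the one place where the inductively assumed correctness of $\complete{j-1}$ is needed, while $\pS{j}$ and $\pA$ follow by directly unfolding the semantics of $\HMBox{\avartree}$, $\HMDiamond{\avartree}$ and $\separate$ (with edge-disjointness giving the distinctness of the $\Aux$-children). So your proposal is essentially the paper's (omitted) proof, spelled out.
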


We now prove the correctness of all the formulae listed above, starting from the base case where $j = 1$ or $i = j$,
to then show the proof for $1 \leq i < j$.

\subsection{Base case $i = j$ / $j = 1$: Correctness of $\fork{\aaux}{\aauxbis}{j}{j}$, $\less{\aaux}{\aauxbis}{j}{j}$ and $\successor{\aaux}{\aauxbis}{1}$}

In the following statements and proofs, let  $\amodel = \triple{\worlds}{\arelation}{\avaluation}$ be a finite forest and $\aworld \in \worlds$.

\begin{lemma}\label{lemma:forkjj}
Let $\aaux \neq \aauxbis \in \Aux$ and $j \geq 1$. Suppose $\amodel,\aworld \models \init{j}$.

$\amodel,\aworld \models \fork{\aaux}{\aauxbis}{j}{j}$ if and only if
\begin{enumerate}
\item $\aworld$ has exactly two $\avartree$-children and exactly two paths of $\avartree$-nodes, both of length $j$, ending in two $\avartree$-nodes (say $\aworld_1$ and $\aworld_2$);
\item $\aworld_1$ corresponds to the nominal $\aaux$ (for the depth $j$),
whereas $\aworld_2$ corresponds to the nominal $\aauxbis$ (for the depth $j$).
\end{enumerate}
\end{lemma}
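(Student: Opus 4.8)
The plan is to prove the equivalence by unfolding the definition
\[
\fork{\aaux}{\aauxbis}{j}{j} \egdef \Diamond_{=2}\avartree \land \HMBox{\avartree}\Boxbox^{j-2}(\avartree \implies \Diamond_{=1}\avartree) \land \twonoms{\aaux}{\aauxbis}{j}
\]
and matching each conjunct to the corresponding part of the claimed semantics, relying on \Cref{lemma:init} (for the hypothesis $\amodel,\aworld\models\init{j}$) and \Cref{lemma:twonoms} (for the $\twonoms{\aaux}{\aauxbis}{j}$ conjunct). First I would observe that $\amodel,\aworld\models\Diamond_{=2}\avartree$ says exactly that $\aworld$ has precisely two $\avartree$-children, while $\HMBox{\avartree}\Boxbox^{j-2}(\avartree\implies\Diamond_{=1}\avartree)$ says that every $\avartree$-node reachable from $\aworld$ along a path of $\avartree$-nodes of length $k$, for $1\le k\le j-1$, has exactly one $\avartree$-child. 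Combining these two, and using the convention $\Boxbox^{k}\aformula\egdef\true$ for $k<0$ (so that the $j=1$ case degenerates gracefully, leaving just $\Diamond_{=2}\avartree$), a straightforward induction on $j$ shows that $\aworld$ has exactly two maximal paths of $\avartree$-nodes, both of length exactly $j$, terminating in two distinct $\avartree$-nodes $\aworld_1,\aworld_2\in\arelation^j(\aworld)$; this establishes item~(1).

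Next I would handle item~(2). Under the hypothesis $\amodel,\aworld\models\init{j}$, \Cref{lemma:twonoms} gives that $\amodel,\aworld\models\twonoms{\aaux}{\aauxbis}{j}$ iff $\aaux$ and $\aauxbis$ are nominals for the depth $j$ corresponding to two distinct $\avartree$-nodes; call them $\aworld_{\aaux}$ and $\aworld_{\aauxbis}$. The only remaining point is to argue that $\{\aworld_{\aaux},\aworld_{\aauxbis}\} = \{\aworld_1,\aworld_2\}$, i.e.\ that the two distinct $\avartree$-nodes witnessed by the $\twonoms{}{}{}$ conjunct are precisely the two endpoints of the two $\avartree$-paths forced by the first two conjuncts. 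This follows because the first two conjuncts imply that $\arelation^j(\aworld)$ contains \emph{exactly} the two $\avartree$-nodes $\aworld_1$ and $\aworld_2$ (any other $\avartree$-node at depth $j$ would require a third $\avartree$-path, contradicting $\Diamond_{=2}\avartree$ together with the $\Diamond_{=1}\avartree$ chain), and $\aworld_{\aaux},\aworld_{\aauxbis}$ are $\avartree$-nodes in $\arelation^j(\aworld)$ by definition of being nominals for the depth $j$; distinctness of $\aworld_{\aaux},\aworld_{\aauxbis}$ then forces them to be exactly $\aworld_1$ and $\aworld_2$ in some order, which after relabelling gives item~(2). The converse direction is symmetric: given (1) and (2), each of the three conjuncts of $\fork{\aaux}{\aauxbis}{j}{j}$ is read off directly, again invoking \Cref{lemma:twonoms} for the last one.

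I expect the main (mild) obstacle to be the bookkeeping in the induction of the first paragraph: one must be careful that $\HMBox{\avartree}\Boxbox^{j-2}(\avartree\implies\Diamond_{=1}\avartree)$ really pins down the depth of the two paths to be exactly $j$ and not merely at least $j$ — the point being that a $\avartree$-node with exactly one $\avartree$-child still \emph{has} a $\avartree$-child, so the chain propagates down to depth $j-1$, and then $\Diamond_{=2}\avartree$ at the root ensures the branching happens only at $\aworld$ itself, so at depth $j$ we have exactly two $\avartree$-nodes and (by $\init{j}$, which forbids $\avartree$-paths from ``skipping'' through $\Aux$-nodes in a way relevant here) no $\avartree$-node at depth $j$ has been omitted from this count. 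Everything else is direct unfolding of $\models$ and appeals to \Cref{lemma:init,lemma:nom,lemma:at,lemma:twonoms}.
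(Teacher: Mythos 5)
Your proposal follows essentially the same route as the paper's proof: unfold the three conjuncts of $\fork{\aaux}{\aauxbis}{j}{j}$, derive item~(1) from $\Diamond_{=2}\avartree$ together with the relativised chain $\HMBox{\avartree}\Boxbox^{j-2}(\avartree\implies\Diamond_{=1}\avartree)$, obtain item~(2) via \Cref{lemma:twonoms} after observing that the two nominal worlds must be the two endpoints at depth $j$, and read the converse off directly. The extra care you take in identifying $\{\aworld_{\aaux},\aworld_{\aauxbis}\}$ with $\{\aworld_1,\aworld_2\}$ is exactly the step the paper leaves implicit, so the argument is correct and no different in substance.
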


\begin{proof}
Reall that $\fork{\aaux}{\aauxbis}{j}{j}$ is defined as $\Diamond_{=2} \avartree\,
{\land}\,\HMBox{\avartree}\Boxbox^{j-2}\!(\avartree{\implies}\Diamond_{=1} \avartree)\,
{\land}\,\twonoms{\aaux}{\aauxbis}{j}$.

($\Rightarrow$):
Suppose $\amodel,\aworld\models\fork{\aaux}{\aauxbis}{j}{j}$.
By $\amodel,\aworld\models \Diamond_{=2} \avartree$, $\aworld$ has exactly two $\avartree$-children (let us say $\aworld_1'$ and $\aworld_2'$).
Then, by $\amodel,\aworld\models\HMBox{\avartree}\Boxbox^{j-2}\!(\avartree{\implies}\Diamond_{=1} \avartree)$,
it is easy to show that
\begin{itemize}
\item there is exactly one path of $\avartree$-nodes of length $j-1$,
starting in $\aworld_1'$ and ending in a $\avartree$-node $\aworld_1 \in \arelation^j(\aworld)$;
\item there is exactly one path of $\avartree$-nodes of length $j-1$,
starting in $\aworld_2'$ and ending in a $\avartree$-node $\aworld_2 \in \arelation^j(\aworld)$.
\end{itemize}
Then, the property (1) of the statement is verified and $\{\aworld_1,\aworld_2\} = \arelation^j(\aworld)$.
The property (2) of the statement is then verified by simply applying \Cref{lemma:twonoms}.

($\Leftarrow$): This direction is straightforward. In short, from (1) we conclude that $\amodel,\aworld \models \Diamond_{=2} \avartree\,
{\land}\,\HMBox{\avartree}\Boxbox^{j-2}\!(\avartree{\implies}\Diamond_{=1} \avartree)$, whereas from (2) together with \Cref{lemma:twonoms} we have $\amodel,\aworld \models \twonoms{\aaux}{\aauxbis}{j}$.
\end{proof}

\begin{lemma}\label{lemma:lessjj}
Let $\aaux \neq \aauxbis \in \Aux$ and $j \geq 1$. Suppose $\amodel,\aworld \models \init{j} \land \fork{\aaux}{\aauxbis}{j}{j}$.

$\amodel,\aworld \models \less{\aaux}{\aauxbis}{j}{j}$ if and only if
there are two distinct $\avartree$-nodes $\aworld_{\aaux},\aworld_{\aauxbis} \in \arelation^j(\aworld)$ such that $\aworld_{\aaux}$ corresponds to the nominal $\aaux$, $\aworld_{\aauxbis}$ corresponds to the nominal $\aauxbis$ and
$\nb{\aworld_{\aaux}} < \nb{\aworld_{\aauxbis}}$.
\end{lemma}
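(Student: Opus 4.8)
Recall that $\less{\aaux}{\aauxbis}{j}{j}$ is defined as
\[
  \less{\aaux}{\aauxbis}{j}{j} \egdef
    {\bigvee_{\mathclap{u \in \interval{1}{n}}}}
      \big(
        \atnom{\aaux}{j}\lnot \avarprop_u
        \land \atnom{\aauxbis}{j} \, \avarprop_u \land
        {\bigwedge_{\mathclap{v \in \interval{u+1}{n}}}}
        (\atnom{\aaux}{j} \, \avarprop_v \,{\iff}\,  \atnom{\aauxbis}{j} \, \avarprop_v)
      \big).
\]
The plan is to exploit \Cref{lemma:forkjj} to pin down the two $\avartree$-nodes $\aworld_{\aaux},\aworld_{\aauxbis} \in \arelation^j(\aworld)$ corresponding to the nominals $\aaux$ and $\aauxbis$: under the hypothesis $\amodel,\aworld \models \init{j} \land \fork{\aaux}{\aauxbis}{j}{j}$, both nominals are well-defined for the depth $j$ and correspond to two distinct $\avartree$-nodes. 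Since $\amodel,\aworld \models \init{j}$, we also know $\aaux$ and $\aauxbis$ are nominals for the depth $j$, so \Cref{lemma:at} applies to every conjunct of the form $\atnom{\aaux}{j}\chi$ and $\atnom{\aauxbis}{j}\chi$ appearing in $\less{\aaux}{\aauxbis}{j}{j}$: each such conjunct holds iff the corresponding property of $\chi$ holds on $\aworld_{\aaux}$ (resp. $\aworld_{\aauxbis}$).

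Using this, I would rewrite $\amodel,\aworld \models \less{\aaux}{\aauxbis}{j}{j}$ as the purely arithmetical statement: there exists $u \in \interval{1}{n}$ such that $\amodel,\aworld_{\aaux} \not\models \avarprop_u$, $\amodel,\aworld_{\aauxbis} \models \avarprop_u$, and for every $v > u$, $\amodel,\aworld_{\aaux} \models \avarprop_v$ iff $\amodel,\aworld_{\aauxbis} \models \avarprop_v$. Recalling that in the base case $\nb{\aworld'}$ for a $\avartree$-node $\aworld' \in \arelation^j(\aworld)$ is exactly the number whose $b$-th bit is $1$ iff $\amodel,\aworld' \models \avarprop_b$ (least significant bit $\avarprop_1$), this condition says precisely: there is a most significant bit position $u$ on which $\nb{\aworld_{\aaux}}$ and $\nb{\aworld_{\aauxbis}}$ differ, it is $0$ in $\nb{\aworld_{\aaux}}$ and $1$ in $\nb{\aworld_{\aauxbis}}$. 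It is a standard fact about binary encodings of numbers in $\interval{0}{2^n-1}$ that this holds iff $\nb{\aworld_{\aaux}} < \nb{\aworld_{\aauxbis}}$ (the disjunction over $u$ ranges over the candidate position of the highest differing bit, and the inner conjunction forces all strictly higher bits to agree). I would state this bit-vector equivalence as a small self-contained arithmetic lemma and invoke it; both directions of the main statement then follow immediately by chaining it with \Cref{lemma:at} and \Cref{lemma:forkjj}.

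The proof is therefore essentially bookkeeping: the only genuinely non-trivial ingredient is the arithmetic characterisation of $<$ in terms of the highest differing bit, and even that is classical. The main (minor) obstacle is being careful that the hypothesis $\amodel,\aworld \models \fork{\aaux}{\aauxbis}{j}{j}$ is exactly what licenses applying \Cref{lemma:at} uniformly — in particular that $\aworld_{\aaux} \neq \aworld_{\aauxbis}$, so that the two ``pointers'' really select two (possibly different) numbers, and that $\aworld_{\aaux},\aworld_{\aauxbis} \in \arelation^j(\aworld)$ so that the base-case bit encoding (rather than the recursive $\treeval$-encoding) is the one in force. Once those hypotheses are unpacked, the equivalence is immediate in both directions.
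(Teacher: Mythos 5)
Your proposal is correct and follows essentially the same route as the paper's proof: pin down $\aworld_{\aaux}$ and $\aworld_{\aauxbis}$ via the hypothesis $\init{j} \land \fork{\aaux}{\aauxbis}{j}{j}$ (\Cref{lemma:forkjj}), translate each $\atnom{\cdot}{j}$ conjunct into a property of the corresponding world via \Cref{lemma:at}, and conclude by the standard highest-differing-bit characterisation of $<$ on $n$-bit binary encodings, in both directions. No gaps.
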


\begin{proof}
  Recall that $\less{\aaux}{\aauxbis}{j}{j}$ is defined as
  ${\bigvee_{{u \in \interval{1}{n}}}}
    \big(
      \atnom{\aaux}{j}\lnot \avarprop_u
      \land \atnom{\aauxbis}{j} \ \avarprop_u \land {\bigwedge_{{v \in \interval{u+1}{n}}}}
      (\atnom{\aaux}{j} \ \avarprop_v \iff  \atnom{\aauxbis}{j} \ \avarprop_v)
    \big)$.
  The proof uses standard properties of numbers encoded in binary.
  Let $x,y$ be two natural numbers that can be represented in binary by using $n$ bits. Let us denote with $x_i$ (resp. $y_i$) the $i$-th bit of the binary representation of $x$ (resp. $y$).
  We have that $x < y$ if and only if
    \begin{enumerate}[label=(\Alph*)]
      \item there is a position $i \in \interval{1}{n}$ such that $x_i = 0$ and $y_i = 1$;
      \item for every position $j > i$, $x_j = 0$ $\iff$ $y_j = 0$.
    \end{enumerate}
  The formula $\less{\aaux}{\aauxbis}{j}{j}$ uses exactly this characterisation in order to state that $\nb{\aworld_{\aaux}} < \nb{\aworld_{\aauxbis}}$.

  In the following, since we are working under the hypothesis that $\amodel,\aworld \models \init{j} \land \fork{\aaux}{\aauxbis}{j}{j}$, let $\aworld_{\aaux}$ (resp. $\aworld_{\aauxbis}$) be the world corresponding to the nominal $\aaux$ (resp. $\aauxbis$), w.r.t.\ the depth $j$.

($\Rightarrow$):
Suppose $\amodel,\aworld \models \less{\aaux}{\aauxbis}{j}{j}$.
Then there is $u \in \interval{1}{n}$ s.t.\
$\amodel,\aworld\models
{
    \atnom{\aaux}{j}\lnot \avarprop_u
    \land \atnom{\aauxbis}{j} \ \avarprop_u \land {\bigwedge_{v \in \interval{u+1}{n}}}
    (\atnom{\aaux}{j} \ \avarprop_v \iff  \atnom{\aauxbis}{j} \ \avarprop_v)
}$.
By \Cref{lemma:at} and $\amodel,\aworld \models \atnom{\aaux}{j}\lnot \avarprop_u
\land \atnom{\aauxbis}{j} \ \avarprop_u$ we conclude that
$\amodel,\aworld_{\aaux} \models \lnot \avarprop_u$ and $\amodel,\aworld_{\aauxbis} \models \avarprop_u$.
Hence, the $u$-th bit is $0$ in the number encoded by $\aworld_{\aaux}$, whereas it is $1$ in the number encoded by $\aworld_{\aauxbis}$, as required by (A).
Similarly, by \Cref{lemma:at} and
$\amodel,\aworld \models {\bigwedge_{v \in \interval{u+1}{n}}}
(\atnom{\aaux}{j} \ \avarprop_v \iff  \atnom{\aauxbis}{j} \ \avarprop_v)$,
we conclude that for every $v \in \interval{u+1}{n}$,
$\amodel,\aworld_{\aaux} \models \avarprop_v$ if and only if $\amodel,\aworld_{\aauxbis} \models \avarprop_v$.
This corresponds to the property (B) above, leading to $\nb{\aworld_{\aaux}} < \nb{\aworld_{\aauxbis}}$.

($\Leftarrow$): This direction follows similar arguments (backwards).
\end{proof}

\begin{lemma}\label{lemma:succone}
Let $\aaux \neq \aauxbis \in \Aux$.
Suppose $\amodel,\aworld \models \init{1} \land \fork{\aaux}{\aauxbis}{1}{1}$.

$\amodel,\aworld \models \successor{\aaux}{\aauxbis}{1}$ if and only if
there are two distinct $\avartree$-nodes $\aworld_{\aaux},\aworld_{\aauxbis} \in \arelation(\aworld)$ such that $\aworld_{\aaux}$ corresponds to the nominal $\aaux$, $\aworld_{\aauxbis}$ corresponds to the nominal $\aauxbis$ and
$\nb{\aworld_{\aauxbis}} = \nb{\aworld_{\aaux}}+1$.
\end{lemma}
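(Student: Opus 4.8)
The plan is to prove this base-case lemma for $\successor{\aaux}{\aauxbis}{1}$ by the same style of argument used for $\less{\aaux}{\aauxbis}{j}{j}$ in \Cref{lemma:lessjj}, namely by translating the defining formula
\begin{nscenter}
$\successor{\aaux}{\aauxbis}{1} = \displaystyle{\bigvee_{u \in \interval{1}{n}}}
      \big(
        \atnom{\aaux}{1}(\lnot \avarprop_u {\land}\!\!\bigwedge_{v \in \interval{1}{u-1}}\!\avarprop_v)
        \land
        \atnom{\aauxbis}{1} (\avarprop_u {\land}\!\!\bigwedge_{v \in \interval{1}{u-1}}\!\lnot \avarprop_v)
        {\land}\!
        {\bigwedge_{v \in \interval{u+1}{n}}}\!
        (\atnom{\aaux}{1} \avarprop_v {\iff} \atnom{\aauxbis}{1} \avarprop_v)
      \big)$
\end{nscenter}
into a statement about the bits of the numbers $\nb{\aworld_{\aaux}}$ and $\nb{\aworld_{\aauxbis}}$. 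First I would recall the standard arithmetical characterisation of ``$+1$'' on $n$-bit numbers: for $x,y \in \interval{0}{2^n-1}$, we have $y = x+1$ iff there is a position $u \in \interval{1}{n}$ such that (i) the $u$-th bit of $x$ is $0$ and of $y$ is $1$; (ii) every position $v < u$ has bit $1$ in $x$ and bit $0$ in $y$ (these are the low-order bits flipped by the carry); (iii) every position $v > u$ has the same bit in $x$ and in $y$. (The case $u = n$ with no overflow is covered; overflow is impossible here since $y \le 2^n-1$.) This is exactly the disjunction above, with $\avarprop_u$ encoding the $u$-th bit and recalling that in the base case the number $\nb{\aworld'}$ of a $\avartree$-node $\aworld' \in \arelation(\aworld)$ is given by the truth values of $\avarprop_1,\dots,\avarprop_n$ on $\aworld'$.

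Concretely: since we work under the hypothesis $\amodel,\aworld \models \init{1} \land \fork{\aaux}{\aauxbis}{1}{1}$, by \Cref{lemma:forkjj} (with $j=1$) $\aworld$ has exactly two $\avartree$-children, which are the worlds corresponding to the nominals $\aaux$ and $\aauxbis$ for the depth $1$; call them $\aworld_{\aaux}$ and $\aworld_{\aauxbis}$, so they are distinct. For the left-to-right direction, assume $\amodel,\aworld \models \successor{\aaux}{\aauxbis}{1}$; then some disjunct holds, for a witness $u$. Applying \Cref{lemma:at} (with $i=1$) to each conjunct $\atnom{\aaux}{1}(\cdots)$ and $\atnom{\aauxbis}{1}(\cdots)$ transfers the propositional constraints on $\aworld_{\aaux}$ and $\aworld_{\aauxbis}$: we get $\amodel,\aworld_{\aaux}\models \lnot\avarprop_u \land \bigwedge_{v<u}\avarprop_v$, $\amodel,\aworld_{\aauxbis}\models \avarprop_u \land \bigwedge_{v<u}\lnot\avarprop_v$, and $\amodel,\aworld_{\aaux}\models\avarprop_v \iff \amodel,\aworld_{\aauxbis}\models\avarprop_v$ for all $v>u$. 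By the bit characterisation (i)--(iii) above this is precisely $\nb{\aworld_{\aauxbis}} = \nb{\aworld_{\aaux}}+1$. The right-to-left direction runs the same equivalences backwards: from $\nb{\aworld_{\aauxbis}} = \nb{\aworld_{\aaux}}+1$ the bit characterisation gives a position $u$, and \Cref{lemma:at} lets us re-package the per-world bit constraints as the $u$-th disjunct of $\successor{\aaux}{\aauxbis}{1}$.

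There is no real obstacle here; the lemma is an easy variant of \Cref{lemma:lessjj}. The only point needing a little care is the bookkeeping of the low-order bits: unlike the ``$<$'' characterisation, which only constrains bits above the distinguished position $u$, the ``$+1$'' characterisation additionally pins down \emph{all} bits below $u$ (they are exactly the ones the carry propagates through), and one must check that the formula's extra conjunction $\bigwedge_{v\in\interval{1}{u-1}}\avarprop_v$ on the $\aaux$-side and $\bigwedge_{v\in\interval{1}{u-1}}\lnot\avarprop_v$ on the $\aauxbis$-side faithfully matches this, and that the edge case $u=n$ causes no overflow (safe, since both numbers lie in $\interval{0}{2^n-1}$ and $\nb{\aworld_{\aauxbis}} \le 2^n-1$). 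Everything else is a mechanical application of \Cref{lemma:at} and \Cref{lemma:forkjj}, exactly as in the proof of \Cref{lemma:lessjj}.
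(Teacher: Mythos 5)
Your proof is correct and follows essentially the same route as the paper: it invokes the standard bit-vector characterisation of $+1$ (carry flips the low-order bits, a single $0\!\to\!1$ flip at position $u$, higher bits unchanged), identifies the two nominal worlds via \Cref{lemma:forkjj}, and transfers the per-world bit constraints with \Cref{lemma:at}, exactly as the paper does by adapting the proof of \Cref{lemma:lessjj}.
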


\begin{proof}
  Recall the definition of $\successor{\aaux}{\aauxbis}{1}$:
  \[
      {\bigvee_{\mathclap{\quad u \in \interval{1}{n}}}}
        \!\!\big(
          \atnom{\aaux}{1}(\lnot \avarprop_u {\land}\!\!\bigwedge_{\mathclap{v \in \interval{1}{u-1}}}\!\avarprop_v)
          \land
          \atnom{\aauxbis}{1} (\avarprop_u {\land}\!\!\bigwedge_{\mathclap{v \in \interval{1}{u-1}}}\!\!\lnot \avarprop_v)
          {\land}\!
          {\bigwedge_{\mathclap{\quad \quad v \in \interval{u+1}{n}}}}\!
          (\atnom{\aaux}{1} \ \avarprop_v\,{\iff}\,\atnom{\aauxbis}{1} \ \avarprop_v)
        \big)
  \]
  The proof uses standard properties of numbers encoded in binary.
  Let $x,y$ be two natural numbers that can be represented in binary by using $n$ bits. Let us denote with $x_i$ (resp. $y_i$) the $i$-th bit of the binary representation of $x$ (resp. $y$).
  We have that $y = x + 1$ if and only if
    \begin{enumerate}[label=(\Alph*)]
      \item there is a position $i \in \interval{1}{n}$ such that $x_i = 0$ and $y_i = 1$;
      \item for every position $j > i$, $x_j = 0$ $\iff$ $y_j = 0$;
      \item for every position $j < i$, $x_j = 1$ and $y_j = 0$.
    \end{enumerate}
  Notice that (A) and (B) are as in the characterisation of $ x < y$ given in \Cref{lemma:lessjj}.
  The formula $\successor{\aaux}{\aauxbis}{1}$ uses exactly this characterisation in order to state that $\nb{\aworld_{\aauxbis}} = \nb{\aworld_{\aaux}}+1$.

  Since we are working under the hypothesis that $\amodel,\aworld \models \init{1} \land \fork{\aaux}{\aauxbis}{1}{1}$, there are two distinct worlds $\aworld_{\aaux}$ and $\aworld_{\aauxbis}$ corresponding to the two nominals $\aaux$ and $\aauxbis$ for the depth $1$, respectively.
  Then, the proof of this lemma follows closely the proof of \Cref{lemma:lessjj},
  and enforcing (C) by means of the subformula
  $\atnom{\aaux}{1}(\lnot \avarprop_u {\land}\!\!\bigwedge_{{v \in \interval{1}{u-1}}}\!\avarprop_v)
  \land
  \atnom{\aauxbis}{1} (\avarprop_u {\land}\!\!\bigwedge_{{v \in \interval{1}{u-1}}}\!\!\lnot \avarprop_v)$.
\end{proof}

\subsection{Base case $i = j$ / $j = 1$: Correctness of $\pU{1}$ and $\pC{1}$}

In the following statements and proofs, let  $\amodel = \triple{\worlds}{\arelation}{\avaluation}$ be a finite forest and $\aworld \in \worlds$.

\begin{lemma}\label{lemma:uniqone}
Suppose $\amodel,\aworld \models \init{1} \land \pA$.

$\amodel,\aworld \models \pU{1}$ if and only if
$\pair{\amodel}{\aworld}$ satisfies
\ref{prop:apU_1}, i.e.\ distinct $\avartree$-nodes in $\arelation(\aworld)$
encode different numbers.
\end{lemma}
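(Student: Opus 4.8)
The plan is to prove the equivalence by contraposition, after unfolding the definition $\pU{1} \egdef \lnot \big(\true \separate (\fork{\avariable}{\avariablebis}{1}{1} \land \equivalent{\avariable}{\avariablebis}{1}{1})\big)$. Thus $\amodel,\aworld \not\models \pU{1}$ holds exactly when there is a submodel $\amodel' \sqsubseteq \amodel$ with $\amodel',\aworld \models \fork{\avariable}{\avariablebis}{1}{1} \land \equivalent{\avariable}{\avariablebis}{1}{1}$, and I would show that such an $\amodel'$ exists if and only if $\amodel$ contains two distinct $\avartree$-children of $\aworld$ encoding the same number, i.e.\ if and only if \ref{prop:apU_1} fails. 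Two facts will be used throughout: $\init{1}$ is preserved under taking submodels (\Cref{lemma:init}), and, since $j=1$, the number $\nb{\aworld'}$ of a $\avartree$-child $\aworld'$ of $\aworld$ is read off the truth values of $\avarprop_1,\dots,\avarprop_n$ at $\aworld'$ itself, hence is unaffected by deleting edges.

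For the first direction I would assume $\aworld_1 \neq \aworld_2$ are $\avartree$-children of $\aworld$ in $\amodel$ with $\nb{\aworld_1} = \nb{\aworld_2}$, and build $\amodel' \sqsubseteq \amodel$ by deleting every edge out of $\aworld$ except those to $\aworld_1$ and $\aworld_2$, then deleting the edge from $\aworld_2$ to its $\avariable$-child and the edge from $\aworld_1$ to its $\avariablebis$-child (these children exist, uniquely, because $\amodel,\aworld \models \pA$). In $\amodel'$ the world $\aworld$ has exactly the two $\avartree$-children $\aworld_1,\aworld_2$; the only $\avartree$-node in $\arelation'(\aworld)$ with a $\avariable$-child is $\aworld_1$ and the only one with a $\avariablebis$-child is $\aworld_2$, so $\avariable,\avariablebis$ are nominals for depth $1$ corresponding to the distinct worlds $\aworld_1,\aworld_2$. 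Using $\amodel',\aworld \models \init{1}$ (\Cref{lemma:init}) and \Cref{lemma:forkjj} for the base case $i=j=1$ (where the $\completeplus{\cdot}$ clause is vacuous), this yields $\amodel',\aworld \models \fork{\avariable}{\avariablebis}{1}{1}$; and since $\nb{\aworld_1},\nb{\aworld_2}$ are the same in $\amodel'$ as in $\amodel$, hence equal, \Cref{lemma:lessjj} gives $\amodel',\aworld \models \lnot(\less{\avariable}{\avariablebis}{1}{1} \lor \less{\avariablebis}{\avariable}{1}{1}) = \equivalent{\avariable}{\avariablebis}{1}{1}$. Therefore $\amodel,\aworld \not\models \pU{1}$.

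For the converse I would take $\amodel' \sqsubseteq \amodel$ with $\amodel',\aworld \models \fork{\avariable}{\avariablebis}{1}{1} \land \equivalent{\avariable}{\avariablebis}{1}{1}$. By \Cref{lemma:init}, $\amodel',\aworld \models \init{1}$, so \Cref{lemma:forkjj} applies: in $\amodel'$, $\aworld$ has exactly two $\avartree$-children, one (call it $\aworld_{\avariable}$) corresponding to the nominal $\avariable$ and one (call it $\aworld_{\avariablebis}$) corresponding to $\avariablebis$, and they are distinct by the $\twonoms{\cdot}{\cdot}{1}$ conjunct of $\fork$ together with \Cref{lemma:twonoms}. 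Both $\aworld_{\avariable},\aworld_{\avariablebis}$ are $\avartree$-children of $\aworld$ already in $\amodel$, since this condition only mentions the edge $\aworld\to\aworld_{\avariable}$ (present in $\amodel$) and the propositional label $\avartree$ (unchanged). Applying \Cref{lemma:lessjj} to $\equivalent{\avariable}{\avariablebis}{1}{1}$ in $\amodel'$ gives $\nb{\aworld_{\avariable}} = \nb{\aworld_{\avariablebis}}$, and since these base-case numbers are read off $\avarprop_1,\dots,\avarprop_n$ at the worlds themselves, the same equality holds in $\amodel$. Hence $\aworld_{\avariable} \neq \aworld_{\avariablebis}$ are $\avartree$-children of $\aworld$ in $\amodel$ encoding the same number, so \ref{prop:apU_1} fails.

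The main obstacle, as in all these base-case lemmata, is the interplay between $\separate$ and the nominal machinery: $\separate$ only removes edges, so I must check (i) in the forward direction that the deletions really turn $\avariable,\avariablebis$ into depth-$1$ nominals without destroying the two chosen children — this is exactly where $\pA$ is needed — and (ii) in the backward direction that the two distinct children produced by $\fork$ inside the submodel genuinely lift to distinct $\avartree$-children of $\aworld$ in $\amodel$ encoding the same number, which works precisely because "being a $\avartree$-child of $\aworld$" and the base-case encoding of numbers are either downward-preserved or purely propositional. Once these two points are handled, the rest is a routine application of \Cref{lemma:forkjj}, \Cref{lemma:lessjj}, \Cref{lemma:twonoms} and \Cref{lemma:init}.
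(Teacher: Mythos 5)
Your proof is correct and follows essentially the same route as the paper: both directions by contraposition, the forward direction building a witnessing submodel (using $\pA$ and $\init{1}$ to isolate the $\avariable$- and $\avariablebis$-children so that \Cref{lemma:forkjj} and \Cref{lemma:lessjj} apply), and the backward direction using monotonicity of $\init{1}$ under submodels together with the fact that the base-case number encoding is purely propositional and hence unaffected by edge deletion. The only difference is cosmetic: the paper's submodel keeps only the two relevant $\Aux$-edges below $\aworld_1,\aworld_2$, whereas yours keeps all of their remaining children, which is equally valid since $\fork{\avariable}{\avariablebis}{1}{1}$ imposes no further constraints at depth $2$.
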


\begin{proof}
Let us recall that
  $
  \pU{1} \egdef \lnot \big(\true \separate (\fork{\avariable}{\avariablebis}{1}{1} \land \equivalent{\avariable}{\avariablebis}{1}{1})\big)
  $
  where $\equivalent{\avariable}{\avariablebis}{1}{1}$ stands for
  $\lnot (\less{\avariable}{\avariablebis}{1}{1} \lor \less{\avariablebis}{\avariable}{1}{1})$.

($\Rightarrow$): Conversely, suppose that there are two distinct $\avartree$-nodes $\aworld_{\avariable}$ and $\aworld_{\avariablebis}$ encoding the same number.
Since $\amodel,\aworld \models \init{1} \land \pA$, every world in $\arelation(\aworld)$ has exactly one child satisfying $\avariable$ and exactly one (different) child satisfying $\avariablebis$.
Let us then consider the submodel $\amodel' = \triple{\worlds}{\arelation_1}{\avaluation}$
where $\arelation_1(\aworld) = \{\aworld_{\avariable},\aworld_{\avariablebis}\}$,
$\arelation_1(\aworld_{\avariable}) = \{\aworld_1\}$ and  $\arelation_1(\aworld_{\avariablebis}) = \{\aworld_2\}$,
so that $\aworld_1$ satisfies $\avariable$ whereas $\aworld_2$ satisfies $\avariablebis$.
By \Cref{lemma:forkjj}, $\amodel',\aworld \models \fork{\avariable}{\avariablebis}{1}{1}$. By hypothesis, $\nb{\aworld_{\avariable}} = \nb{\aworld_{\avariablebis}}$ and therefore we also have $\amodel',\aworld \models \equivalent{\avariable}{\avariablebis}{1}{1}$.
Thus, by definition, $\amodel,\aworld \not\models \pU{1}$.

($\Leftarrow$): Again conversely, suppose that $\amodel,\aworld \not\models \pU{1}$ and therefore $\amodel,\aworld \models \true \separate (\fork{\avariable}{\avariablebis}{1}{1} \land \equivalent{\avariable}{\avariablebis}{1}{1})$.
Then, by definition there is a submodel $\amodel' = \triple{\worlds}{\arelation_1}{\avaluation}$ of $\amodel$
such that $\amodel',\aworld \models \fork{\avariable}{\avariablebis}{1}{1} \land \equivalent{\avariable}{\avariablebis}{1}{1}$.
Moreover, since the satisfaction of $\init{1}$ is monotonic w.r.t.\ submodels, we have $\amodel',\aworld \models \init{1}$.
We can then apply \Cref{lemma:forkjj,lemma:lessjj} in order to conclude that there are two distinct worlds $\aworld_{\avariable}$ and $\aworld_{\avariablebis}$ in $\arelation'(\aworld)$ such that $\nb{\aworld_{\avariable}} = \nb{\aworld_{\avariablebis}}$.
Since the encoding of a number (for $j = 1$) only depends on the satisfaction of
the propositional symbols $\avarprop_1,\dots,\avarprop_n$ on a certain world,
we conclude that the same property holds for $\amodel$:
the two worlds $\aworld_{\avariable}$ and $\aworld_{\avariablebis}$ in $\arelation(\aworld)$ are such that $\nb{\aworld_{\avariable}} = \nb{\aworld_{\avariablebis}}$.
Therefore, $\pair{\amodel}{\aworld}$ does not satisfy \ref{prop:apU_1}.
\end{proof}

\begin{lemma}\label{lemma:complone}
Suppose $\amodel,\aworld \models \init{1} \land \pA$.

$\amodel,\aworld \models \pC{1}$ if and only if
$\pair{\amodel}{\aworld}$ satisfies
\ref{prop:apC_1}, i.e.\ for every $\avartree$-node $\aworld_1 \in \arelation(\aworld)$, if $\nb{\aworld_1} < 2^n-1$ then $\nb{\aworld_2} = \nb{\aworld_1}+1$ for some $\avartree$-node $\aworld_2 \in \arelation(\aworld)$.
\end{lemma}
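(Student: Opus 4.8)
The plan is to prove \Cref{lemma:complone} following the same two-directional argument used for \Cref{lemma:uniqone}, but now handling the more delicate structure of $\pC{1}$. Recall that $\pC{1}$ is defined as
\[
\lnot
  \big(\Box \bottom \separate
    \big(
      \HMBox{\avartree}\Diamond \avariablebis \land \atnom{\avariable}{1} \lnot \one_{1} \land \lnot ( \true \separate (\fork{\avariable}{\avariablebis}{1}{1} \land \successor{\avariable}{\avariablebis}{1}))
    \big)
  \big),
\]
where $\one_1 \egdef \bigwedge_{i \in \interval{1}{n}}\avarprop_i$ encodes $2^n-1$. The key observation is that $\Box \bottom \separate \aformulabis$ is satisfied at $\aworld$ iff there is a submodel $\amodel'$ obtained by deleting some edges that do \emph{not} start at $\aworld$ (since the $\Box \bottom$ side must keep all of $\aworld$'s children but can retain no edge below them) such that $\amodel', \aworld \models \aformulabis$. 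Concretely, in such a submodel $\aworld$ keeps all its $\avartree$-children, each $\avartree$-child of $\aworld$ keeps all its $\avariable$- and $\avariablebis$-children (as enforced by $\pA$ surviving into the submodel, via $\HMBox{\avartree}\Diamond \avariablebis$ and the fork formula), but below that level everything may be pruned. The conjunct $\HMBox{\avartree}\Diamond \avariablebis$ forces every $\avartree$-child of $\aworld$ to still have a $\avariablebis$-child in $\amodel'$, so the submodel only witnesses a restriction of the $\avariable$-children.

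First I would prove the right-to-left contrapositive: assume $\pair{\amodel}{\aworld}$ does not satisfy \ref{prop:apC_1}, i.e.\ there is a $\avartree$-node $\aworld_{\avariable} \in \arelation(\aworld)$ with $\nb{\aworld_{\avariable}} < 2^n-1$ and no $\avartree$-node in $\arelation(\aworld)$ encoding $\nb{\aworld_{\avariable}}+1$. Build the submodel $\amodel'$ that keeps all of $\aworld$'s children and all $\avariable$/$\avariablebis$-children of those, deletes below level $2$, and additionally makes $\avariable$ a local nominal at depth $1$ by deleting the $\avariable$-edges from every $\avartree$-child except $\aworld_{\avariable}$ (this is possible because $\amodel,\aworld \models \init{1} \land \pA$). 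Then $\amodel',\aworld \models \HMBox{\avartree}\Diamond\avariablebis \land \nominal{\avariable}{1}$, and $\amodel',\aworld \models \atnom{\avariable}{1}\lnot\one_1$ by \Cref{lemma:at} and $\nb{\aworld_{\avariable}} < 2^n-1$. For the last conjunct, any further submodel realising $\fork{\avariable}{\avariablebis}{1}{1}$ must pick $\aworld_{\avariable}$ (the unique $\avariable$-nominal world) and some $\avariablebis$-nominal world $\aworld_{\avariablebis}$, both $\avartree$-children of $\aworld$, so $\nb{\aworld_{\avariablebis}} \neq \nb{\aworld_{\avariable}}+1$ by assumption, hence $\successor{\avariable}{\avariablebis}{1}$ fails by \Cref{lemma:succone}. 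Therefore $\amodel,\aworld \models \Box\bottom \separate (\dots)$, i.e.\ $\amodel,\aworld \not\models \pC{1}$.

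For the left-to-right contrapositive, assume $\amodel,\aworld \not\models \pC{1}$, so there is a submodel $\amodel' = \triple{\worlds}{\arelation_1}{\avaluation}$ with $\arelation_1 = \arelation_1' \uplus \arelation_1''$ where $\arelation_1'(\aworld') = \emptyset$ for all $\aworld'$ (the $\Box\bottom$ part keeps no edges) — in effect $\arelation_1 \subseteq \arelation \setminus (\{\aworld\}\times\arelation(\aworld))^{\complement}$, more precisely $\arelation_1(\aworld) = \arelation(\aworld)$ — and $\amodel',\aworld$ satisfies the three conjuncts. From $\nominal{\avariable}{1}$ (\Cref{lemma:nom}) there is a unique $\avartree$-child $\aworld_{\avariable}$ of $\aworld$ with an $\avariable$-child in $\amodel'$; since the number encoded at depth $1$ depends only on $\avarprop_1,\dots,\avarprop_n$ at that world, $\nb{\aworld_{\avariable}}$ is the same in $\amodel$ and $\amodel'$, and $\atnom{\avariable}{1}\lnot\one_1$ gives $\nb{\aworld_{\avariable}} < 2^n-1$. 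Now suppose for contradiction that some $\avartree$-child $\aworld_{\avariablebis}$ of $\aworld$ has $\nb{\aworld_{\avariablebis}} = \nb{\aworld_{\avariable}}+1$; using $\HMBox{\avartree}\Diamond\avariablebis$ (which guarantees $\aworld_{\avariablebis}$ still has a $\avariablebis$-child in $\amodel'$) and $\pA$ we can carve a further submodel of $\amodel'$ witnessing $\fork{\avariable}{\avariablebis}{1}{1}$ with $\aworld_{\avariable}$ and $\aworld_{\avariablebis}$ as the two forked nodes, and $\successor{\avariable}{\avariablebis}{1}$ holds by \Cref{lemma:succone}, contradicting the third conjunct. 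Hence no such $\aworld_{\avariablebis}$ exists, so $\pair{\amodel}{\aworld}$ violates \ref{prop:apC_1}. The main obstacle is the careful bookkeeping for the $\Box\bottom \separate$ decomposition: one must verify that the edge deletions done to isolate the nominal $\avariable$ are consistent with keeping exactly $\aworld$'s children and with the surviving $\pA$-structure, and that no encoded number is accidentally changed — which holds precisely because at depth $1$ numbers are read off propositional values, immune to edge removal.
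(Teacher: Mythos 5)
Your proof is correct and follows essentially the same route as the paper: both directions exploit the fact that $\Box\bot\separate\aformulabis$ holds at $\aworld$ exactly when some submodel keeping all edges out of $\aworld$ satisfies $\aformulabis$, build the appropriate submodels, and then invoke \Cref{lemma:forkjj}, \Cref{lemma:succone} and the observation that depth-one numbers are read off propositional valuations and hence are immune to edge deletion; your construction in the first direction (removing the $\avariable$-edges from every $\avartree$-child except $\aworld_\avariable$ so that any inner fork is forced to use $\aworld_\avariable$, while keeping every $\avariablebis$-edge so that $\HMBox{\avartree}\Diamond\avariablebis$ survives) is exactly the intended one. Two small inaccuracies, neither fatal: in the second direction you claim the submodel extracted from $\lnot\pC{1}$ satisfies $\nominal{\avariable}{1}$ and cite \Cref{lemma:nom}, but $\pC{1}$ (unlike $\pC{j}$ for $j\geq 2$) contains no such conjunct, so uniqueness of $\aworld_\avariable$ is not available --- the paper flags this explicitly; what you actually need is only the \emph{existence} of a $\avartree$-child with an $\avariable$-child satisfying $\lnot\one_1$, which follows from $\atnom{\avariable}{1}\lnot\one_1$ alone, and since you never use uniqueness the argument still goes through. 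Also, your parenthetical about the decomposition has the two sides swapped (the $\Box\bot$ part must contain \emph{none} of $\aworld$'s outgoing edges, so the other part contains all of them); the characterisation you actually use afterwards is the correct one.
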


\begin{proof}
Recall that $\pC{1}$ is defined as: 
  \begin{nscenter}
  $
  \lnot
    \big(\Box \bottom \separate
      \big(
        \HMBox{\avartree}\Diamond \avariablebis \land \atnom{\avariable}{1} \lnot \one_{1} \land \lnot ( \true \separate (\fork{\avariable}{\avariablebis}{1}{1} \land \successor{\avariable}{\avariablebis}{1}))
      \big)
    \big).
  $
\end{nscenter}

($\Rightarrow$):
Suppose $\amodel,\aworld \models \pC{1}$.
By definition of $\models$, this implies that
for any $\amodel'=\triple{\worlds}{\arelation'}{\avaluation}$ submodel of $\amodel$
such that $\arelation'(\aworld)=\arelation(\aworld)$, if
$\amodel',\aworld\models\HMBox{\avartree}\Diamond \avariablebis \land \atnom{\avariable}{1} \lnot \one_{1}$, then
$\amodel',\aworld\models \true \separate (\fork{\avariable}{\avariablebis}{1}{1} \land \successor{\avariable}{\avariablebis}{1})$.
Then, let us pick a $\avartree$-node $\aworld_\avariable\in\arelation'(\aworld)=\arelation(\aworld)$ such that $\nb{\aworld_\avariable}<2^n-1$. We show that there must be a world $\aworld_\avariablebis \in \arelation'(\aworld)$ such that $\nb{\aworld_\avariablebis} = \nb{\aworld_\avariable} + 1$.
Let us consider the submodel $\amodel'' = \triple{\worlds}{\arelation'}{\avaluation}$ of $\amodel$ such that for every $\overline{\aworld} \in \worlds$,
if $\overline{\aworld} \neq \aworld_\avariable$ then $\arelation'(\overline{\aworld}) = \arelation(\overline{\aworld})$ and otherwise $\arelation'(\aworld_\avariable) = \{\aworld_{1}\}$ where $\aworld_{1}$ is the only $\Aux$-child of $\aworld_\avariable$ (w.r.t.\ $\arelation$) satisfying $\avariable$.
Notice that $\aworld_{1}$ exists and it is unique by $\amodel,\aworld \models \init{1} \land \pA$.
Moreover, $\aworld_\avariable$ corresponds in $\amodel'$ to the nominal $\avariable$ for the depth $1$.
Again by $\amodel,\aworld \models \init{1} \land \pA$, we conclude that $\amodel',\aworld \models \HMBox{\avartree}\Diamond \avariablebis$.
Moreover, since $\nb{\aworld_\avariable}<2^n-1$, by \Cref{lemma:at} we have $\amodel',\aworld \models \atnom{\avariable}{1} \lnot \one_{1}$.
Hence by hypothesis, $\amodel',\aworld \models \true \separate (\fork{\avariable}{\avariablebis}{1}{1} \land \successor{\avariable}{\avariablebis}{1})$.
Then, let $\amodel'' = \triple{\worlds}{\arelation''}{\avaluation} \sqsubseteq \amodel'$
be such that $\amodel'',\aworld \models \fork{\avariable}{\avariablebis}{1}{1} \land \successor{\avariable}{\avariablebis}{1}$.
By \Cref{lemma:forkjj,lemma:succone}, there is $\aworld_\avariablebis\in\arelation''(\aworld)$ such that $\nb{\aworld_\avariablebis}=\nb{\aworld_\avariable}+1$.
Since the encoding of a number (for $j = 1$) only depends on the satisfaction of
the propositional symbols $\avarprop_1,\dots,\avarprop_n$ on a certain world,
we conclude that the same property holds for $\amodel$.
Thus, $\pair{\amodel}{\aworld}$ satisfies \ref{prop:apC_1}.

($\Leftarrow$): Suppose that $\pair{\amodel}{\aworld}$ satisfies \ref{prop:apC_1}, and \emph{ad absurdum} assume that $\amodel,\aworld \not\models \pC{1}$,
hence $\amodel,\aworld \models \Box \bottom \separate
    \big(
      \HMBox{\avartree}\Diamond \avariablebis \land \atnom{\avariable}{1} \lnot \one_{1} \land \lnot ( \true \separate (\fork{\avariable}{\avariablebis}{1}{1} \land \successor{\avariable}{\avariablebis}{1}))
    \big).
$
Then, there is a submodel $\amodel' = \triple{\worlds}{\arelation'}{\avaluation}$ of $\amodel$ such that $\arelation'(\aworld) = \arelation(\aworld)$ and
$\amodel',\aworld \models
    \HMBox{\avartree}\Diamond \avariablebis \land \atnom{\avariable}{1} \lnot \one_{1} \land \lnot ( \true \separate (\fork{\avariable}{\avariablebis}{1}{1} \land \successor{\avariable}{\avariablebis}{1}))$.
Notice that this formula does not enforce $\avariable$ to be a nominal for the depth $1$, however from $\amodel',\aworld \models \atnom{\avariable}{1} \lnot \one_{1}$
we deduce that there is at least one $\avartree$-node $\aworld_{\avariable}$ such that $\amodel',\aworld_{\avariable} \models \Diamond \avariable \land \lnot \one_{1}$.
Then, $\nb{\aworld_{\avariable}} < 2^n-1$ and by hypothesis there is a $\avartree$-node $\aworld_{\avariablebis}$ such that $\nb{\aworld_{\avariablebis}} = \nb{\aworld_{\avariable}} + 1$.
Let us consider now the submodel $\amodel'' = \triple{\worlds}{\arelation''}{\avaluation}$ of $\amodel'$
where $\arelation''(\aworld) = \{\aworld_{\avariable},\aworld_{\avariablebis}\}$,
$\arelation''(\aworld_{\avariable}) = \{\aworld_1\}$ and $\arelation''(\aworld_{\avariablebis}) = \{\aworld_2\}$,
where $\aworld_1$ (resp. $\aworld_2$) is the only $\Aux$-children of $\aworld_{\avariable}$ (resp. $\aworld_{\avariablebis}$)
that satisfies $\avariable$ (resp. $\avariablebis$).
The existence of $\aworld_1$ and $\aworld_2$ is guaranteed
by $\amodel',\aworld_{\avariable} \models \Diamond \avariable \land \lnot \one_{1}$
and $\amodel',\aworld \models \HMBox{\avartree}\Diamond \avariablebis$.
By \Cref{lemma:forkjj},
$\amodel'',\aworld \models \fork{\avariable}{\avariablebis}{1}{1}$.
Moreover, as the encoding of a number (for $j = 1$) only depends on the satisfaction of
the propositional symbols $\avarprop_1,\dots,\avarprop_n$ on a certain world,
$\amodel'',\aworld \models \successor{\avariable}{\avariablebis}{1}$.
Then, we conclude that $\amodel',\aworld \models \true \separate (\fork{\avariable}{\avariablebis}{1}{1} \land \successor{\avariable}{\avariablebis}{1})$, in contradiction with
$\amodel',\aworld \models
    \HMBox{\avartree}\Diamond \avariablebis \land \atnom{\avariable}{1} \lnot \one_{1} \land \lnot ( \true \separate (\fork{\avariable}{\avariablebis}{1}{1} \land \successor{\avariable}{\avariablebis}{1}))$.
Thus, $\amodel,\aworld \models \pC{1}$.
\end{proof}

\subsection{Proof of \Cref{lemma:tower-hardness-base-case} and satisfiability of $\complete{1}$}

\begin{proof}(\Cref{lemma:tower-hardness-base-case})
Follows directly from \Cref{lemma:basicform,lemma:uniqone,lemma:complone}.
\end{proof}

A quick check of $\init{1}$ and the conditions \ref{prop:apS_1}, \ref{prop:apZ_1},
\ref{prop:apU_1}, \ref{prop:apC_1} and \ref{prop:apA} should convince the reader that they are simultaneously satisfiable, leading to $\init{1} \land \complete{1}$ being satisfiable.
However, in the following we provide an explicit model satisfiying this formula

\begin{lemma}\label{A:lemma:complete-one-satisfiable}
$\init{1} \land \complete{1}$ is satisfiable.
\end{lemma}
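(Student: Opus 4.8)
The goal is simply to exhibit a concrete pointed forest $\pair{\amodel}{\aworld}$ satisfying $\init{1} \land \complete{1}$. By \Cref{lemma:tower-hardness-base-case}, once we know $\amodel,\aworld \models \init{1}$ it suffices to check that $\pair{\amodel}{\aworld}$ satisfies the five structural properties \ref{prop:apS_1}, \ref{prop:apZ_1}, \ref{prop:apU_1}, \ref{prop:apC_1} and \ref{prop:apA}. The plan is to build the canonical model that $\complete{1}$ is designed to characterise: a root $\aworld$ with exactly $2^n$ $\avartree$-children, one for each valuation over $\avarprop_1,\dots,\avarprop_n$ (so that the numbers encoded by the children span $\interval{0}{2^n-1}$), and then to decorate each of these children and grandchildren with exactly the auxiliary $\Aux$-nodes required by $\pA$ and $\init{1}$.

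Concretely, I would set $\aworld$ to be a fresh root, add $\avartree$-children $\aworld_0, \dots, \aworld_{2^n-1}$, and for each $b \in \interval{0}{2^n-1}$ put $\aworld_b \in \avaluation(\avarprop_i)$ iff the $i$-th bit of $b$ is $1$, so $\nb{\aworld_b} = b$. To each $\aworld_b$ attach exactly two $\Aux$-leaves, one satisfying only $\avariable$ and one satisfying only $\avariablebis$ (as demanded by the second conjunct of $\pA$ and by \ref{prop:apA}); since $\md{\complete{1}}$ is bounded and by \Cref{lemma:height-at-most-n} only nodes within distance one of the $\avartree$-part matter here, one can additionally attach to each $\aworld_b$ three further $\Aux$-leaves satisfying $\avarleft$, $\avarselect$, $\avarright$ respectively if one wants the model to also satisfy $\completeplus{1}$, but for $\complete{1}$ alone the $\avariable$- and $\avariablebis$-leaves suffice together with making $\arelation^2(\aworld)$ empty of $\avartree$-nodes (so the $\arelation^2$ clause of $\pA$ is vacuous). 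Make all $\Aux$-nodes leaves and let every $\Aux$-node satisfy exactly the one auxiliary proposition assigned to it, and let no $\avarprop_i$ hold at $\aworld$ or at the $\Aux$-nodes. Then $\amodel = \triple{\worlds}{\arelation}{\avaluation}$ with this data and the designated world $\aworld$ is the candidate.

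Verification is routine and I would present it by checking each item. (i) $\amodel,\aworld \models \init{1}$: each $\avartree$-node $\aworld_b$ has at most one child per auxiliary proposition, and each $\Aux$-node is a leaf satisfying exactly one auxiliary proposition, which is exactly what \Cref{lemma:init} requires for $j=1$. (ii) \ref{prop:apS_1}: $\complete{0} = \true$, so $\pS{1}$ holds trivially. (iii) \ref{prop:apZ_1}: $\aworld_0$ falsifies every $\avarprop_i$, so $\nb{\aworld_0}=0$. (iv) \ref{prop:apU_1}: distinct children carry distinct bit vectors, hence distinct numbers. (v) \ref{prop:apC_1}: for any $b < 2^n-1$ the child $\aworld_{b+1}$ is present and encodes $b+1$. (vi) \ref{prop:apA}: $\aworld$ is a $\avartree$-node, every $\avartree$-child has one $\avariable$-child and one $\avariablebis$-child by construction, and $\arelation^2(\aworld)$ contains no $\avartree$-node so the last conjunct is vacuously true. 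Hence all five properties hold, and \Cref{lemma:tower-hardness-base-case} gives $\amodel,\aworld \models \complete{1}$, so $\amodel,\aworld \models \init{1}\land\complete{1}$.

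There is no real obstacle here; the only mild subtlety is getting the auxiliary bookkeeping to line up with \emph{exactly} what $\init{1}$ and $\pA$ demand (one $\avariable$- and one $\avariablebis$-child per $\avartree$-child, each $\Aux$-node a leaf carrying a single auxiliary proposition, and the $\arelation^2$-clause of $\pA$ made vacuous by having no $\avartree$-grandchildren), so that I do not accidentally violate the uniqueness requirement $\avartree \implies \lnot(\Diamond\aaux \separate \Diamond\aaux)$ or the ``only one auxiliary proposition'' requirement. Once the model is written down precisely, each clause is a one-line check.
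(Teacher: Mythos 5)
Your proposal is correct and matches the paper's own proof essentially verbatim: both exhibit the canonical model with a root having $2^n$ $\avartree$-children encoding all bit vectors over $\avarprop_1,\dots,\avarprop_n$, each equipped with one $\avariable$-leaf and one $\avariablebis$-leaf, and both conclude via Lemma~\ref{lemma:tower-hardness-base-case} after checking $\init{1}$ and the five properties. The observation that the $\arelation^2$-clause of \ref{prop:apA} is vacuous (no $\avartree$-grandchildren) is exactly the same minimality choice the paper makes.
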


\begin{proof}
  Consider the finite forest $\amodel = \triple{\worlds}{\arelation}{\avaluation}$ and a world $\aworld$ such that
  \begin{enumerate}
    \item $\arelation$ is the minimal set of pairs such that $\arelation(\aworld) = \{\aworld_0 ,\dots, \aworld_{2^n-1}\}$ (where $\aworld_0 ,\dots, \aworld_{2^n-1}$ are all distinct worlds),
    and for every $i \in \interval{0}{2^n-1}$, $\arelation(\aworld_i) = \{\aworld_i^\avariable,\aworld_i^\avariablebis\}$ (again, $\aworld_i^\avariable,\aworld_i^\avariablebis$ distincts);
    \item $\worlds = \{\aworld\} \cup \arelation(\aworld) \cup \bigcup_{\aworld' \in \arelation(\aworld)} \arelation(\aworld')$;
    \item $\avaluation(\avariable) = \{\aworld_0^\avariable,\dots,\aworld_{2^n-1}^\avariable\}$, $\avaluation(\avariablebis) = \{\aworld_0^\avariablebis,\dots,\aworld_{2^n-1}^\avariablebis\}$
    and for every $i \in \interval{0}{2^n-1}$ and $j \in \interval{1}{n}$, $\aworld_i \in \avaluation(\avarprop_j)$ if and only if the $j$-th bit in the binary encoding of $i$ is $1$.
  \end{enumerate}
  It is easy to check that $\pair{\amodel}{\aworld}$ satisfies $\init{1}$ as well as \ref{prop:apS_1}, \ref{prop:apZ_1},
  \ref{prop:apU_1}, \ref{prop:apC_1} and \ref{prop:apA}. Thus, by \Cref{lemma:tower-hardness-base-case} $\amodel,\aworld \models \init{1} \land \complete{1}$.
\end{proof}

\subsection{Inductive case $1 \leq i < j$ : Correctness of $\fork{\aaux}{\aauxbis}{i}{j}$, $\lsrpartition{j}$, $\less{\aaux}{\aauxbis}{i}{j}$ and $\successor{\aaux}{\aauxbis}{j}$}

In the following statements and proofs, let $\amodel = \triple{\worlds}{\arelation}{\avaluation}$ be a finite forest and $\aworld \in \worlds$. Let $1 \leq i < j$.
We show the correctness of the definitions of $\fork{\aaux}{\aauxbis}{i}{j}$, $\lsrpartition{j}$, $\less{\aaux}{\aauxbis}{i}{j}$ and $\successor{\aaux}{\aauxbis}{j}$, under the inductive hypothesis
that all the statements in \Cref{appendix:inductive-formulae}
holds for all $i',j' \in \Nat$ such that $1 \geq i' \geq j' \geq j$ and ($j' < j$ or $j' - i' < j - i$).

First of all, assume for a moment that $\complete{j}$ is correctly defined, with semantics as in \ref{appendix:inductive-formulae}. Then the following result holds.

\begin{lemma}\label{A:the-property-of-types}
Let $0 \leq i \leq j$ with $j \geq 2$.
Let $\amodel = \triple{\worlds}{\arelation}{\avaluation}$ and $\aworld \in \worlds$
such that $\amodel,\aworld \models \init{j} \land \complete{j}$.
Consider a world $\aworld' \in \arelation^i(\aworld)$ and a number $m \in \interval{0}{\amapter(j-i,n)-1}$.
Lastly, suppose $\amodel' \sqsubseteq \amodel$ such that $\amodel',\aworld' \models \complete{j-i}$.
Then,
\begin{nscenter}
$\nbexp{\aworld'}{j-i} = m$ w.r.t.\ $\pair{\amodel}{\aworld'}$
if and only if $\nbexp{\aworld'}{j-i} = m$ w.r.t.\ $\pair{\amodel'}{\aworld'}$.
\end{nscenter}
\end{lemma}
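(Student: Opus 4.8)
The statement is a robustness property: the number encoded by a world $\aworld'$ at depth $i$ depends only on the ``relevant fragment'' of the tree below $\aworld'$, namely the part witnessed by the $\avartree$-children of $\aworld'$ and the truth values of $\treeval$ on them. My plan is to prove it by induction on $j-i$, following the recursive structure of the definition of number encoding. Recall that $\nbexp{\aworld'}{j-i}$ is defined by representing each position of the binary string via a $\avartree$-child (implicitly ordered by the number that child, in turn, encodes), and reading off the truth value of $\treeval$ on each such child. The assumption $\amodel',\aworld' \models \complete{j-i}$ already tells us (via \Cref{lemma:tower-hardness-inductive}, applied inductively) that in $\amodel'$ the $\avartree$-children of $\aworld'$ encode exactly the numbers in $\interval{0}{\amapter(j-i{-}1,n)-1}$, each exactly once, and since $\amodel' \sqsubseteq \amodel$ and $\amodel,\aworld' \models \complete{j-i}$ (obtained from $\amodel,\aworld \models \complete{j}$ via \ref{prop:apS} iterated $i$ times), the same $\avartree$-children appear in $\amodel$.

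The base case $j-i = 1$ is immediate: here the number $\nbexp{\aworld'}{j-i}$ of a $\avartree$-child of $\aworld'$ is encoded directly by the truth values of $\avarprop_1,\dots,\avarprop_n$ on that child, which depend only on $\avaluation$ (equal in $\amodel$ and $\amodel'$, since taking submodels does not change the valuation), not on the accessibility relation. For the inductive step $j-i \geq 2$: by definition, $\nbexp{\aworld'}{j-i} = m$ means that for each $\avartree$-child $\aworld''$ of $\aworld'$, the $b$-th bit of $m$ — where $b = \nbexp{\aworld''}{j-i-1}$ — equals $1$ iff $\amodel,\aworld'' \models \treeval$ (and similarly for $\amodel'$). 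The truth value of $\treeval$ on $\aworld''$ depends only on $\avaluation$, hence is the same in $\amodel$ and $\amodel'$. The only thing that could differ is the index $b = \nbexp{\aworld''}{j-i-1}$, i.e. the \emph{position} that child $\aworld''$ is responsible for. But $\aworld'' \in \arelation^{i+1}(\aworld)$, and from $\amodel',\aworld' \models \complete{j-i}$ we get via \ref{prop:apS} that $\amodel',\aworld'' \models \complete{j-i-1}$, so by the induction hypothesis (applied with depth-difference $j-i-1 < j-i$) $\nbexp{\aworld''}{j-i-1}$ is the same whether computed in $\amodel$ or in $\amodel'$. Thus the position each $\avartree$-child governs is stable, the bit read off at that position is stable, and therefore $\nbexp{\aworld'}{j-i}$ is the same in both models.

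A small wrinkle to handle carefully: I must make sure the set of $\avartree$-children of $\aworld'$ is literally the same set in $\amodel$ and in $\amodel'$, not merely ``isomorphic''. This follows because $\amodel' \sqsubseteq \amodel$ implies $\worlds$ and $\avaluation$ are unchanged and $\arelation' \subseteq \arelation$, while both $\amodel,\aworld' \models \complete{j-i}$ and $\amodel',\aworld' \models \complete{j-i}$ force $\aworld'$ to have exactly $\amapter(j-i{-}1,n)$ $\avartree$-children each encoding a distinct number; since $\amapter(j-i-1,n)$-many $\avartree$-children of $\aworld'$ in $\amodel'$ are in particular $\avartree$-children in $\amodel$, and $\aworld'$ has exactly that many in $\amodel$, the two sets coincide. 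I would also invoke \Cref{lemma:height-at-most-n} / the monotonicity of $\init{j}$ under submodels (as already observed after the definition of $\init{j}$) to keep the $\Aux$-structure consistent where needed. The main obstacle — though a mild one — is organising the induction so that the inductive hypothesis is applied at exactly the right world ($\aworld''$, a $\avartree$-child) and the right depth-difference, and keeping track that \ref{prop:apS} propagates $\complete{\cdot}$ downward along $\avartree$-children in \emph{both} $\amodel$ and $\amodel'$; once that bookkeeping is set up, the argument is a routine unfolding of the number-encoding definition.
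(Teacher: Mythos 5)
Your proposal is correct, and it reaches the paper's conclusion by a slightly different organization of the same underlying counting insight. The paper argues in one shot: it lists the nodes the encoding depends on (the $\avartree$-nodes within $j-i$ steps of $\aworld'$, plus the $\Aux$-nodes required by \ref{prop:apA}) and shows that none of the edges leading to them can be missing in $\amodel'$, since losing a $\avartree$-node would make some ancestor violate \ref{prop:apC} and hence, via \ref{prop:apS}, contradict $\amodel',\aworld' \models \complete{j-i}$; the relevant substructure is therefore literally identical in $\amodel$ and $\amodel'$, and the number read off from it is the same. You instead induct on $j-i$, following the recursive definition of $\nbexp{\aworld'}{j-i}$: the child set of $\aworld'$ is the same in both models by a cardinality argument (both satisfy $\complete{j-i}$, so both have the full complement of $\avartree$-children encoding pairwise distinct numbers, and $\arelation' \subseteq \arelation$), the positions those children govern are stable by the induction hypothesis, and the bits ($\treeval$-values) depend only on the valuation. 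Both arguments hinge on the same fact — $\complete{j-i}$ in the submodel forces exact child counts at every level, so the $\avartree$-skeleton below $\aworld'$ cannot shrink — but your bottom-up induction makes explicit the stability of the children's encoded numbers, a point the paper's top-down sketch silently relies on when it invokes \ref{prop:apC} in $\amodel'$, so your packaging is if anything a bit more careful. Two cosmetic remarks: the formula $\complete{j-i}$ forces exactly $\amapter(j-i,n)$ $\avartree$-children (not $\amapter(j-i-1,n)$ as you write), and your induction should bottom out at $j-i=0$ (where $\complete{0}=\true$ and the number is read directly from $\avarprop_1,\dots,\avarprop_n$, hence relation-independent), with $j-i=1$ already handled by the general step; neither slip affects the structure of your argument. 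Finally, you may simply drop the appeal to the $\Aux$-structure: unlike the formulae that \emph{compare} numbers, the value $\nbexp{\aworld'}{j-i}$ itself depends only on the $\avartree$-skeleton and the valuation, which is why the paper's inclusion of the $\{\avariable,\avariablebis\}$- and $\Aux$-nodes in its preservation list is cautionary rather than necessary for this statement.
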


\begin{proof}
  The proof is rather straightforward. From the semantics of $\complete{j}$, w.r.t.\ any of the two models $\pair{\amodel}{\aworld'}$ or $\pair{\amodel'}{\aworld'}$,
  $\nbexp{\aworld'}{j-i}$ is encoded by using
  \begin{enumerate}
  \item the $\avartree$-nodes reachable from $\aworld'$ in at most $j-i$ steps;
  \item the $\{\avariable,\avariablebis\}$-nodes reachable from $\aworld'$ in exactly $2$ steps;
  \item the $\Aux$-nodes reachable from $\aworld'$ in at least $3$ steps and at most $j-i+1$ steps.
  \end{enumerate}
  Let $\amodel' = \triple{\worlds}{\arelation_1}{\avaluation}$.
  From $\amodel',\aworld' \models \complete{j-i}$ we can show that the accessibility to all these nodes is preserved between $\pair{\amodel}{\aworld'}$ and $\pair{\amodel'}{\aworld'}$, leading to the result (or rather, that losing the accessibility to any of these nodes leads to a model not satisfying $\complete{j-i}$).
  Indeed,
  \begin{enumerate}
    \item suppose that there is a $\avartree$-node $\overline{\aworld} \in \arelation^{k}(\aworld')$,
    with $k \in \interval{1}{j-i}$,  not in $\arelation_1^k(\aworld')$.
    Let $\overline\aworld_1$ be the parent of $\overline{\aworld}$ in $\arelation$. Then in particular, $\overline\aworld_1 \in \arelation^{k-1}(\aworld')$ and $\pair{\overline\aworld_1}{\overline\aworld} \in \arelation$.
    Since $\overline\aworld \not\in \arelation_1^k(\aworld')$, we conclude that
    $\pair{\amodel'}{\overline\aworld_1}$ does not satisfy \ref{prop:apC} and therefore ${\amodel'},{\overline\aworld_1} \not \models \complete{j-i-k}$.
    Then, $\pair{\amodel'}{\aworld'}$ cannot satisfy \ref{prop:apS}, in contradiction with ${\amodel'},\aworld' \models \complete{j-i}$;

    \item suppose that one $\{\avariable,\avariablebis\}$-node in $\arelation^2(\aworld')$ is not in $\arelation_1^2(\aworld')$. Then trivially $\pair{\amodel'}{\aworld'}$ cannot satisfy \ref{prop:apA}, in contradiction with $\amodel',\aworld' \models \complete{j}$;
    \item similarly, suppose that one $\Aux$-node in $\arelation^k(\aworld')$, where $k \in \interval{3}{j-i+1}$, is not in $\arelation_1^2(\aworld')$. Then again $\pair{\amodel'}{\aworld'}$ cannot satisfy \ref{prop:apA}, in contradiction with $\amodel',\aworld' \models \complete{j}$. \qedhere
  \end{enumerate}
\end{proof}

\begin{lemma}\label{lemma:forkij}
Let $\aaux \neq \aauxbis \in \Aux$ and
$1 \leq i < j$. Suppose $\amodel,\aworld \models \init{j}$.

$\amodel,\aworld \models \fork{\aaux}{\aauxbis}{i}{j}$ if and only if
{\bfseries \itshape{(i)}} $\aworld$ has exactly two $\avartree$-children and exactly two paths of $\avartree$-nodes, both of length $i$;
{\bfseries \itshape{(ii)}} one of these two paths ends on a world (say $\aworld_{\aaux}$) corresponding to the nominal $\aaux$ whereas the other ends on a world (say $\aworld_{\aauxbis}$) corresponding to the nominal $\aauxbis$;
{\bfseries \itshape{(iii)}} $\pair{\amodel}{\aworld_{\aaux}}$ and $\pair{\amodel}{\aworld_{\aauxbis}}$ satisfy
 $\completeplus{j - i} \egdef \complete{j - i} \land \HMBox{\avartree}(\Diamond \avarleft \land \Diamond \avarselect \land \Diamond \avarright)$.
\end{lemma}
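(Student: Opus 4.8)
The plan is to prove \Cref{lemma:forkij} by unfolding the definition $\fork{\aaux}{\aauxbis}{i}{j} \egdef \fork{\aaux}{\aauxbis}{i}{i} \land \HMBox{\avartree}^i\completeplus{j-i}$ and combining the base-case characterisation of $\fork{\aaux}{\aauxbis}{i}{i}$ (\Cref{lemma:forkjj}) with an analysis of the relativised box modality $\HMBox{\avartree}^i$. Recall that by the inductive hypothesis stated just before the lemma, $\completeplus{j-i}$ is already correctly defined (with the semantics spelled out in \Cref{appendix:inductive-formulae}), since $j-i < j$; concretely, $\completeplus{j-i} = \complete{j-i} \land \HMBox{\avartree}(\Diamond \avarleft \land \Diamond \avarselect \land \Diamond \avarright)$, and the correctness of $\complete{j-i}$ against conditions \ref{prop:apS}--\ref{prop:apA} is available by induction. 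Throughout we work under the standing hypothesis $\amodel,\aworld \models \init{j}$, which by \Cref{lemma:init} is inherited at every relevant depth and is preserved by submodels.

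First I would handle the left-to-right direction. Assume $\amodel,\aworld \models \fork{\aaux}{\aauxbis}{i}{j}$, so $\amodel,\aworld \models \fork{\aaux}{\aauxbis}{i}{i}$ and $\amodel,\aworld \models \HMBox{\avartree}^i\completeplus{j-i}$. Since $i \leq j$, \Cref{lemma:init} gives $\amodel,\aworld \models \init{i}$, so \Cref{lemma:forkjj} applies and immediately yields points \textbf{\itshape(i)} and \textbf{\itshape(ii)}: $\aworld$ has exactly two $\avartree$-children, exactly two paths of $\avartree$-nodes of length $i$, ending in two $\avartree$-nodes $\aworld_{\aaux}$ and $\aworld_{\aauxbis}$ corresponding to the nominals $\aaux$ and $\aauxbis$ respectively. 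For point \textbf{\itshape(iii)}, the subformula $\HMBox{\avartree}^i\completeplus{j-i}$ asserts that $\completeplus{j-i}$ holds at every world reachable from $\aworld$ via a $\avartree$-path of length exactly $i$; by \textbf{\itshape(i)} the only such worlds are $\aworld_{\aaux}$ and $\aworld_{\aauxbis}$, hence $\pair{\amodel}{\aworld_{\aaux}}$ and $\pair{\amodel}{\aworld_{\aauxbis}}$ both satisfy $\completeplus{j-i}$, which is exactly \textbf{\itshape(iii)}.

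For the converse, assume \textbf{\itshape(i)}--\textbf{\itshape(iii)} hold. From \textbf{\itshape(i)} and \textbf{\itshape(ii)}, together with $\amodel,\aworld \models \init{i}$, \Cref{lemma:forkjj} gives $\amodel,\aworld \models \fork{\aaux}{\aauxbis}{i}{i}$. From \textbf{\itshape(iii)} and the fact (from \textbf{\itshape(i)}) that $\aworld_{\aaux},\aworld_{\aauxbis}$ are the only $\avartree$-nodes in $\arelation^i(\aworld)$, we get that $\completeplus{j-i}$ holds on every $\avartree$-reachable world at depth $i$, i.e.\ $\amodel,\aworld \models \HMBox{\avartree}^i\completeplus{j-i}$. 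Conjoining, $\amodel,\aworld \models \fork{\aaux}{\aauxbis}{i}{j}$.

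The only real subtlety—and the step I expect to require the most care—is checking that $\HMBox{\avartree}^i$ genuinely quantifies over the set $\{\aworld_{\aaux},\aworld_{\aauxbis}\}$ and nothing else: a priori $\HMBox{\avartree}^i \aformula$ holds vacuously along any dead-end path and nontrivially along surviving $\avartree$-paths of length exactly $i$, so I must use point \textbf{\itshape(i)} (two $\avartree$-children, and each of the two length-$i$ $\avartree$-paths reaches $\aworld_{\aaux}$ or $\aworld_{\aauxbis}$) to pin down that these are precisely the worlds on which $\completeplus{j-i}$ is being asserted. This is purely combinatorial bookkeeping on the shape forced by $\fork{\aaux}{\aauxbis}{i}{i}$, but it must be stated explicitly since the definition of $\fork{\aaux}{\aauxbis}{i}{j}$ layers the two conjuncts independently. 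Once that is in place the equivalence is immediate, and well-definedness of $\fork{\aaux}{\aauxbis}{i}{j}$ follows from the observation in the excerpt that $\fork{\aaux}{\aauxbis}{i}{i}$ is a base-case formula and $\completeplus{j-i}$ is defined by the inductive hypothesis since $j - i < j$.
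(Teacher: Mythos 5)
Your proposal is correct and follows essentially the same route as the paper's proof: unfold $\fork{\aaux}{\aauxbis}{i}{j}$ into $\fork{\aaux}{\aauxbis}{i}{i} \land \HMBox{\avartree}^i\completeplus{j-i}$, invoke \Cref{lemma:forkjj} for points \textbf{\itshape(i)}--\textbf{\itshape(ii)}, and observe that $\HMBox{\avartree}^i$ ranges exactly over $\{\aworld_{\aaux},\aworld_{\aauxbis}\}$ to obtain \textbf{\itshape(iii)} and the converse. Your extra care about $\init{j}$ entailing $\init{i}$ and about the vacuous-path behaviour of $\HMBox{\avartree}^i$ is sound bookkeeping that the paper's terser argument leaves implicit.
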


\begin{proof}
  Recall that $\fork{\aaux}{\aauxbis}{i}{j}$ is defined as $\fork{\aaux}{\aauxbis}{i}{i} \land \HMBox{\avartree}^i\completeplus{j-i}$.
We have:
\begin{itemize}
  \item $\amodel,\aworld \models \fork{\aaux}{\aauxbis}{i}{i}$ if and only if
  (by \Cref{lemma:forkjj})
  {\bfseries \itshape{(i)}} $\aworld$ has exactly two $\avartree$-children and
  exactly two paths of $\avartree$-nodes, both of length $j$;
  {\bfseries \itshape{(ii)}} one of these two paths ends on a world corresponding
  to the nominal $\aaux$ whereas the other ends on a world corresponding to the
  nominal $\aauxbis$.
  \item  Let $\aworld_{\aaux},\aworld_{\aauxbis}\in\arelation^i(\aworld)$, since
  $\amodel,\aworld \models \HMBox{\avartree}^i\completeplus{j - i}$ we get
  $\amodel,\aworld' \models \completeplus{j - i}$, for $\aworld'\in\set{\aworld_{\aaux},\aworld_{\aauxbis}}$.
\end{itemize}
This concludes the proof.
\end{proof}

\begin{lemma}\label{lemma:lsr}
Let 
$1 \leq i < j$.
Suppose $\amodel,\aworld \models \init{j}$.

$\amodel,\aworld \models \lsrpartition{j-i}$
if and only if
\begin{enumerate}
  \item $\amodel,\aworld \models \complete{j-i}$;
  \item every $\avartree$-node in $\arelation(\aworld)$ has exactly one $\Aux$-child satisfying an atomic proposition from $\{\avarleft,\avarselect,\avarright\}$;
  \item exactly one $\avartree$-node in $\arelation(\aworld)$ (say $\aworld_{\avarselect}$) has an $\Aux$-child satisfying $\avarselect$;
  \item given $\aworld' \in \arelation(\aworld)$, $\aworld'$ has an $\Aux$-child satisfying $\avarleft$ if and only if $\nb{\aworld'} > \nb{\aworld_{\avarselect}}$;
  \item given $\aworld' \in \arelation(\aworld)$, $\aworld'$ has an $\Aux$-child satisfying $\avarright$
  if and only if
  $\nb{\aworld'} < \nb{\aworld_{\avarselect}}$.
\end{enumerate}
\end{lemma}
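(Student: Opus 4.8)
\textbf{Proof plan for \Cref{lemma:lsr}.} The statement is a characterisation of the models of $\lsrpartition{j-i}$, so the plan is to unfold its definition
\[
\lsrpartition{j-i} \egdef \complete{j-i}
\,{\land}\,\nominal{\avarselect}{1} \land
 \lnot (\true\!\separate\!(\fork{\avarselect}{\avarleft}{1}{j-i}\!\land\!\lnot \less{\avarselect}{\avarleft}{1}{j-i}))
 \land
\lnot (\true\!\separate\!(\fork{\avarselect}{\avarright}{1}{j-i}\!\land\!\lnot \less{\avarright}{\avarselect}{1}{j-i}))
\land \HMBox{\avartree}\Diamond_{=1}(\avarleft\,{\lor}\,\avarselect\,{\lor}\, \avarright)
\]
and match each conjunct to the corresponding item of the statement, using the already-established correctness lemmata. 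Since we work under the hypothesis $\amodel,\aworld\models\init{j}$ and $1\le i<j$ (so $1\le j-i<j$), the inductive hypothesis grants us the correctness of $\complete{j-i}$, $\fork{\avarselect}{\cdot}{1}{j-i}$ and $\less{\cdot}{\cdot}{1}{j-i}$ (their $j'-i'=j-i-1<j-i$), as well as \Cref{A:the-property-of-types} for $\complete{j-i}$, which is what makes the encoded numbers of the children of $\aworld$ well-defined and stable under the submodels produced by $\true\!\separate\!(\cdots)$.

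First I would dispatch the easy correspondences: the conjunct $\complete{j-i}$ gives item~1 directly; the conjunct $\HMBox{\avartree}\Diamond_{=1}(\avarleft\lor\avarselect\lor\avarright)$ gives item~2 (every $\avartree$-node in $\arelation(\aworld)$ has exactly one child satisfying one of the three propositions); and $\nominal{\avarselect}{1}$ together with item~2 (and \Cref{lemma:nom}) gives item~3, fixing the unique $\avartree$-node $\aworld_\avarselect$ with an $\avarselect$-child. I would record at this point that, thanks to $\complete{j-i}$ and \ref{prop:apA}/the definition of $\completeplus{j-i}$, all the children of $\aworld$ reachable for the fork already carry all necessary $\Aux$-children, so the fork formula at depth $1$ is triggerable for any chosen pair.

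The substantive step is items~4 and~5, which I would handle symmetrically; I describe the case of $\avarleft$. The formula says $\lnot(\true\!\separate\!(\fork{\avarselect}{\avarleft}{1}{j-i}\land\lnot\less{\avarselect}{\avarleft}{1}{j-i}))$: no submodel isolates two distinct $\avartree$-children of $\aworld$, one corresponding to $\avarselect$ and one to $\avarleft$, with the $\avarselect$-one \emph{not} strictly smaller than the $\avarleft$-one. For the forward direction, take any $\avartree$-child $\aworld'$ of $\aworld$ having an $\avarleft$-child; I would build the submodel $\amodel'$ with $\arelation'(\aworld)=\{\aworld_\avarselect,\aworld'\}$ keeping below $\aworld_\avarselect$ exactly the path to its $\avarselect$-child together with (a full copy of) its subtree needed for $\completeplus{j-i-1}$, and similarly below $\aworld'$; by \Cref{lemma:forkij} this submodel satisfies $\fork{\avarselect}{\avarleft}{1}{j-i}$, so by hypothesis it must satisfy $\less{\avarselect}{\avarleft}{1}{j-i}$, i.e. (by \Cref{lemma:lessij}, available by induction) $\nb{\aworld_\avarselect}<\nb{\aworld'}$ in $\amodel'$, which by \Cref{A:the-property-of-types} equals $\nb{\aworld_\avarselect}<\nb{\aworld'}$ in $\amodel$; conversely, if $\aworld'$ has no $\avarleft$-child then no submodel can make $\aworld'$ correspond to the nominal $\avarleft$, so the witnessing submodel cannot exist, establishing $\nb{\aworld'}\le\nb{\aworld_\avarselect}$, and together with $\nb{\avarselect}{1}$ giving item~3 we get $\nb{\aworld'}\ne\nb{\aworld_\avarselect}$ hence $\nb{\aworld'}<\nb{\aworld_\avarselect}$ when $\aworld'\ne\aworld_\avarselect$ --- actually cleaner: the $\avarselect$-carrier is unique so $\aworld'\ne\aworld_\avarselect$, and the only way $\nb{\aworld'}\not>\nb{\aworld_\avarselect}$ is via $<$. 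For the backward direction I would assume items~1--5 and show each of the four conjuncts holds, which is the same argument run in reverse: $\complete{j-i}$, $\nominal{\avarselect}{1}$ and $\HMBox{\avartree}\Diamond_{=1}(\cdots)$ are immediate from~1,3,2; and if some submodel $\amodel'$ satisfied $\fork{\avarselect}{\avarleft}{1}{j-i}\land\lnot\less{\avarselect}{\avarleft}{1}{j-i}$ then its two distinguished children would be $\aworld_\avarselect$ and some $\aworld'$ with an $\avarleft$-child (uniqueness of $\avarselect$ again), and $\lnot\less{\avarselect}{\avarleft}{1}{j-i}$ plus $\lnot\less{\avarleft}{\avarselect}{1}{j-i}$ --- the latter because by item~4 $\nb{\aworld'}>\nb{\aworld_\avarselect}$ fails in $\amodel'$ only if it fails in $\amodel$ --- forces $\nb{\aworld_\avarselect}=\nb{\aworld'}$, contradicting item~4; the $\avarright$ conjunct is symmetric.

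\textbf{Main obstacle.} The delicate point is the interaction between the $\true\!\separate$ (which may delete edges anywhere, in particular below $\aworld_\avarselect$, $\aworld'$, and any child) and the requirement that the numbers $\nb{\cdot}$ used inside $\less{\cdot}{\cdot}{1}{j-i}$ and $\fork{\cdot}{\cdot}{1}{j-i}$ be the \emph{same} numbers as encoded in $\amodel$. This is exactly why the conjunct $\completeplus{j-i}$ is baked into $\fork{\cdot}{\cdot}{1}{j-i}$: any submodel satisfying the fork formula is forced to retain enough structure below the two distinguished worlds so that \Cref{A:the-property-of-types} applies and the encoded numbers are preserved. I would therefore make sure, when constructing witnessing submodels in the forward direction, to keep a full copy of the subtrees demanded by $\completeplus{j-i-1}$ (including all $\{\avarleft,\avarselect,\avarright\}$-children of descendant $\avartree$-nodes, as $\completeplus{}$ requires), and, when ruling out witnessing submodels in the backward direction, to invoke \Cref{A:the-property-of-types} to transfer the numeric (in)equalities back and forth between $\amodel$ and the submodel. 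Everything else is routine bookkeeping on nominals and relativised modalities via \Cref{lemma:init,lemma:nom,lemma:at,lemma:twonoms,lemma:forkij}.
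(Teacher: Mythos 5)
Your proposal is correct and follows essentially the same route as the paper's proof: the conjuncts of $\lsrpartition{j-i}$ are matched one-to-one to items 1--5, and for items 4--5 the paper likewise builds a submodel retaining only the paths to $\aworld_{\avarselect}$ and to the chosen child together with their full subtrees (so that the $\completeplus{}$ part of the fork holds), then applies \Cref{lemma:forkij}, the inductively available $\less{\cdot}{\cdot}{1}{j-i}$ correctness, and \Cref{A:the-property-of-types} to transfer the numeric comparison back to $\amodel$. The only wobble is your phrase ``the witnessing submodel cannot exist, establishing $\nb{\aworld'}\le\nb{\aworld_{\avarselect}}$'' --- non-existence of a witness by itself yields nothing about the numbers; the clean argument you then sketch (exactly one $\{\avarleft,\avarselect,\avarright\}$-label per child, uniqueness of the $\avarselect$-carrier, and the symmetric $\avarright$-conjunct forcing $\nb{\aworld'}<\nb{\aworld_{\avarselect}}$) is the correct one and is what the paper's terse ``the other direction is analogous'' relies on.
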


\begin{proof}
This proof is rather straightforward. Recall that $\lsrpartition{j-i}$ is defined as
\[
\complete{j-i} \land \HMBox{\avartree}\Diamond_{=1}(\avarleft \lor \avarselect \lor \avarright)
\land \nominal{\avarselect}{1} \land
 \lnot (\true \separate (\fork{\avarselect}{\avarleft}{1}{j-i} \land \lnot \less{\avarselect}{\avarleft}{1}{j-i})) \land
\lnot (\true \separate (\fork{\avarselect}{\avarright}{1}{j-i} \land \lnot \less{\avarright}{\avarselect}{1}{j-i})).
\]
Then,
\begin{itemize}
  \item the first conjunct of $\lsrpartition{j-i}$, i.e.\ $\complete{j-i}$, directly realises the requirement (1);
  \item the second conjunct of $\lsrpartition{j-i}$, i.e.\ $\HMBox{\avartree}\Diamond_{=1}(\avarleft \lor \avarselect \lor \avarright)$,
  directly realises the requirement (2);
  \item the third conjunct of $\lsrpartition{j-i}$, i.e.\ $\nominal{\avarselect}{1}$,
  directly realised the requirement (3);
  \item the fourth conjunct of $\lsrpartition{j-i}$ realises the requirement (4).
  Suppose $\amodel,\aworld\models \lnot (\true \separate (\fork{\avarselect}{\avarleft}{1}{j-i} \land \lnot \less{\avarselect}{\avarleft}{1}{j-i}))$. Then,
  for all submodels $\amodel'\sqsubseteq\amodel$,
  if $\amodel',\aworld\models\fork{\avarselect}{\avarleft}{1}{j-i}$ then
  $\amodel',\aworld\models\less{\avarselect}{\avarleft}{1}{j-i}$. Let $\aworld'\in\arelation(\aworld)$ be
  such that $\aworld'$ has an
  $\Aux$-child satisfying $\avarleft$. Then by \Cref{lemma:forkij} $\amodel,\aworld\models\fork{\avarselect}{\avarleft}{1}{j-1}$ and as a consequence
  $\amodel,\aworld\models\less{\avarselect}{\avarleft}{1}{j-i}$.
  Let us consider $\amodel' = \triple{\worlds}{\arelation'}{\worlds}$
  obtained from $\amodel$ by removing from $\arelation$ every pair $\pair{\aworld_1}{\aworld_2} \in \arelation$ such that
  \begin{itemize}
    \item $\aworld_1$ and $\aworld_2$ are $\avartree$-nodes;
    \item $\pair{\aworld_1}{\aworld_2}$ does not belong to the path from $\aworld$ to $\aworld_\avarselect$, nor to the path from $\aworld$ to $\aworld'$;
    \item $\pair{\aworld_1}{\aworld_2}$ does not belong to any path starting from $\aworld_\avarselect$ or $\aworld'$.
  \end{itemize}
  Then, we can show that $\amodel',\aworld \models \fork{\avarselect}{\avarleft}{1}{j-i}$ and therefore, by hypothesis,
  $\amodel',\aworld \models \less{\avarselect}{\avarleft}{1}{j-i}$.
  By inductive hypothesis, from $\less{\avarselect}{\avarleft}{1}{j-i}$ we conclude that $\nb{\aworld'}>\nb{\aworld_\avarselect}$ with respect to $\pair{\amodel'}{\aworld}$.
  Now, from $\amodel',\aworld \models \fork{\avarselect}{\avarleft}{1}{j-i}$
  we also conclude that $\amodel',\aworld_{\avarselect} \models \complete{j-i}$ and $\amodel',\aworld' \models \complete{j-i}$.
  Then, by \ref{A:the-property-of-types}, $\nb{\aworld'}>\nb{\aworld_\avarselect}$ also holds with respect to  $\pair{\amodel}{\aworld}$.
  The other direction is analogous;
  \item the fifth conjunct of $\lsrpartition{j-i}$ realises the requirement (5). The proof is similar to the one for the requirement (4), just above. \qedhere
\end{itemize}
\end{proof}

We prove a technical lemma that will help us with the proof of correctness of $\less{\aaux}{\aauxbis}{i}{j}$ and $\successor{\aaux}{\aauxbis}{j}$.
\begin{lemma}\label{A:lemma:left-and-right}
  Let $\aaux \neq \aauxbis \in \Aux$ and
  $1 \leq i < j$. Suppose that $\pair{\amodel}{\aworld}$ is such that $\arelation^i(\aworld) = \{\aworld_{\aaux},\aworld_{\aauxbis}\}$ for some $\avartree$-nodes $\aworld_{\aaux}$ and $\aworld_{\aauxbis}$ in $\worlds$, and these two worlds satisfy the conditions of $\lsrpartition{j-i}$, i.e. for every $b \in \{\aaux,\aauxbis\}$
  \begin{enumerate}[label=(\Alph*)]
    \item $\amodel,\aworld_b \models \complete{j-i}$;
    \item every $\avartree$-node in $\arelation(\aworld_b)$ has exactly one $\Aux$-child satisfying an atomic proposition from $\{\avarleft,\avarselect,\avarright\}$;
    \item exactly one $\avartree$-node in $\arelation(\aworld_b)$ (say $\aworld_{b,\avarselect}$) has an $\Aux$-child satisfying $\avarselect$;
    \item given $\aworld' \in \arelation(\aworld_b)$, $\aworld'$ has an $\Aux$-child satisfying $\avarleft$ if and only if $\nb{\aworld'} > \nb{\aworld_{b,\avarselect}}$;
    \item given $\aworld' \in \arelation(\aworld_b)$, $\aworld'$ has an $\Aux$-child satisfying $\avarright$
    if and only if
    $\nb{\aworld'} < \nb{\aworld_{b,\avarselect}}$.
  \end{enumerate}
  Then,
  \begin{enumerate}[label=\Roman*.]
    \item $\amodel,\aworld \models \selectpred{\aaux}{\aauxbis}{i}{j}$ if and only if $\nb{\aworld_{\aaux,\avarselect}} = \nb{\aworld_{\aauxbis,\avarselect}}$, $\amodel,\aworld_{\aaux,\avarselect} \models \lnot \treeval$ and
    $\amodel,\aworld_{\aauxbis,\avarselect} \models \treeval$;
    \item $\amodel,\aworld \models \leftpred{\aaux}{\aauxbis}{i}{j}$ if and only if for every two worlds $\aworld_{\aaux,\avarleft} \in \arelation(\aworld_{\aaux})$
    and
    $\aworld_{\aauxbis,\avarleft} \in \arelation(\aworld_{\aauxbis})$ such that
    $\nb{\aworld_{\aaux,\avarleft}} > \nb{\aworld_{\aaux,\avarselect}}$ and
    $\nb{\aworld_{\aauxbis,\avarleft}} > \nb{\aworld_{\aauxbis,\avarselect}}$,
    if $\nb{\aworld_{\aaux,\avarleft}} = \nb{\aworld_{\aauxbis,\avarleft}}$ then,
      $\amodel,\aworld_{\aaux,\avarleft} \models \treeval$
      if and only if
      $\amodel,\aworld_{\aauxbis,\avarleft} \models \treeval$.
    \item If $i = 1$ then, $\amodel,\aworld \models \rightpred{\aaux}{\aauxbis}$
    if and only if
    \begin{itemize}
      \item for every world $\aworld_{\aaux,\avarright} \in \arelation(\aworld_{\aaux})$, if $\nb{\aworld_{\aaux,\avarright}} < \nb{\aworld_{\aaux,\avarselect}}$ then $\amodel,\aworld_{\aaux,\avarright} \models \treeval$;
      \item for every world $\aworld_{\aauxbis,\avarright} \in \arelation(\aworld_{\aauxbis})$, if $\nb{\aworld_{\aauxbis,\avarright}} < \nb{\aworld_{\aauxbis,\avarselect}}$ then $\amodel,\aworld_{\aauxbis,\avarright} \models \lnot\treeval$.
    \end{itemize}
  \end{enumerate}
\end{lemma}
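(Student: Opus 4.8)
The statement is a purely ``local'' unfolding of the definitions of $\selectpred{\aaux}{\aauxbis}{i}{j}$, $\leftpred{\aaux}{\aauxbis}{i}{j}$ and $\rightpred{\aaux}{\aauxbis}$, carried out under the standing hypothesis that $\pair{\amodel}{\aworld}$ already has the rigid shape guaranteed by $\lsrpartition{j-i}$ at depth $i$. The plan is to prove the three biconditionals I, II, III in turn, in each case reading the formula as an existentially (resp.\ universally) quantified choice of a submodel and translating each conjunct via the already-established lemmas of \Cref{appendix:inductive-formulae} (in particular \Cref{lemma:forkij}, \Cref{lemma:at}, \Cref{lemma:twonoms}, the correctness of $\equivalent{\avariable}{\avariablebis}{i+1}{j}$ obtained from $\less{\aaux}{\aauxbis}{i+1}{j}$ by inductive reasoning, and crucially \Cref{A:the-property-of-types} to pass number-encodings between $\amodel$ and its submodels).

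\emph{Part I.} Recall $\selectpred{\aaux}{\aauxbis}{i}{j}$ asserts the existence of $\amodel' \sqsubseteq \amodel$ with $\amodel',\aworld \models \fork{\avariable}{\avariablebis}{i+1}{j}$ together with the four conjuncts II--VI fixing that $s_\aaux$ (the unique $\avarselect$-marked $\avartree$-grandchild on the $\aaux$-side, i.e.\ our $\aworld_{\aaux,\avarselect}$) corresponds to the nominal $\avariable$ at depth $i+1$, $s_\aauxbis$ to $\avariablebis$, that $\equivalent{\avariable}{\avariablebis}{i+1}{j}$ holds, and that $\atnom{\avariable}{i+1}\lnot\treeval$ and $\atnom{\avariablebis}{i+1}\treeval$ hold. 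For the forward direction I would: extract such an $\amodel'$; use \Cref{lemma:forkij} to identify the two $\avartree$-paths of length $i+1$ and the $\completeplus{}$-structure at their ends; use \Cref{lemma:at} to read $\atnom{\avariable}{i+1}\lnot\treeval$ / $\atnom{\avariablebis}{i+1}\treeval$ as $\amodel,\aworld_{\aaux,\avarselect}\models\lnot\treeval$, $\amodel,\aworld_{\aauxbis,\avarselect}\models\treeval$ (the values of $\treeval$ are preserved under submodels, being properties of single worlds); and use the inductively-established correctness of $\equivalent{\avariable}{\avariablebis}{i+1}{j}$ on $\amodel'$, then invoke \Cref{A:the-property-of-types} (applicable since $\fork{\avariable}{\avariablebis}{i+1}{j}$ forces $\completeplus{j-i-1}$ at both endpoints, and $j-i-1 < j-i$) to transport $\nb{\aworld_{\aaux,\avarselect}}=\nb{\aworld_{\aauxbis,\avarselect}}$ back to $\amodel$. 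For the converse, given the arithmetic equality and the $\treeval$-values in $\amodel$, I would \emph{construct} the witness $\amodel'$ by pruning $\amodel$ to exactly the two length-$(i+1)$ paths $\aworld\to\aworld_\aaux\to\aworld_{\aaux,\avarselect}$ and $\aworld\to\aworld_\aauxbis\to\aworld_{\aauxbis,\avarselect}$, keeping intact the $\completeplus{j-i-1}$ subtrees hanging below $\aworld_{\aaux,\avarselect}$ and $\aworld_{\aauxbis,\avarselect}$ and the required $\Aux$-children along the paths (hypotheses (A)--(E) give us all the needed $\avarselect$-children), then check each conjunct of $\selectpred{}{}{}{}$ on $\amodel'$ using the same lemmas in the reverse direction.

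\emph{Parts II and III.} Part II is the same argument with an outer negation: $\leftpred{\aaux}{\aauxbis}{i}{j}$ says \emph{no} submodel witnesses a ``bad $\avarleft$-pair'', i.e.\ one where $\fork{\avariable}{\avariablebis}{i+1}{j}$ holds with $\avariable,\avariablebis$ pinned to $\avarleft$-marked grandchildren of equal number but differing $\treeval$-value; contrapositively, a bad pair exists in some submodel iff a pair $(\aworld_{\aaux,\avarleft},\aworld_{\aauxbis,\avarleft})$ with equal number and unequal $\treeval$ exists in $\amodel$. The forward/backward translation is verbatim as in Part I (extract-vs-construct the pruned submodel, then apply \Cref{lemma:forkij}, \Cref{lemma:at}, correctness of $\equivalent{\avariable}{\avariablebis}{i+1}{j}$ and \Cref{A:the-property-of-types}); the only extra point is noting that $\fork{\avariable}{\avariablebis}{i+1}{j}$'s structural constraints force $\avariable,\avariablebis$ to be grandchildren of $\aworld_\aaux,\aworld_\aauxbis$ carrying $\avarleft$ by the conjuncts $\atnom{\aaux}{i}\HMDiamond{\avartree}(\Diamond\avarleft\land\Diamond\avariable)$ etc. Part III is a plain one-step reading since $i=1$: $\rightpred{\aaux}{\aauxbis}$ is literally $\atnom{\aaux}{1}\HMBox{\avartree}(\Diamond\avarright\implies\treeval)\land\atnom{\aauxbis}{1}\HMBox{\avartree}(\Diamond\avarright\implies\lnot\treeval)$, and under the standing hypothesis $\aaux$, $\aauxbis$ are nominals for depth $1$ (from the $\lsrpartition{}$-shape), so \Cref{lemma:at} immediately converts this to the two bulleted conditions about the $\avarright$-marked children of $\aworld_\aaux$ and $\aworld_\aauxbis$; by hypothesis (E) those are exactly the children with number $<\nb{\aworld_{b,\avarselect}}$.

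\textbf{Main obstacle.} The delicate point is not any single conjunct but the bookkeeping in the ``construct the witness submodel'' directions: one must exhibit an $\arelation_1 \subseteq \arelation$ that simultaneously (i) makes $\aworld$ have exactly two $\avartree$-children and two length-$(i{+}1)$ $\avartree$-paths (so $\fork{\avariable}{\avariablebis}{i+1}{j}$ is satisfiable), (ii) retains enough structure below the two $\avarselect$/$\avarleft$-endpoints that they still satisfy $\completeplus{j-i-1}$ (needed both for $\fork{}$ and to license \Cref{A:the-property-of-types}), and (iii) retains all the $\Aux$-children named in the various $\atnom{}{}$-conjuncts. This requires carefully separating, on each of the two sides, the ``path'' edges from the ``subtree'' edges and from the irrelevant edges, exactly as in the proof of \Cref{lemma:lsr} above; I expect to reuse that pruning construction almost verbatim. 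Once the submodel is built, all verifications reduce to the cited lemmas.
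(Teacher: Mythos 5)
Your proposal is correct and follows essentially the same route as the paper's own proof: unfold $\selectpred{\aaux}{\aauxbis}{i}{j}$, $\leftpred{\aaux}{\aauxbis}{i}{j}$ and $\rightpred{\aaux}{\aauxbis}$ into (non-)existence of witness submodels, identify the nominal-marked grandchildren via the fork/nominal lemmas, invoke the inductively established correctness of $\equivalent{\avariable}{\avariablebis}{i+1}{j}$, and transfer the number encodings between the submodel and $\amodel$ via \Cref{A:the-property-of-types}. Your explicit pruning construction for the converse directions is exactly what the paper leaves implicit with ``similar argument backwards,'' so nothing is missing.
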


\begin{proof}
  We will prove each item.
   \begin{enumerate}[label=(Proof of \Roman*),align = left]
    \item We recall that $\selectpred{\aaux}{\aauxbis}{i}{j}$ is defined as
    \begin{nscenter}
      $\true \separate \big( \fork{\avariable}{\avariablebis}{i+1}{j} \land
      \atnom{\aaux}{i}\HMDiamond{\avartree}(\Diamond \avarselect \land \Diamond \avariable)\, \land \atnom{\aauxbis}{i}\HMDiamond{\avartree}(\Diamond \avarselect \land \Diamond \avariablebis) \land \equivalent{\avariable}{\avariablebis}{i+1}{j} \land \atnom{\avariable}{i+1}\lnot \treeval \land \atnom{\avariablebis}{i+1}\treeval
      \big)$.
    \end{nscenter}

      ($\Rightarrow$): Suppose $\amodel,\aworld \models \selectpred{\aaux}{\aauxbis}{i}{j}$. By unfolding the definition just given,
      there exists $\amodel'=\tup{\worlds,\arelation_1,\avaluation}$, such that $\amodel'\sqsubseteq\amodel$
      and:
      \begin{enumerate}
        \item $\aworld$ has exactly two $\avartree$-children and exactly two paths of $\avartree$-nodes, both of length $i+1$;
        \item one of these two paths ends on a world (say $\aworld_{\avariable}$) corresponding to the nominal $\avariable$ whereas the other ends on a world (say $\aworld_{\avariablebis}$) corresponding to the nominal $\avariablebis$;
        \item there exists a $\avartree$-world $\aworld_\aaux\in\arelation_1^i(\aworld)$ corresponding to the nominal $\aaux$
        such that $\amodel',\aworld_\aaux\models\HMDiamond{\avartree}(\Diamond\avarselect\land\Diamond\avariable)$;
        \item there exists a $\avartree$-world $\aworld_\aauxbis\in\arelation_1^i(\aworld)$ corresponding to the nominal $\aauxbis$
        such that $\amodel',\aworld_\aauxbis\models\HMDiamond{\avartree}(\Diamond\avarselect\land\Diamond\avariablebis)$;
        \item $\amodel',\aworld\models\equivalent{\avariable}{\avariablebis}{i+1}{j}$;
        \item $\amodel',\aworld_\avariable\models\neg\treeval$ and $\amodel',\aworld_\avariablebis\models\treeval$.
      \end{enumerate}

      Let $\aworld_{\aaux,\avarselect}\in\arelation_1(\aworld_\aaux)$ and $\aworld_{\aauxbis,\avarselect}\in\arelation_1(\aworld_\aauxbis)$ be such that
      they are the only $\avartree$-children of $\aworld_\aaux$ and $\aworld_\aauxbis$
      respectively, having a child satisfying $\avarselect$ (notice they exist due
      to hypothesis $(C)$). Notice by item b. above, there exists $\aworld'\in\arelation_1(\aworld_\aaux)$ such that
      $\amodel',\aworld'\models\avartree$ and $\amodel',\aworld\models\Diamond\avarselect\land\Diamond\avariable$.
      Since $\aworld_{\aaux,\avarselect}$ is the only child of $\aworld_\aaux$
      having an $\avarselect$-child, then $\aworld_{\aaux,\avarselect}=\aworld'$,
      and as a consequence $\amodel',\aworld_{\aaux,\avarselect}\models\Diamond\avariable$.
      The same argument can be applied by using item c. above in order to get
      $\amodel'\aworld_{\aauxbis,\avarselect}\models\Diamond\avariablebis$.
      By item a. and b. above, we have that the corresponding $\aworld_\avariable$
      and $\aworld_\avariablebis$ must be the unique $\avartree$-worlds at distance
      $i+1$ of $\aworld$ having $\avariable$ and $\avariablebis$ children, respectively.
      Therefore, we have necessarily $\aworld_{\aaux,\avarselect}=\aworld_\avariable$ and $\aworld_{\aauxbis,\avarselect}=\aworld_\avariablebis$, so $\amodel,\aworld_{\aaux,\avarselect}\models\neg\treeval$ and $\amodel,\aworld_{\aauxbis,\avarselect}\models\treeval$ as wanted (by using item f. above).

      Finally, by applying inductive hypothesis on item e.,
      together with \Cref{A:the-property-of-types},
      we get
      $\nb{\aworld_{\aaux,\avarselect}} = \nb{\aworld_{\aauxbis,\avarselect}}$,
      which concludes the proof of this direction.

      ($\Leftarrow$): For this direction, we can use a similar argument backwards.

    \item We recall that $\leftpred{\aaux}{\aauxbis}{i}{j}$ is defined as
    \begin{nscenter}
      $\lnot \big( \true \separate \big(
      \fork{\avariable}{\avariablebis}{i+1}{j} \land
      \atnom{\aaux}{i}\HMDiamond{\avartree}(\Diamond \avarleft \land \Diamond \avariable)\, \land
      \atnom{\aauxbis}{i}\!\HMDiamond{\avartree}(\Diamond \avarleft \land \Diamond \avariablebis)
      \land  \equivalent{\avariable}{\avariablebis}{i+1}{j}{\land}
      \lnot (\atnom{\avariable}{i+1}\treeval \iff \atnom{\avariablebis}{i+1}\treeval)
      \big)\big)$.
    \end{nscenter}
    Notice also that by definition of the satisfaction relation $\models$, we have that $\amodel,\aworld\models\leftpred{\aaux}{\aauxbis}{i}{j}$ if and only if for all $\amodel'=\tup{\worlds,\arelation_1,\avaluation}$ such that $\amodel'\sqsubseteq\amodel$, we have
    \begin{nscenter}
      $
      \amodel',\aworld\models (\fork{\avariable}{\avariablebis}{i+1}{j} \land
      \atnom{\aaux}{i}\HMDiamond{\avartree}(\Diamond \avarleft \land \Diamond \avariable)\, \land
      \atnom{\aauxbis}{i}\!\HMDiamond{\avartree}(\Diamond \avarleft \land \Diamond \avariablebis)
      \land  \equivalent{\avariable}{\avariablebis}{i+1}{j}) \implies
      (\atnom{\avariable}{i+1}\treeval \iff \atnom{\avariablebis}{i+1}\treeval)
      $
    \end{nscenter}

    ($\Rightarrow$): Suppose $\amodel,\aworld\models\leftpred{\aaux}{\aauxbis}{i}{j}$. Then, for all $\amodel'=\tup{\worlds,\arelation_1,\avaluation}$ such that $\amodel'\sqsubseteq\amodel$, if the following conditions hold
    \begin{enumerate}
      \item $\aworld$ has exactly two $\avartree$-children and exactly two paths of $\avartree$-nodes, both of length $i+1$;
      \item one of these two paths ends on a world (say $\aworld_{\avariable}$) corresponding to the nominal $\avariable$ whereas the other ends on a world (say $\aworld_{\avariablebis}$) corresponding to the nominal $\avariablebis$;
      \item there exists a $\avartree$-world $\aworld_\aaux\in\arelation_1^i(\aworld)$ corresponding to the nominal $\aaux$
      such that $\amodel',\aworld_\aaux\models\HMDiamond{\avartree}(\Diamond\avarleft\land\Diamond\avariable)$;
      \item there exists a $\avartree$-world $\aworld_\aauxbis\in\arelation_1^i(\aworld)$ corresponding to the nominal $\aauxbis$
      such that $\amodel',\aworld_\aauxbis\models\HMDiamond{\avartree}(\Diamond\avarleft\land\Diamond\avariablebis)$;
      \item $\amodel',\aworld\models\equivalent{\avariable}{\avariablebis}{i+1}{j}$;
    \end{enumerate}
    then it follows that

    \begin{enumerate}
      \setcounter{enumii}{5}
      \item $\amodel',\aworld_\avariable \models \treeval$ iff $\amodel',\aworld_\avariablebis \models \treeval$.
    \end{enumerate}
    By hypothesis,  there exist $\aworld_\aaux,\aworld_\aauxbis$ at
    distance $i$ from $\aworld$ corresponding to nominals $\aaux$ and $\aauxbis$, respectively.
    Let $\aworld_{\aaux,\avarleft}\in\arelation(\aworld_\aaux))$ and $\aworld_{\aauxbis,\avarleft}\in\arelation(\aworld_\aauxbis))$
    such that $\nb{\aworld_{\aaux,\avarleft}}>\nb{\aworld_{\aaux,\avarselect}}$
    and $\nb{\aworld_{\aauxbis,\avarleft}}>\nb{\aworld_{\aauxbis,\avarselect}}$.
    If we are able to satisfy all the conditions a.--e. above, we can conclude
    what we want.  Suppose $\nb{\aworld_{\aaux,\avarleft}}=\nb{\aworld_{\aauxbis,\avarleft}}$. By the induction hypothesis, together with \Cref{A:the-property-of-types}, we get $\amodel,\aworld\models\equivalent{\avariable}{\avariablebis}{i+1}{j}$. Also, since by hypothesis
    $\amodel,\aworld_b\models\complete{j-i}$, for $\aworld_b\in\set{\aworld_\aaux,\aworld_\aauxbis}$, then it is easy to check that
    we satisfy the remaining conditions above. Therefore we can conclude
    $\amodel',\aworld_\avariable \models \treeval$ iff $\amodel',\aworld_\avariablebis \models \treeval$.

    ($\Leftarrow$): The other direction uses similar steps backwards.

    \item We recall that $\rightpred{\aaux}{\aauxbis} \egdef \atnom{\aaux}{1}\HMBox{\avartree}(\Diamond \avarright \implies \treeval)
    \land \atnom{\aauxbis}{1}\HMBox{\avartree}(\Diamond \avarright \implies \lnot \treeval)$.

    ($\Rightarrow$): Suppose $\amodel,\aworld\models\rightpred{\aaux}{\aauxbis}$. By unfolding the definition above, there exist two distinct
    $\avartree$-nodes $\aworld_\aaux,\aworld_\aauxbis\in\arelation(\aworld)$,
    corresponding to nominals $\aaux$ and $\aauxbis$ respectively, such that:
    \begin{enumerate}
      \item $\amodel,\aworld_\aaux\models\HMBox{\avartree}(\Diamond \avarright \implies \treeval)$, and
      \item $\amodel,\aworld_\aauxbis\models\HMBox{\avartree}(\Diamond \avarright \implies \neg\treeval)$.
    \end{enumerate}
    By item $(C)$ in the hypothesis, we know that there is exactly one
    $\avartree$-node in $\arelation(\aworld_\aaux)$ (say $\aworld_{\aaux,\avarselect}$) having an $\Aux$-child satisfying $\avarselect$.
    Let $\aworld_{\aaux,\avarright}\in\arelation(\aworld_\aaux)$ be such that
    $\nb{\aworld_{\aaux,\avarright}} <\nb{\aworld_{\aaux,\avarselect}}$.
    By item $(E)$ in the hypothesis, there exists $\aworld'\in\arelation(\aworld_{\aaux,\avarright})$ such that $\amodel,\aworld'\models\avarright$, so  $\amodel,\aworld_{\aaux,\avarright}\models\Diamond\avarright$. As a consequence, by the item a. above, we have $\amodel,\aworld_{\aaux,\avarright}\models\treeval$.

    By applying the same reasoning with $\aworld_{\aauxbis,\avarright}\in\arelation(\aworld_\aauxbis)$ such that
    $\nb{\aworld_{\aauxbis,\avarright}} <\nb{\aworld_{\aauxbis,\avarselect}}$, and
    the item b. above, we get $\amodel,\aworld_{\aauxbis,\avarright}\models\neg\treeval$.

    ($\Leftarrow$): This direction uses similar arguments (backwards).
   \end{enumerate}
\end{proof}

\begin{lemma}\label{lemma:lessij}
  Let $\aaux \neq \aauxbis \in \Aux$ and
  $1 \leq i < j$.
  Suppose $\amodel,\aworld \models \init{j} \land \fork{\aaux}{\aauxbis}{i}{j}$.

  $\amodel,\aworld \models \less{\aaux}{\aauxbis}{i}{j}$ if and only if
  there are two distinct $\avartree$-nodes $\aworld_{\aaux},\aworld_{\aauxbis} \in \arelation^i(\aworld)$ such that $\aworld_{\aaux}$ corresponds to the nominal $\aaux$, $\aworld_{\aauxbis}$ corresponds to the nominal $\aauxbis$ and
  $\nb{\aworld_{\aaux}} < \nb{\aworld_{\aauxbis}}$.
  \end{lemma}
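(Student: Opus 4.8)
The statement is the inductive-case correctness lemma for $\less{\aaux}{\aauxbis}{i}{j}$ with $1\le i<j$, and the plan is to mirror the structure of the base-case proof (\Cref{lemma:lessjj}), now using the children-encoding of numbers rather than the bit-vector encoding on a single world. Recall that, since we work under the hypothesis $\amodel,\aworld\models\init{j}\land\fork{\aaux}{\aauxbis}{i}{j}$, by \Cref{lemma:forkij} there are exactly two $\avartree$-nodes $\aworld_{\aaux},\aworld_{\aauxbis}\in\arelation^i(\aworld)$ corresponding to the nominals $\aaux$ and $\aauxbis$, and both satisfy $\completeplus{j-i}$; in particular the numbers $\nb{\aworld_{\aaux}}$ and $\nb{\aworld_{\aauxbis}}$ are encoded by the truth values of $\treeval$ on their $\avartree$-children, which span $\interval{0}{\amapter(j-i,n)-1}$ by \ref{prop:apU}, \ref{prop:apC}. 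The arithmetical fact to exploit is the same as in \Cref{lemma:lessjj}: for numbers $x,y$ with bits $x_k,y_k$, $x<y$ iff there is a position $k$ with $x_k=0$, $y_k=1$ and $x_l=y_l$ for all $l>k$. Here a ``position'' is a $\avartree$-child (identified by the number it encodes in $\interval{0}{\amapter(j-i,n)-1}$), $x_k=1$ means $\treeval$ holds on that child, and ``$l>k$'' is compared via the already-defined formulae $\less{\cdot}{\cdot}{i'}{j'}$ with $j'-i'<j-i$.

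\textbf{Key steps.} First I would unfold the definition of $\less{\aaux}{\aauxbis}{i}{j}$: it is $\true\separate(\twonoms{\aaux}{\aauxbis}{i}\land\HMBox{\avartree}^i\lsrpartition{j-i}\land\selectpred{\aaux}{\aauxbis}{i}{j}\land\leftpred{\aaux}{\aauxbis}{i}{j})$. So I must show the right-hand condition $\nb{\aworld_{\aaux}}<\nb{\aworld_{\aauxbis}}$ is equivalent to the existence of a submodel $\amodel'\sqsubseteq\amodel$ in which $\aworld_{\aaux}$ and $\aworld_{\aauxbis}$ stay the $\aaux$- and $\aauxbis$-nominals (via $\twonoms{\aaux}{\aauxbis}{i}$; note $\aaux$-nominality at depth $i$ is preserved when we only drop edges, and $\twonoms{}{}{}$ additionally requires them to remain distinct), both carry an LSR-partition of their $\avartree$-children (via $\HMBox{\avartree}^i\lsrpartition{j-i}$, whose correctness is \Cref{lemma:lsr}), and the partitions satisfy the two clauses of \ref{desc:LESS}. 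For the forward direction: given $\nb{\aworld_{\aaux}}<\nb{\aworld_{\aauxbis}}$, fix the witnessing position $k$ and build $\amodel'$ by choosing, for each of $\aworld_{\aaux},\aworld_{\aauxbis}$, the $\Aux$-children that mark $S_b$ as the child encoding $k$, $R_b$ as those encoding $<k$, $L_b$ as those encoding $>k$; then \Cref{lemma:lsr} gives $\lsrpartition{j-i}$ at both nodes, and \Cref{A:lemma:left-and-right} (items I and II) reduces $\selectpred{\aaux}{\aauxbis}{i}{j}$ and $\leftpred{\aaux}{\aauxbis}{i}{j}$ exactly to the two clauses of \ref{desc:LESS}: $\nb{s_\aaux}=\nb{s_\aauxbis}$, $\amodel,s_\aaux\not\models\treeval$, $\amodel,s_\aauxbis\models\treeval$, and agreement of $\treeval$ on equal-numbered left-children. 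For the backward direction: a submodel satisfying all four conjuncts furnishes, via \Cref{lemma:lsr} and \Cref{A:lemma:left-and-right}, partitions with $s_\aaux$ a position where the $\aaux$-bit is $0$ and the $\aauxbis$-bit is $1$, with $L_b$ containing exactly the higher positions, and with matching $\treeval$ on coinciding higher positions, which is precisely condition (A)--(B) above, hence $\nb{\aworld_{\aaux}}<\nb{\aworld_{\aauxbis}}$. Throughout I would invoke \Cref{A:the-property-of-types} to transfer number-equalities and $\treeval$-values between $\amodel$ and $\amodel'$, since only edges irrelevant to $\complete{j-i}$ are removed.

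\textbf{Main obstacle.} The delicate point is not the arithmetic but the \emph{modular} use of the operator $\separate$: I need to argue that the submodel $\amodel'$ chosen in the forward direction can simultaneously keep $\aworld_{\aaux}$ and $\aworld_{\aauxbis}$ as nominals at depth $i$, keep both subtrees satisfying $\lsrpartition{j-i}$, and route the auxiliary $\set{\avariable,\avariablebis}$-nodes correctly so that the nested $\separate$'s inside $\selectpred{}{}{i}{j}$ and $\leftpred{}{}{i}{j}$ can themselves find their required sub-submodels — and, conversely, that an arbitrary submodel witnessing the whole formula cannot ``cheat'' by collapsing structure in a way that makes the comparison vacuous. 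This is where the robustness of $\complete{j-i}$ (the strong structural constraints controlling which edges survive) is essential, and it is where I would spend the bulk of the argument; the rest follows the pattern of \Cref{lemma:lessjj} and \Cref{lemma:lsr} closely, using the inductive hypothesis that $\fork{}{}{i'}{j'}$, $\less{}{}{i'}{j'}$, $\complete{j'}$ and $\completeplus{j'}$ are already correct for the relevant smaller parameters.
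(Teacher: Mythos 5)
Your proposal is correct and follows essentially the same route as the paper's proof: both directions are reduced, via \Cref{lemma:forkij}, \Cref{lemma:lsr}, \Cref{A:lemma:left-and-right} (items I and II) and the number-preservation result \Cref{A:the-property-of-types}, to the standard bit-vector characterisation of $<$, exactly as in the base case \Cref{lemma:lessjj}. The only cosmetic difference is that the paper spells out the formula-to-inequality direction and treats the submodel-construction direction as analogous, whereas you detail the construction of the witnessing submodel (using the guaranteed presence of all $\{\avarleft,\avarselect,\avarright\}$-children from $\completeplus{j-i}$), which is the same argument presented in the opposite order.
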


\begin{proof}
  Recall that $\less{\aaux}{\aauxbis}{i}{j}$ is defined as $\true \separate (
    \twonoms{\aaux}{\aauxbis}{i} \land \HMBox{\avartree}^i\lsrpartition{j-i} \land
  \selectpred{\aaux}{\aauxbis}{i}{j} \land \leftpred{\aaux}{\aauxbis}{i}{j}
  )$.
  As in \Cref{lemma:lessjj}, the proof uses standard properties of numbers encoded in binary.
  Again, let $x,y$ be two natural numbers that can be represented in binary by using $n$ bits. Let us denote with $x_i$ (resp. $y_i$) the $i$-th bit of the binary representation of $x$ (resp. $y$).
  We have that $x < y$ if and only if
    \begin{enumerate}[label=(\Alph*)]
      \item there is a position $i \in \interval{1}{n}$ such that $x_i = 0$ and $y_i = 1$;
      \item for every position $j > i$, $x_j = 0$ $\iff$ $y_j = 0$.
    \end{enumerate}
  The formula $\less{\aaux}{\aauxbis}{i}{j}$ uses exactly this characterisation in order to state that $\nb{\aworld_{\aaux}} < \nb{\aworld_{\aauxbis}}$.

  Suppose $\amodel,\aworld \models \init{j} \land \fork{\aaux}{\aauxbis}{i}{j}$.
  From \Cref{lemma:forkij}, in $\pair{\amodel}{\aworld}$ it holds that
  \begin{enumerate}[label=(\roman*)]
  \item $\aworld$ has exactly two $\avartree$-children and exactly two paths of $\avartree$-nodes, both of length $i$;
  \item one of these two paths ends on a world (say $\aworld_{\aaux}$) corresponding to the nominal $\aaux$ whereas the other ends on a world (say $\aworld_{\aauxbis}$) corresponding to the nominal $\aauxbis$;
  \item $\pair{\amodel}{\aworld_{\aaux}}$ and $\pair{\amodel}{\aworld_{\aauxbis}}$ satisfy
   $\completeplus{j - i} \egdef \complete{j - i} \land \HMBox{\avartree}(\Diamond \avarleft \land \Diamond \avarselect \land \Diamond \avarright)$.
  \end{enumerate}
  To complete the proof, we prove each direction separately. 

  ($\Rightarrow$):
  Suppose $\amodel,\aworld \models \less{\aaux}{\aauxbis}{i}{j}$.
  Then (by $\models$) there exists $\amodel'=\tup{\worlds,\arelation',\avaluation}$, such that $\amodel'\sqsubseteq\model$ and
  \begin{nscenter}
    $\amodel',\aworld\models \twonoms{\aaux}{\aauxbis}{i} \land \HMBox{\avartree}^i\lsrpartition{j-i} \land
    \selectpred{\aaux}{\aauxbis}{i}{j} \land \leftpred{\aaux}{\aauxbis}{i}{j}$.
  \end{nscenter}
  Then, from (i)--(iii), we can conclude that in $\pair{\amodel'}{\aworld}$, the two worlds $\aworld_{\aaux}$ and $\aworld_{\aauxbis}$ (corresponding to the nominals $\aaux$ and $\aauxbis$ in $\pair{\amodel}{\aworld}$) are exactly the ones responsible for the satisfaction of $\twonoms{\aaux}{\aauxbis}{i}$. Moreover, from
  $\amodel',\aworld \models \HMBox{\avartree}^i\lsrpartition{j-i}$ and \Cref{lemma:lsr}, we have $\amodel',\aworld_{\aaux} \models \complete{j-i}$.
  Then, by \Cref{A:the-property-of-types} we conclude that $\aworld_{\aaux}$ encodes the same number w.r.t. $\pair{\amodel}{\aworld}$ and $\pair{\amodel'}{\aworld}$.
  The same property holds for $\aworld_{\aauxbis}$, since again by $\amodel',\aworld \models \HMBox{\avartree}^i\lsrpartition{j-i}$ and \Cref{lemma:lsr}, we have $\amodel',\aworld_{\aauxbis} \models \complete{j-i}$.
  Lastly, again from \Cref{lemma:lsr},
  \begin{enumerate}
    \item every $\avartree$-node in $\arelation'(\aworld_{\aaux})$ and $\arelation'(\aworld_{\aauxbis})$
    has exactly one $\Aux$-child satisfying an atomic proposition from $\{\avarleft,\avarselect,\avarright\}$;
    \item exactly one $\avartree$-node in $\arelation'(\aworld_{\aaux})$ (say $\aworld_{\aaux,\avarselect}$) has an $\Aux$-child satisfying $\avarselect$.
    Similarly, exactly one $\avartree$-node in $\arelation'(\aworld_{\aauxbis})$ (say $\aworld_{\aauxbis,\avarselect}$) has an $\Aux$-child satisfying $\avarselect$.
    \item given $\aworld_{\aaux,\avarleft} \in \arelation'(\aworld_{\aaux})$ (resp. $\aworld_{\aauxbis,\avarleft} \in \arelation'(\aworld_{\aauxbis})$),
    it has an $\Aux$-child satisfying $\avarleft$ if and only if $\nb{\aworld_{\aaux,\avarleft}} > \nb{\aworld_{\aaux,\avarselect}}$ (resp.  $\nb{\aworld_{\aauxbis,\avarleft}} > \nb{\aworld_{\aauxbis,\avarselect}}$).
  \end{enumerate}
  Recall that the  number $\nb{\aworld_{\aaux}}$ (resp. $\nb{\aworld_{\aauxbis}}$) is represented by
  the binary encoding of the truth values of $\treeval$ on the $\avartree$-children of $\aworld_{\aaux}$ (resp. $\aworld_{\aauxbis}$) which, since $\pair{\amodel'}{\aworld_{\aaux}} \models \complete{j-i}$ (resp. $\pair{\amodel'}{\aworld_{\aauxbis}} \models \complete{j-i}$), are $\amapter(j-i,n)$ children implicitly ordered by the number they, in turn, encode.
  As $\pair{\amodel'}{\aworld}$ satisfies the hypothesis of \Cref{A:lemma:left-and-right},
  from $\amodel',\aworld \models \selectpred{\aaux}{\aauxbis}{i}{j} \land \leftpred{\aaux}{\aauxbis}{i}{j}$ we conclude that
  \begin{itemize}
    \item $\nb{\aworld_{\aaux,\avarselect}} = \nb{\aworld_{\aauxbis,\avarselect}}$, $\amodel, \aworld_{\aaux,\avarselect} \models \lnot \treeval$ and $\amodel, \aworld_{\aauxbis,\avarselect} \models \treeval$.
    Thus, in the binary representation of $\nb{\aworld_{\aaux}}$, the $\nb{\aworld_{\aaux,\avarselect}}$-bit is $0$, whereas in the binary representation of $\nb{\aworld_{\aauxbis}}$, it is $1$. Hence, the property (A) of numbers encoded in binary holds for $\nb{\aworld_{\aaux}}$ and $\nb{\aworld_{\aauxbis}}$;
    \item for all  worlds $\aworld_{\aaux,\avarleft} \in \arelation(\aworld_{\aaux})$
    and
    $\aworld_{\aauxbis,\avarleft} \in \arelation(\aworld_{\aauxbis})$ such that
    $\nb{\aworld_{\aaux,\avarleft}} > \nb{\aworld_{\aaux,\avarselect}}$ and
    $\nb{\aworld_{\aauxbis,\avarleft}} > \nb{\aworld_{\aauxbis,\avarselect}}$,
    if $\nb{\aworld_{\aaux,\avarleft}} = \nb{\aworld_{\aauxbis,\avarleft}}$ then
    \begin{nscenter}
      $\amodel,\aworld_{\aaux,\avarleft} \models \treeval$
      if and only if
      $\amodel,\aworld_{\aauxbis,\avarleft} \models \treeval$.
    \end{nscenter}
    Thus, the binary representation of $\nb{\aworld_{\aaux}}$ and $\nb{\aworld_{\aauxbis}}$, is the same when restricted to the bits that are more significant than $\nb{\aworld_{\aaux,\avarselect}}$ (which is equal to $\nb{\aworld_{\aauxbis,\avarselect}}$ by the previous case).
    Hence, the property (B) is also verified by $\nb{\aworld_{\aaux}}$ and $\nb{\aworld_{\aauxbis}}$.
  \end{itemize}
  Directly, we then conclude that $\nb{\aworld_{\aaux}} < \nb{\aworld_{\aauxbis}}$.

  ($\Leftarrow$): The right-to-left direction is proven analogously by essentially relying on \Cref{A:lemma:left-and-right} (I and II).
\end{proof}

\begin{lemma}\label{lemma:succj}
Let $\aaux \neq \aauxbis \in \Aux$ and $1 \leq i < j$.
Suppose $\amodel,\aworld \models \init{j} \land \fork{\aaux}{\aauxbis}{1}{j}$.

$\amodel,\aworld \models \successor{\aaux}{\aauxbis}{j}$ if and only if
there are two distinct $\avartree$-nodes $\aworld_{\aaux},\aworld_{\aauxbis} \in \arelation(\aworld)$ such that $\aworld_{\aaux}$ corresponds to the nominal $\aaux$, $\aworld_{\aauxbis}$ corresponds to the nominal $\aauxbis$ and
$\nb{\aworld_{\aauxbis}} = \nb{\aworld_{\aaux}}+1$.
\end{lemma}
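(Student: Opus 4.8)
The plan is to follow the proof of \Cref{lemma:lessij} almost verbatim, specialised to $i = 1$, the only genuinely new ingredient being the conjunct $\rightpred{\aaux}{\aauxbis}$ of $\successor{\aaux}{\aauxbis}{j}$, which accounts for the ``low bits'' of the successor relation. Recall that $\successor{\aaux}{\aauxbis}{j} \egdef \true \separate \big(\twonoms{\aaux}{\aauxbis}{1} \land \HMBox{\avartree}\lsrpartition{j-1} \land \selectleftpred{\aaux}{\aauxbis}{1}{j} \land \rightpred{\aaux}{\aauxbis}\big)$, with $\selectleftpred{\aaux}{\aauxbis}{1}{j} = \leftpred{\aaux}{\aauxbis}{1}{j} \land \selectpred{\aaux}{\aauxbis}{1}{j}$. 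As for $\less{\aaux}{\aauxbis}{1}{j}$, the correctness proof rests on the bitwise characterisation of the successor function: writing $x_u$ for the $u$-th bit of an $n$-bit number $x$, one has $y = x+1$ iff \textbf{(A)} there is a position $u$ with $x_u = 0$ and $y_u = 1$, \textbf{(B)} for every $v > u$, $x_v = 0 \iff y_v = 0$, and \textbf{(C)} for every $v < u$, $x_v = 1$ and $y_v = 0$. Conditions \textbf{(A)} and \textbf{(B)} are precisely those used in \Cref{lemma:lessij}, and they are captured here by $\selectleftpred{\aaux}{\aauxbis}{1}{j}$; condition \textbf{(C)} is what $\rightpred{\aaux}{\aauxbis}$ is designed to express.

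First I would unpack the hypothesis: since $\amodel,\aworld \models \init{j} \land \fork{\aaux}{\aauxbis}{1}{j}$, \Cref{lemma:forkij} gives that $\aworld$ has exactly two $\avartree$-children, reached by the two unique $\avartree$-paths of length $1$ — call them $\aworld_{\aaux}$ (corresponding to the nominal $\aaux$) and $\aworld_{\aauxbis}$ (corresponding to $\aauxbis$) — and that both $\pair{\amodel}{\aworld_{\aaux}}$ and $\pair{\amodel}{\aworld_{\aauxbis}}$ satisfy $\completeplus{j-1}$, so in particular every $\avartree$-child of $\aworld_{\aaux}$ or $\aworld_{\aauxbis}$ carries all three of an $\avarleft$-, an $\avarselect$- and an $\avarright$-child.

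For the left-to-right direction, given a submodel $\amodel' \sqsubseteq \amodel$ witnessing the outermost $\true \separate (\cdot)$, I would argue as in \Cref{lemma:lessij}: by \Cref{lemma:twonoms} and the $\fork$ hypothesis the two $\avartree$-children of $\aworld$ in $\amodel'$ are still $\aworld_{\aaux}$ and $\aworld_{\aauxbis}$; by $\amodel',\aworld \models \HMBox{\avartree}\lsrpartition{j-1}$ together with \Cref{lemma:lsr} and \Cref{A:the-property-of-types}, the numbers $\nb{\aworld_{\aaux}}$ and $\nb{\aworld_{\aauxbis}}$ are unchanged in passing from $\amodel$ to $\amodel'$ and the $\{\avarleft,\avarselect,\avarright\}$-markings on their $\avartree$-children are those prescribed by $\lsrpartition{j-1}$. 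Thus $\pair{\amodel'}{\aworld}$ meets the hypotheses of \Cref{A:lemma:left-and-right} with $i = 1$, and from $\selectpred{\aaux}{\aauxbis}{1}{j}$, $\leftpred{\aaux}{\aauxbis}{1}{j}$ and $\rightpred{\aaux}{\aauxbis}$ I extract conditions \textbf{(A)}, \textbf{(B)} and \textbf{(C)} respectively for $\nb{\aworld_{\aaux}}$ and $\nb{\aworld_{\aauxbis}}$, whence $\nb{\aworld_{\aauxbis}} = \nb{\aworld_{\aaux}}+1$. For the right-to-left direction, given $\aworld_{\aaux},\aworld_{\aauxbis}$ with $\nb{\aworld_{\aauxbis}} = \nb{\aworld_{\aaux}}+1$ and $u$ the unique flip position, I would build $\amodel' \sqsubseteq \amodel$ keeping only the two edges $\aworld\to\aworld_{\aaux}$, $\aworld\to\aworld_{\aauxbis}$, the full subtrees of $\aworld_{\aaux}$ and $\aworld_{\aauxbis}$, and, among the $\{\avarleft,\avarselect,\avarright\}$-children of each $\avartree$-child of $\aworld_{\aaux}$ (resp.\ $\aworld_{\aauxbis}$), the $\avarselect$-child of the child encoding $u$, the $\avarright$-child of the children encoding numbers $< u$, and the $\avarleft$-child of those encoding numbers $> u$; these $\Aux$-children are available precisely because $\completeplus{j-1}$ holds at $\aworld_{\aaux}$ and $\aworld_{\aauxbis}$. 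Then $\amodel',\aworld \models \twonoms{\aaux}{\aauxbis}{1} \land \HMBox{\avartree}\lsrpartition{j-1}$ by construction, and \textbf{(A)}--\textbf{(C)} together with \Cref{A:lemma:left-and-right} (read right-to-left) give $\amodel',\aworld \models \selectleftpred{\aaux}{\aauxbis}{1}{j} \land \rightpred{\aaux}{\aauxbis}$, hence $\amodel,\aworld \models \successor{\aaux}{\aauxbis}{j}$.

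The main obstacle, exactly as for \Cref{lemma:lessij}, will be verifying that the submodel constructed in the right-to-left direction does the job: one must keep enough structure for $\complete{j-1}$ to hold at both $\aworld_{\aaux}$ and $\aworld_{\aauxbis}$ (so their encoded numbers are preserved, via \Cref{A:the-property-of-types}) and for the recursively-defined subformulae $\fork{\avariable}{\avariablebis}{2}{j}$ and $\equivalent{\avariable}{\avariablebis}{2}{j}$ nested inside $\selectpred$ and $\leftpred$ still to be evaluable, while also threading through the extra $\avarright$-bookkeeping demanded by condition \textbf{(C)} that is absent from the $\less$-case — but all of this is a routine adaptation of the construction already carried out for $\less{\aaux}{\aauxbis}{1}{j}$, so no essentially new difficulty arises.
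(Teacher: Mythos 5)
Your proposal is correct and follows essentially the same route as the paper's proof: the paper likewise observes that $\successor{\aaux}{\aauxbis}{j}$ is $\less{\aaux}{\aauxbis}{1}{j}$ with the extra conjunct $\rightpred{\aaux}{\aauxbis}$, reuses the bitwise characterisation (A)--(C) of $+1$, and handles (A),(B) via points I--II and (C) via point III of \Cref{A:lemma:left-and-right}, mirroring the proof of \Cref{lemma:lessij}. Your write-up merely spells out the submodel construction and the preservation arguments (via \Cref{lemma:lsr} and \Cref{A:the-property-of-types}) that the paper leaves implicit by referring to \Cref{lemma:lessij}.
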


\begin{proof}
  We recall the definition of $\successor{\aaux}{\aauxbis}{j}$ (where we expand the definition of $\selectleftpred{\aaux}{\aauxbis}{1}{j}$ given in the body of the paper):
   \begin{nscenter}
    $
    \successor{\aaux}{\aauxbis}{j} \egdef \true  {\separate} \big(
      \twonoms{\aaux}{\aauxbis}{1} \land \HMBox{\avartree}\lsrpartition{j-1} \land \selectpred{\aaux}{\aauxbis}{1}{j} \land
      \leftpred{\aaux}{\aauxbis}{1}{j}
      \land
      \rightpred{\aaux}{\aauxbis}
    \big).
    $
   \end{nscenter}
   As in \Cref{lemma:succone}, the proof uses standard properties of numbers encoded in binary.
   Again, let $x,y$ be two natural numbers that can be represented in binary by using $n$ bits. Let us denote with $x_i$ (resp. $y_i$) the $i$-th bit of the binary representation of $x$ (resp. $y$).
   We have that $y = x + 1$ if and only if
     \begin{enumerate}[label=(\Alph*)]
       \item there is a position $i \in \interval{1}{n}$ such that $x_i = 0$ and $y_i = 1$;
       \item for every position $j > i$, $x_j = 0$ $\iff$ $y_j = 0$;
       \item for every position $j < i$, $x_j = 1$ and $y_j = 0$.
     \end{enumerate}
   The formula $\successor{\aaux}{\aauxbis}{j}$ uses exactly this characterisation in order to state that $\nb{\aworld_{\aauxbis}} = \nb{\aworld_{\aaux}}+1$.

   One can see that the formula $\successor{\aaux}{\aauxbis}{j}$ can be obtained (syntactically) from the formula $\less{\aaux}{\aauxbis}{1}{j} \egdef \true \separate (
     \twonoms{\aaux}{\aauxbis}{1} \land \HMBox{\avartree}^i\lsrpartition{j-1} \land
   \selectpred{\aaux}{\aauxbis}{1}{j} \land \leftpred{\aaux}{\aauxbis}{1}{j})$
   by simply adding the conjunct
   $\rightpred{\aaux}{\aauxbis}$ to the right of $\leftpred{\aaux}{\aauxbis}{1}{j}$.
   Because of this, it is easy to see that
   the proof of this lemma follows very closely the structure of the proof of \Cref{lemma:lessij}.
   Indeed, to prove (A) and (B) we essentially rely on \Cref{A:lemma:left-and-right} (I and II), whereas to
   prove (C) we rely on the third point of \Cref{A:lemma:left-and-right}.
\end{proof}

\subsection{Inductive case $1 \leq i < j$ : Correctness of $\pU{j}$ and $\pC{j}$}

Let $\amodel = \triple{\worlds}{\arelation}{\avaluation}$ be a finite forest and $\aworld \in \worlds$.

\begin{lemma}\label{lemma:uniqj}
Let $j \geq 2$.
Suppose $\amodel,\aworld \models \init{j} \land \pA$.

$\amodel,\aworld \models \pU{j}$ if and only if
$\pair{\amodel}{\aworld}$ satisfies
\ref{prop:apU}, i.e.\ distinct $\avartree$-nodes in $\arelation(\aworld)$
encode different numbers.
\end{lemma}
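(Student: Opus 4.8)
The statement to prove is \Cref{lemma:uniqj}, the inductive-case correctness of $\pU{j}$ for $j \geq 2$. Recall
$$
\pU{j} \egdef \lnot \big(\true \separate (\fork{\avariable}{\avariablebis}{1}{j} \land \equivalent{\avariable}{\avariablebis}{1}{j})\big),
$$
where $\equivalent{\avariable}{\avariablebis}{1}{j} \egdef \lnot(\less{\avariable}{\avariablebis}{1}{j} \lor \less{\avariablebis}{\avariable}{1}{j})$. The proof will mirror closely the proof of \Cref{lemma:uniqone} (the base case), the difference being that now $\fork{\avariable}{\avariablebis}{1}{j}$ additionally requires $\HMBox{\avartree}\completeplus{j-1}$ on the two selected children, and that the encoding of a number is no longer a local property of a world (it depends on the $\treeval$-values of the $\avartree$-children), so preservation of the encoded number under passing to submodels must be invoked explicitly via \Cref{A:the-property-of-types}. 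Throughout I work under the hypothesis $\amodel,\aworld \models \init{j} \land \pA$, which in particular guarantees (by $\pA$) that every $\avartree$-node in $\arelation(\aworld)$ has exactly one $\avariable$-child and one $\avariablebis$-child, and every $\avartree$-node in $\arelation^2(\aworld)$ has children satisfying $\avarleft$, $\avarselect$ and $\avarright$.

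\textbf{The contrapositive, left-to-right.} First I would prove: if two distinct $\avartree$-nodes $\aworld_{\avariable}, \aworld_{\avariablebis} \in \arelation(\aworld)$ encode the same number, then $\amodel,\aworld \not\models \pU{j}$, i.e.\ $\amodel,\aworld \models \true \separate (\fork{\avariable}{\avariablebis}{1}{j} \land \equivalent{\avariable}{\avariablebis}{1}{j})$. To do this I build an explicit submodel $\amodel' = \triple{\worlds}{\arelation_1}{\avaluation}$ of $\amodel$ in which $\aworld$ keeps exactly the two children $\aworld_{\avariable}$ and $\aworld_{\avariablebis}$; the child $\aworld_{\avariable}$ keeps its unique $\avariable$-child, $\aworld_{\avariablebis}$ keeps its unique $\avariablebis$-child; moreover below $\aworld_{\avariable}$ and $\aworld_{\avariablebis}$ we keep exactly the subtrees that witness $\completeplus{j-1}$ at those worlds (which is possible since $\pair{\amodel}{\aworld_{\avariable}}$ and $\pair{\amodel}{\aworld_{\avariablebis}}$ already satisfy $\complete{j-1}$ by \ref{prop:apS}, hence $\completeplus{j-1}$ by $\pA$). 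By \Cref{lemma:forkij} this gives $\amodel',\aworld \models \fork{\avariable}{\avariablebis}{1}{j}$. Since the subtrees under $\aworld_{\avariable}$ and $\aworld_{\avariablebis}$ witnessing $\complete{j-1}$ are retained, \Cref{A:the-property-of-types} tells us the numbers encoded by $\aworld_{\avariable}$ and $\aworld_{\avariablebis}$ are unchanged between $\pair{\amodel}{\aworld}$ and $\pair{\amodel'}{\aworld}$; so they remain equal, and by \Cref{lemma:lessij} neither $\less{\avariable}{\avariablebis}{1}{j}$ nor $\less{\avariablebis}{\avariable}{1}{j}$ holds in $\amodel'$, i.e.\ $\amodel',\aworld \models \equivalent{\avariable}{\avariablebis}{1}{j}$. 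This yields $\amodel,\aworld \not\models \pU{j}$.

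\textbf{The contrapositive, right-to-left, and the expected obstacle.} Conversely, suppose $\amodel,\aworld \not\models \pU{j}$, so there is $\amodel' = \triple{\worlds}{\arelation_1}{\avaluation} \sqsubseteq \amodel$ with $\amodel',\aworld \models \fork{\avariable}{\avariablebis}{1}{j} \land \equivalent{\avariable}{\avariablebis}{1}{j}$. Since satisfaction of $\init{j}$ is monotone under submodels, $\amodel',\aworld \models \init{j}$, so \Cref{lemma:forkij} applies: $\aworld$ has in $\amodel'$ exactly two $\avartree$-children $\aworld_{\avariable}, \aworld_{\avariablebis}$, distinct, corresponding to the nominals $\avariable$ and $\avariablebis$, and $\pair{\amodel'}{\aworld_{\avariable}}, \pair{\amodel'}{\aworld_{\avariablebis}}$ satisfy $\completeplus{j-1}$ (in particular $\complete{j-1}$). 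By \Cref{lemma:lessij} applied to $\amodel'$, $\equivalent{\avariable}{\avariablebis}{1}{j}$ means these two worlds encode the same number \emph{with respect to} $\pair{\amodel'}{\aworld}$. Now the key point — and the step I expect to be the main obstacle — is transferring this back to $\pair{\amodel}{\aworld}$: $\aworld_{\avariable}$ and $\aworld_{\avariablebis}$ are still $\avartree$-children of $\aworld$ in $\amodel$ (since $\arelation_1 \subseteq \arelation$), and since $\amodel',\aworld_{\avariable} \models \complete{j-1}$ (resp. for $\aworld_{\avariablebis}$), \Cref{A:the-property-of-types} (with $i = 1$, using $\amodel,\aworld \models \complete{j}$ is \emph{not} assumed — instead one applies the lemma in the form that compares $\pair{\amodel}{\aworld_{\avariable}}$ and $\pair{\amodel'}{\aworld_{\avariable}}$ given that both satisfy $\complete{j-1}$ at $\aworld_{\avariable}$) yields $\nbexp{\aworld_{\avariable}}{j-1}$ is the same in $\amodel$ and $\amodel'$; likewise for $\aworld_{\avariablebis}$. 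Hence $\aworld_{\avariable}$ and $\aworld_{\avariablebis}$ are two distinct $\avartree$-children of $\aworld$ in $\amodel$ encoding the same number, so $\pair{\amodel}{\aworld}$ fails \ref{prop:apU}. Care must be taken that \Cref{A:the-property-of-types} as stated assumes $\amodel,\aworld' \models \init{j'} \land \complete{j'}$ at the \emph{outer} world; in our situation we instead use that $\amodel$ restricted to the subtree of $\aworld_{\avariable}$ agrees with $\amodel'$ restricted there, because dropping any of the $\avartree$-, $\{\avariable,\avariablebis\}$- or $\Aux$-nodes relevant to the encoding would break $\complete{j-1}$ at $\aworld_{\avariable}$ — exactly the argument internal to the proof of \Cref{A:the-property-of-types}. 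Making this invocation precise, rather than the routine submodel construction, is where the work lies.
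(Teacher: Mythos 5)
Your proposal is correct and follows essentially the same route as the paper: the paper's own proof is a one-liner ("as in the base case, but using \Cref{lemma:lessij} on $\equivalent{\avariable}{\avariablebis}{1}{j}$"), and your expansion — the two contrapositive submodel constructions mirroring \Cref{lemma:uniqone}, plus \Cref{A:the-property-of-types} (applied in the same slightly-adapted way the paper itself uses it inside the proof of \Cref{lemma:lessij}) to transfer the encoded numbers between $\amodel$ and the submodel — is exactly the intended argument.
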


\begin{proof}
As in \Cref{lemma:uniqone}, but using
\Cref{lemma:lessij} on the inductive formula $\equivalent{\avariable}{\avariablebis}{1}{j}$.
\end{proof}

\begin{lemma}\label{lemma:complj}
Let $j \geq 2$.
Suppose $\amodel,\aworld \models \init{j} \land \pA$.

$\amodel,\aworld \models \pC{j}$ if and only if
$\pair{\amodel}{\aworld}$ satisfies
\ref{prop:apC}, i.e.\ for every $\avartree$-node $\aworld_1 \in \arelation(\aworld)$, if $\nb{\aworld_1} < \amapter(j,n)-1$ then $\nb{\aworld_2} = \nb{\aworld_1}+1$ for some $\avartree$-node $\aworld_2 \in \arelation(\aworld)$.
\end{lemma}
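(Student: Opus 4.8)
\textbf{Proof proposal for \Cref{lemma:complj}.}
The plan is to transcribe the proof of the base case \Cref{lemma:complone} almost verbatim, replacing the base-case gadgets $\fork{\avariable}{\avariablebis}{1}{1}$ and $\successor{\avariable}{\avariablebis}{1}$ by their general counterparts $\fork{\avariable}{\avariablebis}{1}{j}$ and $\successor{\avariable}{\avariablebis}{j}$, whose semantics are already available by induction via \Cref{lemma:forkij} and \Cref{lemma:succj}. The one feature absent from the $j=1$ setting is that for $j\ge 2$ the number $\nb{\aworld'}$ encoded by a $\avartree$-node $\aworld'$ depends on the subtree below $\aworld'$ rather than only on the atomic propositions true at $\aworld'$; so whenever we restrict to a submodel we must invoke \Cref{A:the-property-of-types} to be sure the encoded numbers are preserved, which applies as soon as $\complete{j-1}$ (in particular $\completeplus{j-1}$) is still true at the world. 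This is exactly why the conjunct $\HMBox{\avartree}\Diamond\avariablebis$ of $\pC{1}$ is strengthened to $\HMBox{\avartree}(\completeplus{j-1}\land\Diamond\avariablebis)$ in $\pC{j}$. Throughout I work under $\amodel,\aworld\models\init{j}\land\pS{j}\land\pA$, from which $\completeplus{j-1}$ holds at every $\avartree$-child of $\aworld$ (the conjunct $\HMBox{\avartree}(\Diamond\avarleft\land\Diamond\avarselect\land\Diamond\avarright)$ coming from $\pA$, since those children are in $\arelation^2(\aworld)$). Unfolding the outer connectives: since $\Box\false$ forces the companion submodel to drop every edge leaving $\aworld$, $\amodel,\aworld\models\pC{j}$ holds iff there is \emph{no} submodel $\amodel'=\triple{\worlds}{\arelation'}{\avaluation}\sqsubseteq\amodel$ with $\arelation'(\aworld)=\arelation(\aworld)$ such that $\amodel',\aworld\models\HMBox{\avartree}(\completeplus{j-1}\land\Diamond\avariablebis)\land\nominal{\avariable}{1}\land\atnom{\avariable}{1}\lnot\one_j$ and simultaneously $\amodel',\aworld\not\models\true\separate(\fork{\avariable}{\avariablebis}{1}{j}\land\successor{\avariable}{\avariablebis}{j})$, where $\one_j=\HMBox{\avartree}\treeval$.

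For the $(\Leftarrow)$ direction I would assume \ref{prop:apC} and fix such an $\amodel'$ satisfying the first three conjuncts. From $\atnom{\avariable}{1}\lnot\one_j$ and \Cref{lemma:at} I get a $\avartree$-child $\aworld_\avariable$ of $\aworld$ with $\amodel',\aworld_\avariable\models\completeplus{j-1}\land\Diamond\avariable\land\lnot\HMBox{\avartree}\treeval$, so some $\avartree$-child of $\aworld_\avariable$ violates $\treeval$ and hence $\nb{\aworld_\avariable}<\amapter(j,n)-1$ in $\amodel'$, and by \Cref{A:the-property-of-types} also in $\amodel$. By \ref{prop:apC} there is a $\avartree$-child $\aworld_\avariablebis$ of $\aworld$ with $\nb{\aworld_\avariablebis}=\nb{\aworld_\avariable}+1$ in $\amodel$, hence (again \Cref{A:the-property-of-types}) in $\amodel'$. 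I would then prune $\amodel'$ to the submodel $\amodel''$ that keeps, among the $\avartree$-children of $\aworld$, only $\aworld_\avariable$ and $\aworld_\avariablebis$, keeps the $\avariable$-child of $\aworld_\avariable$ and a $\avariablebis$-child of $\aworld_\avariablebis$ (available since $\amodel',\aworld\models\HMBox{\avartree}\Diamond\avariablebis$), and keeps the subtrees witnessing $\completeplus{j-1}$ at both. Then \Cref{lemma:forkij} gives $\amodel'',\aworld\models\fork{\avariable}{\avariablebis}{1}{j}$, and since the encoded numbers are unchanged by \Cref{A:the-property-of-types}, \Cref{lemma:succj} gives $\amodel'',\aworld\models\successor{\avariable}{\avariablebis}{j}$. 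Thus $\amodel',\aworld\models\true\separate(\fork{\avariable}{\avariablebis}{1}{j}\land\successor{\avariable}{\avariablebis}{j})$, contradicting the choice of $\amodel'$; so $\amodel,\aworld\models\pC{j}$.

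For the $(\Rightarrow)$ direction I would argue contrapositively: suppose \ref{prop:apC} fails, witnessed by a $\avartree$-child $\aworld_\avariable$ of $\aworld$ with $\nb{\aworld_\avariable}<\amapter(j,n)-1$ but no $\avartree$-child of $\aworld$ encoding $\nb{\aworld_\avariable}+1$. Using $\pS{j}\land\pA$ I would build $\amodel'$ with $\arelation'(\aworld)=\arelation(\aworld)$ in which every $\avartree$-child of $\aworld$ keeps precisely the subtree witnessing $\completeplus{j-1}$ together with one $\avariablebis$-child, and in which $\aworld_\avariable$ is the unique $\avartree$-node in $\arelation(\aworld)$ retaining an $\avariable$-child, so that $\avariable$ becomes a local nominal for depth $1$; encoded numbers are unchanged by \Cref{A:the-property-of-types}. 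Then $\amodel',\aworld\models\HMBox{\avartree}(\completeplus{j-1}\land\Diamond\avariablebis)\land\nominal{\avariable}{1}\land\atnom{\avariable}{1}\lnot\one_j$, and I claim $\amodel',\aworld\not\models\true\separate(\fork{\avariable}{\avariablebis}{1}{j}\land\successor{\avariable}{\avariablebis}{j})$: any $\amodel''\sqsubseteq\amodel'$ satisfying $\fork{\avariable}{\avariablebis}{1}{j}$ must, by \Cref{lemma:forkij}, single out $\aworld_\avariable$ and some $\avartree$-child $\aworld_\avariablebis$ of $\aworld$ corresponding to the two nominals, with numbers unchanged by \Cref{A:the-property-of-types}, and $\successor{\avariable}{\avariablebis}{j}$ would then force $\nb{\aworld_\avariablebis}=\nb{\aworld_\avariable}+1$ by \Cref{lemma:succj}, contradicting the choice of $\aworld_\avariable$. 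Hence $\amodel',\aworld$ realises the forbidden configuration, so $\amodel,\aworld\not\models\pC{j}$.

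The main obstacle is the bookkeeping in the two submodel constructions: one must check that trimming $\amodel$ down to $\amodel'$ and then to $\amodel''$ keeps just enough edges for $\completeplus{j-1}$, $\nominal{\avariable}{1}$, $\HMBox{\avartree}\Diamond\avariablebis$ and $\fork{\avariable}{\avariablebis}{1}{j}$ to hold, while removing every spurious edge that could later be exploited to re-establish $\fork{\avariable}{\avariablebis}{1}{j}$ with a different partner for $\avariable$. As in the proof of \Cref{lemma:lessij}, it is the structural rigidity of $\complete{j-1}$ together with \Cref{A:the-property-of-types} that make this go through; the remaining steps are a direct copy of the proof of \Cref{lemma:complone}.
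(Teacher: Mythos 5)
Your proposal is correct and matches the paper's own (very terse) proof, which simply says to argue as in \Cref{lemma:complone} but with \Cref{lemma:succj} and $\completeplus{j-1}$ to properly evaluate $\fork{\avariable}{\avariablebis}{1}{j}$; your invocation of \Cref{A:the-property-of-types} for number preservation under submodels and your handling of the extra conjunct $\nominal{\avariable}{1}$ are exactly the adjustments the paper's inductive-case proofs rely on. The only cosmetic difference is that you phrase one direction contrapositively, but the underlying submodel constructions are the same.
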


\begin{proof}
As in \Cref{lemma:complone}, but using \Cref{lemma:succj} and the formula
$\completeplus{j-1}$ in order to properly evaluate $\fork{\avariable}{\avariablebis}{1}{j}$.
\end{proof}

\subsection{Proof of \Cref{lemma:tower-hardness-inductive}}

\begin{proof}
It follows directly from \Cref{lemma:basicform,lemma:uniqj,lemma:complj}.
\end{proof}

Again, a quick check of $\init{j}$ and the conditions \ref{prop:apS}, \ref{prop:apZ},
\ref{prop:apU}, \ref{prop:apC} and \ref{prop:apA} should be enough to convince the reader that they are simultaneously satisfiable, making $\init{j} \land \complete{j}$ also satisfiable.
However, in the following we show a model satisfying $\init{j} \land \complete{j}$.

\begin{lemma}\label{A:lemma:complete-j-satisfiable}
  Let $j \geq 2$. $\init{j} \land \complete{j}$ is satisfiable.
\end{lemma}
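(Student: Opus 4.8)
The plan is to produce, by induction on $j \geq 1$, an explicit ``canonical'' tree model and to appeal to \Cref{lemma:tower-hardness-inductive} (and \Cref{lemma:tower-hardness-base-case}, \Cref{A:lemma:complete-one-satisfiable} for the base). The induction hypothesis I would use is slightly strengthened: there is a pointed forest $\pair{\amodel_j}{\aroot_j}$, a finite tree, such that (i) $\amodel_j,\aroot_j \models \init{j} \land \complete{j}$, and (ii) every $\avartree$-node of $\amodel_j$ (the root $\aroot_j$ included) has exactly one child satisfying $\aaux$ for each $\aaux \in \Aux$ and no other $\Aux$-child. By (i) and \Cref{lemma:tower-hardness-inductive}, the $\avartree$-children of $\aroot_j$ then encode pairwise distinct numbers covering exactly $\interval{0}{\amapter(j,n)-1}$ (the ``$\complete{j}$ forces $\amapter(j,n)$ children'' phenomenon). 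The base case $j = 1$ is the model of \Cref{A:lemma:complete-one-satisfiable}, in which every $\avartree$-node is additionally given the missing $\avarleft$-, $\avarselect$- and $\avarright$-children; this keeps $\init{1}$ true (those children are leaves satisfying a single auxiliary proposition) and $\complete{1}$ true (it never forbids extra auxiliary children).

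For the step $j \to j{+}1$ I would first record the elementary observation, by inspection of the defining formulae, that the satisfaction of $\complete{j}$ and $\init{j}$ at a world $x$ does not depend on the truth value of $\treeval$ at the $\avartree$-children of $x$: $\pZ{j}$, $\pU{j}$ and $\pC{j}$ only mention $\treeval$ at grandchildren of $x$ and deeper, $\pS{j}$ merely pushes $\complete{j-1}$ to the children of $x$, and $\pA$ and $\init{j}$ do not mention $\treeval$ at all. Hence reassigning $\treeval$ on the $\avartree$-children of $\aroot_j$ yields a model still satisfying $\init{j} \land \complete{j}$. Now build $\amodel_{j+1}$: take a fresh root $\aroot_{j+1}$, give it one child for each auxiliary proposition, and attach to it $\avartree$-children $v_0, \dots, v_{\amapter(j+1,n)-1}$, where the subtree rooted at $v_m$ is a copy of $\amodel_j$ in which $\treeval$ on the $\avartree$-children of (the copy of) $\aroot_j$ is reset so that the child encoding position $p$ satisfies $\treeval$ iff the $p$-th bit of $m$ equals $1$. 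This is well defined since $m \in \interval{0}{\amapter(j+1,n)-1}$ has $\amapter(j,n)$ bits and, by (i), $\amodel_j$'s root has $\amapter(j,n)$ $\avartree$-children encoding the positions $0, \dots, \amapter(j,n)-1$.

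It remains to check that $\pair{\amodel_{j+1}}{\aroot_{j+1}}$ satisfies $\init{j+1}$ and properties \ref{prop:apS}, \ref{prop:apZ}, \ref{prop:apU}, \ref{prop:apC}, \ref{prop:apA}, and then to invoke \Cref{lemma:tower-hardness-inductive}. Satisfaction of $\init{j+1}$ is immediate (every $\avartree$-node carries one leaf child per auxiliary proposition, and $\init{j}$ holds below each $v_m$, so unfolding $\Boxbox^{j+1}$ once gives $\init{j+1}$ at $\aroot_{j+1}$). For the properties: \ref{prop:apA} holds because $\aroot_{j+1}$ is a $\avartree$-node and, by (ii), every $\avartree$-node at depth $\geq 1$ has $\avariable$- and $\avariablebis$-children and every $\avartree$-node at depth $\geq 2$ has $\avarleft$-, $\avarselect$- and $\avarright$-children; \ref{prop:apS} holds because each $v_m$ still satisfies $\complete{j}$ (only $\treeval$ at its root's children was changed, which is harmless by the observation above); \ref{prop:apZ} holds because $v_0$ has all its $\avartree$-children labelled $\lnot\treeval$; and \ref{prop:apU}, \ref{prop:apC} hold because, by construction and the definition of $\nbexp{\cdot}{1}$, $v_m$ encodes exactly $m$. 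Then \Cref{lemma:tower-hardness-inductive} yields $\amodel_{j+1},\aroot_{j+1} \models \complete{j+1}$, while (ii) for $\amodel_{j+1}$ is clear by construction; the lemma follows as the case $j \geq 2$. The only mildly delicate parts are the ``$\treeval$-at-the-children-is-free'' observation and keeping the depth shift straight ($\avartree$-nodes at level $i < j$ encode numbers through $\treeval$ on level-$(i{+}1)$ nodes, whereas level-$j$ nodes encode through $\avarprop_1,\dots,\avarprop_n$); everything else has been reduced by \Cref{lemma:tower-hardness-inductive} to routine structural bookkeeping.
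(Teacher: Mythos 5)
Your proposal is correct and follows essentially the same route as the paper's own proof: an inductive construction that hangs $\amapter(j{+}1,n)$ copies of the level-$j$ model under a fresh root, resets $\treeval$ on the copies' level-one $\avartree$-children so that the $m$-th copy spells out $m$ in binary (legitimate because $\complete{j}$ and $\init{j}$ never test $\treeval$ at depth $0$ or $1$, an observation the paper uses implicitly and you make explicit), and then verifies $\init{j+1}$ together with \ref{prop:apS}, \ref{prop:apZ}, \ref{prop:apU}, \ref{prop:apC}, \ref{prop:apA} before invoking \Cref{lemma:tower-hardness-inductive}; the only organizational difference is that you carry all $\Aux$-children in a strengthened induction hypothesis, whereas the paper assumes minimal copies and grafts fresh $\Aux$-leaves onto the top two levels. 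One small repair: in your base case the root of $\amodel_1$ must also be given $\avariable$- and $\avariablebis$-children (not only the $\avarleft$-, $\avarselect$-, $\avarright$-children you list), since clause (ii) of your induction hypothesis demands it and it is exactly what guarantees \ref{prop:apA} when that root becomes a child of $\aroot_{2}$ at the next stage.
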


\begin{proof}
  Let $j \geq 2$.
  By induction on $j$, we suppose that $\init{j-1} \land \complete{j-1}$ is satisfiable (we already treated the base case for $j=1$ in \Cref{A:lemma:complete-one-satisfiable}).
  Let us consider $\aworld_0,\dots,\aworld_{\amapter(j,n)-1}$ distinct worlds.
  By the induction hypothesis, we can construct $\amapter(j,n)$ models $\amodel_i = \triple{\worlds_i}{\arelation_i}{\avaluation_i}$ ($i \in \interval{0}{\amapter(j,n)-1}$),
  so that $\aworld_i \in \worlds_i$ and $\amodel_i,\aworld_i \models \init{j-1} \land \complete{j-1}$.
  W.l.o.g. we can assume, for each two disjoint $i,j \in \interval{0}{\amapter(j,n)-1}$, $\worlds_i \cap \worlds_j = \emptyset$.
  Similarly, we can assume that each $\amodel_i$ is minimal, i.e. for every $\amodel' \sqsubseteq \amodel_i$ different from $\amodel'$, $\amodel',\aworld_i \not\models \init{j-1} \land \complete{j-1}$.
  This implies that $\aworld_i$ does not have any $\Aux$-children, and every $\avartree$-node in $\arelation_i(\aworld_i)$ does not have $\{\avarleft,\avarselect,\avarright\}$-children (as these two properties are not guaranteed by \ref{prop:apA}).

  Let $\aworld$ be a fresh world not appearing in the aforementioned models.
  Similarly, for every $i \in \interval{0}{\amapter(j,n)-1}$, let $\aworld_i^\avariable$ and $\aworld_i^\avariablebis$ be fresh worlds.
  Lastly, we also introduce, for every world $\overline{\aworld} \in \arelation_i(\aworld_i)$, three (distinct) new worlds $\aworld_{\overline{\aworld}}^\avarleft$, $\aworld_{\overline{\aworld}}^\avarselect$ and $\aworld_{\overline{\aworld}}^\avarright$.

  Then, let us consider the model $\amodel = \triple{\worlds}{\arelation}{\avaluation}$ defined as follows:
  \begin{enumerate}
    \item $\worlds \egdef \{\aworld\} \cup \worlds_i \cup
    \{ {\aworld_i^\avariable}, {\aworld_i^\avariablebis} \mid i \in \interval{0}{\amapter(j,n)-1} \} \cup
    \{ {\aworld_{\overline{\aworld}^\avarleft}},
    {\aworld_{\overline{\aworld}^\avarselect}},
    {\aworld_{\overline{\aworld}^\avarright}}, \mid i \in
    \interval{0}{\amapter(j,n)-1}, \overline{\aworld} \in \arelation_i(\aworld_i) \}$
    \item $\arelation \egdef
    \begin{aligned}[t]
      & \{\pair{\aworld}{\aworld_0},\dots,\pair{\aworld}{\aworld_{\amapter(j,n)-1}}\} \cup \textstyle\bigcup_{i \in \interval{0}{\amapter(j,n)-1}} \arelation_i \cup
      \{ \pair{\aworld_i}{\aworld_i^\avariable},\pair{\aworld_i}{\aworld_i^\avariablebis} \mid i \in \interval{0}{\amapter(j,n)-1} \}\\
      & \cup \{ \pair{\overline{\aworld}}{\aworld_{\overline{\aworld}^\avarleft}}, \pair{\overline{\aworld}}{\aworld_{\overline{\aworld}^\avarselect}},
      \pair{\overline{\aworld}}{\aworld_{\overline{\aworld}^\avarright}}, \mid i \in
      \interval{0}{\amapter(j,n)-1}, \overline{\aworld} \in \arelation_i(\aworld_i) \}
    \end{aligned}$
    \item $\avaluation$ is such that
    \begin{itemize}
      \item for every $i \in
      \interval{0}{\amapter(j,n)-1}$, $\avarprop \in \AP$ and every $\aworld' \in \arelation_i^2(\aworld_i)$, $\aworld' \in \avaluation(\avarprop)$ if and only if $\aworld' \in \avaluation_i(\avarprop)$. Hence, w.r.t. $\pair{\amodel}{\aworld}$, the evaluations w.r.t. worlds in $\arelation_i^3(\aworld) \cap \worlds_i$
      is unchanged compared to the one in $\pair{\amodel_i}{\aworld_i}$.
      \item For every $i \in \interval{0}{\amapter(j,n)-1}$ and every $\aworld' \in \arelation_i(\aworld_i)$, $\aworld' \in \avaluation(\treeval)$ if and only if w.r.t. $\pair{\amodel_i}{\aworld_i}$, the $\nb{\aworld'}$-bit in the binary representation of $i$ is $1$. Notice that this will lead to $\nb{\aworld_i} = i$.
      \item For every $i \in \interval{0}{\amapter(j,n)-1}$ and $\aaux \in \Aux$, $\aworld_i^\avariable \in \avaluation(\aaux)$ if and only if $\aaux = \avariable$. Similarly,
      $\aworld_i^\avariablebis \in \avaluation(\aaux)$ if and only if $\aaux = \avariablebis$.
      Thus, every $\aworld_i^\avariable$ is a $\avariable$-node, whereas every $\aworld_i^\avariablebis$ is a $\avariablebis$-node.
      \item For every $\aaux \in \Aux$, $\aworld \not \in \avaluation(\aaux)$ and for every $i \in \interval{0}{\amapter(j,n)-1}$, $\aworld_i\not \in \avaluation(\aaux)$.
      Moreover, for every $\overline{\aworld} \in \arelation_i(\aworld_i)$, $\overline{\aworld} \not\in \avaluation(\aaux)$ (notice that, by minimality, $\overline{\aworld}$ is a $\avartree$-node also in $\amodel_i$).
      Thus, $\aworld$, $\aworld_i$ and $\overline{\aworld}$ (as above) are all $\avartree$-nodes.
      \item For every $\aaux \in \Aux$, $\aworld \not \in \avaluation(\aaux)$ and for every $i \in \interval{0}{\amapter(j,n)-1}$ and $\overline{\aworld} \in \arelation_i(\aworld_i)$,
      (1) $\aworld_{\overline{\aworld}}^\avarleft \in \avaluation(\aaux)$ iff $\aaux = \avarleft$,
      (2) $\aworld_{\overline{\aworld}}^\avarselect \in \avaluation(\aaux)$ iff $\aaux = \avarselect$,
      (3) $\aworld_{\overline{\aworld}}^\avarright \in \avaluation(\aaux)$ iff $\aaux = \avarright$.
      Hence, every $\aworld_{\overline{\aworld}}^\avarleft$, $\aworld_{\overline{\aworld}}^\avarselect $ and $\aworld_{\overline{\aworld}}^\avarright$ (as above) is a $\avarleft$-node, $\avarselect$-node and $\avarright$-node, respectively.
    \end{itemize}
  \end{enumerate}
  We can check that $\pair{\amodel}{\aworld}$ satisfies $\init{j}$ as well as \ref{prop:apS}, \ref{prop:apZ},
  \ref{prop:apU}, \ref{prop:apC} and \ref{prop:apA}. Thus, by \Cref{lemma:tower-hardness-inductive}, $\amodel,\aworld \models \init{j} \land \complete{j}$.
\end{proof}

\subsection{Definitions and Proofs of \Cref{subsection:tiling-grid}}
We develop the material from~\Cref{subsection:tiling-grid}, providing all the necessary details.
As usual, in the following we let $\amodel = \triple{\worlds}{\arelation}{\avaluation}$ be a finite forest and consider one of its worlds $\aworld \in \worlds$.

Let $k \geq 2$ and let $\pair{\cTT}{\atile}$ be an instance of
$\tiling_k$, where $\cTT = \triple{\cT}{\cH}{\cV}$ and $\atile \in \cT$.
In the following, we define a formula $\atiling{k}{\cTT,\atile}$ such that the following lemma holds.

\begin{lemma}\label{lemma:tiling_correct_encoding}
$\pair{\cTT}{\atile}$ as a solution for $\tiling_k$ if and only if
the formula $\atiling{k}{\cTT,\atile}$ is satisfiable.
\end{lemma}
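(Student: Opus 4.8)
The plan is to build the formula $\atiling{k}{\cTT,\atile}$ by extending the construction of $\complete{k}$ from \Cref{subsec:nonelemchildren}, and then to prove \Cref{lemma:tiling_correct_encoding} by matching the structure of the satisfying models of $\atiling{k}{\cTT,\atile}$ with solutions of the tiling instance, essentially the same bookkeeping as in the correctness proofs for $\complete{j}$ (\Cref{lemma:tower-hardness-base-case} and \Cref{lemma:tower-hardness-inductive}), adapted to carry two coordinates instead of one.

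First I would fix the encoding. A pointed forest $\pair{\amodel}{\aworld}$ represents the grid $\interval{0}{\amapter(k,n)-1}^2$ by having each $\avartree$-child $\aworld'$ of $\aworld$ encode a pair $(\nbexp{\aworld'}{\cH},\nbexp{\aworld'}{\cV})$, where the two numbers are recorded on the $\avartree$-children of $\aworld'$ via two fresh propositions $\treeval_\cH$ and $\treeval_\cV$ exactly as $\treeval$ was used in $\complete{k}$; consequently each $\aworld'$ needs only $\amapter(k-1,n)$ children and satisfies $\complete{k-1}$ (we reuse $\complete{k-1}$ and the whole ``kit'' of formulae $\fork{\aaux}{\aauxbis}{i}{j}$, $\less{\aaux}{\aauxbis}{i}{j}$, $\successor{\aaux}{\aauxbis}{j}$ verbatim, since for $j < k$ they only ever talk about a single number via $\treeval$; for the top level $j=k$ we duplicate them into an $\cH$-version and a $\cV$-version, substituting $\treeval_\cH$ or $\treeval_\cV$ for $\treeval$, and likewise split $\pU{k}$, $\pC{k}$, $\pfirst{}{}$). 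The formula $\atiling{k}{\cTT,\atile}$ is then the conjunction of $\init{k}$, the updated characterisation of the grid (with $\pU{k}$ replaced by (\apU$_{\cTT,k}$): for distinct $\avartree$-children $\aworld_1,\aworld_2$, $\nbexp{\aworld_1}{\cH} \neq \nbexp{\aworld_2}{\cH}$ or $\nbexp{\aworld_1}{\cV}\neq\nbexp{\aworld_2}{\cV}$; with completeness enforced by a suitable $\cH$/$\cV$-variant of $\pC{k}$ stating every pair $<\amapter(k,n)-1$ in both coordinates has a successor pair present; note surjectivity onto the whole grid follows from injectivity plus the $\amapter(k,n)^2$ children forced by these constraints), a conjunct $\HMBox{\avartree}(\bigvee_{\atile'\in\cT}(\atile' \land \bigwedge_{\atile''\neq\atile'}\lnot\atile''))$ forcing each grid node to carry exactly one tile type, the initial-tile condition \ref{tiling_c:1} saying there is a $\avartree$-child with $\nbexp{}{\cH}=\nbexp{}{\cV}=0$ satisfying $\atile$, and the matching conditions \ref{tiling_c:2}. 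For the horizontal constraint one writes a formula $\lnot(\true \separate (\fork{\avariable}{\avariablebis}{1}{k}^{\cH} \land \successor{\avariable}{\avariablebis}{k}^{\cH} \land \equivalent{\avariable}{\avariablebis}{1}{k}^{\cV} \land \bigvee_{(\atile_1,\atile_2)\notin\cH}(\atnom{\avariable}{1}\atile_1 \land \atnom{\avariablebis}{1}\atile_2)))$, i.e.\ it is not possible to find two children that are horizontally adjacent, vertically equal, and carry a non-matching pair of tiles; the vertical constraint is symmetric, swapping $\cH$ and $\cV$.

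Then I would prove \Cref{lemma:tiling_correct_encoding} in two directions. For the ``only if'' direction, given a solution $\tau$, I construct a canonical model along the lines of \Cref{A:lemma:complete-j-satisfiable}: take $\aworld$ with $\amapter(k,n)^2$ $\avartree$-children, one per grid position $(a,b)$, labelled by $\tau(a,b)$, each child recursively built (by the induction underlying $\complete{k-1}$) so that its $\avartree$-children encode $a$ via $\treeval_\cH$ and $b$ via $\treeval_\cV$ simultaneously, and with all the auxiliary $\Aux$-nodes ($\avariable,\avariablebis,\avarleft,\avarselect,\avarright$) attached as required by $\pA$; verifying that this model satisfies each conjunct of $\atiling{k}{\cTT,\atile}$ is routine, using \Cref{lemma:tower-hardness-inductive} for the grid part and the correctness lemmas for $\fork{}{}{}$, $\less{}{}{}$, $\successor{}{}$ (their $\cH$/$\cV$-variants being proved exactly as \Cref{lemma:forkij}, \Cref{lemma:lessij}, \Cref{lemma:succj}, since substituting $\treeval_\cH$ or $\treeval_\cV$ for $\treeval$ changes nothing structurally). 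For the ``if'' direction, from a model $\pair{\amodel}{\aworld} \models \atiling{k}{\cTT,\atile}$ I read off $\tau$: by the grid-characterisation conjuncts (via \Cref{lemma:tower-hardness-inductive} adapted to two coordinates) the map $\aworld' \mapsto (\nbexp{\aworld'}{\cH},\nbexp{\aworld'}{\cV})$ is a bijection from the $\avartree$-children of $\aworld$ onto $\interval{0}{\amapter(k,n)-1}^2$, each child carries a unique tile type, and I set $\tau(a,b)$ to be that type; conditions \ref{tiling_c:1} and \ref{tiling_c:2} transfer from the corresponding conjuncts using the correctness of $\successor{}{}{}$ and $\equivalent{}{}{}{}$ together with \Cref{A:the-property-of-types} (which guarantees that the numbers encoded by a child are invariant under the submodels produced by $\separate$, so that the $\true\separate(\dots)$ formulae genuinely reason about the original encoding).

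The main obstacle is not any single step but the accumulation of routine adaptations: one must be careful that splitting $\treeval$ into $\treeval_\cH,\treeval_\cV$ is done \emph{only} at the top level $j=k$ and that the recursive formulae invoked at depth $\geq 1$ (namely $\complete{k-1}$ and everything built on top of it) are still the single-coordinate versions using plain $\treeval$ — otherwise the inductive definitions no longer match. The one genuinely new point requiring thought is \emph{surjectivity onto the full grid}: with two coordinates, the analogue of $\pC{j}$ must ensure not merely that successors exist coordinatewise but that the set of encoded pairs is downward-closed in a way that, combined with injectivity (\apU$_{\cTT,k}$) and the forced child count, yields all of $\interval{0}{\amapter(k,n)-1}^2$; the cleanest route is to mimic $\pC{k}$ one coordinate at a time (first force every $\cH$-value to occur, then within each fixed $\cH$-value force every $\cV$-value), and to argue the counting exactly as in the one-dimensional case. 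Everything else — the EF-style or direct model surgery, the transfer of matching constraints — is a faithful replay of the proofs already established for $\complete{j}$ in the preceding subsections, and I would present it at that level of detail, pointing to \Cref{lemma:tower-hardness-inductive}, \Cref{A:the-property-of-types} and \Cref{A:lemma:complete-j-satisfiable} rather than re-deriving them.
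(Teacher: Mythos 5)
Your proposal is correct and follows essentially the same route as the paper: two fresh propositions $\treeval_\cH,\treeval_\cV$ used only at the top level on top of $\complete{k-1}$, $\cH$/$\cV$-duplicates of the fork/successor/equality kit, the updated uniqueness and completeness conditions for pairs, tiling constraints expressed via the $\lnot(\true \separate (\cdots))$ pattern with a successor in one coordinate and equality in the other, and a two-directional argument (canonical model built as for $\complete{k}$'s satisfiability, and reading $\tau$ off a satisfying model using the invariance of encoded numbers under submodels). The only point to pin down, which you correctly flag, is that the two-coordinate analogue of $\pC{k}$ must require the coordinatewise successor to \emph{preserve} the other coordinate (as the paper's $\cH$- and $\cV$-variants do), which together with the $(0,0)$ node and injectivity yields surjectivity onto the whole grid exactly as in your sketched column-by-column sweep.
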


Recall that a solution for $\pair{\cTT}{\atile}$ w.r.t. $\tiling_k$ is a map
$\tau: \interval{0}{\amapter(k,n)-1}\!\times\!\interval{0}{\amapter(k,n)-1}\!\to\!\cT$
satisfying~\ref{tiling_c:1} and~\ref{tiling_c:2}.
W.l.o.g.\ we assume $\cT$ to be a set of atomic propositions, disjoint from
$\{\avarprop_1,\dots,\avarprop_n,\treeval\} \cup \Aux$ used in the definition of $\complete{j}$.
Let us first describe how to represent a grid $\interval{0}{\amapter(k,n)-1}^2$ in the pointed forest
$\pair{\amodel}{\aworld}$.
We use the same ideas needed in order to define $\complete{k}$, but with some minor modifications.
As previously stated, if $\amodel,\aworld \models \complete{k}$ then
given a $\avartree$-node $\aworld' \in \arelation(\aworld)$,
the number $\nb{\aworld'} \in \interval{0}{\amapter(k,n)-1}$ is encoded using the $\avartree$-children of $\aworld'$, where the numbers encoded by these children represent positions
in the binary encoding of $\nb{\aworld'}$.
Instead of being a single number, a position in the grid is a pair of numbers  $(h,v) \in \interval{0}{\amapter(k,n)-1}^2$.
Hence, in a model $\pair{\amodel}{\aworld}$ satisfying $\atiling{k}{\cTT,\atile}$
we require that $\aworld' \in \arelation(\aworld)$ encodes two numbers $\nbexp{\aworld'}{\cH}$ and $\nbexp{\aworld'}{\cV}$, and say that $\aworld'$ encodes the position $(h,v)$ if and only if $\nbexp{\aworld'}{\cH} = h$ and $\nbexp{\aworld'}{\cV} = v$.
Since both numbers are from $\interval{0}{\amapter(k,n)-1}$, the same amount of $\avartree$-children as in $\complete{k}$ can be used in order to encode both $\nbexp{\aworld'}{\cH}$ and $\nbexp{\aworld'}{\cV}$.
Thus, we rely on the formula $\complete{k-1}$  to force $\aworld'$ to have the correct amount of $\avartree$-children, by requiring it to hold in $\pair{\amodel}{\aworld'}$.
Similarly to what is done previously for $\complete{j}$ ($j \geq 2$), we encode the numbers $\nbexp{\aworld'}{\cH}$ and $\nbexp{\aworld'}{\cV}$
by using the truth value, on the $\avartree$-children of $\aworld'$, of two new atomic propositions $\treeval_\cH$ and $\treeval_\cV$, respectively.
Then, we  use similar formulae to $\pZ{k}$, $\pU{k}$ and $\pC{k}$ in order to state that $\aworld$ witnesses exactly one child for each position in the grid. Once the grid is encoded, the tiling conditions can be enforced rather easily.

We introduce the formula $\agrid{k}{\cTT}$ that characterises the set of models encoding the $\interval{0}{\amapter(k,n)-1}^2$ grid.
A model $\pair{\amodel = \triple{\worlds}{\arelation}{\avaluation}}{\aworld}$ satisfying
$\agrid{k}{\cTT}$ is such that:
\begin{description}
  \item[\desclabel{(\apZ$_{\cTT,k}$)}{A:tiling_enc:apZ}] One $\avartree$-node in $\arelation(\aworld)$ encodes the position $\pair{0}{0}$, i.e. there is an $\avartree$-node $\tilde\aworld{\in}\,\arelation(\aworld)$ s.t.\
    $\nbexp{\tilde\aworld}{\cH} = \nbexp{\tilde\aworld}{\cV} = 0$;
  \item[\desclabel{(\apU$_{\cTT,k}$)}{A:tiling_enc:apU}] for all two distinct $\avartree$-nodes $\aworld_1,\aworld_2 \in \arelation(\aworld)$, $\nbexp{\aworld_1}{\cH} \neq \nbexp{\aworld_2}{\cH}$ or  $\nbexp{\aworld_1}{\cH} \neq \nbexp{\aworld_2}{\cH}$;
  \item[\desclabel{(\apC$_{\cTT,k}$)}{A:tiling_enc:apC}] for every $\avartree$-node $\aworld_1 \in \arelation(\aworld)$,
  \begin{itemize}
    \item if $\nbexp{\aworld_1}{\cH} < \amapter(j,n)-1$ then $\nbexp{\aworld_2}{\cH} = \nbexp{\aworld_1}{\cH}+1$ and $\nbexp{\aworld_2}{\cV} = \nbexp{\aworld_1}{\cV}$, for some $\avartree$-node $\aworld_2 \in \arelation(\aworld)$;
    \item if $\nbexp{\aworld_1}{\cV} < \amapter(j,n)-1$ then $\nbexp{\aworld_2}{\cV} = \nbexp{\aworld_1}{\cV}+1$ and $\nbexp{\aworld_2}{\cH} = \nbexp{\aworld_1}{\cH}$, for some $\avartree$-node $\aworld_2 \in \arelation(\aworld)$;
  \end{itemize}
  \item[\desclabel{(\ainit/\apS/\apA)}{A:tiling_enc:apS}] $\pair{\amodel}{\aworld}$ satisfies $\init{k}$,  $\pS{k}$ and $\pA$;
\end{description}
It is easy to see that, with these conditions, $\pair{\amodel}{\aworld}$ correctly encodes the grid.
The definition of $\agrid{k}{\cTT}$ follows rather closely the definition of $\complete{j}$.
It is defined as
\[
    \pZT{k}{\cTT} \land \pUT{k}{\cTT} \land \pCTBB{k}{\cTT} \land \init{k} \land \pS{k} \land \pA
\]
where each conjunct expresses the homonymous property above.
In order to define the first three conjuncts of $\agrid{k}{\cTT}$  (hence completing its definition)
we start by defining the formulae $\bequivalent{\aaux}{\aauxbis}{k}{\heapdim}$ and $\bsuccessor{\aaux}{\aauxbis}{k}{\heapdim}$, where $\heapdim \in \{\cH,\cV\}$. Similarly to $\equivalent{\aaux}{\aauxbis}{1}{k}$ and $\successor{\aaux}{\aauxbis}{k}$,
Given a model $\pair{\amodel = \triple{\worlds}{\arelation}{\avaluation}}{\aworld}$ satisfying $\fork{\aaux}{\aauxbis}{1}{k}$, and the two $\avartree$-nodes $\aworld_{\aaux},\aworld_{\aauxbis} \in \arelation(\aworld)$
corresponding to the nominals $\aaux$ and $\aauxbis$, respectively,
\begin{itemize}
\item $\bequivalent{\aaux}{\aauxbis}{k}{\heapdim}$ states that $\nbexp{\aworld_{\aaux}}{\heapdim} = \nbexp{\aworld_{\aauxbis}}{\heapdim}$;
\item $\bsuccessor{\aaux}{\aauxbis}{k}{\heapdim}$ states that $\nbexp{\aworld_{\aauxbis}}{\heapdim} = \nbexp{\aworld_{\aaux}}{\heapdim}+1$.
\end{itemize}
To encode $\bequivalent{\aaux}{\aauxbis}{k}{\heapdim}$ we simply require that for all two $\avartree$-children $\aworld_{\avariable} \in \arelation(\aworld_{\aaux})$ and $\aworld_{\avariablebis} \in \arelation(\aworld_{\aauxbis})$, if
$\nb{\aworld_{\avariable}} = \nb{\aworld_{\avariablebis}}$ then $\aworld_{\avariable}$ and $\aworld_{\avariablebis}$
agree on the satisfaction of $\treeval_\heapdim$. In formula:
\[
\begin{aligned}
  \bequivalent{\aaux}{\aauxbis}{k}{\heapdim}
  \egdef
  \lnot \big( \true \separate (\fork{\avariable}{\avariablebis}{2}{k}
  \land \atnom{\aaux}{1}{\HMDiamond{\avartree} \Diamond \avariable}
  \land \atnom{\aauxbis}{1}{\HMDiamond{\avartree} \Diamond \avariablebis}
   \land \equivalent{\avariable}{\avariablebis}{2}{k}
  \land \lnot (\atnom{\avariable}{2}\treeval_\heapdim \iff \atnom{\avariablebis}{2}\treeval_\heapdim)
  )\big).
\end{aligned}
\]

\begin{lemma}\label{A:lemma:bequivalent}
  Let $\aaux \neq \aauxbis \in \Aux$ and
  $k \geq 2$.
  Suppose $\amodel,\aworld \models \init{k} \land \fork{\aaux}{\aauxbis}{1}{k}$.

  $\amodel,\aworld \models \bequivalent{\aaux}{\aauxbis}{k}{\heapdim}$ if and only if
  there are two distinct $\avartree$-nodes $\aworld_{\aaux},\aworld_{\aauxbis} \in \arelation^i(\aworld)$ such that $\aworld_{\aaux}$ corresponds to the nominal $\aaux$, $\aworld_{\aauxbis}$ corresponds to the nominal $\aauxbis$ and
  $\nbexp{\aworld_{\aaux}}{\heapdim} = \nbexp{\aworld_{\aauxbis}}{\heapdim}$.
\end{lemma}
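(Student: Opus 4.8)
The plan is to prove \Cref{A:lemma:bequivalent} by adapting the correctness proof of $\less{\aaux}{\aauxbis}{i}{j}$ (\Cref{lemma:lessij}) and, more directly, of $\equivalent{\aaux}{\aauxbis}{i}{j}$, since the formula $\bequivalent{\aaux}{\aauxbis}{k}{\heapdim}$ is structurally almost identical to $\leftpred{\aaux}{\aauxbis}{1}{k}$ (indeed it is exactly the ``$\neg(\true \separate (\dots))$'' shape), with the single twist that the bit proposition $\treeval$ is replaced by $\treeval_\heapdim$. First I would fix $\amodel = \triple{\worlds}{\arelation}{\avaluation}$ and $\aworld$ with $\amodel,\aworld \models \init{k} \land \fork{\aaux}{\aauxbis}{1}{k}$, and invoke \Cref{lemma:forkij} to obtain the two distinct $\avartree$-children $\aworld_{\aaux}, \aworld_{\aauxbis}$ of $\aworld$ corresponding to the nominals $\aaux,\aauxbis$, together with the fact that $\pair{\amodel}{\aworld_{\aaux}}$ and $\pair{\amodel}{\aworld_{\aauxbis}}$ both satisfy $\completeplus{k-1}$. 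The key arithmetic fact is the trivial one: two $n$-ish-bit numbers $x,y$ satisfy $x = y$ iff for every bit position $b$, the $b$-th bit of $x$ equals the $b$-th bit of $y$; here the bit positions are themselves indexed by the $\avartree$-children of $\aworld_{\aaux}$ (resp. $\aworld_{\aauxbis}$), which by $\complete{k-1}$ are $\amapter(k-1,n)$ children implicitly ordered by the number they encode, and the $b$-th bit is the truth value of $\treeval_\heapdim$ on the child encoding $b$.

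For the right-to-left direction, assuming $\nbexp{\aworld_{\aaux}}{\heapdim} = \nbexp{\aworld_{\aauxbis}}{\heapdim}$, I would show $\amodel,\aworld \models \bequivalent{\aaux}{\aauxbis}{k}{\heapdim}$ by contradiction: if the inner $\true \separate (\dots)$ held, there would be a submodel $\amodel' \sqsubseteq \amodel$ with $\amodel',\aworld \models \fork{\avariable}{\avariablebis}{2}{k}$, with $\aworld_{\aaux}$ (resp. $\aworld_{\aauxbis}$) having a $\avartree$-child reachable by a path of $\avartree$-nodes leading in two steps from $\aworld$ to the worlds corresponding to nominals $\avariable,\avariablebis$ at depth $2$, with $\equivalent{\avariable}{\avariablebis}{2}{k}$ (so those two children encode the same number $b$ by \Cref{lemma:lessij} plus \Cref{A:the-property-of-types}), yet $\treeval_\heapdim$ distinguishing them. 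Since $\amodel' \sqsubseteq \amodel$ does not change any valuation, and $\fork{\avariable}{\avariablebis}{2}{k}$ forces $\complete{k-1}$ on the two target children so their encoded numbers are preserved (again by \Cref{A:the-property-of-types}), this would contradict $\nbexp{\aworld_{\aaux}}{\heapdim} = \nbexp{\aworld_{\aauxbis}}{\heapdim}$, which says precisely that the $b$-th bit agrees. For the left-to-right direction I would run this argument forwards: if the two numbers differ, they differ in some bit $b$; one then \emph{constructs} the witnessing submodel $\amodel'$ by pruning $\arelation$ down to the two depth-$1$ paths to $\aworld_{\aaux},\aworld_{\aauxbis}$, then from each of those to the $\avartree$-child encoding $b$ (tagging them as the $\avariable$- resp. $\avariablebis$-nominal via the $\Aux$-children guaranteed by $\completeplus{k-1}$), keeping the full subtrees below those children so that $\complete{k-2}$ (hence the correct encoding of $b$) is preserved, and checking that $\amodel'$ satisfies $\fork{\avariable}{\avariablebis}{2}{k} \land \atnom{\aaux}{1}\HMDiamond{\avartree}\Diamond\avariable \land \atnom{\aauxbis}{1}\HMDiamond{\avartree}\Diamond\avariablebis \land \equivalent{\avariable}{\avariablebis}{2}{k} \land \lnot(\atnom{\avariable}{2}\treeval_\heapdim \iff \atnom{\avariablebis}{2}\treeval_\heapdim)$.

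The main obstacle is bookkeeping rather than conceptual: one must verify that the pruned submodel $\amodel'$ genuinely satisfies $\fork{\avariable}{\avariablebis}{2}{k}$, which (unwinding \Cref{lemma:forkij}) requires that $\aworld$ has \emph{exactly} two $\avartree$-children in $\amodel'$, that both witnessing paths have $\avartree$-children only where allowed, and that $\completeplus{k-2}$ holds at the two target worlds; the subtle point is that the two target children of $\aworld_{\aaux}$ and $\aworld_{\aauxbis}$ encoding $b$ sit at depth $2$ from $\aworld$, so the relevant preservation statement is \Cref{A:the-property-of-types} applied with $i = 2$ to guarantee that restricting to a submodel satisfying $\complete{k-2}$ keeps $\nbexp{\cdot}{k-2}$ intact, and dually that the $\avariable$/$\avariablebis$ second-order ``quantification'' via $\separate$ does not accidentally delete the bit-encoding substructure. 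I would handle this exactly as in the proof of \Cref{lemma:lessij}, reusing \Cref{A:the-property-of-types} and the inductive hypotheses on $\fork{\avariable}{\avariablebis}{2}{k}$ and $\equivalent{\avariable}{\avariablebis}{2}{k}$ verbatim, so the proof reduces to citing those lemmata and remarking that replacing $\treeval$ by $\treeval_\heapdim$ throughout changes nothing in the argument.
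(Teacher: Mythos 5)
Your proposal is correct and follows essentially the same route as the paper: invoke the fork lemma to get the two depth-$1$ nominal worlds satisfying $\completeplus{k-1}$, reduce equality of $\nbexp{\cdot}{\heapdim}$ to bitwise agreement indexed by the $\avartree$-children, and for each bit position build the pruned submodel witnessing $\fork{\avariable}{\avariablebis}{2}{k} \land \equivalent{\avariable}{\avariablebis}{2}{k}$, using \Cref{A:the-property-of-types} to see that the encoded position is preserved under the restriction and \Cref{lemma:lessij} for the inductively defined $\equivalent{\avariable}{\avariablebis}{2}{k}$. The paper's proof is exactly this argument (presented as an adaptation of the $\leftpred{\aaux}{\aauxbis}{i}{j}$ case), differing from yours only in which direction is phrased contrapositively, so no further comment is needed.
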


\begin{proof}
This proof is similar to the one of \Cref{A:lemma:left-and-right} (II).
Since $\amodel,\aworld \models \init{k} \land \fork{\aaux}{\aauxbis}{1}{k}$, by \Cref{lemma:forkij}
there are two worlds $\aworld_{\aaux}$ and $\aworld_{\aauxbis}$ in $\arelation(\aworld)$ corresponding to the nominals (for the depth $1$) $\aaux$ and $\aauxbis$, respectively.

($\Rightarrow$):
Suppose $\amodel,\aworld \models \bequivalent{\aaux}{\aauxbis}{k}{\heapdim}$.
Then, for every $\amodel' = \triple{\worlds}{\arelation_1}{\avaluation}$,
if $\amodel' \sqsubseteq \amodel$ and
$\amodel',\aworld \models \fork{\avariable}{\avariablebis}{2}{k}
\land \atnom{\aaux}{1}{\HMDiamond{\avartree} \Diamond \avariable}
\land \atnom{\aauxbis}{1}{\HMDiamond{\avartree} \Diamond \avariablebis}
 \land \equivalent{\avariable}{\avariablebis}{2}{k}$
 then $\amodel',\aworld \models \atnom{\avariable}{2}\treeval_\heapdim \iff \atnom{\avariablebis}{2}\treeval_\heapdim$.
 Now, from $\amodel,\aworld \models \fork{\aaux}{\aauxbis}{1}{k}$ we have
 $\amodel,\aworld_{\aaux} \models \complete{k-1}$ and $\amodel,\aworld_{\aauxbis} \models \complete{k-1}$ (notice that then, all the worlds in $\arelation(\aworld_{\aaux}) \cup \arelation(\aworld_{\aauxbis})$ satisfy $\complete{k-2}$).
 Thus, let us consider any two worlds $\aworld_\avariable$ and $\aworld_\avariablebis$ such that
 \begin{itemize}
  \item $\aworld_\avariable \in \arelation(\aworld_{\aaux})$ and $\aworld_\avariablebis \in \arelation(\aworld_{\aauxbis})$;
  \item $\nbexp{\aworld_{\avariable}}{k-1} = \nbexp{\aworld_{\avariablebis}}{k-1}$.
 \end{itemize}
 We show that $\amodel,\aworld_{\avariable} \models \treeval_{\heapdim}$ if and only if $\amodel,\aworld_{\avariablebis} \models \treeval_{\heapdim}$,
 thus concluding that $\nbexp{\aworld_{\aaux}}{\heapdim} = \nbexp{\aworld_\aauxbis}{\heapdim}$.
 Let us consider the finite forest $\amodel' = \triple{\worlds}{\arelation_1}{\avaluation}$
 where $\arelation_1$ is obtained from $\arelation$ by removing every edge $\pair{\aworld_b}{\aworld'} \in \arelation$ where $b \in \{\aaux,\aauxbis\}$, and $\aworld'$ is a $\avartree$-node different from $\aworld_\avariable$ and $\aworld_\avariablebis$.
 We also remove the edge $\pair{\aworld_\avariable}{\aworld'} \in \arelation$ where $\aword'$ is the only $\avariablebis$-child of $\aworld_\avariable$, as well as $\pair{\aworld_{\avariablebis}}{\aworld''}$ where $\aworld''$ is the only $\avariable$-child of $\aworld_\avariablebis$.
 The existence of these nodes is guaranteed by $\amodel,\aworld_{\aaux} \models \complete{k-1}$ and $\amodel,\aworld_{\aauxbis} \models \complete{k-1}$.
 By \Cref{lemma:forkij} we have $\amodel',\aworld \models \fork{\avariable}{\avariablebis}{2}{k}$, where $\aworld_{\avariable}$ corresponds to the nominal (at depth $2$) $\avariable$, whereas $\aworld_{\avariablebis}$
 corresponds to the nominal (at depth $2$) $\avariablebis$.
 Moreover, \Cref{lemma:forkij} ensures that $\amodel,\aworld_\avariable \models \complete{k-2}$ and $\amodel,\aworld_\avariablebis \models \complete{k-2}$, hence
 by \Cref{A:the-property-of-types} we conclude that $\aworld_{\avariable}$ (resp. $\aworld_{\avariablebis}$) encodes the same number w.r.t. $\pair{\amodel}{\aworld}$
 and $\pair{\amodel'}{\aworld}$.
 Again from the definition of $\arelation_1$ it is easy to see that $\amodel',\aworld \models \atnom{\aaux}{1}{\HMDiamond{\avartree} \Diamond \avariable}
 \land \atnom{\aauxbis}{1}{\HMDiamond{\avartree} \Diamond \avariablebis}$.
 Lastly, by hypothesis on $\aworld_\avariable$ and $\aworld_\avariablebis$, together with \Cref{lemma:lessij} and $\equivalent{\avariable}{\avariablebis}{2}{k} \egdef \lnot (\less{\avariable}{\avariablebis}{2}{k} \lor \less{\avariablebis}{\avariable}{2}{k})$,
 we conclude that $\amodel',\aworld \models \equivalent{\avariable}{\avariablebis}{2}{k}$.
 Thus, by hypothesis, $\amodel',\aworld \models \atnom{\avariable}{2}\treeval_\heapdim \iff \atnom{\avariablebis}{2}\treeval_\heapdim$, concluding the proof.

 ($\Leftarrow$): This direction is proved analogously by essentially relying on \Cref{lemma:lessij} and \Cref{A:the-property-of-types}.
\end{proof}

The formula $\bsuccessor{\aaux}{\aauxbis}{k}{\heapdim}$ can be defined by slightly modifying the formula $\successor{\aaux}{\aauxbis}{k}$. We start by defining the formulae $\bleftpred{\aaux}{\aauxbis}{k}{\heapdim}$, $\bselectpred{\aaux}{\aauxbis}{k}{\heapdim}$ and $\brightpred{\aaux}{\aauxbis}{\heapdim}$
with semantics similar to $\leftpred{\aaux}{\aauxbis}{1}{k}$, $\selectpred{\aaux}{\aauxbis}{1}{k}$
and $\rightpred{\aaux}{\aauxbis}$, respectively, but where, for a given $\avartree$-node in $\arelation^2(\aworld)$,
we are interested in the satisfaction of $\treeval_\heapdim$ instead of $\treeval$. For example, the formula $\bselectpred{\aaux}{\aauxbis}{k}{\heapdim}$ is defined as
\[
\begin{aligned}
  \bselectpred{\aaux}{\aauxbis}{k}{\heapdim}\egdef
  \true \separate \big( \fork{\avariable}{\avariablebis}{2}{k} \land
  \atnom{\aaux}{1}\HMDiamond{\avartree}(\Diamond \avarselect \land \Diamond \avariable)\, \land \atnom{\aauxbis}{1}\HMDiamond{\avartree}(\Diamond \avarselect \land \Diamond \avariablebis) \land \equivalent{\avariable}{\avariablebis}{2}{k} \land \atnom{\avariable}{2}\lnot \treeval_\heapdim \land \atnom{\avariablebis}{2}\treeval_\heapdim
  \big)
\end{aligned}
\]
i.e. by replacing the two last conjuncts of $\selectpred{\aaux}{\aauxbis}{1}{k}$, $\atnom{\avariable}{2}\lnot \treeval $ and $\atnom{\avariablebis}{2}\treeval$ with $\atnom{\avariable}{2}\lnot \treeval_\heapdim$ and $\atnom{\avariablebis}{2}\treeval_\heapdim$, respectively.
Similarly, $\bleftpred{\aaux}{\aauxbis}{k}{\heapdim}$ is defined from $\leftpred{\aaux}{\aauxbis}{1}{k}$ by replacing the last conjunct of this formula, i.e.\ $\lnot (\atnom{\avariable}{2}\treeval \iff \atnom{\avariablebis}{2}\treeval)$, by
$\lnot (\atnom{\avariable}{2}\treeval_\heapdim \iff \atnom{\avariablebis}{2}\treeval_\heapdim)$.
Lastly, $\brightpred{\aaux}{\aauxbis}{\heapdim}$ is defined from $\rightpred{\aaux}{\aauxbis}$ by replacing every occurrence of $\treeval$ by $\treeval_\heapdim$.
The formula $\bsuccessor{\aaux}{\aauxbis}{k}{\heapdim}$ is then defined as follows:
\[
\begin{aligned}
\bsuccessor{\aaux}{\aauxbis}{k}{\heapdim} \egdef
\true \separate
  \big(
     \twonoms{\aaux}{\aauxbis}{1} \land \HMBox{\avartree}\lsrpartition{k-1} \land  \bleftpred{\aaux}{\aauxbis}{k}{\heapdim}\,{\land}\,\bselectpred{\aaux}{\aauxbis}{k}{\heapdim}\,{\land}\,
      \brightpred{\aaux}{\aauxbis}{\heapdim}
  \big).
\end{aligned}
\]

\begin{lemma}\label{A:lemma:bsuccessor}
  Let $\aaux \neq \aauxbis \in \Aux$ and
  $k \geq 2$.
  Suppose $\amodel,\aworld \models \init{k} \land \fork{\aaux}{\aauxbis}{1}{k}$.

  $\amodel,\aworld \models \bsuccessor{\aaux}{\aauxbis}{k}{\heapdim}$ if and only if
  there are two distinct $\avartree$-nodes $\aworld_{\aaux},\aworld_{\aauxbis} \in \arelation^i(\aworld)$ such that $\aworld_{\aaux}$ corresponds to the nominal $\aaux$, $\aworld_{\aauxbis}$ corresponds to the nominal $\aauxbis$ and
  $\nbexp{\aworld_{\aauxbis}}{\heapdim} = \nbexp{\aworld_{\aaux}}{\heapdim} + 1$.
\end{lemma}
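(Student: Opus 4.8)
The plan is to mimic, almost verbatim, the proof of \Cref{lemma:succj} (correctness of $\successor{\aaux}{\aauxbis}{k}$), observing that $\bsuccessor{\aaux}{\aauxbis}{k}{\heapdim}$ is obtained from $\successor{\aaux}{\aauxbis}{k}$ by the purely syntactic substitution $\treeval \gets \treeval_\heapdim$ inside the three auxiliary predicates, yielding $\bselectpred{\aaux}{\aauxbis}{k}{\heapdim}$, $\bleftpred{\aaux}{\aauxbis}{k}{\heapdim}$ and $\brightpred{\aaux}{\aauxbis}{\heapdim}$, while the scaffolding conjuncts $\twonoms{\aaux}{\aauxbis}{1}$, $\HMBox{\avartree}\lsrpartition{k-1}$, $\fork{\avariable}{\avariablebis}{2}{k}$ and $\equivalent{\avariable}{\avariablebis}{2}{k}$ are unchanged. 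First I would assume $\amodel,\aworld \models \init{k} \land \fork{\aaux}{\aauxbis}{1}{k}$; by \Cref{lemma:forkij} this gives exactly two distinct $\avartree$-children $\aworld_{\aaux},\aworld_{\aauxbis}$ of $\aworld$ corresponding to the nominals $\aaux,\aauxbis$, both satisfying $\completeplus{k-1}$. Consequently $\nbexp{\aworld_{\aaux}}{\heapdim}$ and $\nbexp{\aworld_{\aauxbis}}{\heapdim}$ are well defined through the truth values of $\treeval_\heapdim$ on the $\avartree$-children of $\aworld_{\aaux}$ and $\aworld_{\aauxbis}$, these children being implicitly ordered by the ordinary numbers $\nbexp{\cdot}{k-2}$ they encode.

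Second, I would state and prove an analogue of \Cref{A:lemma:left-and-right} with $\treeval$ replaced by $\treeval_\heapdim$ and $\bselectpred$, $\bleftpred$, $\brightpred$ in place of the barred-free predicates. The proof of \Cref{A:lemma:left-and-right} uses no property of $\treeval$ beyond being an atomic proposition inspected at certain $\avartree$-nodes, and it only invokes \Cref{lemma:forkij}, \Cref{lemma:lsr}, \Cref{lemma:lessij} and \Cref{A:the-property-of-types}, all of which are insensitive to the choice of symbol; hence the same argument goes through line by line. This yields: over a submodel in which $\aworld$ has exactly the $\avartree$-children $\aworld_{\aaux},\aworld_{\aauxbis}$ whose $\avartree$-children carry the $\{\avarleft,\avarselect,\avarright\}$-partition dictated by $\lsrpartition{k-1}$, the conjunct $\bselectpred{\aaux}{\aauxbis}{k}{\heapdim}$ holds iff the two ``$\avarselect$'' children encode equal numbers while $\aworld_{\aaux}$'s fails $\treeval_\heapdim$ and $\aworld_{\aauxbis}$'s satisfies it; $\bleftpred{\aaux}{\aauxbis}{k}{\heapdim}$ holds iff the bits of the $\treeval_\heapdim$-encodings more significant than that position agree; and $\brightpred{\aaux}{\aauxbis}{\heapdim}$ holds iff the less significant bits of $\nbexp{\aworld_{\aaux}}{\heapdim}$ are all $1$ and those of $\nbexp{\aworld_{\aauxbis}}{\heapdim}$ are all $0$. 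Combined with the standard binary-arithmetic characterisation of $y = x+1$ already used in \Cref{lemma:succone} and \Cref{lemma:succj} (conditions (A)--(C) there), this is exactly $\nbexp{\aworld_{\aauxbis}}{\heapdim} = \nbexp{\aworld_{\aaux}}{\heapdim}+1$.

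The two directions then run as in \Cref{lemma:succj}. For left-to-right, from $\amodel,\aworld \models \bsuccessor{\aaux}{\aauxbis}{k}{\heapdim}$ I extract the witnessing submodel $\amodel'$: $\twonoms{\aaux}{\aauxbis}{1}$ pins the two relevant children to $\aworld_{\aaux},\aworld_{\aauxbis}$, $\HMBox{\avartree}\lsrpartition{k-1}$ supplies the partition hypotheses of the left-and-right analogue and, via \Cref{A:the-property-of-types} (using that $\aworld_{\aaux},\aworld_{\aauxbis}$ satisfy $\complete{k-1}$ in $\amodel'$), ensures $\nbexp{\aworld_{\aaux}}{\heapdim}$ and $\nbexp{\aworld_{\aauxbis}}{\heapdim}$ are unchanged between $\amodel$ and $\amodel'$; then $\bselectpred$, $\bleftpred$, $\brightpred$ give (A)--(C). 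For right-to-left, given the arithmetic relation with witnessing bit position $p$, I build $\amodel' \sqsubseteq \amodel$ by pruning $\avartree$-edges so that $\aworld$ keeps only $\aworld_{\aaux},\aworld_{\aauxbis}$ (with their full subtrees) and decorating the $\avartree$-children of $\aworld_{\aaux}$ and $\aworld_{\aauxbis}$ with $\{\avarleft,\avarselect,\avarright\}$-labels matching the split induced by $p$; \Cref{lemma:forkij}, \Cref{lemma:lsr} and the left-and-right analogue then confirm $\amodel'$ satisfies the body of $\bsuccessor{\aaux}{\aauxbis}{k}{\heapdim}$. I do not expect a genuine obstacle; the only point requiring care is the bookkeeping between the ``grid coordinate'' $\nbexp{\cdot}{\heapdim}$ (read from $\treeval_\heapdim$ one level below) and the ordinary number $\nbexp{\cdot}{k-2}$ of a child (used only to order bit positions) — precisely the distinction already handled in the $j \geq 2$ case of $\complete{j}$ and in \Cref{A:lemma:bequivalent} — and since the substitution $\treeval \gets \treeval_\heapdim$ is transparent to every lemma invoked, the proof is essentially a renaming of the proof of \Cref{lemma:succj}.
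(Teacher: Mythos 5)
Your proposal is correct and matches the paper's intent exactly: the paper's own proof of this lemma is a one-line remark that it ``unfolds as the proofs'' of the correctness lemmas for $\successor{\aaux}{\aauxbis}{1}$ and $\successor{\aaux}{\aauxbis}{j}$, which is precisely the renaming-of-$\treeval$-by-$\treeval_\heapdim$ argument you spell out (including the analogue of \Cref{A:lemma:left-and-right} and the use of \Cref{lemma:forkij}, \Cref{lemma:lsr} and \Cref{A:the-property-of-types}). Your version is simply a more detailed account of the same route.
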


\begin{proof}
The proof unfolds as the proofs of \Cref{lemma:succone,lemma:succj}.
\end{proof}

We are now ready to define the formulae $\pZT{k}{\cTT}$, $\pUT{k}{\cTT}$ and $\pCTBB{k}{\cTT}$, achieving the conditions \ref{A:tiling_enc:apZ}, \ref{A:tiling_enc:apU} and \ref{A:tiling_enc:apC}, respectively.
All these formulae follow closely the definitions of $\pZ{k}$, $\pU{k}$ and $\pC{k}$ of the previous sections, hence we refer to these latter formulae for an informal description on how they work. The formula $\pZT{k}{\cTT}$ is simply defined as $\HMDiamond{\avartree}(\HMBox{\avartree}(\lnot \treeval_\cH \land \lnot \treeval_\cV))$.
\begin{lemma}\label{A:lemma:grid-zero}
  $\amodel,\aworld \models \pZT{k}{\cTT}$ if and only if $\pair{\amodel}{\aworld}$ satisfies \ref{A:tiling_enc:apZ}.
\end{lemma}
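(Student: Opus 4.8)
The plan is to imitate the (omitted) proof of the correctness of $\pZ{j}$ for $j \geq 2$ from \Cref{lemma:basicform}, now carried out simultaneously for the two coordinates $\cH$ and $\cV$. I would first record the only non-trivial ingredient: under the standing hypothesis of this construction (in particular $\amodel,\aworld \models \pS{k}$, so that every $\avartree$-node $\aworld' \in \arelation(\aworld)$ satisfies $\complete{k-1}$), the numbers $\nbexp{\aworld'}{\cH}$ and $\nbexp{\aworld'}{\cV}$ are read off the truth values of $\treeval_\cH$ and $\treeval_\cV$ on the $\amapter(k-1,n)$ many $\avartree$-children of $\aworld'$, these children being implicitly ordered by the numbers they in turn encode. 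From this I extract, for $\heapdim \in \{\cH,\cV\}$, the one-line remark that $\nbexp{\aworld'}{\heapdim} = 0$ if and only if every $\avartree$-child $\aworld''$ of $\aworld'$ satisfies $\amodel,\aworld'' \models \lnot \treeval_\heapdim$; this is precisely the point implicit in the treatment of $\pZ{j}$.

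Next I would prove the left-to-right direction. Unfolding $\HMDiamond{\avartree}$ in $\pZT{k}{\cTT} = \HMDiamond{\avartree}(\HMBox{\avartree}(\lnot \treeval_\cH \land \lnot \treeval_\cV))$, from $\amodel,\aworld \models \pZT{k}{\cTT}$ I obtain a $\avartree$-node $\tilde\aworld \in \arelation(\aworld)$ with $\amodel,\tilde\aworld \models \HMBox{\avartree}(\lnot \treeval_\cH \land \lnot \treeval_\cV)$; unfolding $\HMBox{\avartree}$, every $\avartree$-child of $\tilde\aworld$ fails both $\treeval_\cH$ and $\treeval_\cV$, so by the remark above $\nbexp{\tilde\aworld}{\cH} = \nbexp{\tilde\aworld}{\cV} = 0$, which is exactly \ref{A:tiling_enc:apZ}. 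For the converse, starting from a $\avartree$-node $\tilde\aworld \in \arelation(\aworld)$ with $\nbexp{\tilde\aworld}{\cH} = \nbexp{\tilde\aworld}{\cV} = 0$ I would apply the remark in the other direction to get that every $\avartree$-child of $\tilde\aworld$ satisfies $\lnot \treeval_\cH \land \lnot \treeval_\cV$, hence $\amodel,\tilde\aworld \models \HMBox{\avartree}(\lnot \treeval_\cH \land \lnot \treeval_\cV)$; since $\tilde\aworld$ is a $\avartree$-child of $\aworld$, this yields $\amodel,\aworld \models \pZT{k}{\cTT}$.

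There is essentially no hard part: the statement is a direct unfolding of the relativised modalities $\HMDiamond{\avartree}$ and $\HMBox{\avartree}$. The only place that deserves a sentence of justification is the equivalence ``$\nbexp{\aworld'}{\heapdim} = 0$ iff $\treeval_\heapdim$ fails on all $\avartree$-children of $\aworld'$'', which rests on $\amodel,\aworld' \models \complete{k-1}$ forcing those children to realise every bit position of $\interval{0}{\amapter(k-1,n)-1}$, so that the all-zero bit string codes the number $0$; this is identical to what was left implicit for $\pZ{j}$, and once it is granted the rest is purely formal.
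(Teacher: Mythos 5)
Your proof is correct and follows the same route as the paper, which dispatches this lemma as immediate from the definition of $\pZT{k}{\cTT}$ and from how the position $\pair{0}{0}$ is encoded; you merely spell out the unfolding of $\HMDiamond{\avartree}$ and $\HMBox{\avartree}$ and the observation that, for a $\avartree$-node $\aworld'$ satisfying $\complete{k-1}$, the number $\nbexp{\aworld'}{\heapdim}$ equals $0$ exactly when $\treeval_\heapdim$ fails on all its $\avartree$-children. No gap to report.
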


\begin{proof}
The proof is straightforward, by definition of $\pZT{k}{\cTT}$ and how $\pair{0}{0}$ is encoded in the grid.
\end{proof}

The formula $\pUT{k}{\cTT}$ is defined from $\pU{k}$ by simply replacing $\equivalent{\avariable}{\avariablebis}{1}{k}$
with $\bequivalent{\avariable}{\avariablebis}{k}{\cH} \land \bequivalent{\avariable}{\avariablebis}{k}{\cV}$:
\[
\pUT{k}{\cTT} = \lnot \big(\true \separate (\fork{\avariable}{\avariablebis}{1}{k} \land \bequivalent{\avariable}{\avariablebis}{k}{\cH} \land\bequivalent{\avariable}{\avariablebis}{k}{\cV})\big)
\]

\begin{lemma}\label{A:lemma:grid-unique}
  Let $k \geq 2$.
  Suppose $\amodel,\aworld \models \init{k} \land \pA$.

  $\amodel,\aworld \models \pU{k}$ if and only if
  $\pair{\amodel}{\aworld}$ satisfies
  \ref{A:tiling_enc:apU}, i.e.\ distinct $\avartree$-nodes in $\arelation(\aworld)$
  encode different pairs of numbers.
\end{lemma}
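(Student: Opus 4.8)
\textbf{Proof plan for \Cref{A:lemma:grid-unique}.}
The plan is to mirror the proof of \Cref{lemma:uniqj} almost verbatim, substituting the grid-aware formulae for their single-number counterparts. Recall that $\pUT{k}{\cTT} \egdef \lnot\big(\true \separate (\fork{\avariable}{\avariablebis}{1}{k} \land \bequivalent{\avariable}{\avariablebis}{k}{\cH} \land \bequivalent{\avariable}{\avariablebis}{k}{\cV})\big)$, so the task reduces to showing that the negated body holds in some submodel exactly when two distinct $\avartree$-children of $\aworld$ encode the same pair $\pair{h}{v}$.

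For the contrapositive of the left-to-right direction, I would assume there exist two distinct $\avartree$-nodes $\aworld_\avariable,\aworld_\avariablebis \in \arelation(\aworld)$ with $\nbexp{\aworld_\avariable}{\cH} = \nbexp{\aworld_\avariablebis}{\cH}$ and $\nbexp{\aworld_\avariable}{\cV} = \nbexp{\aworld_\avariablebis}{\cV}$. Using the hypothesis $\amodel,\aworld \models \init{k} \land \pA$ (which gives us the required auxiliary children, and via $\pS{k}$ that each $\avartree$-child of $\aworld$ satisfies $\complete{k-1}$), I would carve out the submodel $\amodel' = \triple{\worlds}{\arelation'}{\avaluation}$ in which $\aworld$ retains exactly the two children $\aworld_\avariable,\aworld_\avariablebis$, with $\aworld_\avariable$ keeping its (unique) $\avariable$-child and $\aworld_\avariablebis$ keeping its (unique) $\avariablebis$-child, and all subtrees below these children preserved. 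By \Cref{lemma:forkij}, $\amodel',\aworld \models \fork{\avariable}{\avariablebis}{1}{k}$; note $\fork{\avariable}{\avariablebis}{1}{k}$ also records that $\amodel',\aworld_\avariable \models \complete{k-1}$ and $\amodel',\aworld_\avariablebis \models \complete{k-1}$, so by \Cref{A:the-property-of-types} the encoded pairs are unchanged in $\amodel'$. Then $\amodel',\aworld \models \fork{\avariable}{\avariablebis}{1}{k}$ together with \Cref{A:lemma:bequivalent} applied once for $\cH$ and once for $\cV$ gives $\amodel',\aworld \models \bequivalent{\avariable}{\avariablebis}{k}{\cH} \land \bequivalent{\avariable}{\avariablebis}{k}{\cV}$, whence $\amodel,\aworld \models \true \separate (\cdots)$, i.e.\ $\amodel,\aworld \not\models \pUT{k}{\cTT}$.

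For the converse (again contrapositive), assume $\amodel,\aworld \not\models \pUT{k}{\cTT}$, so there is $\amodel' \sqsubseteq \amodel$ with $\amodel',\aworld \models \fork{\avariable}{\avariablebis}{1}{k} \land \bequivalent{\avariable}{\avariablebis}{k}{\cH} \land \bequivalent{\avariable}{\avariablebis}{k}{\cV}$. Since satisfaction of $\init{k}$ is preserved under taking submodels (\Cref{lemma:init}), $\amodel',\aworld \models \init{k}$, so \Cref{lemma:forkij} yields two distinct $\avartree$-nodes $\aworld_\avariable,\aworld_\avariablebis \in \arelation'(\aworld) \subseteq \arelation(\aworld)$ corresponding to the nominals $\avariable,\avariablebis$ at depth $1$. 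By \Cref{A:lemma:bequivalent}, w.r.t.\ $\pair{\amodel'}{\aworld}$ they satisfy $\nbexp{\aworld_\avariable}{\cH} = \nbexp{\aworld_\avariablebis}{\cH}$ and $\nbexp{\aworld_\avariable}{\cV} = \nbexp{\aworld_\avariablebis}{\cV}$; and since $\fork{\avariable}{\avariablebis}{1}{k}$ forces $\complete{k-1}$ on both $\aworld_\avariable$ and $\aworld_\avariablebis$, \Cref{A:the-property-of-types} transfers these equalities back to $\pair{\amodel}{\aworld}$. Hence $\pair{\amodel}{\aworld}$ violates \ref{A:tiling_enc:apU}. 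The only mild subtlety — and the place I would be most careful — is the bookkeeping in the submodel construction of the first direction: one must verify that the carved-out $\amodel'$ genuinely satisfies all clauses of $\fork{\avariable}{\avariablebis}{1}{k}$ (in particular the $\completeplus{k-1}$ part, which follows because $\pA$ already supplies the $\{\avarleft,\avarselect,\avarright\}$-children and $\pS{k}$ supplies $\complete{k-1}$), exactly as in the corresponding step of \Cref{lemma:uniqone}/\Cref{lemma:uniqj}.
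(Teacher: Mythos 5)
Your proposal is correct and follows exactly the route the paper takes: its own proof of this lemma is a one-line reference to the arguments of \Cref{lemma:uniqone} and \Cref{lemma:uniqj}, with \Cref{A:lemma:bequivalent} (for $\cH$ and $\cV$) replacing the single-number equality formula, which is precisely what you spell out, including the submodel construction, the transfer of encoded values via \Cref{A:the-property-of-types}, and the monotonicity of $\init{k}$ under submodels.
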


\begin{proof}
This lemma is proven as \Cref{lemma:uniqone} and \Cref{lemma:uniqj}, by relying on \Cref{A:lemma:bequivalent} in order to show that, given two distinct worlds $\aworld_{\avariable}$ and $\aworld_\avariablebis$ corresponding to nominals (for the depth $1$) $\avariable$ and $\avariablebis$, respectively,
$\bequivalent{\avariable}{\avariablebis}{k}{\cH} \land\bequivalent{\avariable}{\avariablebis}{k}{\cV}$ holds if and only if $\nbexp{\aworld_\avariable}{\cH} = \nbexp{\aworld_{\avariablebis}}{\cH}$ and
$\nbexp{\aworld_\avariable}{\cV} = \nbexp{\aworld_{\avariablebis}}{\cV}$.
\end{proof}

Lastly, $\pCTBB{k}{\cTT} \egdef \pCTB{k}{\cTT}{\cH} \land \pCTB{k}{\cTT}{\cV}$ where
\[
\begin{aligned}
\pCTB{k}{\cTT}{\cH} \egdef \lnot
  \Big( \Box \false\!\separate\!
    \Big(\HMBox{\avartree} (\completeplus{k{-}1}\,{\land}\,\Diamond \avariablebis)\,{\land}\,
    \nominal{\avariable}{1} \land
 \atnom{\avariable}{1} \lnot \one^{\cH}_{k} \land \lnot \big( \true \separate (\fork{\avariable}{\avariablebis}{1}{j} \land
 \bsuccessor{\avariable}{\avariablebis}{k}{\cH}
 \land \bequivalent{\avariable}{\avariablebis}{k}{\cV}
 )\big)\Big)\Big)
\end{aligned}
\]
and $\pCTB{k}{\cTT}{\cV}$ is defined form $\pCTB{k}{\cTT}{\cH}$ by replacing $\one^{\cH}_{k}$, $\bsuccessor{\avariable}{\avariablebis}{k}{\cH}$ and
$\bequivalent{\avariable}{\avariablebis}{k}{\cV}$ with
$\one^{\cV}_{k}$, $\bsuccessor{\avariable}{\avariablebis}{k}{\cV}$ and
$\bequivalent{\avariable}{\avariablebis}{k}{\cH}$, respectively.
Here, $\one^{\heapdim}_{k}$ ($\heapdim \in \{\cH,\cV\}$) is defined as $\HMBox{\avartree} \treeval_\heapdim$,
and hence it is satisfied by the $\avartree$-nodes $\aworld' \in \arelation(\aworld)$ such that $\nbexp{\aworld'}{\heapdim} = \amapter(k,n)-1$.

\begin{lemma}\label{A:lemma:grid-complete}
  Let $k \geq 2$.
  Suppose $\amodel,\aworld \models \init{k} \land \pA$.
  $\amodel,\aworld \models \pCTBB{k}{\cTT}$ if and only if $\pair{\amodel}{\aworld}$ satisfies \ref{A:tiling_enc:apC}.

  More precisely,
  \begin{enumerate}
    \item $\amodel,\aworld \models \pCTB{k}{\cTT}{\cH}$ if and only if
    for every $\avartree$-node $\aworld_1 \in \arelation(\aworld)$, if $\nbexp{\aworld_1}{\cH} < \amapter(j,n)-1$ then
    there is a $\avartree$-node $\aworld_2 \in \arelation(\aworld)$ such that
    $\nbexp{\aworld_2}{\cH} = \nbexp{\aworld_1}{\cH}+1$
    and $\nbexp{\aworld_2}{\cV} = \nbexp{\aworld_1}{\cV}$;
    \item $\amodel,\aworld \models \pCTB{k}{\cTT}{\cV}$ if and only if
    for every $\avartree$-node $\aworld_1 \in \arelation(\aworld)$, if $\nbexp{\aworld_1}{\cV} < \amapter(j,n)-1$ then
    there is a $\avartree$-node $\aworld_2 \in \arelation(\aworld)$ such that
    $\nbexp{\aworld_2}{\cH} = \nbexp{\aworld_1}{\cH}$
    and $\nbexp{\aworld_2}{\cV} = \nbexp{\aworld_1}{\cV}+1$.
  \end{enumerate}
\end{lemma}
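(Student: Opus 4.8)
The plan is first to reduce the headline equivalence to the ``more precisely'' part. Since $\pCTBB{k}{\cTT} \egdef \pCTB{k}{\cTT}{\cH} \land \pCTB{k}{\cTT}{\cV}$ and since the condition \ref{A:tiling_enc:apC} is literally the conjunction of the horizontal successor condition of item~(1) and the vertical successor condition of item~(2), the equivalence $\amodel,\aworld \models \pCTBB{k}{\cTT}$ iff $\pair{\amodel}{\aworld}$ satisfies \ref{A:tiling_enc:apC} follows by conjoining (1) and (2). Furthermore, the definition of $\pCTB{k}{\cTT}{\cV}$ is obtained from that of $\pCTB{k}{\cTT}{\cH}$ by interchanging $\cH$ and $\cV$ throughout (notably $\one^{\cH}_{k}$ with $\one^{\cV}_{k}$, $\bsuccessor{\avariable}{\avariablebis}{k}{\cH}$ with $\bsuccessor{\avariable}{\avariablebis}{k}{\cV}$ and $\bequivalent{\avariable}{\avariablebis}{k}{\cV}$ with $\bequivalent{\avariable}{\avariablebis}{k}{\cH}$), and likewise for items (1) and (2); so it suffices to establish item~(1).

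For item~(1), the approach is to replay the proof of \Cref{lemma:complj} (itself a lift of \Cref{lemma:complone}) with the two modifications that are already built into the definition of $\pCTB{k}{\cTT}{\cH}$: $\treeval$-successorship $\successor{\avariable}{\avariablebis}{k}$ is replaced by $\treeval_{\cH}$-successorship $\bsuccessor{\avariable}{\avariablebis}{k}{\cH}$ (its correctness is \Cref{A:lemma:bsuccessor}), and the extra conjunct $\bequivalent{\avariable}{\avariablebis}{k}{\cV}$ forces the witnessing child to agree with $\aworld_{\avariable}$ on the vertical coordinate (its correctness is \Cref{A:lemma:bequivalent}). Unfolding the outer ``$\Box\false \separate (\cdots)$'' exactly as for $\pC{j}$ --- using that $\amodel,\aworld \models \Box\false \separate \aformula$ iff $\amodel',\aworld \models \aformula$ for some submodel $\amodel'\sqsubseteq\amodel$ with $\arelation'(\aworld)=\arelation(\aworld)$ --- the formula $\pCTB{k}{\cTT}{\cH}$ states: for every such $\amodel'$, if $\amodel',\aworld \models \HMBox{\avartree}(\completeplus{k-1}\land\Diamond\avariablebis)\land\nominal{\avariable}{1}\land\atnom{\avariable}{1}\lnot\one^{\cH}_{k}$, then $\amodel',\aworld \models \true \separate (\fork{\avariable}{\avariablebis}{1}{k}\land\bsuccessor{\avariable}{\avariablebis}{k}{\cH}\land\bequivalent{\avariable}{\avariablebis}{k}{\cV})$.

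Concretely, for the ($\Leftarrow$) direction I would argue by contradiction as in \Cref{lemma:complj}: a violation of $\amodel,\aworld\models\pCTB{k}{\cTT}{\cH}$ produces a submodel $\amodel'$ (preserving the children of $\aworld$) in which $\avariable$ is a nominal for depth~$1$ at some $\avartree$-child $\aworld_{\avariable}$ with $\nbexp{\aworld_{\avariable}}{\cH}<\amapter(k,n)-1$ and with no fork-witness; the horizontal clause of \ref{A:tiling_enc:apC} then supplies a $\avartree$-child $\aworld_{\avariablebis}$ of $\aworld$ with $\nbexp{\aworld_{\avariablebis}}{\cH}=\nbexp{\aworld_{\avariable}}{\cH}+1$ and $\nbexp{\aworld_{\avariablebis}}{\cV}=\nbexp{\aworld_{\avariable}}{\cV}$, and pruning $\amodel'$ down to the ``fork'' on $\{\aworld_{\avariable},\aworld_{\avariablebis}\}$ (keeping their subtrees, hence $\completeplus{k-1}$ at each, keeping the $\avariable$-child of $\aworld_{\avariable}$ and an $\avariablebis$-child of $\aworld_{\avariablebis}$, and keeping the $\{\avarleft,\avarselect,\avarright\}$-children) yields, by \Cref{lemma:forkij}, a model of $\fork{\avariable}{\avariablebis}{1}{k}$, by \Cref{A:lemma:bsuccessor} a model of $\bsuccessor{\avariable}{\avariablebis}{k}{\cH}$ and by \Cref{A:lemma:bequivalent} a model of $\bequivalent{\avariable}{\avariablebis}{k}{\cV}$, the desired contradiction. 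For the ($\Rightarrow$) direction: assuming $\amodel,\aworld\models\pCTB{k}{\cTT}{\cH}$ and fixing a $\avartree$-child $\aworld_{\avariable}$ with $\nbexp{\aworld_{\avariable}}{\cH}<\amapter(k,n)-1$, cut $\amodel$ to a submodel $\amodel'$ preserving every child of $\aworld$ and all their subtrees but keeping only the $\avariable$-child of $\aworld_{\avariable}$ (so that $\nominal{\avariable}{1}$, $\atnom{\avariable}{1}\lnot\one^{\cH}_{k}$ and $\HMBox{\avartree}(\completeplus{k-1}\land\Diamond\avariablebis)$ hold at $\aworld$), and then read off the required $\aworld_{\avariablebis}$ from the forced $\true\separate(\fork{\avariable}{\avariablebis}{1}{k}\land\bsuccessor{\avariable}{\avariablebis}{k}{\cH}\land\bequivalent{\avariable}{\avariablebis}{k}{\cV})$ via \Cref{lemma:forkij}, \Cref{A:lemma:bsuccessor}, \Cref{A:lemma:bequivalent} and \Cref{A:the-property-of-types} (the last to transport the values $\nbexp{\cdot}{\cH},\nbexp{\cdot}{\cV}$ between $\amodel$ and the innermost submodel).

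The main obstacle I anticipate is the bookkeeping of these submodel surgeries: one must retain exactly enough of each $\avartree$-child's subtree to keep $\completeplus{k-1}$ and the values $\nbexp{\cdot}{\cH},\nbexp{\cdot}{\cV}$ intact --- this is where \Cref{A:the-property-of-types} is invoked, at base depth~$1$, while the successor and equivalence subformulae reason one level further down --- yet prune enough of the $\Aux$-children of $\aworld_{\avariable}$ that $\avariable$ becomes a genuine nominal with no spurious fork-witness. All remaining clauses are a routine transcription of the corresponding steps in the proof of \Cref{lemma:complj}.
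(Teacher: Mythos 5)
Your proposal is correct and follows essentially the same route as the paper: the paper also disposes of the headline equivalence by conjoining the two symmetric items and proves each by replaying the arguments of \Cref{lemma:complone}/\Cref{lemma:complj}, with \Cref{A:lemma:bsuccessor} and \Cref{A:lemma:bequivalent} supplying the fact that $\bsuccessor{\avariable}{\avariablebis}{k}{\cH} \land \bequivalent{\avariable}{\avariablebis}{k}{\cV}$ captures ``$\cH$-successor and $\cV$-equal'' (and dually for item 2). Your explicit description of the submodel surgery (pruning the $\avariable$-children of the other $\avartree$-children while keeping every subtree so that $\HMBox{\avartree}(\completeplus{k-1}\land\Diamond\avariablebis)$ survives, and invoking \Cref{A:the-property-of-types} to transport $\nbexp{\cdot}{\cH},\nbexp{\cdot}{\cV}$) is just a more detailed rendering of what the paper leaves implicit in its reference to \Cref{lemma:complj}.
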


\begin{proof}
  Both (1) and (2) are proved as \Cref{lemma:complone} and \Cref{lemma:complj},
  with the sole difference that we rely on \Cref{A:lemma:bequivalent} and \Cref{A:lemma:bsuccessor}
  in order to show that, given two distinct worlds $\aworld_{\avariable}$ and $\aworld_\avariablebis$ corresponding to nominals (for the depth $1$) $\avariable$ and $\avariablebis$, respectively,
  $\bsuccessor{\avariable}{\avariablebis}{k}{\cH}
  \land \bequivalent{\avariable}{\avariablebis}{k}{\cV}$ holds if and only if $\nbexp{\aworld_\avariable}{\cH} = \nbexp{\aworld_{\avariablebis}}{\cH}+1$ and
  $\nbexp{\aworld_\avariable}{\cV} = \nbexp{\aworld_{\avariablebis}}{\cV}$ (in the proof of 1).
  Similarly, (in the proof of 2)
  $\bsuccessor{\avariable}{\avariablebis}{k}{\cV}
  \land \bequivalent{\avariable}{\avariablebis}{k}{\cH}$ holds if and only if $\nbexp{\aworld_\avariable}{\cH} = \nbexp{\aworld_{\avariablebis}}{\cH}$ and
  $\nbexp{\aworld_\avariable}{\cV} = \nbexp{\aworld_{\avariablebis}}{\cV}+1$.
\end{proof}

This concludes the definition of $\agrid{k}{\cTT}$.
It is proved correct in the following lemma.

\begin{lemma}\label{A:lemma:grid:correctness}
$\amodel,\aworld \models \agrid{k}{\cTT}$ if and only if $\pair{\amodel}{\aworld}$ satisfies \ref{A:tiling_enc:apZ}, \ref{A:tiling_enc:apU},
\ref{A:tiling_enc:apC} and \ref{A:tiling_enc:apS}.
\end{lemma}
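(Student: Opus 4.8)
The plan is to prove the two directions by unfolding the definition $\agrid{k}{\cTT} \egdef \pZT{k}{\cTT} \land \pUT{k}{\cTT} \land \pCTBB{k}{\cTT} \land \init{k} \land \pS{k} \land \pA$ and matching each conjunct with one of the four properties \ref{A:tiling_enc:apZ}, \ref{A:tiling_enc:apU}, \ref{A:tiling_enc:apC}, \ref{A:tiling_enc:apS}. The bulk of the work has already been done in the auxiliary lemmata, so the argument is essentially a bookkeeping exercise, provided the hypotheses of those lemmata are established in the right order.

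First I would observe that the conjunct $\init{k} \land \pS{k} \land \pA$ is, by definition, literally the statement of property~\ref{A:tiling_enc:apS}, and that the correctness of $\pS{k}$ and $\pA$ as characterisations of their homonymous properties is already recorded in~\Cref{lemma:basicform} (with $j = k$). So that conjunct is handled immediately. Next, under the standing assumption $\amodel,\aworld \models \init{k} \land \pS{k} \land \pA$, I would invoke the auxiliary lemmata in turn: $\amodel,\aworld \models \pZT{k}{\cTT}$ iff \ref{A:tiling_enc:apZ} by~\Cref{A:lemma:grid-zero}; $\amodel,\aworld \models \pUT{k}{\cTT}$ iff \ref{A:tiling_enc:apU} by~\Cref{A:lemma:grid-unique}; and $\amodel,\aworld \models \pCTBB{k}{\cTT}$ iff \ref{A:tiling_enc:apC} by~\Cref{A:lemma:grid-complete}. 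Crucially, \Cref{A:lemma:grid-unique} and \Cref{A:lemma:grid-complete} both require the hypothesis $\amodel,\aworld \models \init{k} \land \pA$, which is available since it is part of the $\agrid{k}{\cTT}$-conjunction (and hence part of property~\ref{A:tiling_enc:apS} on the other side of the biconditional). \Cref{A:lemma:grid-zero} is unconditional. Chaining these equivalences, $\amodel,\aworld \models \agrid{k}{\cTT}$ iff all four properties hold.

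The only subtlety to be careful about is the direction of the chaining: for the right-to-left implication one assumes all four properties (in particular \ref{A:tiling_enc:apS}, which supplies $\init{k} \land \pS{k} \land \pA$), and only then are the hypotheses of \Cref{A:lemma:grid-unique} and \Cref{A:lemma:grid-complete} met, allowing one to conclude $\pUT{k}{\cTT}$ and $\pCTBB{k}{\cTT}$; for the left-to-right implication one first extracts $\init{k} \land \pS{k} \land \pA$ from $\agrid{k}{\cTT}$ (this needs no auxiliary lemma, as those three conjuncts appear syntactically), and then applies the same lemmata in the forward direction. The main (and really the only mild) obstacle is therefore just ordering the unfolding so that each auxiliary lemma is applied only after its own hypotheses have been secured; there is no genuine mathematical difficulty, since all the content sits in the already-established \Cref{lemma:basicform,A:lemma:grid-zero,A:lemma:grid-unique,A:lemma:grid-complete}. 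I would close with a one-line remark that this also yields, exactly as in \Cref{A:lemma:complete-j-satisfiable}, the satisfiability of $\agrid{k}{\cTT}$, by exhibiting (or citing the obvious adaptation of) a model built from the grid $\interval{0}{\amapter(k,n)-1}^2$ with the $\treeval_\cH,\treeval_\cV$-labelling encoding the two coordinates.
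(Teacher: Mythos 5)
Your proposal is correct and matches the paper's own argument, which likewise derives the lemma directly from \Cref{A:lemma:grid-zero,A:lemma:grid-unique,A:lemma:grid-complete,lemma:init,lemma:basicform} by matching each conjunct of $\agrid{k}{\cTT}$ with its homonymous property; your care about supplying $\init{k} \land \pA$ before invoking \Cref{A:lemma:grid-unique,A:lemma:grid-complete} is exactly the (only) point to check. The closing satisfiability remark is not part of this lemma but coincides with \Cref{A:corollary:grid-satisfiable}.
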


\begin{proof}
  Directly from \Cref{A:lemma:grid-zero,A:lemma:grid-unique,A:lemma:grid-complete,lemma:init,lemma:basicform}.
\end{proof}

\begin{corollary}\label{A:corollary:grid-satisfiable}
 $\agrid{k}{\cTT}$ is satisfiable.
\end{corollary}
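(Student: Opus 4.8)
The goal is to produce a single pointed forest that satisfies $\agrid{k}{\cTT}$. By \Cref{A:lemma:grid:correctness}, it suffices to build $\pair{\amodel}{\aworld}$ satisfying the four semantic conditions \ref{A:tiling_enc:apZ}, \ref{A:tiling_enc:apU}, \ref{A:tiling_enc:apC} and \ref{A:tiling_enc:apS}; that is, a model in which the $\avartree$-children of $\aworld$ are in bijection with the positions of the grid $\interval{0}{\amapter(k,n)-1}^2$, each child $\aworld'$ correctly encoding its pair $\pair{\nbexp{\aworld'}{\cH}}{\nbexp{\aworld'}{\cV}}$, and which carries the whole auxiliary $\Aux$-structure demanded by $\init{k}$ and $\pA$. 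The construction I would use is a straightforward variant of the one in the proof of \Cref{A:lemma:complete-j-satisfiable} (using \Cref{A:lemma:complete-one-satisfiable} in the base case $k=2$, where $k-1=1$).

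First I would invoke \Cref{A:lemma:complete-j-satisfiable}, for each pair $\pair{h}{v}\in\interval{0}{\amapter(k,n)-1}^2$, to obtain a finite forest $\amodel_{h,v}$ rooted at a fresh world $\aworld_{h,v}$ with $\amodel_{h,v},\aworld_{h,v}\models\init{k-1}\land\complete{k-1}$; as in \Cref{A:lemma:complete-j-satisfiable} I may take all these forests pairwise disjoint and each one minimal, so that $\aworld_{h,v}$ has no $\Aux$-children and no $\avartree$-child of $\aworld_{h,v}$ has any $\{\avarleft,\avarselect,\avarright\}$-child. Then I would assemble $\amodel=\triple{\worlds}{\arelation}{\avaluation}$ by taking a fresh root $\aworld$ with $\arelation(\aworld)=\{\aworld_{h,v}\mid\pair{h}{v}\in\interval{0}{\amapter(k,n)-1}^2\}$, grafting the subtree of each $\amodel_{h,v}$ below $\aworld_{h,v}$, and finally adding the missing auxiliary nodes exactly as in \Cref{A:lemma:complete-j-satisfiable}: one $\avariable$-child and one $\avariablebis$-child to each $\aworld_{h,v}$, and one $\avarleft$-, one $\avarselect$- and one $\avarright$-child to each $\avartree$-child of each $\aworld_{h,v}$. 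The only genuinely new ingredient compared with \Cref{A:lemma:complete-j-satisfiable} is the valuation of the two fresh propositions $\treeval_\cH$ and $\treeval_\cV$: for each $\pair{h}{v}$ and each $\avartree$-child $\aworld'$ of $\aworld_{h,v}$ I would set $\aworld'\in\avaluation(\treeval_\cH)$ iff the $\nb{\aworld'}$-th bit of $h$ is $1$ and $\aworld'\in\avaluation(\treeval_\cV)$ iff the $\nb{\aworld'}$-th bit of $v$ is $1$ (the quantity $\nb{\aworld'}=\nbexp{\aworld'}{k-1}$ is well defined since $\amodel,\aworld_{h,v}\models\complete{k-1}$). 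By construction this forces $\nbexp{\aworld_{h,v}}{\cH}=h$ and $\nbexp{\aworld_{h,v}}{\cV}=v$.

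It then remains to verify the four conditions. Condition \ref{A:tiling_enc:apS} is immediate: $\init{k}$ and $\pA$ hold because the auxiliary nodes were attached precisely as in \Cref{A:lemma:complete-j-satisfiable} (they are leaves each satisfying a single $\Aux$-proposition), and $\pS{k}$ holds because every $\avartree$-child $\aworld_{h,v}$ of $\aworld$ satisfies $\complete{k-1}$ by construction. Conditions \ref{A:tiling_enc:apZ}, \ref{A:tiling_enc:apU} and \ref{A:tiling_enc:apC} follow at once from the fact that $\pair{h}{v}\mapsto\aworld_{h,v}$ is a bijection between $\interval{0}{\amapter(k,n)-1}^2$ and the set of $\avartree$-children of $\aworld$, together with $\nbexp{\aworld_{h,v}}{\cH}=h$ and $\nbexp{\aworld_{h,v}}{\cV}=v$ (for instance, for \ref{A:tiling_enc:apC} and a $\avartree$-child $\aworld_{h,v}$ with $h<\amapter(k,n)-1$, the required witness is $\aworld_{h+1,v}$, and symmetrically for $\cV$). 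By \Cref{A:lemma:grid:correctness} we conclude $\amodel,\aworld\models\agrid{k}{\cTT}$. The only point needing care — and the ``main obstacle'', though it is routine — is checking that colouring the $\avartree$-children of $\aworld_{h,v}$ with $\treeval_\cH,\treeval_\cV$ and attaching the extra auxiliary children does not disturb $\complete{k-1}$ inside $\amodel_{h,v}$; this is harmless, since $\complete{k-1}$ and $\init{k-1}$ constrain only the propositions $\avarprop_1,\dots,\avarprop_n,\treeval$ and those in $\Aux$, and the newly added $\Aux$-children hang off nodes that, by minimality of the $\amodel_{h,v}$, previously had no $\Aux$-children at the relevant depths.
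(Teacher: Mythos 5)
Your proposal is correct and follows essentially the same route as the paper, whose proof is only a sketch stating that the model is built as in \Cref{A:lemma:complete-j-satisfiable} except with $\amapter(k,n)^2$ $\avartree$-children of the root; your construction (subtrees satisfying $\init{k-1}\land\complete{k-1}$, fresh auxiliary leaves, and $\treeval_\cH,\treeval_\cV$ valuations read off the bits of $h$ and $v$) is exactly that construction made explicit, with the verification delegated to \Cref{A:lemma:grid:correctness} as in the paper.
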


\begin{proof}(sketch)
  The satisfiability of $\agrid{k}{\cTT}$ can be established by \Cref{A:lemma:grid:correctness}  as \ref{A:tiling_enc:apZ}, \ref{A:tiling_enc:apU},
  \ref{A:tiling_enc:apC} and \ref{A:tiling_enc:apS} can be simultaneously satisfied.
  A model satisfying these constraints can be defined similarly to what is done in
  \Cref{A:lemma:complete-j-satisfiable}, the main difference being that $\amapter(k,n)^2$ $\avartree$-nodes need to be considered, instead of just $\amapter(k,n)$.
\end{proof}

We can now proceed to the encoding of the tiling conditions~\ref{tiling_c:1} and~\ref{tiling_c:2}.
Given a model $\pair{\amodel = \triple{\worlds}{\arelation}{\avaluation}}{\aworld}$ satisfying
$\agrid{k}{\cTT}$, the existence of a solution for $\pair{\cTT}{\atile}$, w.r.t.\ $\tiling_k$, can be expressed with the following conditions:
\begin{description}
  \item[\desclabel{(\apone$_{\cTT}$)}{A:tiling_enc:one_tile}] every $\avarprop$-node in $\arelation(\aworld)$ satisfies exactly one tile in $\cT$;
  \item[\desclabel{(\apfirst$_{\cTT,\atile}$)}{A:tiling_enc:zero_c}] for $\tilde\aworld{\in}\,\arelation(\aworld)$, if $\nbexp{\tilde\aworld}{\cH} {=} \nbexp{\tilde\aworld}{\cV} {=} 0$ then $\tilde{\aworld} \in \avaluation(\atile)$;
  \item [\desclabel{(\aphor$_{\cTT}$)}{A:tiling_enc:hori}] for all $\aworld_1,\aworld_2 \in \arelation(\aworld)$,
    if $\nbexp{\aworld_2}{\cH} = \nbexp{\aworld_1}{\cH}+1$ and $\nbexp{\aworld_2}{\cV} = \nbexp{\aworld_1}{\cV}$ then there is $\pair{\atile_1}{\atile_2} \in \cH$ such that
    $\aworld_1 \in \avaluation(\atile_1)$ and $\aworld_2 \in \avaluation(\atile_2)$;
  \item [\desclabel{(\apvert$_{\cTT}$)}{A:tiling_enc:vert}] for all $\aworld_1,\aworld_2 \in \arelation(\aworld)$,
    if $\nbexp{\aworld_2}{\cV} = \nbexp{\aworld_1}{\cV}+1$ and $\nbexp{\aworld_2}{\cH} = \nbexp{\aworld_1}{\cH}$ then there is $\pair{\atile_1}{\atile_2} \in \cV$ such that
    $\aworld_1 \in \avaluation(\atile_1)$ and $\aworld_2 \in \avaluation(\atile_2)$.
\end{description}
Then, the formula $\atiling{k}{\cTT,\atile}$ can be defined as
\[
\agrid{k}{\cTT} \land \pone{\cTT} \land \pfirst{k}{\cTT,\atile} \land \phor{k}{\cTT} \land \pvert{k}{\cTT}
\]
where the last four conjuncts express the homonymous property above.
Given the toolkit of formulae introduced up to now, these four formulae are easy to define.
$\pone{\cTT}$ is simply defined as
$\HMBox{\avartree}\bigvee_{\atile_1 \in \cT} (\atile_1 \land \bigwedge_{\atile_2 \in \cT} \lnot \atile_2)$.
Similarly, $\pfirst{k}{\cTT,\atile}$ is also straightforward to define:
\[
\pfirst{k}{\cTT,\atile} \egdef \HMBox{\avartree}\big(\HMBox{\avartree}(\lnot \treeval_\cH \land \lnot \treeval_\cV) \implies \atile\big).
\]
Notice that, in this formula, we use the fact that
the $\avartree$-node $\aworld' \in \arelation(\aworld)$ encoding $\pair{0}{0}$ is the only one, among the $\avartree$-children of $\aworld$, satisfying $\HMBox{\avartree}(\lnot \treeval_\cH \land \lnot \treeval_\cV)$.

\begin{lemma}\label{A:lemma:tiling:zero-and-unique}
Let $k \geq 2$ and suppose $\amodel,\aworld \models \agrid{k}{\cTT}$. Then,
\begin{enumerate}[label=\Roman*.]
  \item $\amodel,\aworld \models \pone{\cTT}$ if and only if $\pair{\amodel}{\aworld}$ satisfies \ref{A:tiling_enc:one_tile};
  \item $\amodel,\aworld \models \pfirst{k}{\cTT,\atile}$ if and only if $\pair{\amodel}{\aworld}$ satisfies \ref{A:tiling_enc:zero_c}.
\end{enumerate}
\end{lemma}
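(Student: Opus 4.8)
The proof of both items is by unfolding the definitions of $\pone{\cTT}$ and $\pfirst{k}{\cTT,\atile}$ and invoking the structural information packed into $\agrid{k}{\cTT}$. First I would note that, by \Cref{A:lemma:grid:correctness}, the hypothesis $\amodel,\aworld\models\agrid{k}{\cTT}$ gives that $\pair{\amodel}{\aworld}$ satisfies \ref{A:tiling_enc:apS}, i.e.\ $\amodel,\aworld\models\init{k}\land\pS{k}\land\pA$. By \Cref{lemma:basicform} this already tells us that $\aworld$ is a $\avartree$-node and that every $\avartree$-node $\aworld'\in\arelation(\aworld)$ satisfies $\complete{k-1}$. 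The latter is precisely what makes $\nbexp{\aworld'}{\cH}$ and $\nbexp{\aworld'}{\cV}$ well-defined: following the encoding convention, $\nbexp{\aworld'}{\cH}$ (resp.\ $\nbexp{\aworld'}{\cV}$) is the number in $\interval{0}{\amapter(k,n)-1}$ whose bit at position $b$ is $1$ iff the unique $\avartree$-child of $\aworld'$ encoding $b$ satisfies $\treeval_{\cH}$ (resp.\ $\treeval_{\cV}$).

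For item I, I would use that $\amodel,\aworld\models\HMBox{\avartree}\aformula$ holds exactly when every $\avartree$-child of $\aworld$ satisfies $\aformula$. Since $\pone{\cTT}$ is $\HMBox{\avartree}$ applied to a propositional formula asserting that exactly one symbol from $\cT$ holds, it follows that $\amodel,\aworld\models\pone{\cTT}$ iff every $\avartree$-child of $\aworld$ lies in $\avaluation(\atile_1)$ for exactly one $\atile_1\in\cT$, which is verbatim \ref{A:tiling_enc:one_tile}.

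For item II, the only step that is not bookkeeping is the claim $(\star)$: for every $\avartree$-child $\aworld'$ of $\aworld$, $\amodel,\aworld'\models\HMBox{\avartree}(\lnot\treeval_{\cH}\land\lnot\treeval_{\cV})$ iff $\nbexp{\aworld'}{\cH}=\nbexp{\aworld'}{\cV}=0$. I would prove $(\star)$ from $\amodel,\aworld'\models\complete{k-1}$: by the encoding convention just recalled, $\nbexp{\aworld'}{\cH}=0$ iff no $\avartree$-child of $\aworld'$ satisfies $\treeval_{\cH}$, and likewise for $\cV$, while the left-hand side of $(\star)$ unfolds to ``no $\avartree$-child of $\aworld'$ satisfies $\treeval_{\cH}$ or $\treeval_{\cV}$''. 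Then $\amodel,\aworld\models\pfirst{k}{\cTT,\atile}=\HMBox{\avartree}\big(\HMBox{\avartree}(\lnot\treeval_{\cH}\land\lnot\treeval_{\cV})\implies\atile\big)$ unfolds to: for every $\avartree$-child $\aworld'$ of $\aworld$, if $\amodel,\aworld'\models\HMBox{\avartree}(\lnot\treeval_{\cH}\land\lnot\treeval_{\cV})$ then $\aworld'\in\avaluation(\atile)$; rewriting the antecedent via $(\star)$ gives exactly \ref{A:tiling_enc:zero_c}, the implicit restriction to $\avartree$-nodes being harmless since $\nbexp{\cdot}{\cH}$ and $\nbexp{\cdot}{\cV}$ are defined only on $\avartree$-children of $\aworld$. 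I do not expect a genuine obstacle here; the one point requiring care is that $(\star)$ must be read with respect to the encoding fixed by $\complete{k-1}$, which is why it is essential to have $\pS{k}$ (hence \Cref{lemma:basicform}) available rather than only, say, the fact that $\aworld$ itself satisfies the top-level grid constraints.
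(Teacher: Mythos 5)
Your proof is correct and takes essentially the same route as the paper, which simply observes that both items follow directly from the definitions of $\pone{\cTT}$ and $\pfirst{k}{\cTT,\atile}$; your elaboration (in particular the claim $(\star)$ identifying $\HMBox{\avartree}(\lnot\treeval_{\cH}\land\lnot\treeval_{\cV})$ with encoding the pair $\pair{0}{0}$, justified via $\pS{k}$ and $\complete{k-1}$) is exactly the unfolding the paper leaves implicit.
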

\begin{proof}
  Both I and II are easily proven directly from the definition of $\pone{\cTT}$ and $\pfirst{k}{\cTT,\atile}$.
\end{proof}

For the formula $\phor{k}{\cTT}$, we essentially state that there cannot be two $\avartree$-nodes $\aworld_1,\aworld_2 \in \arelation(\aworld)$ such that $\aworld_2$ encodes the position $\pair{\nbexp{\aworld_1}{\cH}+1}{\nbexp{\aworld_1}{\cV}}$
and $\aworld_1 \in \avaluation(\atile_1)$, $\aworld_2 \in \avaluation(\atile_2)$ does not hold for any $\pair{\atile_1}{\atile_2} \in \cH$. In formula:
\[
\begin{aligned}
\phor{k}{\cTT} \egdef \lnot
\big(
\true
  \separate
  \big(
  \fork{\avariable}{\avariablebis}{1}{k} \land
  \bsuccessor{\avariable}{\avariablebis}{k}{\cH}
  \land \bequivalent{\avariable}{\avariablebis}{k}{\cV}
   \land
  \lnot \textstyle\bigvee_{{\pair{\atile_1}{\atile_2} \in \cH}} (\atnom{\avariable}{1} \atile_1 \land \atnom{\avariablebis}{1} \atile_2)
  \big)
\big).
\end{aligned}
\]
Lastly, $\pvert{k}{\cTT}$ is defined as $\phor{k}{\cTT}$, but replacing $\cH$ by $\cV$ and vice-versa:
\[
\begin{aligned}
\pvert{k}{\cTT} \egdef \lnot
\big(
\true
  \separate
  \big(
  \fork{\avariable}{\avariablebis}{1}{k} \land
  \bsuccessor{\avariable}{\avariablebis}{k}{\cV}
  \land \bequivalent{\avariable}{\avariablebis}{k}{\cH}
   \land
  \lnot \textstyle\bigvee_{{\pair{\atile_1}{\atile_2} \in \cV}} (\atnom{\avariable}{1} \atile_1 \land \atnom{\avariablebis}{1} \atile_2)
  \big)
\big).
\end{aligned}
\]

\begin{lemma}\label{A:lemma:tiling:hor-and-vert}
Let $k \geq 2$ and suppose $\amodel,\aworld \models \agrid{k}{\cTT}$. Then,
  \begin{enumerate}[label=\Roman*.]
    \item $\amodel,\aworld \models \phor{k}{\cTT}$ if and only if $\pair{\amodel}{\aworld}$ satisfies \ref{A:tiling_enc:hori};
    \item $\amodel,\aworld \models \pvert{k}{\cTT}$ if and only if $\pair{\amodel}{\aworld}$ satisfies \ref{A:tiling_enc:vert}.
  \end{enumerate}
\end{lemma}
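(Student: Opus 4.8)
The plan is to treat (I) and (II) by the same argument, since $\pvert{k}{\cTT}$ is obtained from $\phor{k}{\cTT}$ by swapping $\cH$ and $\cV$, and \ref{A:tiling_enc:vert} is obtained from \ref{A:tiling_enc:hori} by the same swap; so I would only write out (I) in full and note that (II) follows symmetrically. Throughout I would use that $\amodel,\aworld \models \agrid{k}{\cTT}$ entails (by \Cref{A:lemma:grid:correctness}) that $\amodel,\aworld \models \init{k} \land \pS{k} \land \pA$, so every $\avartree$-child of $\aworld$ carries $\complete{k-1}$ together with a unique $\avariable$-child and a unique $\avariablebis$-child, and every $\avartree$-node in $\arelation^2(\aworld)$ has $\avarleft$-, $\avarselect$- and $\avarright$-children; and I would unfold the semantics of $\phor{k}{\cTT}$, namely that $\amodel,\aworld \models \phor{k}{\cTT}$ iff there is no submodel $\amodel' \sqsubseteq \amodel$ with $\amodel',\aworld \models \fork{\avariable}{\avariablebis}{1}{k} \land \bsuccessor{\avariable}{\avariablebis}{k}{\cH} \land \bequivalent{\avariable}{\avariablebis}{k}{\cV} \land \lnot \bigvee_{\pair{\atile_1}{\atile_2}\in\cH}(\atnom{\avariable}{1}\atile_1 \land \atnom{\avariablebis}{1}\atile_2)$.

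For the direction $\phor{k}{\cTT} \Rightarrow$ \ref{A:tiling_enc:hori} I would argue by contraposition. Given $\avartree$-children $\aworld_1,\aworld_2 \in \arelation(\aworld)$ with $\nbexp{\aworld_2}{\cH} = \nbexp{\aworld_1}{\cH}+1$ and $\nbexp{\aworld_2}{\cV} = \nbexp{\aworld_1}{\cV}$ such that $\aworld_1 \in \avaluation(\atile_1) \land \aworld_2 \in \avaluation(\atile_2)$ fails for every $\pair{\atile_1}{\atile_2}\in\cH$, I would build a submodel $\amodel' \sqsubseteq \amodel$ in which $\aworld$ keeps exactly the two $\avartree$-children $\aworld_1$ and $\aworld_2$; $\aworld_1$ keeps its unique $\avariable$-child and is detached from its $\avariablebis$-child; $\aworld_2$ keeps its unique $\avariablebis$-child and is detached from its $\avariable$-child; the full $\complete{k-1}$-subtree below $\aworld_1$ and below $\aworld_2$, together with all their $\{\avarleft,\avarselect,\avarright\}$-children, is preserved; and every other $\avartree$-child of $\aworld$ is detached. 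By \Cref{lemma:forkij} this gives $\amodel',\aworld \models \fork{\avariable}{\avariablebis}{1}{k}$ with $\aworld_1$ (resp.\ $\aworld_2$) corresponding to $\avariable$ (resp.\ $\avariablebis$) and satisfying $\completeplus{k-1}$; by \Cref{A:the-property-of-types} the $\cH$- and $\cV$-encodings of $\aworld_1,\aworld_2$ are unchanged in $\amodel'$, so \Cref{A:lemma:bsuccessor} and \Cref{A:lemma:bequivalent} give $\amodel',\aworld \models \bsuccessor{\avariable}{\avariablebis}{k}{\cH} \land \bequivalent{\avariable}{\avariablebis}{k}{\cV}$; and since $\avaluation$ is inherited and no matching pair exists, \Cref{lemma:at} gives $\amodel',\aworld \models \lnot \bigvee_{\pair{\atile_1}{\atile_2}\in\cH}(\atnom{\avariable}{1}\atile_1 \land \atnom{\avariablebis}{1}\atile_2)$. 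Thus $\amodel'$ is a forbidden witness, so $\amodel,\aworld \not\models \phor{k}{\cTT}$.

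For the converse I would assume \ref{A:tiling_enc:hori} and suppose for contradiction $\amodel,\aworld \not\models \phor{k}{\cTT}$, picking a submodel $\amodel' = \triple{\worlds}{\arelation_1}{\avaluation} \sqsubseteq \amodel$ realizing the inner conjunction. Using $\amodel',\aworld \models \init{k}$ (monotone under submodels by \Cref{lemma:init}) and \Cref{lemma:forkij}, I would extract two distinct $\avartree$-nodes $\aworld_{\avariable},\aworld_{\avariablebis} \in \arelation_1(\aworld)$ corresponding to the depth-$1$ nominals $\avariable$ and $\avariablebis$ and satisfying $\completeplus{k-1}$; then \Cref{A:lemma:bsuccessor} and \Cref{A:lemma:bequivalent}, combined with \Cref{A:the-property-of-types} to transfer the encodings back to $\amodel$, give $\nbexp{\aworld_{\avariablebis}}{\cH} = \nbexp{\aworld_{\avariable}}{\cH}+1$ and $\nbexp{\aworld_{\avariablebis}}{\cV} = \nbexp{\aworld_{\avariable}}{\cV}$; and $\aworld_{\avariable},\aworld_{\avariablebis} \in \arelation_1(\aworld) \subseteq \arelation(\aworld)$. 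Hence \ref{A:tiling_enc:hori} yields a pair $\pair{\atile_1}{\atile_2}\in\cH$ with $\aworld_{\avariable} \in \avaluation(\atile_1)$ and $\aworld_{\avariablebis} \in \avaluation(\atile_2)$, so by \Cref{lemma:at} $\amodel',\aworld \models \atnom{\avariable}{1}\atile_1 \land \atnom{\avariablebis}{1}\atile_2$, contradicting the last conjunct of the inner formula. This completes (I), and (II) is the $\cH \leftrightarrow \cV$ mirror image. The step I expect to be the main obstacle is the submodel construction in the forward direction: one must detach every competing $\avartree$-child of $\aworld$ and the ``wrong'' auxiliary child of each of $\aworld_1,\aworld_2$, while retaining precisely the edges that $\completeplus{k-1}$ demands at $\aworld_1$ and $\aworld_2$ --- their entire $\complete{k-1}$-subtrees and their $\{\avarleft,\avarselect,\avarright\}$-children --- so that \Cref{lemma:forkij} applies and so that the configuration is exactly the one covered by \Cref{A:the-property-of-types}, guaranteeing the $\cH$/$\cV$ encodings survive the passage to $\amodel'$; this bookkeeping is routine but follows the pattern of \Cref{lemma:complone}, \Cref{lemma:complj} and \Cref{A:lemma:grid-complete}, so I would present it by reference to those rather than spelling out every edge.
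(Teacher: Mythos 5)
Your proposal is correct and follows essentially the same route as the paper's proof: the same unfolding of $\phor{k}{\cTT}$ as a universal statement over submodels, the same submodel construction (keep only $\aworld_1,\aworld_2$ as $\avartree$-children of $\aworld$, drop the $\avariablebis$-child of $\aworld_1$ and the $\avariable$-child of $\aworld_2$, preserve the full subtrees), and the same appeals to \Cref{lemma:forkij}, \Cref{A:the-property-of-types}, \Cref{A:lemma:bsuccessor}, \Cref{A:lemma:bequivalent} and \Cref{lemma:at}, with the $\cV$ case obtained by symmetry. Phrasing the forward direction by contraposition rather than directly, as the paper does, is an immaterial difference.
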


\begin{proof}
  We show the proof for I, the one for II being analogous.
  Recall that \ref{A:tiling_enc:hori} stands for:
  \begin{nscenter}
    $\forall$\,$\aworld_1,\aworld_2 \in \arelation(\aworld)$,
      if $\nbexp{\aworld_2}{\cH} = \nbexp{\aworld_1}{\cH}+1$ and $\nbexp{\aworld_2}{\cV} = \nbexp{\aworld_1}{\cV}$ then there is $\pair{\atile_1}{\atile_2} \in \cH$ s.t.\
      $\aworld_1 \in \avaluation(\atile_1)$ and $\aworld_2 \in \avaluation(\atile_2)$.
  \end{nscenter}
  Suppose $\amodel,\aworld \models \agrid{k}{\cTT}$. Then in particular $\amodel,\aworld \models \complete{k}$ and every world $\aworld' \in \arelation(\aworld)$ encodes a pair
  of numbers $\pair{\nbexp{\aworld}{\cH}}{\nbexp{\aworld}{\cV}} \in \interval{0}{\amapter(k,n)-1}^2$.

  ($\Rightarrow$):
  Suppose $\amodel,\aworld \models \phor{k}{\cTT}$. Then, by definition, for every
  $\amodel'\sqsubseteq\amodel$,
  if $\amodel',\aworld \models \fork{\avariable}{\avariablebis}{1}{k} \land
  \bsuccessor{\avariable}{\avariablebis}{k}{\cH}
  \land \bequivalent{\avariable}{\avariablebis}{k}{\cV}$ then
  $\amodel',\aworld \models \textstyle\bigvee_{{\pair{\atile_1}{\atile_2} \in \cH}} (\atnom{\avariable}{1} \atile_1 \land \atnom{\avariablebis}{1} \atile_2)$.
  Consider now two worlds $\forall$\,$\aworld_\avariable,\aworld_\avariablebis \in \arelation(\aworld)$
  such that $\nbexp{\aworld_\avariablebis}{\cH} = \nbexp{\aworld_\avariable}{\cH}+1$ and $\nbexp{\aworld_\avariablebis}{\cV} = \nbexp{\aworld_\avariable}{\cV}$.
  Let $\amodel' = \triple{\worlds}{\arelation_1}{\avaluation}$
  be the submodel of $\amodel$ where $\arelation_1$ is defined from $\arelation$ by removing the following pairs of worlds:
  \begin{itemize}
    \item $\pair{\aworld}{\aworld'} \in \arelation$ where $\aworld'$ is different from $\aworld_1$ and $\aworld_2$;
    \item $\pair{\aworld_\avariable}{\aworld''} \in \arelation$ where $\aworld''$ is the only $\Aux$-child of $\aworld_\avariable$ satisfying $\avariablebis$ (this world exists as $\amodel,\aworld \models \complete{k}$);
    \item $\pair{\aworld_\avariablebis}{\aworld'''} \in \arelation$ where $\aworld'''$ is the only $\Aux$-child of $\aworld_\avariablebis$ satisfying $\avariable$ (again, this world exists as $\amodel,\aworld \models \complete{k}$).
  \end{itemize}
  We can easily check that the pointed forest $\pair{\amodel'}{\aworld}$ satisfies $\fork{\avariable}{\avariablebis}{1}{k}$, where $\aworld_\avariable$ and $\aworld_\avariablebis$ correspond to two nominals (for the depth $1$) $\avariable$ and $\avariablebis$, respectively.
  Thus, $\amodel',\aworld_\avariable \models \complete{k-1}$ and $\amodel',\aworld_\avariablebis \models \complete{k-1}$.
  Therefore, by
  \Cref{A:the-property-of-types} (which can be easily extended in order to consider pairs of numbers described with $\treeval_\cH$ and $\treeval_\cV$, instead of a single number described with $\treeval$), we conclude that $\aworld_\avariable$ and $\aworld_\avariablebis$ keep encoding the same two pairs of numbers when $\amodel$ is modified to $\amodel'$.
  Then, since by hypothesis $\nbexp{\aworld_\avariablebis}{\cH} = \nbexp{\aworld_\avariable}{\cH}+1$ and
  $\nbexp{\aworld_\avariablebis}{\cV} = \nbexp{\aworld_\avariable}{\cV}$,
  by \Cref{A:lemma:bsuccessor,A:lemma:bequivalent} we conclude that $\amodel',\aworld \models \bsuccessor{\avariable}{\avariablebis}{k}{\cH}
  \land \bequivalent{\avariable}{\avariablebis}{k}{\cV}$.
  Then, by hypothesis $\amodel,\aworld \models \phor{k}{\cTT}$, we conclude that
  $\amodel',\aworld \models \textstyle\bigvee_{{\pair{\atile_1}{\atile_2} \in \cH}} (\atnom{\avariable}{1} \atile_1 \land \atnom{\avariablebis}{1} \atile_2)$.
  Thus, there must be a pair $\pair{\atile_1}{\atile_2} \in \cH$
  such that $\amodel',\aworld \models \atnom{\avariable}{1} \atile_1 \land \atnom{\avariablebis}{1} \atile_2$.
  Since $\aworld_\avariable$ (resp. $\aworld_\avariablebis$)
  corresponds to the nominal (for the depth $1$) $\avariable$ (resp. $\avariablebis$),
  we conclude that $\amodel,\aworld_{\avariable} \models \atile_1$ and $\amodel,\aworld_{\avariablebis} \models \atile_2$.
  By definition, this implies that $\pair{\amodel}{\aworld}$ satisfies \ref{A:tiling_enc:hori}.

  ($\Leftarrow$): This direction is rather straightforward and, analogously to the left-to-right direction, relies on \Cref{A:the-property-of-types,A:lemma:bsuccessor,A:lemma:bequivalent}.
  Briefly, suppose that $\pair{\amodel}{\aworld}$ satisfies \ref{A:tiling_enc:hori} and, \emph{ad absurdum}, assume that $\amodel,\aworld \not\models \phor{k}{\cTT}$.
  Therefore,
  \begin{nscenter}
    $\amodel,\aworld \models \true
      \separate
      \big(
      \fork{\avariable}{\avariablebis}{1}{k} \land
      \bsuccessor{\avariable}{\avariablebis}{k}{\cH}
      \land \bequivalent{\avariable}{\avariablebis}{k}{\cV}
       \land
      \lnot \textstyle\bigvee_{{\pair{\atile_1}{\atile_2} \in \cH}} (\atnom{\avariable}{1} \atile_1 \land \atnom{\avariablebis}{1} \atile_2)
      \big)$.
  \end{nscenter}
  Then, there is a submodel $\amodel' = \triple{\worlds}{\arelation}{\avaluation}$ of $\amodel$ such that
  $\amodel',\aworld \models \fork{\avariable}{\avariablebis}{1}{k} \land
  \bsuccessor{\avariable}{\avariablebis}{k}{\cH}
  \land \bequivalent{\avariable}{\avariablebis}{k}{\cV}
   \land
  \lnot \textstyle\bigvee_{{\pair{\atile_1}{\atile_2} \in \cH}} (\atnom{\avariable}{1} \atile_1 \land \atnom{\avariablebis}{1} \atile_2)$.
  By $\amodel',\aworld \models \fork{\avariable}{\avariablebis}{1}{k}$
  we conclude that there are two worlds $\aworld_\avariable$ and $\aworld_\avariablebis$ corresponding to two nominals (depth 1) $\avariable$ and $\avariablebis$, respectively.
  Moreover, by \ref{A:the-property-of-types}, these worlds encode the same two numbers w.r.t. $\pair{\amodel}{\aworld}$ and $\pair{\amodel'}{\aworld}$.
  From $\amodel',\aworld \models
  \bsuccessor{\avariable}{\avariablebis}{k}{\cH}
  \land \bequivalent{\avariable}{\avariablebis}{k}{\cV}$
  and the fact that $\pair{\amodel}{\aworld}$ satisfies \ref{A:tiling_enc:hori},
  together with \Cref{A:lemma:bsuccessor,A:lemma:bequivalent}
  we conclude that there is a pair $\pair{\atile_1}{\atile_2} \in \cH$ such that
  $\aworld_\avariable \in \avaluation(\atile_1)$ and $\aworld_\avariablebis \in \avaluation(\atile_2)$.
  However, this contradicts $\amodel',\aworld \models
  \lnot \textstyle\bigvee_{{\pair{\atile_1}{\atile_2} \in \cH}} (\atnom{\avariable}{1} \atile_1 \land \atnom{\avariablebis}{1} \atile_2)$.
  Thus, $\amodel,\aworld \models \phor{k}{\cTT}$.
\end{proof}

This concludes the definition of $\atiling{k}{\cTT,\atile}$.

\begin{lemma}\label{A:lemma:tiling:correcntess}
  $\amodel,\aworld \models \atiling{k}{\cTT,\atile}$ if and only if
  $\pair{\amodel}{\aworld}$ satisfies
  \ref{A:tiling_enc:apZ},
  \ref{A:tiling_enc:apU},
  \ref{A:tiling_enc:apC},
  \ref{A:tiling_enc:apS},
  \ref{A:tiling_enc:one_tile},
  \ref{A:tiling_enc:zero_c},
  \ref{A:tiling_enc:hori} and \ref{A:tiling_enc:vert}.
\end{lemma}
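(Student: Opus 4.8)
The plan is to recognise that this lemma is purely an assembly step. By definition, $\atiling{k}{\cTT,\atile}$ is the conjunction $\agrid{k}{\cTT} \land \pone{\cTT} \land \pfirst{k}{\cTT,\atile} \land \phor{k}{\cTT} \land \pvert{k}{\cTT}$, and the correctness of each conjunct has already been established in \Cref{A:lemma:grid:correctness}, \Cref{A:lemma:tiling:zero-and-unique} and \Cref{A:lemma:tiling:hor-and-vert}. The only genuine subtlety is that the latter two lemmata are stated under the precondition $\amodel,\aworld \models \agrid{k}{\cTT}$ (this is what makes the notions $\nbexp{\cdot}{\cH}$ and $\nbexp{\cdot}{\cV}$ well-defined, via $\pS{k}$ and $\pA$ forcing each $\avartree$-child of $\aworld$ to satisfy $\complete{k-1}$), so the proof will proceed by a case distinction on whether $\pair{\amodel}{\aworld}$ satisfies $\agrid{k}{\cTT}$.

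First I would treat the case $\amodel,\aworld \not\models \agrid{k}{\cTT}$. Then $\amodel,\aworld \not\models \atiling{k}{\cTT,\atile}$, since $\agrid{k}{\cTT}$ is one of the conjuncts of $\atiling{k}{\cTT,\atile}$. For the other side, \Cref{A:lemma:grid:correctness} tells us that $\pair{\amodel}{\aworld}$ fails at least one of \ref{A:tiling_enc:apZ}, \ref{A:tiling_enc:apU}, \ref{A:tiling_enc:apC}, \ref{A:tiling_enc:apS}, hence it fails the conjunction of all eight listed conditions. Both sides of the biconditional are therefore false, so the equivalence holds vacuously.

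Second I would treat the case $\amodel,\aworld \models \agrid{k}{\cTT}$. By \Cref{A:lemma:grid:correctness}, this is equivalent to $\pair{\amodel}{\aworld}$ satisfying \ref{A:tiling_enc:apZ}, \ref{A:tiling_enc:apU}, \ref{A:tiling_enc:apC} and \ref{A:tiling_enc:apS}. Working under this hypothesis, \Cref{A:lemma:tiling:zero-and-unique} gives $\amodel,\aworld \models \pone{\cTT}$ iff $\pair{\amodel}{\aworld}$ satisfies \ref{A:tiling_enc:one_tile}, and $\amodel,\aworld \models \pfirst{k}{\cTT,\atile}$ iff $\pair{\amodel}{\aworld}$ satisfies \ref{A:tiling_enc:zero_c}; similarly \Cref{A:lemma:tiling:hor-and-vert} gives $\amodel,\aworld \models \phor{k}{\cTT}$ iff $\pair{\amodel}{\aworld}$ satisfies \ref{A:tiling_enc:hori}, and $\amodel,\aworld \models \pvert{k}{\cTT}$ iff $\pair{\amodel}{\aworld}$ satisfies \ref{A:tiling_enc:vert}. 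Conjoining these four equivalences with the equivalence for $\agrid{k}{\cTT}$, and using that $\atiling{k}{\cTT,\atile}$ is exactly the conjunction of these five formulae, yields that $\amodel,\aworld \models \atiling{k}{\cTT,\atile}$ iff $\pair{\amodel}{\aworld}$ satisfies all eight conditions.

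There is no real obstacle: the lemma follows by direct bookkeeping from the previously proved building blocks. The only thing requiring a moment's care is the interplay between the conjunctive form of $\atiling{k}{\cTT,\atile}$ and the preconditions of \Cref{A:lemma:tiling:zero-and-unique} and \Cref{A:lemma:tiling:hor-and-vert}, which is precisely what the case split on $\agrid{k}{\cTT}$ resolves; the negative case is discharged purely by the fact that $\agrid{k}{\cTT}$ sits both inside $\atiling{k}{\cTT,\atile}$ and inside the list of eight conditions (as \ref{A:tiling_enc:apZ}–\ref{A:tiling_enc:apS}).
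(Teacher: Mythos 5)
Your proposal is correct and follows essentially the same route as the paper, which proves the lemma directly by combining \Cref{A:lemma:grid:correctness}, \Cref{A:lemma:tiling:zero-and-unique} and \Cref{A:lemma:tiling:hor-and-vert}. Your explicit case split on whether $\agrid{k}{\cTT}$ holds simply spells out the handling of the preconditions of the latter two lemmata, which the paper leaves implicit.
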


\begin{proof}
  Directly from \Cref{A:lemma:grid:correctness,A:lemma:tiling:zero-and-unique,A:lemma:tiling:hor-and-vert}.
\end{proof}

We can now prove \Cref{lemma:tiling_correct_encoding} (shown below), leading directly to \Cref{theorem:tower-completeness-SC}.\\[-5pt]

\begin{lemma*}[\ref{lemma:tiling_correct_encoding}]
Let $k \geq 2$ and let $\pair{\cTT}{\atile}$ be an instance of
$\tiling_k$, where $\cTT = \triple{\cT}{\cH}{\cV}$ and $\atile \in \cT$. Then,
\begin{nscenter}
$\pair{\cTT}{\atile}$ as a solution for $\tiling_k$ if and only if
the formula $\atiling{k}{\cTT,\atile}$ is satisfiable.
\end{nscenter}
\end{lemma*}

\begin{proof}
  ($\Rightarrow$): Suppose that $\pair{\cTT}{\atile}$ has a solution $\tau : \interval{0}{\amapter(k,n)-1}^2 \to \cT$.
  Let $\amodel = \triple{\worlds}{\arelation}{\avaluation}$ and $\aworld \in \worlds$
  be such that $\amodel,\aworld \models \agrid{k}{\cTT}$ (such a pointed forest exists by \Cref{A:corollary:grid-satisfiable}).
  We slightly modify $\avaluation$ so that the resulting model still satisfies  $\agrid{k}{\cTT}$, but also satisfies \ref{A:tiling_enc:one_tile},
  \ref{A:tiling_enc:zero_c},
  \ref{A:tiling_enc:hori} and \ref{A:tiling_enc:vert}.
  This can be done rather straightforwardly.
  Indeed, since $\amodel,\aworld \models \agrid{k}{\cTT}$, by \Cref{A:lemma:grid:correctness} every $\avartree$-node $\aworld' \in \arelation(\aworld)$
  encodes a pair of numbers $\pair{\nbexp{\aworld'}{\cH}}{\nbexp{\aworld'}{\cV}} \in \interval{0}{\amapter(k,n)-1}$.
  Then, let us consider the model $\amodel' = \triple{\worlds}{\arelation}{\avaluation'}$ such that
  \begin{enumerate}
    \item for every $\avarprop \in \AP \setminus \cT$, $\avaluation'(\avarprop) = \avaluation(\avarprop)$. This property leads to $\amodel',\aworld \models \agrid{k}{\cTT}$, since $\agrid{k}{\cTT}$ is written with propositional symbols not appearing in $\cT$.
    \item for every $\atile \in \cT$ and $\aworld' \in \arelation(\aworld)$,
    $\aworld' \in \avaluation(\atile)$ if and only if $\tau(\nbexp{\aworld'}{\cH},\nbexp{\aworld'}{\cV}) = \atile$.
  \end{enumerate}
  The second condition allows us to conclude that $\pair{\amodel'}{\aworld}$ satisfies  \ref{A:tiling_enc:one_tile},
  \ref{A:tiling_enc:zero_c},
  \ref{A:tiling_enc:hori} and \ref{A:tiling_enc:vert}.
  Indeed, \ref{A:tiling_enc:one_tile} holds as $\tau$ is functional;
  \ref{A:tiling_enc:zero_c} holds as $\tau$ satisfies \ref{tiling_c:1};
  whereas   \ref{A:tiling_enc:hori} and \ref{A:tiling_enc:vert} hold as $\tau$ satisfies \ref{tiling_c:2}.
  Thus, $\pair{\amodel'}{\aworld} \models \atiling{k}{\cTT,\atile}$ and therefore $\atiling{k}{\cTT,\atile}$ is satisfiable.

  ($\Leftarrow$): Suppose $\atiling{k}{\cTT,\atile}$ satisfiable and let $\amodel = \triple{\worlds}{\arelation}{\avaluation}$ and $\aworld \in \worlds$ s.t. $\amodel,\aworld \models \atiling{k}{\cTT,\atile}$.
  Let us consider the relation $\tau \subseteq \interval{0}{\amapter(k,n)-1} \times \interval{0}{\amapter(k,n)-1} \times \cT$ defined as
  \begin{nscenter}
    $\triple{i}{j}{\atile'} \in \tau$ if and only if there is $\aworld' \in \arelation(\aworld)$ s.t.\ $\nbexp{\aworld'}{\cH} = i$, $\nbexp{\aworld'}{\cV} = j$ and $\aworld' \in \avaluation(\atile')$.
  \end{nscenter}
  Directly by \Cref{A:lemma:tiling:correcntess} we have that:
  \begin{enumerate}[label=\Roman*.]
    \item from \ref{A:tiling_enc:apU} and \ref{A:tiling_enc:one_tile}, $\tau$ is (possibly weakly) functional in its first two components,
    i.e. for every $\pair{i}{j} \in \interval{0}{\amapter(k,n)-1}^2$ there is at most one $\atile'$ such that $\triple{i}{j}{\atile'} \in \tau$;
    \item from \ref{A:tiling_enc:apZ} and \ref{A:tiling_enc:apC}, $\tau$ is total (hence not weakly functional),
    i.e. cannot be that there is $\pair{i}{j} \in \interval{0}{\amapter(k,n)-1}^2$
    such that for every $\atile' \in \cT$, $\triple{i}{j}{\atile'} \not\in \tau$.
    Together with I, this means that $\tau$ is a map;
    \item from \ref{A:tiling_enc:zero_c}, $\triple{0}{0}{\atile} \in \tau$;
    \item from \ref{A:tiling_enc:hori} and \ref{A:tiling_enc:vert}, for all
    $i\in \interval{0}{\amapter(k,n)-1}$ and $j \in \interval{0}{\amapter(k,n)-2}$, $\pair{\tau(j,i)}{\tau(j+1,i)} \in \cH\ $ and $\ \pair{\tau(i,j)}{\tau(i,j+1)} \in \cV$.
  \end{enumerate}
  Therefore, we conclude that $\tau$ is a solution for $\tiling_k$.
\end{proof}

\section{Proofs of \Cref{section-expressivity-SC}}

To show the existence of a formula in \GML that is equivalent to $\SabDiamond \aformula$, we rely on the
indistinguishability relation \GML, called g-bisimulation and studied in~\cite{DeRijke00}.
So, let us first recall what is a g-bisimulation.
Let $\amodel = \triple{\worlds}{\arelation}{\avaluation}$ and $\amodel' = \triple{\worlds'}{\arelation'}{\avaluation'}$ be two
finite forests. Let $m \in \Nat ,k \in \Nat^{>0}$ and $\apropset \subseteq_{\fin} \varprop$.
A \emph{g-bisimulation} up to $\triple{m}{k}{\apropset}$ between $\amodel$ and $\amodel'$ is a sequence of $m+1$ $k$-uple $\abisim^0 = (\abisim^0_1,\abisim^0_2,\dots,\abisim^0_k)$, $\dots$, $\abisim^m = (\abisim^m_1,\abisim^m_2,\dots,\abisim^m_k)$ satisfying:
\begin{description}
\item[\rulelab{init}{gbisim:init}] $\abisim^0_1$ is not empty and for every $i \in [1,k]$ and $j \in \interval{0}{m}$, $\abisim^j_i \subseteq \powerset{\worlds} \times \powerset{\worlds'}$;
\item[\rulelab{refine}{gbisim:refine}] for every $i \in [1,k]$ and $j \in [1,m]$, $\abisim^j_i \subseteq \abisim^{j-1}_{i}$;
\item[\rulelab{size}{gbisim:size}] if $X \abisim^j_i Y$ then $\card{X} = \card{Y} = i$;
\item[\rulelab{atoms}{gbisim:atoms}] if $\{\aworld\}\abisim^0_1\{\aworld'\}$ then for every $\avarprop \in \apropset$, $\aworld \in \avaluation(\avarprop)$ if and only if $\aworld' \in \avaluation'(\avarprop)$;
\item[\rulelab{m-forth}{gbisim:m-forth}] if $\{\aworld\} \abisim^{j+1}_1 \{\aworld'\}$ and $X {\subseteq} \arelation(\aworld)$ with $\card{X} {\in} \interval{1}{k}$, then there is $Y {\subseteq} \arelation'(\aworld')$ such that $X \abisim^j_{\card{X}}Y$;
\item[\rulelab{m-back}{gbisim:m-back}] if $\{\aworld\} \abisim^{j+1}_1 \{\aworld'\}$ and $Y {\subseteq} \arelation'(\aworld')$ with $\card{Y} {\in} \interval{1}{k}$, then there is $X {\subseteq} \arelation(\aworld)$ such that 
$X \abisim^j_{\card{Y}}Y$;
\item[\rulelab{g-forth}{gbisim:g-forth}] if $X \abisim^j_i Y$ and $\aworld \in X$, then there is $\aworld' \in Y$ such that $\{\aworld\}\abisim^j_1 \{\aworld'\}$;
\item[\rulelab{g-back}{gbisim:g-back}] if $X \abisim^j_i Y$ and $\aworld' \in Y$, then there is $\aworld \in X$ such that $\{\aworld\}\abisim^j_1 \{\aworld'\}$.
\end{description}
We write $\amodel,\aworld \bisrel{m,k}{\apropset} \amodel',\aworld'$ 
and we say that the two models are g-bisimilar iff there is a g-bisimulation up to $\triple{m}{k}{\apropset}$ between $\amodel$ and $\amodel'$, say $\abisim^0,\dots,\abisim^m$, such that $\{\aworld\} \abisim^m_1 \{\aworld'\}$.
We write  $\acharformula{\amodel,\aworld}{m,k}{\apropset}$ to denote 
 the set of formulae in \GML of rank $\pair{m}{k}$ and with
propositional symbols from $\apropset$ that are satisfied in $\amodel,\aworld$,
i.e.  $\acharformula{\amodel,\aworld}{m,k}{\apropset} \egdef \set{\aformulabis \in \GML[m,k,\apropset] \mid
\amodel, \aworld \models \aformulabis}$.
We write $\atypeset{m,k}{\apropset}$ the quotient set induced by  the equivalence relation $ \bisrel{m,k}{\apropset}$.
Let us summarise the main results from~\cite{DeRijke00}.
\begin{proposition}[\cite{DeRijke00}]\label{prop:rijke00}%
\begin{minipage}[t]{0.7\linewidth}
\vspace{-0.3cm}
\begin{enumerate}
\item $\acharformula{\amodel,\aworld}{m,k}{\apropset}$ contains finitely many non-equivalent formulae.
\item
 $\amodel,\aworld \bisrel{m,k}{\apropset} \amodel',\aworld'$ if and only if \ $\acharformula{\amodel,\aworld}{m,k}{\apropset} = \acharformula{\amodel',\aworld'}{m,k}{\apropset}$.
 \item $\bisrel{m,k}{\apropset}$ is a finite index equivalence relation. $\atypeset{m,k}{\apropset}$ is finite.
\end{enumerate}
\end{minipage}
\end{proposition}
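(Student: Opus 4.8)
The plan is to derive all three parts from one combinatorial fact, that $\GML[m,k,\apropset]$ is finite up to logical equivalence, together with the characterisation of g-bisimilarity by modal indistinguishability. I would start with the first claim. Since $\acharformula{\amodel,\aworld}{m,k}{\apropset}$ is a subset of $\GML[m,k,\apropset]$, it suffices to show the latter has finitely many logical-equivalence classes, which I would prove by induction on $m$. For $m=0$ every formula is equivalent to a Boolean combination of the finitely many propositions in $\apropset$, so there are boundedly many classes. For $m+1$, a formula of modal degree at most $m+1$ and graded rank at most $k$ is, up to equivalence, a Boolean combination of propositions from $\apropset$ and of formulae $\Gdiamond{\geq j}{\aformulabis}$ with $j\in\interval{1}{k}$, $\md{\aformulabis}\leq m$ and $\gr{\aformulabis}\leq k$; by the induction hypothesis there are finitely many non-equivalent such $\aformulabis$, hence finitely many non-equivalent $\Gdiamond{\geq j}{\aformulabis}$, hence finitely many non-equivalent Boolean combinations. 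As a by-product, for each pointed forest $\pair{\amodel}{\aworld}$ the conjunction of one representative per equivalence class of formulae in $\acharformula{\amodel,\aworld}{m,k}{\apropset}$ is a finite \emph{characteristic formula} $\aformula^{m,k}_{\amodel,\aworld}\in\GML[m,k,\apropset]$ satisfied by $\pair{\amodelprime}{\aworldprim}$ exactly when $\acharformula{\amodelprime,\aworldprim}{m,k}{\apropset}=\acharformula{\amodel,\aworld}{m,k}{\apropset}$ (here one uses that $\GML[m,k,\apropset]$ is closed under negation, so a formula is in $\acharformula{\amodel,\aworld}{m,k}{\apropset}$ or its negation is).

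For the second claim I would prove the two implications separately. That g-bisimilarity up to $\triple{m}{k}{\apropset}$ implies equality of the rank-$(m,k)$ theories over $\apropset$ goes by induction on $\aformula\in\GML[m,k,\apropset]$: the atomic and Boolean cases are immediate from the atoms condition (and the fact that $\abisim^j_1$ relates only singletons); for $\aformula=\Gdiamond{\geq j}{\aformulabis}$ with $\amodel,\aworld\models\aformula$ one applies the $\textbf{m-forth}$ condition to a set $X\subseteq\arelation(\aworld)$ of $j\le k$ worlds satisfying $\aformulabis$ to obtain $Y\subseteq\arelation'(\aworldprim)$ with $X\abisim^{m-1}_{j}Y$, then the $\textbf{g-forth}$ condition to match each world of $Y$ to a world of $X$ related by $\abisim^{m-1}_1$, and finally the induction hypothesis (legitimate since $\md{\aformulabis}\leq m-1$) to conclude that each world of $Y$ satisfies $\aformulabis$; symmetry handles the converse, so the theories agree. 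For the other implication I would exhibit a canonical g-bisimulation: put $\set{\aworld}\abisim^j_1\set{\aworldprim}$ iff $\aworld$ and $\aworldprim$ satisfy the same formulae of $\GML[j,k,\apropset]$, and for $\card{X}=\card{Y}=i\in\interval{1}{k}$ put $X\abisim^j_iY$ iff there is a bijection $\beta\colon X\to Y$ with $\set{\aworldbis}\abisim^j_1\set{\beta(\aworldbis)}$ for every $\aworldbis\in X$. The conditions $\textbf{size}$, $\textbf{atoms}$, $\textbf{g-forth}$, $\textbf{g-back}$ are immediate, $\textbf{refine}$ follows from $\GML[j,k,\apropset]\subseteq\GML[j+1,k,\apropset]$, and $\textbf{init}$ holds because $\aworld$ and $\aworldprim$ agree on $\apropset$; since by hypothesis $\aworld$ and $\aworldprim$ have the same rank-$(m,k)$ theory, $\set{\aworld}\abisim^m_1\set{\aworldprim}$, so this data witnesses g-bisimilarity.

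The step that genuinely needs the first claim — and the part I expect to be the main obstacle — is verifying the $\textbf{m-forth}$ and $\textbf{m-back}$ conditions for this canonical g-bisimulation. Suppose $\set{\aworld}\abisim^{j+1}_1\set{\aworldprim}$ and $X\subseteq\arelation(\aworld)$ with $\card{X}=i\in\interval{1}{k}$. I would partition $X$ into the blocks $C_1,\dots,C_r$ induced by rank-$(j,k)$-theory equivalence of worlds — finitely many blocks by the first claim — and take for each block its characteristic formula $\aformula_t\in\GML[j,k,\apropset]$. Then $\amodel,\aworld\models\Gdiamond{\geq\card{C_t}}{\aformula_t}$, which is a formula of $\GML[j+1,k,\apropset]$ since $\card{C_t}\leq\card{X}\leq k$; as $\aworld$ and $\aworldprim$ have the same rank-$(j+1,k)$ theory, $\aworldprim$ has at least $\card{C_t}$ children whose rank-$(j,k)$ theory is that of $C_t$. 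Choosing exactly $\card{C_t}$ of them as $Y_t$ — these sets are automatically pairwise disjoint since distinct theories are disjoint — and setting $Y=\bigcup_tY_t$, any bijection $X\to Y$ sending each $C_t$ onto $Y_t$ yields $X\abisim^j_iY$; the $\textbf{m-back}$ direction is symmetric. Finally, the third claim is then immediate: by the second claim, $\bisrel{m,k}{\apropset}$ coincides with ``having the same rank-$(m,k)$ theory over $\apropset$'', which is trivially an equivalence relation, and its classes correspond to subsets of the finite set of non-equivalent formulae of $\GML[m,k,\apropset]$, so there are only finitely many — i.e.\ $\bisrel{m,k}{\apropset}$ has finite index and $\atypeset{m,k}{\apropset}$ is finite.
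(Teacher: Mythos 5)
Your proof is correct, but note that the paper never proves this statement: Proposition~\ref{prop:rijke00} is imported verbatim from~\cite{DeRijke00} as a summary of known results, so there is no in-paper argument to compare against. What you have written is a sound reconstruction of the standard proof behind that citation: finiteness of $\GML[m,k,\apropset]$ up to logical equivalence by induction on the modal depth, the resulting characteristic formulae, a Hennessy--Milner-style soundness induction using \textbf{m-forth}/\textbf{m-back} together with \textbf{g-forth}/\textbf{g-back}, and completeness via the canonical relation whose \textbf{m-forth}/\textbf{m-back} conditions are verified by partitioning $X$ into theory blocks and using the formulae $\Gdiamond{\geq \card{C_t}}{\aformula_t}$, whose graded rank stays $\leq k$ because $\card{C_t}\leq\card{X}\leq k$ --- this last observation is indeed the crux, and you handle it correctly. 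Two cosmetic slips you should fix when writing this up: in the soundness direction, matching each world of $Y$ to a world of $X$ uses \textbf{g-back}, not \textbf{g-forth} (the forth condition goes the other way); and in the inductive step the agreement on atoms for a pair $\set{\aworld}\abisim^{j}_1\set{\aworldprim}$ with $j>0$ requires an appeal to \textbf{refine} to descend to $\abisim^0_1$, since \textbf{atoms} is only stated at level $0$. Neither affects the validity of the argument.
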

So, $\equiv_{m,k}^{\apropset}$ and $\bisrel{m,k}{\apropset}$ are identical relations
(see the definitions for $\equiv_{m,k}^{\apropset}$ and $\GML[m,k,\apropset]$ in~\Cref{section-sc-into-gml})
 and there is a finite
set $\set{\aformulater_1, \ldots, \aformulater_{Q}} \subseteq \GML[m,k,\apropset]$ such that
\begin{itemize}
\item $\aformulater_1 \vee \cdots \vee \aformulater_{Q}$ is valid, and each $\aformulater_i$ is satisfiable,
\item for all $i \neq j \in \interval{1}{Q}$, $\aformulater_i \wedge \aformulater_j$ is unsatisfiable,
\item $\pair{\amodel}{\aworld} \equiv_{m,k}^{\apropset} \pair{\amodel'}{\aworld'}$ iff there is $i$ such that
      $\pair{\amodel}{\aworld} \models \aformulater_i$ and $\pair{\amodel'}{\aworld'} \models \aformulater_i$.
\end{itemize}
 Hence, $\aformulater_i$ characterises one equivalence class of  $\equiv_{m,k}^{\apropset}$ (or equivalently of $\bisrel{m,k}{\apropset}$).

 In what follows, recall that $\arelation|_{\aworld} \egdef \{\pair{\aworld'}{\aworld''} \in \arelation \mid \aworld' \subseteq \arelation^*(\aworld)\}$.

\begin{lemma}\label{lemma:gbisimcutrel}
Let $m \in \Nat$, $k \in \Nat^{>0}$ and $\apropset \subseteq_{\fin} \varprop$. Let $\amodel = \triple{\worlds}{\arelation}{\avaluation}$ be a finite forest and let $\aworld \in \worlds$.
Then, $\amodel,\aworld \bisrel{m,k}{\apropset} \triple{\worlds}{\arelation|_{\aworld}}{\avaluation},\aworld$.
\end{lemma}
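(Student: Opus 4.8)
The plan is to exhibit an explicit g-bisimulation up to $\triple{m}{k}{\apropset}$ between the two forests. Write $\amodel' \egdef \triple{\worlds}{\arelation|_{\aworld}}{\avaluation}$, and let $D \egdef \arelation^*(\aworld)$ be the set of worlds reachable from $\aworld$ in $\amodel$ (with $\aworld$ itself included). For every $j \in \interval{0}{m}$ and every $i \in \interval{1}{k}$, set
\[
  \abisim^j_i \egdef \set{\pair{X}{X} \mid X \subseteq D \text{ and } \card{X} = i},
\]
i.e.\ each $\abisim^j_i$ is the diagonal relation on the $i$-element subsets of $D$ (note it does not depend on $j$). I claim that $\abisim^0,\dots,\abisim^m$ is a g-bisimulation up to $\triple{m}{k}{\apropset}$ with $\pair{\{\aworld\}}{\{\aworld\}} \in \abisim^m_1$, which gives the result.

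First I would isolate the one structural fact that makes this work: for every $\aworld' \in D$ we have $\arelation|_{\aworld}(\aworld') = \arelation(\aworld')$ and $\arelation(\aworld') \subseteq D$. Indeed, since $\pair{\worlds}{\arelation}$ is a forest, every child of a world in $D$ is again in $D$, so every edge out of $\aworld'$ is retained by the restriction and all its endpoints lie in $D$. (For $\aworld' \notin D$ one has $\arelation|_{\aworld}(\aworld') = \emptyset$, but worlds outside $D$ never occur in any $\abisim^j_i$.)

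Granting this, the verification of clauses \ref{gbisim:init}--\ref{gbisim:g-back} is routine. Clauses \ref{gbisim:init}, \ref{gbisim:refine}, \ref{gbisim:size} hold by construction: the sequence is constant in $j$, the cardinalities are built in, and $\abisim^0_1 \ni \pair{\{\aworld\}}{\{\aworld\}}$ because $\aworld \in D$. Clause \ref{gbisim:atoms} holds because the two forests carry the same valuation $\avaluation$, and clauses \ref{gbisim:g-forth}, \ref{gbisim:g-back} hold because each $\abisim^j_i$ is a diagonal. For \ref{gbisim:m-forth}, if $\pair{\{\aworld'\}}{\{\aworld'\}} \in \abisim^{j+1}_1$ then $\aworld' \in D$, so any $X \subseteq \arelation(\aworld')$ with $\card{X} \in \interval{1}{k}$ satisfies $X \subseteq \arelation|_{\aworld}(\aworld')$ and $X \subseteq D$ by the structural fact, whence $\pair{X}{X} \in \abisim^j_{\card{X}}$ witnesses the condition; \ref{gbisim:m-back} is symmetric, using $\arelation|_{\aworld}(\aworld') \subseteq \arelation(\aworld')$. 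Since $\aworld \in D$ gives $\pair{\{\aworld\}}{\{\aworld\}} \in \abisim^m_1$, we conclude $\amodel,\aworld \bisrel{m,k}{\apropset} \amodel',\aworld$.

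There is essentially no obstacle here: the content of the lemma is entirely captured by the observation that below $\aworld$ the accessibility relations of $\amodel$ and $\amodel'$ coincide, and a g-bisimulation starting at $\aworld$ only ever inspects worlds below $\aworld$. The only points requiring a little care are the degenerate case $m = 0$ (where the sequence is the single tuple $\abisim^0$) and keeping the cardinality bookkeeping of clause \ref{gbisim:size} consistent throughout — neither involves real work.
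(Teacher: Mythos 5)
Your proof is correct. Every clause of the g-bisimulation definition is indeed satisfied by the constant sequence of diagonal relations on subsets of $D = \arelation^*(\aworld)$: the crucial facts are exactly the two you isolate, namely $\arelation|_{\aworld}(\aworld') = \arelation(\aworld')$ for $\aworld' \in D$ and closure of $D$ under children (the latter, incidentally, is immediate from the definition of $\arelation^*(\aworld)$ and does not actually need the forest structure, so your appeal to the forest property is harmless but superfluous).

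Your route differs from the paper's in a way worth noting. The paper does not build a witness from scratch: it invokes the fact that $\bisrel{m,k}{\apropset}$ is an equivalence relation (Proposition on g-bisimulations, item 3), hence reflexive, takes \emph{some} g-bisimulation $\abisim^0,\dots,\abisim^m$ between $\amodel$ and itself with $\{\aworld\}\,\abisim^m_1\,\{\aworld\}$, and then restricts each $\abisim^j_i$ to pairs $\pair{X}{Y}$ with $X \cup Y \subseteq \arelation^*(\aworld)$, checking that the restricted sequence is a g-bisimulation between $\amodel$ and $\triple{\worlds}{\arelation|_{\aworld}}{\avaluation}$. Your construction is the fully explicit, maximal diagonal instance of this idea: it buys self-containedness (no reliance on the cited equivalence-relation property) and a verification in which every clause is trivial by construction, at the cost of writing the relations out; the paper's version is shorter given the cited result but leaves the (easy) verification of the restricted relations implicit. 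Both arguments rest on the same observation — below $\aworld$ the two accessibility relations coincide, and a g-bisimulation rooted at $\aworld$ never inspects anything else — so either is acceptable.
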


\begin{proof}
As $\bisrel{m,k}{\apropset}$ is an equivalence relation (Proposition~\ref{prop:rijke00}.3), it is reflexive and hence $\amodel,\aworld \bisrel{m,k}{\apropset} \amodel,\aworld$.
There is therefore a g-bisimulation up to $\triple{m}{k}{\apropset}$ between $\amodel$ and itself, say $\abisim^0,\dots,\abisim^m$ where $\abisim^i = (\abisim^i_1,\dots,\abisim^i_k)$ for every $i \in \interval{0}{m}$, such that $\{\aworld\} \abisim^m_1 \{\aworld\}$.
Consider now the restriction of $\abisim^i_j$, where $i \in \interval{0}{m}$ and $j \in \interval{1}{k}$, to those sets where every element is reachable from $\aworld$. Formally, we define
$\widehat{\abisim^i_j} = \{ \pair{X}{Y} \in \abisim^i_j \mid X \cup Y \subseteq \arelation^*(\aworld) \}$.
It is easy to show that $\widehat{\abisim^0},\dots,\widehat{\abisim^m}$, where $\widehat{\abisim^i} = (\widehat{\abisim^i_1},\dots,\widehat{\abisim^i_k})$ for every $i \in \interval{0}{m}$,
is a g-bisimulation up to $\triple{m}{k}{\apropset}$ between $\amodel$ and $\triple{\worlds}{\arelation|_{\aworld}}{\avaluation}$.
Moreover, as  $\{\aworld\} \widehat{\abisim^m_1} \{\aworld\}$ by definition, we conclude that $\amodel,\aworld \bisrel{m,k}{\apropset} \triple{\worlds}{\arelation|_{\aworld}}{\avaluation},\aworld$.
\end{proof}

\subsection{Proof of \Cref{lemma:sabotage-elimination}}
In the following, we denote with
$\atowertypeset{m,k}{\apropset}$
the set $\atypeset{m,\amap(m,k)}{\apropset}$.
Then, notice that 
$\atowertypeset{m,k}{\apropset} = \atypeset{0,k}{\apropset}$ for $m=0$, and otherwise ($m \geq 1$)
$\atowertypeset{m,k}{\apropset} = \atypeset{m,k \times (\card{\atowertypeset{m-1,k}{\apropset}}+1)}{\apropset}$.
Since $\atypeset{m',k'}{\apropset'}$ is finite for all $m',k'$ and $\apropset'$,
$\atowertypeset{m,k}{\apropset}$ is well-defined and finite.
\Cref{lemma:sabotage-elimination} can be reformulated using $\atowertypeset{m,k}{\apropset}$ as follows.

\begin{lemma*}
Let $m,k\in \Nat$ and $\apropset \subseteq_\fin \varprop$.
Let $\pair{\amodel}{\aworld}$, $\pair{\amodel'}{\aworld'}$ be pointed forests such that $\amodel= \triple{\worlds}{\arelation}{\avaluation}$ and $\amodel' = \triple{\worlds'}{\arelation'}{\avaluation'}$.
If $\{\pair{\amodel}{\aworld}, \pair{\amodel'}{\aworld'}\} \subseteq \atermset$ for some $\atermset \in \atowertypeset{m,k}{\apropset}$,  then
for every $\arelation_1 \subseteq \arelation$ there is $\arelation_1' \subseteq \arelation'$ s.t.\
$\pair{\triple{\worlds}{\arelation_1}{\avaluation}}{\aworld}
\typeeqclass{m,k}{\apropset}
\pair{\triple{\worlds'}{\arelation_1'}{\avaluation'}}{\aworld'}$, and
if $\arelation_1(\aworld) = \arelation(\aworld)$ then $\arelation_1'(\aworld') = \arelation'(\aworld')$.
\end{lemma*}

\begin{proof}
In the case $k=0$, any formula in $\GML[m,0,\apropset]$ is equivalent to
a formula in the propositional calculus built over propositional variables
in $\apropset$ as $\Gdiamond{\geq 0} \aformulabis$ is logically equivalent to
$\top$. Hence, the lemma trivially holds.

Otherwise  ($k \geq 1$), we prove semantically the lemma
as $\typeeqclass{m,k}{\apropset}$ and $\bisrel{m,k}{\apropset}$ are identical relations.
The proof is by induction on the modal depth $m$.
The induction step is articulated in three main steps:
\begin{description}
\item[(I)] definition and proof of various properties of the two models,
\item[(II)] definition of a strategy to reduce $\arelation'$ to $\arelation_1'$ that closely follows the relationship between $\arelation$ and $\arelation_1$ with respect to the children of $\aworld$ and,
\item[(III)] a proof that the relation $\arelation_1'$ is such that $\triple{\worlds}{\arelation_1}{\avaluation},\aworld \bisrel{m,k}{\apropset} \triple{\worlds'}{\arelation_1'}{\avaluation'},\aworld'$.
By construction, we also obtain that if $\arelation_1(\aworld) = \arelation(\aworld)$ then $\arelation_1'(\aworld') = \arelation'(\aworld')$.
\end{description}
Let us begin with the base case.
\begin{description}
\item[Base case: $m = 0$.]
The base case is straightforward from the following property of g-bisimulations.
When $m = 0$, given $\widehat{\amodel} = \triple{\widehat{\worlds}}{\widehat{\arelation}}{\widehat{\avaluation}}$,
$\widehat{\arelation}_1 \subseteq \widehat{\arelation}$, $\widehat{\aworld} \in \widehat{\worlds}$
and $\widehat{k} \in \Nat$, we have  $\widehat{\amodel},\widehat{\aworld} \bisrel{0,\widehat{k}}{\apropset}
\triple{\widehat{\worlds}}{\widehat{\arelation_1}}{\widehat{\avaluation}},\widehat{\aworld}$.
This statement holds as it can be easily shown that the set of relations $\abisim^0 = (\abisim^0_1,\dots,\abisim^0_{\widehat{k}})$ where $\abisim^0_1 = \{(\aworld,\aworld)\}$ and $\abisim^0_j = \emptyset$ for $j \in \interval{2}{\widehat{k}}$
satisfies all the requirements for being a g-bisimulation.

Then, with respect to the statement of the lemma, by definition, we have
$\triple{\worlds}{\arelation_1}{\avaluation},\aworld \bisrel{0,k}{\apropset} \amodel,\aworld$.
Now, by definition $\atowertypeset{0,k}{\apropset} = \atypeset{0,k}{\apropset}$ and by hypothesis there is $\atermset \in \atypeset{0,k}{\apropset}$ such that $\{\pair{\amodel}{\aworld}, \pair{\amodel'}{\aworld'}\} \subseteq \atermset$.
By definition of $\atypeset{0,k}{\apropset}$, we have 
$$
\amodel,\aworld \bisrel{0,k}{\apropset}\amodel',\aworld'.
$$
As $\bisrel{0,k}{\apropset}$ is an equivalence relation, we conclude $\triple{\worlds}{\arelation_1}{\avaluation},\aworld \bisrel{0,k}{\apropset} \amodel',\aworld'$ and therefore it is sufficient to take $\arelation_1' \egdef \arelation'$
to end the proof. Note that in this case, $\arelation_1'(\aworld') = \arelation'(\aworld')$ holds too.

\item[Induction case.]
In particular, we have $m > 1$ and $\atowertypeset{m,k}{\apropset} = \atypeset{m,k\times (\card{\atowertypeset{m-1,k}{\apropset}}+1)}{\apropset}$.
Moreover, by hypothesis there exists
$\atermset \in \atypeset{m,k\times (\card{\atowertypeset{m-1,k}{\apropset}}+1)}{\apropset}$
such that $\{\pair{\amodel}{\aworld}, \pair{\amodel'}{\aworld'}\} \subseteq \atermset$.
By definition, we have
$$
\amodel,\aworld \bisrel{m,k \times (\card{\atowertypeset{m-1,k}{\apropset}}+1)}{\apropset} \amodel',\aworld'.
$$

Let us explain the main idea of the proof. Let us pick  one child $\aworld_1$ of $\aworld$ in $\amodel$.
Obviously, the pointed forest $\pair{\amodel}{\aworld_1}$ belongs to
a specific equivalence class $\atermset \in \atowertypeset{m-1,k}{\apropset}$.
The effect of reducing $\arelation$
to $\arelation_1$ is that $\aworld_1$, together with the updated model,
``jumps''\footnote{We always put the word ``jump'' in quotes as it is used in an informal way.} to an equivalence class
$\atermset_1 \in \atypeset{m-1,k}{\apropset}$.
Obviously, $\pair{\amodel}{\aworld_1}$ already belongs to a class in $\atypeset{m-1,k}{\apropset}$.
However (from the statement of the lemma), we are only interested in $\atypeset{m-1,k}{\apropset}$ when considering $\arelation_1$, whereas we focus on $\atowertypeset{m-1,k}{\apropset}$ when studying $\arelation$.
To prove the result, we have to show that there is a child $\aworld_1'$ of $\aworld'$ in $\amodel'$ so that $\pair{\amodel'}{\aworld_1'}$ is in the same equivalence class $\atermset$ of $\pair{\amodel}{\aworld_1}$ and
to show that it is possible to update $\arelation'$ to make $\aworld_1'$ (together with the updated model) ``jump'' to the equivalence class $\atermset_1$.
However, we need to do this for all the children of $\aworld$ and $\aworld'$, respecting the constraints of being
a g-bisimulation.
 The key step is to show that the graded rank $k \times (\card{\atowertypeset{m-1,k}{\apropset}}+1)$
is all we need to find enough children in $\arelation'(\aworld')$ and to be able to construct a relation
$\arelation_1'$ so that the resulting models are g-bisimilar up to $\triple{m}{k}{\apropset}$.
Let us now formalise the proof, which requires some intermediate steps that are below $\boxed{\text{highlighted}}$.

We start by considering a single equivalence class $\atermset \in \atowertypeset{m-1,k}{\apropset}$ (in fact, our proof is done modularly on these classes).
We introduce the two following sets:
\begin{itemize}
\item $\arelation(\aworld)|_{\atermset} \egdef \{ \aworld_1 \in \arelation(\aworld) \mid
\pair{\amodel}{\aworld_1} \in \atermset\}$.
\item $\arelation'(\aworld')|_{\atermset} \egdef \{ \aworld_1' \in \arelation'(\aworld') \mid \pair{\amodel'}{\aworld_1'} \in \atermset\}$.
\end{itemize}
It is fairly simple to see that the following property holds:
\begin{nscenter}
$\boxed{\text{\rulelab{($\star$)}{SDelim:cardequiv}\qquad\qquad
$\min(\card{\arelation(\aworld)|_\atermset}, k \times (\card{\atowertypeset{m-1,k}{\apropset}}+1)) = \min(\card{\arelation'(\aworld')|_\atermset}, k \times (\card{\atowertypeset{m-1,k}{\apropset}}+1))$}}$
\end{nscenter}
Indeed, ad absurdum, suppose  that
\begin{nscenter}
\rulelab{($\dagger$)}{SDelim:abshyp1}\qquad\qquad$\card{\arelation(\aworld)|_\atermset} < k \times 
(\card{\atowertypeset{m-1,k}{\apropset}}+1)$ and $\card{\arelation(\aworld)|_\atermset} < \card{\arelation'(\aworld')|_\atermset}$
\end{nscenter}

The other case
$\card{\arelation'(\aworld')|_\atermset} < k \times (\card{\atowertypeset{m-1,k}{\apropset}}+1)$ 
and $\card{\arelation'(\aworld')|_\atermset} < \card{\arelation(\aworld)|_\atermset}$ is analogous and therefore
its treatment is omitted below.
Since it holds by hypothesis that $\amodel,\aworld \bisrel{m,k \times (\card{\atowertypeset{m-1,k}{\apropset}}+1)}{\apropset} \amodel',\aworld'$,
there is a g-bisimulation up to $\triple{m}{k \times (\card{\atowertypeset{m-1,k}{\apropset}}+1)}{\apropset}$ between $\amodel$
and $\amodel'$, say $\abisim^0,\dots,\abisim^m$, such that $\{\aworld\} \abisim^m_1 \{\aworld'\}$.
\begin{itemize}
\item From (\ref{gbisim:m-back}), by taking $Y$ as a subset of
$\arelation'(\aworld')|_\atermset$ such that
\begin{nscenter}
$\card{Y} = \min(\card{\arelation'(\aworld')|_\atermset},k \times( \card{\atowertypeset{m-1,k}{\apropset}}+1))$,
\end{nscenter}
it must hold that there is a subset $X \subseteq \arelation(\aworld)$ such that $X \abisim^{m-1}_{\card{Y}} Y$.
\item From (\ref{gbisim:size}), $\card{X} = \card{Y}$. Hence, by \ref{SDelim:abshyp1} there must be a world $\aworld_2 \in X$ s.t.\ $\pair{\amodel}{\aworld_2} \not \in \atermset$.
\item From (\ref{gbisim:g-forth}), there is $\aworld_2' \in Y$ such that $\{\aworld_2\} \abisim^{m-1}_1 \{\aworld_2'\}$.
\item As $\{\aworld_2\} \abisim^{m-1}_1 \{\aworld_2'\}$, from the definition of g-bisimulation it holds that
\begin{nscenter}
$\amodel,\aworld_2 \bisrel{m-1,k \times (\card{\atowertypeset{m-1,k}{\apropset}}+1)}{\apropset} \amodel',\aworld_2'$.
\end{nscenter}
\item Again by definition of g-bisimulation, it is easy to see that if two models are in the same equivalence class w.r.t. 
$\bisrel{m',k'}{\apropset}$ then they are in the same equivalence class w.r.t. $\bisrel{m',k''}{\apropset}$ for every $k'' \leq k'$.
Therefore $\amodel,\aworld_2 \bisrel{m-1,k \times (\card{\atowertypeset{m-2,k}{\apropset}}+1)}{\apropset} \amodel',\aworld_2'$.
Notice that the set of equivalence classes induced by $\bisrel{m-1,k \times (\card{\atowertypeset{m-2,k}{\apropset}}+1)}{\apropset}$ is $\atowertypeset{m-1,k}{\apropset}$.
We conclude that $\pair{\amodel}{\aworld_2}$ and $\pair{\amodel'}{\aworld_2'}$ belong to the same class in $\atowertypeset{m-1,k}{\apropset}$.
However, this leads to a contradiction
as we have $\aworld_2 \not\in \atermset$ and $\aworld_2' \in \atermset$ (where $\atermset \in \atowertypeset{m-1,k}{\apropset}$).
\end{itemize}
This concludes the proof of \ref{SDelim:cardequiv}.

Given an equivalence class $\atermset'$ in $\atypeset{m-1,k}{\apropset}$, we define the set below
\begin{nscenter}
$\arelation_1(\aworld)|_{\atermset \blacktriangleright \atermset'} \egdef \arelation(\aworld)|_{\atermset} \cap \arelation_1(\aworld)|_{\atermset'}$.
\end{nscenter}
Following the proof idea presented above, a world $ \aworld_1 \in \arelation_1(\aworld)|_{\atermset \blacktriangleright \atermset'}$ is a child of $\aworld$ such that $\pair{\amodel}{\aworld_1}$ is in the class $\atermset$ and ``jumps'' to the class $\atermset'$ when updating the accessibility relation from $\arelation$ to $\arelation_1$.
In what follows, we denote with $\arelation|_{\aworld_1}$ the restriction of $\arelation$ to those worlds reachable from $\aworld_1$, i.e. the set $\{\pair{\aworld_2}{\aworld_3} \in \arelation \mid \{\aworld_2,\aworld_3\} \subseteq \arelation^*(\aworld_1)\}$, as defined in the statement of Lemma~\ref{lemma:gbisimcutrel}.
We also consider similar restrictions for $\arelation'$ and $\arelation_1'$.
 We are interested in the following key property:
\begin{nscenter}
$\boxed{\text{\rulelab{($\star\star$)}{SDelim:updatetype}\qquad\qquad
$\begin{aligned}[t]&\text{for every }\aworld_1 \in \arelation_1(\aworld)|_{\atermset \blacktriangleright \atermset'} \text{ and } \aworld_1' \in \arelation'(\aworld')|_\atermset \text{ there is }
\arelation_{1,\aworld_1'}' \subseteq \arelation'|_{\aworld_1'}\\
&\text{such that } \triple{\worlds}{\arelation_1|_{\aworld_1}}{\avaluation},\aworld_1 \bisrel{m-1,k}{\apropset} \triple{\worlds'}{\arelation_{1,\aworld_1'}'}{\avaluation'},\aworld_1'
\end{aligned}
$}}$
\end{nscenter}
Let us prove~\ref{SDelim:updatetype}.
By definition, we have $\aworld_1 \in \arelation(\aworld)|_{\atermset}$ and $\aworld_1' \in \arelation'(\aworld')|_{\atermset}$.
Therefore, $\{\pair{\amodel}{\aworld_1}, \pair{\amodel'}{\aworld_1'}\} \subseteq \atermset \in \atowertypeset{m-1,k}{\apropset}$.
By Lemma~\ref{lemma:gbisimcutrel}, it follows that $\triple{\worlds}{\arelation|_{\aworld_1}}{\avaluation}, \aworld_1$
and  $\triple{\worlds'}{\arelation'|_{\aworld_1'}}{\avaluation'}, \aworld'_1$ are also in $\atermset$.
Moreover, by definition $\arelation_1|_{\aworld_1} \subseteq \arelation|_{\aworld_1}$.
Then, we can use the induction hypothesis (notice that the modal degree is now $m-1$)
to conclude that there is $\arelation_{1,\aworld_1'}' \subseteq \arelation'|_{\aworld_1'}$
such that
$\triple{\worlds}{\arelation_1|_{\aworld_1}}{\avaluation},\aworld_1 \bisrel{m-1,k}{\apropset} \triple{\worlds'}{\arelation_{1,\aworld_1'}'}{\avaluation'},\aworld_1'$,
concluding the proof of \ref{SDelim:updatetype}.
This intermediate result gives us an important information: every single ``jump'' (as informally expressed above) done while updating the accessibility relation of $\amodel$ can be mimicked by updating $\amodel'$. An important missing piece is proving that all jumps can be simultaneously mimicked.
In order to prove this, we start by considering
the following partition of $\arelation(\aworld)|_\atermset$:
\begin{nscenter}
$\arelation(\aworld)_{\blacktriangleright\arelation_1}^\atermset \egdef \{ \arelation_1(\aworld)|_{\atermset \blacktriangleright \atermset'} \mid \atermset' \in \atypeset{m-1,k}{\apropset} \} \cup \{ \arelation(\aworld)|_{\atermset} \setminus \arelation_1(\aworld)\}.$
\end{nscenter}
Informally, $\arelation(\aworld)_{\blacktriangleright\arelation_1}^\atermset$ partitions the children of $\aworld$ in $\arelation(\aworld)|_\atermset$ in different sets depending on what is the set $\atermset' \in \atypeset{m-1,k}{\apropset}$ they ``jump'' to. One additional set, i.e. $\arelation(\aworld)|_{\atermset} \setminus \arelation_1(\aworld)$, contains all the children of $\aworld$ in  $\arelation(\aworld)|_\atermset$ that are lost when updating $\arelation$ to $\arelation_1$.
To be completely formal, let us first prove that $\arelation(\aworld)_{\blacktriangleright\arelation_1}^\atermset$ is a partition of $\arelation(\aworld)|_\atermset$.
Indeed, $\arelation(\aworld)|_\atermset$ can be written as $(\arelation(\aworld)|_\atermset \cap \arelation_1(\aworld)) \cup (\arelation(\aworld)|_\atermset \setminus \arelation_1(\aworld))$. Moreover, by definition of $\atypeset{m-1,k}{\apropset}$ as the quotient set of $\bisrel{m-1,k}{\apropset}$, we have
$\arelation_1(\aworld) = \bigcup_{\atermset' \in \atypeset{m-1,k}{\apropset}} \arelation_1(\aworld)|_{\atermset'}$.
Lastly, $\arelation(\aworld)|_\atermset \cap \bigcup_{\atermset' \in \atypeset{m-1,k}{\apropset}} \arelation_1(\aworld)|_{\atermset'}$ is equivalent to $ \bigcup_{\atermset' \in \atypeset{m-1,k}{\apropset}} (\arelation(\aworld)|_\atermset \cap \arelation_1(\aworld)|_{\atermset'})$, which
leads to the definition of the partition $\arelation(\aworld)_{\blacktriangleright\arelation_1}^\atermset$  from the definition of
$\arelation_1(\aworld)|_{\atermset \blacktriangleright \atermset'}$ together with the remaining component $\arelation(\aworld)|_{\atermset} \setminus \arelation_1(\aworld)$. The figure below presents schematically  the results we have shown so far, only considering the children of $\aworld$ in
$\arelation(\aworld)|_\atermset$ (on the left) and the children of $\aworld'$ in $\arelation'(\aworld')|_\atermset$ (on the right).

\begin{figure}[h]
\centering
\begin{tikzpicture}[auto]
\node (w) {$\aworld$};

\node (a1) [below left = 1cm and 2cm of w] {};
\node (a2) [below right = 1cm and 2cm of w] {};

\node (aa) [right = 4cm of a1] {};
\node (aaa) [below = 0.3cm of aa] {};

\node (h1) [right = 0.875cm of a1.center] {};
\node (h1a) [below = 0.3cm of h1] {};

\node (h2) [right = 1.75cm of a1.center] {};
\node (h2a) [below = 0.3cm of h2] {};

\node (h3) [left = 1.7 of aa.center] {};
\node (h3a) [below = 0.3cm of h3] {};

\node (a1a) [below = 0.3cm of a1] {};

\node (b1) [right = 1cm of a2] {};
\node (w2) [above right= 1cm and 2cm of b1] {$\aworld'$};
\node (b2) [below right = 1cm and 2cm of w2] {};

\draw (w) -- (a1.center);
\draw (a1.center) -- node[above=0.2cm] {$\atermset$} (a2.center);
\draw (a2.center) -- (w);
\draw (aa.center) -- (aaa.center);
\draw (a1.center) -- (a1a.center);
\draw (a1a.center) -- (aaa.center);

\draw (h1.center) -- (h1a.center);
\draw (h2.center) -- (h2a.center);
\draw (h3.center) -- (h3a.center);

\draw (w2) -- (b1.center);
\draw (b1.center) -- node[above=0.2cm] {$\atermset$} (b2.center);
\draw (b2.center) -- (w2);

\node (r1) [below right = -1pt and -2pt of a1] {$\scriptstyle{\atermset \blacktriangleright \atermset_1}$};
\node (r2) [below right = -1pt and -4pt of h1] {$\scriptstyle{\atermset \blacktriangleright \atermset_2}$};
\node (r2) [below right = 0pt and -3pt of h2] {$\scriptstyle{\dots}$};
\node (r2) [below right = -5pt and -4pt of h3] {$\scriptstyle{\atermset \blacktriangleright \atermset_{\card{\atypeset{m-1,k}{\apropset}}}}$};

\draw [decorate,decoration={brace,amplitude=10pt}]
(aaa.center) -- node[below = 0.3cm]
{\footnotesize{$\{ \arelation_1(\aworld)|_{\atermset \blacktriangleright \atermset'} \mid \atermset' \in \atypeset{m-1,k}{\apropset} \}$}} (a1a.center);

\node (h4) [below left = -5pt and 0cm of a2] {};
\node (h4b) [below = 2cm of h4] {};
\node (h4c) [left = 0.1cm of h4b] {\footnotesize{$\arelation(\aworld)|_{\atermset} \setminus \arelation_1(\aworld)$}};

\draw[->] (h4b.center) -- (h4);
\draw (h4b.center) -- (h4c);

\draw [decorate,decoration={brace,amplitude=10pt}]
(b2.center) -- node[below = 0.3cm, align=left]
{\footnotesize{\shortstack{\ref{SDelim:cardequiv}: if $\card{\arelation(\aworld)|_\atermset} < k \times (\card{\atowertypeset{m-1,k}{\apropset}}+1)$ then\\ there are $\card{\arelation(\aworld)|_\atermset}$ children,
otherwise there\\ are at least $k \times (\card{\atowertypeset{m-1,k}{\apropset}}+1)$ children.}}} (b1.center);
\end{tikzpicture}
\end{figure}

To work towards the definition of $\arelation_1'$ (as in the statement of the lemma), we now deal with the children in $\arelation'(\aworld')|_{\atermset}$ and find suitable subsets of $\arelation_1'$ in order to define a partition of $\arelation'(\aworld')|_{\atermset}$ that is similar to
$\arelation(\aworld)_{\blacktriangleright\arelation_1}^\atermset$ (where ``similar'' here means that, later,
we will be able to construct a g-bisimulation using this partition). More precisely, we show that:
\vspace{5pt}
\fbox{
\begin{minipage}{0.96\linewidth}
\noindent\rulelab{($\star\star\star$)}{SDelim:construction} it is possible to \emph{construct} a family of sets
\begin{nscenter}
$\begin{aligned}[t] &\arelation'(\aworld')|_{\atermset \leadsto \atermset'} \qquad\qquad \text{for every } \atermset' \in \atypeset{m-1,k}{\apropset}\\
&\agarbage_\atermset
\end{aligned}$
\end{nscenter}
satisfying the following properties.
\begin{enumerate}
\item For every $\atermset' \in \atypeset{m-1,k}{\apropset}$,
$\arelation'(\aworld')|_{\atermset \leadsto \atermset'}$ is a set of pairs $\pair{\arelation'_{1,\aworld_1'}}{\aworld_1'}$ s.t.\
$\aworld_1' \in \arelation'(\aworld')|_{\atermset}$,
$\arelation'_{1,\aworld_1'} {\subseteq} \arelation' $,
$\pair{\triple{\worlds'}{\arelation'_{1,\aworld_1'}}{\avaluation'}}{\aworld_1'} \in \atermset'$,
and for all
$\pair{\aworld_2'}{\aworld_3'} \in \arelation'_{1,\aworld_1'}$, $\{\aworld_2',\aworld_3'\} \subseteq {\arelation'}^*(\aworld_1')$.
\item $\agarbage_\atermset \subseteq \arelation'(\aworld')|_\atermset$.
\item Every $\aworld_1' \in \arelation'(\aworld')|_\atermset$ appears in exactly one set among $\arelation'(\aworld')|_{\atermset \leadsto \atermset'}$ (for every $\atermset' \in \atypeset{m-1,k}{\apropset}$) and
$\agarbage_\atermset$. Then, these sets underlie a partition of $\arelation'(\aworld')|_\atermset$.
\item For every $\atermset' \in \atypeset{m-1,k}{\apropset}$, $\min(\card{\arelation_1(\aworld)|_{\atermset \blacktriangleright \atermset'}}, k) = \min(\card{\arelation'(\aworld')|_{\atermset \leadsto \atermset'}},k)$.
\item $\min(\card{\arelation(\aworld)|_{\atermset} \setminus \arelation_1(\aworld)},k) = \min(\card{\agarbage_\atermset},k)$.
\end{enumerate}
\end{minipage}}
\vspace{5pt}

Let us informally explain these properties (apart from the second and third properties, which are self-explanatory).
The first property basically requires us to modify $\arelation'$ so that the children of $\arelation'(\aworld')|_{\atermset}$ ``jumps'' to specific sets in $\atypeset{m-1,k}{\apropset}$, in line with the developments that lead to the proof of \ref{SDelim:updatetype}.
Instead, the set $\agarbage_\atermset$  is dedicated to those worlds that should be made unaccessible from $\aworld'$.
The updates to $\arelation'$ cannot be arbitrary, and this is where the fourth and fifth properties come into play.
These properties impose cardinality constraints on the sets we construct, in line with the graded rank $k$ that is used in the
equivalence relation $\bisrel{m,k}{\apropset}$.
For example, suppose that for a given set $\atermset'$ we have $\card{\arelation_1(\aworld)|_{\atermset \blacktriangleright \atermset'}} < k$.
Then, we need to select exactly $\card{\arelation_1(\aworld)|_{\atermset \blacktriangleright \atermset'}}$ children in $\arelation'(\aworld')|_\atermset$ and modify $\arelation'$ so that all of them can be used to define the set $\arelation'(\aworld')|_{\atermset \leadsto \atermset'}$.
If instead $\card{\arelation_1(\aworld)|_{\atermset \blacktriangleright \atermset'}} \geq k$, it is possible to select an arbitrary amount of children from $\arelation'(\aworld')|_\atermset$, as long as they are at least $k$.
Again, after selecting these children we need to modify $\arelation'$ so that they define the set $\arelation'(\aworld')|_{\atermset \leadsto \atermset'}$.
To comply with these two last properties we rely on \ref{SDelim:cardequiv}.
The proof of \ref{SDelim:construction} distinguishes two cases (which are very similar in substance):
\begin{description}
\item[\labelitemi\ $\card{\arelation(\aworld)|_{\atermset}} < k \times (\card{\atowertypeset{m-1,k}{\apropset}}+1)$.] By~\ref{SDelim:cardequiv}
it follows that $\card{\arelation'(\aworld')|_{\atermset}} = \card{\arelation(\aworld)|_{\atermset}}$. This case is the easiest one.
Consider a bijection $\amap: \arelation(\aworld)|_{\atermset} \to \arelation'(\aworld')|_{\atermset}$.
Then define $\agarbage_\atermset$ as the set $\{ \amap(\aworld_1) \mid \aworld_1 \in \arelation(\aworld)|_{\atermset} \setminus \arelation_1(\aworld)\}$.
By doing this, trivially the second and fifth properties required by \ref{SDelim:construction} are satisfied.
In order to define the sets of the form  $\arelation'(\aworld')|_{\atermset \leadsto \atermset'}$, we start by an initialisation
to the empty set $\emptyset$ and then we populate them.
Iteratively, for every $\atermset' \in \atypeset{m-1,k}{\apropset}$ and every $\aworld_1 \in \arelation_1(\aworld)|_{\atermset \blacktriangleright \atermset'}$, consider $\amap(\aworld_1)$. By~\ref{SDelim:updatetype}, there is
$\arelation_{1,\amap(\aworld_1)}' \subseteq \arelation'|_{\amap(\aworld_1)}$ such that $\triple{\worlds}{\arelation_1|_{\aworld_1}}{\avaluation},\aworld_1 \bisrel{m-1,k}{\apropset} \triple{\worlds'}{\arelation_{1,\amap(\aworld_1)}'}{\avaluation'},\amap(\aworld_1)$.
By Lemma~\ref{lemma:gbisimcutrel}, it follows that
$\triple{\worlds}{\arelation_1}{\avaluation},\aworld_1 \bisrel{m-1,k}{\apropset} \triple{\worlds'}{\arelation_{1,\amap(\aworld_1)}'}{\avaluation'},\amap(\aworld_1)$
and therefore $\pair{\triple{\worlds'}{\arelation_{1,\amap(\aworld_1)}'}{\avaluation'}}{\amap(\aworld_1)} \in \atermset'$.
Then, add to $\arelation'(\aworld')|_{\atermset \leadsto \atermset'}$ the pair $\pair{\arelation_{1,\amap(\aworld_1)}'}{\amap(\aworld_1)}$. Notice that this pair satisfies the constraints required in the first property of~\ref{SDelim:construction}.
After the iterations over all $\atermset' \in \atypeset{m-1,k}{\apropset}$ and over all
$\aworld_1 \in \arelation_1(\aworld)|_{\atermset \blacktriangleright \atermset'}$, the construction is completed.
As we are guided by the bijection $\amap$, we obtain that every $\aworld_1' \in \arelation'(\aworld')|_\atermset$ appears
in exactly one set among $\arelation'(\aworld')|_{\atermset \leadsto \atermset'}$  for some $\atermset' \in \atypeset{m-1,k}{\apropset}$ or in
$\agarbage_\atermset$ (condition 3 of~\ref{SDelim:construction}).
Moreover (again thanks to the bijection $\amap$) it holds that for every  $\atermset' \in \atypeset{m-1,k}{\apropset}$,
$\card{\arelation'(\aworld')|_{\atermset \leadsto \atermset'}} = \card{\arelation_1(\aworld)|_{\atermset \blacktriangleright \atermset'}}$,
which implies condition 4 of~\ref{SDelim:construction}. Hence,~\ref{SDelim:construction} is proved.
\item[\labelitemi\ $\card{\arelation(\aworld)|_{\atermset}} \geq k \times (\card{\atowertypeset{m-1,k}{\apropset}}+1)$.]
By \ref{SDelim:cardequiv}, it follows that $\card{\arelation'(\aworld')|_{\atermset}} \geq  k \times (\card{\atowertypeset{m-1,k}{\apropset}}+1)$ too.
For this case, it is easy to show that there is a set in the partition
$\arelation(\aworld)_{\blacktriangleright\arelation_1}^\atermset$ of $\arelation(\aworld)|_\atermset$ that has cardinality at least $k$.
Indeed, ad absurdum, suppose  all the sets in $\arelation(\aworld)_{\blacktriangleright\arelation_1}^\atermset$ are of cardinality less than $k$.
As $\arelation(\aworld)_{\blacktriangleright\arelation_1}^\atermset$ partitions $\arelation(\aworld)|_\atermset$ and it contains $\card{\atypeset{m-1,}{\apropset}}+1$ sets
(where the $+1$ refers to the set $\arelation(\aworld)|_{\atermset} \setminus \arelation_1(\aworld)$) this would imply that
$\card{\arelation(\aworld)|_\atermset} \leq (k-1) \times (\card{\atypeset{m-1,k}{\apropset}}+1)$.
This leads to a contradiction as by definition $\card{\atypeset{m-1,k}{\apropset}} \leq \card{\atowertypeset{m-1,k}{\apropset}}$ and we are in the case where $\card{\arelation(\aworld)|_{\atermset}} \geq k \times (\card{\atowertypeset{m-1,k}{\apropset}}+1)$.
 Hence, let $\Omega$ be a set in
$\arelation(\aworld)_{\blacktriangleright\arelation_1}^\atermset$ that has at least $k$ elements.

 For the construction, we initialise all the sets $\arelation'(\aworld')|_{\atermset \leadsto \atermset'}$ and $\agarbage_\atermset$
to the empty set $\emptyset$ and we show how to populate them. Moreover, we introduce an auxiliary set $\Delta$ which is initially
equal to $\arelation'(\aworld')|_\atermset$ and keeps track of which elements of this latter set have not been already used in the
construction (and are hence available).
The set $\Delta$  can be understood as a copy of $\arelation'(\aworld')|_\atermset$ with unmarked elements and marked elements.
Unmarked elements are the worlds yet to be handled by the algorithm.
 Iteratively,
 \begin{enumerate}
 \item consider some $\atermset' \in \atypeset{m-1,k}{\apropset}$ such that $\arelation_1(\aworld)|_{\atermset \blacktriangleright \atermset'} \neq \Omega$ and that was not already treated;
 \item select $\beta = \min(\card{\arelation_1(\aworld)|_{\atermset \blacktriangleright \atermset'}},k)$ worlds, say $\aworld_1',\dots,\aworld_\beta'$ from the pool of available worlds $\Delta$.
 \item As in the previous case of the proof, by~\ref{SDelim:updatetype} we have that for each $i \in \interval{1}{\beta}$ there is
 $\arelation_{1,\aworld_i'}' \subseteq \arelation'|_{\aworld_i'}$ such that for every $\aworld_1 \in \arelation_1(\aworld)|_{\atermset \blacktriangleright \atermset'}$ it holds that
 \begin{nscenter}
 $\triple{\worlds}{\arelation_1|_{\aworld_1}}{\avaluation},\aworld_1 \bisrel{m-1,k}{\apropset} \triple{\worlds'}{\arelation_{1,\aworld_i'}'}{\avaluation'},\aworld_i'$.
 \end{nscenter}
 By Lemma~\ref{lemma:gbisimcutrel}, it follows also that
 $\triple{\worlds}{\arelation_1}{\avaluation},\aworld_1 \bisrel{m-1,k}{\apropset} \triple{\worlds'}{\arelation_{1,\aworld_i'}'}{\avaluation'},\aworld_i'$
 and therefore $\pair{\triple{\worlds'}{\arelation_{1,\aworld_i'}'}{\avaluation'}}{\aworld_i'} \in \atermset'$.
 Then, define the set $\arelation'(\aworld')|_{\atermset \leadsto \atermset'}$ as
 \begin{nscenter}
  $\{ \pair{\arelation_{1,\aworld_i'}'}{\aworld_i'} \mid i \in \interval{1}{\beta} \}$.
 \end{nscenter}
  Notice that by construction this set satisfies the first and fourth properties of~\ref{SDelim:construction}.
 \item Remove $\aworld_1',\dots,\aworld_\beta'$ from $\Delta$ (they will not be used in the successive iterations).
 \end{enumerate}
 After this iterative construction, only two sets still need to be handled:
$\Omega$ and $\arelation(\aworld)|_{\atermset} \setminus \arelation_1(\aworld)$.
In the case these two sets are different, we proceed as follows.
 \begin{enumerate}
    \item We start by considering $\arelation(\aworld)|_{\atermset} \setminus \arelation_1(\aworld)$, and we select
    $\beta = \min(\card{\arelation(\aworld)|_{\atermset} \setminus \arelation_1(\aworld)},k)$ worlds,
    say $\aworld_1',\dots,\aworld_\beta'$ from the pool of available worlds $\Delta$.
    \item We define $\agarbage_\atermset$ as $\{\aworld_1',\dots,\aworld_\beta'\}$ and remove these worlds from $\Delta$. 
By construction, $\agarbage_\atermset$ satisfies the second and fifth properties of~\ref{SDelim:construction}.
    \item We  consider $\Omega$. A few things should be noted now.
      \begin{itemize}
        \item There is $\atermset' \in \atypeset{m-1,k}{\apropset}$ such that
              $\Omega = \arelation_1(\aworld)|_{\atermset \blacktriangleright \atermset'}$, and by definition of $\Omega$,
              we have $\card{\arelation_1(\aworld)|_{\atermset \blacktriangleright \atermset'}} \geq k$.
        \item At this point of the construction, we dealt with $\card{\atypeset{m-1,k}{\apropset}}$ of the $\card{\atypeset{m-1,k}{\apropset}}+1$ sets
              needed for the construction. For each of these sets we used at most $k$ new worlds of $\arelation'(\aworld')|_\atermset$. Hence, as
              $\card{\arelation'(\aworld')|_\atermset} \geq k \times (\card{\atowertypeset{m-1,k}{\apropset}}+1)$
        and $\card{\atowertypeset{m-1,k}{\apropset}} \geq \card{\atypeset{m-1,k}{\apropset}}$, we conclude that $\Delta$ has at least $k$ elements.
      \end{itemize}
    \item Consider the set $\Delta$. By~\ref{SDelim:updatetype} we have that for each $\aworld_1' \in \Delta$ there is
    $\arelation_{1,\aworld_1'}' \subseteq \arelation'|_{\aworld_1'}$ such that for every $\aworld_1 \in \arelation_1(\aworld)|_{\atermset \blacktriangleright \atermset'}$ it holds that
    \begin{nscenter}
    $\triple{\worlds}{\arelation_1|_{\aworld_1}}{\avaluation},\aworld_1 \bisrel{m-1,k}{\apropset} \triple{\worlds'}{\arelation_{1,\aworld_1'}'}{\avaluation'},\aworld_1'$.
    \end{nscenter}
    By Lemma~\ref{lemma:gbisimcutrel}, it follows that
    $\triple{\worlds}{\arelation_1}{\avaluation},\aworld_1 \bisrel{m-1,k}{\apropset} \triple{\worlds'}{\arelation_{1,\aworld_1'}'}{\avaluation'},\aworld_1'$
    and therefore $\pair{\triple{\worlds'}{\arelation_{1,\aworld_1'}'}{\avaluation'}}{\aworld_1'} \in \atermset'$.
    Then, define the set $\arelation'(\aworld')|_{\atermset \leadsto \atermset'}$ as
    \begin{nscenter}
     $\{ \pair{\arelation_{1,\aworld_1'}'}{\aworld_1'} \mid \aworld_1' \in \Delta \}$.
    \end{nscenter}
     By construction, this set satisfies the first and fourth properties of~\ref{SDelim:construction} (recall that both $\arelation'(\aworld')|_{\atermset \leadsto \atermset'}$ and $\arelation_1(\aworld)|_{\atermset \blacktriangleright \atermset'}$  have at least $k$ elements, see the previous point).
     \item Empty $\Delta$ as every remaining world in it is now used.
     We completed the construction in the case of $\Omega \neq \arelation(\aworld)|_{\atermset} \setminus \arelation_1(\aworld)$.
 \end{enumerate}
 In the case $\Omega = \arelation(\aworld)|_{\atermset} \setminus \arelation_1(\aworld)$, the construction is trivially completed by adding to $\agarbage_\atermset$ every world in $\Delta$. Notice that for the same considerations done before (point 3 of the construction for $\Omega \neq \arelation(\aworld)|_{\atermset} \setminus \arelation_1(\aworld)$) it holds that $\Delta$ has at least $k$ elements.
 Hence, $\agarbage_\atermset$ satisfies both the second and the fifth properties of \ref{SDelim:construction}. Again, as a last step, we empty $\Delta$ as every remaining world is now used.

 During the definition of the construction, we already detailed why the first, second, fourth and fifth properties of~\ref{SDelim:construction} are satisfied.
 The same holds true for the third one, as we relied on the set $\Delta$ to never use twice the same world, and at the end of the construction $\Delta$ was always empty.
\end{description}
Therefore~\ref{SDelim:construction} holds.
A last note about this construction: from the first and third properties of \ref{SDelim:construction}, in particular that ``for all
$\pair{\aworld_2'}{\aworld_3'} \in \arelation'_{1,\aworld_1'},\ \{\aworld_2',\aworld_3'\} \subseteq {\arelation'}^*(\aworld_1')  \}$'', it is easy to see that
for all
$\pair{\arelation'_{1,\aworld_1'}}{\aworld_1'} \in \arelation'(\aworld')|_{\atermset \leadsto \atermset_1}$ and
$\pair{\arelation'_{1,\aworld_2'}}{\aworld_2'} \in \arelation'(\aworld')|_{\atermset \leadsto \atermset_2}$ with $\aworld_1' \neq \aworld_2'$,
we have $\arelation'_{1,\aworld_1'} \cap \arelation'_{1,\aworld_2'} = \emptyset$.
Keeping this in mind, we are now ready to construct $\arelation_1'$.

We consider every $\atermset \in \atowertypeset{m-1,k}{\apropset}$ and apply \ref{SDelim:construction} to construct the sets  $\arelation'(\aworld')|_{\atermset \leadsto \atermset'}$ (for every $\atermset' \in \atypeset{m-1,k}{\apropset}$) and $\agarbage_\atermset$.
We then define $\arelation_1'$ as
\begin{nscenter}
$\arelation_1' \egdef \displaystyle\bigcup_{\substack{\atermset \in \atowertypeset{m-1,k}{\apropset}\\ \atermset' \in \atypeset{m-1,k}{\apropset}\\ \pair{\arelation_{1,\aworld_1'}'}{\aworld_1'} \in \arelation'(\aworld')|_{\atermset \leadsto \atermset'}}}
\{\pair{\aworld'}{\aworld_1'}\} \cup \arelation_{1,\aworld_1'}'$.
\end{nscenter}
\end{description}
Clearly. we have that $\arelation_1' \subseteq \arelation_1$. Moreover, from the 
properties of~\ref{SDelim:construction}, it holds that
 for every $\aworld_1' \in \arelation_1'(\aworld)$, $\arelation_1'|_{\aworld_1'} = \arelation_{1,\aworld_1'}'$.
In order to conclude the proof, we need to show that
\begin{enumerate}
\item $\triple{\worlds}{\arelation_1}{\avaluation},\aworld \bisrel{m,k}{\apropset} \triple{\worlds'}{\arelation_1'}{\avaluation'},\aworld'$;
\item if $\arelation_1(\aworld) = \arelation(\aworld)$ then $\arelation_1'(\aworld') = \arelation'(\aworld')$.
\end{enumerate}
Let us first prove (2) by using the fifth property of \ref{SDelim:construction}.
Suppose $\arelation_1(\aworld) = \arelation(\aworld)$ and hence $\arelation(\aworld) \setminus \arelation_1(\aworld) = \emptyset$.
It is easy to see that
$\arelation(\aworld) \setminus \arelation_1(\aworld)$ can also be written as $\bigcup_{\atermset \in \atowertypeset{m-1,k}{\apropset}} (\arelation(\aworld)|_{\atermset} \setminus \arelation_1(\aworld))$.
We conclude that $\card{\arelation(\aworld)|_{\atermset} \setminus \arelation_1(\aworld)} = 0$ for every $\atermset \in \atowertypeset{m-1,k}{\apropset}$.
Similarly, $\arelation'(\aworld') \setminus \arelation_1'(\aworld')$ can be shown to be equivalent to
$\bigcup_{\atermset \in \atowertypeset{m-1,k}{\apropset}} (\arelation'(\aworld')|_{\atermset} \setminus \arelation_1'(\aworld'))$.
Notice that for every $\atermset \in \atowertypeset{m-1,k}{\apropset}$, a world $\aworld_1' \in \arelation'(\aworld')|_{\atermset} \setminus \arelation_1'(\aworld')$ cannot be inside a pair of $\arelation'(\aworld')|_{\atermset \leadsto \atermset'}$ (for any $\atermset' \in \atypeset{m-1,k}{\apropset}$).
Indeed, if this was the case, then $\pair{\aworld'}{\aworld_1'} \in \arelation_1'$ (see definition of $\arelation_1'$) in contradiction with $\aworld_1' \in \arelation'(\aworld')|_{\atermset} \setminus \arelation_1'(\aworld')$.
Then $\aworld_1' \in \agarbage_{\atermset}$ and we conclude that $\arelation'(\aworld')|_{\atermset} \setminus \arelation_1'(\aworld') = \agarbage_{\atermset}$ and $\arelation'(\aworld') \setminus \arelation_1'(\aworld') = \bigcup_{\atermset \in \atowertypeset{m-1,k}{\apropset}} \agarbage_{\atermset}$.
By construction, every world $\aworld_1' \in \arelation'(\aworld)$ can appear in at most one set in $\{\agarbage_{\atermset} \mid \atermset' \in \atypeset{m-1,k}{\apropset} \}$ and hence
$\card{\arelation'(\aworld') \setminus \arelation_1'(\aworld')} = \sum_{\atermset \in \atowertypeset{m-1,k}{\apropset}} \card{\agarbage_\atermset}$.
We can now apply the fifth  property of \ref{SDelim:construction}, i.e.
\begin{nscenter}
$\min(\card{\arelation(\aworld)|_{\atermset} \setminus \arelation_1(\aworld)},k) = \min(\card{\agarbage_\atermset},k)$
\end{nscenter}
that, together with $k \geq 1$ (see the beginning of the proof) and $\card{\arelation(\aworld)|_{\atermset} \setminus \arelation_1(\aworld)} = 0$ leads to $\card{\arelation'(\aworld') \setminus \arelation_1'(\aworld')} = 0$. As by definition $\arelation_1'(\aworld') \subseteq \arelation'(\aworld')$, this ends the proof of (2).

In order to conclude the proof, let us prove (1) and this is done by constructing
a g-bisimulation $\abisim^0,\dots,\abisim^m$
up to $\triple{m}{k}{\apropset}$ between $\triple{\worlds}{\arelation_1}{\avaluation}$ and $\triple{\worlds'}{\arelation_1'}{\avaluation'}$ such that $\{\aworld\} \abisim^m_1 \{\aworld'\}$.
Here, we iteratively construct the g-bisimulation starting from the sets $\abisim^j_i = \{\pair{\aworld}{\aworld'}\}$ (for every $i \in \interval{1}{k}$ and $j \in \interval{0}{m}$).
During the construction we make sure to always preserve the satisfaction of the conditions (\ref{gbisim:init}), (\ref{gbisim:refine}), (\ref{gbisim:size}) and (\ref{gbisim:atoms}).
Notice that these conditions hold for our initial sequence of relations. In particular, (\ref{gbisim:atoms}) holds as by hypothesis there is
$\atermset \in \atypeset{m,k\times (\card{\atowertypeset{m-1,k}{\apropset}}+1)}{\apropset}$
such that $\{\pair{\amodel}{\aworld}, \pair{\amodel'}{\aworld'}\} \subseteq \atermset$ and hence
$\amodel,\aworld \bisrel{m,k \times (\card{\atowertypeset{m-1,k}{\apropset}}+1)}{\apropset} \amodel',\aworld'$.
The construction can be split into four steps:
\begin{description}
\item[\rulelab{m-forth-step}{construction:m-forth}]
Let $X \subseteq \arelation_1(\aworld)$ be a set such that $\card{X} \in \interval{1}{k}$.
As required by the condition (\ref{gbisim:m-forth}), we want to pair this set with a suitable subset $Y \subseteq \arelation_1'(\aworld)$ of cardinality $\card{X}$ so that it is possible to then satisfy the conditions (\ref{gbisim:g-forth}) and (\ref{gbisim:g-back}).
Let us consider the partition of $X$ defined as $\{ X_{\atermset \blacktriangleright \atermset'} \mid \atermset \in \atowertypeset{m-1,k}{\apropset} \text{ and } \atermset' \in \atypeset{m-1,k}{\apropset} \}$ where $X_{\atermset \blacktriangleright \atermset'} = X \cap \arelation_1(\aworld)|_{\atermset \blacktriangleright \atermset'}$.
We consider the set $\arelation'(\aworld')|_{\atermset \leadsto \atermset'}$ and select $\card{X_{\atermset \blacktriangleright \atermset'}}$ worlds appearing in one of its pairs (which  are of the form $\pair{\arelation_{1,\aworld_1'}'}{\aworld_1'}$).
Let $Y_{\atermset \leadsto \atermset'}$ be the set of these selected worlds.
By \ref{SDelim:construction} this set is guaranteed to exist and is such that every world $\aworld_1'$ in it is also in
$\arelation_1'(\aworld')$.
Let $Y = \bigcup_{\atermset \in \atowertypeset{m-1,k}{\apropset}, \atermset' \in \atypeset{m-1,k}{\apropset} } Y_{\atermset \leadsto \atermset'}$. It is easy to see that $\card{X} = \card{Y}$.
For every $j \in \interval{0}{m-1}$ we add $\pair{X}{Y}$ to $\abisim^j_{\card{X}}$.
\item[\rulelab{m-back-step}{construction:m-back}]
Let $Y \subseteq \arelation_1'(\aworld)$ be a set such that $\card{Y} \in \interval{1}{k}$.
Let us follow the condition (\ref{gbisim:m-back}) symmetrically to what was done for the condition (\ref{gbisim:m-forth}) in the previous step of the construction.
Let us first consider the partition of $Y$ defined as $\{ Y_{\atermset \leadsto \atermset'} \mid \atermset \in \atowertypeset{m-1,k}{\apropset} \text{ and } \atermset' \in \atypeset{m-1,k}{\apropset} \}$ where
$Y_{\atermset \leadsto \atermset'} = Y \cap \{ \aworld_1' \mid \pair{\arelation_{1,\aworld_1'}'}{\aworld_1'} \in \arelation'(\aworld')|_{\atermset \leadsto \atermset'} \text{ for some } \arelation_{1,\aworld_1'}' \}$.
We select a subset $X_{\atermset \blacktriangleright \atermset'}$ of $\arelation_1(\aworld)|_{\atermset \blacktriangleright \atermset'}$ having cardinality $\card{Y_{\atermset \leadsto \atermset'}}$, which is guaranteed to exist by~\ref{SDelim:construction}.
Let $X =  \bigcup_{\atermset \in \atowertypeset{m-1,k}{\apropset}, \atermset' \in \atypeset{m-1,k}{\apropset} } X_{\atermset \blacktriangleright \atermset'}$. It is easy to see that $\card{Y} = \card{X}$.
For every $j \in \interval{0}{m-1}$ we add $\pair{X}{Y}$ to $\abisim^j_{\card{Y}}$.
\item[\rulelab{g-forth-step}{construction:g-forth}]
From the first two steps of the construction, the set $\abisim^j_i$ was updated with new pairs $\pair{X}{Y}$ where every element in $X$ is from $\arelation_1(\aworld)$ and every element of $Y$ is from $\arelation_1'(\aworld)$.
Consider then one of these pairs $\pair{X}{Y}$ and
let $\aworld_1 \in X$.
There is $\atermset \in \atowertypeset{m-1,k}{\apropset}$ and $\atermset' \in \atypeset{m-1,k}{\apropset}$ such that
$\aworld_1 \in \arelation_1(\aworld)|_{\atermset \blacktriangleright \atermset'}$.
By construction (first and second steps above),  there is $\aworld_1' \in Y$ such that for some $\arelation_{1,\aworld_1'}' \subseteq \arelation_1'$ it holds that $\pair{\arelation_{1,\aworld_1'}'}{\aworld_1'} \in \arelation'(\aworld')|_{\atermset \leadsto \atermset'}$.
Again, by applying \ref{SDelim:construction} we obtain that $\triple{\worlds}{\arelation_1}{\avaluation},\aworld_1 \bisrel{m-1,k}{\apropset} \triple{\worlds'}{\arelation_{1,\aworld_1'}}{\avaluation'}, \aworld_1'$.
Since by definition $\arelation_{1,\aworld_1'}' = \arelation_1'|_{\aworld_1'}$ and from Lemma~\ref{lemma:gbisimcutrel} we obtain $\triple{\worlds}{\arelation_1}{\avaluation},\aworld_1 \bisrel{m-1,k}{\apropset} \triple{\worlds'}{\arelation_1'}{\avaluation'}, \aworld_1'$.
Then, let $\abisimtwo^0,\dots,\abisimtwo^{m-1}$ be the g-bisimulation up to $\triple{m-1}{k}{\apropset}$ between $\triple{\worlds}{\arelation_1}{\avaluation}$ and $\triple{\worlds'}{\arelation_1'}{\avaluation'}$ such that $\{\aworld_1\} \abisimtwo_1^{m-1} \{\aworld_1'\}$.
For every $i \in \interval{1}{k}$ and every $j \in \interval{0}{m-1}$, update $\abisim^j_i$ to $\abisim^j_i \cup \abisimtwo^j_i$.
\item[\rulelab{g-back-step}{construction:g-back}]
Symmetrically to the previous point of the construction, let us consider again a pair $\pair{X}{Y}$ introduced by one of the two steps (\ref{construction:m-forth}) and (\ref{construction:m-back}). Let $\aworld_1' \in Y$.
Then there is  $\atermset \in \atowertypeset{m-1,k}{\apropset}$ and $\atermset' \in \atypeset{m-1,k}{\apropset}$ and $\arelation_{1,\aworld_1'}' \subseteq \arelation_1'$ such that $\pair{\arelation_{1,\aworld_1'}'}{\aworld_1'} \in \arelation'(\aworld')|_{\atermset \leadsto \atermset'}$.
By construction (steps (\ref{construction:m-forth}) and (\ref{construction:m-back})), there is $\aworld_1 \in X$ such that $\aworld_1 \in \arelation'(\aworld)|_{\atermset \blacktriangleright \atermset'}$.
Then by \ref{SDelim:construction}, we obtain that $\triple{\worlds}{\arelation_1}{\avaluation},\aworld_1 \bisrel{m-1,k}{\apropset} \triple{\worlds'}{\arelation_{1,\aworld_1'}}{\avaluation'}, \aworld_1'$.
Again, by definition $\arelation_{1,\aworld_1'}' = \arelation_1'|_{\aworld_1'}$ and from Lemma~\ref{lemma:gbisimcutrel} we obtain $\triple{\worlds}{\arelation_1}{\avaluation},\aworld_1 \bisrel{m-1,k}{\apropset} \triple{\worlds'}{\arelation_1'}{\avaluation'}, \aworld_1'$.
Then, let $\abisimtwo^0,\dots,\abisimtwo^{m-1}$ be the g-bisimulation up to $\triple{m-1}{k}{\apropset}$ between $\triple{\worlds}{\arelation_1}{\avaluation}$ and $\triple{\worlds'}{\arelation_1'}{\avaluation'}$ such that $\{\aworld_1\} \abisimtwo_1^{m-1} \{\aworld_1'\}$.
For every $i \in \interval{1}{k}$ and every $j \in \interval{0}{m-1}$, update $\abisim^j_i$ to $\abisim^j_i \cup \abisimtwo^j_i$.
\end{description}
It is simple to see that this construction leads to a sequence of relations $\abisim^0,\dots,\abisim^m$ that is a g-bisimulation up to $\triple{m}{k}{\apropset}$ between $\triple{\worlds}{\arelation_1}{\avaluation}$ and $\triple{\worlds'}{\arelation_1'}{\avaluation'}$ such that $\{\aworld\}\abisim_1^m \{\aworld'\}$.
Indeed, the conditions (\ref{gbisim:init}), (\ref{gbisim:refine}), (\ref{gbisim:size}) and (\ref{gbisim:atoms}) hold at any point during the construction. For the other condition, let $\pair{X}{Y}$ be a pair in some $\abisim^j_i$.
If it was not introduced by the first two steps of the construction, then $\pair{X}{Y}$ is a member of some set $\abisimtwo^j_i \subseteq \abisim^j_i$ that is used in a g-bisimulation whose elements are all used to construct $\abisim^0,\dots,\abisim^m$ (third and fourth
point of the proof). Hence, w.r.t. $\pair{X}{Y}$ no condition can be violated. If instead $\pair{X}{Y}$ is added to the g-bisimulation during the first and second point of the construction, then by construction it is easy to check that it satisfies all the conditions.
Therefore $\triple{\worlds}{\arelation_1}{\avaluation},\aworld \bisrel{m,k}{\apropset} \triple{\worlds'}{\arelation_1'}{\avaluation'},\aworld'$, which ends the proof of the whole lemma.
\end{proof}

\subsection{Proof of \Cref{lemma:sabotage-equivalent-formula}}
 \begin{proof}
If $k = 0$, then the proof is by an easy verification as the formula $\aformula$ from the statement is logically equivalent
to a formula from the propositional calculus (each subformula $\Gdiamond{\geq 0} \aformulabis$ is logically equivalent to
$\top$).
Otherwise ($k \geq 1$), let $k^+ = k \times (\card{\atowertypeset{m-1,k}{\apropset}}+1)$. As,
$\typeeqclass{m,k^+}{\apropset}$ and $\bisrel{m,k^+}{\apropset}$ are identical relations, there is a finite
set $\set{\aformulater_1, \ldots, \aformulater_{Q}} \subseteq \GML[m,k^+,\apropset]$ such that
\begin{itemize}
\item $\aformulater_1 \vee \cdots \vee \aformulater_{Q}$ is valid, and each $\aformulater_i$ is satisfiable,
\item for all $i \neq j \in \interval{1}{Q}$, $\aformulater_i \wedge \aformulater_j$ is unsatisfiable,
\item $\pair{\amodel}{\aworld} \typeeqclass{m,k^+}{\apropset} \pair{\amodel'}{\aworld'}$ iff there is $i$ such that
      $\pair{\amodel}{\aworld} \models \aformulater_i$ and $\pair{\amodel'}{\aworld'} \models \aformulater_i$.
\end{itemize}
This is a direct consequence of~\Cref{prop:rijke00} containing results established in~\cite{DeRijke00}.
Let $\aformulabis$ be the formula
$\bigvee \set{\aformulater_i \mid \exists \ \amodel, \aworld \ {\rm s.t.} \ \amodel,\aworld \models \aformulater_i \wedge  \SabDiamond \aformula}$.
An empty disjunction is understood as $\perp$.

Now, we show that $\aformulabis$ is logically equivalent to $\SabDiamond \aformula$.
Suppose that $\amodel, \aworld \models \SabDiamond \aformula$. As $\aformulater_1 \vee \cdots \vee \aformulater_{Q}$ is valid,
there is $i \in \interval{1}{Q}$ such that $\amodel, \aworld \models \aformulater_i$. Therefore
$\aformulater_i$ occurs in $\aformulabis$ and consequently, $\amodel, \aworld \models \aformulabis$.

Conversely, suppose that $\amodel,\aworld \models \aformulabis$ with $\amodel = \triple{\worlds}{\arelation}{\avaluation}$.
So, there is $\aformulater_i$ occuring in  $\aformulabis$ such that  $\amodel,\aworld \models \aformulater_i$ and
there exist a model $\amodel' = \triple{\worlds'}{\arelation'}{\avaluation'}$ and $\aworld' \in \worlds'$
such that $\amodel',\aworld' \models \aformulater_i \wedge \SabDiamond \aformula$. So,
$\pair{\amodel}{\aworld} \typeeqclass{m,k^+}{\apropset} \pair{\amodel'}{\aworld'}$.
By the definition of the satisfaction relation $\models$,
there is $\arelation_1' \subseteq \arelation'$ such that $\arelation_1'(\aworld') = \arelation'(\aworld')$ and
$\triple{\worlds'}{\arelation_1'}{\avaluation'}, \aworld' \models \aformula$.
All the assumptions of Lemma~\ref{lemma:sabotage-elimination} apply and therefore,
there is $\arelation_1 \subseteq \arelation$ such that $\arelation_1(\aworld) = \arelation(\aworld)$,
$\triple{\worlds}{\arelation_1}{\avaluation}, \aworld \bisrel{m,k}{\apropset} \triple{\worlds'}{\arelation_1'}{\avaluation'}, \aworld'$
and $\triple{\worlds}{\arelation_1}{\avaluation}, \aworld \typeeqclass{m,k}{\apropset} \triple{\worlds'}{\arelation_1'}{\avaluation'}, \aworld'$. 
As $\aformula$ belongs to $\GML[m,k,\apropset]$, we also get that $\triple{\worlds}{\arelation_1}{\avaluation}, \aworld \models \aformula$.
But then by definition of $\models$, we conclude that $\amodel, \aworld \models \SabDiamond \aformula$.
\end{proof}

\subsection{Proof of \modallogicSC $\preceq$ \GML}
\begin{lemma}\label{lemma:SC-leq-GML}
\modallogicSC $\preceq$ \GML.
\end{lemma}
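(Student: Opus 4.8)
The plan is to reuse the proof schema of \Cref{theorem:clean-cut} (which gave $\modallogicCC \preceq \GML$): I show that $\GML$ is closed under the separating conjunction $\separate$, by reducing this to the already-established closure of $\GML$ under $\chopop$ (\Cref{theorem:clean-cut}) and under $\SabDiamond$ (\Cref{lemma:sabotage-equivalent-formula}). It therefore suffices to prove the local elimination step: \emph{for all} $\aformula_1, \aformula_2 \in \GML$ there is $\aformulabis \in \GML$ with $\aformula_1 \separate \aformula_2 \equiv \aformulabis$.

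Granting this step, the theorem follows by a routine induction on the number of occurrences of $\separate$ in a $\modallogicSC$ formula $\aformula$. If $\aformula$ is $\separate$-free it is an $\ML$ formula, hence in $\GML$ (recall $\ML \prec \GML$), and we are done. Otherwise, choose a subformula $\aformula_1 \separate \aformula_2$ whose occurrence of $\separate$ is innermost, so that $\aformula_1, \aformula_2$ contain no $\separate$; then $\aformula_1, \aformula_2$ are $\GML$ formulae (being $\separate$-free $\modallogicSC$ formulae, they lie in $\ML \subseteq \GML$, and in later iterations the only connectives above previously-rewritten $\GML$ material are $\top, \avarprop, \land, \lnot, \Diamond$, which all preserve membership in $\GML$). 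Replace this occurrence by the equivalent $\GML$ formula $\aformulabis$ supplied by the local step; since the satisfaction relation $\models$ of $\modallogicSC$ is compositional — the clause for $\separate$ only refers to the satisfaction of its two arguments on pointed forests — this replacement yields a logically equivalent formula with strictly fewer occurrences of $\separate$. Iterating drives that count to zero and leaves a $\GML$ formula.

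For the local elimination step I chain the three available results. By \Cref{lemma:SC-CC-tautology}, since $\aformula_1, \aformula_2 \in \GML$, we have $\aformula_1 \separate \aformula_2 \equiv \SabDiamond(\aformula_1 \chopop \aformula_2)$. By \Cref{theorem:clean-cut}, $\GML$ is closed under $\chopop$, so there is $\aformulater \in \GML$ with $\aformula_1 \chopop \aformula_2 \equiv \aformulater$; hence $\aformula_1 \separate \aformula_2 \equiv \SabDiamond \aformulater$. Finally, by \Cref{lemma:sabotage-equivalent-formula}, $\GML$ is closed under $\SabDiamond$, so there is $\aformulabis \in \GML$ with $\SabDiamond \aformulater \equiv \aformulabis$, and combining the three equivalences gives $\aformula_1 \separate \aformula_2 \equiv \aformulabis$, as required.

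I do not expect a genuinely hard step here: the combinatorial content has been absorbed into \Cref{theorem:clean-cut} ($\chopop$-elimination via Presburger arithmetic) and into \Cref{lemma:sabotage-equivalent-formula} ($\SabDiamond$-elimination via g-bisimulations). The only point needing a small amount of care is the substitutivity of logical equivalents inside $\modallogicSC$ formulae used in the induction, which is immediate from the compositional definition of $\models$ but deserves an explicit mention. Note that, unlike in \Cref{section-aexppol}, no bound on the size, modal degree, or graded rank of the resulting $\GML$ formula is sought, since only the expressive-power inclusion $\modallogicSC \preceq \GML$ is claimed.
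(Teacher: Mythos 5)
Your proof is correct and follows essentially the same route as the paper: it eliminates each innermost $\separate$ by chaining \Cref{lemma:SC-CC-tautology}, the $\chopop$-elimination of \Cref{theorem:clean-cut}, and the $\SabDiamond$-elimination of \Cref{lemma:sabotage-equivalent-formula}, then iterates by substitution of logical equivalents. The only (immaterial) difference is that the paper first rewrites all occurrences of $\Diamond$ and $\separate$ globally into $\Gdiamond{\geq 1}$ and $\SabDiamond(\cdot\,\chopop\,\cdot)$ and then removes the resulting patterns innermost-first, whereas you perform the rewriting locally at each elimination step.
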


\begin{proof}
Let $\aformula$ be a formula in \modallogicSC.
As $\Diamond \aformulabis \equiv \Gdiamond{\geq 1} \aformulabis$, we can replace
every occurrence of the modality $\Diamond$  appearing in $\aformula$ with the modality $\Gdiamond{\geq 1}{}$.
Moreover,
by Lemma~\ref{lemma:SC-CC-tautology}, we can replace every subformula of the form $\aformulabis \separate \aformulater$ with the formula $\SabDiamond(\aformulabis \chopop \aformulater)$.
In this way, we obtain a formula $\aformula'$ that is equivalent to $\aformula$ and where all the modalities are of the form $\Gdiamond{\geq 1}{}$, $\chopop$ and $\SabDiamond$.
If $\aformula'$ has no occurrence of $\chopop$ or $\SabDiamond$, we are done.
Otherwise, let $\aformulabis$ be a subformula of $\aformula'$ of the form $\SabDiamond(\aformula_1 \chopop \aformula_2)$ where $\aformula_1$ and $\aformula_2$ are in \GML.
\begin{itemize}
\item By \Cref{theorem:clean-cut}, there is a formula $\aformulabis'$ in \GML such that $\aformulabis' \equiv \aformula_1 \chopop \aformula_2$.
\item By Lemma~\ref{lemma:sabotage-equivalent-formula}
there is a formula $\aformulabis''$ in \GML such that $\aformulabis'' \equiv \SabDiamond \aformulabis'$.
\end{itemize}
One can show that $\aformula' \equiv \aformula'[\aformulabis \leftarrow \aformulabis'']$, where  $\aformula'[\aformulabis \leftarrow \aformulabis'']$
is obtained from $\aformula'$ by replacing every occurrence of $\aformulabis$ by $\aformulabis''$.
Note that the number of occurrences of $\SabDiamond$ and $\chopop$ in
$\aformula'[\aformulabis \leftarrow \aformulabis'']$ is strictly less than the number of occurrences of $\SabDiamond$ and $\chopop$ in
$\aformula'$.
By repeating such a type of replacement,
eventually we obtain a formula $\aformula''$ in \GML such that $\aformula' \equiv \aformula''$.
Indeed, all the occurrences of $\SabDiamond$ and $\chopop$ only appear as instances  of the
pattern $\SabDiamond(\aformulabis \chopop \aformulater)$.
Hence, we get a formula in \GML logically equivalent  to $\aformula$.
\end{proof}

\subsection{Proof that $\aformulaerankset{m,s}{\apropset}$ is finite up to logical equivalence}

\begin{lemma}\label{lemma:SC-finite-up-to-logEq}
 $\aformulaerankset{m,s}{\apropset}$ is finite up to logical equivalence.
\end{lemma}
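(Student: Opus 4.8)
The plan is to prove the lemma by induction on $m+s$, showing at each stage that every formula of $\aformulaerankset{m,s}{\apropset}$ is logically equivalent to a Boolean combination of formulae drawn from a \emph{finite} pool of ``basic'' formulae, and then counting Boolean combinations of a finite set. First I would pin down the notion of $\separate$-nesting used in the definition of $\aformulaerankset{m,s}{\apropset}$: it is $0$ on $\top$ and on atomic propositions, is preserved by $\Diamond$, is the maximum of the two arguments' values for $\land$ (and unchanged by $\lnot$), and equals $1$ plus the maximum of the two arguments' values for $\separate$. With this convention the \textbf{base case} $m=s=0$ is immediate: such formulae are Boolean combinations of the $\card{\apropset}$ atomic propositions, and there are at most $2^{2^{\card{\apropset}}}$ of those up to logical equivalence.

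For the \textbf{inductive step}, assume $m+s>0$. I would call a formula \emph{basic of level $(m,s)$} if it has one of the three shapes: (i) $\avarprop$ with $\avarprop\in\apropset$; (ii) $\Diamond\aformulabis$ with $\aformulabis\in\aformulaerankset{m-1,s}{\apropset}$ (possible only when $m\geq 1$); (iii) $\aformulabis_1\separate\aformulabis_2$ with $\aformulabis_1,\aformulabis_2\in\aformulaerankset{m,s-1}{\apropset}$ (possible only when $s\geq 1$). A straightforward structural induction on $\aformula$ shows that every $\aformula\in\aformulaerankset{m,s}{\apropset}$ is a Boolean combination of basic formulae of level $(m,s)$: the cases $\top$, $\avarprop$, $\lnot$, $\land$ are trivial or follow from the sub-induction; if $\aformula=\Diamond\aformulabis$ then $\md{\aformulabis}\leq m-1$ and the $\separate$-nesting of $\aformulabis$ equals that of $\aformula$ hence is $\leq s$, so $\aformulabis\in\aformulaerankset{m-1,s}{\apropset}$ and $\aformula$ is basic; if $\aformula=\aformulabis_1\separate\aformulabis_2$ then $\md{\aformulabis_i}\leq m$ and each $\aformulabis_i$ has $\separate$-nesting $\leq s-1$, so $\aformulabis_i\in\aformulaerankset{m,s-1}{\apropset}$ and $\aformula$ is basic. (Here it matters that \modallogicSC contains no graded modalities, so the grammar really is closed under exactly these three shapes.)

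Now I would apply the induction hypothesis: both $\aformulaerankset{m-1,s}{\apropset}$ and $\aformulaerankset{m,s-1}{\apropset}$ have index sum $m+s-1$ and so are finite up to logical equivalence. Since $\Diamond$ and $\separate$ respect logical equivalence — $\aformulabis\equiv\aformulabis'$ implies $\Diamond\aformulabis\equiv\Diamond\aformulabis'$, and $\aformulabis_i\equiv\aformulabis_i'$ implies $\aformulabis_1\separate\aformulabis_2\equiv\aformulabis_1'\separate\aformulabis_2'$, both immediate from the satisfaction clauses — there is a finite set $\mathcal{B}$ of basic formulae of level $(m,s)$ such that each basic formula of level $(m,s)$ is equivalent to some member of $\mathcal{B}$. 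Consequently every formula of $\aformulaerankset{m,s}{\apropset}$ is equivalent to a Boolean combination of formulae of $\mathcal{B}$, and there are at most $2^{2^{\card{\mathcal{B}}}}$ of those up to propositional (hence up to logical) equivalence; note that logical equivalence of \modallogicSC formulae is coarser than propositional equivalence over $\mathcal{B}$, so the counting bound only shrinks. Thus $\aformulaerankset{m,s}{\apropset}$ is finite up to logical equivalence, completing the induction.

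The argument is entirely routine, following the classical pattern used for \ML and \GML; there is no real obstacle. The only places that warrant a line of care are recording that $\separate$-nesting behaves as described and that \modallogicSC has no graded modalities, and observing that the $\mathcal{B}$-formulae need not be logically independent so that the $2^{2^{\card{\mathcal{B}}}}$ bound is an overestimate rather than an equality.
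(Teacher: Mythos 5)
Your proof is correct and follows essentially the same route as the paper: an induction on the rank in which every formula of $\aformulaerankset{m,s}{\apropset}$ is seen as a Boolean combination of formulae dominated by $\Diamond$ or $\separate$ (your ``basic'' formulae), the induction hypothesis on $\aformulaerankset{m-1,s}{\apropset}$ and $\aformulaerankset{m,s-1}{\apropset}$ bounds those up to equivalence, and the classical fact that Boolean combinations of a finite set are finite up to logical equivalence concludes. Your explicit remarks that $\Diamond$ and $\separate$ are congruences for $\equiv$ and the counting bound $2^{2^{\card{\mathcal{B}}}}$ are harmless additions to the paper's (terser) argument.
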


\begin{proof}
This proof is standard and relies on the analogous result from classical logic \cite{Libkin04}:
\begin{nscenter}
\begin{minipage}{0.8\linewidth}
\begin{itemize}
\item[\rulelab{($\star$)}{property:finite-formulae-pc}]
given a finite set of formulae $\asetformulae$ there are only finitely many Boolean combinations
of formulae from $\asetformulae$, up to logical equivalence.
\end{itemize}
\end{minipage}
\end{nscenter}
The proof of the lemma is by induction on $\pair{m}{s}$.
For the base case, i.e.\ $\pair{0}{0}$, every formula of $\aformulaerankset{0,0}{\apropset}$ is by definition a Boolean combination of formulae from $\apropset$. Then by \ref{property:finite-formulae-pc} this set of formulae is clearly finite up to logical equivalence.
For the induction step, we divide the set of formulae of $\aformulaerankset{m,s}{\apropset}$ in three disjoint sets and
we show that each of them is finite up to logical equivalence.
\begin{enumerate}
\item We consider the set of formulae dominated by the  operator $\Diamond$, i.e.\ the set of every formula $\aformula$ that is syntactically equivalent to $\Diamond \aformulabis$ for some $\aformulabis \in \aformulaerankset{m-1,s}{\apropset}$.
By the induction hypothesis, there are only finitely many such $\aformulabis$ up to logical equivalence.
Hence, the set of formulae dominated by $\Diamond$ is finite up to logical equivalence.
\item We consider the set of formulae dominated by the  operator $\separate$, i.e.\ the set of every formula $\aformula$ that is syntactically equivalent to $\aformulabis \separate \aformulater$ for some $\aformulabis \in \aformulaerankset{m,s'}{\apropset}$ and $\aformulabis \in 
\aformulaerankset{m,s''}{\apropset}$ such that $\max(s',s'') = s{-}1$.
By the induction hypothesis, there are only finitely many such $\aformulabis$ and $\aformulater$ up to logical equivalence.
Hence, the set of formulae dominated by the  operator $\separate$ is finite up to logical equivalence.
\item Lastly, we consider the set of formulae of 
$\aformulaerankset{m,s}{\apropset}$ that are not dominated by $\Diamond$ or $\separate$ operators.
Each formula $\aformula$ of this set is therefore a Boolean combination of formulae $\aformula_1,\dots,\aformula_n$ of 
$\aformulaerankset{m,s}{\apropset}$ that are  dominated by $\Diamond$ or 
$\separate$ operators (hence every of these formulae are different form $\aformula$).
From the previous two cases, the set of such $\aformula_1,\dots,\aformula_n$ formulae is finite up to logical equivalence.
Then, by \ref{property:finite-formulae-pc} we conclude that the set of formulae of $\aformulaerankset{m,s}{\apropset}$ 
that are not dominated by $\Diamond$ or $\separate$ operators is also finite up to logical equivalence, concluding the proof. \qedhere
\end{enumerate}
\end{proof}

\subsection{Characteristic formulae}
As usual, thanks to Lemma~\ref{lemma:SC-finite-up-to-logEq}, given a pointed forest $\pair{\amodel}{\aworld}$,
we can define a finite \defstyle{characteristic formula}
$\acharformulaSC{\amodel,\aworld}{m,s}{\apropset}$ in $\aformulaerankset{m,s}{\apropset}$
that is logically equivalent to the infinite conjunction $\bigwedge \{ \aformula \in \aformulaerankset{m,s}{\apropset} \mid \amodel,\aworld \models \aformula \}$.
Notice that $\acharformulaSC{\amodel,\aworld}{m,s}{\apropset}$ is in $\aformulaerankset{m,s}{\apropset}$.
Moreover, we can prove the following result.

\begin{lemma}\label{lemma:SC-char-formula}
Let $\pair{\amodel}{\aworld}$ and $\pair{\amodel'}{\aworld'}$ be two pointed forests. For every rank $\triple{m}{s}{\apropset}$ it holds that
\begin{itemize}
  \item $\amodel,\aworld \models \acharformulaSC{\amodel,\aworld}{m,s}{\apropset}$;
  \item $\amodel,\aworld \models \acharformulaSC{\amodel',\aworld'}{m,s}{\apropset}$ iff \ $\amodel',\aworld' \models \acharformulaSC{\amodel,\aworld}{m,s}{\apropset}$.
\end{itemize}
\end{lemma}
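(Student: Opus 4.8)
The plan is to reduce everything to the defining property of the characteristic formula: by construction $\acharformulaSC{\amodel,\aworld}{m,s}{\apropset}$ is logically equivalent to $\bigwedge \{ \aformula \in \aformulaerankset{m,s}{\apropset} \mid \amodel,\aworld \models \aformula \}$, which is a bona fide formula of $\aformulaerankset{m,s}{\apropset}$ since $\aformulaerankset{m,s}{\apropset}$ is finite up to logical equivalence by \Cref{lemma:SC-finite-up-to-logEq} (so the infinite conjunction collapses to a finite one, and a finite conjunction of formulae of modal degree $\leq m$ and $\separate$-nesting $\leq s$ still lies in $\aformulaerankset{m,s}{\apropset}$). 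The first item is then immediate: $\pair{\amodel}{\aworld}$ satisfies every conjunct of that conjunction, hence the conjunction itself, hence the logically equivalent formula $\acharformulaSC{\amodel,\aworld}{m,s}{\apropset}$. For the second item I would first isolate the auxiliary claim
\[
\amodel,\aworld \models \acharformulaSC{\amodel',\aworld'}{m,s}{\apropset}
\ \text{ iff }\
\pair{\amodel}{\aworld}\text{ and }\pair{\amodel'}{\aworld'}\text{ satisfy exactly the same formulae of }\aformulaerankset{m,s}{\apropset},
\]
and then observe that the right-hand side is visibly symmetric in the two pointed forests, so it is equally equivalent to $\amodel',\aworld' \models \acharformulaSC{\amodel,\aworld}{m,s}{\apropset}$; the stated equivalence follows.

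The $(\Leftarrow)$ direction of the auxiliary claim is routine: if the two pointed forests satisfy the same formulae of $\aformulaerankset{m,s}{\apropset}$, then every conjunct of $\acharformulaSC{\amodel',\aworld'}{m,s}{\apropset}$ — being a formula of $\aformulaerankset{m,s}{\apropset}$ satisfied by $\pair{\amodel'}{\aworld'}$, hence by $\pair{\amodel}{\aworld}$ — is satisfied by $\pair{\amodel}{\aworld}$, so $\amodel,\aworld \models \acharformulaSC{\amodel',\aworld'}{m,s}{\apropset}$. For $(\Rightarrow)$, assume $\amodel,\aworld \models \acharformulaSC{\amodel',\aworld'}{m,s}{\apropset}$ and fix an arbitrary $\aformula \in \aformulaerankset{m,s}{\apropset}$. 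If $\amodel',\aworld' \models \aformula$, then $\aformula$ is one of the conjuncts of the conjunction defining $\acharformulaSC{\amodel',\aworld'}{m,s}{\apropset}$, whence $\amodel,\aworld \models \aformula$. If $\amodel',\aworld' \not\models \aformula$, then $\amodel',\aworld' \models \lnot\aformula$; since $\aformulaerankset{m,s}{\apropset}$ is closed under negation (negation changes neither the modal degree nor the number of nested $\separate$, i.e.\ $\md{\lnot\aformula} = \md{\aformula}$), we have $\lnot\aformula \in \aformulaerankset{m,s}{\apropset}$, so $\lnot\aformula$ is a conjunct of $\acharformulaSC{\amodel',\aworld'}{m,s}{\apropset}$ and therefore $\amodel,\aworld \models \lnot\aformula$, i.e.\ $\amodel,\aworld \not\models \aformula$. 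In either case $\pair{\amodel}{\aworld}$ and $\pair{\amodel'}{\aworld'}$ agree on $\aformula$, which proves the claim and hence the lemma.

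There is no genuine obstacle in this argument; it is the standard characteristic-formula bookkeeping. The only two points that deserve explicit mention are the finiteness up to logical equivalence of $\aformulaerankset{m,s}{\apropset}$, which is needed merely for $\acharformulaSC{\cdot}{m,s}{\apropset}$ to be a well-defined formula and is supplied by \Cref{lemma:SC-finite-up-to-logEq}, and the closure of $\aformulaerankset{m,s}{\apropset}$ under $\lnot$, which is immediate from the grammar of $\modallogicSC$ and is what makes the $(\Rightarrow)$ direction go through.
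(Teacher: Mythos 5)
Your proof is correct and follows essentially the same route as the paper's: the first item by the defining equivalence of $\acharformulaSC{\amodel,\aworld}{m,s}{\apropset}$, and the second by exploiting closure of $\aformulaerankset{m,s}{\apropset}$ under negation so that disagreement on some $\aformula$ would force $\amodel,\aworld \models \lnot\aformula$ against the hypothesis. The only cosmetic difference is that you factor the argument through the symmetric ``same $\aformulaerankset{m,s}{\apropset}$-theory'' claim, whereas the paper proves one implication of the biconditional directly and invokes symmetry; the underlying reasoning is identical.
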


\begin{proof}
This proof is standard. The first part of the lemma follows directly by definition of the characteristic formulae.
For the second part, by symmetry we just need to show one direction. Assume that $\amodel,\aworld \models
\acharformulaSC{\amodel',\aworld'}{m,s}{\apropset}$.
Let $\aformulabis \in \aformulaerankset{m,s}{\apropset}$ such that $\amodel,\aworld \models \aformulabis$.
To prove the result it is sufficient to show that then $\amodel',\aworld' \models \aformulabis$.
Ad absurdum, suppose that  $\amodel',\aworld' \not\models \aformulabis$.
By definition $\amodel',\aworld' \models \lnot \aformulabis$ and notice that $\lnot \aformulabis \in \aformulaerankset{m,s}{\apropset}$. Therefore from the equivalence
\begin{nscenter}
$\acharformulaSC{\amodel',\aworld'}{m,s}{\apropset} \equiv \bigwedge \{ \aformula \in \aformulaerankset{m,s}{\apropset} \mid \amodel',\aworld' \models \aformula \},$
\end{nscenter}
it is easy to see that $\acharformulaSC{\amodel',\aworld'}{m,s}{\apropset} \implies \lnot \aformulabis$ is a tautology.
From $\amodel,\aworld \models \acharformulaSC{\amodel',\aworld'}{m,s}{\apropset}$ we then derive that $\amodel,\aworld \models \lnot \aformulabis$, in contradiction with the hypothesis $\amodel,\aworld \models \aformulabis$. Hence,  $\amodel',\aworld' \models \aformulabis$.
\end{proof}

\subsection{Proof of \Cref{lemma:EF-game-sound-complete}}
\begin{proof}
We first prove that the games are sound (right to left direction).
\begin{nscenter}
$\boxed{
\text{ If there is } \aformula \in \aformulaerankset{m,s}{\apropset}
\text{ s.t.\ } \amodel,\aworld \models \aformula \text{ and } \amodel',\aworld' \not \models \aformula
\text{ then } \pair{\amodel}{\aworld} \neggamerel{m,s}{\apropset} \pair{\amodel'}{\aworld'}
}$
\end{nscenter}
The proof is rather standard and is done by structural induction on $\aformula$.
\begin{description}
\item[Base case: $\aformula = \avarprop$, where $\avarprop \in \apropset$.] Then by hypothesis $\amodel,\aworld \models \avarprop$ and $\amodel',\aworld' \not \models \avarprop$ and the spoiler wins from the condition of the game imposed before each round.
\item[Induction case: $\aformula = \aformulabis \land \aformulater$.] By hypothesis $\amodel,\aworld \models \aformulabis \land \aformulater$ whereas $\amodel',\aworld' \not \models \aformulabis$ or $\model',\aworld' \not \models \aformulater$.
In both cases ($\amodel',\aworld' \not \models \aformulabis$ or $\model',\aworld' \not \models \aformulater$), by the induction hypothesis the spoiler has a winning strategy for $\triple{\pair{\amodel}{\aworld}}{\pair{\amodel'}{\aworld'}}{\triple{m}{s}{\apropset}}$, i.e.\ $\pair{\amodel}{\aworld} \neggamerel{m,s}{\apropset} \pair{\amodel'}{\aworld'}$.
\item[Induction case: $\aformula = \lnot \aformulabis$.]
By hypothesis $\amodel,\aworld \not\models \aformulabis$ whereas $\amodel',\aworld' \models \aformulabis$.
Then by symmetry and by the induction hypothesis $\pair{\amodel}{\aworld} \neggamerel{m,s}{\apropset} \pair{\amodel'}{\aworld'}$.
\item[Induction case: $\aformula = \Diamond \aformulabis$.]
By hypothesis $\amodel,\aworld \models \Diamond \aformulabis$ and $\amodel',\aworld' \not\models \Diamond \aformulabis$.
Then there is a world $\aworld_1$ accessible from $\aworld$ and such that $\amodel,\aworld_1 \models \aformulabis$.
Moreover by definition the modal depth of $\Diamond \aformulabis$ is at least $1$ and the spoiler can play a modal move.
Then, the spoiler chooses the structure $\pair{\amodel}{\aworld}$ and chooses exactly $\aworld_1$.
The duplicator has then to reply by choosing a world $\aworld_1'$ accessible from $\aworld'$ (otherwise the spoiler wins and the result clearly follows). Since $\amodel',\aworld' \not \models \Diamond \aformulabis$, it holds that $\amodel',\aworld_1' \not \models \aformulabis$.
By the induction hypothesis, it holds that $\pair{\amodel}{\aworld_1} \neggamerel{m-1,s}{\apropset} \pair{\amodel'}{\aworld_1'}$.
Hence, by choosing $\aworld_1$, the spoiler builds a winning strategy for the game
$\triple{\pair{\amodel}{\aworld}}{\pair{\amodel'}{\aworld'}}{\triple{m}{s}{\apropset}}$.
\item[Induction case: $\aformula = \aformulabis \separate \aformulater$.]
By hypothesis, $\amodel,\aworld \models \aformulabis \separate \aformulater$ and $\amodel',\aworld' \not\models \aformulabis \separate \aformulater$.
Then, there are $\amodel_1$ and $\amodel_2$ such that
$\amodel_1 + \amodel_2 = \amodel$, $\amodel_1,\aworld \models \aformulabis$ and $\amodel_2, \aworld \models \aformulater$. Moreover, by definition,
the number of nested stars in $\aformulabis \separate \aformulater$ is at least $1$ and therefore the spoiler can play a spatial move.
The spoiler chooses the structure $\pair{\amodel}{\aworld}$ and chooses exactly $\amodel_1$ and $\amodel_2$.
The duplicator has then to reply by choosing two structures $\amodel_1'$ and $\amodel_2'$ such that
$\amodel_1' + \amodel_2' = \amodel'$.
Since  $\amodel',\aworld' \not\models \aformulabis \separate \aformulater$, either $\amodel_1',\aworld' \not\models \aformulabis$ or $\amodel_2',\aworld' \not\models \aformulater$.
If the former holds, then by the induction hypothesis, $\pair{\amodel_1}{\aworld} \neggamerel{m,s-1}{\apropset} \pair{\amodel_1'}{\aworld'}$.
Hence, by choosing to continue the game on $\triple{\pair{\amodel_1}{\aworld}}{\pair{\amodel_1'}{\aworld'}}{\triple{m}{s-1}{\apropset}}$ the spoiler built a winning strategy for the game $\triple{\pair{\amodel}{\aworld}}{\pair{\amodel'}{\aworld'}}{\triple{m}{s}{\apropset}}$.
Symmetrically, if instead $\amodel_2',\aworld' \not\models \aformulater$ then by the induction hypothesis
$\pair{\amodel_2}{\aworld} \neggamerel{m,s-1}{\apropset} \pair{\amodel_2'}{\aworld'}$.
Hence, by choosing to continue the game on $\triple{\pair{\amodel_2}{\aworld}}{\pair{\amodel_2'}{\aworld'}}{\triple{m}{s-1}{\apropset}}$, the spoiler
builds a winning strategy for the game $\triple{\pair{\amodel}{\aworld}}{\pair{\amodel'}{\aworld'}}{\triple{m}{s}{\apropset}}$.
In either case, we conclude that $\pair{\amodel}{\aworld} \neggamerel{m,s}{\apropset} \pair{\amodel'}{\aworld'}$.
\end{description}
We now prove that the games are complete (left to right direction).
\begin{nscenter}
$\boxed{
\text{ If }  \pair{\amodel}{\aworld} \neggamerel{m,s}{\apropset} \pair{\amodel'}{\aworld'}
\text{ then there is }
\aformula \in \aformulaerankset{m,s}{\apropset}
\text{ s.t.\ } \amodel,\aworld \models \aformula \text{ and } \amodel',\aworld' \not \models \aformula
}$
\end{nscenter}
Again, the proof is rather standard and it is by induction on $\pair{m}{s}$ and by cases on the first move that the spoiler makes in his winning stategy for the game $\triple{\pair{\amodel}{\aworld}}{\pair{\amodel'}{\aworld'}}{\triple{m}{s}{\apropset}}$.
\begin{description}
\item[Base case: $m = 0$ and $s = 0$.]
Since the spoiler has a winning strategy, in particular it wins the game of rank $\triple{0}{0}{\apropset}$ and therefore by definition of the game
it must hold that
there is a propositional symbol $\avarprop \in \apropset$ such that $\amodel,\aworld \models \avarprop$ iff $\amodel',\aworld' \not \models \avarprop$.
If $\amodel,\aworld \models \avarprop$, then $\aformula$ (as in the statement) is $\avarprop$.
Otherwise (i.e.\ $\amodel',\aworld' \models \avarprop$) we take $\aformula = \lnot \avarprop$.

Notice that this case also holds for games on arbitrary rank $\triple{m}{s}{\apropset}$:  
the spoiler wins simply from the conditions of the game that are imposed before each round.
\item[Induction case: the spoiler plays a modal move.] Notice that then $m \geq 1$.
Suppose that, by following its strategy, the spoiler chooses $\pair{\amodel}{\aworld}$ and a world $\aworld_1$ accessible from $\aworld$.
By Lemma~\ref{lemma:SC-char-formula}, we have that $\amodel,\aworld_1 \models \acharformulaSC{\amodel,\aworld_1}{m-1,s}{\apropset}$.
Let $\aformula$ be defined as the formula $\Diamond\acharformulaSC{\amodel,\aworld_1}{m-1,s}{\apropset}$.
By definition, $\amodel,\aworld \models \aformula$ and $\aformula \in \aformulaerankset{m,s}{\apropset}$.
Ad absurdum, suppose that $\amodel',\aworld' \models \aformula$.
Then there is a world $\aworld_1'$ accessible from $\aworld'$ such that $\amodel',\aworld_1' \models \acharformulaSC{\amodel,\aworld_1}{m-1,s}{\apropset}$.
By Lemma~\ref{lemma:SC-char-formula} there is no formula in $\aformulaerankset{m-1,s}{\apropset}$ that can discriminate between $\pair{\amodel}{\aworld_1}$ and $\pair{\amodel'}{\aworld_1'}$.
As our games are determined,
by the induction hypothesis this implies that the duplicator has a winning strategy for the game $\triple{\pair{\amodel}{\aworld_1}}{\pair{\amodel'}{\aworld_1'}}{\triple{m-1}{s}{\apropset}}$.
This is contradictory, as by hypothesis the spoiler has a winning strategy and the move it played is part of this strategy. Hence,
$\amodel,\aworld \models \aformula$ and $\amodel',\aworld' \not\models \aformula$.

The proof is analogous for the case where the spoiler chooses $\pair{\amodel'}{\aworld'}$ and a world $\aworld_1'$ accessible from $\aworld$.
In this case we obtain $\amodel,\aworld \not\models \aformulabis$ and $\amodel',\aworld' \models \aformulabis$, where
$\aformulabis$ is defined as $\Diamond\acharformulaSC{\amodel',\aworld_1'}{m-1,s}{\apropset}$.
Hence, we take $\aformula$ (as in the statement) defined as $\lnot \aformulabis$.

\item[Induction case: the spoiler plays a spatial move.] Notice that then $s \geq 1$.
Suppose that, by following its strategy, the spoiler chooses $\pair{\amodel}{\aworld}$ and two finite forests $\amodel_1$ and $\amodel_2$
such that
$\amodel_1 + \amodel_2 = \amodel$.
Recall that, by Lemma~\ref{lemma:SC-char-formula},
$\amodel_1,\aworld \models \acharformulaSC{\amodel_1,\aworld}{m,s-1}{\apropset}$ and
$\amodel_2,\aworld \models \acharformulaSC{\amodel_2,\aworld}{m,s-1}{\apropset}$.
Let $\aformula$ be defined as $\acharformulaSC{\amodel_1,\aworld}{m,s-1}{\apropset} \separate \acharformulaSC{\amodel_2,\aworld}{m,s-1}{\apropset}$.
By definition $\amodel,\aworld \models \aformula$ and $\aformula \in \aformulaerankset{m,s}{\apropset}$.
Ad absurdum, suppose that $\amodel',\aworld' \models \aformula$.
Then there are $\amodel_1'$ and $\amodel_2'$ such that
$\amodel_1' + \amodel_2' = \amodel'$,
$\amodel_1',\aworld' \models \acharformulaSC{\amodel_1,\aworld}{m,s-1}{\apropset}$
and
$\amodel_2',\aworld' \models \acharformulaSC{\amodel_2,\aworld}{m,s-1}{\apropset}$.
Then, by Lemma~\ref{lemma:SC-char-formula},
there is no formula in $\aformulaerankset{m,s-1}{\apropset}$ that can discriminate between $\pair{\amodel_1}{\aworld}$ and $\pair{\amodel_1'}{\aworld'}$, or that can discriminate between $\pair{\amodel_2}{\aworld}$ and $\pair{\amodel_2'}{\aworld'}$.
As our games are determined, by the induction hypothesis this implies that the duplicator has a winning strategy for both the games
$\triple{\pair{\amodel_1}{\aworld}}{\pair{\amodel_1'}{\aworld'}}{\triple{m}{s-1}{\apropset}}$
and
$\triple{\pair{\amodel_2}{\aworld}}{\pair{\amodel_2'}{\aworld'}}{\triple{m}{s-1}{\apropset}}$.
This leads to a contradiction, as by hypothesis the spoiler has a winning strategy and the move it played is part of this strategy.
Hence, $\amodel,\aworld \models \aformula$ and $\amodel',\aworld' \not\models \aformula$.

The proof is analogous for the case where the spoiler chooses  $\pair{\amodel'}{\aworld'}$ and two finite forests $\amodel_1'$ and $\amodel_2'$ such that 
$\amodel_1' + \amodel_2' = \amodel'$.
In this case we obtain $\amodel,\aworld \not\models \aformulabis$ and $\amodel',\aworld' \models \aformulabis$ where
$\aformulabis$ is defined as $\acharformulaSC{\amodel_1',\aworld'}{m,s-1}{\apropset} \separate \acharformulaSC{\amodel_2',\aworld'}{m,s-1}{\apropset}$.
Hence, we take $\aformula$ (as in the statement) defined as $\lnot \aformulabis$. \qedhere
\end{description}
\end{proof}

\subsection{Proof of \Cref{lemma:GML-more-SC}}

\begin{proof}
As usual, the  non-expressivity of $\Gdiamond{=2}\Gdiamond{=1}\true$ is shown by proving that for every rank $\triple{m}{s}{\apropset}$ there are two structures $\pair{\amodel}{\aworld}$ and $\pair{\amodel'}{\aworld'}$ such that
\begin{itemize}
\item $\pair{\amodel}{\aworld} \gamerel{m,s}{\apropset} \pair{\amodel'}{\aworld'}$, and
\item $\amodel,\aworld \models \Gdiamond{=2}\Gdiamond{=1}\true$ whereas $\amodel',\aworld' \not\models \Gdiamond{=2}\Gdiamond{=1}\true$.
\end{itemize}
Here, we divide the proof into two parts, named below \ref{lemmaprop:GML-more-SL-1} and \ref{lemmaprop:GML-more-SL-2}.
We start with some preliminary definitions.
Let $\amodel = \triple{\worlds}{\arelation}{\avaluation}$ be a finite forest and $\aworld \in \worlds$.
We denote with $\arelation(\aworld)_{=n}$ the set of worlds in $\arelation(\aworld)$ having exactly $n$ children, i.e. $\{ \aworld_1 \in \arelation(\aworld) \mid \card{\arelation(\aworld_1)} = n \}$.
During the proof, we only use pointed forests $\pair{\amodel}{\aworld}$ satisfying the following properties:
\begin{enumerate}[label=\Roman*]
\item\label{constr:GML-more-SL-1} $\avaluation(\avarprop) = \emptyset$ for every $\avarprop \in \varprop$;
\item\label{constr:GML-more-SL-2} $\arelation(\aworld)_{=0}$, $\arelation(\aworld)_{=1}$ and $\arelation(\aworld)_{=2}$ form a partition of $\arelation(\aworld)$;
\item\label{constr:GML-more-SL-3} $\arelation^3(\aworld) = \emptyset$, i.e. the set of worlds reachable from $\aworld$ in at least three steps is empty.
\end{enumerate}
Below, we represent schematically the models satisfying the
properties~\ref{constr:GML-more-SL-1},~\ref{constr:GML-more-SL-2}~and~\ref{constr:GML-more-SL-3}
(notice that each world does not satisfy any propositional symbol).
\begin{nscenter}
\begin{tikzpicture}
\node (w) {$\aworld$};
\node[dot] (w1) [below left = 1cm and 3.5cm of w] {};
\node (h1) [right = 0.3cm of w1] {$\dots$};
\node[dot] (w2) [right = 0.3cm of h1] {};

\node (h2) [below = 0.9cm of w] {$\dots$};
\node[dot] (w3) [left = 0.3cm of h2] {};
\node[dot] (w4) [right = 0.3cm of h2] {};

\node[dot] (w5) [below right = 1cm and 3.5cm of w] {};
\node (h3) [left = 0.3cm of w5] {$\dots$};
\node[dot] (w6) [left = 0.3cm of h3] {};

\node[dot] (ww1) [below = 0.5cm of w3] {};
\node[dot] (ww2) [below = 0.5cm of w4] {};

\node[dot] (ww3) [below left = 0.5cm and 0.15cm of w5] {};
\node[dot] (ww4) [below right = 0.5cm and 0.15cm of w5] {};

\node[dot] (ww5) [below left = 0.5cm and 0.15cm of w6] {};
\node[dot] (ww6) [below right = 0.5cm and 0.15cm of w6] {};

\draw[pto] (w) -- (w1);
\draw[pto] (w) -- (w2);

\draw[pto] (w) -- (w3);
\draw[pto] (w) -- (w4);

\draw[pto] (w) -- (w5);
\draw[pto] (w) -- (w6);

\draw[pto] (w3) -- (ww1);
\draw[pto] (w4) -- (ww2);

\draw[pto] (w5) -- (ww3);
\draw[pto] (w5) -- (ww4);

\draw[pto] (w6) -- (ww5);
\draw[pto] (w6) -- (ww6);

\node (h4) [below left = 0.55cm and 0.1cm of w1] {};
\node (h5) [below right = 0.55cm and 0.1cm of w2] {};

\node (h6) [below left = 0.55cm and 0.1cm of w3] {};
\node (h7) [below right = 0.55cm and 0.1cm of w4] {};

\node (h8) [below left = 0.55cm and 0.1cm of w6] {};
\node (h9) [below right = 0.55cm and 0.1cm of w5] {};

\draw [decorate,decoration={brace,amplitude=10pt}]
(h5.south east) -- node[below = 0.3cm]
{\footnotesize{$\arelation(\aworld)_{=0}$}} (h4.south west);

\draw [decorate,decoration={brace,amplitude=10pt}]
(h7.south east) -- node[below = 0.3cm]
{\footnotesize{$\arelation(\aworld)_{=1}$}} (h6.south west);

\draw [decorate,decoration={brace,amplitude=10pt}]
(h9.south east) -- node[below = 0.3cm]
{\footnotesize{$\arelation(\aworld)_{=2}$}} (h8.south west);

\end{tikzpicture}
\end{nscenter}
Let us consider two models $\amodel_1 = \triple{\worlds}{\arelation_1}{\avaluation}$ and 
$\amodel_2 = \triple{\worlds}{\arelation_2}{\avaluation_2}$ such that $\amodel_1 + \amodel_2 = \amodel$.
We pinpoint three important properties of the models we are considering.
\begin{description}
\item[\rulelab{S1}{GML-m-SL:S1}] Every world in $\arelation(\aworld)_{=0}$ is either in $\arelation_1(\aworld)_{=0}$ or $\arelation_2(\aworld)_{=0}$;
\item[\rulelab{S2}{GML-m-SL:S2}] Every world $\aworld_1 \in \arelation(\aworld)_{=1}$ is in $\arelation_1(\aworld)_{=0}$, $\arelation_2(\aworld)_{=0}$, $\arelation_1(\aworld)_{=1}$ or in $\arelation_2(\aworld)_{=1}$.
Indeed, suppose $\pair{\aworld}{\aworld_1} \in \arelation_i$ (for some $i \in \{1,2\}$).
If $\aworld_1$ is in the domain of the same relation $\arelation_i$ then $\aworld_1 \in \arelation_i(\aworld)_{=1}$. Otherwise ($\aworld_1$ is in the domain of $\arelation_{3-i}$) then  $\aworld_1 \in \arelation_i(\aworld)_{=0}$.
\item[\rulelab{S3}{GML-m-SL:S3}] Every world in $\arelation(\aworld)_{=2}$ is in $\arelation_1(\aworld)_{=0}$, $\arelation_2(\aworld)_{=0}$, $\arelation_1(\aworld)_{=1}$, $\arelation_2(\aworld)_{=1}$, $\arelation_1(\aworld)_{=2}$ or $\arelation_2(\aworld)_{=2}$.
The justification is similar to the one given above for
$\arelation(\aworld)_{=1}$.
\end{description}
We first prove the following property:
\begin{nscenter}
$\boxed{\text{\rulelab{(A)}{lemmaprop:GML-more-SL-1}\qquad
$\begin{aligned}[t]&\text{Given a rank } \triple{m}{s}{\apropset}\text{ and two pointed forests }\pair{\amodel = \triple{\worlds}{\arelation}{\avaluation}}{\aworld}\text{ and}\\
&\pair{\amodel' = \triple{\worlds'}{\arelation'}{\avaluation'}}{\aworld'} \text{ satisfying \ref{constr:GML-more-SL-1}, \ref{constr:GML-more-SL-2} and \ref{constr:GML-more-SL-3}, if}\\
&\ \text{\labelitemi\ } \min(\card{\arelation(\aworld)_{=0}},2^s) = \min(\card{\arelation'(\aworld')_{=0}},2^s);\\
&\ \text{\labelitemi\ } \min(\card{\arelation(\aworld)_{=1}},2^s(s+1)) = \min(\card{\arelation'(\aworld')_{=1}},2^s(s+1));\\
&\ \text{\labelitemi\ } \min(\card{\arelation(\aworld)_{=2}},2^{s-1}(s+1)(s+2)) = \min(\card{\arelation'(\aworld')_{=2}},2^{s-1}(s+1)(s+2))\\
&\text{then } \pair{\amodel}{\aworld} \gamerel{m,s}{\apropset} \pair{\amodel'}{\aworld'}
\end{aligned}
$}}$
\end{nscenter}

First, as worlds in our models do not satisfy any propositional symbol,
the spoiler cannot win because of distinct propositional valuations.
The proof is by cases on $m$ and on the moves done by the spoiler, and by induction on $s$.
First, suppose $m = 0$. Then it is easy to see that the duplicator has a winning strategy.
Indeed, as $m = 0$, the spoiler cannot play the modal move and therefore cannot change the current worlds $\aworld$ and $\aworld'$.
Then, after $s$ spatial moves the game will be in the state $\pair{\amodel_1}{\aworld}$ and $\pair{\amodel_1'}{\aworld'}$ w.r.t.\ the rank $\triple{0}{0}{\apropset}$. From \ref{constr:GML-more-SL-1} we conclude that the duplicator wins.

Suppose now $m \geq 1$ and  the spoiler decides to perform a modal move. Notice that, in particular, this case also takes care of the case where $s = 0$ and the spoiler is forced to play a modal move.
Moreover, suppose that the spoiler chooses $\pair{\amodel}{\aworld}$ (the case where it picks $\pair{\amodel'}{\aworld'}$ is analogous).
We have to distinguish the following situations. 
\begin{itemize}
\item Suppose that the spoiler chooses a world $\aworld_1 \in \arelation(\aworld)_{=0}$. Then $\card{\arelation(\aworld)_{=0}} \geq 1$ and by hypothesis $\min(\card{\arelation(\aworld)_{=0}},2^s) = \min(\card{\arelation'(\aworld')_{=0}},2^s)$,
it follows that $\card{\arelation'(\aworld')_{=0}} \geq 1$.
It is then sufficient for the duplicator to choose $\aworld_1 \in \arelation'(\aworld')_{=0}$ to guarantee him a victory, as the subtrees rooted in $\aworld_1$ and $\aworld_1'$ are isomorphic.
\item Suppose that the spoiler chooses a world $\aworld_1 \in \arelation(\aworld)_{=1}$. Then $\card{\arelation(\aworld)_{=1}} \geq 1$ and by hypothesis $\min(\card{\arelation(\aworld)_{=1}},2^s(s+1)) = \min(\card{\arelation'(\aworld')_{=1}},2^s(s+1))$,
it follows that $\card{\arelation'(\aworld')_{=1}} \geq 1$. Then again,
it is sufficient for the duplicator to choose $\aworld_1 \in \arelation'(\aworld')_{=1}$ to guarantee him a victory, as the subtrees rooted in $\aworld_1$ and $\aworld_1'$ are isomorphic.
\item Suppose that the spoiler chooses a world $\aworld_1 \in \arelation(\aworld)_{=2}$. Then $\card{\arelation(\aworld)_{=2}} \geq 1$ and by hypothesis $\min(\card{\arelation(\aworld)_{=2}},2^{s-1}(s+1)(s+2)) = \min(\card{\arelation'(\aworld')_{=2}},2^{s-1}(s+1)(s+2))$,
it follows that $\card{\arelation'(\aworld')_{=2}} \geq 1$ (notice here that $2^{s-1}(s+1)(s+2) = 1$ for $s = 0$).
Then again,
it is sufficient for the duplicator to choose $\aworld_1 \in \arelation'(\aworld')_{=2}$ to guarantee him a victory, as the subtrees rooted in $\aworld_1$ and $\aworld_1'$ are isomorphic.
\end{itemize}

As stated before, the case where the spoiler decides to perform a modal move also captures the base case of the induction on $s$. Then, it remains to show the case where $s \geq 1$ and the spoiler decides to do a spatial move.
Again suppose that the spoiler chooses $\pair{\amodel}{\aworld}$ (the case where it picks $\pair{\amodel'}{\aworld'}$ is analogous).
It then picks two structures $\amodel_1 = \triple{\worlds}{\arelation_1}{\avaluation}$ and $\amodel_2 = \triple{\worlds}{\arelation_2}{\avaluation}$ such that
 $\amodel_1 + \amodel_2 = \amodel$.
Notice that these two structures are such what both $\pair{\amodel_1}{\aworld}$ and $\pair{\amodel_2}{\aworld}$ satisfy
\ref{constr:GML-more-SL-1}, \ref{constr:GML-more-SL-2} and \ref{constr:GML-more-SL-3}, as it is easy to see that these three properties are all preserved when taking submodels.
The duplicator has now to pick two structures $\amodel_1' = \triple{\worlds'}{\arelation_1'}{\avaluation'}$ and $\amodel_2' = \triple{\worlds'}{\arelation_2'}{\avaluation'}$ such that
$\amodel_1' + \amodel_2' = \amodel'$ and that guarantees him a victory. It does so by constructing $\arelation_1'$ and $\arelation_2'$ as follows (from the empty set):
\begin{description}
\item[Split of $\arelation'(\aworld)_{=0}$.]
We introduce the sets
\begin{nscenter}
$\begin{aligned}[t]
\arelation_1(\aworld)|_{0 \blacktriangleright 0} &\egdef \arelation_1(\aworld)_{=0} \cap \arelation(\aworld)_{=0}\\
\arelation_2(\aworld)|_{0 \blacktriangleright 0} &\egdef \arelation_2(\aworld)_{=0} \cap \arelation(\aworld)_{=0}.
\end{aligned}$
\end{nscenter}
It is easy to see that these sets are pairwise disjoint.
From (\ref{GML-m-SL:S1}) it follows that
\begin{nscenter}
$\arelation(\aworld)_{=0} = (\arelation_1(\aworld)_{=0} \cap \arelation(\aworld)_{=0}) \cup (\arelation_2(\aworld)_{=0} \cap \arelation(\aworld)_{=0})$.
\end{nscenter}
The duplicator start by partitioning $\arelation'(\aworld)_{=0}$ into two sets $Z_{1}$ and $Z_{2}$ according to the cardinalities of the two components of $\arelation(\aworld)_{=0}$ highlighted above, namely the two sets $\arelation_1(\aworld)_{=0} \cap \arelation(\aworld)_{=0}$ and $\arelation_2(\aworld)_{=0} \cap \arelation(\aworld)_{=0}$.
\begin{itemize}
\item Suppose that $\card{\arelation_1(\aworld)|_{0 \blacktriangleright 0}} < 2^{s-1}$ and $\card{\arelation_2(\aworld)|_{0 \blacktriangleright 0}} < 2^{s-1}$.
 Hence, $\card{\arelation(\aworld)_{=0}} < 2^s$ and by hypothesis $\card{\arelation'(\aworld')_{=0}} = \card{\arelation(\aworld)_{=0}}$.
Then the split of $\arelation'(\aworld)_{=0}$ into $Z_{1}$ and $Z_{2}$ is made so that $\card{Z_{1}} = \card{\arelation_1(\aworld)|_{0 \blacktriangleright 0}}$ and
$\card{Z_{2}} = \card{\arelation_2(\aworld)|_{0 \blacktriangleright 0}}$.
\item Suppose that there is $i \in \{1,2\}$ such that $\card{\arelation_i(\aworld)|_{0 \blacktriangleright 0}} < 2^{s-1}$ and
$\card{\arelation_j(\aworld)|_{0 \blacktriangleright 0}} \geq 2^{s-1}$,
where $j = 3-i$ is the index of the other set.
Then the split of $\arelation'(\aworld)_{=0}$ into $Z_{i}$ and $Z_{j}$ is made so that $\card{Z_{i}} = \card{\arelation_i(\aworld)|_{0 \blacktriangleright 0}}$.
Notice that by hypothesis on the cardinality of $\arelation'(\aworld)_{=0}$ it holds that $\card{Z_j} \geq 2^{s-1}$ (otherwise $\min(\card{\arelation(\aworld)_{=0}},2^s) \neq \min(\card{\arelation'(\aworld')_{=0}},2^s)$).
\item Suppose that $\card{\arelation_1(\aworld)|_{0 \blacktriangleright 0}} \geq 2^{s-1}$ and $\card{\arelation_2(\aworld)|_{0 \blacktriangleright 0}} \geq 2^{s-1}$.
Then the split of $\arelation'(\aworld)_{=0}$ into $Z_{1}$ and $Z_{2}$ is made so that $\card{Z_{1}} = 2^{s-1}$. Notice that by hypothesis on the cardinality of $\arelation'(\aworld)_{=0}$ it holds that $\card{Z_j} \geq 2^{s-1}$.
\end{itemize}
For each $\aworld_1' \in Z_1$, the duplicator adds $\pair{\aworld'}{\aworld_1'}$ to $\arelation_1'$.
For each $\aworld_2' \in Z_2$, it adds $\pair{\aworld'}{\aworld_2'}$ to $\arelation_2'$.
Notice that by construction the two sets introduced are always such that
\begin{description}
\item[\rulelab{Z1}{GML-m-SL:Z1}] $\min(\card{\arelation_1(\aworld)|_{0 \blacktriangleright 0}},2^{s-1}) = \min(\card{Z_1},2^{s-1})$
\item[\rulelab{Z2}{GML-m-SL:Z2}] $\min(\card{\arelation_2(\aworld)|_{0 \blacktriangleright 0}},2^{s-1}) = \min(\card{Z_2},2^{s-1})$.
\end{description}
\item[Split of $\arelation'(\aworld)_{=1}$.]
We introduce the following sets:
\begin{nscenter}
$
\begin{aligned}[t]
\arelation_1(\aworld)|_{1 \blacktriangleright 0} \egdef \arelation_1(\aworld)_{=0} \cap \arelation(\aworld)_{=1} &\qquad\qquad
\arelation_2(\aworld)|_{1 \blacktriangleright 0} \egdef \arelation_2(\aworld)_{=0} \cap \arelation(\aworld)_{=1}\\
\arelation_1(\aworld)|_{1 \blacktriangleright 1} \egdef \arelation_1(\aworld)_{=1} \cap \arelation(\aworld)_{=1} &\qquad\qquad
\arelation_2(\aworld)|_{1 \blacktriangleright 1} \egdef \arelation_2(\aworld)_{=1} \cap \arelation(\aworld)_{=1}.
\end{aligned}$
\end{nscenter}
It is easy to see that these sets are pairwise disjoint.
From (\ref{GML-m-SL:S2}) it follows that
\begin{nscenter}
$
\arelation(\aworld)_{=1} = \arelation_1(\aworld)|_{1 \blacktriangleright 0} \cup  \arelation_2(\aworld)|_{1 \blacktriangleright 0}  \cup  \arelation_1(\aworld)|_{1 \blacktriangleright 1}  \cup  \arelation_2(\aworld)|_{1 \blacktriangleright 1}.
$
\end{nscenter}
The duplicator starts by partitioning $\arelation'(\aworld)_{=1}$ into four sets $Z_1'$, $Z_2'$, $O_1$ and $O_2$ according to the cardinalities of the four sets above (`Z' for `zero', `O' for 'one').
In order to shorten the presentation, instead of concretely make explicit all the cases as we did in the previous point of the construction, we
treat them ``schematically''.
Let $\mathcal{\aset} = \{ \arelation_1(\aworld)|_{1 \blacktriangleright 0}, \arelation_2(\aworld)|_{1 \blacktriangleright 0}, \arelation_1(\aworld)|_{1 \blacktriangleright 1}, \arelation_2(\aworld)|_{1 \blacktriangleright 1}\}$ and
let $\amap$ be the bijection
\begin{nscenter}
$\amap(\arelation_1(\aworld)|_{1 \blacktriangleright 0}) \egdef Z_1'$, \quad $\amap(\arelation_2(\aworld)|_{1 \blacktriangleright 0}) \egdef Z_2'$
\quad
$\amap(\arelation_1(\aworld)|_{1 \blacktriangleright 1}) \egdef O_1$, \quad $\amap(\arelation_2(\aworld)|_{1 \blacktriangleright 1}) \egdef O_2$.
\end{nscenter}
Moreover, we define ($\mathcal{B}$ stands for ``bound'')
\begin{nscenter}
$\begin{aligned}[t]
&\boundd{\arelation_1(\aworld)|_{1 \blacktriangleright 0}} \ \egdef \ \boundd{\arelation_2(\aworld)|_{1 \blacktriangleright 0}} \ \egdef \ 2^{s-1}\\
&\boundd{\arelation_1(\aworld)|_{1 \blacktriangleright 1}} \ \egdef \ \boundd{\arelation_2(\aworld)|_{1 \blacktriangleright 1}} \ \egdef \ 2^{s-1}s.
\end{aligned}$
\end{nscenter}
So, these definitions (actually notations) are helpful at the metalevel. Besides,
notice that, from $s \geq 1$, it holds that $2^{s-1}$ and $2^{s-1}s$ are both at least $1$.

  \begin{itemize}
    \item Suppose that for every set $S \in \mathcal{\aset}$ it holds that $\card{S} < \boundd{S}$. Then, since it holds that
    \begin{nscenter}
    $\card{\arelation(\aworld)_{=1}} = \card{\arelation_1(\aworld)|_{1 \blacktriangleright 0}} + \card{\arelation_2(\aworld)|_{1 \blacktriangleright 0}} + \card{\arelation_1(\aworld)|_{1 \blacktriangleright 1}} + \card{\arelation_2(\aworld)|_{1 \blacktriangleright 1}}$
    \end{nscenter}
    it holds that $\card{\arelation(\aworld)_{=1}} < 2^{s-1} + 2^{s-1} + 2^{s-1}s + 2^{s-1}s = 2^{s}(s+1)$ and therefore by hypothesis we conclude that $\card{\arelation(\aworld)_{=1}} = \card{\arelation'(\aworld')_{=1}}$.
    Then, the split of $\arelation'(\aworld')_{=1}$ into $Z_1'$, $Z_2'$, $O_1$ and $O_2$ is made so that for every
    $S \in \mathcal{\aset}$, $\card{\amap(S)} = \card{S}$.
    \item Suppose instead that there is $\widehat{S} \in \mathcal{\aset}$ such that $\card{\widehat{S}} \geq \boundd{\widehat{S}}$.
    Then, the split of $\arelation'(\aworld')_{=1}$ into $Z_1'$, $Z_2'$, $O_1$ and $O_2$ is made so that for every
    $S \in \mathcal{\aset} \setminus \set{\widehat{S}}$, $\card{\amap(S)} = \min(\card{S},\boundd{S})$.
    From the hypothesis
    \begin{nscenter}
      $\min(\card{\arelation(\aworld)_{=1}},2^s(s+1)) = \min(\card{\arelation'(\aworld')_{=1}},2^s(s+1))$
    \end{nscenter}
    we conclude that this construction can be effectively made and it is such that $\card{\amap(\widehat{S})} \geq \boundd{\widehat{S}}$.
  \end{itemize}
  For each $\aworld_1' \in Z_1'$, the duplicator adds $\pair{\aworld'}{\aworld_1'}$ to $\arelation_1'$
  and the only element of $\arelation'|_{\aworld_1'}$ to $\arelation_2'$.
  For each $\aworld_2' \in Z_2'$, it adds $\pair{\aworld'}{\aworld_2'}$ to $\arelation_2'$
   and the only element of $\arelation'|_{\aworld_2'}$ to $\arelation_1'$.
  For each $\aworld_1' \in O_1$, it adds $\pair{\aworld'}{\aworld_1'}$ and the only element of $\arelation'|_{\aworld_1'}$ to $\arelation_1'$.
  Lastly, for each $\aworld_2' \in O_2$, it adds $\pair{\aworld'}{\aworld_2'}$ and the only element of $\arelation'|_{\aworld_2'}$ to $\arelation_2'$.
  Notice that by construction the four sets introduced are always such that
  \begin{description}
  \item[\rulelab{Z11}{GML-m-SL:Z1p}] $\min(\card{\arelation_1(\aworld)|_{1 \blacktriangleright 0}},2^{s-1}) = \min(\card{Z_1'},2^{s-1})$
  \item[\rulelab{Z21}{GML-m-SL:Z2p}] $\min(\card{\arelation_2(\aworld)|_{1 \blacktriangleright 0}},2^{s-1}) = \min(\card{Z_2'},2^{s-1})$
  \item[\rulelab{O1}{GML-m-SL:O1}] $\min(\card{\arelation_1(\aworld)|_{1 \blacktriangleright 1}},2^{s-1}s) = \min(\card{O_1},2^{s-1}s)$
  \item[\rulelab{O2}{GML-m-SL:O2}] $\min(\card{\arelation_2(\aworld)|_{1 \blacktriangleright 1}},2^{s-1}s) = \min(\card{O_2},2^{s-1}s)$
  \end{description}
  or, more schematically, for every $S \in \mathcal{\aset}$, $\min(\card{S},\boundd{S}) = \min(\card{\amap(S)},\boundd{S})$.
\item[Split of $\arelation'(\aworld)_{=2}$.]
Similarly to the previous steps, we introduce the following sets:
\begin{nscenter}
$
\begin{aligned}[t]
\arelation_1(\aworld)|_{2 \blacktriangleright 0} &\egdef \arelation_1(\aworld)_{=0} \cap \arelation(\aworld)_{=2} &\qquad\qquad
\arelation_2(\aworld)|_{2 \blacktriangleright 0} &\egdef \arelation_2(\aworld)_{=0} \cap \arelation(\aworld)_{=2}\\
\arelation_1(\aworld)|_{2 \blacktriangleright 1} &\egdef \arelation_1(\aworld)_{=1} \cap \arelation(\aworld)_{=2} &\qquad\qquad
\arelation_2(\aworld)|_{2 \blacktriangleright 1} &\egdef \arelation_2(\aworld)_{=1} \cap \arelation(\aworld)_{=2}\\
\arelation_1(\aworld)|_{2 \blacktriangleright 2} &\egdef \arelation_1(\aworld)_{=2} \cap \arelation(\aworld)_{=2} &\qquad\qquad
\arelation_2(\aworld)|_{2 \blacktriangleright 2} &\egdef \arelation_2(\aworld)_{=2} \cap \arelation(\aworld)_{=2}.
\end{aligned}$
\end{nscenter}
It is easy to see that these sets are pairwise disjoint.
From (\ref{GML-m-SL:S3}) it follows that
\begin{nscenter}
$
\arelation(\aworld)_{=2} = \arelation_1(\aworld)|_{2 \blacktriangleright 0} \cup  \arelation_2(\aworld)|_{2 \blacktriangleright 0}  \cup  \arelation_1(\aworld)|_{2 \blacktriangleright 1}  \cup  \arelation_2(\aworld)|_{2 \blacktriangleright 1} \cup
 \arelation_1(\aworld)|_{2 \blacktriangleright 2}  \cup  \arelation_2(\aworld)|_{2 \blacktriangleright 2}
$
\end{nscenter}
The duplicator starts by partitioning $\arelation'(\aworld)_{=2}$ into six sets $Z_1''$, $Z_2''$, $O_1'$, $O_2'$, $T_1$ and $T_2$ according to the cardinalities of the six sets above (`T' for `two').
Again, to shorten the presentation we introduce the set
\begin{nscenter}
$\mathcal{\aset} = \{ \arelation_1(\aworld)|_{2 \blacktriangleright 0}, \arelation_2(\aworld)|_{2 \blacktriangleright 0}, \arelation_1(\aworld)|_{2 \blacktriangleright 1}, \arelation_2(\aworld)|_{2 \blacktriangleright 1},
\arelation_1(\aworld)|_{2 \blacktriangleright 2}, \arelation_2(\aworld)|_{2 \blacktriangleright 2}\}$,
\end{nscenter}
and the bijection $\amap$ such that 
\begin{nscenter}
$\amap(\arelation_1(\aworld)|_{2 \blacktriangleright 0}) \egdef Z_1''$, \quad $\amap(\arelation_2(\aworld)|_{2 \blacktriangleright 0}) \egdef Z_2''$
\quad
$\amap(\arelation_1(\aworld)|_{2 \blacktriangleright 1}) \egdef O_1'$, \\
$\amap(\arelation_2(\aworld)|_{2 \blacktriangleright 1}) \egdef O_2'$,
\quad $\amap(\arelation_1(\aworld)|_{2 \blacktriangleright 2}) \egdef T_1$,
\quad $\amap(\arelation_2(\aworld)|_{2 \blacktriangleright 2}) \egdef T_2$.
\end{nscenter}
Moreover, we define
\begin{nscenter}
$\begin{aligned}[t]
&\boundd{\arelation_1(\aworld)|_{2 \blacktriangleright 0}} \ \egdef \ \boundd{\arelation_2(\aworld)|_{2 \blacktriangleright 0}} \ \egdef \ 2^{s-1}\\
&\boundd{\arelation_1(\aworld)|_{2 \blacktriangleright 1}} \ \egdef \ \boundd{\arelation_2(\aworld)|_{2 \blacktriangleright 1}} \ \egdef \ 2^{s-1}s\\
&\boundd{\arelation_1(\aworld)|_{2 \blacktriangleright 2}} \ \egdef \ \boundd{\arelation_2(\aworld)|_{2 \blacktriangleright 2}} \ \egdef \ 2^{s-2}s(s+1)\\
\end{aligned}$
\end{nscenter}
Notice that, from $s \geq 1$, it holds that $2^{s-1}$, $2^{s-1}s$ and $2^{s-2}s(s+1)$ are both at least $1$.
  \begin{itemize}
    \item Suppose that for every set $S \in \mathcal{\aset}$ it holds that $\card{S} < \boundd{S}$. Then, since $\card{\arelation(\aworld)_{=2}}$ is
    \begin{nscenter}
    $\card{\arelation_1(\aworld)|_{2 \blacktriangleright 0}} + \card{\arelation_2(\aworld)|_{2 \blacktriangleright 0}} + \card{\arelation_1(\aworld)|_{2 \blacktriangleright 1}} + \card{\arelation_2(\aworld)|_{2 \blacktriangleright 1}}  + \card{\arelation_1(\aworld)|_{2 \blacktriangleright 2}} + \card{\arelation_2(\aworld)|_{2 \blacktriangleright 2}}$
    \end{nscenter}
    it holds that
    \begin{nscenter}
    $\card{\arelation(\aworld)_{=2}} < 2 \times 2^{s-1} + 2 \times 2^{s-1}s + 2 \times 2^{s-2}s(s+1) = 2^{s-1}(s+1)(s+2)$
    \end{nscenter}
     and therefore by hypothesis we conclude that $\card{\arelation(\aworld)_{=2}} = \card{\arelation'(\aworld')_{=2}}$.
    Then, the split of $\arelation'(\aworld')_{=2}$ into $Z_1''$, $Z_2''$, $O_1'$, $O_2'$, $T_1$ and $T_2$ is made so that for every
    $S \in \mathcal{\aset}$, $\card{\amap(S)} = \card{S}$.
    \item Suppose instead that there is $\widehat{S} \in \mathcal{\aset}$ such that $\card{\widehat{S}} \geq \boundd{\widehat{S}}$.
    Then, the split of $\arelation'(\aworld')_{=2}$ into $Z_1''$, $Z_2''$, $O_1'$, $O_2'$, $T_1$ and $T_2$ is made so that for every $S \in \mathcal{\aset} \setminus \widehat{S}$, $\card{\amap(S)} = \min(\card{S},\boundd{S})$.
    From the hypothesis
    \begin{nscenter}
      $\min(\card{\arelation(\aworld)_{=2}},2^{s-1}(s+1)(s+2)) = \min(\card{\arelation'(\aworld')_{=2}},2^{s-1}(s+1)(s+2))$
    \end{nscenter}
    we conclude that this construction can be effectively made and it is such that $\card{\amap(\widehat{S})} \geq \boundd{\widehat{S}}$.
  \end{itemize}
  Then, the duplicator updates  $\arelation_1'$ and $\arelation_2'$ as follows:
  \begin{itemize}
  \item For each $\aworld_1' \in Z_1''$, the duplicator adds $\pair{\aworld'}{\aworld_1'}$ to $\arelation_1'$
  and the two elements of $\arelation'|_{\aworld_1'}$ to $\arelation_2'$.
  \item For each $\aworld_2' \in Z_2''$, it adds $\pair{\aworld'}{\aworld_2'}$ to $\arelation_2'$
  and the two elements of $\arelation'|_{\aworld_2'}$ to $\arelation_1'$.
  \item For each $\aworld_1' \in O_1'$, it adds $\pair{\aworld'}{\aworld_1'}$ and one of the two elements of
  $\arelation'|_{\aworld_1'}$ to $\arelation_1'$.
  The other element of $\arelation'|_{\aworld_1'}$ is assigned to $\arelation_2'$.
  \item For each $\aworld_2' \in O_2'$, it adds $\pair{\aworld'}{\aworld_2'}$ and one of the two elements of
$\arelation'|_{\aworld_2'}$ to $\arelation_2'$.
  The other element of $\arelation'|_{\aworld_2'}$ is assigned to $\arelation_1'$.
  \item For each $\aworld_2' \in T_1$, it adds $\pair{\aworld'}{\aworld_2'}$ to $\arelation_1'$ and
         the two elements of $\arelation'|_{\aworld_2'}$ to $\arelation_1'$.
  \item For each $\aworld_2' \in T_2$, it adds $\pair{\aworld'}{\aworld_2'}$ to $\arelation_2'$ and
        the two elements of $\arelation'|_{\aworld_2'}$ to $\arelation_2'$.
  \end{itemize}
  Notice that by construction the six sets introduced are always such that
  \begin{description}
  \item[\rulelab{Z12}{GML-m-SL:Z1pp}] $\min(\card{\arelation_1(\aworld)|_{2 \blacktriangleright 0}},2^{s-1}) = \min(\card{Z_1''},2^{s-1})$
  \item[\rulelab{Z22}{GML-m-SL:Z2pp}] $\min(\card{\arelation_2(\aworld)|_{2 \blacktriangleright 0}},2^{s-1}) = \min(\card{Z_2''},2^{s-1})$
  \item[\rulelab{O11}{GML-m-SL:O1p}] $\min(\card{\arelation_1(\aworld)|_{2 \blacktriangleright 1}},2^{s-1}s) = \min(\card{O_1'},2^{s-1}s)$
  \item[\rulelab{O21}{GML-m-SL:O2p}] $\min(\card{\arelation_2(\aworld)|_{2 \blacktriangleright 1}},2^{s-1}s) = \min(\card{O_2'},2^{s-1}s)$
  \item[\rulelab{T1}{GML-m-SL:T1}] $\min(\card{\arelation_1(\aworld)|_{2 \blacktriangleright 2}},2^{s-2}s(s+1)) = \min(\card{T_1},2^{s-2}s(s+1))$
  \item[\rulelab{T2}{GML-m-SL:T2}] $\min(\card{\arelation_2(\aworld)|_{2 \blacktriangleright 2}},2^{s-2}s(s+1)) = \min(\card{T_2},2^{s-2}s(s+1))$
  \end{description}
  or, more schematically, for every $S \in \mathcal{\aset}$, $\min(\card{S},\boundd{S}) = \min(\card{\amap(S)},\boundd{S})$.
\end{description}
After these steps, since $\pair{\amodel'}{\aworld'}$ satisfies
\ref{constr:GML-more-SL-2} and \ref{constr:GML-more-SL-3}, every element $\pair{\aworld_1'}{\aworld_2'} \in \arelation'$ such that $\aworld_1' \in {\arelation'}^*(\aworld)$ has been assigned to either $\arelation_1'$ or $\arelation_2'$. Duplicator then conclude the construction of
$\amodel_1'$ and $\amodel_2'$ by assigning the remaining elements of $\arelation'$ (i.e. the pairs $\pair{\aworld_1'}{\aworld_2'} \in \arelation'$ such that $\aworld_1' \not\in {\arelation'}^*(\aworld)$) to either $\arelation_1'$ or $\arelation_2'$ (for example, it can put all these elements in $\arelation_1'$).
The two models $\amodel_1'$ and $\amodel_2'$ are now defined and they trivially satisfy \ref{constr:GML-more-SL-1}, \ref{constr:GML-more-SL-2} and \ref{constr:GML-more-SL-3} (as they are submodels of $\amodel'$).
Moreover, by construction it is easy to verify that:
\begin{multicols}{2}
\begin{itemize}
\item $\arelation_1'(\aworld')_{=0} = Z_1 + Z_1' + Z_1''$
\item $\arelation_1'(\aworld')_{=1} = O_1 + O_1'$
\item $\arelation_1'(\aworld')_{=2} = T_1$
\item for every $n > 2$, $\arelation_1'(\aworld')_{=n} = \emptyset$
\item $\arelation_2'(\aworld')_{=0} = Z_2 + Z_2' + Z_2''$
\item $\arelation_2'(\aworld')_{=1} = O_2 + O_2'$
\item $\arelation_2'(\aworld')_{=2} = T_2$
\item for every $n > 2$, $\arelation_2'(\aworld')_{=n} = \emptyset$
\end{itemize}
\end{multicols}
\noindent Indeed, we specifically built $\arelation_1'$ and $\arelation_2'$ so that these properties (which we later refer to with \rulelab{($\dagger$)}{GML-m-SL:dagger}) hold.
Now, we end the proof of~\ref{lemmaprop:GML-more-SL-1} by showing that for all $i \in \{1,2\}$,
\begin{description}
\item[\rulelab{zero}{GML-m-SL:zero}] $\min(\card{\arelation_i(\aworld)_{=0}},2^{s-1}) = \min(\card{\arelation_i'(\aworld')_{=0}},2^{s-1})$;
\item[\rulelab{one}{GML-m-SL:one}] $\min(\card{\arelation_i(\aworld)_{=1}},2^{s-1}s) = \min(\card{\arelation_i'(\aworld')_{=1}},2^{s-1}s)$;
\item[\rulelab{two}{GML-m-SL:two}] $\min(\card{\arelation_i(\aworld)_{=2}},2^{s-2}s(s+1)) = \min(\card{\arelation_i'(\aworld')_{=2}},2^{s-2}s(s+1))$.
\end{description}
Indeed, once these three properties are shown we can apply the induction hypothesis to conclude that $\pair{\amodel_1}{\aworld} \gamerel{m,s-1}{\apropset} \pair{\amodel_1'}{\aworld'}$ and $\pair{\amodel_2}{\aworld} \gamerel{m,s-1}{\apropset} \pair{\amodel_2'}{\aworld'}$ and 
 therefore, the play described with the construction above leads to a winning strategy for the duplicator on the game
$\triple{\pair{\amodel}{\aworld}}{\pair{\amodel'}{\aworld'}}{\triple{m}{s}{\apropset}}$, i.e. $\pair{\amodel}{\aworld} \gamerel{m,s}{\apropset} \pair{\amodel'}{\aworld'}$.
The proof of  these three properties is quite easy (each case is similar to the others).
Let $i \in \{1,2\}$.
By using the definitions given during the construction of $\arelation_1'$ and $\arelation_2'$ it holds that
\begin{itemize}
\item $\arelation_i(\aworld)_{=0} = \arelation_i(\aworld)|_{0\blacktriangleright 0} \cup \arelation_i(\aworld)|_{1\blacktriangleright 0} \cup \arelation_i(\aworld)|_{2\blacktriangleright 0}$, and by definition  for all $j,k \in \interval{0}{2}$ such that $j\neq k$ it holds that
$\arelation_i(\aworld)|_{j\blacktriangleright 0} \cap \arelation_i(\aworld)|_{k\blacktriangleright 0} = \emptyset$.

\item $\arelation_i(\aworld)_{=1} = \arelation_i(\aworld)|_{1\blacktriangleright 1} \cup \arelation_i(\aworld)|_{2\blacktriangleright 1}$, and by definition
$\arelation_i(\aworld)|_{1\blacktriangleright 1} \cap \arelation_i(\aworld)|_{2\blacktriangleright 1} = \emptyset$.
\item $\arelation_i(\aworld)|_{=2} = \arelation_i(\aworld)|_{2 \blacktriangleright 2}$.
\end{itemize}
In what follows, we refer to these three properties with \rulelab{($\ddagger$)}{GML-m-SL:ddagger}.
\begin{description}
\item[proof of (\ref{GML-m-SL:zero}).]
By \ref{GML-m-SL:ddagger}, it holds that $\card{\arelation_i(\aworld)_{=0}} = \card{\arelation_i(\aworld)|_{0\blacktriangleright 0}} + \card{\arelation_i(\aworld)|_{1\blacktriangleright 0}} + \card{\arelation_i(\aworld)|_{2\blacktriangleright 0}}$. We divide the proof into two cases. For the first case, suppose
$\card{\arelation_i(\aworld)|_{0\blacktriangleright 0}} < 2^{s-1}$, $\card{\arelation_i(\aworld)|_{1\blacktriangleright 0}} < 2^{s-1}$ and $\card{\arelation_i(\aworld)|_{2\blacktriangleright 0}} < 2^{s-1}$. Then,
\begin{enumerate}
\item\label{GML-m-SL-zero3} $\card{Z_i} = \card{\arelation_i(\aworld)|_{0\blacktriangleright 0}}$ (by (\ref{GML-m-SL:Z1}) or (\ref{GML-m-SL:Z2}), depending on whether $i = 1$ or $i = 2$)
\item\label{GML-m-SL-zero4} $\card{Z_i'} = \card{\arelation_i(\aworld)|_{1\blacktriangleright 0}}$ (by (\ref{GML-m-SL:Z1p})/(\ref{GML-m-SL:Z2p}))
\item\label{GML-m-SL-zero5} $\card{Z_i''} = \card{\arelation_i(\aworld)|_{1\blacktriangleright 0}}$ (by (\ref{GML-m-SL:Z1pp})/(\ref{GML-m-SL:Z2pp}))
\item\label{GML-m-SL-zero6} $\card{\arelation_i'(\aworld')_{=0}} = \card{\arelation_i(\aworld)|_{0\blacktriangleright 0}} + \card{\arelation_i(\aworld)|_{1\blacktriangleright 0}} + \card{\arelation_i(\aworld)|_{1\blacktriangleright 0}}$
(from (\ref{GML-m-SL-zero3}), (\ref{GML-m-SL-zero4}) and (\ref{GML-m-SL-zero5}), by \ref{GML-m-SL:dagger})
\item $\card{\arelation_i'(\aworld')_{=0}} = \card{\arelation_i(\aworld)_{=0}}$ (from \ref{GML-m-SL-zero6}, by \ref{GML-m-SL:ddagger}).
\end{enumerate}
Otherwise, suppose that there is a set among $\arelation_i(\aworld)|_{0\blacktriangleright 0}$, $\arelation_i(\aworld)|_{1\blacktriangleright 0}$ and $\arelation_i(\aworld)|_{2\blacktriangleright 0}$ whose cardinality is at least $2^{s-1}$.
Then from (\ref{GML-m-SL:Z1})/(\ref{GML-m-SL:Z2}), (\ref{GML-m-SL:Z1p})/(\ref{GML-m-SL:Z2p}) or
(\ref{GML-m-SL:Z1pp})/(\ref{GML-m-SL:Z2pp}) (depending on whether $i = 1$ or $i = 2$ and on which set has at least $2^{s-1}$ elements) there is a set among $Z_{i}$, $Z_{i}'$ and $Z_{i}''$ that has cardinality $2^{s-1}$.
Then, by \ref{GML-m-SL:dagger} and \ref{GML-m-SL:ddagger} we have that $\arelation_i(\aworld)_{=0}$ and $\arelation_i'(\aworld')_{=0}$ have both more than $2^{s-1}$ elements.

\item[proof of (\ref{GML-m-SL:one}).]
By \ref{GML-m-SL:ddagger}, it holds that $\card{\arelation_i(\aworld)_{=1}} = \card{\arelation_i(\aworld)|_{1\blacktriangleright 1}} + \card{\arelation_i(\aworld)|_{2\blacktriangleright 1}}$. We divide the proof into two cases. First, suppose
$\card{\arelation_i(\aworld)|_{1\blacktriangleright 1}} < 2^{s-1}s$ and $\card{\arelation_i(\aworld)|_{2\blacktriangleright 1}} < 2^{s-1}s$.
Then,
\begin{enumerate}
\item\label{GML-m-SL-one3} $\card{O_i} = \card{\arelation_i(\aworld)|_{1\blacktriangleright 1}}$ (by (\ref{GML-m-SL:O1}) or (\ref{GML-m-SL:O2}), depending on whether $i = 1$ or $i = 2$)
\item\label{GML-m-SL-one4} $\card{O_i'} = \card{\arelation_i(\aworld)|_{2\blacktriangleright 1}}$ (by (\ref{GML-m-SL:O1p})/(\ref{GML-m-SL:O2p}))
\item\label{GML-m-SL-one6} $\card{\arelation_i'(\aworld')_{=1}} = \card{\arelation_i(\aworld)|_{1\blacktriangleright 1}} + \card{\arelation_i(\aworld)|_{2\blacktriangleright 1}}$
(from (\ref{GML-m-SL-one3}) and (\ref{GML-m-SL-one4}), by \ref{GML-m-SL:dagger})
\item $\card{\arelation_i'(\aworld')_{=1}} = \card{\arelation_i(\aworld)_{=1}}$ (from \ref{GML-m-SL-one6}, by \ref{GML-m-SL:ddagger}).
\end{enumerate}
Otherwise, suppose that there is a set among $\arelation_i(\aworld)|_{1\blacktriangleright 1}$ and $\arelation_i(\aworld)|_{2\blacktriangleright 1}$ whose cardinality is at least $2^{s-1}s$.
Then from (\ref{GML-m-SL:O1})/(\ref{GML-m-SL:O2}) or (\ref{GML-m-SL:O1p})/(\ref{GML-m-SL:O2p}) (depending on whether $i = 1$ or $i = 2$ and on which set has at least $2^{s-1}s$ elements) there is a set among $O_{i}$, $O_{i}'$
that has cardinality $2^{s-1}s$.
Then, by \ref{GML-m-SL:dagger} and \ref{GML-m-SL:ddagger} we have that $\arelation_i(\aworld)_{=1}$ and $\arelation_i'(\aworld')_{=1}$ have both more than $2^{s-1}s$ elements.

\item[proof of (\ref{GML-m-SL:two}).]~
By \ref{GML-m-SL:ddagger}, it holds that $\card{\arelation_i(\aworld)_{=2}} = \card{\arelation_i(\aworld)|_{2\blacktriangleright 2}}$.
Again we divide the proof into two cases. First, suppose
$\card{\arelation_i(\aworld)|_{2\blacktriangleright 2}} < 2^{s-2}s(s+1)$.
Then,
\begin{enumerate}
\item\label{GML-m-SL-two3} $\card{T_i} = \card{\arelation_i(\aworld)|_{2\blacktriangleright 2}}$ (by (\ref{GML-m-SL:T1}) or (\ref{GML-m-SL:T2}), depending on whether $i = 1$ or $i = 2$)
\item\label{GML-m-SL-two6} $\card{\arelation_i'(\aworld')_{=2}} = \card{\arelation_i(\aworld)|_{2\blacktriangleright 2}}$
(from (\ref{GML-m-SL-two3}), by \ref{GML-m-SL:dagger})
\item $\card{\arelation_i'(\aworld')_{=2}} = \card{\arelation_i(\aworld)_{=2}}$ (from \ref{GML-m-SL-two6}, by \ref{GML-m-SL:ddagger}).
\end{enumerate}
Otherwise, suppose that $\card{\arelation_i(\aworld)|_{2\blacktriangleright 2}}$, and hence $\card{\arelation_i(\aworld)_{=2}}$, is at least $2^{s-2}s(s+1)$.
Then,
\begin{enumerate}
\item\label{GML-m-SL-three3} $\card{T_i} \geq 2^{s-2}s(s+1)$ (by (\ref{GML-m-SL:T1})/(\ref{GML-m-SL:T2}))
\item\label{GML-m-SL-three6} $\card{\arelation_i'(\aworld')_{=2}} \geq 2^{s-2}s(s+1)$
(from (\ref{GML-m-SL-three3}), by \ref{GML-m-SL:dagger}).
\end{enumerate}
\end{description}

By relying on the (now proved) validity of \ref{lemmaprop:GML-more-SL-1}, we show the following crucial property.
\begin{nscenter}
$\boxed{\text{\rulelab{(B)}{lemmaprop:GML-more-SL-2}\qquad
$\begin{aligned}[t]&\text{Given a rank } \triple{m}{s}{\apropset}\text{ and two structures }\pair{\amodel = \triple{\worlds}{\arelation}{\avaluation}}{\aworld}\text{ and}\\
&\pair{\amodel' = \triple{\worlds'}{\arelation'}{\avaluation'}}{\aworld'} \text{ satisfying \ref{constr:GML-more-SL-1}, \ref{constr:GML-more-SL-2} and \ref{constr:GML-more-SL-3}, if}\\
&\ \text{\labelitemi\ } \card{\arelation(\aworld)_{=0}} \geq 2^s+1 \text{ and } \card{\arelation'(\aworld')_{=0}} \geq 2^s+1;\\
&\ \text{\labelitemi\ } \card{\arelation(\aworld)_{=1}} = 2 \text{ and } \card{\arelation'(\aworld')_{=1}} = 1;\\
&\ \text{\labelitemi\ } \card{\arelation(\aworld)_{=2}} \geq 2^{s-1}(s+1)(s+2)+1 \text{ and } \card{\arelation'(\aworld')_{=2}} \geq 2^{s-1}(s+1)(s+2)+1\\
&\text{then } \pair{\amodel}{\aworld} \gamerel{m,s}{\apropset} \pair{\amodel'}{\aworld'}
\end{aligned}
$}}$
\end{nscenter}
Notice that~\ref{lemmaprop:GML-more-SL-2} implies the statement of the lemma, as $\amodel,\aworld \models \Gdiamond{=2}\Gdiamond{=1}\true$ whereas $\amodel',\aworld' \not\models \Gdiamond{=2}\Gdiamond{=1}\true$. Indeed, ad absurdum suppose that such an $\modallogicSC$ formula $\aformula$ exists.
Let $m$ be its modal degree, $s$ be its maximal number of imbricated $\separate$ and $\apropset$ be the set of propositional variables
occurring in $\aformula$. Let us consider two pointed forests $\pair{\amodel_1}{\aworld_1}$ and $\pair{\amodel_2}{\aworld_2}$
such that $\amodel_1,\aworld_1 \models \Gdiamond{=2}\Gdiamond{=1}\true$, $\amodel_2,\aworld_2 \not \models \Gdiamond{=2}\Gdiamond{=1}\true$
and satisfying the conditions in~\ref{lemmaprop:GML-more-SL-2}. This would lead to a contradiction, as
$\pair{\amodel_1}{\aworld_1}$ and $\pair{\amodel_2}{\aworld_2}$ are supposed to satisfy  $\aformula$ (or not) equivalently.

The two finite forests of the statement are schematically represented below, with $\pair{\amodel}{\aworld}$ on the left and $\pair{\amodel'}{\aworld'}$ on the right.
\begin{nscenter}
\begin{tikzpicture}[baseline=-1cm]
\node (w) {$\aworld$};
\node[dot] (w1) [below left = 1cm and 2.2cm of w] {};
\node (h1) [right = 0.3cm of w1] {$\dots$};
\node[dot] (w2) [right = 0.3cm of h1] {};

\node (h2) [below = 0.9cm of w] {};
\node[dot] (w3) [left = 0.1cm of h2] {};
\node[dot] (w4) [right = 0.1cm of h2] {};

\node[dot] (w5) [below right = 1cm and 2.2cm of w] {};
\node (h3) [left = 0.3cm of w5] {$\dots$};
\node[dot] (w6) [left = 0.3cm of h3] {};

\node[dot] (ww1) [below = 0.5cm of w3] {};
\node[dot] (ww2) [below = 0.5cm of w4] {};

\node[dot] (ww3) [below left = 0.5cm and 0.15cm of w5] {};
\node[dot] (ww4) [below right = 0.5cm and 0.15cm of w5] {};

\node[dot] (ww5) [below left = 0.5cm and 0.15cm of w6] {};
\node[dot] (ww6) [below right = 0.5cm and 0.15cm of w6] {};

\draw[pto] (w) -- (w1);
\draw[pto] (w) -- (w2);

\draw[pto] (w) -- (w3);
\draw[pto] (w) -- (w4);

\draw[pto] (w) -- (w5);
\draw[pto] (w) -- (w6);

\draw[pto] (w3) -- (ww1);
\draw[pto] (w4) -- (ww2);

\draw[pto] (w5) -- (ww3);
\draw[pto] (w5) -- (ww4);

\draw[pto] (w6) -- (ww5);
\draw[pto] (w6) -- (ww6);

\node (h4) [below left = 0.5cm and 0.1cm of w1] {};
\node (h5) [below right = 0.5cm and 0.1cm of w2] {};

\node (h8) [below left = 0.5cm and 0.1cm of w6] {};
\node (h9) [below right = 0.5cm and 0.1cm of w5] {};

\draw [decorate,decoration={brace,amplitude=10pt}]
(h5.south east) -- node[below = 0.3cm]
{\scriptsize{$\geq 2^s + 1$}} (h4.south west);

\draw [decorate,decoration={brace,amplitude=10pt}]
(h9.south east) -- node[below = 0.3cm]
{\scriptsize{$\geq 2^{s-1}(s+1)(s+2)+1$}} (h8.south west);
\end{tikzpicture}
\qquad
\begin{tikzpicture}[baseline=-1cm]
\node (w) {$\aworld'$};
\node[dot] (w1) [below left = 1cm and 1.8cm of w] {};
\node (h1) [right = 0.3cm of w1] {$\dots$};
\node[dot] (w2) [right = 0.3cm of h1] {};

\node[dot] (w3) [below = 0.9cm of w] {};

\node[dot] (w5) [below right = 1cm and 1.8cm of w] {};
\node (h3) [left = 0.3cm of w5] {$\dots$};
\node[dot] (w6) [left = 0.3cm of h3] {};

\node[dot] (ww1) [below = 0.5cm of w3] {};

\node[dot] (ww3) [below left = 0.5cm and 0.15cm of w5] {};
\node[dot] (ww4) [below right = 0.5cm and 0.15cm of w5] {};

\node[dot] (ww5) [below left = 0.5cm and 0.15cm of w6] {};
\node[dot] (ww6) [below right = 0.5cm and 0.15cm of w6] {};

\draw[pto] (w) -- (w1);
\draw[pto] (w) -- (w2);

\draw[pto] (w) -- (w3);

\draw[pto] (w) -- (w5);
\draw[pto] (w) -- (w6);

\draw[pto] (w3) -- (ww1);

\draw[pto] (w5) -- (ww3);
\draw[pto] (w5) -- (ww4);

\draw[pto] (w6) -- (ww5);
\draw[pto] (w6) -- (ww6);

\node (h4) [below left = 0.5cm and 0.1cm of w1] {};
\node (h5) [below right = 0.5cm and 0.1cm of w2] {};

\node (h8) [below left = 0.5cm and 0.1cm of w6] {};
\node (h9) [below right = 0.5cm and 0.1cm of w5] {};

\draw [decorate,decoration={brace,amplitude=10pt}]
(h5.south east) -- node[below = 0.3cm]
{\scriptsize{$\geq 2^s + 1$}} (h4.south west);

\draw [decorate,decoration={brace,amplitude=10pt}]
(h9.south east) -- node[below = 0.3cm]
{\scriptsize{\qquad$\geq 2^{s-1}(s+1)(s+2)+1$}} (h8.south west);
\end{tikzpicture}
\end{nscenter}

The proof of \ref{lemmaprop:GML-more-SL-2} is shown by cases on $m$, $s$ and on the moves done by the spoiler.
As in the proof of \ref{lemmaprop:GML-more-SL-1}, if  $m = 0$ then the duplicator has a winning strategy as after $s$ spatial moves the game will be in the state $\pair{\amodel_1}{\aworld}$ and $\pair{\amodel_1'}{\aworld'}$ (notice that $\aworld$ and $\aworld'$ do not change, since $m =0$) w.r.t.\ the rank $\triple{0}{0}{\apropset}$.
From \ref{constr:GML-more-SL-1}, we conclude that the duplicator wins.

Now, suppose $m \geq 1$ and  the spoiler decides to perform a modal move. Notice that, in particular, this case also takes care of the case where $s = 0$ and the spoiler is forced to play a modal move.
Moreover, suppose that the spoiler chooses $\pair{\amodel}{\aworld}$ (the case where it picks $\pair{\amodel'}{\aworld'}$ is analogous).
Then,
suppose that the spoiler chooses a world $\aworld_1 \in \arelation(\aworld)_{=n}$ for some $n \in \set{0,1,2}$.
It is then sufficient for the duplicator to choose $\aworld \in \arelation'(\aworld')_{=n}$ (which is a non-empty set by hypothesis) to guarantee him a victory, as the subtrees rooted in $\aworld_1$ and $\aworld_1'$ are isomorphic.

It remains to show the strategy for the duplicator when the spoiler decides to perform a spatial move (and therefore $s \geq 1$).
The proof distinguishes several cases depending on the structure choosen by the spoiler.
\begin{description}
\item[The spoiler picks $\pair{\amodel}{\aworld}$.] Notice that then the spoiler chooses the structure such that $\card{\arelation(\aworld)_{=1}} = 2$ and the duplicator has to reply in the structure $\pair{\amodel'}{\aworld'}$, where we recall that $\card{\arelation'(\aworld')_{=1}} = 1$. The idea is to make up for this discrepancy by using an element of $\arelation'(\aworld')_{=2}$. Let us see how.

For a moment, consider  the model obtained from $\amodel'$ by removing from $\arelation'$ exactly one pair $\pair{\aworld_1'}{\aworld_2'}$ where $\aworld_1'$ is a world of $\arelation'(\aworld')_{=2}$.
Formally, we are interested in a model $\widehat{\amodel'} = \triple{\worlds'}{\widehat{\arelation'}}{\avaluation'}$ such that $\widehat{\arelation'} = \arelation' \setminus \{\pair{\aworld_1'}{\aworld_2'}\}$ where $\pair{\aworld_1'}{\aworld_2'} \in \arelation'$ and $\aworld_1' \in \arelation'(\aworld')_{=2}$.
If the game was played on $\pair{\amodel}{\aworld}$ and $\pair{\widehat{\amodel'}}{\aworld'}$ w.r.t. $\triple{m}{s}{\apropset}$ then it is clear than the duplicator would have a winning strategy.
Indeed, both $\pair{\amodel}{\aworld}$ and $\pair{\widehat{\amodel'}}{\aworld'}$ satisfy \ref{constr:GML-more-SL-1}, \ref{constr:GML-more-SL-2} and \ref{constr:GML-more-SL-3}. Moreover,
\begin{itemize}
\item $\card{\arelation(\aworld)_{=0}}$ and $\card{\widehat{\arelation'}(\aworld')_{=0}}$ are both at least $2^s$. Notice that by definition $\widehat{\arelation'}(\aworld')_{=0} = \arelation'(\aworld')_{=0}$.
\item $\card{\arelation(\aworld)_{=1}} = 2$ and $\card{\widehat{\arelation'}(\aworld')_{=1}} = 2$. Here, by definition $\widehat{\arelation'}(\aworld')_{=1} = \arelation'(\aworld')_{=1} \cup \{\aworld_1'\}$.
\item $\card{\arelation(\aworld)_{=2}}$ and $\card{\widehat{\arelation'}(\aworld')_{=2}}$ are both at least $2^{s-1}(s+1)(s+2)$. Here, by 
definition $\widehat{\arelation'}(\aworld')_{=2} = \arelation'(\aworld')_{=2} \setminus \{\aworld_1'\}$.
\end{itemize}
These properties allow us to apply~\ref{lemmaprop:GML-more-SL-1} and conclude that $\pair{\amodel}{\aworld} \gamerel{m,s}{\apropset} \pair{\widehat{\amodel'}}{\aworld'}$.
In particular, in this game, if the spoiler picks $\pair{\amodel}{\aworld}$ and chooses $\amodel_1 = \triple{\worlds}{\arelation_1}{\avaluation}$ and $\amodel_2 = \triple{\worlds}{\arelation_2}{\avaluation}$ such that $\amodel_1 + \amodel_2 = \amodel$, then the duplicator can apply the strategy described in \ref{lemmaprop:GML-more-SL-1} in order to construct two structures
$\widehat{\amodel_1'} = \triple{\worlds'}{\widehat{\arelation_1'}}{\avaluation'}$ and $\widehat{\amodel_2'} = \triple{\worlds'}{\widehat{\arelation_2'}}{\avaluation'}$ such that $\widehat{\amodel_1'} + \widehat{\amodel_2'} = \widehat{\amodel'}$ and for every $i \in \{1,2\}$:
\begin{itemize}
\item $\min(\card{\arelation_i(\aworld)_{=0}},2^{s-1}) = \min(\card{\widehat{\arelation_i'}(\aworld')_{=0}},2^{s-1})$;
\item $\min(\card{\arelation_i(\aworld)_{=1}},2^{s-1}s) = \min(\card{\widehat{\arelation_i'}(\aworld')_{=1}},2^{s-1}s)$;
\item $\min(\card{\arelation_i(\aworld)_{=2}},2^{s-2}s(s+1)) = \min(\card{\widehat{\arelation_i'}(\aworld')_{=2}},2^{s-2}s(s+1))$.
\end{itemize}
Notice that these properties, which we later refer to with \rulelab{($\dagger\dagger$)}{GML-m-SL:twodagger} are exactly (\ref{GML-m-SL:zero}), (\ref{GML-m-SL:one}) and (\ref{GML-m-SL:two}) in the proof of \ref{lemmaprop:GML-more-SL-1}.

Let us see how to use these pieces of information to derive a strategy for the duplicator in the original game
$\triple{\pair{\amodel}{\aworld}}{\pair{\amodel'}{\aworld'}}{\triple{m}{s}{\apropset}}$.
As the spoiler chooses $\pair{\amodel}{\aworld}$, it selects $\amodel_1$ and $\amodel_2$ such that $\amodel_1 + \amodel_2 = \amodel$.
Consider the two structures $\widehat{\amodel_1'} = \triple{\worlds'}{\widehat{\arelation_1'}}{\avaluation'}$ and $\widehat{\amodel_2'}= \triple{\worlds'}{\widehat{\arelation_2'}}{\avaluation'}$ choosen by the duplicator following the strategy, discussed above, for the game
$\triple{\pair{\amodel}{\aworld}}{\pair{\widehat{\amodel'}}{\aworld'}}{\triple{m}{s}{\apropset}}$ in the case when the spoiler chooses
$\pair{\amodel}{\aworld}$ and again selects $\amodel_1$ and $\amodel_2$. In particular these structures satisfy \ref{GML-m-SL:twodagger}.
Moreover, the two forests $\widehat{\amodel_1'}$ and $\widehat{\amodel_2'}$ are such that $\widehat{\amodel_1'} + \widehat{\amodel_2'} = \widehat{\amodel}$ and therefore $\widehat{\arelation_1'} \cup \widehat{\arelation_2'} = \widehat{\arelation'} = \arelation' \setminus \{\pair{\aworld_1'}{\aworld_2'}\}$
where $\pair{\aworld_1'}{\aworld_2'} \in \arelation'$ and $\aworld_1' \in \arelation'(\aworld')_{=2}$.
We distinguish two cases.
\begin{itemize}
\item If $\aworld_1' \in \widehat{\arelation_1'}(\aworld')$ then in the original game  $\triple{\pair{\amodel}{\aworld}}{\pair{\amodel'}{\aworld'}}{\triple{m}{s}{\apropset}}$,
the duplicator replies to $\amodel_1$ and $\amodel_2$ with the two forests $\amodel_1' = \triple{\worlds'}{\arelation_1'}{\avaluation'}$ and $\amodel_2' = \triple{\worlds'}{\arelation_2'}{\avaluation'}$ such that $\arelation_1' = \widehat{\arelation_1'}$ and
$\arelation_2' = \widehat{\arelation_2'} \cup \{\pair{\aworld_1'}{\aworld_2'}\}$.
\item Otherwise $\aworld_1' \in \widehat{\arelation_2'}(\aworld')$ and in the game  $\triple{\pair{\amodel}{\aworld}}{\pair{\amodel'}{\aworld'}}{\triple{m}{s}{\apropset}}$
the duplicator replies to $\amodel_1$ and $\amodel_2$ with the two forests $\amodel_1' = \triple{\worlds'}{\arelation_1'}{\avaluation'}$ and $\amodel_2' = \triple{\worlds'}{\arelation_2'}{\avaluation'}$ such that
$\arelation_1' = \widehat{\arelation_1'} \cup \{\pair{\aworld_1'}{\aworld_2'}\}$ and $\arelation_2' = \widehat{\arelation_2'}$.
\end{itemize}
In both cases, as the pair $\pair{\aworld'}{\aworld_1'}$ is in one relation between $\arelation_1'$ and $\arelation_2'$ whereas $\pair{\aworld_1'}{\aworld_2'}$ is in the other relation, the world $\aworld_1'$ effectively behaves like if it was a member of the set $\arelation'(\aworld')_{=1}$ instead of $\arelation'(\aworld')_{=2}$, exactly as in the case of $\widehat{\arelation'}$. In particular, it is easy to see that for $i \in \{1,2\}$:
\begin{nscenter}
  \hfill $\card{\arelation_i'(\aworld')_{=0}} = \card{\widehat{\arelation_i'}(\aworld')_{=0}}$
  \hfill $\card{\arelation_i'(\aworld')_{=1}} = \card{\widehat{\arelation_i'}(\aworld')_{=1}}$
  \hfill $\card{\arelation_i'(\aworld')_{=2}} = \card{\widehat{\arelation_i'}(\aworld')_{=2}}$
\end{nscenter}
Hence, by \ref{GML-m-SL:twodagger} we have that
\begin{itemize}
\item $\min(\card{\arelation_i(\aworld)_{=0}},2^{s-1}) = \min(\card{\arelation_i'(\aworld')_{=0}},2^{s-1})$;
\item $\min(\card{\arelation_i(\aworld)_{=1}},2^{s-1}s) = \min(\card{\arelation_i'(\aworld')_{=1}},2^{s-1}s)$;
\item $\min(\card{\arelation_i(\aworld)_{=2}},2^{s-2}s(s+1)) = \min(\card{\arelation_i'(\aworld')_{=2}},2^{s-2}s(s+1))$.
\end{itemize}
Moreover, $\amodel_1$, $\amodel_2$, $\amodel_1'$ and $\amodel_2'$
all satisfy \ref{constr:GML-more-SL-1}, \ref{constr:GML-more-SL-2} and \ref{constr:GML-more-SL-3} (as they are submodels of $\amodel$ or $\amodel'$), we can apply \ref{lemmaprop:GML-more-SL-1} and conclude that
$\pair{\amodel_1}{\aworld} \gamerel{m,s-1}{\apropset} \pair{\amodel_1'}{\aworld'}$ and $\pair{\amodel_2}{\aworld} \gamerel{m,s-1}{\apropset} \pair{\amodel_2'}{\aworld'}$.
Therefore, the play we just described leads to a winning strategy for the duplicator on the game
$\triple{\pair{\amodel}{\aworld}}{\pair{\amodel'}{\aworld'}}{\triple{m}{s}{\apropset}}$, under the hypothesis that the spoiler chooses $\pair{\amodel}{\aworld}$.

\item[The spoiler picks $\pair{\amodel'}{\aworld'}$.]
Then, the spoiler chooses the structure such that $\card{\arelation'(\aworld')_{=1}} = 1$ and the duplicator has to reply in the structure $\pair{\amodel}{\aworld}$ where $\card{\arelation(\aworld)_{=1}} = 2$.
The proof is very similar to the previous case, but instead of choosing an element of $\arelation'(\aworld')_{=2}$ to make up for the discrepancy between $\card{\arelation(\aworld)_{=1}}$ and $\card{\arelation'(\aworld')_{=1}}$, the duplicator manipulates the additional element in $\arelation(\aworld)_{=1}$ so that it becomes a member of $\arelation_1(\aworld)_{=0}$ or $\arelation_2(\aworld)_{=0}$.
Let us formalise this strategy.

For a moment, consider  the model obtained from $\amodel$ by removing from $\arelation$ exactly one pair $\pair{\aworld_1}{\aworld_2}$ where $\aworld_1$ is a world of $\arelation(\aworld)_{=1}$.
Formally, we are interested in a model $\widehat{\amodel} = \triple{\worlds}{\widehat{\arelation}}{\avaluation}$ such that $\widehat{\arelation} = \arelation \setminus \{\pair{\aworld_1}{\aworld_2}\}$ where $\pair{\aworld_1}{\aworld_2} \in \arelation$ and $\aworld_1 \in \arelation(\aworld)_{=1}$.
If the game was played on $\pair{\widehat{\amodel}}{\aworld}$ and $\pair{\amodel'}{\aworld'}$ w.r.t. $\triple{m}{s}{\apropset}$ then it is clear than the duplicator would have a winning strategy.
Indeed, both $\pair{\widehat{\amodel}}{\aworld}$ and $\pair{\amodel'}{\aworld'}$ satisfy \ref{constr:GML-more-SL-1}, \ref{constr:GML-more-SL-2} and \ref{constr:GML-more-SL-3}. Moreover,
\begin{itemize}
\item $\card{\widehat{\arelation}(\aworld)_{=0}}$ and $\card{\arelation'(\aworld')_{=0}}$ are both at least $2^s$. Here, by definition,
$\widehat{\arelation}(\aworld)_{=0} = \arelation(\aworld)_{=0} \cup \{\aworld_1\}$.
\item $\card{\widehat{\arelation}(\aworld)_{=1}} = 1$ and $\card{\arelation'(\aworld')_{=1}} = 1$. Here, by definition $\widehat{\arelation}(\aworld)_{=1} = \arelation(\aworld)_{=1} \setminus \{\aworld_1\}$.
\item $\card{\widehat{\arelation}(\aworld)_{=2}}$ and $\card{\arelation'(\aworld')_{=2}}$ are both at least $2^{s-1}(s+1)(s+2)$. Here, by definiton $\widehat{\arelation}(\aworld)_{=2} = \arelation(\aworld)_{=2}$.
\end{itemize}
These properties allow us to apply \ref{lemmaprop:GML-more-SL-1} and conclude that $\pair{\widehat{\amodel}}{\aworld} \gamerel{m,s}{\apropset} \pair{\amodel'}{\aworld'}$.
In particular, in this game, if the spoiler picks $\pair{\amodel'}{\aworld'}$ and chooses $\amodel_1' = \triple{\worlds'}{\arelation_1'}{\avaluation'}$ and $\amodel_2' = \triple{\worlds'}{\arelation_2'}{\avaluation'}$ such that $\amodel_1' + \amodel_2' = \amodel'$, then the
duplicator can apply the strategy described in \ref{lemmaprop:GML-more-SL-1}.
Two structures
$\widehat{\amodel_1} = \triple{\worlds}{\widehat{\arelation_1}}{\avaluation}$ and $\widehat{\amodel_2} = \triple{\worlds}{\widehat{\arelation_2}}{\avaluation}$ are constructed such that $\widehat{\amodel_1} + \widehat{\amodel_2} = \widehat{\amodel}$ and for every $i \in \{1,2\}$:
\begin{itemize}
\item $\min(\card{\widehat{\arelation_i}(\aworld)_{=0}},2^{s-1}) = \min(\card{\arelation_i'(\aworld')_{=0}},2^{s-1})$;
\item $\min(\card{\widehat{\arelation_i}(\aworld)_{=1}},2^{s-1}s) = \min(\card{\arelation_i'(\aworld')_{=1}},2^{s-1}s)$;
\item $\min(\card{\widehat{\arelation_i}(\aworld)_{=2}},2^{s-2}s(s+1)) = \min(\card{\arelation_i'(\aworld')_{=2}},2^{s-2}s(s+1))$.
\end{itemize}
Again, notice that these properties, which we later refer to with \desclabel{($\ddagger\ddagger$)}{GML-m-SL:twoddagger}, are exactly (\ref{GML-m-SL:zero}), (\ref{GML-m-SL:one}) and (\ref{GML-m-SL:two}) in the proof of \ref{lemmaprop:GML-more-SL-1}.
Let us see how to use these pieces of information to derive a strategy for the duplicator in the original game
$\triple{\pair{\amodel}{\aworld}}{\pair{\amodel'}{\aworld'}}{\triple{m}{s}{\apropset}}$.
As the spoiler chooses $\pair{\amodel'}{\aworld'}$, it selects $\amodel_1'$ and $\amodel_2'$ such that $\amodel_1' + \amodel_2' = \amodel'$.
Consider the two structures $\widehat{\amodel_1} = \triple{\worlds}{\widehat{\arelation_1}}{\avaluation}$ and $\widehat{\amodel_2}= \triple{\worlds}{\widehat{\arelation_2}}{\avaluation}$ choosen by the duplicator following the strategy, discussed above, for the game
$\triple{\pair{\widehat{\amodel}}{\aworld}}{\pair{\amodel'}{\aworld'}}{\triple{m}{s}{\apropset}}$ in the case when the spoiler chooses
$\pair{\amodel'}{\aworld'}$ and again select $\amodel_1'$ and $\amodel_2'$.
In particular these structures satisfy \ref{GML-m-SL:twoddagger}.
Moreover, the two forests $\widehat{\amodel_1}$ and $\widehat{\amodel_2}$ are such that $\widehat{\amodel_1} + \widehat{\amodel_2} = \widehat{\amodel}$ and therefore $\widehat{\arelation_1} \cup \widehat{\arelation_2} = \widehat{\arelation} = \arelation \setminus \{\pair{\aworld_1}{\aworld_2}\}$
where $\pair{\aworld_1}{\aworld_2} \in \arelation$ and $\aworld_1 \in \arelation(\aworld)_{=1}$.
We distinguish two cases.
\begin{itemize}
\item If $\aworld_1 \in \widehat{\arelation_1}(\aworld)$ then in the original game  $\triple{\pair{\amodel}{\aworld}}{\pair{\amodel'}{\aworld'}}{\triple{m}{s}{\apropset}}$,
the duplicator replies to $\amodel_1'$ and $\amodel_2'$ with the two structures $\amodel_1 = \triple{\worlds}{\arelation_1}{\avaluation}$ and $\amodel_2 = \triple{\worlds}{\arelation_2}{\avaluation}$ such that $\arelation_1 = \widehat{\arelation_1}$ and
$\arelation_2 = \widehat{\arelation_2} \cup \{\pair{\aworld_1}{\aworld_2}\}$.
\item Otherwise $\aworld_1 \in \widehat{\arelation_2}(\aworld)$ and in the game  $\triple{\pair{\amodel}{\aworld}}{\pair{\amodel'}{\aworld'}}{\triple{m}{s}{\apropset}}$
the duplicator replies to $\amodel_1'$ and $\amodel_2'$ with the two structures $\amodel_1 = \triple{\worlds}{\arelation_1}{\avaluation}$ and $\amodel_2 = \triple{\worlds}{\arelation_2}{\avaluation}$ such that
$\arelation_1 = \widehat{\arelation_1} \cup \{\pair{\aworld_1}{\aworld_2}\}$ and $\arelation_2 = \widehat{\arelation_2}$.
\end{itemize}
In both cases, as the pair $\pair{\aworld}{\aworld_1}$ is in one relation between $\arelation_1$ and $\arelation_2'$ whereas $\pair{\aworld_1}{\aworld_2}$ is in the other relation, the world $\aworld_1$ effectively behaves as if it was a member of the set $\arelation(\aworld)_{=0}$ instead of $\arelation(\aworld)_{=1}$, exactly as in the case of $\widehat{\arelation'}$. In particular, it is easy to see that for $i \in \{1,2\}$:
\begin{nscenter}
  \hfill $\card{\arelation_i(\aworld)_{=0}} = \card{\widehat{\arelation_i}(\aworld)_{=0}}$
  \hfill $\card{\arelation_i(\aworld)_{=1}} = \card{\widehat{\arelation_i}(\aworld)_{=1}}$
  \hfill $\card{\arelation_i(\aworld)_{=2}} = \card{\widehat{\arelation_i}(\aworld)_{=2}}$
\end{nscenter}
Hence, by \ref{GML-m-SL:twoddagger} we have
\begin{itemize}
\item $\min(\card{\arelation_i(\aworld)_{=0}},2^{s-1}) = \min(\card{\arelation_i'(\aworld')_{=0}},2^{s-1})$;
\item $\min(\card{\arelation_i(\aworld)_{=1}},2^{s-1}s) = \min(\card{\arelation_i'(\aworld')_{=1}},2^{s-1}s)$;
\item $\min(\card{\arelation_i(\aworld)_{=2}},2^{s-2}s(s+1)) = \min(\card{\arelation_i'(\aworld')_{=2}},2^{s-2}s(s+1))$.
\end{itemize}
Moreover, $\amodel_1$, $\amodel_2$, $\amodel_1'$ and $\amodel_2'$
all satisfy \ref{constr:GML-more-SL-1}, \ref{constr:GML-more-SL-2} and \ref{constr:GML-more-SL-3} (as they are submodels of $\amodel$ or $\amodel'$), we can apply \ref{lemmaprop:GML-more-SL-1} and conclude that
$\pair{\amodel_1}{\aworld} \gamerel{m,s-1}{\apropset} \pair{\amodel_1'}{\aworld'}$ and $\pair{\amodel_2}{\aworld} \gamerel{m,s-1}{\apropset} \pair{\amodel_2'}{\aworld'}$.
Therefore, the play we just described leads to a winning strategy for the duplicator on the game
$\triple{\pair{\amodel}{\aworld}}{\pair{\amodel'}{\aworld'}}{\triple{m}{s}{\apropset}}$, under the hypothesis that the spoiler chooses $\pair{\amodel'}{\aworld'}$.
\end{description}
As we constructed a strategy for the duplicator in both  cases where the spoiler picks $\pair{\amodel}{\aworld}$ and $\pair{\amodel'}{\aworld'}$, we have that $\pair{\amodel}{\aworld} \gamerel{m,s}{\apropset} \pair{\amodel'}{\aworld'}$ and therefore \ref{lemmaprop:GML-more-SL-2} holds. This implies that the class of models satisfying
$\Gdiamond{=2}\Gdiamond{=1}\true$ cannot be characterised by a formula in \modallogicSC.
\end{proof}

\section{Proofs of \Cref{section-other-logics}}

\subsection{Definitions and Proofs of \Cref{section-SAL} (Static Ambient Logic)}
\label{appendix:SAL}

In this part of the appendix,  we provide equisatisfiability preserving translations from \SAL to \modallogicCC, and from \modallogicCC to \SAL.
Since the translations are in polynomial-time and in~\Cref{section-aexppol} we have shown that 
\satproblem{\modallogicCC}  is \aexppol-complete, this entails that the complexity of the satisfiability problem for \SAL is also \aexppol-complete.
In the body of the paper, these results are shown with respect to Kripke-like structures that can be shown isomorphic to the syntactical trees historically used in ambient calculus. Here, we provide the reductions directly on these syntactical trees. Let us start by introducing \SAL.

Let $\aalphabet$ be a countably infinite set of \defstyle{ambient names}. The formulae of \SAL are built from:
\begin{nscenter}
$
  \aformula :=\ \true \ \mid\ \zero\ \mid\ \aname[\aformula]\ \mid\ \aformula \land \aformula\ \mid \lnot\aformula\ \mid\ \aformula \ambientchop \aformula,
$
\end{nscenter}
where $\aname \in \aalphabet$.
$\SAL$ is interpreted on edge-labelled finite trees: syntactical objects equipped with a structural equivalence relation $\equiv$. We denote with $\ambienttrees$ the set of these finite trees. The grammar used to construct these structures, their structural equivalence as well as the satisfaction predicate $\models$ for \SAL are provided in
Figure~\ref{figure:SALsemantics} (the cases for $\land$ and $\lnot$ being omitted).

\begin{figure}[!h]
  \hfill
  \footnotesize
  \begin{minipage}[t]{0.6\linewidth}
  \centering
  \textsc{\vphantom{Structural equivalence}Trees}
  {\small{

  $
  \underbracket[0.7pt]{
  \overbracket[0.7pt]{
  \text{
  \begin{minipage}{0.97\linewidth}
  \begin{nscenter}
  $
  T :=\ \zero \ \mid \ \aname[T] \ \mid \ T\ambientchop T
  $
  \end{nscenter}
  \end{minipage}
  }}}
  $
  }}

  \textsc{\vphantom{Structural equivalence}Semantics}
  {\small{

  $
  \underbracket[0.7pt]{
  \overbracket[0.7pt]{
  \text{
  \begin{minipage}{0.97\linewidth}
  \begin{nscenter}
  $
  \begin{aligned}[t]
  &T \models \true &&&&\text{always holds}\\
  &T \models \zero &&\text{iff}&& T \equiv \zero\\
  &T \models \aname[\aformula] &&\text{iff}&& \exists T'\text{ s.t.\ } T \equiv \aname[T']\text{ and }T' \models \aformula\\
  &T \models \aformula \ambientchop \aformulabis &&\text{iff}&& \exists T_1, T_2\text{ s.t.\ }T \equiv T_1 \ambientchop T_2, T_1 \models \aformula\text{ and }T_2 \models \aformulabis
  \end{aligned}
  $
  \end{nscenter}
  \end{minipage}
  }}}
  $
  }}

  \end{minipage}%
  \hfill
  \begin{minipage}[t]{0.33\linewidth}
  \centering
  \textsc{\vphantom{Trees}Structural equivalence}
  {\small{

  $
  \underbracket[0.7pt]{
  \overbracket[0.7pt]{
  \text{
  \begin{minipage}{0.97\linewidth}
  \begin{itemize}
  \setlength\itemsep{1.81pt}
  \item $T \ambientchop \zero \equiv T$
  \item $T_1 \equiv T_2$ $\implies$ $T_2 \equiv T_1$
  \item $T_1 \equiv T_2$, $T_2 \equiv T_3$ \ $\implies$ \ $T_1 \equiv T_3$
  \item $T_1 \ambientchop T_2 \equiv T_2 \ambientchop T_1$
  \item $(T_1 \ambientchop T_2) \ambientchop T_3 \equiv T_1 \ambientchop (T_2 \ambientchop T_3)$
  \item $T_1 \equiv T_2$ \ $\implies$ \ $T_1 \ambientchop T \equiv T_2 \ambientchop T$
  \item $T_1 \equiv T_2$ \ $\implies$ \ $\aname[T_1] \equiv \aname[T_2]$
  \end{itemize}
  \end{minipage}
  }}}
  $
  }}

  \end{minipage}
  \hfill~
  \caption{Interpretation and semantics of \SAL.}\label{figure:SALsemantics}
  \vspace{-5pt}
  \end{figure}
Obviously  \SAL and \modallogicCC  are strongly related, but how close?
For example,
$\aname[\aformula] \ambientchop \true$ can be seen as a relativised version of
$\Diamond$ of the form $\Diamond(\aname \wedge \aformula)$.
To
formalise this intuition, we borrow the syntax from
\HML~\cite{HennessyM80}
and define the formulae $\HMDiamond{\aname} \aformula \egdef \aname[\aformula] \ambientchop \true$ and
its dual $\HMBox{\aname} \aformula \egdef \lnot \HMDiamond{\aname} \lnot \aformula$.
Below, w.l.o.g. we assume  $\aalphabet=\varprop$ (for the sake of clarity).
%

\subparagraph{From \satproblem{\SAL} to \satproblem{\modallogicCC}.}
This reduction is also quite simple as
 \SAL is essentially interpreted on finite trees where each world
satisfies a single propositional variable (its
 ambient
name).
Let $T \in \ambienttrees$ be a tree built with ambient names from $\apropset {\subseteq_\fin} \varprop$, $\amodel = \triple{\auniverse}{\aaccessrelation}{\apropeval}$ be a finite forest and $\aworld \in \auniverse$. We say that $\pair{\amodel}{\aworld}$ \emph{encodes} $T$ iff:
\begin{enumerate}
\itemsep 0 cm
\item\label{mencodest-1} every $\aworld' \in \arelation^*(\aworld)$ satisfies at most one symbol in $\apropset$;
\item\label{mencodest-2} there is
$\amap$ : $\auniverse \to \ambienttrees$ such that $\amap(\aworld) \equiv T$ and
for all $\aworld' \in \aaccessrelation^*(\aworld)$,  we have
$\amap(\aworld') \equiv \sum_{i \in \interval{1}{K}}  \aname_i[\amap(\aworld_i)]$
where
$\{\aworld_1$, \dots, $\aworld_K\} = \aaccessrelation(\aworld')$
 and $\forall i \in \interval{1}{K}$, $\aworld_i \in \apropeval(\aname_i)$ 
(given  $I = \{i_1,\dots,i_m\}$, $\sum_{i \in I} T_i \egdef T_{i_1} \ambientchop T_{i_2} \ambientchop \dots \ambientchop T_{i_{m}}$).
\end{enumerate}

It is easy to verify that every tree in $\ambienttrees$ has an encoding.
The figure just below depicts a tree $T$ (on the left) and one of its possible encodings as a finite forest (on the right).
\begin{nscenter}
  \begin{tikzpicture}
  \node[dot] (w) [ ]{ };

  \node[dot] (w1) [below left = 0.7cm and 0.7cm of w] {};
  \node[dot] (w2) [below right  = 0.7cm and 0.7cm of w,label=below:$\zero$] {};

  \node[dot] (w3) [below left = 0.5cm and 0.5cm of w1,label=below:$\zero$] {};
  \node[dot] (w4) [below right  = 0.5cm and 0.5cm of w1,label=below:$\zero$] {};

  \draw[pto] (w) -- node[sloped, anchor=center,above] {\footnotesize$\aname_1$} (w1);
  \draw[pto] (w) -- node[sloped, anchor=center,above] {\footnotesize$\aname_2$} (w2);

  \draw[pto] (w1) -- node[sloped, anchor=center,above] {\footnotesize$\aname_3$} (w3);
  \draw[pto] (w1) -- node[sloped, anchor=center,above] {\footnotesize$\aname_4$} (w4);

 \node[dot] (ww) [right = 4cm of w] { };

  \node[dot] (ww1) [below left = 0.7cm and 0.7cm of ww,label=left:\footnotesize$\aname_1$] {};
  \node[dot] (ww2) [below right  = 0.7cm and 0.7cm of ww,label=below:\footnotesize$\aname_2$] {};

  \node[dot] (ww3) [below left = 0.5cm and 0.5cm of ww1,label=below:\footnotesize$\aname_3$] {};
  \node[dot] (ww4) [below right  = 0.5cm and 0.5cm of ww1,label=below:\footnotesize$\aname_4$] {};

  \draw[pto] (ww) -- (ww1);
  \draw[pto] (ww) -- (ww2);

  \draw[pto] (ww1) -- (ww3);
  \draw[pto] (ww1) -- (ww4);

  \end{tikzpicture} 
\end{nscenter}
\begin{lemma}\label{lemma:tech-lemma-2-1}
Every tree in $\ambienttrees$ has an encoding.
\end{lemma}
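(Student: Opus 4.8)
The statement is that every tree $T \in \ambienttrees$ has an encoding, i.e. a pointed forest $\pair{\amodel}{\aworld}$ satisfying conditions~\ref{mencodest-1} and~\ref{mencodest-2} from the definition. The plan is a straightforward structural induction on $T$ following the grammar $T \grammardef \zero \mid \aname[T] \mid T \ambientchop T$. First I would handle the base case $T = \zero$: take the single-node forest $\amodel = \triple{\set{\aworld}}{\emptyset}{\avaluation}$ with $\avaluation$ the empty valuation (no world satisfies any proposition). Condition~\ref{mencodest-1} holds vacuously, and for condition~\ref{mencodest-2} the witnessing map is $\amap(\aworld) \egdef \zero$, which is structurally equivalent to $T$; the constraint on children is vacuous since $\arelation(\aworld) = \emptyset$.

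For the inductive step with $T = \aname[T']$, by the induction hypothesis $T'$ has an encoding $\pair{\amodel' = \triple{\worlds'}{\arelation'}{\avaluation'}}{\aworld'}$ with witnessing map $\amap'$. I would build $\amodel$ by adding a fresh root $\aworld$, setting $\arelation \egdef \arelation' \cup \set{\pair{\aworld}{\aworld'}}$, and extending $\avaluation'$ to $\avaluation$ by putting $\aworld' \in \avaluation(\aname)$ (and leaving $\aworld$ satisfying no proposition). Condition~\ref{mencodest-1} is inherited from $\amodel'$ plus the trivial checks for $\aworld$ and $\aworld'$ — here a small care point is that $\aworld'$ must not already satisfy a proposition in the original valuation $\avaluation'$, which the induction hypothesis does not literally guarantee; I would patch this either by strengthening the inductive statement to also demand that the root $\aworld'$ of the encoding satisfies no proposition (this is preserved by all three cases of the construction), or, more cheaply, by noting the ambient name $\aname$ is in $\apropset$ and $\aworld'$ as the root of $\amodel'$ encodes $T'$, which by the strengthened hypothesis satisfies nothing. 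With that fix, the witnessing map $\amap \egdef \amap' \cup \set{\pair{\aworld}{\aname[\amap'(\aworld')]}}$ gives $\amap(\aworld) = \aname[\amap'(\aworld')] \equiv \aname[T'] = T$ using congruence of $\equiv$ under $\aname[\cdot]$, and the child constraint at $\aworld$ holds since $\arelation(\aworld) = \set{\aworld'}$ with $\aworld' \in \avaluation(\aname)$.

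For the case $T = T_1 \ambientchop T_2$, by the induction hypothesis each $T_i$ has an encoding $\pair{\amodel_i = \triple{\worlds_i}{\arelation_i}{\avaluation_i}}{\aworld_i}$ with witnessing map $\amap_i$; taking disjoint copies, I would form $\amodel$ by identifying the two roots $\aworld_1, \aworld_2$ into a single root $\aworld$, i.e. $\worlds = (\worlds_1 \setminus \set{\aworld_1}) \uplus (\worlds_2 \setminus \set{\aworld_2}) \uplus \set{\aworld}$, with $\arelation$ the union of $\arelation_1, \arelation_2$ after renaming $\aworld_i$ to $\aworld$, and $\avaluation$ the union of $\avaluation_1, \avaluation_2$ (consistent because the copies are disjoint and, by the strengthened hypothesis, neither root satisfied anything). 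The witnessing map on the root is $\amap(\aworld) \egdef \amap_1(\aworld_1) \ambientchop \amap_2(\aworld_2)$, which is $\equiv T_1 \ambientchop T_2$ by the congruence $T_1 \equiv T_1', T_2 \equiv T_2' \implies T_1 \ambientchop T_2 \equiv T_1' \ambientchop T_2'$; and $\arelation(\aworld) = \arelation_1(\aworld_1) \cup \arelation_2(\aworld_2)$ (after renaming), so the required decomposition $\amap(\aworld) \equiv \sum_{i} \aname_i[\amap(\aworld_i)]$ follows by combining the two decompositions from the induction hypotheses, associativity and commutativity of $\ambientchop$, and $\zero$ being a unit.

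The only genuine subtlety — and hence the step I would be most careful about — is the bookkeeping about roots: the naive inductive statement as phrased does not say that the root of an encoding satisfies no atomic proposition, yet condition~\ref{mencodest-1} together with the gluing operations in the $\aname[\cdot]$ and $\ambientchop$ cases silently relies on it. I would therefore state and prove the slightly strengthened claim ``every $T \in \ambienttrees$ has an encoding $\pair{\amodel}{\aworld}$ in which additionally $\aworld \notin \avaluation(\avarprop)$ for all $\avarprop \in \varprop$,'' which is visibly closed under all three constructions above. Everything else is routine verification of $\equiv$-congruence rules. I expect this to be a short proof once the strengthening is in place.
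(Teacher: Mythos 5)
Your proof is correct, and you rightly identified the one real subtlety: the definition of an encoding only forces each world reachable from the root to satisfy \emph{at most} one name, so the bare inductive statement does not prevent the root of an encoding of $T'$ from already carrying a name, which would break condition~\ref{mencodest-1} when you relabel it with $\aname$ (or merge two roots). Your strengthened invariant --- the root of the constructed encoding satisfies no atomic proposition --- is preserved by all three cases and closes that gap; alternatively one can simply erase the root from all valuation sets, which is harmless since the root's own valuation is unconstrained by conditions~\ref{mencodest-1} and~\ref{mencodest-2}.

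The paper proves the lemma by the same underlying idea (one world per occurrence of the $\aname[\cdot]$ constructor, plus a root) but with a different mechanism: instead of building encodings bottom-up and gluing them, it fixes in advance a universe of $m+1$ worlds ($m$ being the number of ambients in $T$) with a total order, and defines a single-pass recursive translation $\mathcal{T}$ that threads the ``current'' world, allocating the least unused world and labelling it with $\aname$ whenever an ambient $\aname[T']$ is encountered, and processing both sides of $T_1 \ambientchop T_2$ at the same current world; the witness map $\amap$ is then read off the construction. That formulation sidesteps your root-valuation issue entirely, because names are assigned exactly once at world-creation time and the current world is never relabelled, so condition~\ref{mencodest-1} is immediate; the price is the bookkeeping with the fresh-world order. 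Your compositional induction needs the strengthened hypothesis but is otherwise equally short, so both routes are fine.
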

%


\begin{proof}
Let $T \in \ambienttrees$.
Let $m$ be the number of ambients in $T$, i.e.\ the number of occurrences of the $\aname[T']$ constructor in $T$.
Let $\auniverse$ be a set of $m+1$ elements.
We fix a total order $<$ with least element $0$ on $\auniverse$.
Then $\aSALmodeltranslationaux{\triple{\auniverse}{\emptyset}{\emptyset}}{T}{0}{<}$ is an encoding of $T$, where
\begin{itemize}
\item $\aSALmodeltranslationaux{\triple{\auniverse}{\aaccessrelation}{\apropeval}}{\zero}{\aworld}{<} = \triple{\auniverse}{\aaccessrelation}{\apropeval}$;

\item $\aSALmodeltranslationaux{\triple{\auniverse}{\aaccessrelation}{\apropeval}}{T_1 \ambientchop T_2}{\aworld}{<} = \aSALmodeltranslationaux{\aSALmodeltranslationaux{\triple{\auniverse}{\aaccessrelation}{\apropeval}}{T_1}{\aworld}{<}}{T_2}{\aworld}{<}$;

\item $\aSALmodeltranslationaux{\triple{\auniverse}{\aaccessrelation}{\apropeval}}{ \aname[T]}{\aworld}{<} = \aSALmodeltranslationaux{\triple{\auniverse}{\aaccessrelation'}{\apropeval'}}{T}{\aworld'}{<}$ where
\begin{itemize}
\item $\aworld' = \min_{<}\{\aworld'' \mid \aworld'' \not \in \pi_1(\aaccessrelation) \cup \pi_2(\aaccessrelation) \cup \{\aworld\}\}$;
\item $\aaccessrelation' = \aaccessrelation \cup \{\pair{\aworld}{\aworld'}\}$;
\item $\apropeval' = \lambda \avarprop .
  \begin{cases}
    \apropeval(\avarprop) \cup \{ \aworld' \} &\text{if}\ \avarprop = \aname\\
    \apropeval(\avarprop) &\text{otherwise}
  \end{cases}$
\end{itemize}
\end{itemize}

It remains to verify that $\aSALmodeltranslationaux{\triple{\auniverse}{\emptyset}{\emptyset}}{T}{0}{<}$ is an encoding of $T$. Condition~\ref{mencodest-1} is obvious, since each ambient name corresponds to 
a different state in $\aSALmodeltranslationaux{\triple{\auniverse}{\emptyset}{\emptyset}}{T}{0}{<}$. For condition~\ref{mencodest-2}, we need to check that there is a map $\amap$  such that $\amap(\aworld) \equiv T$ and for every $\aworld' \in \auniverse$ reachable from $\aworld$ (i.e. $\pair{\aworld}{\aworld'} \in \aaccessrelation^*$) it holds that
$\amap(\aworld') \equiv \sum_{i \in \interval{1}{K}}  \aname_i[\amap(\aworld_i)]$,
where
$\{\aworld_1$, \dots, $\aworld_K\} = \{\aworld'' \mid \pair{\aworld'}{\aworld''} \in \aaccessrelation\}$
is the set of distinct worlds accessible from $\aworld'$,
 and for every $i \in \interval{1}{K}$, $\aworld_i \in \apropeval(\aname_i)$. Take the mapping that assigns $\amap(\aworld)\equiv T$, 
and if $T\equiv \aname[T_1]\ambientchop T_2$, 
$(\aworld,\aworld')\in\arelation$ and $w'\in\avaluation(\aname)$, then $\amap(\aworld')\equiv T_1$. 
One can easily show that $\amap$ validates condition~\ref{mencodest-2}.
\end{proof}

As done in the previous section, we now state two intermediate lemmata that will be helpful to prove the correctness of the forthcoming translation (Lemma~\ref{lemma:correct-SAL-MLCC}).

\begin{lemma}\label{lemma:tech-lemma-2-2}
Let $T \in \ambienttrees$ and $\pair{\amodel}{\aworld}$ be an encoding of $T$. Let $\amap$ be a witness of this encoding.
For every $\aworld'$ accessible from $\aworld$ it holds that $\pair{\amodel}{\aworld'}$ encodes $\amap(\aworld')$.
\end{lemma}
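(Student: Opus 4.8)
\textbf{Proof plan for Lemma~\ref{lemma:tech-lemma-2-2}.} The statement says: if $\pair{\amodel}{\aworld}$ encodes $T$ via the witness map $\amap$, then for every $\aworld'$ reachable from $\aworld$ (i.e.\ $\pair{\aworld}{\aworld'}\in\aaccessrelation^*$), the pointed forest $\pair{\amodel}{\aworld'}$ encodes the tree $\amap(\aworld')$. The plan is to check directly that the two defining conditions of ``encodes'' (conditions~\refit{mencodest-1} and~\refit{mencodest-2}) are inherited when passing from $\aworld$ to a descendant $\aworld'$, and then exhibit a suitable witness map for $\pair{\amodel}{\aworld'}$ encoding $\amap(\aworld')$.

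First I would observe that condition~\refit{mencodest-1} is immediately hereditary: it asserts that every $\aworld''\in\aaccessrelation^*(\aworld)$ satisfies at most one symbol in $\apropset$, and since $\aworld'$ is reachable from $\aworld$ we have $\aaccessrelation^*(\aworld')\subseteq\aaccessrelation^*(\aworld)$, so the same property holds for all states reachable from $\aworld'$. For condition~\refit{mencodest-2}, the natural candidate witness map for $\pair{\amodel}{\aworld'}$ is simply the restriction $\amap'\egdef\amap$ (the same function, now viewed with basepoint $\aworld'$). One needs to verify: (i) $\amap'(\aworld')\equiv\amap(\aworld')$, which is trivial; and (ii) for every $\aworld''\in\aaccessrelation^*(\aworld')$, the structural equation $\amap'(\aworld'')\equiv\sum_{i\in\interval{1}{K}}\aname_i[\amap'(\aworld_i)]$ holds, where $\{\aworld_1,\dots,\aworld_K\}=\aaccessrelation(\aworld'')$ and $\aworld_i\in\apropeval(\aname_i)$ for each $i$. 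But this is exactly the clause guaranteed by the original encoding of $T$ by $\pair{\amodel}{\aworld}$, since every $\aworld''\in\aaccessrelation^*(\aworld')$ also lies in $\aaccessrelation^*(\aworld)$. Hence condition~\refit{mencodest-2} transfers verbatim.

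The only genuinely delicate point — which I expect to be the ``main obstacle,'' though it is still routine — is making sure the basepoint shift is handled cleanly: the definition of ``encodes'' quantifies over states reachable \emph{from the basepoint}, so one must be careful that $\aaccessrelation^*(\aworld')\subseteq\aaccessrelation^*(\aworld)$ (transitivity of reachability) and that the children set $\aaccessrelation(\aworld'')$ used in the encoding clause does not depend on which basepoint we started from (it is a purely local notion, the immediate $\aaccessrelation$-successors of $\aworld''$). Once these two observations are in place, the proof is a one-line verification: take $\amap$ itself as the witness for $\pair{\amodel}{\aworld'}$, and both conditions follow by restriction. I would therefore present the argument as: fix $\aworld'$ with $\pair{\aworld}{\aworld'}\in\aaccessrelation^*$; note $\aaccessrelation^*(\aworld')\subseteq\aaccessrelation^*(\aworld)$; deduce~\refit{mencodest-1} for $\aworld'$ and then, using the witness $\amap$ of the encoding of $T$, deduce~\refit{mencodest-2} with $\amap(\aworld')$ in place of $T$ and $\amap$ in place of the witness map; conclude that $\pair{\amodel}{\aworld'}$ encodes $\amap(\aworld')$.
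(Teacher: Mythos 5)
Your proposal is correct and follows exactly the paper's argument: the paper's proof simply notes that the claim follows from the definition of encoding, with the same function $\amap$ serving as witness for $\pair{\amodel}{\aworld'}$ encoding $\amap(\aworld')$. Your write-up is just a more explicit spelling-out (via $\aaccessrelation^*(\aworld')\subseteq\aaccessrelation^*(\aworld)$ and the locality of the children sets) of the same one-line observation.
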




\begin{proof}
It trivially follows from the definition of encoding of a tree in $\ambienttrees$. Moreover, for every world $\aworld'$ accessible from 
$\aworld$, the function $\amap$ is also the witness of the encoding of $\amap(\aworld')$ in $\pair{\amodel}{\aworld'}$.
\end{proof}

\begin{lemma}\label{lemma:tech-lemma-2-3}
Let $T$ be a \SAL-tree and $\pair{\amodel}{\aworld}$ an encoding of $T$.
Then,
\begin{enumerate}
\item for every $T_1$ and $T_2$ such that $T \equiv T_1 \ambientchop T_2$ there are $\amodel_1$ and $\amodel_2$ such that $\amodel = \amodel_1 +_{\aworld} \amodel_2$,
$\pair{\amodel_1}{\aworld}$ is an encoding of $T_1$, and
$\pair{\amodel_2}{\aworld}$ is an encoding of $T_2$.
\item For every  $\amodel_1$ and $\amodel_2$ such that $\amodel = \amodel_1 +_{\aworld} \amodel_2$ there are $T_1$ and $T_2$ such that $T \equiv T_1 \ambientchop T_2$,
$\pair{\amodel_1}{\aworld}$ is an encoding of $T_1$, and
$\pair{\amodel_2}{\aworld}$ is an encoding of $T_2$.
\end{enumerate}
\end{lemma}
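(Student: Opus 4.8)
\textbf{Proof plan for Lemma~\ref{lemma:tech-lemma-2-3}.}
The plan is to prove both directions by a direct structural induction on the \SAL-tree $T$, carrying the definition of ``encoding'' (conditions~\ref{mencodest-1} and~\ref{mencodest-2}) along as the invariant. First I would fix a finite forest $\amodel = \triple{\auniverse}{\aaccessrelation}{\apropeval}$ and a world $\aworld$ encoding $T$, with witness map $\amap$ as in condition~\ref{mencodest-2}. By condition~\ref{mencodest-2}, $\amap(\aworld) \equiv T$ and the children $\aworld_1,\dots,\aworld_K$ of $\aworld$ in $\aaccessrelation$ satisfy pairwise-distinct ambient names $\aname_1,\dots,\aname_K$ (by condition~\ref{mencodest-1}, each satisfies \emph{at most one}; the fact that they satisfy \emph{exactly one} follows because $\amap(\aworld) \equiv \sum_i \aname_i[\amap(\aworld_i)]$ forces each $\aworld_i$ to carry a name). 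The structural equivalence $\equiv$ on $\ambienttrees$ permits reassociating and reordering the top-level product $\aname_1[\amap(\aworld_1)]\ambientchop\cdots\ambientchop\aname_K[\amap(\aworld_K)]$ freely, so any factorisation $T \equiv T_1 \ambientchop T_2$ corresponds, up to $\equiv$, to a partition of $\{\aworld_1,\dots,\aworld_K\}$ into two blocks $I_1 \uplus I_2$ with $T_b \equiv \sum_{i\in I_b}\aname_i[\amap(\aworld_i)]$.

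For part~(1): given $T \equiv T_1 \ambientchop T_2$, I would extract such a partition $I_1 \uplus I_2$ of the children of $\aworld$, then set $\amodel_b = \triple{\auniverse}{\aaccessrelation_b}{\apropeval}$ where $\aaccessrelation_b$ keeps the edges $\{\pair{\aworld}{\aworld_i} : i\in I_b\}$, keeps every edge of $\aaccessrelation$ lying in a subtree rooted at some $\aworld_i$ with $i\in I_b$, and—crucially—splits the edges \emph{not} reachable from $\aworld$ arbitrarily (say, all assigned to $\amodel_1$). By construction $\amodel = \amodel_1 +_{\aworld} \amodel_2$: the partition respects the $+_{\aworld}$ constraint precisely because once a child of $\aworld$ is placed in $\amodel_b$, its whole subtree goes with it. Then $\pair{\amodel_b}{\aworld}$ encodes $T_b$: condition~\ref{mencodest-1} is inherited from $\amodel$ (restricting edges never adds names), and condition~\ref{mencodest-2} holds with the witness map obtained by restricting $\amap$ to the worlds of $\amodel_b$ reachable from $\aworld$, using Lemma~\ref{lemma:tech-lemma-2-2} to see that the subtrees hanging below the retained children are still correctly encoded.

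For part~(2): given $\amodel = \amodel_1 +_{\aworld} \amodel_2$, the definition of $+_{\aworld}$ partitions $\aaccessrelation(\aworld)$ into $\aaccessrelation_1(\aworld) \uplus \aaccessrelation_2(\aworld)$ and, for each child in $\aaccessrelation_b(\aworld)$, its entire subtree in $\aaccessrelation$ belongs to $\aaccessrelation_b$. So set $I_b = \{i : \aworld_i \in \aaccessrelation_b(\aworld)\}$ and $T_b \egdef \sum_{i\in I_b}\aname_i[\amap(\aworld_i)]$; then $T \equiv \amap(\aworld) \equiv \sum_i \aname_i[\amap(\aworld_i)] \equiv T_1 \ambientchop T_2$ by associativity/commutativity of $\ambientchop$ under $\equiv$, with the edge case $I_b = \emptyset$ giving $T_b \equiv \zero$ via the $T \ambientchop \zero \equiv T$ rule. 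Again $\pair{\amodel_b}{\aworld}$ encodes $T_b$ by the same inheritance argument as above. The only real subtlety—and the step I expect to need the most care—is bookkeeping the structural equivalence $\equiv$: I must be explicit that reordering and reassociating the top-level $\ambientchop$-product is exactly what lets an arbitrary \SAL-level factorisation match an arbitrary partition of the children, and that edges of $\aaccessrelation$ not reachable from $\aworld$ are irrelevant to both the encoding conditions (which only constrain $\aaccessrelation^*(\aworld)$) and to the $+_{\aworld}$ decomposition, so they can be distributed freely. Everything else is routine unfolding of definitions and an appeal to Lemma~\ref{lemma:tech-lemma-2-2}.
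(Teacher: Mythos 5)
Your proposal is correct and follows essentially the same route as the paper: both identify a top-level factorisation of $T$ with a partition of the children of $\aworld$, use that $+_{\aworld}$ forces each child's entire subtree into the same component, distribute the edges not reachable from $\aworld$ arbitrarily, and verify the encoding conditions by reusing $\amap$ below the retained children. The only nitpick is that the new witness cannot literally be the restriction of $\amap$ (its value at the root must be changed to $T_b$, and below the non-retained children it can be set to $\zero$), which is exactly the small adjustment the paper's proof makes explicit.
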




\begin{proof}
In the proof of both points, let $\amodel$ (as in the statement) be a model $\triple{\auniverse}{\aaccessrelation}{\apropeval}$. Moreover, let $\amap$ be the witness of the encoding of $T$ in $\pair{\amodel}{\aworld}$.
\begin{enumerate}
\item Suppose $T_1$ and $T_2$ such that $T \equiv T_1 \ambientchop T_2$.
Let $R_{\aworld} = \{\aworld_1,\dots,\aworld_K\} \egdef \{\aworld' \mid \pair{\aworld}{\aworld'} \in \aaccessrelation\}$ be the set of worlds that are accessible from $\aworld$ (notice that this set could be empty).
By definition of $\amap$, we have $\amap(\aworld) \equiv T \equiv \sum_{i \in \interval{1}{K}} \aname_i[\amap(\aworld_i)]$ for some 
$\aname_1,\dots,\aname_k \in \aalphabet$. Notice that if $R_{\aworld}$ is empty then $\sum_{i \in \interval{1}{K}} \aname_i[\amap(\aworld_i)]$ 
is equivalent to the empty tree $\zero$.
Following $\sum_{i \in \interval{1}{K}} \aname_i[\amap(\aworld_i)] \equiv T \equiv T_1 \ambientchop T_2$, we know that
we can partition $R_{\aworld}$ into two sets $R_{\aworld}' = \{\aworld_{i_1},\dots,\aworld_{i_c}\}$ and $R_{\aworld}'' = \{\aworld_{i_{c+1}},\dots,
\aworld_{i_K}\}$ ($c \in \interval{0}{K}$) such that
\begin{itemize}
\item $T_1 \equiv \sum_{j \in \interval{1}{c}} \aname_{i_j}[\amap(\aworld_{i_j})]$;
\item $T_2 \equiv \sum_{j \in \interval{c+1}{K}} \aname_{i_j}[\amap(\aworld_{i_j})]$.
\end{itemize}
By relying on the partitioning of $R_{\aworld}$ into $R_{\aworld}'$ and $R_{\aworld}''$ it is easy to show that we can derive two finite forests $\amodel_1 = \triple{\auniverse}{\aaccessrelation_1}{\apropeval}$ and $\amodel_2 = \triple{\auniverse}{\aaccessrelation_2}{\apropeval}$ such that
\begin{itemize}
\item $\amodel = \amodel_1 +_{\aworld} \amodel_2$;
\item every $\aworld' \in R_{\aworld}'$ is accessible from $\aworld$ in $\aaccessrelation_1$, i.e.\ $\pair{\aworld}{\aworld'} \in \aaccessrelation_1$;
\item every $\aworld'' \in R_{\aworld}''$ is accessible from $\aworld$ in $\aaccessrelation_2$, i.e.\ $\pair{\aworld}{\aworld''} \in \aaccessrelation_2$.
\end{itemize}
Concretely, by defining
$\aaccessrelation_1 \egdef \{\pair{\aworld'}{\aworld''} \in \aaccessrelation \mid \text{ there is } \aworld''' \in R_{\aworld}' \text{ such that } \pair{\aworld'''}{\aworld''} \in \aaccessrelation^* \}$ and $\aaccessrelation_2 \egdef \aaccessrelation \setminus \aaccessrelation_1$,
we obtain $\amodel_1$ and $\amodel_2$ satisfying these properties.
It is now sufficient to consider the two functions $\amap_1$ and $\amap_2$ defined as:
\begin{itemize}
\item $\amap_1(\aworld) = T_1$ and $\amap_2(\aworld) = T_2$
\item for every $\aworld'' \in \auniverse$ s.t.\ $\pair{\aworld'}{\aworld''} \in \aaccessrelation^*$ for some $\aworld' \in R_{\aworld}'$, $\amap_1(\aworld'') = \amap(\aworld'')$ and $\amap_2(\aworld'') = \zero$;
\item for every $\aworld'' \in \auniverse$ s.t.\ $\pair{\aworld'}{\aworld''} \in \aaccessrelation^*$ for some $\aworld' \in R_{\aworld}''$, $\amap_2(\aworld'') = \amap(\aworld'')$ and $\amap_1(\aworld'') = \zero$;
\item for every $\aworld' \in \auniverse$ s.t.\ $\pair{\aworld}{\aworld'} \not\in \aaccessrelation^*$, $\amap_1(\aworld') = \amap_2(\aworld') = \zero$.
\end{itemize}
By definition of the witness function, $\amap_1$ is a witness of the encoding of $T_1$ in $\pair{\amodel_1}{\aworld}$, and
 $\amap_2$ is a witness of the encoding of $T_2$ in $\pair{\amodel_2}{\aworld}$, ending the first part of the proof.
\item
The proof is analogous to the case above.
Suppose $\amodel_1 = \triple{\auniverse}{\aaccessrelation_1}{\apropeval}$ and  $\amodel_2 = \triple{\auniverse}{\aaccessrelation_2}{\apropeval}$
such that $\amodel = \amodel_1 +_{\aworld} \amodel_2$.
Let $R_{\aworld} = \{\aworld_1,\dots,\aworld_K\} \egdef \{\aworld' \mid \pair{\aworld}{\aworld'} \in \aaccessrelation\}$,
$R_{\aworld}' \egdef \{\aworld' \mid \pair{\aworld}{\aworld'} \in \aaccessrelation_1\}$
and $R_{\aworld}'' \egdef \{\aworld' \mid \pair{\aworld}{\aworld'} \in \aaccessrelation_2\}$.
By definition of $\amodel_1$ and $\amodel_2$, the two sets $R_{\aworld}'$ and $R_{\aworld}''$ partition $R_{\aworld}$.
Let then $R_{\aworld}' = \{\aworld_{i_1},\dots,\aworld_{i_c}\}$ and $R_{\aworld}'' = \{\aworld_{i_{c+1}},\dots,\aworld_{i_K}\}$ ($c \in \interval{0}{K}$).
By definition of $\amap$, it holds that
$\amap(\aworld) \equiv T \equiv \sum_{i \in \interval{1}{K}} \aname_i[\amap(\aworld_i)]$ and from the properties of the congruence 
relation $\equiv$ we obtain
\begin{nscenter}
$\sum_{i \in \interval{1}{K}} \aname_i[\amap(\aworld_i)] \equiv \Big(\sum_{j \in \interval{1}{c}} \aname_{i_j}[\amap(\aworld_{i_j})] \Big) \ambientchop 
\Big(\sum_{j \in \interval{{c+1}}{K}} \aname_{i_j}[\amap(\aworld_{i_j})] \Big)$
\end{nscenter}
Let 
$T_1 \equiv \sum_{j \in \interval{1}{c}} \aname_{i_j}[\aworld_{i_j}]$ and
$T_2 \equiv \sum_{j \in \interval{{c+1}}{K}} \aname_{i_j}[\aworld_{i_j}]$.
Trivially, by definition $T_1 \ambientchop T_2 \equiv T$.
Again, it is now sufficient to consider the two functions $\amap_1$ and $\amap_2$ defined as:
\begin{itemize}
\item $\amap_1(\aworld) = T_1$ and $\amap_2(\aworld) = T_2$
\item for every $\aworld'' \in \auniverse$ s.t.\ $\pair{\aworld'}{\aworld''} \in \aaccessrelation^*$ for some $\aworld' \in R_{\aworld}'$, $\amap_1(\aworld'') = \amap(\aworld'')$ and $\amap_2(\aworld'') = \zero$;
\item for every $\aworld'' \in \auniverse$ s.t.\ $\pair{\aworld'}{\aworld''} \in \aaccessrelation^*$ for some $\aworld' \in R_{\aworld}''$, $\amap_2(\aworld'') = \amap(\aworld'')$ and $\amap_1(\aworld'') = \zero$;
\item for every $\aworld' \in \auniverse$ s.t.\ $\pair{\aworld}{\aworld'} \not\in \aaccessrelation^*$, $\amap_1(\aworld') = \amap_2(\aworld') = \zero$.
\end{itemize}
By definition of witness function, it is easy to show that $\amap_1$ is a witness of the encoding of $T_1$ in $\pair{\amodel_1}{\aworld}$, and
 $\amap_2$ is a witness of the encoding of $T_2$ in $\pair{\amodel_2}{\aworld}$. \qedhere
\end{enumerate}
\end{proof}

Given a formula $\aformula$ of \SAL, we define its translation $\atranslation(\aformula)$ in \modallogicCC.
$\atranslation$ is homomorphic for Boolean connectives and $\true$, and otherwise it is inductively defined as follows:
\begin{nscenter}
\hfill
  $\atranslation(\zero)$ $\egdef$ $\Box \false$;
\qquad\qquad
  $\atranslation(\aformula \ambientchop \aformulabis)$ $\egdef$ $\atranslation(\aformula) \chopop \atranslation(\aformulabis)$;
\qquad\qquad
  $\atranslation( \aname[\aformula] )$ $\egdef$ $\Diamond(\aname \land \atranslation(\aformula)) \land \lnot (\Diamond \true \chopop \Diamond \true)$.
\hfill\,
\end{nscenter}
We prove that this translation is correct.

\begin{lemma}\label{lemma:correct-SAL-MLCC}
If $\pair{\amodel}{\aworld}$ encodes $T \in \ambienttrees$, for every
$\aformula$ be in \SAL,
$T \models \aformula$ iff $\amodel,\aworld \models \atranslation(\aformula)$.
\end{lemma}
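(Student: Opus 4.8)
The plan is to prove \Cref{lemma:correct-SAL-MLCC} by structural induction on $\aformula$, using the fact that an encoding $\pair{\amodel}{\aworld}$ of a tree $T$ comes equipped with a witness function $\amap$ that relates the subtrees of $T$ to the worlds reachable from $\aworld$. The Boolean cases ($\land$, $\lnot$) and the case $\aformula = \true$ are immediate from the semantics and the induction hypothesis, so the real work is in the three remaining cases: $\zero$, $\aname[\aformulabis]$, and $\aformulabis_1 \ambientchop \aformulabis_2$. Throughout, I would fix the encoding $\pair{\amodel = \triple{\auniverse}{\aaccessrelation}{\apropeval}}{\aworld}$ of $T$ with witness $\amap$, so that $\amap(\aworld) \equiv T$ and every $\aworld'$ reachable from $\aworld$ satisfies $\amap(\aworld') \equiv \sum_{i} \aname_i[\amap(\aworld_i)]$ where the $\aworld_i$ range over $\aaccessrelation(\aworld')$ with $\aworld_i \in \apropeval(\aname_i)$.

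First I would handle $\aformula = \zero$: by definition $T \models \zero$ iff $T \equiv \zero$, which, since $\amap(\aworld) \equiv T$, holds iff $\aworld$ has no children in $\amodel$ (using condition~\ref{mencodest-2} of the encoding and that distinct ambients correspond to distinct worlds), i.e.\ iff $\amodel,\aworld \models \Box \false = \atranslation(\zero)$. Next, for $\aformula = \aname[\aformulabis]$: the left-to-right direction uses that $T \equiv \aname[T']$ forces $\aworld$ to have exactly one child $\aworld'$, with $\aworld' \in \apropeval(\aname)$ and, by \Cref{lemma:tech-lemma-2-2}, $\pair{\amodel}{\aworld'}$ an encoding of $T' \equiv \amap(\aworld')$; the induction hypothesis then gives $\amodel,\aworld' \models \atranslation(\aformulabis)$, and ``exactly one child'' is expressed by $\lnot(\Diamond\true \chopop \Diamond\true)$ (a node with two or more children can be split so that each side still has a child, whereas a node with at most one child cannot). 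For the converse, $\amodel,\aworld \models \Diamond(\aname \land \atranslation(\aformulabis)) \land \lnot(\Diamond\true\chopop\Diamond\true)$ forces $\aworld$ to have a unique child $\aworld'$ with $\aworld' \in \apropeval(\aname)$ and $\amodel,\aworld' \models \atranslation(\aformulabis)$; by \Cref{lemma:tech-lemma-2-2}, $\pair{\amodel}{\aworld'}$ encodes $\amap(\aworld')$, and $T \equiv \amap(\aworld) \equiv \aname[\amap(\aworld')]$, so the induction hypothesis yields $\amap(\aworld') \models \aformulabis$, i.e.\ $T \models \aname[\aformulabis]$.

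The composition case $\aformula = \aformulabis_1 \ambientchop \aformulabis_2$ is where the two transfer lemmas \Cref{lemma:tech-lemma-2-3}.1 and \Cref{lemma:tech-lemma-2-3}.2 do the work, and this is the step I expect to be slightly delicate (though not truly hard, since the lemmas are already proved). For the forward direction: if $T \models \aformulabis_1 \ambientchop \aformulabis_2$ then $T \equiv T_1 \ambientchop T_2$ with $T_i \models \aformulabis_i$; by \Cref{lemma:tech-lemma-2-3}.1 there are $\amodel_1, \amodel_2$ with $\amodel = \amodel_1 +_{\aworld} \amodel_2$ and $\pair{\amodel_i}{\aworld}$ an encoding of $T_i$, so the induction hypothesis gives $\amodel_i,\aworld \models \atranslation(\aformulabis_i)$ and hence $\amodel,\aworld \models \atranslation(\aformulabis_1) \chopop \atranslation(\aformulabis_2) = \atranslation(\aformula)$. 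For the converse: if $\amodel,\aworld \models \atranslation(\aformulabis_1)\chopop\atranslation(\aformulabis_2)$ then $\amodel = \amodel_1 +_{\aworld} \amodel_2$ with $\amodel_i,\aworld \models \atranslation(\aformulabis_i)$; by \Cref{lemma:tech-lemma-2-3}.2 there are $T_1,T_2$ with $T \equiv T_1 \ambientchop T_2$ and $\pair{\amodel_i}{\aworld}$ an encoding of $T_i$, so the induction hypothesis gives $T_i \models \aformulabis_i$, whence $T \models \aformulabis_1 \ambientchop \aformulabis_2$. The only real care needed is in the $\aname[\cdot]$ case, ensuring that the ``exactly one child'' encoding via $\lnot(\Diamond\true\chopop\Diamond\true)$ is faithful across submodel decomposition; once that is settled the induction goes through cleanly.
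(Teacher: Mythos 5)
Your proposal is correct and follows essentially the same route as the paper's own proof: a structural induction in which the $\zero$ case reduces to $\Box\false$ via the witness $\amap$, the $\aname[\cdot]$ case combines \Cref{lemma:tech-lemma-2-2} with the unique-child encoding $\lnot(\Diamond\true\chopop\Diamond\true)$, and the $\ambientchop$ case is handled exactly by the two directions of \Cref{lemma:tech-lemma-2-3}. No gaps to report.
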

%
We are now ready to tackle the proof of  \Cref{lemma:correct-SAL-MLCC}. Thanks to the previous three results, the proof can be achieved with an easy structural induction.

\begin{proof}[Proof of \Cref{lemma:correct-SAL-MLCC}]
Let $\amodel$ be defined as $\triple{\auniverse}{\aaccessrelation}{\apropeval}$
and $\amap$ be the witness of the encoding of $T$ in $\pair{\amodel}{\aworld}$.
The proof is by structural induction on $\aformula$, as done for Lemma~\ref{lemma:correct-MLCC-SAL} (again, the 
cases for $\land$ and $\lnot$ are omitted, see proof of Lemma~\ref{lemma:height-at-most-n}).
\begin{description}
\item[Base case: $\aformula = \true$.] Trivially $T \models \true$ and $\amodel,\aworld \models \true$.
\item[Base case: $\aformula = \zero$.]~
\begin{itemize}
\item $T \models \zero$
\item if and only if  $T \equiv \zero$ (by definition of $\models$)
\item if and only if $\amap(\aworld) \equiv \zero$ (by definition of $\amap$)
\item if and only if $\aworld \not \in \pi_1(\aaccessrelation)$ (by definition of $\amap$)
\item if and only if $\amodel, {\aworld} \models \Box \bottom$ (by definition of $\models$ for $\Box \bottom$)
\item if and only if $\amodel, {\aworld} \models \atranslation(\zero)$ (by definition of $\atranslation$).
\end{itemize}
\item[Induction case: $\aformula = \aname{[\aformulabis]}$.]
For the left to right direction, suppose $T \models \aname[\aformulabis]$. Then,
\begin{enumerate}
\item\label{pc2:p2} there is $T'$ such that $T \equiv \aname[T']$ and $T' \models \aformulabis$ (by definition of $\models$ and hypothesis $T \models \aname[\aformulabis]$)
\item\label{pc2:p3} $\amap(\aworld) \equiv \aname[T']$ and  there is $\aworld' \in \auniverse$ such that
$\{\aworld'\} =  \aaccessrelation(\aworld)$,
$\amap(\aworld') \equiv T'$ and $\aworld' \in \apropeval(\aname)$ (from (\ref{pc2:p2}), by definition of $\amap$)
\item\label{pc2:p4} $\pair{\amodel}{\aworld'}$ encodes $T'$ (from (\ref{pc2:p3}), by Lemma~\ref{lemma:tech-lemma-2-2})
\item\label{pc2:p5} $\amodel, {\aworld'} \models \atranslation(\aformulabis)$
(from (\ref{pc2:p2}) and (\ref{pc2:p4}), by the induction hypothesis)
\item\label{pc2:p6} $\amodel, {\aworld'} \models \aname$ (from $\aworld' \in \apropeval(\aname)$ (see \ref{pc2:p3}), by definition of $\models$)
\item\label{pc2:p7} $\amodel, {\aworld'} \models \aname \land \atranslation(\aformulabis)$ (from (\ref{pc2:p5}) and (\ref{pc2:p6}), by definition of $\models$)
\item\label{pc2:p8} $\amodel, {\aworld} \models \Diamond(\aname \land \atranslation(\aformulabis))$ (from (\ref{pc2:p7}) and $(\aworld,\aworld') \in \aaccessrelation$ (see \ref{pc2:p3}), by def.\ of
$\models$)
\item\label{pc2:p9} $\amodel, {\aworld} \models \lnot (\Diamond \true \chopop \Diamond \true)$ (from
$\{\aworld'\} =  \aaccessrelation(\aworld)$ (see \ref{pc2:p3}), by def.\ of
$\models$)
\item\label{pc2:p10} $\amodel, {\aworld} \models \Diamond(\aname \land \atranslation(\aformulabis)) \land \lnot (\Diamond \true \chopop \Diamond \true)$
(from (\ref{pc2:p8}) and (\ref{pc2:p9}), by def.\ of
$\models$)
\item\label{pc2:p11} $\amodel, {\aworld} \models \atranslation(\aname[\aformula])$
(from (\ref{pc2:p10}), by definition of $\atranslation$).
\end{enumerate}
For the right to left direction, suppose $\amodel, {\aworld} \models \atranslation(\aname[\aformula])$. Then,
\begin{enumerate}
\item\label{pc2:a1} $\amodel, {\aworld} \models \Diamond(\aname \land \atranslation(\aformulabis)) \land \lnot (\Diamond \true \chopop \Diamond \true)$
(by def.\ of $\atranslation$ and hyp.\ $\amodel, {\aworld} \models \atranslation(\aname[\aformula])$)
\item\label{pc2:a2} $\card{\aaccessrelation(\aworld)}$ is at most $1$
(from $\amodel, {\aworld} \not\models \Diamond \true \chopop \Diamond \true$ (\ref{pc2:a1}), by def.\ of $\models$)
\item\label{pc2:a3} $\pair{\aworld}{\aworld'} \in \aaccessrelation$ and  $\amodel, {\aworld'} \models \aname \land \atranslation(\aformulabis)$ for some $\aworld' \in \auniverse$
(from (\ref{pc2:a1}), by def.\ of $\models$)
\item\label{pc2:a4} $\aworld' \in \apropeval(\aname)$ (from (\ref{pc2:a3}), by def.\ of $\models$)
\item\label{pc2:a5} $\amodel, {\aworld'} \models \atranslation(\aformulabis)$ (from (\ref{pc2:a3}), by def.\ of $\models$)
\item\label{pc2:a6} $\pair{\amodel}{\aworld'}$ encodes $\amap(\aworld')$ (by Lemma~\ref{lemma:tech-lemma-2-2}, since $\pair{\amodel}{\aworld}$ encodes $T$)
\item\label{pc2:a7} $\amap(\aworld') \models \aformulabis$ (from (\ref{pc2:a5}) and (\ref{pc2:a6}), by the induction hypothesis)
\item\label{pc2:a8} $T \equiv \amap(\aworld) \equiv \aname[\amap(\aworld')]$ (from (\ref{pc2:a2}), (\ref{pc2:a3}), (\ref{pc2:a4}) and (\ref{pc2:a6}), by definition of $\amap$)
\item\label{pc2:a9} there is $T'$ (concretely, $\amap(\aworld')$) such that $T \equiv \aname[T']$ and $T' \models \aformulabis$ (from (\ref{pc2:a7}) and (\ref{pc2:a8}))
\item\label{pc2:a10} $T \models \aname[\aformulabis]$ (from (\ref{pc2:a9}) by definition of $\models$).
\end{enumerate}
\item[Induction case: $\aformula = \aformulabis \ambientchop \aformulater$.] For the left to right direction, suppose $T \models \aformulabis \ambientchop \aformulater$. Then,
\begin{enumerate}
\item\label{pc2:b1} there are $T_1$ and $T_2$ such that $T \equiv T_1 \ambientchop T_2$, $T_1 \models \aformulabis$ and
$T_2 \models \aformulater$ (by definition of $\models$)
\item\label{pc2:b2} there are $\amodel_1$ and $\amodel_2$ such that
$\amodel = \amodel_1  +_{\aworld} \amodel_2$,
$\pair{\amodel_1}{\aworld}$ encodes $T_1$, and
$\pair{\amodel_2}{\aworld}$ encodes $T_2$ (from (\ref{pc2:b1}) and $\pair{\amodel}{\aworld}$ encodes $T$, by Lemma~\ref{lemma:tech-lemma-2-3}.1)
\item\label{pc2:b3} $\amodel_1,\aworld \models \atranslation(\aformulabis)$ and $\amodel_2,\aworld \models \atranslation(\aformulater)$ (from (\ref{pc2:b1}) and (\ref{pc2:b2}), by the induction hypothesis)
\item\label{pc2:b4} $\amodel \models \atranslation(\aformulabis) \chopop \atranslation(\aformulater)$ (from (\ref{pc2:b2}) and (\ref{pc2:b3}), by definition of $\models$)
\item\label{pc2:b5} $\amodel \models \atranslation(\aformulabis \ambientchop \aformulater)$ (from (\ref{pc2:b4}), by definition of $\atranslation$)
\end{enumerate}
For the right to left direction, suppose $\amodel,\aworld \models \atranslation(\aformulabis \ambientchop \aformulater)$. Then,
\begin{enumerate}
\item\label{pc2:c1} there are $\amodel_1$ and $\amodel_2$ such that $\amodel = \amodel_1 +_{\aworld} \amodel_2$, $\amodel_1, \aworld \models \atranslation(\aformulabis)$, and $\amodel_2, \aworld \models \atranslation(\aformulater)$ (by definition of $\atranslation$ and $\models$)
\item\label{pc2:c2} there are $T_1$ and $T_2$ such that $T \equiv T_1 \ambientchop T_2$, $\pair{\amodel_1}{\aworld}$ encodes $T_1$,
and $\pair{\amodel_2}{\aworld}$ encodes $T_2$ (from (\ref{pc2:c1}) and $\pair{\amodel}{\aworld}$ encodes $T$, by Lemma~\ref{lemma:tech-lemma-2-3}.2)
\item\label{pc2:c3} $T_1 \models \aformulabis$ and $T_2 \models \aformulater$ (from (\ref{pc2:c1}) and (\ref{pc2:c2}), by the induction hypothesis)
\item\label{pc2:c4} $T \models \aformulabis \ambientchop \aformulater$ (from (\ref{pc2:c2}) and (\ref{pc2:c3}), by definition of $\models$)
\qedhere
\end{enumerate}
\end{description}
\end{proof}

So, we can complete the reduction.
%
\begin{theorem}\label{theorem:SAL-equisat-MLCC}
Let $\aformula$ be in \SAL  built over  $\apropset \subseteq_{\fin} \varprop$ and
$\anvarprop \not\in \apropset$.
$\aformula$ is satisfiable if and only if $\atranslation(\aformula) \land \bigwedge_{i \in \interval{1}{\fsize{\aformula}}} \Box^{i} \bigvee_{\aname \in \apropset \cup \{\anvarprop\}}
\big( \aname \land \bigwedge_{\anamebis \in (\apropset \cup \{\anvarprop\}) \setminus \{\aname\}} \lnot \anamebis \big)$
 is satisfiable.
\end{theorem}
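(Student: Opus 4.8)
The statement asserts equisatisfiability between a \SAL-formula $\aformula$ and a \modallogicCC-formula of the form $\atranslation(\aformula) \land \aformulabis$, where $\aformulabis$ is the ``partition'' condition $\bigwedge_{i \in \interval{1}{\fsize{\aformula}}} \Box^{i} \bigvee_{\aname \in \apropset \cup \{\anvarprop\}} \big( \aname \land \bigwedge_{\anamebis \in (\apropset \cup \{\anvarprop\}) \setminus \{\aname\}} \lnot \anamebis \big)$. The plan is to prove both implications by exhibiting an appropriate model transformation and then invoking \Cref{lemma:correct-SAL-MLCC}, which already relates satisfaction of $\aformula$ in a \SAL-tree $T$ with satisfaction of $\atranslation(\aformula)$ in any pointed forest encoding $T$.

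For the left-to-right direction, I would start from a \SAL-tree $T$ with $T \models \aformula$. By \Cref{lemma:tech-lemma-2-1}, $T$ has an encoding $\pair{\amodel}{\aworld}$; by \Cref{lemma:correct-SAL-MLCC}, $\amodel, \aworld \models \atranslation(\aformula)$. It then remains to check $\amodel, \aworld \models \aformulabis$. This is where condition~\ref{mencodest-1} of the definition of ``encoding'' does the work: every world reachable from $\aworld$ satisfies \emph{at most} one ambient name in $\apropset$. To also guarantee \emph{at least} one, I would slightly adjust the valuation, assigning the fresh name $\anvarprop$ to every reachable world that carries no name in $\apropset$; since $\anvarprop$ does not occur in $\aformula$ (hence not in $\atranslation(\aformula)$), this change preserves $\amodel, \aworld \models \atranslation(\aformula)$ by a standard locality argument (cf.\ \Cref{lemma:height-at-most-n}, noting $\md{\atranslation(\aformula)} \leq \fsize{\aformula}$), and now every world reachable in at most $\fsize{\aformula}$ steps satisfies exactly one name in $\apropset \cup \{\anvarprop\}$, i.e.\ $\amodel, \aworld \models \aformulabis$.

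For the right-to-left direction, I would take a pointed forest $\pair{\amodel}{\aworld}$ with $\amodel, \aworld \models \atranslation(\aformula) \land \aformulabis$. Restricting attention to the submodel induced by worlds reachable from $\aworld$ within $\md{\atranslation(\aformula)} \leq \fsize{\aformula}$ steps (again by \Cref{lemma:height-at-most-n}), the condition $\aformulabis$ tells us each such world satisfies exactly one name; so I can read off a \SAL-tree $T$ inductively: the subtree at $\aworld$ is $\sum_i \aname_i[T_i]$ where the $\aworld_i$ range over children of $\aworld$, $\aname_i$ is the unique name satisfied by $\aworld_i$, and $T_i$ is the subtree built recursively. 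By construction $\pair{\amodel}{\aworld}$ (after the harmless restriction) encodes $T$ in the sense of the definition, so \Cref{lemma:correct-SAL-MLCC} yields $T \models \aformula$.

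I expect the main obstacle to be bookkeeping rather than conceptual: making sure the two fixes --- adding $\anvarprop$ in one direction, truncating to depth $\fsize{\aformula}$ in the other --- are applied consistently with the ``encoding'' definition, and checking that these alterations do not disturb satisfaction of $\atranslation(\aformula)$. The cleanest route is to state precisely that $\md{\atranslation(\aformula)} \leq \fsize{\aformula}$ (an easy induction on $\aformula$, since each of $\zero$, $\aname[\cdot]$, $\chopop$ increases modal depth by at most one), invoke \Cref{lemma:height-at-most-n} to justify the depth truncation, and invoke the standard fact that relabelling with a fresh propositional variable preserves satisfaction of formulae not mentioning it. Everything else then follows by gluing \Cref{lemma:tech-lemma-2-1} and \Cref{lemma:correct-SAL-MLCC} together.
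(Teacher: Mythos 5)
Your proposal is correct and follows essentially the same route as the paper: both directions glue \Cref{lemma:tech-lemma-2-1} and \Cref{lemma:correct-SAL-MLCC} together, using the encoding's ``exactly one name per reachable world'' property for the extra conjunct and a depth truncation (\Cref{lemma:height-at-most-n}) plus reading off a tree for the converse. The only (harmless) difference is how stray ambient names in the witnessing tree are handled: the paper renames them inside $T$ to the single fresh name $\anvarprop$ via the renaming invariance of \SAL~(\cite{CCG03}, Lemma~8), whereas you add $\anvarprop$ to the valuation of the encoding, justified by the fact that $\anvarprop$ does not occur in $\atranslation(\aformula)$ --- an equivalent fix.
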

 \begin{proof}
Suppose $\aformula$ satisfiable. Then, there is $T$ such that $T \models \aformula$.
In general, it could be that $T$ contains ambient names that do not appear in $\aformula$. However, we can assume that there is only one name in $T$ that does not appear in $\aformula$ and that name is $\avarprop$ (as in the statement of this theorem).
Indeed, this assumption relies on the following property of static ambient logic (see \cite{CCG03}, Lemma~8).
\begin{nscenter}
\begin{minipage}{0.92\linewidth}
Let $\avarprop, \avarpropbis$ be two ambient names not appearing in $\aformula$.
Then $T \models \aformula$ iff $T[\avarprop \gets \avarpropbis] \models \aformula$, where $T[\avarprop \gets \avarpropbis]$ is the tree obtained from $T$ by replacing every occurrence of $\avarprop$ with $\avarpropbis$.
\end{minipage}
\end{nscenter}
Let $\pair{\amodel}{\aworld}$ be a pointed forest, where $\amodel = \triple{\auniverse}{\aaccessrelation}{\apropeval}$,
encoding of $T$ (it exists by Lemma~\ref{lemma:tech-lemma-2-1}).
 By Lemma~\ref{lemma:correct-SAL-MLCC} we have $\amodel, \aworld \models \atranslation(\aformula)$.
Let us recall the properties of the encoding of $T$ by a model $\pair{\amodel}{\aworld}$:
\begin{enumerate}
\item every world in $\auniverse$ satisfies at most one propositional symbol in $\apropset$;
\item there is a function $\amap$ from $\auniverse$ to $\ambienttrees$ such that $\amap(\aworld) \equiv T$ and for every
$\aworld' \in \aaccessrelation^*(\aworld)$, we have
$\amap(\aworld') \equiv \sum_{i \in \interval{1}{K}}  \aname_i[\amap(\aworld_i)]$
where
$\{\aworld_1$, \dots, $\aworld_K\} = \aaccessrelation(\aworld')$
 and \fbox{for all $i \in \interval{1}{K}$, $\aworld_i \in \apropeval(\aname_i)$}.
\end{enumerate}
The first property together with the highlighted part of the second property imply that every world reachable in at least one step from $\aworld$ satisfies exactly one propositional symbol of $\apropset$. Then trivially $\amodel,\aworld \models \bigwedge_{i \in \interval{1}{\fsize{\aformula}}} \Box^{i} \bigvee_{\aname \in \apropset \cup \{\avarprop\}} \big( \aname \land \bigwedge_{\anamebis \in (\apropset  \cup \{\avarprop\})\setminus \{\aname\}} \lnot \anamebis \big)$.

Conversely, suppose $\aformulabis = \atranslation(\aformula) \land \bigwedge_{i \in \interval{1}{\fsize{\aformula}}} \Box^{i}
 \bigvee_{\aname \in \apropset  \cup \{\avarprop\}} \big( \aname \land \bigwedge_{\anamebis \in (\apropset  \cup \{\avarprop\})\setminus \{\aname\}}
\lnot \anamebis \big)$ satisfiable.
To prove the result it is sufficient to show that there is a pair $\pair{\amodel}{\aworld}$ encoding a tree $T$ that satisfies $\aformulabis$. Indeed, if this is the case then by $\amodel,\aworld \models \atranslation(\aformula)$ we obtain $T \models \aformula$ by Lemma~\ref{lemma:correct-SAL-MLCC}.
As $\aformulabis$ is satisfiable, we know that
there is a forest  $\amodel = \triple{\auniverse}{\aaccessrelation}{\apropeval}$ and a world $\aworld \in \auniverse$ such that $\amodel,\aworld \models \aformulabis$.
It is important to notice that, as in \Cref{th:mlcc-sal-equisat}, we can get rid of all the parts beyond $\md{\varphi}$, so we can ensure that as $\amodel,\aworld\models\psi$, then it is a encoding of some $T$, and therefore, $T\models\aformula$.
\end{proof}

 \subparagraph{From \satproblem{\modallogicCC} to \satproblem{\SAL}.}
 As explained in Section~\ref{section-SAL}, to obtain a polynomial-time reduction from \satproblem{\modallogicCC} to
 \satproblem{\SAL}, we have to understand how to encode a finite set of propositional symbols.
 It is crucial to deal with two issues: we need to avoid an exponential blow up in the representation, and we have to maintain information about the children of a node. We solve both issues by representing a propositional symbol $\avarprop$ as a particular ambient, and copying enough times the ambient encoding $\avarprop$. 
 \cut{
 However, this idea is flawed as then the formula $\avarprop$ should be translated with an exponential size
 disjunction over all the possible ambients whose bit corresponding to $\avarprop$ is set to $1$.
 Another possibility is to represent $\avarprop$ as an ambient itself and check 
 if this proposition holds with the formula $\avarprop[\true]\ambientchop \true$. However, by taking submodels with the $\chopop$ operator we then  lose information about the truthiness of $\avarprop$.
 A simple solution to this relies on copying enough times the ambients encoding propositional symbols. Let us formalise this construction.
 }
 Let $\apropset \subseteq_{\fin} \varprop$ and $n \in \Nat^{>0}$, where $\Nat^{>0}$ denotes the set of positive natural numbers.
 Let $\amodel = \triple{\auniverse}{\aaccessrelation}{\apropeval}$ be a finite forest and $\aworld \in \auniverse$.
 Let $\ambientchild$ and $\ambientprop$ be two ambient names not in $\apropset$.
 The ambient name $\ambientchild$ encodes the relation $\aaccessrelation$ whereas $\ambientprop$ can be seen as a
 \emph{container} for propositional variables holding on the current world.
 We say that $T \in \ambienttrees$ is an \emph{encoding} of $\pair{\amodel}{\aworld}$ with respect to $\apropset$ and $n$ iff
 \begin{enumerate}
 \item every ambient name in $T$ is from $\apropset \cup \{\ambientchild,\ambientprop\}$;
 \item there is a function $\amap$ from $\auniverse$ to $\ambienttrees$ s.t.\ $\amap(\aworld) \equiv T$ and for every $\aworld' \in \aaccessrelation^*(\aworld)$ there is $m \geq n$ s.t.\
 \begin{nscenter}
 $\amap(\aworld') \equiv \displaystyle \Big(\sum_{i \in \interval{1}{m}} \ambientprop[ \sum_{\mathclap{\substack{\avarprop \in \apropset\\ \aworld' \in \apropeval(\avarprop)}}} \avarprop[\zero] ]\Big) \ \ambientchop \
 \sum_{\mathclap{\qquad\aworld'' \in \aaccessrelation(\aworld')}} \ambientchild[\amap(\aworld'')]$
 \end{nscenter}
 We recall that given  $I = \{i_1,\dots,i_m\}$, $\sum_{i \in I} T_i \egdef T_{i_1} \ambientchop T_{i_2} \ambientchop \dots \ambientchop T_{i_{m}}$.
 \end{enumerate}

 The figure below shows on the right a possible encoding of the model on the left.
 \begin{figure}[ht]
  \begin{tikzpicture}
  \node[dot] (w) [ label=above:$w$, label=right:\scriptsize{$\{\avarprop_1,\ldots,\avarprop_l\}$}]{ };

  \node[dot] (w1) [below left = 0.7cm and 0.7cm of w,label=below:$w_1$] {};
  \node [below   =  0.7cm of w] {$\ldots$};
  \node[dot] (w3) [below right = 0.7cm and 0.7cm of w,label=below:$w_k$] {};

  \draw[pto] (w) -- (w1);
  \draw[pto] (w) -- (w3);

  \node[dot] (r) [right= 5cm of w, label=above:$\amap(w)$]{ };
  \node[dot] (p3) [below left = 1.4cm and 1cm of r,label=below:$\amap(w_k)$] {};

  \node[dot] (pL) [below right = 0.8cm and 0.4cm of r = 1.5cm of p3] {};
  \node[dot] (oL) [below left = 0.5cm and 0.5 cm of pL, label=below:$\zero$] {};
  \node[dot] (oLL) [below right = 0.5cm and 0.5 cm of pL, label=below:$\zero$] {};

  \node[dot] (p1) [right = 2.8cm of pL] {};
  \node[dot] (o1) [below left = 0.5cm and 0.5 cm of p1, label=below:$\zero$] {};
  \node[dot] (o11) [below right = 0.5cm and 0.5 cm of p1, label=below:$\zero$] {};

  \node (aa) [below right = 0cm and 0.55cm of pL] {\footnotesize{$\dots$ \ $m$ \ $\dots$}};
  \node (aabis) [below = -5pt of aa] {\footnotesize{times}};

  \node[dot] (p4) [left = 1cm of p3, label=below:{$\amap(w_1)$}] {};

  \node [left = 0.2cm of o11] {$\ldots$};
  \node [left = 0.2cm of oLL] {$\ldots$};
  \node [left = 0.1cm of p3] {$\ldots$};

  \node (g1) [below = 0.15cm of o11] {};
  \node (gL) [below = 0.15cm of oL] {};

   \draw[pto] (r) -- node[sloped, anchor=center,above] {\scriptsize{$\ambientprop$}} (p1);
   \draw[pto] (r) -- node[sloped, anchor=center,below] {\scriptsize{$\ambientprop$}}  (pL);
   \draw[pto] (p1) -- node[sloped, anchor=center,above] {\footnotesize{$\avarprop_1$}} (o1);
   \draw[pto] (pL) -- node[sloped, anchor=center,above] {\footnotesize{$\avarprop_1$}} (oL);
   \draw[pto] (p1) -- node[sloped, anchor=center,above] {\footnotesize{$\avarprop_l$}} (o11);
   \draw[pto] (pL) -- node[sloped, anchor=center,above] {\footnotesize{$\avarprop_l$}} (oLL);
   \draw[pto] (r) -- node[sloped, anchor=center,below] {\scriptsize{$\ambientchild$}} (p3);
   \draw[pto] (r) -- node[sloped, anchor=center,above] {\scriptsize{$\ambientchild$}} (p4);

  \end{tikzpicture} 
\end{figure}

 It is easy to verify that $\pair{\amodel}{\aworld}$ always admits such an encoding.

 We start by stating three intermediate results about the encoding of a finite forest in a model of static ambient logic. These lemmata will be fundamental to show the correctness of the translation in~\Cref{lemma:correct-MLCC-SAL}.
 The first lemma below shows that such an encoding always exists.
 In what follows, we call $\amap$ (as in the definition of the encoding) the \emph{witness of the encoding} of $\pair{\amodel}{\aworld}$ in $T$.

 \begin{lemma}\label{lemma:tech-lemma-1-1}
 Let $\amodel$ be a finite forest and $\aworld$ be one of its worlds.
 Let $\apropset \subseteq_{\fin} \varprop$ and $n \in \Nat^{>0}$. There is a tree $T \in \ambienttrees$ that encodes $\pair{\amodel}{\aworld}$ w.r.t.\ $\apropset$ and $n$.
 \end{lemma}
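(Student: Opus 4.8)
The statement to prove is \Cref{lemma:tech-lemma-1-1}: given a finite forest $\amodel = \triple{\auniverse}{\aaccessrelation}{\apropeval}$, one of its worlds $\aworld$, a finite set $\apropset \subseteq_{\fin} \varprop$ and $n \in \Nat^{>0}$, there is a tree $T \in \ambienttrees$ encoding $\pair{\amodel}{\aworld}$ w.r.t.\ $\apropset$ and $n$. The plan is to \emph{construct $T$ explicitly by recursion on the subtree of $\amodel$ rooted at $\aworld$}, and then to define the witness map $\amap$ and check that it satisfies the two defining conditions of the encoding. Since $\aaccessrelation^{-1}$ is functional and acyclic, the subtree rooted at $\aworld$ is a genuine finite tree, so the recursion is well-founded (it terminates, e.g., by induction on the height of that subtree, which is finite because $\auniverse$ is finite).

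First I would define, for every $\aworld' \in \aaccessrelation^*(\aworld)$, the tree $t(\aworld') \in \ambienttrees$ by
\begin{nscenter}
$
  t(\aworld') \ \egdef \
  \Big(\sum_{i \in \interval{1}{n}} \ambientprop\big[\,{\sum_{\substack{\avarprop \in \apropset\\ \aworld' \in \apropeval(\avarprop)}}}\!\avarprop[\zero]\,\big]\Big)
  \ \ambientchop \
  {\sum_{\aworld'' \in \aaccessrelation(\aworld')}}\!\ambientchild[\,t(\aworld'')\,].
$
\end{nscenter}
Here the inner sums over $\apropset$ and over $\aaccessrelation(\aworld')$ are finite (as $\apropset$ and $\auniverse$ are finite) and, by the convention $\sum_{i \in \emptyset} T_i = \zero$ read together with $T \ambientchop \zero \equiv T$, they collapse to $\zero$ when empty; the outer sum $\sum_{i\in\interval{1}{n}}$ has exactly $n$ copies, which is legitimate since $n \geq 1$. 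The recursive call $t(\aworld'')$ is on a strictly shorter subtree, so $t$ is well defined. I would then set $T \egdef t(\aworld)$, and define the candidate witness map $\amap : \auniverse \to \ambienttrees$ by $\amap(\aworld') \egdef t(\aworld')$ for $\aworld' \in \aaccessrelation^*(\aworld)$ and $\amap(\aworld') \egdef \zero$ otherwise (the value outside the subtree is irrelevant to the definition of encoding).

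Next I would verify the two conditions in the definition of ``encoding''. Condition (1), that every ambient name occurring in $T$ lies in $\apropset \cup \{\ambientchild,\ambientprop\}$, is immediate from the shape of the recursive clause: the only constructors of the form $\aname[\cdot]$ introduced are $\ambientprop[\cdot]$, $\avarprop[\cdot]$ with $\avarprop \in \apropset$, and $\ambientchild[\cdot]$; a trivial induction on the recursion propagates this to all of $T$. Condition (2) holds essentially by construction: $\amap(\aworld) = t(\aworld) \equiv T$ by definition, and for every $\aworld' \in \aaccessrelation^*(\aworld)$ the defining equation of $t(\aworld')$ is exactly the required congruence with $m = n$ (so in particular $m \geq n$), after substituting $\amap(\aworld'') = t(\aworld'')$ for the children $\aworld'' \in \aaccessrelation(\aworld')$, which are themselves in $\aaccessrelation^*(\aworld)$. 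No step here is an obstacle; the only things to be slightly careful about are the empty-sum / $T \ambientchop \zero \equiv T$ bookkeeping and the fact that the recursion terminates, both of which follow from finiteness and acyclicity of $\amodel$. I would close by remarking that any $n \geq 1$ works, which is all that is needed for the downstream reduction. \qed
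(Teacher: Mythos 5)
Your proposal is correct and follows essentially the same route as the paper's proof: a recursive construction of the tree bottom-up over the (finite, acyclic) subtree rooted at $\aworld$, with exactly $n$ copies of the $\ambientprop$ container per world and one $\ambientchild$ ambient per child, taking the recursion itself as the witness map $\amap$. The only difference is cosmetic — you restrict $\amap$ to $\aaccessrelation^*(\aworld)$ and spell out the verification of the two encoding conditions, which the paper leaves as ``trivially''.
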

 %

\begin{proof}
Let $\amodel = \triple{\auniverse}{\aaccessrelation}{\apropeval}$ be a model. By following directly the properties of 
the witness function, we define the tree $T$ as $\aCCmodeltranslation{\triple{\auniverse}{\aaccessrelation}{\apropeval}}{\aworld}{n}{\apropset}$ where
\begin{nscenter}
$\aCCmodeltranslation{\triple{\auniverse}{\aaccessrelation}{\apropeval}}{\aworld}{n}{\apropset} = \displaystyle
\Big(\sum_{i \in \interval{1}{n}} \ambientprop[ \sum_{\mathclap{\substack{\avarprop \in \apropset\\ \aworld \in \apropeval(\avarprop)}}} \avarprop[\zero] ]\Big) \ \ambientchop \
\sum_{\mathclap{\substack{\aworld' \in \auniverse\\ \pair{\aworld}{\aworld'} \in \aaccessrelation}}} \ambientchild[ \aCCmodeltranslation{\triple{\auniverse}{\aaccessrelation}{\apropeval}}{\aworld'}{n}{\apropset}]
$
\end{nscenter}
As $\amodel$ is a finite forest, for every $\aworld \in \auniverse$ and $n \in \Nat$, the computation of $\aCCmodeltranslation{\triple{\auniverse}{\aaccessrelation}{\apropeval}}{\aworld}{n}{\apropset}$ terminates.
Let $\amap(\aworld) \egdef \aCCmodeltranslation{\triple{\auniverse}{\aaccessrelation}{\apropeval}}{\aworld}{n}{\apropset}$. Trivially, $\amap$ witnesses that $T$ is an encoding of $\pair{\amodel}{\aworld}$ w.r.t.\ $\apropset$ and $n$.
\end{proof}



 The second lemma can be seen as a semantical counterpart of the modality $\Diamond$.

\begin{lemma}\label{lemma:tech-lemma-1-2}
Let $\amodel$ be a finite forest and $\aworld$ be one of its worlds. Let  $\apropset \subseteq_{\fin} \varprop$ and $n \in \Nat^{>0}$.
Let $T \in \ambienttrees$ be an encoding of $\pair{\amodel}{\aworld}$ with respect to $\apropset$ and $n$. Then,
\begin{enumerate}
\item For every $n' \leq n$, $T$ is also an encoding of $\pair{\amodel}{\aworld}$ with respect to $\apropset$ and $n'$.
\item Let $\amap$ be a witness of this encoding.
For every $\aworld'$ accessible from $\aworld$ it holds that $\amap(\aworld')$ is an encoding of $\pair{\amodel}{\aworld'}$ with respect to $\apropset$ and $n$.
\end{enumerate}
\end{lemma}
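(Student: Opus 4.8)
\textbf{Proof plan for \Cref{lemma:tech-lemma-1-2}.}
The plan is to prove each of the two statements by unfolding the definition of ``encoding of $\pair{\amodel}{\aworld}$ with respect to $\apropset$ and $n$'' and exhibiting a suitable witness function in each case. Both parts should be essentially routine, since the definition of encoding only asks for a witness $\amap$ with $\amap(\aworld) \equiv T$ and a recursive shape of $\amap(\aworld')$ in which the number $m$ of $\ambientprop$-labelled copies is bounded below by $n$; weakening that bound or restricting attention to a subtree does not touch the structural shape of $\amap$.

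For part~(1), I would take the witness $\amap$ of the encoding of $\pair{\amodel}{\aworld}$ with respect to $\apropset$ and $n$, and simply observe that the \emph{same} function $\amap$ witnesses the encoding with respect to $\apropset$ and $n'$ whenever $n' \leq n$. Indeed, the first condition (every ambient name in $T$ is from $\apropset \cup \{\ambientchild,\ambientprop\}$) is independent of $n$; and in the second condition, for every $\aworld' \in \aaccessrelation^*(\aworld)$ there is $m \geq n \geq n'$ such that $\amap(\aworld')$ has the required shape, so the very same $m$ works for $n'$. Hence no new construction is needed.

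For part~(2), I would fix the witness $\amap$ of the encoding of $\pair{\amodel}{\aworld}$ and fix $\aworld'$ accessible from $\aworld$, i.e. $\pair{\aworld}{\aworld'} \in \aaccessrelation^*$. The claim is that $\amap(\aworld')$ is an encoding of $\pair{\amodel}{\aworld'}$ with respect to $\apropset$ and $n$, and the natural witness is the restriction of $\amap$ to the worlds reachable from $\aworld'$ (or even $\amap$ itself, unrestricted, since only the values on $\aaccessrelation^*(\aworld')$ are relevant). First, by the recursive shape enjoyed by $\amap$ at $\aworld'$ (which holds because $\aworld' \in \aaccessrelation^*(\aworld)$), the tree $\amap(\aworld')$ has all its ambient names in $\apropset \cup \{\ambientchild,\ambientprop\}$, establishing the first condition. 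Second, $\amap$ trivially satisfies $\amap(\aworld') \equiv \amap(\aworld')$, and for every $\aworld''$ with $\pair{\aworld'}{\aworld''} \in \aaccessrelation^*$ we also have $\pair{\aworld}{\aworld''} \in \aaccessrelation^*$ by transitivity of $\aaccessrelation^*$, so the required shape of $\amap(\aworld'')$ (with some $m \geq n$) is inherited directly from the hypothesis that $\amap$ witnesses the encoding of $\pair{\amodel}{\aworld}$. This gives the second condition, and hence $\amap(\aworld')$ is indeed an encoding of $\pair{\amodel}{\aworld'}$ with respect to $\apropset$ and $n$.

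The only mild subtlety — and the closest thing to an obstacle — is bookkeeping about reachability: one must be careful that ``accessible from $\aworld$'' is interpreted via $\aaccessrelation^*$ and that $\aaccessrelation^*$ is transitive, so that the local constraints imposed by the encoding of $\pair{\amodel}{\aworld}$ at nodes below $\aworld'$ transfer verbatim to the encoding of $\pair{\amodel}{\aworld'}$. Since our models are finite forests, $\aaccessrelation^*(\aworld')\subseteq\aaccessrelation^*(\aworld)$, and this is immediate. No induction is really required for either part; the statements follow directly from the definition once the correct witness is named.
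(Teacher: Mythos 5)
Your proposal is correct and follows essentially the same route as the paper's own (very short) proof: both parts are obtained by unfolding the definition of encoding and reusing the same witness function $\amap$, noting for (1) that $m \geq n \geq n'$ and for (2) that every world accessible from $\aworld'$ is accessible from $\aworld$.
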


\begin{proof}
Both properties trivially follow from the definition of encoding. Moreover, for (2) notice that for every world $\aworld'$ accessible from $\aworld$ the function $\amap$ is also the witness that $\amap(\aworld')$ is an encoding of $\pair{\amodel}{\aworld'}$ with respect to $\apropset$ and $n$.
\end{proof}

 The third lemma can be seen as the semantical counterpart of the modality $\chopop$.

\begin{lemma}\label{lemma:tech-lemma-1-3}
Let $\amodel$ be a finite forest and $\aworld$ be one of its worlds.
Let $\apropset \subseteq_{\fin} \varprop$ and $n \in \Nat^{>0}$.
Let $T \in \ambienttrees$ be an encoding of $\pair{\amodel}{\aworld}$ with respect to $\apropset$ and $n$.
Let $n_1 , n_2 \in \Nat$ such that $n = n_1 + n_2$.
Then,
\begin{enumerate}
  \itemsep 0cm
\item For all $\amodel_1$ and $\amodel_2$ such that $\amodel = \amodel_1 +_{\aworld} \amodel_2$ there are $T_1$ and $T_2$ such that $T \equiv T_1 \ambientchop T_2$,
 $T_1$ is an encoding of $\pair{\amodel_1}{\aworld}$ with respect to $\apropset$ and $n_1$, and $T_2$ is an encoding of
$\pair{\amodel_2}{\aworld}$ with respect to $\apropset$ and $n_2$.
\item For all $T_1$ and $T_2$ such that
\begin{nscenter}
$T \equiv T_1 \ambientchop T_2 \ambientchop \sum_{i \in \interval{1}{n}} \ambientprop[
\sum_{\avarprop \in \apropset,\ \aworld \in \apropeval(\avarprop)} \avarprop[\zero] ],$
\end{nscenter}
there are $\amodel_1$ and $\amodel_2$ such that $\amodel = \amodel_1 +_{\aworld} \amodel_2$ and
\begin{itemize}
\item $T_1 \ambientchop \sum_{i \in \interval{1}{n_1}} \ambientprop[
\sum_{\avarprop \in \apropset,\ \aworld \in \apropeval(\avarprop)} \avarprop[\zero]]$ is an encoding of $\pair{\amodel_1}{\aworld}$ w.r.t.\ $\apropset$ and $n_1$;
\item $T_2 \ambientchop \sum_{i \in \interval{1}{n_2}} \ambientprop[
\sum_{\avarprop \in \apropset,\ \aworld \in \apropeval(\avarprop)} \avarprop[\zero]]$ is an encoding of $\pair{\amodel_2}{\aworld}$ w.r.t.\ $\apropset$ and $n_2$.
\end{itemize}
\end{enumerate}
\end{lemma}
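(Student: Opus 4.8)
\textbf{Proof plan for Lemma~\ref{lemma:tech-lemma-1-3}.}
The plan is to prove both directions by analysing the children of $\aworld$ in $\amodel$ and how the $\ambientchild$-ambients of $T$ (which, by the encoding definition, are in bijective correspondence with those children) can be distributed between the two pieces of a $\chopop$-decomposition. Throughout I will write $\amodel = \triple{\auniverse}{\aaccessrelation}{\apropeval}$ and fix a witness $\amap$ of the encoding of $\pair{\amodel}{\aworld}$ in $T$, so that
\[
\amap(\aworld) \equiv T \equiv \Big(\sum_{i \in \interval{1}{n}} \ambientprop[ \sum_{\substack{\avarprop \in \apropset\\ \aworld \in \apropeval(\avarprop)}} \avarprop[\zero] ]\Big) \ \ambientchop \ \sum_{\aworld'' \in \aaccessrelation(\aworld)} \ambientchild[\amap(\aworld'')].
\]
The crucial structural observation is that, since the semantics of $+_{\aworld}$ keeps the whole subtree of any child of $\aworld$ inside a single component, a decomposition $\amodel = \amodel_1 +_{\aworld} \amodel_2$ is completely determined by a partition $\aaccessrelation(\aworld) = C_1 \uplus C_2$ of the children of $\aworld$; conversely the structural congruence on $T$ allows us to split the big sum of $\ambientchild$-ambients arbitrarily, plus split the $n$ copies of the $\ambientprop$-container as $n = n_1 + n_2$. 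This parallelism between ``partition the $\ambientchild$-children'' and ``partition the $\avartree$-children'' is what drives both directions.

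\textbf{Direction 1 (forests to trees).} Given $\amodel = \amodel_1 +_{\aworld} \amodel_2$, let $C_i = \aaccessrelation_i(\aworld)$ for $i \in \{1,2\}$, a partition of $\aaccessrelation(\aworld)$, and note that for each $\aworld'' \in C_i$ we have $\aaccessrelation_i^+(\aworld'') = \aaccessrelation^+(\aworld'')$, so $\amap$ restricted to the subtree of $\aworld''$ already witnesses that $\amap(\aworld'')$ encodes $\pair{\amodel_i}{\aworld''}$ w.r.t.\ $\apropset$ and $n$. Define
\[
T_i \egdef \Big(\sum_{j \in \interval{1}{n_i}} \ambientprop[ \sum_{\substack{\avarprop \in \apropset\\ \aworld \in \apropeval(\avarprop)}} \avarprop[\zero] ]\Big) \ \ambientchop \ \sum_{\aworld'' \in C_i} \ambientchild[\amap(\aworld'')].
\]
Then $T \equiv T_1 \ambientchop T_2$ follows from commutativity/associativity of $\ambientchop$ and $n = n_1 + n_2$, and one reads off a witness for $T_i$ encoding $\pair{\amodel_i}{\aworld}$ by taking $\amap_i$ equal to $\amap$ on the subtrees rooted at worlds in $C_i$ (and $\zero$ elsewhere, as in the proof of Lemma~\ref{lemma:tech-lemma-2-3}). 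Here one uses that $\aworld$ satisfies the same propositions in $\amodel_i$ as in $\amodel$ (since $+_{\aworld}$ does not change $\apropeval$), so the $\ambientprop$-containers are correct.

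\textbf{Direction 2 (trees to forests).} Suppose $T \equiv T_1 \ambientchop T_2 \ambientchop \sum_{i \in \interval{1}{n}} \ambientprop[\,\cdots\,]$. The main point is an inversion lemma for the structural congruence: since $T$ is (congruent to) a parallel product of exactly $n$ $\ambientprop$-ambients and some $\ambientchild$-ambients, and $T_1, T_2$ contribute the remaining $\ambientprop$-material and all the $\ambientchild$-material, the congruence forces $T_1$ (resp.\ $T_2$) to be congruent to a parallel product of some $\ambientchild[\amap(\aworld'')]$'s over a subset $C_1 \subseteq \aaccessrelation(\aworld)$ (resp.\ its complement $C_2$) together with some number $m_1$ (resp.\ $m_2$) of $\ambientprop$-containers with $m_1 + m_2 = 0$ after the extra $n$ copies are accounted for — concretely, $T_1$ and $T_2$ contain no $\ambientprop$-ambients, so that $T_i \ambientchop \sum_{j \in \interval{1}{n_i}} \ambientprop[\,\cdots\,]$ is exactly the right shape. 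I then let $\amodel_i$ be the submodel of $\amodel$ obtained by keeping from $\aaccessrelation$ the edges into and below the worlds of $C_i$, exactly as in the proof of Lemma~\ref{lemma:tech-lemma-2-3}(2); this gives $\amodel = \amodel_1 +_{\aworld} \amodel_2$, and the restriction of $\amap$ provides the witnesses for the two encodings.

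\textbf{Main obstacle.} The routine arithmetic (splitting $n = n_1 + n_2$, bookkeeping of witness functions) is harmless; the genuinely delicate step is the inversion lemma in Direction 2 — extracting, from $T \equiv T_1 \ambientchop T_2 \ambientchop \sum \ambientprop[\cdots]$, that each $T_i$ must decompose as a parallel product of $\ambientchild$-ambients drawn from the children of $\aworld$ (and contains no stray $\ambientprop$-ambients). This requires a careful argument about the structural congruence $\equiv$ of $\ambienttrees$: one must show that a parallel product of top-level ambients has an essentially unique multiset of top-level ambients, and that any congruent factorisation redistributes exactly this multiset. I would isolate this as a small standalone ``prime factorisation'' observation about $\equiv$ (top-level ambients of $T$ form a well-defined multiset, invariant under $\equiv$, and $\ambientchop$ acts as multiset union on it) and then the whole direction becomes a matching of multisets. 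Everything else is then a direct transcription of the arguments already used for Lemmata~\ref{lemma:tech-lemma-2-2} and~\ref{lemma:tech-lemma-2-3}.
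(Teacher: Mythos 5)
Your overall route is the same as the paper's (partition the children of $\aworld$, split the top-level $\ambientprop$-containers, restrict the witness map to each part, and in the converse direction read a partition of the $\ambientchild$-ambients off the structural congruence, which the paper also invokes at exactly the point you isolate as a ``multiset'' observation). However, there is a genuine gap: you read the definition of encoding as saying that $\amap(\aworld)$ carries \emph{exactly} $n$ copies of the container $\ambientprop[\sum_{\avarprop} \avarprop[\zero]]$, whereas the definition only requires that there is some $m \geq n$ copies (and this $m$ may vary from world to world). This misreading breaks both directions as you state them. In Direction~1, your $T_1 \ambientchop T_2$ carries exactly $n_1 + n_2 = n$ containers at top level, so it is \emph{not} congruent to $T$ whenever $T$ has $m > n$ of them; the congruence claim ``$T \equiv T_1 \ambientchop T_2$ follows from commutativity/associativity and $n = n_1 + n_2$'' is simply false for such encodings, and the lemma is stated for arbitrary encodings, not minimal ones. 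In Direction~2, your key structural claim that ``$T_1$ and $T_2$ contain no $\ambientprop$-ambients'' is likewise false when $m > n$: the hypothesis only peels off $n$ containers, so the surplus $m - n$ containers must sit inside $T_1$ or $T_2$.

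The repair is exactly the bookkeeping the paper does and costs nothing conceptually: in Direction~1 choose $m_1 + m_2 = m$ with $m_1 \geq n_1$ and $m_2 \geq n_2$ and give $T_i$ the $m_i$ containers together with the $\ambientchild$-ambients of its part; in Direction~2 observe that the surplus containers are all of the canonical form $\ambientprop[\sum_{\avarprop \in \apropset,\, \aworld \in \apropeval(\avarprop)} \avarprop[\zero]]$ (they come from the top-level multiset of $T$), so however they are distributed, each $T_i \ambientchop \sum_{j \in \interval{1}{n_i}} \ambientprop[\cdots]$ ends up with some $m_i \geq n_i$ such containers, which is all the definition of an encoding w.r.t.\ $n_i$ demands. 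With that correction, the rest of your plan (the witness functions $\amap_i$, the partition of the edges as in Lemma~\ref{lemma:tech-lemma-2-3}, and the explicit ``top-level ambients form a $\equiv$-invariant multiset on which $\ambientchop$ is union'' observation) matches the paper's argument.
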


\begin{proof}
In the proof of both points, let $\amodel$ (as in the statement) be $\triple{\auniverse}{\aaccessrelation}{\apropeval}$. 
Moreover, let $\amap$ be the witness of the encoding of $\pair{\amodel}{\aworld}$ in $T$.
\begin{enumerate}
\item
Suppose $\amodel_1 = \triple{\auniverse}{\aaccessrelation_1}{\apropeval}$ and  $\amodel_2 = \triple{\auniverse}{\aaccessrelation_2}{\apropeval}$
such that $\amodel = \amodel_1 +_{\aworld} \amodel_2$.
Let $W_{\aworld} = \{\aworld_1,\dots,\aworld_K\} \egdef \{\aworld' \mid \pair{\aworld}{\aworld'} \in \aaccessrelation\}$,
$W_{\aworld}' \egdef \{\aworld' \mid \pair{\aworld}{\aworld'} \in \aaccessrelation_1\}$
and $R_{\aworld}'' \egdef \{\aworld' \mid \pair{\aworld}{\aworld'} \in \aaccessrelation_2\}$ be the set of worlds accessible from $\aworld$ by considering respectively $\aaccessrelation$, $\aaccessrelation_1$ and $\aaccessrelation_2$ as accessibility relations.
By definition of $\amodel_1$ and $\amodel_2$, the two sets $W_{\aworld}'$ and $W_{\aworld}''$ partition $W_{\aworld}$.
Then, let $W_{\aworld}' = \{\aworld_{i_1},\dots,\aworld_{i_c}\}$ and $W_{\aworld}'' = \{\aworld_{i_{c+1}},\dots,\aworld_{i_K}\}$ ($c \in \interval{0}{K}$).
By definition of $\amap$, it holds that
\begin{nscenter}
$\displaystyle\amap(\aworld) \equiv T \equiv \Big(\sum_{i \in \interval{1}{m}} \ambientprop[ \sum_{\mathclap{\substack{\avarprop \in \apropset\\ \aworld \in \apropeval(\avarprop)}}} \avarprop[\zero] ]\Big) \ \ambientchop \
\sum_{\mathclap{i \in \interval{1}{K}}} \ambientchild[\amap(\aworld_i)]$.
\end{nscenter}
where $m \geq n$. As $m \geq n$, there are $m_1$ and $m_2$ such that $m = m_1 + m_2$, $m_1 \geq n_1$ and $m_2 \geq n_2$.
From the properties of the congruence relation $\equiv$ we can show that $T$ is equivalent to $T_1 \ambientchop T_2$, where
\begin{nscenter}
$\displaystyle T_1 \egdef \big(\sum_{\mathclap{i \in \interval{1}{m_1}}} \ambientprop[ \sum_{\mathclap{\substack{\avarprop \in \apropset\\ \aworld \in \apropeval(\avarprop)}}} \avarprop[\zero] ]\big)
\ \ambientchop \
\sum_{\mathclap{j \in \interval{1}{c}}} \ambientchild[\amap(\aworld_{i_j})]$;

$\displaystyle T_2 \egdef \big(\sum_{\mathclap{i \in \interval{1}{m_2}}} \ambientprop[ \sum_{\mathclap{\substack{\avarprop \in \apropset\\ \aworld \in \apropeval(\avarprop)}}} \avarprop[\zero] ]\big)
\ \ambientchop \
\sum_{\mathclap{i \in \interval{c+1}{K}}} \ambientchild[\amap(\aworld_{i_j})]$.
\end{nscenter}
By definition, $T_1 \ambientchop T_2 \equiv T$.
We now consider the two functions $\amap_1$ and $\amap_2$ defined as:
\begin{itemize}
\item $\amap_1(\aworld) = T_1$ and $\amap_2(\aworld) = T_2$
\item for every $\aworld'' \in \auniverse$ s.t.\ $\pair{\aworld'}{\aworld''} \in \aaccessrelation^*$ for some $\aworld' \in W_{\aworld}'$, $\amap_1(\aworld'') = \amap(\aworld'')$ and $\amap_2(\aworld'') = \zero$;
\item for every $\aworld'' \in \auniverse$ s.t.\ $\pair{\aworld'}{\aworld''} \in \aaccessrelation^*$ for some $\aworld' \in W_{\aworld}''$, $\amap_2(\aworld'') = \amap(\aworld'')$ and $\amap_1(\aworld'') = \zero$;
\item for every $\aworld' \in \auniverse$ s.t.\ $\pair{\aworld}{\aworld'} \not\in \aaccessrelation^*$, $\amap_1(\aworld') = \amap_2(\aworld') = \zero$.
\end{itemize}
By definition of the witness function and recalling that $m_1 \geq n_1$ and $m_2 \geq n_2$, it is easy to show that $\amap_1$ witnesses that $T_1$ is an encoding of $\pair{\amodel_1}{\aworld}$ w.r.t.\ $\apropset$ and $n_1$, whereas
 $\amap_2$ witnesses that $T_2$ is an encoding of $\pair{\amodel_2}{\aworld}$ w.r.t.\ $\apropset$ and $n_2$.
\item Suppose now $T_1$ and $T_2$ such that
\begin{nscenter}
$T \equiv T_1 \ambientchop T_2 \ambientchop \sum_{i \in \interval{1}{n}} \ambientprop[
\sum_{\avarprop \in \apropset,\ \aworld \in \apropeval(\avarprop)} \avarprop[\zero] ]$.
\end{nscenter}
By recalling that $n = n_1 + n_2$,
from the properties of the congruence relation $\equiv$, we can then show that $T$ is equivalent to
\begin{nscenter}
$(\dag) \ \
\displaystyle
\Big(T_1 \ambientchop \sum_{\mathclap{i \in \interval{1}{n_1}}} \ambientprop[ \sum_{\mathclap{\substack{\avarprop \in \apropset\\ \aworld \in \apropeval(\avarprop)}}} \avarprop[\zero] ] \Big)
\ \ambientchop\
\Big( T_2 \ambientchop \sum_{\mathclap{i \in \interval{1}{n_2}}} \ambientprop[ \sum_{\mathclap{\substack{\avarprop \in \apropset\\ \aworld \in \apropeval(\avarprop)}}} \avarprop[\zero] ] \Big)$
\end{nscenter}
Then, for $j \in \{1,2\}$ let 
\begin{nscenter}
  $T_j' \egdef T_j \ambientchop \sum_{i \in \interval{1}{n_j}} \ambientprop[ \sum_{\avarprop \in \apropset,\ \aworld \in \apropeval(\avarprop)} \avarprop[\zero] ]$
\end{nscenter}
so that $T \equiv T_1' \ambientchop T_2'$.
In order to conclude the proof, we have to show that it is possible to partition 
$\aaccessrelation$ into $\aaccessrelation_1$ and $\aaccessrelation_2$ so that
$\amodel_1 = \triple{\auniverse}{\aaccessrelation_1}{\apropeval}$, $\amodel_2 = \triple{\auniverse}{\aaccessrelation_2}{\apropeval}$,
$\amodel = \amodel_1 +_{\aworld} \amodel_2$ and
\begin{itemize}
\item $T_1'$ is an encoding of $\pair{\amodel_1}{\aworld}$ w.r.t.\ $\apropset$ and $n_1$;
\item $T_2'$ is an encoding of $\pair{\amodel_2}{\aworld}$ w.r.t.\ $\apropset$ and $n_2$.
\end{itemize}
We consider the accessibility relation $\aaccessrelation$.
Let $W_{\aworld} = \{\aworld_1,\dots,\aworld_K\} \egdef \{\aworld' \mid \pair{\aworld}{\aworld'} \in \aaccessrelation\}$ be the set of worlds that are accessible from $\aworld$ (notice that this set could be empty).
As $T$ is an encoding of $\pair{\amodel}{\aworld}$, we have the following equivalence:
\begin{nscenter}
$\displaystyle\amap(\aworld) \equiv T \equiv \Big(\sum_{i \in \interval{1}{m}} \ambientprop[ \sum_{\mathclap{\substack{\avarprop \in \apropset\\ \aworld \in \apropeval(\avarprop)}}} \avarprop[\zero] ]\Big) \ \ambientchop \
\sum_{\mathclap{i \in \interval{1}{K}}} \ambientchild[\amap(\aworld_i)]$,
\end{nscenter}
with $m \geq n$.
Notice that if $W_{\aworld}$ is empty then we have that $\sum_{i \in \interval{1}{K}} \ambientchild[\amap(\aworld_i)]$ is equivalent to the empty tree $\zero$.
Following the equivalence between $T$ and $(\dag)$, we know that
we can partition $W_{\aworld}$ into two sets $W_{\aworld}' = \{\aworld_{i_1},\dots,\aworld_{i_c}\}$ and $W_{\aworld}'' = \{\aworld_{i_{c+1}},\dots,\aworld_{i_K}\}$ ($c \in \interval{0}{K}$)
so that, for some $m_1, m_2 \in \Nat$ such that $m = m_1 + m_2$, $m_1 \geq n_1$ and $m_2 \geq n_2$
we have
\begin{itemize}
\item $T_1' \equiv \big(\sum_{j \in \interval{1}{c}} \ambientchild[\amap(\aworld_{i_j})] \big) \ambientchop \sum_{i \in \interval{1}{m_1}} \ambientprop[ \sum_{\avarprop \in \apropset,\ \aworld \in \apropeval(\avarprop)} \avarprop[\zero] ]$;
\item $T_2 \equiv \big( \sum_{j \in \interval{c+1}{K}} \ambientchild[\amap(\aworld_{i_j})] \big) \ambientchop \sum_{i \in \interval{1}{m_2}} \ambientprop[ \sum_{\avarprop \in \apropset,\ \aworld \in \apropeval(\avarprop)} \avarprop[\zero] ]$.
\end{itemize}
By relying on the partitioning of $W_{\aworld}$ into $W_{\aworld}'$ and $W_{\aworld}''$ it is easy to show that we can derive two finite forests $\amodel_1 = \triple{\auniverse}{\aaccessrelation_1}{\apropeval}$ and $\amodel_2 = \triple{\auniverse}{\aaccessrelation_2}{\apropeval}$ such that
\begin{itemize}
\item $\amodel = \amodel_1 +_{\aworld} \amodel_2$;
\item every $\aworld' \in W_{\aworld}'$ is accessible from $\aworld$ in $\aaccessrelation_1$, i.e.\ $\pair{\aworld}{\aworld'} \in \aaccessrelation_1$;
\item every $\aworld'' \in W_{\aworld}''$ is accessible from $\aworld$ in $\aaccessrelation_2$, i.e.\ $\pair{\aworld}{\aworld''} \in \aaccessrelation_2$.
\end{itemize}
By defining
$\aaccessrelation_1 \egdef \{\pair{\aworld'}{\aworld''} \in \aaccessrelation \mid \text{ there is } \aworld''' \in W_{\aworld}' \text{ such that } \pair{\aworld'''}{\aworld''} \in \aaccessrelation^* \}$ and $\aaccessrelation_2 \egdef \aaccessrelation \setminus \aaccessrelation_1$
we obtain $\amodel_1$ and $\amodel_2$ satisfying these properties.
It is now sufficient to consider the two functions $\amap_1$ and $\amap_2$ defined as:
\begin{itemize}
\item $\amap_1(\aworld) = T_1'$ and $\amap_2(\aworld) = T_2'$
\item for every $\aworld'' \in \auniverse$ s.t.\ $\pair{\aworld'}{\aworld''} \in \aaccessrelation^*$ for some $\aworld' \in W_{\aworld}'$, $\amap_1(\aworld'') = \amap(\aworld'')$ and $\amap_2(\aworld'') = \zero$;
\item for every $\aworld'' \in \auniverse$ s.t.\ $\pair{\aworld'}{\aworld''} \in \aaccessrelation^*$ for some $\aworld' \in W_{\aworld}''$, $\amap_2(\aworld'') = \amap(\aworld'')$ and $\amap_1(\aworld'') = \zero$;
\item for every $\aworld' \in \auniverse$ s.t.\ $\pair{\aworld}{\aworld'} \not\in \aaccessrelation^*$, $\amap_1(\aworld') = \amap_2(\aworld') = \zero$.
\end{itemize}
By definition of the witness function, $\amap_1$ witnesses the encoding of $T_1'$ in $\pair{\amodel_1}{\aworld}$, and
 $\amap_2$ witnesses the encoding of $T_2'$ in $\pair{\amodel_2}{\aworld}$.
  \qedhere
\end{enumerate}
\end{proof}

 In the figure just above, we present a  model for \modallogicCC (on the left), and one possible encoding (on the right), via some $\amap$ and w.r.t. $n$.
 We define the translation of
 $\aformula$, written $\atranslation(\aformula)$,
 into \SAL. It is homomorphic for Boolean connectives and $\true$, $\atranslation(\avarprop)$ $\egdef$ $\HMDiamond{\ambientprop}\HMDiamond{\avarprop}\true$ and otherwise it is inductively defined:
 \begin{nscenter}
   $\atranslation(\Diamond \aformula)$ $\egdef$ $\HMDiamond{\ambientchild}\atranslation(\aformula)$; \\
 $\atranslation(\aformula \chopop \aformulabis) \egdef \big(\atranslation(\aformula) \land \HMDiamond{\ambientprop}_{\geq \fsize{\aformula}} \true \big)\ \ambientchop \
 \big( \atranslation(\aformulabis) \land \HMDiamond{\ambientprop}_{\geq \fsize{\aformulabis}} \true \big)$,
 \end{nscenter}
 where $\HMDiamond{n}_{\geq k} \aformula$ is
 the graded modality
 defined as $\true$ for $k = 0$, otherwise $(\HMDiamond{n} \aformula) \ \ambientchop \ \HMDiamond{n}_{\geq k-1} \aformula$.
 In the translation of $\chopop$, the model of \SAL has to be split in such a way that both subtrees contain enough $\ambientprop$ ambients to correctly answer to the formula $\HMDiamond{\ambientprop}\HMDiamond{\avarprop}\true$.
 It is easy to see that the size of $\atranslation(\aformula)$ is quadratic in $\fsize{\aformula}$.

 \begin{lemma}\label{lemma:correct-MLCC-SAL}
 Let $\amodel$ be a finite forest and $\aworld$ be one of its worlds.
 Let $\apropset \subseteq_{\fin} \varprop$ and $n \in \Nat^{>0}$.
 Let $T$ be an encoding of $\pair{\amodel}{\aworld}$ w.r.t $\apropset$ and $n$.
 For every formula $\aformula$ built over $\apropset$ with  $\fsize{\aformula} \leq n$,
 we have $\amodel,\aworld \models \aformula$  iff
 $T \models \atranslation(\aformula)$.
 \end{lemma}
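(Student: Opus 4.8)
The plan is to prove Lemma~\ref{lemma:correct-MLCC-SAL} by structural induction on $\aformula$, mirroring the structure of the proof of Lemma~\ref{lemma:correct-SAL-MLCC}. Throughout, I fix $\amodel = \triple{\auniverse}{\aaccessrelation}{\apropeval}$ and let $\amap$ be a witness of the encoding of $\pair{\amodel}{\aworld}$ in $T$ with respect to $\apropset$ and $n$. The key tools are the three preparatory lemmata: \Cref{lemma:tech-lemma-1-1} (an encoding always exists), \Cref{lemma:tech-lemma-1-2} (an encoding for parameter $n$ is also one for any smaller parameter, and $\amap(\aworld')$ encodes $\pair{\amodel}{\aworld'}$ for every $\aworld'$ reachable from $\aworld$), and \Cref{lemma:tech-lemma-1-3} (the semantic counterparts of $\chopop$, splitting the parameter $n$ into $n_1 + n_2$). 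The hypothesis $\fsize{\aformula} \leq n$ is exactly what guarantees we always have enough $\ambientprop$-copies available to answer queries of the form $\HMDiamond{\ambientprop}\HMDiamond{\avarprop}\true$, and it must be tracked carefully through every inductive case.

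For the base case $\aformula = \avarprop$: by definition $\amodel,\aworld \models \avarprop$ iff $\aworld \in \apropeval(\avarprop)$; by the encoding property, since $n \geq 1$, $T \equiv \amap(\aworld)$ has at least one $\ambientprop$-labelled child whose subtree contains an $\avarprop$-labelled child exactly when $\aworld \in \apropeval(\avarprop)$, which is precisely $T \models \HMDiamond{\ambientprop}\HMDiamond{\avarprop}\true = \atranslation(\avarprop)$. The Boolean cases $\land$ and $\lnot$ are routine (and, following the convention of the excerpt, can be omitted). For $\aformula = \Diamond \aformulabis$: if $\amodel,\aworld \models \Diamond\aformulabis$ there is $\aworld' \in \aaccessrelation(\aworld)$ with $\amodel,\aworld' \models \aformulabis$; by \Cref{lemma:tech-lemma-1-2}(2), $\amap(\aworld')$ encodes $\pair{\amodel}{\aworld'}$ w.r.t.\ $\apropset$ and $n$, and since $\fsize{\aformulabis} < \fsize{\Diamond\aformulabis} \leq n$ the induction hypothesis gives $\amap(\aworld') \models \atranslation(\aformulabis)$; since $\amap(\aworld)$ has a $\ambientchild$-child structurally equivalent to $\amap(\aworld')$, we get $T \models \HMDiamond{\ambientchild}\atranslation(\aformulabis) = \atranslation(\Diamond\aformulabis)$. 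The converse direction unwinds $\HMDiamond{\ambientchild}$ symmetrically, using that every $\ambientchild$-child of $\amap(\aworld)$ is $\amap(\aworld')$ for some $\aworld' \in \aaccessrelation(\aworld)$.

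The main obstacle is the case $\aformula = \aformulabis_1 \chopop \aformulabis_2$, where the mismatch between the $\chopop$-decomposition in \SAL (which decomposes the whole tree, including the $\ambientprop$-containers) and the $+_{\aworld}$-decomposition in \modallogicCC must be reconciled. Here I set $n_1 = \fsize{\aformulabis_1}$, $n_2 = \fsize{\aformulabis_2}$, noting $n_1 + n_2 \leq \fsize{\aformula} \leq n$. For the left-to-right direction: if $\amodel,\aworld \models \aformulabis_1 \chopop \aformulabis_2$, take $\amodel = \amodel_1 +_{\aworld} \amodel_2$ with $\amodel_i,\aworld \models \aformulabis_i$; by \Cref{lemma:tech-lemma-1-3}(1) there are $T_1, T_2$ with $T \equiv T_1 \ambientchop T_2$ such that $T_i$ encodes $\pair{\amodel_i}{\aworld}$ w.r.t.\ $\apropset$ and $n_i$; by the induction hypothesis $T_i \models \atranslation(\aformulabis_i)$, and since $T_i$ contains at least $n_i = \fsize{\aformulabis_i}$ copies of $\ambientprop[\cdots]$ at the top level, $T_i \models \HMDiamond{\ambientprop}_{\geq \fsize{\aformulabis_i}}\true$; hence $T \models (\atranslation(\aformulabis_1) \land \HMDiamond{\ambientprop}_{\geq \fsize{\aformulabis_1}}\true) \ambientchop (\atranslation(\aformulabis_2) \land \HMDiamond{\ambientprop}_{\geq \fsize{\aformulabis_2}}\true) = \atranslation(\aformula)$. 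For the right-to-left direction: if $T \models \atranslation(\aformulabis_1 \chopop \aformulabis_2)$, there are $T_1', T_2'$ with $T \equiv T_1' \ambientchop T_2'$, $T_i' \models \atranslation(\aformulabis_i) \land \HMDiamond{\ambientprop}_{\geq \fsize{\aformulabis_i}}\true$. The subtlety is that $T_1'$ and $T_2'$ need not each individually be a well-formed encoding — the $\ambientprop$-containers may have been split unevenly — so I first observe that the constraint $\HMDiamond{\ambientprop}_{\geq \fsize{\aformulabis_i}}\true$ forces each $T_i'$ to contain at least $n_i$ top-level $\ambientprop$-containers (each necessarily of the correct form, since every $\ambientprop$-container in $T$ has the shape $\ambientprop[\sum_{\avarprop \in \apropset, \aworld \in \apropeval(\avarprop)}\avarprop[\zero]]$), and the remaining $\ambientprop$-containers together with the $\ambientchild$-subtrees can be normalised so that $T \equiv \widehat{T_1} \ambientchop \widehat{T_2}$ with $\widehat{T_i}$ in the exact form required to apply \Cref{lemma:tech-lemma-1-3}(2), yielding $\amodel = \amodel_1 +_{\aworld} \amodel_2$ with $\widehat{T_i}$ encoding $\pair{\amodel_i}{\aworld}$ w.r.t.\ $\apropset$ and $n_i$; the induction hypothesis then gives $\amodel_i,\aworld \models \aformulabis_i$ and hence $\amodel,\aworld \models \aformula$. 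Matching up the two normal forms $T_1' \ambientchop T_2'$ and $\widehat{T_1}\ambientchop\widehat{T_2}$ — i.e.\ verifying that $T_i'\models\atranslation(\aformulabis_i)$ transfers to $\widehat{T_i}\models\atranslation(\aformulabis_i)$ after redistributing surplus $\ambientprop$-containers — is the delicate bookkeeping step, but it reduces to the observation that adding or removing top-level $\ambientprop$-containers (beyond the threshold $\fsize{\aformulabis_i}$) does not affect the satisfaction of any subformula of $\atranslation(\aformulabis_i)$, provable by a straightforward auxiliary induction.
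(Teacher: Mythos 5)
Your proof is correct in outline and follows essentially the same route as the paper: structural induction, with the same treatment of $\avarprop$ and $\Diamond$ via \Cref{lemma:tech-lemma-1-2}, and the $\chopop$ case handled through \Cref{lemma:tech-lemma-1-3}.

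Two points of friction are worth fixing. First, the parameter bookkeeping: both parts of \Cref{lemma:tech-lemma-1-3} are stated for $n_1 + n_2 = n$ exactly, so your choice $n_1 = \fsize{\aformulabis_1}$, $n_2 = \fsize{\aformulabis_2}$ (which only gives $n_1+n_2 \leq n$) does not match their hypotheses as written; instead pick any $n_1 \geq \fsize{\aformulabis_1}$ and $n_2 \geq \fsize{\aformulabis_2}$ with $n_1+n_2 = n$, which is possible because $\fsize{\aformulabis_1}+\fsize{\aformulabis_2}+1 = \fsize{\aformula} \leq n$ --- this is exactly what the paper does. Second, the ``normalisation and transfer'' detour in the right-to-left direction of the $\chopop$ case is unnecessary. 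The definition of an encoding only requires \emph{at least} $n_i$ canonical $\ambientprop$-containers at each node, so an uneven split of the containers is harmless: writing $T_i \equiv T_i' \ambientchop \sum_{j \in \interval{1}{n_i}} \ambientprop[\cdots]$ and applying \Cref{lemma:tech-lemma-1-3}(2) to the pair $(T_1',T_2')$ already yields $\amodel = \amodel_1 +_{\aworld} \amodel_2$ together with the statement that $T_i' \ambientchop \sum_{j \in \interval{1}{n_i}} \ambientprop[\cdots] \equiv T_i$ \emph{itself} encodes $\pair{\amodel_i}{\aworld}$ w.r.t.\ $\apropset$ and $n_i$. Hence the induction hypothesis applies directly to the trees $T_i$ for which you already have $T_i \models \atranslation(\aformulabis_i)$, and no redistribution of surplus containers --- and therefore no auxiliary invariance induction showing that satisfaction of $\atranslation(\aformulabis_i)$ is insensitive to extra top-level $\ambientprop$-containers --- is needed. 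That auxiliary claim is in fact true (the containers are all identical and the thresholds $\HMDiamond{\ambientprop}_{\geq \fsize{\cdot}}\true$ keep every side above its bound), but proving it carefully requires handling negation and nested $\chopop$ with essentially the same effort as the main induction, so dropping it simplifies the argument to the paper's.
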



\begin{proof}
Let $\amodel$ be a model $\triple{\auniverse}{\aaccessrelation}{\apropeval}$.
and $\amap$ be the witness that $T$ encodes $\pair{\amodel}{\aworld}$ with respect to $\apropset$ and $n$.
The proof is by structural induction on $\aformula$ and it is quite straightforward (cases for $\land$ and $\lnot$ omitted,
see the proof of Lemma~\ref{lemma:height-at-most-n}).
\begin{description}
\item[Base case: $\aformula = \avarprop$.]~
\begin{itemize}
\item $\amodel,\aworld \models \avarprop$
\item if and only if $\aworld \in \apropeval(\avarprop)$ (by definition of $\models$)
\item if and only if there are $T_1,T_2 \in \ambienttrees$ such that $\amap(\aworld) \equiv T \equiv \ambientprop[\avarprop[\zero] \ambientchop T_1] \ambientchop T_2$ (by def.\ of $\amap$)
\item if and only if $T \models \HMDiamond{\ambientprop}\HMDiamond{\avarprop}\true$ (by definition of $\models$)
\item if and only if $T \models \atranslation(\avarprop)$ (by definition of $\atranslation$).
\end{itemize}
\item[Induction case: $\aformula = \Diamond \aformulabis$.] For the left to right direction, suppose $\amodel,\aworld \models \Diamond \aformulabis$. Then,
\begin{enumerate}
\item\label{pc1:a1} there is $\aworld' \in \auniverse$ s.t.\ $\pair{\aworld}{\aworld'} \in \aaccessrelation$ and $\amodel,\aworld' \models \aformulabis$ (by def. of $\models$ and hyp. $\amodel,\aworld \models \Diamond \aformulabis$)
\item\label{pc1:a2} $\amap(\aworld')$ is an encoding of $\pair{\amodel}{\aworld'}$ w.r.t.\ $\apropset$ and $n$ (from $\pair{\aworld}{\aworld'} \in \aaccessrelation$ (see \ref{pc1:a1}), by Lemma~\ref{lemma:tech-lemma-1-2}.2)
\item\label{pc1:a3} $\amap(\aworld') \models \atranslation(\aformulabis)$ (from (\ref{pc1:a1}) and (\ref{pc1:a2}),
by the induction hypothesis)
\item\label{pc1:a4} there is $T' \in \ambienttrees$ s.t.\ $\amap(\aworld) \equiv T \equiv \ambientchild[\amap(\aworld')] \ambientchop T'$ (by def. of $\amap$, as $T$ encodes $\pair{\amodel}{\aworld}$)
\item\label{pc1:a5} $T \models \HMDiamond{\ambientchild}\atranslation(\aformulabis)$ (from (\ref{pc1:a3}) and (\ref{pc1:a4}), by definition of $\models$)
\item\label{pc1:a6} $T \models \atranslation(\Diamond \aformulabis)$ (from (\ref{pc1:a5}), by definition of $\atranslation$).
\end{enumerate}
For the right to left direction, suppose $T \models \atranslation(\Diamond \aformulabis)$. Then,
\begin{enumerate}
\item\label{pc1:b1} $T \models \HMDiamond{\ambientchild}\atranslation(\aformulabis)$ (by definition of $\atranslation$)
\item\label{pc1:b2} $T \equiv \ambientchild[T_1] \ambientchop T_2$ and $T_1 \models \atranslation(\aformulabis)$ for some $T_1,T_2 \in \ambienttrees$ (from (\ref{pc1:b1}), by def. of $\models$)
\item\label{pc1:b3} there is $\aworld' \in \auniverse$ s.t.\ $\pair{\aworld}{\aworld'} \in \aaccessrelation$ and $\amap(\aworld') \equiv T_1$ (from (\ref{pc1:b2}) and $\amap(\aworld) \equiv T$, by def. of $\amap$)
\item\label{pc1:b4} $T_1$ is an encoding of $\pair{\amodel}{\aworld'}$ w.r.t.\ $\apropset$ and $n$ (from (\ref{pc1:b3}), by Lemma~\ref{lemma:tech-lemma-1-2}.2)
\item\label{pc1:b5} $\amodel,\aworld' \models \aformulabis$ (from (\ref{pc1:b2}) and (\ref{pc1:b4}), by
the induction hypothesis)
\item\label{pc1:b6} $\amodel,\aworld \models \Diamond(\aformulabis)$ (from $\pair{\aworld}{\aworld'} \in \aaccessrelation$ (see \ref{pc1:b3}) and (\ref{pc1:b5}), by definition of $\models$).
\end{enumerate}
\item[Induction case: $\aformula = \aformulabis \chopop \aformulater$]
For the left to right direction, suppose $\amodel,\aworld \models \aformulabis \chopop \aformulater$. Then,
\begin{enumerate}
\item\label{pc1:c1} $\amodel_1,\aworld \models \aformulabis$ and $\amodel_2,\aworld \models \aformulater$ for some $\amodel_1$ and $\amodel_2$ such that $\amodel = \amodel_1 +_{\aworld} \amodel_2$
(by def of $\models$ and hyp. $\amodel,\aworld \models \aformulabis \chopop \aformulater$)
\item\label{pc1:c2} There are $n_1, n_2 \in \Nat$ s.t.\ $n_1 + n_2 = n$, $n_1 \geq \fsize{\aformulabis}$ and $n_2 \geq \fsize{\aformulater}$ (as $n \geq \fsize{\aformula} = \fsize{\aformulabis} + \fsize{\aformulater} + 1$ by hypothesis)
\item\label{pc1:c3} there are $T_1$ and $T_2$ such that $T \equiv T_1 \ambientchop T_2$,
$T_1$ is an encoding of $\pair{\amodel_1}{\aworld}$ with respect to $\apropset$ and $n_1$, and $T_2$ is an encoding of
$\pair{\amodel_2}{\aworld}$ with respect to $\apropset$ and $n_2$ (from (\ref{pc1:c1}), (\ref{pc1:c2}) and since $T$ is an encoding of $\pair{\amodel}{\aworld}$, from Lemma~\ref{lemma:tech-lemma-1-3}.1)
\item\label{pc1:c4}
$T_1 \models \atranslation(\aformulabis)$ and $T_2 \models \atranslation(\aformulater)$ (from (\ref{pc1:c1}) and (\ref{pc1:c3}),
by the induction hypothesis)
\item\label{pc1:c5}
$T_1 \models \HMDiamond{\ambientprop}_{\geq \fsize{\aformulabis}} \true$ and
$T_2 \models \HMDiamond{\ambientprop}_{\geq \fsize{\aformulater}} \true$ (from (\ref{pc1:c3}), by the definition of witness of an encoding, recalling that $n_1 \geq \fsize{\aformulabis}$ and $n_2 \geq \fsize{\aformulater}$)
\item\label{pc1:c6} $T \models \atranslation(\aformulabis \chopop \aformulater)$ (from $T \equiv T_1 \ambientchop T_2$ (see \ref{pc1:c3}), (\ref{pc1:c4}) and (\ref{pc1:c5}), by def. of $\models$ and $\atranslation$).
\end{enumerate}
For the right to left direction, suppose $T \models \atranslation(\aformulabis \chopop \aformulater)$.
\begin{enumerate}
\item\label{pc1:d1} There are two trees $T_1$ and $T_2$ such that
$T \equiv T_1 \ambientchop T_2$, \ $T_1 \models \atranslation(\aformulabis) \land \HMDiamond{\ambientprop}_{\geq \fsize{\aformulabis}} \true$ and
$T_2 \models \atranslation(\aformulater) \land \HMDiamond{\ambientprop}_{\geq \fsize{\aformulater}} \true$ (by definition of $\atranslation$ and $\models$)
\item\label{pc1:d2}
$\amap(\aworld) \equiv T \equiv \displaystyle \Big(\sum_{i \in \interval{1}{m}} \ambientprop[ \sum_{\mathclap{\substack{\avarprop \in \apropset\\ \aworld \in \apropeval(\avarprop)}}} \avarprop[\zero] ]\Big) \ \ambientchop \
\sum_{\mathclap{i \in \interval{1}{k}}} \ambientchild[\amap(\aworld_i)]$ for some $m \geq n$\\ (by hypothesis $T$ encodes $\pair{\amodel}{\aworld}$)
\item\label{pc1:d3}
there are $n_1, n_2 \in \Nat$, $T_1'$ and $T_2'$ so that $n = n_1 + n_2$, $n_1 \geq \fsize{\aformulabis}$, $n_2 \geq \fsize{\aformulater}$ and
\begin{nscenter}
$T_1 \equiv T_1' \mid \displaystyle \sum_{i \in \interval{1}{n_1}} \ambientprop[ \sum_{\mathclap{\substack{\avarprop \in \apropset\\ \aworld \in \apropeval(\avarprop)}}} \avarprop[\zero] ]$
\qquad
$T_2 \equiv T_2' \mid \displaystyle \sum_{i \in \interval{1}{n_2}} \ambientprop[ \sum_{\mathclap{\substack{\avarprop \in \apropset\\ \aworld \in \apropeval(\avarprop)}}} \avarprop[\zero] ]$
\end{nscenter}
(from (\ref{pc1:d1}) and (\ref{pc1:d2}) as otherwise $T_1 \not\models \HMDiamond{\ambientprop}_{\geq \fsize{\aformulabis}} \true$ or $T_2 \not\models \HMDiamond{\ambientprop}_{\geq \fsize{\aformulater}} \true$)
\item\label{pc1:d4}
$T \equiv T_1' \mid T_2' \mid \displaystyle \sum_{i \in \interval{1}{n}} \ambientprop[ \sum_{\mathclap{\substack{\avarprop \in \apropset\\ \aworld \in \apropeval(\avarprop)}}} \avarprop[\zero] ]$

(from $T \equiv T_1 \ambientchop T_2$ (see \ref{pc1:d1}) and (\ref{pc1:d3}) by the definition of $\equiv$)
\item\label{pc1:d5}
there are $\amodel_1$ and $\amodel_2$ s.t.\ $\amodel = \amodel_1 +_{\aworld} \amodel_2$, $T_1$ is an encoding of $\pair{\amodel_1}{\aworld}$ w.r.t.\ $\apropset$ and $n_1$, and $T_2$ is an encoding of $\pair{\amodel_1}{\aworld}$ w.r.t.\ $\apropset$ and $n_2$
(from (\ref{pc1:d3}) and (\ref{pc1:d4}), by Lemma~\ref{lemma:tech-lemma-1-3}.2)
\item\label{pc1:d6} $\amodel_1,\aworld \models \aformulabis$ and $\amodel_2,\aworld \models \aformulater$ (from (\ref{pc1:d1}) and \ref{pc1:d5}, by the induction hypothesis)
\item\label{pc1:d7} $\amodel,\aworld \models \aformulabis \chopop \aformulater$ (from $\amodel = \amodel_1 +_{\aworld} \amodel_2$ (see \ref{pc1:d5}) and (\ref{pc1:d6}), by definition of $\models$) \qedhere
\end{enumerate}
\end{description}
\end{proof}

 The subset of $\ambienttrees$  encoding pointed forests can be properly approximated, which completes our reduction.

 \begin{theorem}
 \label{th:mlcc-sal-equisat}
 Let
 $\aformula$ be  in \modallogicCC  built over $\apropset$.
 $\aformula$ is satisfiable iff  $\aformulabis$ below is satisfiable:
 \begin{nscenter}
 $
 \psi \egdef
 \atranslation(\aformula)
 \land\displaystyle \bigwedge_{\mathclap{i \in \interval{0}{\fsize{\aformula}}}} \HMBox{\ambientchild}^i
 \Big(\HMDiamond{\ambientprop}_{\geq \fsize{\aformula}}\true
 \land
 \bigwedge_{\mathclap{\avarprop \in \apropset}}
 \big(
 \HMDiamond{\ambientprop}\HMDiamond{\avarprop}\true \implies \HMBox{\ambientprop}\HMDiamond{\avarprop} \true \big)
 \land
 \HMBox{\ambientprop}\sum_{\avarprop \in \apropset} (\avarprop[\zero] \lor \zero) \Big).
 $
 \end{nscenter}
 \end{theorem}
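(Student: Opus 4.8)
The plan is to establish the biconditional in the statement of \Cref{th:mlcc-sal-equisat} by relying on \Cref{lemma:correct-MLCC-SAL} together with \Cref{lemma:tech-lemma-1-1}. Recall that \Cref{lemma:correct-MLCC-SAL} tells us that whenever $T$ is an encoding of a pointed forest $\pair{\amodel}{\aworld}$ w.r.t.\ $\apropset$ and $n \geq \fsize{\aformula}$, then $\amodel,\aworld \models \aformula$ iff $T \models \atranslation(\aformula)$. So the whole task reduces to showing that the \emph{extra} conjunct $\chi$ of $\psi$ (i.e.\ everything conjoined with $\atranslation(\aformula)$) precisely carves out, among all trees in $\ambienttrees$, those that ``look like'' encodings of pointed forests w.r.t.\ $\apropset$ and $\fsize{\aformula}$ — at least down to depth $\fsize{\aformula}$, which is all that matters for satisfaction of $\atranslation(\aformula)$.

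First I would prove the left-to-right direction, which is the easy one. Assume $\amodel,\aworld \models \aformula$. By \Cref{lemma:tech-lemma-1-1}, there is a tree $T \in \ambienttrees$ that encodes $\pair{\amodel}{\aworld}$ w.r.t.\ $\apropset$ and $n \egdef \fsize{\aformula}$ (we can also take $n$ larger, but $\fsize{\aformula}$ suffices). By \Cref{lemma:correct-MLCC-SAL}, $T \models \atranslation(\aformula)$. It then remains to check that $T \models \chi$, i.e.\ that for every $i \in \interval{0}{\fsize{\aformula}}$ and every subtree reached by $i$ nested $\ambientchild$-steps, (i) there are at least $\fsize{\aformula}$ many $\ambientprop$-children, (ii) the $\ambientprop$-children all agree on which $\avarprop$-subambients they contain, and (iii) every $\ambientprop$-child contains only subambients of the form $\avarprop[\zero]$ for $\avarprop \in \apropset$. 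All three facts follow immediately from unfolding the definition of ``encoding'': by definition of the witness map $\amap$, the image of any world $\aworld'$ reachable from $\aworld$ in the forest is structurally equivalent to $\big(\sum_{i \in \interval{1}{m}} \ambientprop[\sum_{\avarprop \in \apropset, \aworld' \in \apropeval(\avarprop)} \avarprop[\zero]]\big) \ambientchop \sum_{\aworld'' \in \aaccessrelation(\aworld')} \ambientchild[\amap(\aworld'')]$ with $m \geq n$, and the $\ambientchild$-ambients are exactly the ones reached by $\ambientchild$-steps. A short induction on $i$ does the job. Hence $T \models \psi$.

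The converse direction is the main obstacle, and it is where the bound $\interval{0}{\fsize{\aformula}}$ becomes essential. Suppose $T \models \psi$, so $T \models \atranslation(\aformula)$ and $T \models \chi$. The problem is that $T$ need not be an encoding of any pointed forest: it may have arbitrary junk (extra ambients, deeper nesting, ambients with wrong names) sitting below depth $\fsize{\aformula}$, and it may even have spurious structure at the allowed depths that $\chi$ does not rule out (e.g.\ an $\ambientchild$-ambient whose content is not itself a well-formed encoding at depth $> \fsize{\aformula}$). The remedy is the same one used for \Cref{lemma:height-at-most-n} and in the proof of \Cref{theorem:SAL-equisat-MLCC}: since $\md{\atranslation(\aformula)} \leq \fsize{\atranslation(\aformula)}$ and, more precisely, the $\HMDiamond{\ambientchild}$-depth of $\atranslation(\aformula)$ is bounded by $\fsize{\aformula}$, the satisfaction of $\atranslation(\aformula)$ in $T$ only depends on the portion of $T$ reachable by at most $\fsize{\aformula}$ nested $\ambientchild$-steps (and the $\ambientprop$/$\avarprop$-ambients hanging off those nodes). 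I would therefore \emph{prune} $T$: replace it by a tree $T'$ obtained by truncating every branch at $\ambientchild$-depth $\fsize{\aformula}$ (keeping exactly the $\ambientprop[\avarprop[\zero]\ambientchop\cdots]$ leaves present at those levels and discarding any further $\ambientchild$-subambients) and deleting any ambient whose name is outside $\apropset \cup \{\ambientchild,\ambientprop\}$. Using $T \models \chi$ — in particular clauses (i)-(iii) applied at all depths $i \leq \fsize{\aformula}$ — one checks that $T'$ is genuinely an encoding of some pointed forest $\pair{\amodel}{\aworld}$ w.r.t.\ $\apropset$ and $\fsize{\aformula}$: the $\ambientchild$-ambients at depth $i < \fsize{\aformula}$ give the children relation, clause (ii) makes the propositional valuation well-defined per node, clause (iii) guarantees that the $\ambientprop$-containers hold only $\avarprop[\zero]$ subambients, and clause (i) supplies the required $m \geq \fsize{\aformula}$ copies. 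Moreover $T' \models \atranslation(\aformula)$ because truncation below $\ambientchild$-depth $\fsize{\aformula}$ does not affect satisfaction of a formula of $\ambientchild$-depth at most $\fsize{\aformula}$ (this is the ambient-logic analogue of \Cref{lemma:height-at-most-n}, which would need to be stated/verified as a small auxiliary claim about $\atranslation$). Then \Cref{lemma:correct-MLCC-SAL} applied to $T'$ and $\pair{\amodel}{\aworld}$ yields $\amodel,\aworld \models \aformula$, so $\aformula$ is satisfiable.

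The one genuinely new technical point, beyond routine unfolding, is this pruning/truncation argument and the accompanying depth-boundedness lemma for $\atranslation$; everything else is bookkeeping against the definition of ``encoding'' and the already-proved \Cref{lemma:correct-MLCC-SAL,lemma:tech-lemma-1-1}. Once both directions are in place, \Cref{th:mlcc-sal-equisat} follows, and combining it with \Cref{theorem:SAL-equisat-MLCC} and \Cref{cor:aexpol} gives the announced \aexppol-completeness of \satproblem{\SAL} (and \aexppol-hardness of \satproblem{\fullSAL}), since both translations are clearly computable in polynomial time.
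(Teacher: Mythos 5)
Your proposal is correct and follows essentially the same route as the paper: the easy direction is \Cref{lemma:tech-lemma-1-1} plus \Cref{lemma:correct-MLCC-SAL} plus unfolding the definition of encoding, and the converse is handled exactly as the paper does it, by restricting $T$ to the names in $\apropset\cup\{\ambientchild,\ambientprop\}$, truncating at bounded $\ambientchild$-depth (the \SAL analogue of \Cref{lemma:height-at-most-n}, which the paper also invokes), and using the extra conjunct to repair the boundary level into a genuine encoding before applying \Cref{lemma:correct-MLCC-SAL} again. The only cosmetic difference is that the paper performs the pruning in separate stages (name restriction $T|_{\apropset'}$, depth truncation, then replacement of the deepest subtrees by their $\ambientprop$-containers), whereas you fold these into one truncation step, which amounts to the same argument.
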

 Now, we are  ready to provide the proof of the correctness of the
reduction from \satproblem{\modallogicCC} to  \satproblem{\SAL}.

\begin{proof}
Let $\aformula$ be in \modallogicCC built over propositional variables in $\apropset\subseteq_{\fin} \varprop$.
For the left to the right direction, suppose that $\aformula$ is satisfiable. There exist a finite forest $\amodel$ and a world
$\aworld$  such that $\amodel,\aworld\models\aformula$. Let $T$ be an encoding of $(\amodel,\aworld)$ via $\amap$, with respect to $\apropset$ and $\size(\aformula)$.
Given a tree $T$ congruent to $\aname[T'] \ambientchop T''$, by an \defstyle{$\aname$-successor of $T$}, we mean a tree $T'$.

    First, by \Cref{lemma:correct-MLCC-SAL} we know that $T\models\atranslation(\aformula)$. For the second conjunct, suppose $\amap(\aworld)$ has at least one $\ambientchild$-successor, otherwise it becomes trivially true. Take some
child $\amap(\aworld')$ reachable from $\amap(\aworld)$ in an arbitrary number of $\ambientchild$ steps.
So, by the definition of $\amap$, there is  at least $\size(\aformula)$ $\ambientprop$-successors.
On the other hand, suppose $\HMDiamond{\ambientprop}\HMDiamond{\avarprop}\true$ is true at $\amap(\aworld')$, for $\avarprop\in\apropset$. Again,
since $\amap(\aworld')$ is an encoding of $(\amodel,\aworld')$, for each $\ambientprop$-successor of $\amap(\aworld')$,
    there exists a $\avarprop$-successor. Finally, to check $\HMBox{\ambientprop}\sum_{\avarprop \in \apropset} (\avarprop[\zero] \lor \zero)$, notice that each $\ambientprop$-successor of $\amap(\aworld')$ is either the ambient $\zero$ (in case the valuation of $\aworld'$ is the empty set), or there are successors via some $\avarprop \in \apropset$, and these successors are the ambient $\zero$.

    For the other direction, suppose $T\models\aformulabis$, for some $T\in\ambienttrees$.
    Let
    $\apropset'=\apropset\cup\{\ambientchild,\ambientprop\}$ and  $T|_{\apropset'}$ be the tree obtained from $T$ by
    replacing with $\zero$ every occurrence of $\aname[T']$ s.t.\ $n \not \in \apropset'$.
    One can show that $T|_{\apropset'}\models \atranslation(\aformula)$.
\cut{
Let
    $\apropset'=\apropset\cup\{\ambientchild,\ambientprop\}$, by \Cref{lemma:ambient-restricted},
    $T|_{\apropset'}\models \atranslation(\aformula)$.
}
    Let us extend adequately the notion of modal degree to formulae in \SAL, for example by counting
    the maximal number of imbricated formulae of the form $\aname[\cdot]$.
    Notice that a property similar to \Cref{lemma:height-at-most-n} (Appendix~\ref{appendix-preliminary-proofs})
    holds for \SAL, so we can remove all the
    parts of the model which are not reachable beyond $\md{\aformulabis}$ steps.
    Hence, w.l.o.g., we can assume that $T$ such that  $T\models\aformulabis$ has tree depth at most $\md{\aformulabis}$  with
    $\md{\atranslation(\aformula)} \leq \md{\aformulabis} \leq \md{\aformula} + 2$.
    As seen earlier, $T|_{\apropset'} \models \atranslation(\aformula)$.
    What about the satisfaction
    of
    $$
   \aformulabis' \egdef \displaystyle \bigwedge_{\mathclap{i \in \interval{0}{\fsize{\aformula}}}} \HMBox{\ambientchild}^i
\Big(\HMDiamond{\ambientprop}_{\geq \fsize{\aformula}}\true
\land
\bigwedge_{\mathclap{\avarprop \in \apropset}}
\big(
\HMDiamond{\ambientprop}\HMDiamond{\avarprop}\true \implies \HMBox{\ambientprop}\HMDiamond{\avarprop} \true \big)
\land
\HMBox{\ambientprop}\sum_{\avarprop \in \apropset} (\avarprop[\zero] \lor \zero) \Big)?
$$
    It is easy to show that  $T|_{\apropset'} \models \aformulabis'$, as transforming $T$ to  $T|_{\apropset'}$ does
    not remove any edge labelled by a name in $\set{\ambientprop} \cup \apropset$, which is the set of names that may
    occur in $\aformulabis'$ with a ``$\HMDiamond{\cdot}$ polarity''.
    Note that  $T|_{\apropset'}$ is almost the encoding of some pointed forest, except that there may exist
    a subtree reachable from the root with a path $\ambientchild^d$  with $d = \md{\aformula}$
    that does not satisfy the conditions for being part of an encoding.
    Thanks to the satisfaction of $\aformulabis'$, we know that the subtree is congruent to
    a tree of the form
    $$T' \ambientchop \ambientprop[T_1 \ambientchop \cdots \ambientchop T_n] \ambientchop \cdots \ambientchop
      \ambientprop[T_1 \ambientchop \cdots \ambientchop T_n]
    $$
     where
    $\apropset = \set{\avarprop_1, \ldots, \avarprop_n}$, and each $T_i$ is either $\zero$ or $\avarprop_i[\zero]$.
    Moreover, $T'$ is not congruent to a tree of the form $\ambientprop[T''] \ambientchop T^{\star}$.
    In  $T|_{\apropset'}$, we replace that occurrence of the subtree by
    $\ambientprop[T_1 \ambientchop \cdots \ambientchop T_n] \ambientchop \cdots \ambientchop
      \ambientprop[T_1 \ambientchop \cdots \ambientchop T_n]$.
    By performing all the necessary replacements, we obtain a tree $T'$ that is
     the encoding of some pointed forest
    $(\amodel,\aworld)$ with respect to $\size(\aformula)$ and $\apropset'$.
    Note also that $\aformulabis'$ is satisfied by $T'$ because we took the precaution to keep the subtrees
    of the form    $\ambientprop[T_1 \ambientchop \cdots \ambientchop T_n] \ambientchop \cdots \ambientchop
      \ambientprop[T_1 \ambientchop \cdots \ambientchop T_n]$.
    Similarly, one can show by structural induction that $T' \models \atranslation(\aformula)$, using essentially
    that in the formula tree of $\atranslation(\aformula)$, there is no branch with strictly more than $d+1$ $\HMDiamond{\ambientchild}$ nodes
    and the truncations to define $T'$ preserve the number of $\ambientprop$-successors.
    By Lemma~\ref{lemma:correct-MLCC-SAL}, we conclude that  $\model,\aworld \models \aformula$.
\end{proof}

















\subsection{Proofs of \Cref{section:MSL} (Modal Separation Logic)}
Let $\amodel = \triple{\worlds}{\arelation}{\avaluation}$ be a finite forest and $\aworld \in \worlds$.
Let $\amodel' = \triple{\worlds'}{\arelation'}{\avaluation'}$ and $\aworld' \in \worlds'$ be a model of $\MSL{}{\separate,\Diamondminus}$.
Given $n \in \Nat$ and $r \in \AP$, we say that $\pair{\amodel'}{\aworld'}$ is an $\pair{n}{r}$-\emph{encoding}
of $\pair{\amodel}{\aworld}$ if and only if
there is a bijection $\amap : (\arelation|_{\aworld}^{\leq n})^*(\aworld) \to ({\arelation'}^{-1}|_{\aworld'}^{\leq n})^*(\aworld')$ such that
\begin{enumerate}
  \item $\amap(\aworld) = \aworld'$ and for every $\pair{\aworld_1}{\aworld_2} \in \arelation|_{\aworld}^{\leq n}$,
  $(\amap(\aworld_2),\amap(\aworld_1)) \in {\arelation'}^{-1}|_{\aworld'}^{\leq n}$;
  \item for every $\avarprop \in \AP \setminus \{r\}$ and $\aworld_1 \in (\arelation|_{\aworld}^{\leq n})^*(\aworld)$, $\aworld_1 \in \avaluation(\avarprop)$ $\iff$ $\amap(\aworld_1) \in \avaluation'(\avarprop)$;
  \item $\avaluation'(r) \cap (\arelation|_{\aworld}^{\leq n})^*(\aworld) = \{\aworld\}$.
\end{enumerate}
Recall that $(\arelation|_{\aworld}^{\leq n})^*(\aworld)$ corresponds to the set of worlds appearing in $\arelation|_{\aworld}^{\leq n}$.
Notice that then, in particular $\amap$ describes a tree isomorphism between the
trees defined from $\arelation|_{\aworld}^{\leq n}$ and ${\arelation'}^{-1}|_{\aworld'}^{\leq n}$.
Moreover, for every $n \geq 2$, if $\pair{\amodel'}{\aworld'}$ is a $n$-encoding
of $\pair{\amodel}{\aworld}$ then $\pair{\amodel'}{\aworld'}$ is also a $(n-1)$-encoding
of $\pair{\amodel}{\aworld}$.

\begin{lemma}
\label{A:lemma:MSL-induction}
Let $\aformula$ in $\modallogicSC$. Let $\amodel = \triple{\worlds}{\arelation}{\avaluation}$ be a finite forest and $\aworld \in \worlds$.
Let $\amodel' = \triple{\worlds'}{\arelation'}{\avaluation'}$ and $\aworld' \in \worlds'$ be a model of $\MSL{}{\separate,\Diamondminus}$ such that
$\pair{\amodel'}{\aworld'}$ is a $\pair{n}{r}$-encoding
of $\pair{\amodel}{\aworld}$, for some $n \geq \md{\aformula}$ and $r \in \AP$ not appearing in $\aformula$.
Then,
\begin{nscenter}
  $\amodel, \aworld \models \aformula \iff \amodel',\aworld' \models \aformula[\Diamond \gets \Diamond^{-1}]$.
\end{nscenter}
\end{lemma}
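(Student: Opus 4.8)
The statement is an invariance/transfer lemma relating $\modallogicSC$ satisfaction on finite forests to $\MSL{}{\separate,\Diamondminus}$ satisfaction on a functional, locally acyclic model obtained by ``reversing the arrows'' and padding with a fresh propositional marker $r$. The plan is to prove it by structural induction on $\aformula$, exactly paralleling the structure of the proof of \Cref{lemma:height-at-most-n}, with the bijection $\amap$ witnessing the $\pair{n}{r}$-encoding doing all the bookkeeping. The key point throughout is that $\amap$ is a tree isomorphism between the depth-$n$ truncation $\arelation|_{\aworld}^{\leq n}$ of $\amodel$ at $\aworld$ and the depth-$n$ truncation ${\arelation'}^{-1}|_{\aworld'}^{\leq n}$ of the reversed relation of $\amodel'$ at $\aworld'$, and that it respects all propositional variables except $r$, which does not occur in $\aformula$.

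First I would handle the atomic and Boolean cases. For $\aformula = \avarprop$, since $\avarprop \neq r$ by hypothesis, condition~(2) of the encoding gives $\aworld \in \avaluation(\avarprop)$ iff $\amap(\aworld) = \aworld' \in \avaluation'(\avarprop)$, which is exactly what is needed. The cases $\lnot$ and $\land$ are immediate from the induction hypothesis, noting that the modal degree (and hence the required depth $n$) only decreases or stays the same. For $\aformula = \Diamond \aformulabis$: by \Cref{lemma:height-at-most-n}, $\amodel,\aworld \models \Diamond\aformulabis$ iff there is $\aworld_1 \in \arelation(\aworld)$ (equivalently, $\pair{\aworld}{\aworld_1} \in \arelation|_{\aworld}^{\leq n}$ since $n \geq 1$) with $\amodel,\aworld_1 \models \aformulabis$. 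Condition~(1) then gives $(\amap(\aworld_1),\aworld') \in {\arelation'}^{-1}$, i.e.\ $\aworld'$ is a $\Diamond^{-1}$-successor of $\amap(\aworld_1)$ in $\amodel'$; equivalently $\amodel',\aworld' \models \Diamond^{-1}(\dots)$ with the witness being $\amap(\aworld_1)$. I would then observe that $\pair{\amodel'}{\amap(\aworld_1)}$ is a $\pair{n-1}{r}$-encoding of $\pair{\amodel}{\aworld_1}$ --- the restriction of $\amap$ to the subtree rooted at $\aworld_1$ is the required bijection, using that $\arelation|_{\aworld_1}^{\leq n-1}$ is exactly the part of $\arelation|_{\aworld}^{\leq n}$ below $\aworld_1$ (as in the proof of \Cref{lemma:height-at-most-n}), and condition~(3) still holds because $r$ holds only at $\aworld$ within the truncation, hence nowhere in the strict subtree. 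Since $\md{\aformulabis} \leq n-1$, the induction hypothesis applies and finishes this case; the converse direction is symmetric, using surjectivity of $\amap$ onto ${\arelation'}^{-1}|_{\aworld'}^{\leq n}$.

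The genuinely delicate case is $\aformula = \aformulabis \separate \aformulater$, because the separating conjunction splits the child-relation and one must transport this split across $\amap$. By \Cref{lemma:height-at-most-n} I may assume all models involved have depth $\leq n$. If $\amodel = \amodel_1 + \amodel_2$ with $\amodel_i,\aworld \models$ the respective conjunct, I would push the partition $\arelation = \arelation_1 \uplus \arelation_2$ through $\amap$ to get a partition $\arelation' = \arelation_1' \uplus \arelation_2'$ (defining $\arelation_i'$ on the truncation via $\amap$ and dumping everything outside the truncation into, say, $\arelation_1'$, being careful to preserve functionality of $\arelation'$ --- here I use that $\amap$ is a bijection on worlds so each $\arelation'^{-1}$-edge is assigned to exactly one side). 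Then $\pair{\triple{\worlds'}{\arelation_i'}{\avaluation'}}{\aworld'}$ is a $\pair{n}{r}$-encoding of $\amodel_i,\aworld$ --- conditions~(2),(3) are inherited unchanged since the valuation and world set do not move, and condition~(1) holds by construction --- and the induction hypothesis applies (same modal degree, still $\leq n$). The converse requires the dual: given $\arelation' = \arelation_1' \uplus \arelation_2'$, restrict each $\arelation_i'$ to the truncation, pull back along $\amap^{-1}$, and extend arbitrarily outside; I then invoke the right-to-left direction of the encoding as in the $\separate$ case of \Cref{lemma:height-at-most-n} to repair the parts beyond depth $n$. The main obstacle I anticipate is getting the functionality-of-$\arelation'$ invariant exactly right when reassembling the two halves and when handling edges outside the depth-$n$ truncation --- the forest structure of $\amodel$ guarantees $\arelation^{-1}$ is functional, but one must check the constructed $\arelation_i'$ inherit this, which is where the bijectivity of $\amap$ and the precise definition of the truncations $\arelation|_{\aworld}^{\leq n}$, ${\arelation'}^{-1}|_{\aworld'}^{\leq n}$ are used. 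This lemma then feeds directly into the corollary that $\satproblem{\MSL{}{\separate,\Diamondminus}}$ is \tower-complete, via the $\mathtt{locacycl}$ gadget and \Cref{theorem:tower-completeness-SC}.
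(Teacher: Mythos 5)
Your proposal is correct and follows essentially the same route as the paper's proof: a structural induction on $\aformula$ in which the bijection $\amap$ handles the atomic and Boolean cases, the $\Diamond$ case passes to the $\pair{n-1}{r}$-encoding $\pair{\amodel'}{\amap(\aworld_1)}$ of $\pair{\amodel}{\aworld_1}$, and the $\separate$ case transfers the partition of $\arelation$ through $\amap$ to a partition of $\arelation'$ whose halves are again $\pair{n}{r}$-encodings. Your extra attention to functionality of the pieces of $\arelation'$ and to edges outside the depth-$n$ truncation only spells out details the paper leaves implicit ("this partitioning can always be done").
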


\begin{proof}
  The result is proven with a rather straightforward structural induction on $\aformula$, by using the property of $\amap$, the bijection witnessing that
  $\pair{\amodel'}{\aworld'}$ is an $\pair{n}{r}$-encoding
  of $\pair{\amodel}{\aworld}$.
  The base case for atomic propositions, as well as the inductive cases for Boolean connectives are trivial.
  For the inductive cases $\Diamond \aformulabis$ and $\aformula_1 \separate \aformula_2$, we have
  \begin{description}
    \item[($\aformula = \Diamond \aformulabis$).] ($\Rightarrow$) If $\amodel,\aworld \models \Diamond \aformulabis$ then there is $\aworld_1 \in \arelation(\aworld)$
    such that $\amodel,\aworld_1 \models \aformulabis$.
    It is easy to see that $\pair{\amodel'}{\amap(\aworld_1)}$ is a $\pair{n-1}{r}$-encoding
    of $\pair{\amodel}{\aworld_1}$.
    By the induction hypothesis, $\amodel',\amap(\aworld_1) \models \aformulabis[\Diamond \gets \Diamond^{-1}]$. Moreover, by definition of $\amap$, $(\amap(\aworld_1),\aworld') \in {\arelation'}^{-1}$.
    Hence, $\amodel',\aworld' \models \Diamond^{-1}\aformulabis[\Diamond \gets \Diamond^{-1}]$. The other direction is analogous.
    \item[($\aformula = \aformula_1 \separate \aformula_2$)]
    ($\Rightarrow$) If $\amodel,\aworld \models \aformula_1 \separate \aformula_2$
    then there are $\amodel_1 = \triple{\worlds}{\arelation_1}{\avaluation}$ and $\amodel_2 = \triple{\worlds}{\arelation_2}{\avaluation}$ s.t. $\amodel = \amodel_1 + \amodel_2$, $\amodel_1,\aworld \models \aformula_1$ and $\amodel_2,\aworld \models \aformula_2$.
    We partition $\arelation'$ into $\arelation_1'$ and $\arelation_2'$ (hence, $\amodel' = \triple{\worlds'}{\arelation_1'}{\avaluation'} + \triple{\worlds'}{\arelation_2'}{\avaluation'}$) so that
    \begin{itemize}
      \item for every $\pair{\aworld_1}{\aworld_2} \in \arelation_1|_{\aworld}^{\leq n}$ ${\arelation_1'}^{-1}$, $\pair{\aworld_2}{\aworld_1} \in \arelation_1'$;
      \item for every $\pair{\aworld_1}{\aworld_2} \in \arelation_2|_{\aworld}^{\leq n}$ ${\arelation_1'}^{-1}$, $\pair{\aworld_2}{\aworld_1} \in \arelation_2'$.
    \end{itemize}
    From the first property of $\amap$, this partitioning can always be done, and moreover $\pair{\triple{\worlds'}{\arelation_1'}{\avaluation'}}{\aworld'}$ can be shown to be a $\pair{n}{r}$-encoding of $\pair{\amodel_1}{\aworld}$, whereas
    $\pair{\triple{\worlds'}{\arelation_2'}{\avaluation'}}{\aworld'}$ is a $\pair{n}{r}$-encoding of $\pair{\amodel_2}{\aworld}$.
    By the induction hypothesis, $\triple{\worlds'}{\arelation_1'}{\avaluation'},\aworld' \models \aformula_1[\Diamond \gets \Diamond^{-1}]$
    and $\triple{\worlds'}{\arelation_2'}{\avaluation'},\aworld' \models \aformula_2[\Diamond \gets \Diamond^{-1}]$.
    Thus, $\amodel',\aworld' \models (\aformula_1 \separate \aformula_2)[\Diamond \gets \Diamond^{-1}]$.
    The other direction is analogous.
    \qedhere
  \end{description}
\end{proof}

\begin{lemma}
  Let $\aformula$ in $\modallogicSC$. Let $\mathtt{locacycl}$ be the $\MSL{}{\separate,\Diamondminus}$ formula $r \land\!\bigwedge_{i \in \interval{1}{\md{\aformula}}}(\Box^{-1})^i \lnot r$, where $r$ is an atomic proposition no appearing in $\aformula$.
  $\aformula$ is satisfiable w.r.t.\ $\modallogicSC$ if and only if $\aformula[\Diamond \gets \Diamond^{-1}] \land \mathtt{locacycl}$ is satisfiable w.r.t.\ $\MSL{}{\separate,\Diamondminus}$.
\end{lemma}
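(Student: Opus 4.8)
The statement is an equisatisfiability claim, so the natural strategy is to prove the two directions separately, and the crucial tool is \Cref{A:lemma:MSL-induction}, which transfers satisfaction of $\aformula$ between a finite forest $\pair{\amodel}{\aworld}$ and a $\pair{n}{r}$-encoding of it (for $n \geq \md{\aformula}$ and $r$ fresh), as a formula of $\MSL{}{\separate,\Diamondminus}$ with $\Diamond$ replaced by $\Diamondminus$. So the plan is: (i) observe that the role of $\mathtt{locacycl}$ is precisely to pin down, among the $\MSL{}{\separate,\Diamondminus}$ models, those whose relevant neighbourhood of the evaluation point looks like a tree of the right depth; (ii) for the left-to-right direction, start from a tree model of $\aformula$ (using the tree model property of \modallogicSC) and build an $\MSL{}{\separate,\Diamondminus}$ model by reversing edges and adding the marker $r$; (iii) for the right-to-left direction, start from a model of $\aformula[\Diamond\,{\gets}\,\Diamondminus] \land \mathtt{locacycl}$, use local acyclicity to extract a tree-like sub-structure, and reverse edges to obtain a finite forest satisfying $\aformula$.

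\textbf{Left-to-right.} Suppose $\amodel,\aworld \models \aformula$ in \modallogicSC. By the tree model property I may assume $\amodel = \triple{\worlds}{\arelation}{\avaluation}$ is a tree rooted at $\aworld$, and moreover by \Cref{lemma:height-at-most-n} I may assume $\worlds = (\arelation|_{\aworld}^{\leq \md{\aformula}})^*(\aworld)$, i.e. the tree has depth at most $\md{\aformula}$; pick $r \in \varprop$ not occurring in $\aformula$ and not in use, and shrink $\avaluation(r)$ to $\{\aworld\}$. Define $\amodel' = \triple{\worlds}{\arelation^{-1}}{\avaluation}$ with the same evaluation point $\aworld$. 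Since $\arelation^{-1}$ is finite and functional (its inverse $\arelation$ is the child relation of a finite tree, hence acyclic with functional inverse; reversing gives a finite functional relation), $\amodel'$ is a legitimate $\MSL{}{\separate,\Diamondminus}$ model. The identity map witnesses that $\pair{\amodel'}{\aworld}$ is an $\pair{n}{r}$-encoding of $\pair{\amodel}{\aworld}$ for $n = \md{\aformula}$: property (1) holds because $\arelation^{-1}$ is the reversal of $\arelation$ restricted to the same depth-$n$ region, (2) holds because valuations are unchanged on $\varprop \setminus \{r\}$, and (3) holds because we set $\avaluation(r) = \{\aworld\}$. By \Cref{A:lemma:MSL-induction}, $\amodel',\aworld \models \aformula[\Diamond\,{\gets}\,\Diamondminus]$. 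It remains to check $\amodel',\aworld \models \mathtt{locacycl}$: $\aworld \in \avaluation(r)$ gives the first conjunct, and for $i \in \interval{1}{\md{\aformula}}$, every world reachable from $\aworld$ by $i$ steps of $(\arelation^{-1})^{-1} = \arelation$ is a descendant of $\aworld$ at depth $i \geq 1$ in the tree, hence is not $\aworld$ and (since $\avaluation(r) = \{\aworld\}$) does not satisfy $r$; this gives $\amodel',\aworld \models (\Box^{-1})^i \lnot r$. Hence $\amodel',\aworld \models \aformula[\Diamond\,{\gets}\,\Diamondminus] \land \mathtt{locacycl}$.

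\textbf{Right-to-left.} Suppose $\amodel',\aworld' \models \aformula[\Diamond\,{\gets}\,\Diamondminus] \land \mathtt{locacycl}$ with $\amodel' = \triple{\worlds'}{\arelation'}{\avaluation'}$ a finite functional model. The conjunct $\mathtt{locacycl}$ forces $\aworld'$ to carry $r$ while no world reachable from $\aworld'$ in between $1$ and $\md{\aformula}$ steps of $\arelation'^{-1}$ (i.e. following $\arelation'$ backwards? careful: $\Box^{-1}$ quantifies over $\arelation'$-predecessors of the current world, i.e. over worlds $v$ with $\pair{v}{w}\in\arelation'$) — so $\mathtt{locacycl}$ says that following $\arelation'$ \emph{backwards} from $\aworld'$ we never return to an $r$-world within $\md{\aformula}$ steps, which rules out a cycle of length $\leq \md{\aformula}$ through $\aworld'$ in $\arelation'$. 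Consequently the sub-structure of $\arelation'$ obtained by looking only at the backward cone $({\arelation'}^{-1}|_{\aworld'}^{\leq n})$ of depth $n = \md{\aformula}$ is acyclic and its reverse forms a finite forest rooted at $\aworld'$; let $\amodel = \triple{\worlds}{\arelation}{\avaluation}$ be that forest, i.e. $\worlds = ({\arelation'}^{-1}|_{\aworld'}^{\leq n})^*(\aworld')$, $\arelation$ the reversal of ${\arelation'}^{-1}|_{\aworld'}^{\leq n}$, and $\avaluation$ the restriction of $\avaluation'$. Then $\pair{\amodel'}{\aworld'}$ — or rather the relevant cone of it — is, via the identity on $\worlds$, a $\pair{n}{r}$-encoding of $\pair{\amodel}{\aworld'}$: property (1) is the reversal relationship, (2) is immediate, and (3) uses that $\mathtt{locacycl}$ (together with possibly shrinking $\avaluation'(r)$ on the cone) makes $\aworld'$ the unique $r$-world of the cone. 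Applying \Cref{A:lemma:MSL-induction} in the reverse direction to $\aformula$ — noting that $\aformula[\Diamond\,{\gets}\,\Diamondminus]$ is literally $\aformula$ with $\Diamond$ replaced by $\Diamondminus$, and that by \Cref{lemma:height-at-most-n} the satisfaction of $\aformula[\Diamond\,{\gets}\,\Diamondminus]$ at $\aworld'$ in $\amodel'$ depends only on the depth-$n$ backward cone — yields $\amodel,\aworld' \models \aformula$, so $\aformula$ is satisfiable in \modallogicSC.

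\textbf{Main obstacle.} The only delicate point is the right-to-left direction: one must argue carefully that $\mathtt{locacycl}$ genuinely forces the depth-$\md{\aformula}$ backward cone of $\aworld'$ to be acyclic and tree-shaped (so that reversing it gives a finite forest with functional inverse relation as required by the definition of finite forest), and that the truncation to this cone does not change the truth value of $\aformula[\Diamond\,{\gets}\,\Diamondminus]$ at $\aworld'$ — this last part is exactly the $\MSL{}{\separate,\Diamondminus}$ analogue of \Cref{lemma:height-at-most-n}, which holds for the same reasons ($\Diamondminus$, $\separate$, and Boolean connectives all respect the "depth $\leq n$ backward cone" locality), but should be stated or invoked explicitly. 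Everything else is a routine unwinding of the $\pair{n}{r}$-encoding definition.
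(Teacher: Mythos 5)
Your proof is correct and follows essentially the same route as the paper: both directions reverse the accessibility relation and mark the evaluation point with $r$, then invoke the $\pair{n}{r}$-encoding transfer lemma, with $\mathtt{locacycl}$ (together with functionality of the $\MSL{}{\separate,\Diamondminus}$ relation) forcing acyclicity of the depth-$\md{\aformula}$ backward cone exactly as in the paper's argument. The only cosmetic deviations are the unnecessary tree-model/truncation normalisation in your left-to-right direction and the fact that an $\MSL{}{\separate,\Diamondminus}$ model officially has universe $\Nat$, so your finite set of worlds should be embedded into $\Nat$ as the paper does (and no shrinking of $\avaluation'(r)$ is needed in the converse direction, since $\mathtt{locacycl}$ already makes $\aworld'$ the unique $r$-world of the cone); none of this affects correctness.
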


\begin{proof}
  \begin{description}
    \item[($\Rightarrow$):] Let $\aformula$ be satisfiable and suppose $\amodel = \triple{\worlds}{\arelation}{\avaluation}$ be a finite forest and $\aworld \in \worlds$ s.t.\ $\pair{\amodel}{\aworld} \models \aformula$.
    W.l.o.g. assume $\worlds \subseteq_{\fin} \Nat$.
    Let us consider the $\MSL{}{\separate,\Diamondminus}$ model
    $\amodel' = \triple{\Nat}{\arelation^{-1}}{\avaluation'}$, where $\avaluation'(r) \egdef \{\aworld\}$ whereas for every $\avarprop \in \AP \setminus \{r\}$ $\avaluation'(\avarprop) = \avaluation(\avarprop)$.

    It is straightforward to show that $\pair{\amodel'}{\aworld}$ is a $\pair{\md{\aformula}}{r}$-encoding of $\pair{\amodel}{\aworld}$.
    Since $\amodel$ is acyclic, so is $\amodel'$ and from the definition of $\avaluation'$ we conclude that $\amodel,\aworld \models \mathtt{locacycl}$.
    By Lemma~\ref{A:lemma:MSL-induction}, $\amodel,\aworld \models \aformula[\Diamond \gets \Diamond^{-1}]$.

    \item[($\Leftarrow$):]
      Let $\aformula[\Diamond \gets \Diamond^{-1}] \land \mathtt{locacycl}$ be satisfiable. Let $\amodel' = \triple{\worlds'}{\arelation'}{\avaluation'}$ be a model of $\MSL{}{\separate,\Diamondminus}$ and $\aworld' \in \worlds'$ such that
      $\amodel',\aworld' \models \aformula[\Diamond \gets \Diamond^{-1}] \land \mathtt{locacycl}$.
      Let us consider the Kripke-like structure
      $\amodel = \triple{\worlds}{\arelation}{\avaluation}$ such that
      \begin{itemize}
        \item $\arelation = {\arelation'}^{-1}|_{\aworld'}^{\leq \md{\aformula}}$;
        \item $\worlds =  {\arelation}^{*}(\aworld')$, i.e. the set of worlds appearing in ${\arelation'}^{-1}|_{\aworld'}^{\leq \md{\aformula}}$;
        \item for every $\avarprop \in \AP$, $\avaluation(\avarprop) = \avaluation'(\avarprop) \cap \worlds$.
      \end{itemize}
      By $\pair{\amodel'}{\aworld'} \models \mathtt{locacycl}$,
      we can show that $\arelation$ is acyclic. Hence, $\amodel$ is a finite forest.
      By definition, $\pair{\amodel'}{\aworld'}$ is a $\pair{\md{\aformula}}{r}$-encoding of $\pair{\amodel}{\aworld'}$.
      Thus, from $\pair{\amodel'}{\aworld'} \models \aformula[\Diamond \gets \Diamond^{-1}]$ and by Lemma~\ref{A:lemma:MSL-induction}, we conclude that
      $\pair{\amodel}{\aworld'} \models \aformula$. \qedhere
  \end{description}
\end{proof}

\fi

\end{document}